     \newcommand{\pr}{{\operatorname{pr}}}
     \newcommand{\supp}{\operatorname{supp}}
     \newcommand{\Div}{{\operatorname{div}}}
     \newcommand{\dist}{{\operatorname{dist}}}
     \newcommand{\Ran}{{\operatorname{Ran}}}
\newcommand{\Opl}{{\operatorname{Op^l\!}}}
\newcommand{\Opr}{{\operatorname{Op^r\!}}}
\newcommand{\Opw}{{\operatorname{Op\!^w\!}}}
     \newcommand{\N}{{\mathbb{N}}}
     \newcommand{\R}{{\mathbb{R}}}
     \newcommand{\Z}{{\mathbb{Z}}}
     \newcommand{\C}{{\mathbb{C}}}
     \newcommand{\T}{{\mathbb{T}}}
     \newcommand{\A}{\mbox{\boldmath $a$}}
\newcommand{\sph}{{\rm sph}}
\newcommand{\Ai}{{\rm Ai}}
\newcommand{\slim}{{\rm s-}\lim}
\newcommand{\wlim}{{\rm w-}\lim}
\newcommand{\bd}{{\rm bd}}
\newcommand{\rot}{{\rm rot}}
\newcommand{\e}{{\rm e}}
\renewcommand{\i}{{\rm i}}
\renewcommand{\d}{{\rm d}}
\newcommand{\dol}{{\rm dol}}
\newcommand{\righ}{{\rm right}}
\newcommand{\lef}{{\rm left}}
\newcommand{\unif}{{\rm unif}}
\newcommand{\diag}{{\rm diag}}
\newcommand{\aux}{{\rm aux}}
\renewcommand{\sc}{{\rm sc}}
\newcommand{\sr}{{\rm sr}}
\newcommand{\crt}{{\rm crt}}
\renewcommand{\Re}{{\rm Re}\,}
\renewcommand{\Im}{{\rm Im}\,}
\newcommand{\crit}{{\rm tp}}
\newcommand{\japxi}{{\langle \xi\rangle}}
     \theoremstyle{plain}
     \newtheorem{thm}{Theorem}[section]
     \newtheorem{prop}[thm]{Proposition}
     \newtheorem{lemma}[thm]{Lemma}
      \newtheorem{cor}[thm]{Corollary}
     \theoremstyle{definition}
     \newtheorem{example}[thm]{Example}
     \newtheorem{cond}[thm]{Condition}
     \newtheorem{remark}[thm]{Remark}
     \newtheorem{remarks}[thm]{Remarks}
\newtheorem*{remarks*}{Remarks}
\newtheorem*{remark*}{Remark}
     \numberwithin{equation}{section}
\title[Zero]{Quantum scattering at low energies}
\author{J. Derezi\'nski}
\address[J. Derezi\'nski]{Dept. of Math. Methods in Physics, Warsaw University\\ 
Ho\.za 74, 00-682, Warszawa, Poland} 
\email{Jan.Derezinski@fuw.edu.pl}
\author{E. Skibsted}
\address[E. Skibsted]{Institut for  Matematiske
Fag \\
Aarhus Universitet\\ Ny Munkegade  8000 Aarhus C, 
Denmark}
\email{skibsted@imf.au.dk}
\thanks{The authors were  partially supported by  MaPhySto -- A
Network in Mathematical Physics and Stochastics, funded by
The Danish National Research Foundation
and  by  the Postdoctoral Training Program 
HPRN-CT-2002-0277. The research of J.D.
 was also supported by the Polish grants
 SPUB127 and  2 P03A 027 25. Part of the research was done during a visit of
 both authors to the Erwin
 Schr\"odinger Institute.
 One of us (E. S.)       thanks
H. Tamura for drawing our attention to the paper \cite{Kv}.}
\begin{document}

\begin{abstract} For a class of negative  slowly decaying
  potentials,  including $V(x):=-\gamma|x|^{-\mu}$ with $0<\mu<2$,
 we study the quantum
  mechanical scattering theory in the low-energy regime. Using modifiers of
 the
 Isozaki--Kitada type we show that scattering theory is well behaved on
   the {\it whole}  continuous spectrum of the
  Hamiltonian, including the energy $0$.
  We show that the $S$--matrices are well-defined and
  strongly continuous down to 
 the zero energy threshold. Similarly, we prove that
the wave matrices and generalized eigenfunctions are  norm
continuous down
 to the zero energy if we use  appropriate weighted spaces.
These results  are
  used to derive (oscillatory) asymptotics of the standard 
short-range and Dollard type
  $S$--matrices for the subclasses of potentials where both kinds of
 $S$-matrices 
  are
  defined.
For potentials whose leading part is $-\gamma|x|^{-\mu}$
we show that the location of 
  singularities of the kernel of $S(\lambda)$ experiences  an abrupt
  change from passing from positive energies $\lambda$ to the limiting
  energy $\lambda=0$. This change corresponds to the behaviour of the
 classical orbits. Under stronger  conditions we extract the 
 leading term of the asymptotics of the kernel of $S(\lambda)$ at its
 singularities; this  leading term defines a Fourier
integral operator in the sense of H\"ormander \cite{Ho4}.  
\end{abstract}

\maketitle
\tableofcontents

\section{Introduction and results} \label{Introduction}

Scattering theory of 2-body systems, both classical and quantum, both short-
and long-range, is nowadays a well understood subject
\cite{Ho2,II,IK1,IK2,Y2,DG}.  In particular, for
large natural classes of potentials we know a lot about the
properties of wave and scattering matrices at positive energies. Zero -- the
only 
threshold energy --  in most works on the subject is avoided, since scattering
at zero energy is much more difficult to describe and strongly depends on the
choice of the potential.

In this paper we consider a  class of potentials
that have an especially well behaved, nontrivial and interesting low energy
scattering theory. Precise
conditions used in our paper are described in Subsection  \ref{Conditions}. 
Roughly speaking, 
 the potentials that we consider have a  dominant negative 
radial term $V_1(x)$ 
similar to $-\gamma|x|^{-\mu}$ with $\gamma>0$ and $0<\mu<2$,
plus a faster decaying perturbation.

Similar classes
of potentials appeared in the literature already in \cite{Ge}. A systematic
study of such 2-body systems at low energies was undertaken in
\cite{FS}, where
 a complete
expansion of the resolvent at the zero-energy threshold was obtained,
and in \cite{DS1}, where
classical low-energy scattering theory was developed.
This paper can be viewed as a continuation of \cite{FS,DS1}. 

In this paper we show that quantum scattering theory for such potentials is
well behaved down to the energy zero. In particular, we study 
appropriately defined wave
matrices and scattering matrices for a fixed energy. We show that they 
have limits at zero energy. Our results were  partly announced in  \cite{DS2}.

For positive energies most (but probably not all) of 
our results are  contained in the literature, 
scattered in many sources. Our material about
 the zero energy case is new.

In the introduction we will first review scattering for
positive energies for a rather general class of potentials.
Then we will describe a simplified version of
 the main  results of our paper, which 
 concerns scattering at low energies for a more restrictive
 class of potentials.

\subsection{Classical orbits at positive energies}
\label{s1.1} 

 For the presentation of known results about positive energies
 we assume that the potentials satisfy the
following condition:
\begin{cond}\label{symbol} $V=V_1+V_3$ is a sum of real measurable functions on $\R^d$
  such that $V_1$ is  smooth and for some $\mu>0$,
\begin{equation}
\partial_x^\alpha V_1(x)=O\big (|x|^{-\mu-|\alpha|}\big ),\ \
|\alpha|\geq0,  \end{equation} 
$V_3$ is compactly supported and
$V_3(H_0+1)^{-1}$  is a
compact operator on the Hilbert space $L^2(\R^d)$. Here
$H_0:=2^{-1}p^2$ with  $p:= -\i \nabla_x$. The  Hamiltonian 
$H=H_0+V$ does not have positive eigenvalues.
\end{cond}

Let us first consider the classical Hamiltonian
 $h_1(x,\xi):=\frac12\xi^2+V_1(x)$
on the phase space 
 $\R^d\times\R^d$, using $h_0(x,\xi):=\frac12\xi^2$ as the
 free Hamiltonian. (The analysis of the classical case
is   needed in the quantum case).
One can prove that for any $\xi\in{\mathbb R}^d$, $\xi\neq0$,
 and $x$ in an appropriate
outgoing/incoming region the following problem admits a solution
(strictly speaking, meaning one solution for $t\to +\infty$ and one for
$t\to -\infty$):
\begin{equation}\label{eq:mixed conditions22}
\begin{cases}
\ddot y(t) =-\nabla V_1(y(t)),\\
y(\pm 1)=x,\\
\xi=\lim_{t\to {\pm}\infty}\dot y(t).
\end{cases}
\end{equation}
One obtains a family $y^\pm(t,x,\xi)$ of solutions smoothly depending on
parameters. All (positive energy) scattering  orbits, i.e. orbits satisfying
$\lim_{t\to \pm \infty}
|y(t)|=\infty$, are of this form (the energy is
$\lambda=\frac12\xi^2$). Using these solutions, in an appropriate
incoming/outgoing region one can construct a solution
$\phi^\pm(x,\xi)$ to
the eikonal  equation
\begin{equation}
\frac12\left(\nabla_x\phi^\pm(x,\xi)\right)^2+V_1(x)=\frac12\xi^2
\end{equation}
satisfying $\nabla_x\phi^\pm(x,\xi)=\dot y(\pm1,x,\xi)$.

\subsection{Wave and scattering matrices at positive energies}

Let us turn to the quantum case.
Following
Isozaki-Kitada,  see \cite{IK1}, \cite{IK2},
\cite{Y2} and  \cite{RY},
 one can use the functions
$\phi^\pm(x,\xi)$ in the quantum case to construct appropriate modifiers,
which can be taken to be
\begin{equation}
J^\pm f(x):=(2\pi)^{-d}\int \e^{\i\phi^\pm(x,\xi)-\i\xi \cdot y}a^\pm(x,\xi)
f(y)\d x\d\xi.
\end{equation}
Here $a^\pm(x,\xi)$ is an appropriate cut-off supported in the domain of the
definition of $\phi^\pm$, equal to one
in the incoming/outgoing region.
Then one constructs modified wave operators 
\begin{equation}
W^\pm f:=\lim_{t\to\pm\infty}\e^{\i tH}J^\pm\e^{-\i tH_0}f,\;
  \hat f\in C_c(\R^d\setminus\{0\}),\label{wave}\end{equation}
 and the modified scattering operator 
\begin{equation}S= W^{+*}W^-.\label{scat}\end{equation} We remark that
$W^\pm $ are isometric with range given by the projection ont the continuous
spectrum of $H$
\[1_{\mathrm
  c}(H)L^2(\R^d)=1_{]0,\infty[}(H)L^2(\R^d).\]
(Whence $S$ is unitary).

The free Hamiltonian $H_0$ can be diagonalized by the direct integral
 \begin{equation}\label{decom}
   \mathcal H_0=\int_{0}^\infty \oplus L^2(S^{d-1})\;\d \lambda, 
 \end{equation}  \begin{equation}
  \label{eq:restricF}
{\mathcal F}_0(\lambda)f(\omega)=(2\lambda)^{(d-2)/4} \hat f(\sqrt
{2\lambda}\omega),\;\;f\in L^2(\R^d). 
\end{equation} Here $\hat f$  refers to the $d$--dimensional
Fourier transform.
The operator $\mathcal F_0(\lambda)$ can be interpreted as a bounded operator
from the weighted space $L^{2,s}(\R^d):=\langle
x\rangle^{-s}L^2(\R^d)$, $s>\frac12$, to $L^2(S^{d-1})$. One can ask whether the
wave and scattering operators can be restricted to a fixed energy
$\lambda$.

 This question is conceptually simpler in the case of the scattering
operator $S$. Due to the intertwining property  $W^\pm H_0=HW^\pm$ it satisfies $SH_0=H_0S$, so abstract theory guarantees the
existence of a decomposition
\[S\simeq\int_{]0,\infty[}\oplus S(\lambda)\d\lambda,\]
where $S(\lambda)$ are unitary operators on $L^2(S^{d-1})$
defined for almost all $\lambda$. One can prove that under Condition
\ref{symbol} $S(\lambda)$ can be chosen to be a strongly continuous function
 (which fixes uniquely $S(\lambda)$ for all
$\lambda\in]0,\infty[$). $S(\lambda)$  is called the {\em scattering matrix at
      the energy $\lambda$}.

The case of wave operators is somewhat more complicated. By the
intertwining property it is natural to
use the direct integral decomposition (\ref{decom}) only from the right and the 
question is whether we can give a rigorous meaning to $W^\pm{\mathcal
  F}_0(\lambda)^*$. Again, under the condition (\ref{symbol}) one can show that
there exists a unique strongly continuous function
$]0,\infty[\ni\lambda\mapsto W^\pm(\lambda)$ with values in the space
of bounded operators
    from $L^2(S^{d-1})$ to $L^{2,-s}(\R^d)$ with $s>\frac12$ such that for
    $f\in L^{2,s}(\R^d)$
\[W^\pm f=\int_{]0,\infty[}W^\pm(\lambda)
{\mathcal F}_0(\lambda)f\d\lambda.\]
 The operator $W^\pm(\lambda)$ is called the  {\em wave matrix at energy
 $\lambda$}.
 One
can also  extend the domain of $W^\pm(\lambda)$ so that it
can act on the delta-function at $\omega\in S^{d-1}$, denoted $\delta_\omega$.
Now $w^\pm(\omega,\lambda):=W^\pm(\lambda)\delta_\omega$  is an element of
$L^{2,-p}(\R^d)$ for $p>\frac{d}{2}$. It  satisfies
\begin{equation}\label{eq:eigenaa}
\left(-\frac12\Delta+V(x)-\lambda\right) w^\pm(\omega,\lambda)=0.\end{equation}
It behaves in the outgoing/incoming region as a plane wave. It will be called
the {\em generalized eigenfunction of $H$  at   energy $\lambda$
  and at  asymptotic
  normalized velocity  $\omega$}; this terminology is   justified
in Subsection \ref{s1.1bb}.

\subsection{Short-range and Dollard wave and scattering operators}

Let us recall that (\ref{wave}) and (\ref{scat})
are   only one of possible definitions
of wave and scattering operators. In
the short-range case, that is $\mu>1$, the usual definitions  are
\begin{eqnarray}
W_\sr^\pm f&:=&\lim_{t\to\pm\infty}\e^{\i tH}\e^{-\i tH_0}f,\label{wave1}
\\
S_\sr&:=& W_\sr^{+*}W_\sr^-.\label{scat1}\end{eqnarray}
The operators $W^\pm$  and $W_\sr^\pm$ differ by a momentum-dependent phase
factor: 
\begin{eqnarray}W^\pm&=&W_\sr^\pm \e^{\i\psi_\sr^\pm(p)},\\
S &=&\e^{-\i\psi_\sr^{+}(p)}S_\sr \e^{\i\psi_\sr^-(p)}.\label{factor}\end{eqnarray}

Similarly, in the case $\mu>\frac12$ one can use the so-called
Dollard construction:
\begin{eqnarray}
  W^{\pm}_{\dol}f&:=&\lim _{t\to {\pm}\infty}e^{\i tH}U_{\dol}(t)f,\\
U_{\dol}(t)&:=&\e^{-\i \int_{0}^t(p^2/2+V(sp)1_{\{|sp|\geq R_0\}})\;\d
    s},\;R_0>0,\\
S_\dol&:=& W_\dol^{+*}W_\dol^-.\label{scat1a}\end{eqnarray}
Analogously, we have
\begin{eqnarray}W^\pm&=&W_\dol^\pm \e^{\i\psi_\dol^\pm(p)},\\
S &=&\e^{-\i\psi_\dol^{+}(p)}S_\sr \e^{\i\psi_\dol^-(p)}.\label{factora}\end{eqnarray}

\subsection{Asymptotic normalized velocity operator}
\label{s1.1bb}

The reader may ask 
  why  $W^\pm$, $W_\sr^\pm$,
 $W_\dol^\pm$ are all called  wave operators. 
In fact, it is natural to define a whole 
family of wave operators associated with a
 given Schr\"odinger operator. In this subsection we briefly describe a
  possible definition of this family, following  essentially  \cite{De,DG}.

Suppose that $V$ satisfies \eqref{symbol} (or even much weaker
 conditions). Then it can be shown that there exists the following operator:
\begin{equation}
v^\pm:=\mathrm{s}-\lim_{t\to\pm\infty}\pm\e^{\i tH}\hat x\e^{-\i
 tH}1_{\mathrm c}(H),\;\hat x=\tfrac{x}{|x|}.\end{equation}
$v^\pm$ can be called the {\em asymptotic normalized
  velocity  operator}. It
 is a vector of commuting self-adjoint operators  (on the space
 $1_{\mathrm c}(H)L^2(\R^d)$) satisfying
\begin{equation}
(v^\pm)^2=1_{\mathrm c}(H),\ \ \ [v^\pm,H]=0.
\end{equation}
We say that $\breve W^\pm$ is {\em an outgoing/incoming
 wave operator associated with $H$}
 if it is  isometric and 
 satisfies
\begin{equation}
\breve W^\pm H_0=H\breve W^\pm,\ \ \ 
\breve W^\pm \hat p=v^\pm \breve W^\pm,
\end{equation}
where $\hat p=\frac{p}{|p|}$.

Note that if $\breve W_1^\pm$ and $\breve W_2^\pm$ are two wave operators
associated with a given $H$, then there exists a function
 $\psi^\pm$ such that
\begin{equation}
\breve W_1^\pm=\breve W_2^\pm\e^{\i\psi^\pm(p)}.
\end{equation}
Therefore, scattering cross-sections, which are usually
considered to be the only  measurable quantities in scattering theory, are
insensitive  to the choice of a wave operator.

It is easy to show that   $W^\pm$, $W_\sr^\pm$,
 $W_\dol^\pm$ are all wave operators in the sense of the above
 definition. We also note that for the wave operators $W^\pm$ the
 corresponding generalized eigenfunctions, see \eqref{eq:eigenaa}, 
 jointly diagonalize $H$ and $v^{\pm}$.


\subsection{Low-energy asymptotics of classical orbits}

In the remaining part of the introduction we consider a more restricted class
of potentials. 
To simplify the presentation, in this introduction let us
  assume that the potential takes
the form
\begin{equation}
  \label{eq:1v}
 V(x)=-\gamma |x|^{-\mu} +O(|x|^{-\mu-\epsilon}),
\end{equation} where $\mu \in ]0,2[$ and $\gamma, \epsilon>0$. For
derivatives, assume that $\partial^\beta
\left(V(x)+\gamma |x|^{-\mu}\right)=O(|x|^{-\mu-\epsilon-|\beta|})$.
Compactly supported
singularities can be included.

For potentials satisfying (\ref{eq:1v}) we would like to extend the 
results described in Subsection \ref{s1.1}
 down to the energy $\lambda=0$.
To this end we change variables to ``blow
up'' the discontinuity at $\lambda=0$. This amounts to looking at
$\xi=\sqrt{2\lambda}\omega$ as depending on  two independent
variables $\lambda\geq 0$ and $\omega \in S^{d-1}$.
It is proven in \cite{DS1} that for any $\omega\in S^{d-1}$,
$\lambda\in[0,\infty[$ and $x$ from an appropriate outgoing/incoming region
    there exists a solution of the problem
\begin{equation}\label{eq:mixed conditions222}
\begin{cases}
\ddot y(t) =-\nabla V(y(t)),\\
\lambda=\tfrac {1}{2 }\dot y(t)^2 +V(y(t)),\\
y(\pm 1)=x,\\
\omega={\pm}\lim_{t\to {\pm}\infty}y(t)/|y(t)|.
\end{cases}
\end{equation}
One obtains a family $y^\pm
(t,x,\omega,\lambda)$ of solutions smoothly depending on
parameters. All scattering orbits are of this form.
Using these solutions one can construct a solution
$\phi^\pm(x,\omega,\lambda)$ to
the eikonal  equation
\begin{equation}
\frac12 \left(\nabla_x\phi^\pm(x,\omega,\lambda)\right)^2+V(x)=\lambda
\end{equation}
satisfying $\nabla_x\phi^\pm(x,\omega,\lambda)=\dot
y(\pm1,x,\omega,\lambda)$.  

\subsection{Low-energy asymptotics of wave and scattering matrices}

In the quantum case, we can use the new functions $\phi^\pm(x,\omega,\lambda)$
in the modifiers $J^\pm$, which lead to the definitions of the wave
operators $W^\pm$ and the scattering operator $S$. We can also improve on the
choice of the symbols $a^\pm(x,\xi)$ by assuming that in the incoming/outgoing
region they satisfy the appropriate  transport equations.

The main new result of
our paper about wave matrices can be summarized in the following theorem:
\begin{thm} There exists the  norm 
limit  of wave matrices at zero energy:
\[W^\pm(0)=\lim_{\lambda\searrow0} W^\pm(\lambda)\]
in the sense of operators in ${\mathcal B}(L^2(S^{d-1}),L^{2,-s}(\R^d))$,
where $s>\frac12+\frac\mu 4$.
\end{thm}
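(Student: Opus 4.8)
The plan is to reduce the statement to a stationary representation of $W^\pm(\lambda)$ and then to control its two terms separately, feeding in the zero-energy eikonal analysis of \cite{DS1} and the zero-energy limiting absorption principle of \cite{FS}. Writing $R(\zeta):=(H-\zeta)^{-1}$ and $T^\pm(\lambda):=\big(HJ^\pm-J^\pm H_0\big){\mathcal F}_0(\lambda)^*$, Duhamel's identity applied to $W^\pm=\slim_{t\to\pm\infty}\e^{\i tH}J^\pm\e^{-\i tH_0}$ yields
\begin{equation}\label{eq:wmstat}
W^\pm(\lambda)=J^\pm{\mathcal F}_0(\lambda)^*-R(\lambda\pm\i0)\,T^\pm(\lambda),\qquad \lambda>0,
\end{equation}
the stationary formula underlying the strong continuity of $\lambda\mapsto W^\pm(\lambda)$ on $]0,\infty[$ recalled above. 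It then suffices to show that each term on the right of \eqref{eq:wmstat} extends to a norm-continuous function on $[0,\infty[$ with values in ${\mathcal B}(L^2(S^{d-1}),L^{2,-s}(\R^d))$.

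For the first term I would use the blow-up $\xi=\sqrt{2\lambda}\,\omega$: after it, $J^\pm{\mathcal F}_0(\lambda)^*$ is an oscillatory integral over $S^{d-1}$ with phase $\phi^\pm(x,\omega,\lambda)$ and amplitude $(2\lambda)^{(d-2)/4}a^\pm(x,\omega,\lambda)$, both of which, by \cite{DS1}, are smooth in $(x,\omega,\lambda)$ jointly down to $\lambda=0$. Stationary and non-stationary phase in $\omega$ produce the decay in $x$: the $\omega$-Hessian of the phase at its critical point is of order $\sqrt{\lambda}\,|x|$ in the classically free region (where $\lambda|x|^\mu$ is large) and of order $|x|^{1-\mu/2}$ in the potential-dominated region (where $\lambda|x|^\mu$ is small), with crossover at $|x|\sim(\gamma/\lambda)^{1/\mu}$. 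Integrating the resulting pointwise bound against $\langle x\rangle^{-2s}$ over each region, the prefactor $(2\lambda)^{(d-2)/4}$ combines with the powers of $|x|$ from stationary phase into an estimate $O(\lambda^{\delta})$ with $\delta=\delta(s)>0$, precisely when $s>\tfrac12+\tfrac\mu4$; this is where the weight in the statement is forced, and one finds moreover that $J^\pm{\mathcal F}_0(\lambda)^*\to0$ in this norm as $\lambda\searrow0$, the low-energy spreading of the free evolution being compensated by the oscillatory modifier.

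For the second term I would use that the Isozaki--Kitada choice of $J^\pm$, with $a^\pm$ solving the transport equations, makes $HJ^\pm-J^\pm H_0$ a Fourier integral operator whose amplitude is microlocally supported away from the outgoing/incoming directions and whose principal symbol is annihilated there; together with the analogue of the oscillatory estimates of the previous paragraph this controls $T^\pm(\lambda)$ norm-continuously in $\lambda\in[0,\infty[$ into suitable weighted spaces, and one then composes with $R(\lambda\pm\i0)$ using the limiting absorption principle of \cite{FS}, according to which $\lambda\mapsto R(\lambda\pm\i0)$ is norm-continuous on $[0,\infty[$ between appropriate weighted $L^2$ spaces --- a regularity of the resolvent at the zero-energy threshold special to the attractive leading part $-\gamma|x|^{-\mu}$. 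Combining this with the first term in \eqref{eq:wmstat}, and using a continuous embedding $L^{2,-s'}\hookrightarrow L^{2,-s}$ where needed, gives the claim.

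The main obstacle is the uniform-in-$\lambda$ control of these oscillatory integrals as $\lambda\searrow0$: the phase $\phi^\pm$ degenerates, its spatial gradient being of size $|x|^{-\mu/2}\to0$, and the crossover radius $(\gamma/\lambda)^{1/\mu}$ runs off to infinity, so that one must patch the free and potential-dominated regimes and keep exact track of the powers of $\lambda$ they contribute; it is precisely this bookkeeping, resting on the sharp zero-energy eikonal bounds of \cite{DS1}, that pins down the threshold $s>\tfrac12+\tfrac\mu4$. A secondary difficulty is to match the weighted spaces in which the correction $R(\lambda\pm\i0)T^\pm(\lambda)$ can be estimated --- here the microlocal structure of $HJ^\pm-J^\pm H_0$ must be exploited so that $R(\lambda\pm\i0)$ acts on it favorably --- with those in which the zero-energy limiting absorption principle of \cite{FS} is available.
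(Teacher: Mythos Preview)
Your overall strategy --- the stationary representation plus separate control of the two pieces --- is the paper's, but your analysis of the $J^\pm$--term contains a genuine error. (A minor slip first: the formula is $W^\pm(\lambda)=J^\pm(\lambda)+\i R(\lambda\mp\i0)T^\pm(\lambda)$, with $\mp$ in the resolvent.) The real gap is the claim that $J^\pm{\mathcal F}_0(\lambda)^*\to0$. If $a^\pm$ solves the transport equations, as you yourself assume when treating the second term, then $a^\pm$ necessarily carries the factor $(\det\nabla_\xi\nabla_x\phi^\pm)^{1/2}\sim(2\lambda)^{(2-d)/4}g(|x|)^{(d-2)/2}$ (Proposition~\ref{prop:booh}). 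Hence the effective amplitude $\tilde a^\pm=(2\lambda)^{(d-2)/4}a^\pm$ has a finite \emph{nonzero} limit as $\lambda\searrow0$: the powers of $\lambda$ cancel by design, and stationary phase (Theorem~\ref{thm:kkk}) gives $|J^\pm(\lambda)\tau(x)|\sim g(r)^{-1/2}r^{-(d-1)/2}|\tau(\pm\hat x)|$ uniformly in $\lambda\ge0$. This lies in $L^{2,-s}$ precisely for $s>\tfrac12+\tfrac\mu4$ (since $g\sim r^{-\mu/2}$ at $\lambda=0$), so the threshold sits where you put it --- but it reflects the size of $J^\pm(0)\neq0$, not a rate of vanishing. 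If instead you meant the bare cutoff amplitude, the free-region estimate blows up like $(2\lambda)^{-1/4}$, and $T^\pm$ no longer has the decay you need for the second term.

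The paper's actual argument for the $J^\pm$ term does not use pointwise stationary phase or any free/potential patching: Lemma~\ref{lemma: statio} writes $R^{-1}\int_{|x|<R}g\,|J^\pm(\lambda)\tau|^2\,\d x=\langle\tau,P_R\tau\rangle$ and shows, after the change of variables $x\mapsto\nabla_\omega\phi^\pm$ on each sphere, that $P_R$ is a pseudodifferential operator on $S^{d-1}$ with symbol bounded uniformly in $r$ and $\lambda\ge0$. For the resolvent term the bare limiting absorption bound of \cite{FS} is also not enough by itself, since $T^\pm_\bd(\lambda)$ misses the required weight by an $\epsilon$ (Lemma~\ref{lem:22a}(\ref{it:p30})); the paper inserts a microlocal partition of unity $I=\Opr(\chi_1)+\Opr(\chi_2)+\Opr(\chi_3)$ between $R(\lambda\mp\i0)$ and $T^\pm(\lambda)$ and matches each piece with the corresponding microlocal resolvent estimate of Proposition~\ref{prop:resolvent_basic2}, the key input being that the outgoing piece $\Opr(\chi_3)T^\pm_\bd(\lambda)$ is $O(|x|^{-\infty})$ by non-stationary phase (Lemma~\ref{lem:22a}(\ref{it:p40})). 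Your closing paragraph correctly identifies this matching as a difficulty, but it is the mechanism, not a side issue.
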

 The operator $W^\pm(0)$ can be called {\em the wave matrix at zero energy}.
We can   introduce
$w^\pm(\omega,0):=W^\pm(0)\delta_\omega$, called
the
 {\em generalized eigenfunction of $H$ at zero energy and
fixed asymptotic normalized velocity  $\omega$}. It  belongs to the weighted space
 $L^{2,-p}(\R^d)$ where
 $p>\frac{d}{2}+\frac{\mu}{2}-\frac{d\mu}{4}$. We shall also show
 weighted $L^2$--
 bounds on its  $\omega$-derivatives.

It is interesting to note that the behaviour of the generalized
eigenfunction $w^\pm(\omega,0)$
 depends strongly on the dimension. In dimension $1$ it is
unbounded, in dimension 2 it is  almost bounded and in dimension greater than 2
it  decays at infinity (without being square integrable).

The  main result of our paper about scattering matrices reads

\begin{thm} There exists the strong limit  of scattering
 matrices at zero energy
\[S(0)=\slim_{\lambda\searrow0} S(\lambda)\]
in the space ${\mathcal B}(L^2(S^{d-1}))$. This 
 limit $S(0)$  is unitary
on $L^2(S^{d-1})$.
\end{thm}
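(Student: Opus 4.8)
The plan is to reduce the statement to a stationary, Isozaki--Kitada type representation of the fixed-energy scattering matrix, and then to show that every factor in that representation extends by continuity, with bounds uniform in the energy, down to $\lambda=0$. Write $R(z):=(H-z)^{-1}$ for $\Im z\neq0$, let $J^\pm$ be the modifiers built from the zero-energy adapted phases $\phi^\pm(x,\omega,\lambda)$ of \cite{DS1} and from symbols $a^\pm$ solving the relevant transport equations in the incoming/outgoing regions, and set $B^\pm:=HJ^\pm-J^\pm H_0$; the choice of $\phi^\pm$ and $a^\pm$ forces the kernels of $B^\pm$ to be smooth, supported off the diagonal, and rapidly decaying in $|x|$, uniformly for $\lambda$ in a neighbourhood of $0$. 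Starting from $S=W^{+*}W^-$, see \eqref{scat}, a Duhamel computation as in \cite{IK1,IK2,Y2} together with the diagonalization \eqref{decom}, \eqref{eq:restricF} yields, at least on a dense set, a formula of the schematic form
\[
S(\lambda)=\mathcal F_0(\lambda)J^{+*}J^-\mathcal F_0(\lambda)^{*}+\mathcal F_0(\lambda)\bigl(L^+(\lambda)+L^-(\lambda)\bigr)\mathcal F_0(\lambda)^{*}+\mathcal F_0(\lambda)\,B^{+*}R(\lambda+\i0)B^-\,\mathcal F_0(\lambda)^{*},
\]
where $L^{\pm}(\lambda)$ denote the terms that are linear in the remainders $B^\pm$ (and contain the free resolvent on the energy shell). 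The first term is the Fourier integral operator on $S^{d-1}$ referred to in the abstract; the remaining terms will be handled, between the weighted spaces $L^{2,s}(\R^d)$ and $L^{2,-s}(\R^d)$ for a suitable $s$ (for definiteness, $s>\tfrac12+\tfrac\mu4$ as in the result on wave matrices), with the help of a limiting absorption principle.

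Next I would establish three uniform ingredients. First, $\mathcal F_0(\lambda)\in\mathcal B(L^{2,s}(\R^d),L^2(S^{d-1}))$ is norm-continuous in $\lambda$ and uniformly bounded for $\lambda\in[0,\Lambda]$; for $d\ge2$ this is immediate from \eqref{eq:restricF}, and for $d=1$ the factor $(2\lambda)^{(d-2)/4}$ is absorbed into the choice of weight. Second, and this is the crucial analytic input, the limiting absorption principle down to the threshold: $\lambda\mapsto\langle x\rangle^{-s}R(\lambda+\i0)\langle x\rangle^{-s}$ is norm-continuous on $[0,\Lambda]$ with values in $\mathcal B(L^2(\R^d))$, including at $\lambda=0$; this is supplied by the resolvent expansion of \cite{FS}, and the free-resolvent factors occurring in $L^{\pm}(\lambda)$ enter only in combinations with $\mathcal F_0(\lambda)$ that remain regular there. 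Third, the operators $J^{+*}J^-$, $B^{\pm}$ and $L^{\pm}(\lambda)$ are bounded between the indicated weighted spaces and continuous in $\lambda$, uniformly down to $\lambda=0$; these bounds follow from the symbol estimates on $\phi^\pm$ and $a^\pm$ from \cite{DS1}, read in symbol classes rescaled to the characteristic length $|x|\sim\lambda^{-1/\mu}$ at which the energy $\lambda$ balances the potential $-\gamma|x|^{-\mu}$ (this rescaling is precisely what turns the positive-energy estimates into uniform ones), supplemented by non-stationary phase and disjoint-support estimates for the off-diagonal pieces.

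Combining these, the last two terms in the displayed formula are norm-continuous maps $[0,\Lambda]\to\mathcal B(L^2(S^{d-1}))$, whereas the leading Fourier integral operator term is only strongly continuous there: it is in general \emph{not} norm-continuous at $\lambda=0$, which is exactly the abrupt change of its singularities noted in the abstract. Hence $\lambda\mapsto S(\lambda)$ has, on a dense set, a limit as $\lambda\searrow0$, and a uniform boundedness argument upgrades this to a strong limit $S(0):=\slim_{\lambda\searrow0}S(\lambda)$ in $\mathcal B(L^2(S^{d-1}))$. For the unitarity of $S(0)$, recall that $S(\lambda)$ is unitary for every $\lambda>0$, so the strong limit $S(0)$ is an isometry, i.e.\ $S(0)^*S(0)=I$. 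The very same argument, run with the roles of ``$+$'' and ``$-$'' interchanged (equivalently, applied to the stationary formula for $S(\lambda)^*=W^{-*}W^+$), shows that $S(\lambda)^*$ also has a strong limit as $\lambda\searrow0$; this limit is necessarily $S(0)^*$ and is again an isometry, so $S(0)S(0)^*=I$ as well, and $S(0)$ is unitary.

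The main obstacle is the uniform-in-$\lambda$ control, as $\lambda\searrow0$, of $R(\lambda+\i0)$ and of the modifier remainders $B^\pm$: the propagation and radiation estimates that drive the positive-energy theory degenerate at the threshold because the scattering orbits slow down and the relevant part of configuration space recedes to $|x|\sim\lambda^{-1/\mu}$, so they must be replaced by the rescaled phase-space analysis of \cite{DS1,FS}, after which the estimates become uniform. A secondary difficulty is that the stationary representation a priori gives convergence only on a dense subset and in weighted norms, so it must be combined with the uniform operator bounds both to obtain strong convergence on $L^2(S^{d-1})$ and to legitimize the convergence of $S(\lambda)^*$ required for unitarity.
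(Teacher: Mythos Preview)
Your proposal has a genuine gap at the very place where the difficulty lies: the factorization through $\mathcal F_0(\lambda)$ and $\mathcal F_0(\lambda)^*$ cannot be made to work at $\lambda=0$. Recall that $\mathcal F_0(\lambda)f(\omega)=(2\lambda)^{(d-2)/4}\hat f(\sqrt{2\lambda}\,\omega)$ restricts $\hat f$ to the sphere of radius $\sqrt{2\lambda}$, which collapses to the origin as $\lambda\searrow0$; the prefactor $(2\lambda)^{(d-2)/4}$ tends to $0$ for $d\geq3$ (forcing each term in your displayed formula to vanish, hence $S(0)=0$, contradicting unitarity), diverges for $d=1$, and for $d=2$ produces a rank-one map. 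Your remark that this factor ``is absorbed into the choice of weight'' does not address the collapse of the restriction sphere. The paper avoids this entirely by never isolating $\mathcal F_0(\lambda)$: it defines the fixed-energy objects $J^\pm(\lambda)$, $T^\pm(\lambda)$ directly via \eqref{eq:jdef}--\eqref{eq:tdef} with amplitudes $\tilde a^\pm(x,\omega,\lambda)=(2\lambda)^{(d-2)/4}a^\pm(x,\sqrt{2\lambda}\,\omega)$, and the crucial point (Proposition~\ref{prop:booh}, built on \eqref{eq:rega1} and \eqref{esto22}) is that the van~Vleck factor $\bigl(\det\nabla_\xi\nabla_x\phi^\pm\bigr)^{1/2}$ in $a^\pm$ carries exactly the compensating power $(2\lambda)^{(2-d)/4}$, so that $\tilde a^\pm$ is continuous down to $\lambda=0$. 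The stationary formula used is then $S(\lambda)=-2\pi W^+(\lambda)^*T^-(\lambda)$, see \eqref{eq:Sgod}--\eqref{eq:Sgod2}, and continuity follows from Theorem~\ref{thm:4} and Lemma~\ref{lem:22a} (with the weights $g^{1/2}$ and $(\langle x\rangle g)^n$ that track the degeneration, not bare $\langle x\rangle^{-s}$).

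A secondary but real issue is your assertion that the remainders $B^\pm$ have kernels ``rapidly decaying in $|x|$''. They do not: $T^\pm=\i(HJ^\pm-J^\pm H_0)$ splits into a propagation piece $T^\pm_{\pr}$ that is $O(|x|^{-\infty})$ and a boundary piece $T^\pm_{\bd}$ supported where the angular cut-off is differentiated, which only decays like $g(|x|)|x|^{-1}$ (Proposition~\ref{prop:brrrr}). Controlling $R(\lambda+\i0)T^\pm_{\bd}$ down to $\lambda=0$ requires the microlocal resolvent bounds of Proposition~\ref{prop:resolvent_basic2}(\ref{some_label21})--(\ref{it:last+}) together with the phase-space localization of $T^\pm_{\bd}$ (Lemma~\ref{lem:22a}(\ref{it:p40})), not just the plain limiting absorption principle. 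Your unitarity argument via strong limits of $S(\lambda)$ and $S(\lambda)^*$ is fine and is essentially what the paper does, but it only becomes available once the correct continuity of the factors has been established.
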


 We remark
 that neither $W(\lambda)$ nor
$S(\lambda)$ are smooth in $\lambda\geq 0$ at the threshold $0$, which
 can seem somewhat surprising given the fact that the boundary value
 of the resolvent
 $R(\lambda +\i0)=(H-\lambda-i0)^{-1}$ (interpreted as acting between appropriate 
weighted spaces)
has  this property (see \cite {BGS}
 for explicit expansions in the purely Coulombic case). 

\subsection{Geometric approach to scattering theory}
There exists an alternative approach to scattering theory, based on the study
of   generalized eigenfunctions. It allows us
 to characterize scattering matrices
by the spatial asymptotics of generalized eigenfunctions.
 It was used
in particular in
 Vasy \cite{V1} or \cite[Remark 19.12]{V2}. 
We shall study this approach, including the case of the zero energy,
   in Subsection \ref{Geometric scattering
matrices}.

\subsection{Low energy asymptotics of short-range and Dollard operators}

Let us stress that the existence of the limits of wave and scattering 
matrices at zero energy is made possible not only by appropriate assumptions
on the potentials, but also by the use of appropriate modifiers. Wave
matrices $W_\sr^\pm(\lambda)$
 defined by the standard short-range procedure, as well as
the Dollard modified wave operators $W_\dol^\pm(\lambda)$,  {\it do not} have this
property. They differ from our $W^\pm(\lambda)$ by a momentum dependent phase
factor that has an oscillatory behaviour as $\lambda\searrow0$. In
particular,
\begin{subequations}
  \begin{eqnarray}
W_\sr^\pm(\lambda)\ =&W^\pm(\lambda)\exp\left(\i
O(\lambda^{\frac12-\frac1\mu})\right),& 1<\mu<2;\label{osci1}\\ 
W_\dol^\pm(\lambda)\  =&\;  W^\pm(\lambda)\exp\left(\i
O(\lambda^{-\frac12}\ln \lambda)\right),& \mu=1;\\
W_\dol^\pm(\lambda)\  =&W^\pm(\lambda)\exp\left(\i
O(\lambda^{\frac12-\frac1\mu})\right),& \frac12<\mu<1.
\end{eqnarray}
\end{subequations}

We
 remark that oscillatory behaviour similar to (\ref{osci1}) was
  proved  in \cite {Y1} in the one-dimensional setting.

\subsection{Location of singularities of the zero energy 
 scattering matrix}

A recurrent idea of scattering theory
 is  the parallel behaviour of classical and quantum
systems. One of its manifestations is the relationship between scattering
 orbits at a given energy and the location of singularities of the scattering matrix.

In the case of positive energies the relationship is simple and well-known.
To describe it note that  scattering orbits
 of positive energy  have the  deflection angle that goes to zero
when the distance of the orbit to the center goes to infinity.
 In the quantum case this
 corresponds to the fact that the integral kernel 
of scattering matrices $S(\lambda)(\omega,\omega')$
 at positive energies $\lambda$ are smooth  for $\omega\neq\omega'$  and has a
 singularity at $\omega=\omega'$.

This picture changes at the  zero energy. For potentials considered in our
paper, the deflection angle of zero-energy orbits does not go to zero for
orbits
 far
from the center. The angle of deflection is small for small $\mu$ and goes to
infinity as $\mu$ approaches $2$.

For the strictly homogeneous potential, $V(r)=-\gamma
r^{-\mu}$, one can solve the equations of motion at zero energy.
The (non-collision) zero-energy orbits are given by  
the implicit equation (in polar coordinates)
  \begin{equation} \label{eq:pol-eqn}
\sin(1-\frac\mu2)\theta(t)= 
\Big ({r(t)\over r_\crit }\Big )^{-1+\frac\mu2},
 \end{equation} see \cite [Example 4.3]{DS1}. Whence  the deflection 
angle of such trajectories equals
$-\frac{\mu\pi}{2-\mu}$.  In particular, for attractive Coulomb potentials it
equals $-\pi$, which corresponds to the well-known fact that in this case
zero-energy orbits are parabolas 
(see
 \cite[p. 126]{Ne} for example). 

One of the main results of our paper is a quantum analogue of this fact:

\begin{thm} \label{thm:19}The integral kernel of
  the zero-energy scattering matrix
$S(0)(\omega,\omega')$ is smooth away from $\omega,\omega'$ satisfying
  $\omega\cdot\omega'=\cos  \frac{\mu\pi}{2-\mu}$.
\end{thm}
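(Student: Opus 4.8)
The plan is to carry out the proof at the zero-energy threshold directly, using the limiting form at $\lambda=0$ of the stationary (time-independent) Isozaki--Kitada representation of the scattering matrix that underlies the construction of $S(0)$ above, and to reduce the smoothness of $S(0)(\omega,\omega')$ off the set $\omega\cdot\omega'=\cos\tfrac{\mu\pi}{2-\mu}$ to the classical rigidity statement of \cite{DS1}: at energy $0$ every scattering orbit of the leading radial potential $-\gamma|x|^{-\mu}$, irrespective of its closest-approach radius, deflects by exactly $-\tfrac{\mu\pi}{2-\mu}$ (see \eqref{eq:pol-eqn} and \cite[Example 4.3]{DS1}), so its asymptotic incoming and outgoing velocity directions $\omega',\omega$ obey $\omega\cdot\omega'=\cos\tfrac{\mu\pi}{2-\mu}$ and no other relation is realised. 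Thus the zero-energy classical scattering relation is exactly this cone, whereas at positive energy it is fatter (orbits of small impact parameter can wind arbitrarily often), which is why the positive-energy $S(\lambda)$ is only ``$\delta(\omega-\omega')+C^\infty$''; the rigidity at $\lambda=0$ is precisely what makes the cleaner statement of Theorem~\ref{thm:19} possible. What we must prove is that $S(0)$, modulo a kernel that is $C^\infty$ in $(\omega,\omega')$, is microlocally supported over that relation; the theorem then follows because off the cone there are no trajectories, hence no stationary points and no wave front set.

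Concretely, I would record the splitting $S(0)(\omega,\omega')=A(0)(\omega,\omega')+B(0)(\omega,\omega')$ (cf.\ \cite{IK1,IK2}), where $A(0)$ is the operator --- well defined in the $\mu$-dependent weighted spaces of the paper, in the normalisation fixed above in the construction of $W^\pm(0)$ --- with oscillatory-integral kernel
\begin{equation*}
A(0)(\omega,\omega')=c_d\int_{\R^d}\e^{\i(\phi^-(x,\omega',0)-\phi^+(x,\omega,0))}\,\overline{a^+(x,\omega)}\,a^-(x,\omega')\,\d x ,
\end{equation*}
built from the zero-energy eikonal phases $\phi^\pm(x,\omega,0)$ and the cut-offs $a^\pm$ (which solve the eikonal and transport equations in the incoming/outgoing regions), and $B(0)$ is a finite sum of terms, each obtained by restricting to the zero-energy shell an operator $(C^+)^*R(0+\i0)\,C^-$, where $R(0+\i0)$ is the boundary value of the resolvent at the threshold furnished by \cite{FS}, $T^\pm:=HJ^\pm-J^\pm H_0$, $C^\pm\in\{J^\pm,T^\pm\}$, and at least one $C^\pm$ equals $T^\pm$ so that the corresponding amplitude is supported in the transitional region of that modifier. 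That every object here makes sense at $\lambda=0$ is exactly the content of \cite{FS} (limiting absorption and the zero-energy resolvent expansion) and \cite{DS1} (uniform eikonal and modifier bounds), the same input used to construct $S(0)$ above. It then suffices to show that the kernels of $A(0)$ and of $B(0)$ are $C^\infty$ in $(\omega,\omega')$ off the cone.

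For $A(0)$: since $\nabla_x\phi^\pm(x,\omega,0)$ is the velocity at the endpoint of the zero-energy orbit $y^\pm(\cdot,x,\omega,0)$, a stationary point in $x$ of the phase $\phi^-(\cdot,\omega',0)-\phi^+(\cdot,\omega,0)$ is a point at which the incoming orbit with parameter $\omega'$ and the outgoing orbit with parameter $\omega$ through $x$ have matching velocity, hence --- by time-translation and uniqueness of the Newton flow --- glue into one zero-energy scattering orbit joining $\omega'$ to $\omega$; by the rigidity recalled above this forces $\omega\cdot\omega'=\cos\tfrac{\mu\pi}{2-\mu}$. For $(\omega,\omega')$ with $\omega\cdot\omega'$ bounded away from that value, $|\nabla_x(\phi^--\phi^+)|$ is bounded below with the appropriate homogeneity on the amplitude support, and repeated integration by parts in $x$ shows that the kernel and all its $(\omega,\omega')$-derivatives converge (conditionally) there, uniformly --- the required smoothness.

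The genuinely hard part is $B(0)$, which carries the bulk of $S(0)$. I would treat it microlocally: the weighted zero-energy resolvent bounds of \cite{FS}, together with propagation of singularities for $R(0+\i0)$ along the zero-energy Hamilton flow of $H$ (from the Mourre and radial estimates of \cite{FS,DS1}), control the wave front set of $R(0+\i0)$ applied to $C^-$ evaluated on the zero-energy plane-wave datum in direction $\omega'$; since $C^-$ carries the phase $\phi^-$ and, on its transitional region, only rays escaping toward the asymptotic direction $\omega'$, that image is microlocally supported on the forward zero-energy bicharacteristics through this set, and pairing against $(C^+)^*$ evaluated on the datum in direction $\omega$ yields a smooth kernel unless one of those bicharacteristics reaches the outgoing region attached to $\omega$ --- i.e.\ unless there is a zero-energy orbit from $\omega'$ to $\omega$, which again forces $\omega\cdot\omega'=\cos\tfrac{\mu\pi}{2-\mu}$. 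An alternative, probably more workable in practice, is to insert a second pair of Isozaki--Kitada modifiers adapted to the zero-energy resolvent and reduce $B(0)$, modulo a kernel $C^\infty$ in $(\omega,\omega')$, to oscillatory integral operators of the same type as $A(0)$, now with two pairs of eikonal phases, whose stationary set is again the zero-energy scattering relation. In either approach the principal difficulty --- and the reason the results of \cite{FS,DS1} are indispensable --- is that the relevant geometry ``at infinity'' is the $\mu$-dependent, non-Euclidean one in which classical orbits do not straighten out, the weighted spaces are $\mu$-dependent, and the resolvent has to be controlled precisely at the threshold, outside the range where standard positive-energy microlocal estimates apply; these are exactly the points already handled in the construction of $S(0)$, which therefore feeds directly into the present proof.
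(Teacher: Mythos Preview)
Your overall strategy --- show the resolvent part of the Isozaki--Kitada representation is smooth off the cone by propagation of singularities along zero-energy orbits, all of which have deflection exactly $-\tfrac{\mu\pi}{2-\mu}$ --- is the paper's strategy. But two aspects of your write-up are off.

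First, the decomposition. The stationary IK formula used here is $S(0)=-2\pi J^+(0)^*T^-(0)+2\pi\i\, T^+(0)^*R(+\i0)T^-(0)$; there is no $J^{+*}J^-$ term, and your description of $B(0)$ as a mixture of $J$'s and $T$'s sandwiching the resolvent does not match any representation in the paper. More importantly, the role you assign to $A(0)$ is misplaced. In the actual first term $J^{+*}T^-$ the amplitudes $\tilde a^+(\cdot,\omega)$ and $\tilde t^-(\cdot,\omega')$ are supported in small cones about $\omega$ and $-\omega'$ respectively; on that overlap the phase $\phi^--\phi^+$ has \emph{no} stationary point (stationarity would force $\omega\approx\omega'$, while the supports force $\omega\approx-\omega'$), and plain integration by parts makes the kernel $C^\infty$ on all of $S^{d-1}\times S^{d-1}$. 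So the classical gluing picture you invoke for $A(0)$ is not what is happening there; \emph{all} singularities of $S(0)$ sit in the resolvent term.

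Second, for that resolvent term your sketch (``propagation of singularities for $R(0+\i0)$ along the zero-energy Hamilton flow'') is right in spirit but misses the specific machinery that makes it work at the threshold. The paper introduces a \emph{scattering wave front set} $WF_{\sc}^s$ built from the rescaled variables $(\hat x,\,\xi/g)$ on the reduced phase space $\T^*\cong T^*(S^{d-1})\times\R$ (this uses the scaling invariance of $-\gamma|x|^{-\mu}$), and proves a H\"ormander-type propagation theorem (Proposition~\ref{prop:propa1}) for $Hu=v$ along the reduced flow \eqref{eq:reduced eqns}. One then computes $WF_{\sc}^s$ of $v^\pm(\cdot,\omega)$ explicitly (Lemma~\ref{lemma:wavefas}), propagates $WF_{\sc}^s$ of $R(+\i0)v^-(\cdot,\omega')$ forward, and checks via \eqref{eq:vinkel} that it is disjoint from $WF_{\sc}^s(v^+(\cdot,\omega))$ precisely when $\omega\cdot\omega'\neq\cos\tfrac{\mu\pi}{2-\mu}$. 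Your alternative, inserting extended modifiers to convert the resolvent term into an explicit oscillatory integral, is also carried out in the paper (Subsection~\ref{Quantum singularities for $V_2=0$}), but only under the extra assumption $V_2=0$ (or $\epsilon_2>1-\mu/2$) and after extending $\phi^+$ past $\theta=\pi$, which for $\mu\ge 1$ forces multivalued phases --- a ``glueing problem'' you have not addressed.
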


Note that this fact was known before in the case of Coulomb potentials, at least in dimension 
$d\geq
 3$. In this case $S(0)=\e^{\i  c}P$,   where $(P\tau)(\omega)=\tau(-\omega)$.
 Moreover in this case one can compute (using special functions, see Yafaev
 \cite {Y3} for an explicit formula)
 \begin{equation}
   \label{eq:S=P22}
   S_{\dol}(\lambda)=\e^{\i \lambda^{-1/2} \{C_1\ln \lambda+C_2+o(\lambda^0)\}}(P+o(\lambda^0)).
 \end{equation}

\subsection{Type of singularity of the scattering matrix}

Let $\Lambda$ be the operator on $L^2(S^{d-1})$ such that $\Lambda Y_l=(l+d/2-1)
Y_l$,
where $Y_l$ is a spherical harmonic of order $l$. Alternatively, it can be
introduced as follows:
\[\Lambda:=\sqrt{L^2+(d/2-1)^2},\]
where
\begin{equation*} 
L^2=\sum_{1\leq i<j\leq d}L_{ij}^2,\ \ \  \i L_{ij}=x_{i} \partial_{x_j}-x_{j} \partial_{x_i}.
\end{equation*}

Note that, for any $\theta$, the distributional kernel of
$\e^{\i\theta\Lambda}$ can be computed explicitly and its singularities appear
at $\omega\cdot\omega'=\cos\theta$.
This is expressed in the following result:

\begin{prop} $\e^{\i\theta \Lambda}$ equals
\begin{enumerate}\item $c_\theta I$, where
$I$ is the identity, if $\theta\in \pi2\Z$;
\item $c_\theta P$, where $P$ is  the parity operator,
 if $\theta\in
\pi(2\Z+1)$;
\item the  operator whose Schwartz kernel is of the form
$c_\theta(\omega\cdot\omega'-\cos
\theta +\i0
)^{-\frac d2}$ if   $\theta\in ]\pi 2k,\pi(2k+1)[$ for some $k\in \Z
$;
\item the  operator whose Schwartz kernel is of the form
$c_\theta(\omega\cdot\omega'-\cos
\theta -\i0
)^{-\frac d2}$ if   $\theta\in ]\pi( 2k-1),\pi 2k[$ for some $k\in \Z
$.
\end{enumerate}\end{prop}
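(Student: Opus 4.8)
The plan is to compute the Schwartz kernel of $\e^{\i\theta\Lambda}$ directly from the spectral resolution of $\Lambda$ into spherical harmonics. Write $\nu:=d/2-1$, so that (for $d\geq 2$) $\Lambda Y_l=(l+\nu)Y_l$ whenever $Y_l$ is a spherical harmonic of degree $l$, and let $\Pi_l$ be the orthogonal projection in $L^2(S^{d-1})$ onto the degree-$l$ harmonics. As an identity of distributions on $S^{d-1}\times S^{d-1}$ one has
\[\e^{\i\theta\Lambda}=\sum_{l=0}^\infty \e^{\i\theta(l+\nu)}\,\Pi_l ,\]
the series converging in $\mathcal D'$ since the kernels $\Pi_l$ grow only polynomially while the degree-$l$ Fourier coefficients of test functions decay rapidly. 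First I would record the (anti)periodicity $\e^{\i 2\pi\Lambda}=\e^{2\pi\i\nu}I=(-1)^d I$, which reduces all four cases with a general $k\in\Z$ to the case $k=0$. Cases (1) and (2) then drop out at once: for $\theta\in 2\pi\Z$ every phase $\e^{\i\theta(l+\nu)}$ equals $\e^{\i\theta\nu}$, so $\e^{\i\theta\Lambda}=\e^{\i\theta\nu}I$; for $\theta\in\pi(2\Z+1)$ one has $\e^{\i\theta(l+\nu)}=\e^{\i\theta\nu}(-1)^l$, and since a degree-$l$ spherical harmonic has parity $(-1)^l$, $\e^{\i\theta\Lambda}=\e^{\i\theta\nu}\sum_l(-1)^l\Pi_l=\e^{\i\theta\nu}P$.

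For $\theta\in{]0,\pi[}$ (case (3), $k=0$) I would invoke the addition theorem for spherical harmonics, which for $d\geq 3$ reads $\Pi_l(\omega,\omega')=\frac{l+\nu}{\nu|S^{d-1}|}\,C_l^{(\nu)}(\omega\cdot\omega')$ with $C_l^{(\nu)}$ the Gegenbauer polynomial of index $\nu$ (for $d=2$ one replaces this by the Chebyshev expression $\Pi_l=\pi^{-1}T_l(\omega\cdot\omega')$, $l\geq1$, which is the $\nu\to0$ limit; $d=1$ is trivial and cases (3)--(4) are vacuous there). The classical generating identity
\[\sum_{l=0}^\infty\frac{l+\nu}{\nu}\,C_l^{(\nu)}(t)\,z^l=\frac{1-z^2}{(1-2tz+z^2)^{\nu+1}},\qquad |z|<1,\]
which follows by differentiating $\sum_l C_l^{(\nu)}(t)z^l=(1-2tz+z^2)^{-\nu}$ in $z$ and simplifying, then does the work. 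Putting $z=r\e^{\i\theta}$, multiplying by $\e^{\i\theta\nu}/|S^{d-1}|$ and letting $r\to 1^-$ (Abel summation), the elementary identities $1-\e^{2\i\theta}=-2\i\e^{\i\theta}\sin\theta$ and $1-2t\e^{\i\theta}+\e^{2\i\theta}=2\e^{\i\theta}(\cos\theta-t)$ cause all $\theta$-phases to cancel and leave, with $t=\omega\cdot\omega'$ and $\nu+1=d/2$, a kernel of the announced form $c_\theta\,(\omega\cdot\omega'-\cos\theta+\i0)^{-d/2}$, where $c_\theta$ is an explicit nonzero constant proportional to $\sin\theta$. The side of the regularization is dictated by the local expansion $1-2tz+z^2\approx-2\e^{\i\theta}\big[(t-\cos\theta)+\i\sin\theta(1-r)\big]$ near its zero: since $\sin\theta>0$ on ${]0,\pi[}$ this yields precisely the $+\i0$ prescription. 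Case (4), $k=0$, i.e. $\theta\in{]{-\pi},0[}$, follows from the same computation with $\sin\theta<0$ (or by passing to adjoints), giving $-\i0$, and the remaining $k$ are recovered from the periodicity noted above.

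The step I expect to be the main obstacle is making the passage to the distributional boundary value rigorous: one has to show that $(1-2tz+z^2)^{-\nu-1}$, regarded as a function of $t\in{]{-1},1[}$ depending on the parameter $z=r\e^{\i\theta}$, converges in $\mathcal D'$ as $r\to 1^-$ to $(-2\e^{\i\theta})^{-\nu-1}(t-\cos\theta+\i0)^{-\nu-1}$ --- that is, that no spurious mass concentrates at the singular point $t=\cos\theta$ beyond what the homogeneous distribution $(x+\i0)^{-d/2}$ already carries (here one uses $d/2>0$, and for $d/2$ a positive integer the distribution $(x+\i0)^{-d/2}$ is still perfectly well defined, e.g. as a derivative of $(x+\i0)^{-1}$). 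This is a standard boundary-value lemma for homogeneous distributions, but carrying it out hand in hand with the bookkeeping of branches that pins down $c_\theta$ and the sign of $\i0$ is the only non-routine ingredient; everything else is the spectral calculus above together with classical special-function identities.
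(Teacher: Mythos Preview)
Your proposal is correct and follows essentially the same route as the paper: both expand $\e^{\i\theta\Lambda}$ in spherical harmonics, invoke the addition theorem to express the projections $\Pi_l$ through Gegenbauer (resp.\ Chebyshev) polynomials, and then sum via the same differentiated generating identity $\sum_l\frac{l+\nu}{\nu}C_l^{(\nu)}(t)z^l=(1-z^2)(1-2tz+z^2)^{-\nu-1}$. The only cosmetic difference is the regularization: the paper substitutes $t=\e^{\i\theta}$ with $\Im\theta>0$ (i.e.\ $\theta\to\theta+\i\epsilon$), whereas you use radial Abel summation $z=r\e^{\i\theta}$, $r\to1^-$; both simply enforce $|z|<1$ and lead to the same $\pm\i0$ prescription determined by the sign of $\sin\theta$.
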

 
For all $\theta$, the operator $\e^{\i\theta\Lambda}$ belongs to the
  class of Fourier
  integral operators of order 0  in the sense of H\"ormander \cite{Ho2,Ho4}.

The operator $\Lambda$ can be used to describe the leading asymptotics of the
scattering matrix at zero energy:

\begin{thm}\label{thm:singrad2}
If (\ref{eq:1v}) holds with $V$ being  spherically symmetric (up to a
compactly supported possibly singular term), now with  the
number $\epsilon$ obeying $\epsilon>1-\tfrac\mu 2$, then
\[S(0)=\e^{\i c}\e^{-\i\frac{\mu\pi}{2-\mu}\Lambda}+K,\]
where $K$ is compact.
\end{thm}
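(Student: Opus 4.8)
The plan is to exhibit both $S(0)$ and $\e^{\i c}\e^{-\i\frac{\mu\pi}{2-\mu}\Lambda}$ as Fourier integral operators on $S^{d-1}$ associated with the \emph{same} canonical relation and having the \emph{same} principal symbol; their difference is then a Fourier integral operator of order $\le-1$, and any such operator is compact on $L^2(S^{d-1})$ (write it as a bounded zero-order Fourier integral operator composed with a pseudodifferential operator of negative order, which is compact on the compact manifold $S^{d-1}$ by Rellich's theorem). Set $\theta_0:=\frac{\mu\pi}{2-\mu}$. By Theorem~\ref{thm:19} the kernel of $S(0)$ is smooth away from $\{\omega\cdot\omega'=\cos\theta_0\}$, and by the Proposition on the kernel of $\e^{\i\theta\Lambda}$ this locus is precisely the singular support of the kernel of $\e^{-\i\theta_0\Lambda}$; so the only thing to be checked is that the two kernels have the same leading singularity there, up to the scalar factor $\e^{\i c}$.

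I would extract that leading singularity from the homogeneous model $V_1(x)=-\gamma|x|^{-\mu}$, where $S(0)$ is spherically symmetric and hence a function of $\Lambda$. Separating variables $\psi=r^{-(d-1)/2}u(r)Y_l$, the zero-energy radial equation is the Bessel equation
\[
u''-\frac{\nu_l^2-1/4}{r^2}\,u+2\gamma r^{-\mu}u=0,\qquad \nu_l:=l+\tfrac d2-1,
\]
whose solution regular at the origin is $u_l(r)=\sqrt r\,J_{\kappa_l}\!\big(z(r)\big)$ with $\kappa_l:=\tfrac{2\nu_l}{2-\mu}$ and $z(r):=\tfrac{2\sqrt{2\gamma}}{2-\mu}\,r^{1-\mu/2}$; this is the $l$-th partial wave of $w^\pm(\cdot,0)$. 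Along a ray the eikonal of the modifier $J^\pm$ solves $\tfrac12(\partial_r\phi^\pm)^2+V_1=0$, i.e.\ $\phi^\pm(r)=\pm z(r)+\mathrm{const}$ asymptotically, so the modifier distinguishes exactly the $\e^{\pm\i z}$ components of $u_l$ at infinity. Reading off the outgoing/incoming coefficients from $J_{\kappa_l}(z)=\sqrt{2/\pi z}\,\cos\!\big(z-\tfrac{\kappa_l\pi}{2}-\tfrac\pi4\big)+O(z^{-3/2})$, and dividing by the free reference (which, by the normalization of $W^\pm(0)$ set up in the earlier sections, carries the free centrifugal order $\nu_l$ in place of the effective order $\kappa_l$), one obtains
\[
s_l=\e^{\i c_0}\,\e^{-\i\pi(\kappa_l-\nu_l)}=\e^{\i c_0}\,\e^{-\i\theta_0\nu_l},\qquad \kappa_l-\nu_l=\nu_l\Big(\tfrac{2}{2-\mu}-1\Big)=\tfrac{\mu}{2-\mu}\,\nu_l,
\]
with $c_0$ independent of $l$; thus for the strictly homogeneous potential $S(0)=\e^{\i c_0}\e^{-\i\theta_0\Lambda}$ exactly, and $-\theta_0$ is precisely the classical deflection angle of \eqref{eq:pol-eqn}, here reappearing as $\frac{\d}{\d l}\arg s_l$.

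It remains to see that passing from $V_1$ to the full $V$ changes neither the canonical relation nor the principal symbol of $S(0)$. Write $V=V_1+V'$ with $V_1(x)=-\gamma|x|^{-\mu}$; by hypothesis $V'=V-V_1$ splits into a spherically symmetric part that is $O(|x|^{-\mu-\epsilon})$ plus a compactly supported, possibly singular term $V_{\mathrm{sing}}$. The hypothesis $\epsilon>1-\tfrac\mu2$ is exactly the statement that $V'$ is short-range relative to the zero-energy classical dynamics of $-\tfrac12\Delta+V_1$: along a zero-energy orbit $|y(t)|\sim t^{2/(2+\mu)}$, so $\int^{\infty}|(V-V_1)(y(t))|\,\d t\sim\int^{\infty}t^{-\frac{2(\mu+\epsilon)}{2+\mu}}\,\d t<\infty$ iff $\epsilon>1-\tfrac\mu2$; equivalently, in partial waves, the variation-of-parameters correction to the asymptotic phase of $u_l$, controlled by $\int^\infty|(V-V_1)(s)|\,\big|\sqrt s\,J_{\kappa_l}(z(s))\big|\,\big|\sqrt s\,Y_{\kappa_l}(z(s))\big|\,\d s=O\!\big(\int^\infty s^{-\mu/2-\epsilon}\,\d s\big)$, converges. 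Consequently the zero-energy modifiers and symbols for $V$ differ from those for $V_1$ by terms that stay bounded and converge as $\lambda\searrow0$ and that affect neither the leading eikonal (hence neither $\theta_0$ nor the canonical relation) nor the principal symbol of $S(0)$, while the compactly supported $V_{\mathrm{sing}}$ contributes only a smoothing operator. Hence $S(0)$ is a Fourier integral operator with the same canonical relation and principal symbol as $\e^{\i c_0}\e^{-\i\theta_0\Lambda}$, so $S(0)-\e^{\i c_0}\e^{-\i\theta_0\Lambda}$ has order $\le-1$ and is compact; the theorem follows with $c:=c_0$.

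The main obstacle is this last step: showing that the whole modifier/stationary-phase construction of $S(0)$ is stable, \emph{uniformly down to $\lambda=0$}, under perturbation by a potential short-range relative to the zero-energy dynamics, so that $V'$ really produces only a lower-order (order $\le-1$) correction. In the partial-wave language this is the statement that the extra phase shift induced by $V'$ converges as $l\to\infty$ (not merely exists for each $l$), which requires uniform-in-order bounds on the products $\sqrt s\,J_{\kappa_l}(z)\cdot\sqrt s\,Y_{\kappa_l}(z)$ across the turning region $z\sim\kappa_l$ — and this is exactly where $\epsilon>1-\tfrac\mu2$ enters. The remaining ingredients — the explicit Bessel computation for $V_1$, the kernel of $\e^{\i\theta\Lambda}$, the smoothness of $S(0)$ off the singular locus (Theorem~\ref{thm:19}), the zero-energy resolvent estimates of \cite{FS}, and the norm-continuity of the wave matrices down to $\lambda=0$ from the earlier sections — are either already available or routine.
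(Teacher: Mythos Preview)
Your approach differs substantively from the paper's. The paper never invokes the exact Bessel solution for the homogeneous model; instead it carries out a direct WKB analysis of the radial equation for the \emph{full} potential (Proposition~\ref{prop:main1}), partitioning $]0,\infty[$ into four regions (centrifugal-dominated, pre-turning Liouville--Green, an Airy transition zone around the turning point $r_0\sim l^{2/(2-\mu)}$, and post-turning Liouville--Green), and extracts the phase-shift asymptotics $\sigma_l(0)=-\tfrac{\mu\pi}{2(2-\mu)}l+\tfrac{c}{2}+o(l^0)$ (Proposition~\ref{prop:scatphase}); compactness of $K$ is then immediate since $K$ is diagonal in spherical harmonics with eigenvalues $o(1)$. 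Your Bessel computation for $V=-\gamma|x|^{-\mu}$ is correct and in fact answers affirmatively the question the paper leaves open in Remark~\ref{remark:formb} (that the $o(l^0)$ term vanishes identically for the purely homogeneous potential). Your extraction of $s_l$ should, however, be tied to the paper's actual phase-shift definition (Corollary~\ref{cor:geomdef2b} and \eqref{eq:phase shift22}) rather than to a vague ``free reference''; once that is done the identity $2\sigma_l(0)=2z(R_0)-\theta_0\nu_l$ drops out exactly as you claim.

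The substantive gap is the one you yourself flag: showing that the perturbation $V'=V-V_1$ shifts each $\sigma_l(0)$ only by $o(1)$ as $l\to\infty$. Your variation-of-parameters integral $\int |V'(r)|\,|J_{\kappa_l}(z(r))Y_{\kappa_l}(z(r))|\,r\,\d r$ must be bounded uniformly in $l$, and the difficulty is concentrated near the turning radius, where one needs the transition-region (Airy) asymptotics of Bessel functions uniformly in order and argument. Carrying this out is essentially equivalent in effort to the paper's Interval~$I_3$ analysis with the Airy model \eqref{eq:838a}--\eqref{eq:83ba} together with the perturbation estimates \eqref{eq:finalest2cc}, \eqref{eq:finalest28ii}, \eqref{eq:finalest2cciii} (which is precisely where $\epsilon>1-\tfrac\mu2$ is used). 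So your route is viable and its model step is more elegant, but the remaining work largely coincides with the paper's WKB argument. Given the spherical symmetry, the FIO framing is legitimate but superfluous: both operators are diagonal in spherical harmonics, so compactness of the difference is purely a statement about eigenvalue asymptotics.
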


We shall prove Theorem \ref{thm:singrad2} by one-dimensional
WKB-analysis.

\subsection{Kernel of $S(0)$ as an explicit oscillatory integral}

In the case $V=-\gamma|x|^{-\mu} +O \big (|
x|^{-1-\tfrac \mu 2-\epsilon}\big ), \;\epsilon>0,$ it is possible to 
 represent the distributional
kernel of the scattering
matrix $S(0)$  (modulo a smoothing term)
in terms of a fairly explicit  oscillatory integral. 
This provides an alternative way to prove Theorem
\ref{thm:19} on the location of singularities of the scattering
matrix -- given the stronger conditions on the potential (we remark
that our proof of Theorem
\ref{thm:19} is rather abstract, see Subsection \ref{Propagation of singularities for zero-energy  generalized 
  eigenfunctions}).   
Moreover, although we shall not elaborate in this paper,  it is actually feasible
to prove a partial version of Theorem \ref{thm:singrad2} (more
precisely for the cases $\tfrac{2}
{2-\mu}\notin \N$)  using this fairly  explicit 
integral.


\subsection{Generalized eigenfunctions}

A  solution of the equation
\begin{equation}(-\Delta+V(x)-\lambda)u=0\label{lambda}\end{equation}
in $\bigcup_s L^{2,-s}(\R^d)$  will be called a {\em generalized eigenfunction
  with energy $\lambda$.} One of our results says that each generalized
  eigenfunction with positive or zero energy is of the  form
 $W^\pm(\lambda)\tau$, where $\tau$ is a distribution on the
  sphere $S^{d-1}$.

Such generalized eigenfunctions  are never
  square-integrable. A rough method
 to describe their behaviour for large $x$ is to
  use weighted spaces $L^{2,s}(\R^d)$ with appropriate $s$.
A more precise
  description is provided  by 
the  so-called {\em Besov spaces}. One of our results says 
 that  the range of
  (incoming and outgoing) wave matrices can be described precisely by an
  appropriate Besov space. One can also describe quite precisely 
their spatial asymptotics. In the case of zero energy, these results are new.

\subsection{Propagation of singularities for zero-energy  generalized 
  eigenfunctions} \label{Propagation of singularities for zero-energy  generalized 
  eigenfunctions}

It is well-known that some of the properties 
of solutions of PDE's of the form $P(x,D)u=0$
  can be explained by the behaviour of classical
hamiltonian
dynamics given by the principal symbol of $P$. One of the best known expressions
of this idea is H\"ormander's theorem about propagation of singularities.

Similar ideas are true in the case of Schr\"odinger operators.
 This is well understood for
positive energies. In the case of zero energy a similar analysis is
possible. It has an especially clean formulation if we assume that the
potential is $V(x)=-\gamma
|x|^{-\mu}$. Under this condition, the set of orbits of the 
classical system given by $h(x,\xi)$ is invariant with respect to an
appropriate scaling. This allows us to reduce the phase space.

In the quantum case, 
we introduce an appropriate concept of a {\em wave front set} adapted to the
solutions to (\ref{lambda}), different from H\"ormander's.
One of our main results describes a possible location of
this special
 wave front set for solutions to (\ref{lambda}) for $\lambda=0$
 -- the statement is very
similar to the statement of the original H\"ormander's theorem; it is
used in a proof of Theorem \ref{thm:19}.

\subsection{Sommerfeld radiation condition}

Another of our main results is a version of the Sommerfeld radiation
condition for zero energies. It says that given $v$ in a certain weighted
space
a solution $u$ of the equation $(H-\lambda)u=v$ satisfying appropriate outgoing/incoming
phase space localization is always of the form $u=R(\lambda\pm\i0)v$.

This somewhat technical result has a number of interesting applications. In
particular, we use it in  our proof that $S(0)$ can be expressed in terms of
an oscillatory integral, and also in  the  description of the asymptotics
of generalized eigenfunctions
at large distances.

\subsection{Organization of the paper}

The paper is organized as follows: In Section \ref{Conditions} we
impose conditions on the potential. In the case we allow the potential 
 to have  a
non-spherically symmetric term  we shall need  certain regularity properties
of the leading spherically symmetric term. These properties are stated
in Condition \ref{assump:conditions2}; they are fulfilled for the
example \eqref{eq:1v} discussed above.

In Section \ref{Preliminaries}
 we describe and extend some of 
results from our
previous papers. In particular, we recall
the construction of scattering phases in
  \cite{DS1} (given there under the same  conditions). We describe and
  to some extend
 the study of the  properties of these objects. 

In Section  \ref{Uniform resolvent estimates} we recall various microlocal
 resolvent estimates  from
\cite{FS} (slightly extended). We also introduce the concept of the scattering
 wave
 front set adapted to energy zero. We give its applications, in particular a
 result about the Sommerfeld radiation condition at zero energy.

In Section \ref{FIO}  we describe the modifiers used in
our paper. They are given by a WKB-type ansatz, which involves solving
 transport equations.
 
 In Section \ref{Wave matrices}
we  introduce wave operators  and wave matrices. We describe
  their low-energy asymptotics.

In Section \ref{Scattering matrices}  we introduce 
scattering operators and matrices. We 
analyse their low-energy asymptotics.

In Section  \ref{Generalized eigenfunctions} we study properties of
generalized eigenfunctions for non-negative  energies. 

In Section \ref{Propagation of singularities at zero energy}
 we restrict our attention
to potentials of the form \eqref{eq:1v}.
We 
 show the classical rule, $\omega\cdot \omega'= \cos
  {\mu\over 2-\mu}\pi$,  for the location of zero-energy
  singularities  (cf. Theorem \ref{thm:19}). 
We also show
a
  ``propagation of scattering singularities result'', see  Proposition
  \ref{prop:propa1}, on generalized zero-energy eigenfunctions. Under
  stronger conditions than \eqref{eq:1v} we represent the kernel of
  $S(0)$ as an explicit oscillatory integral.

In Section  \ref{Singularity of the kernel of the scattering matrix} we
study an  explicit Fourier 
integral operator on the unit sphere
-- the evolution operator for the wave equation -- and we show that it
coincides, modulo a compact term, with $S(0)$ (again, 
 under stronger conditions than \eqref{eq:1v}).

In Appendix \ref{Appendix} we present, in an
 abstract setting,   various  elements of stationary
scattering theory used in our paper.

\section{Conditions} 
\label{Conditions}
We shall consider a classical Hamiltonian $h=\tfrac {1}{2 }\xi^2+ V$ on ${\mathbb
R}^d\times {\mathbb R}^d$ 
where 
$V$ satisfies Condition
\ref{assump:conditions1} (in classical mechanics we can take  $V_3=0$)
and possibly Condition  \ref{assump:conditions2}
(both stated below).  Throughout the paper 
we shall use the
non-standard notation $\langle x \rangle$ for $x\in \R^d$ to denote a function $
\langle x \rangle= f(r); \;r=|x|$, where here $f\in C^\infty([0,\infty[)$
is taken convex, and obeys $f=\tfrac{1}{2}$ for $r<\tfrac{1}{4}$ and
$f=r$ for $r>1$. We shall often use the
 notation $\hat x=x/r$ for vectors $x\in
 \R^d\setminus \{0\}$. Let  $L^{2,s}=L^{2,s}({\mathbb
R}^d_x)=\langle x \rangle^{-s}L^2({\mathbb
R}^d_x)$ for any $s\in \R$ (the corresponding norm will be denoted by
 $\|\cdot \|_s$). Introduce also 
$L^{2,-\infty}(=L^{2,-\infty}(\R^d))=\cup_{s\in \mathbb
R}L^{2,s}$ and $L^{2,\infty}=\cap_{s\in \mathbb
R}L^{2,s}$.
The notation $F(s>\epsilon)$ denotes 
a smooth increasing function $=1$ for $s>\frac {3}{4}\epsilon$ and
 $=0$ for $s<\frac {1}{2}\epsilon$;
 $F(\cdot<\epsilon):=1-F(\cdot>\epsilon)$.
 The symbol $g$ will be used extensively; it stands for the function  $g(r)=\sqrt{2\lambda-2V_1(r)}$
 (for $V_1$ obeying Condition \ref{assump:conditions1} and
 $\lambda\in[0,\infty[$).  
\begin{cond}
\label{assump:conditions1}
The function $V$ can be written as a sum of three 
real-valued measurable functions, $V= V_1 + V_2+V_3$, such that: 
For some $\mu \in ]0,2[$ we
have

\begin{enumerate}[\quad\normalfont (1)]

   \item \label{it:assumption1} $V_1$ is a smooth negative function that only 
     depends on  the 
     radial variable $r$ in the region $r\geq 1$ (that is
     $V_1(x)=V_1(r)$ for $r\geq 1$). There exists $\epsilon_1 > 0$ such
     that $$V_1(r) \leq -\epsilon_1
r^{-\mu},\;r\geq 1.$$
   \item \label{it:assumption2}     For all $\gamma\in ({\mathbb N} \cup \{0\})^{d}$
there exists
$C_{\gamma} >0$ such that
$$
\langle x \rangle^{\mu+|\gamma|} |\partial^{\gamma} V_1(x)|
\leq C_{\gamma}.$$
\item \label{it:assumption3} 
     There exists $\tilde\epsilon_1 > 0$ such that 
     \begin{equation}
       \label{eq:virial}
       rV_1'(r)
\leq -(2-\tilde\epsilon_1) V_1(r),\;r\geq 1.
     \end{equation}

\item \label{it:assumption4} 
     $V_2=V_2(x)$ is smooth and there exists $\epsilon_2 > 0$ such that for all $\gamma\in ({\mathbb N} \cup \{0\})^{d}$
$$
\langle x \rangle^{\mu+\epsilon_2 +|\gamma|} |\partial^{\gamma} V_2(x)|
\leq C_{\gamma}.$$
   \item \label{it:assumption55} $V_3=V_3(x)$ is compactly supported.
\end{enumerate}
\end{cond}

 The following condition will be  needed only in the case $V_2\neq
 0$:
\begin{cond}
\label{assump:conditions2}
Let $V_1$ be given as in Condition  \ref
{assump:conditions1}  and $\alpha
:=\tfrac{2}{2+\mu}$. There exists $\bar\epsilon_1 > \max (0,
1-\alpha(\mu+2\epsilon_2))$ such that 
\begin{align}
 \label {eq:green's1}
  \limsup_{r\to \infty} r^{-1}V_1'(r)\Big (\int^r_1 (-2V_1(\rho))^{-\frac
  {1}{2}}\d\rho\Big )^2 &< 4^{-1}(1-\bar\epsilon_1^2),\\
\limsup_{r\to \infty} V_1''(r)\Big (\int^r_1 (-2V_1(\rho))^{-\frac
  {1}{2}}\d\rho\Big )^2 &< 4^{-1}(1-\bar\epsilon_1^2).\label{eq:green's2}
\end{align}
\end{cond}

 We notice that \eqref{eq:virial} and \eqref{eq:green's1}  tend to
 be somewhat strong conditions for $\mu{\approx}2$. On the other hand  Conditions
 \ref{assump:conditions1} and \ref{assump:conditions2} hold for all
 $\epsilon_2>0$ for the particular example
 $V_1(r)=-\gamma r^{-\mu}$ (with $\epsilon_1 =\gamma$, $\tilde
 \epsilon_1=2-\mu$  and some $\bar\epsilon_1<1-\alpha
\mu$). 

In quantum mechanics we consider $H = H_0+ V$, $H_0=\tfrac {1}{2
} p^2,\;p=-\i \nabla$,  on $\mathcal
H=L^2({\mathbb R}^d)$, and we need the following additional
condition. Clearly Condition \ref {assump:conditions3} (\ref
{it:assumption7}) assures that $H$ is self-adjoint. 
For  an elaboration of Condition \ref {assump:conditions3} (\ref
{it:assumption8}), see \cite {FS}; it guarantees that zero is not an
eigenvalue of $H$.  Condition \ref {assump:conditions3} (\ref
{it:assumption88}) is included here only for convenience of
presentation; with the other conditions there are no small positive eigenvalues, cf. \cite {FS}. 
\begin{cond} 
\label{assump:conditions3}
In addition to  Condition \ref {assump:conditions1}

\begin{enumerate} 
[\quad\normalfont (1)]

\item \label{it:assumption7} $V_3 (H_0+\i )^{-1}$ is a compact operator on
     $L^2({\mathbb R}^d)$.

   \item \label{it:assumption8} $H$ satisfies the unique continuation
      property at infinity.
\item \label{it:assumption88} $H$ does not have positive eigenvalues.
\end{enumerate}
\end{cond}

\section{Classical orbits}
\label{Preliminaries}
In this section we recall and extend the results of \cite{DS1} about low energy
classical orbits that we will
need in our paper.

\subsection{Scattering orbits at positive energies}
\label{Classical preliminaries}

We introduce for $R\geq 1$ and $\sigma>0$
 \begin{align*}
 &\Gamma^+_{R,\sigma}(\omega)=\{y\in {\mathbb R}^d\ |\ y\cdot \omega\geq (1-\sigma)|y|,\;|y|\geq R\};\; \omega\in S^{d-1},\\
   &\Gamma^+_{R,\sigma}=\{(y,\omega)\in {\mathbb R}^d \times
   S^{d-1}\ |\ y \in \Gamma^+_{R,\sigma}(\omega)\}.
 \end{align*}

\begin{lemma} Suppose that $V_1$ satisfies \eqref{symbol}. Let
  $\sigma\in ]0,2[$.
 Then there exists a decreasing function
$]0,\infty[\ni\lambda\mapsto R_0(\lambda)$ such that for all
   $|\xi|\geq\sqrt{2\lambda}$ and $x\in\Gamma_{R_0(\lambda),\sigma}^+
(\hat\xi)$ there
   exists a unique solution $y(t)=y^+(t,x,\xi)$ 
of the problem (\ref{eq:mixed conditions22}) such
   that $y(t)\in\Gamma_{R_0(\lambda),\sigma}^+(\hat\xi)$ for $t>1$.  If we set
\[F^+(x,\xi):=\dot y^+(1,x,\xi),\]
then $\rot_x F^+(x,\xi)=0$.
\label{posi}\end{lemma}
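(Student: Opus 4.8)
The plan is to build the outgoing orbit $y^+$ by an Isozaki--Kitada type contraction argument and then to deduce $\rot_xF^+=0$ from a Wronskian identity for the associated variational equation. First I would recast \eqref{eq:mixed conditions22} as an integral equation: writing $w(t):=\dot y(t)-\xi$ and integrating $\ddot y=-\nabla V_1(y)$ from $t$ to $+\infty$ (legitimate once $\dot y(t)\to\xi$ is known), the problem becomes the fixed point equation
\[
w(t)=\int_t^\infty \nabla V_1\big(y(s)\big)\,\d s,\qquad y(s)=x+(s-1)\xi+\int_1^s w(\tau)\,\d\tau .
\]
I would solve this in a small ball of the Banach space of continuous $w\colon[1,\infty[\to\R^d$ with norm $\|w\|:=\sup_{t\ge1}\big(|x|+(t-1)|\xi|\big)^{\mu}\,|w(t)|$. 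Weighting $\dot y-\xi$ rather than $y-x-(t-1)\xi$ is the key point: in the long-range range $\mu\le1$ the deviation of $y^+$ from the free ray is unbounded, while $w$ decays like $\big(|x|+(t-1)|\xi|\big)^{-\mu}$, which is exactly the size forced by $\partial^\alpha_x V_1(x)=O\big(|x|^{-\mu-|\alpha|}\big)$ once $|y(s)|$ is known to grow linearly.

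The geometric input is that on $\Gamma^+_{R_0(\lambda),\sigma}(\hat\xi)$ one has $|x+(s-1)\xi|\ge c(\sigma)\big(|x|+(s-1)|\xi|\big)$ with $c(\sigma)>0$ (using $\sigma<2$) and that the straight ray $s\mapsto x+(s-1)\xi$ stays in the same cone; a perturbation whose relative size is controlled by $\|w\|\,|\xi|^{-1}R_0(\lambda)^{-\mu}$ then keeps $y(\cdot)$ in the cone with $|y(s)|\ge\tfrac12c(\sigma)\big(|x|+(s-1)|\xi|\big)$. Feeding in $|\nabla V_1(y(s))|=O\big((|x|+(s-1)|\xi|)^{-\mu-1}\big)$ and $|\nabla^2V_1(y(s))|=O\big((|x|+(s-1)|\xi|)^{-\mu-2}\big)$, a direct estimate shows the above map sends the ball into itself and is a contraction provided $|\xi|^{-2}R_0(\lambda)^{-\mu}$ is small; since $|\xi|\ge\sqrt{2\lambda}$ this holds with $R_0(\lambda):=1+C(\sigma)\lambda^{-1/\mu}$, which is a decreasing function of $\lambda$. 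Banach's theorem gives the unique such $w$, hence the unique $y^+(\cdot,x,\xi)$ staying in the cone for $t>1$; uniqueness within that class is automatic because any cone-bound solution has $|y(t)|\to\infty$ (energy conservation plus convexity of $|y(t)|^2$ for $R_0$ large), so $\dot y(t)$ has a limit and the corresponding $w$ a priori satisfies the weighted bound. Differentiating the fixed point equation in $(x,\xi)$ gives the claimed smooth dependence, and we set $F^+(x,\xi):=\dot y^+(1,x,\xi)=\xi+w(1)$.

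For $\rot_xF^+=0$, fix $\xi$ and let $Y^{(k)}(t):=\partial_{x_k}y^+(t,x,\xi)$ for $k=1,\dots,d$. Differentiating the Newton equation in $x_k$ gives the variational equation $\ddot Y^{(k)}=-A(t)Y^{(k)}$, where $A(t):=\nabla^2V_1\big(y^+(t,x,\xi)\big)$ is symmetric, with $Y^{(k)}(1)=e_k$ (the $k$-th coordinate vector) and $\dot Y^{(k)}(t)\to0$ as $t\to\infty$ (since $\xi$ is held fixed); as $\|A(t)\|=O\big((1+(t-1)|\xi|)^{-\mu-2}\big)$ is twice integrable on $[1,\infty[$, each $Y^{(k)}$ stays bounded and has a limit at $t=\infty$. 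Consider the Wronskian $W_{jk}(t):=Y^{(j)}(t)\cdot\dot Y^{(k)}(t)-\dot Y^{(j)}(t)\cdot Y^{(k)}(t)$. Symmetry of $A$ gives $\dot W_{jk}=Y^{(j)}\cdot\ddot Y^{(k)}-\ddot Y^{(j)}\cdot Y^{(k)}=-Y^{(j)}\cdot AY^{(k)}+(AY^{(j)})\cdot Y^{(k)}=0$, so $W_{jk}$ is constant; letting $t\to\infty$ (where $\dot Y^{(k)}\to0$ and $Y^{(k)}$ is bounded) forces $W_{jk}\equiv0$. Evaluating at $t=1$, where $Y^{(k)}(1)=e_k$ and $\dot Y^{(k)}(1)=\partial_{x_k}F^+(x,\xi)$, gives $\partial_{x_k}F^+_j=\partial_{x_j}F^+_k$ for all $j,k$, i.e.\ $\rot_xF^+(x,\xi)=0$. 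Conceptually this just reflects that the orbits with fixed asymptotic velocity $\xi$ sweep out a Lagrangian submanifold of $\R^d\times\R^d$ on which the tautological one-form is closed, and that restricting to the slice $t=1$ (on which the position is $x$) turns that one-form into $F^+(x,\xi)\cdot\d x$.

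The step I expect to be the main obstacle is the uniform geometry: establishing, uniformly in $x\in\Gamma^+_{R_0(\lambda),\sigma}(\hat\xi)$ and in $|\xi|\ge\sqrt{2\lambda}$, that the corrected trajectory remains in the cone with $|y^+(s)|$ growing linearly --- this is exactly what makes the decay estimates and the contraction close --- together with the bookkeeping of the $\lambda$-dependence needed to make $R_0(\lambda)$ come out decreasing. In the long-range case $\mu\le1$ one must in addition check that, although $y^+(s)-\big(x+(s-1)\xi\big)$ is unbounded, its growth is genuinely sublinear (of order $\big(|x|+(s-1)|\xi|\big)^{1-\mu}$, logarithmic when $\mu=1$), so that the cone condition is not destroyed.
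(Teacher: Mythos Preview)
The paper does not give its own proof of this lemma; it states it as a known fact from the Isozaki--Kitada theory and immediately afterwards points the reader to \cite{Is1}, \cite[Definition 2.3]{IK1} and \cite[Proposition 2.8.2]{DG} (see the paragraph following \eqref{eqo2}). Your proposal is exactly the standard construction one finds in those references: solve for $w=\dot y-\xi$ rather than for $y$ itself via the integral equation $w(t)=\int_t^\infty\nabla V_1(y(s))\,\d s$, close a contraction in a $\big(|x|+(t-1)|\xi|\big)^{\mu}$--weighted sup norm using the cone geometry $|x+(s-1)\xi|\ge c(\sigma)\big(|x|+(s-1)|\xi|\big)$, and read off $R_0(\lambda)\asymp \lambda^{-1/\mu}$ from the smallness condition. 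So your argument is correct and is precisely what the paper is deferring to.

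One remark on the curl--free part: your Wronskian argument is the classical one and is fine. The paper's companion \cite{DS1} phrases the same fact for the low-energy analogue (Lemma \ref{lemma:mixed_2}) in Hamilton--Jacobi language --- that $F^+=\nabla_x\phi^+$ for a solution $\phi^+$ of the eikonal equation --- which is exactly the Lagrangian-submanifold interpretation you mention at the end. These are two ways of saying the same thing, so there is no discrepancy.
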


For any $\xi \neq 0$ we let $\lambda=2^{-1}\xi^2$, $\omega=\hat\xi$  and $R=R(\lambda)$.
For 
$(x,\omega)\in\Gamma^+_{R,\sigma}$ we choose
    a path
$[0,1]\ni l\mapsto\gamma(l)\in
\Gamma^+_{R,\sigma}(\omega)$ such that $\gamma(0)=R\omega$ and
    $\gamma(1)=x$.  We set
\[\phi^+(x,\xi):=\int_0^1F^+(\gamma(\tau),\xi)\cdot\frac{\d\gamma
(l)}{\d l}\d l+|\xi|R. \]
Note that $\phi^+(x,\xi)$ does not depend on the choice of the path
$\gamma$. For instance, we can take the interval joining these two points and
then 
  \begin{equation}
    \phi^+(x,\xi)=(x-R\omega)\cdot\int _0^1
    F^+(l(x-R\hat\xi)+R\hat\xi,\xi)\d l +|\xi|R.
\label{eqo1}  \end{equation} 
Another
possible choice is the radial interval from $R\omega$ to $|x|\omega$
and then the arc towards $x$:
\begin{equation}\phi^+(x,\xi)=\int_{R}^{|x|}F^+(l\omega,\xi)\cdot
  \omega \d l
+\int_0^{\arccos\omega \cdot \hat x}F^+(|x| v_\alpha,\xi)\cdot |x|\frac{\d v_\alpha
}{\d \alpha}\d \alpha+|\xi|R,\label{eqo2}
\end{equation}
where $v_\alpha:=\cos\alpha\omega+\sin\alpha\frac{\hat x-\omega\
  \omega\cdot\hat x}{\sqrt{1-(\omega\cdot\hat x)^2}}$.

The  phase function constructed above
  essentially coincides with the Isozaki Kitada (outgoing) phase
  function, 
  cf. \cite{Is1}, \cite[Definition 2.3]{IK1} or \cite[Proposition 2.8.2]{DG}. In
  particular, for any $\xi \neq 0$,  there are bounds   \begin{align}\label{eq:postca}
    \partial_\xi^\kappa \partial_x^\gamma(\phi^+
(x,\xi) -\xi\cdot x)&=O\big(|x|^{\delta-|\gamma|}\big )\;\text {for}\;|x|\to
    \infty,\\\delta&>\max (1-\mu,0).\nonumber   
  \end{align}
 These bounds are not  uniform in  $\xi\neq 0$,  they are
however uniform on compact subsets
 of $\R^d\setminus \{0\}$.


\subsection{Scattering orbits at low  energies}

Let us now recall  some results about scattering orbits
taken  from
\cite{DS1}.

We assume
 Conditions \ref{assump:conditions1} and 
\ref{assump:conditions2} (only Condition \ref{assump:conditions1}  if $V_2=0$). 
The fact that our Condition \ref{assump:conditions1}
includes a possibly singular potential $V_3$ is irrelevant for this
subsection since by assumption this term is compactly supported. More
precisely we just need to make sure that the $R_0\geq 1$ in Lemma
\ref{lemma:mixed_2} stated below is taken so large that $V_3(x)=0$
for $|x|\geq R_0$, then \cite{DS1} applies.

\begin{lemma}
   \label{lemma:mixed_2}
There exist $R_0\geq 1$ and $\sigma_0>0$
such that for
all $R\geq R_0 $ and for all  positive 
$\sigma\leq \sigma_0$ the problem \eqref{eq:mixed conditions222} 
is
solved  for all data 
$(x,\omega)\in \Gamma^+_{R,\sigma}$ and $\lambda \geq 0$ 
by a unique
 function $y^+(t,x,\omega,\lambda),\;t\geq 1$, such that
$y^+(t,x,\omega,\lambda)\in \Gamma^+_{R,\sigma}(\omega)$ for all $t\geq 1$.
Define a vector field $F^+(x,\omega,\lambda)$
 on $\Gamma^+_{R_0,\sigma_0}(\omega)$ by
  \begin{equation}
    \label{eq:vector field}
    F^+(x,\omega,\lambda)=\dot y^+(t=1;x,\omega,\lambda).
  \end{equation} 
Then
\[\rot_x F^+(x,\omega,\lambda)=0.\]
\end{lemma}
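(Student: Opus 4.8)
The plan is to mirror the positive-energy argument of Lemma \ref{posi}, invoking the existence and uniqueness results for the boundary-value problem \eqref{eq:mixed conditions222} from \cite{DS1} to get the family $y^+(t,x,\omega,\lambda)$ with the stated confinement property, and then to establish $\rot_x F^+ = 0$ by exhibiting $F^+(\cdot,\omega,\lambda)$ as a gradient. First I would fix $\omega$ and $\lambda$ and recall that the only input we need from \cite{DS1} is the smooth dependence of $y^+(t,x,\omega,\lambda)$ on the endpoint datum $x$ inside the outgoing region $\Gamma^+_{R_0,\sigma_0}(\omega)$; the choice of $R_0$ absorbs the compactly supported $V_3$ as noted just before the statement, so on the relevant region $V=V_1+V_2$ is smooth and we are genuinely in the setting of \cite{DS1}. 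The confinement $y^+(t,x,\omega,\lambda)\in\Gamma^+_{R,\sigma}(\omega)$ for $t\ge1$ is part of what \cite{DS1} provides once $\sigma\le\sigma_0$ and $R\ge R_0$.

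Next I would define, on a simply connected subregion of $\Gamma^+_{R_0,\sigma_0}(\omega)$, the scalar function
\[
\phi^+(x,\omega,\lambda):=\int_0^1 F^+(\gamma(l),\omega,\lambda)\cdot\frac{\d\gamma(l)}{\d l}\,\d l+c(\omega,\lambda),
\]
where $\gamma$ is a path in the region from a fixed basepoint to $x$, exactly as in \eqref{eqo1}--\eqref{eqo2} for the positive-energy case, and $c(\omega,\lambda)$ is a normalization constant. The claim $\rot_x F^+=0$ is equivalent to path-independence of this integral, i.e.\ to $F^+=\nabla_x\phi^+$. To see this, I would use the flow interpretation: $\phi^+$ should solve the eikonal equation $\tfrac12(\nabla_x\phi^+)^2+V(x)=\lambda$ with $\nabla_x\phi^+(x,\omega,\lambda)=\dot y^+(1,x,\omega,\lambda)=F^+(x,\omega,\lambda)$, as asserted in Subsection on low-energy asymptotics of classical orbits. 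The standard way to verify $\rot_x F^+=0$ directly is to differentiate the identity defining $F^+$: fix $x$, vary it in a direction $e$, let $Y(t)=\partial_e y^+(t,x,\omega,\lambda)$ be the corresponding Jacobi field solving the linearized equation $\ddot Y=-\nabla^2 V(y^+(t))Y$ with $Y(1)=e$ and the appropriate decay condition at $t\to+\infty$ coming from $\omega=\lim y/|y|$; then for two directions $e,e'$ the quantity $\dot Y(t)\cdot Y'(t)-Y(t)\cdot\dot Y'(t)$ is a constant of the linearized flow (its $t$-derivative vanishes since $\nabla^2V$ is symmetric), and evaluating it at $t=1$ gives $e'\cdot\partial_e F^+ - e\cdot\partial_{e'}F^+$ while evaluating it as $t\to+\infty$ (using the asymptotic behaviour of Jacobi fields along scattering orbits, which \cite{DS1} controls) gives $0$. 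Antisymmetry of $\partial_e F^+\cdot e' - \partial_{e'}F^+\cdot e$ in $e,e'$ is precisely $\rot_x F^+=0$.

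The main obstacle I anticipate is the behaviour as $t\to+\infty$: one must show that the Wronskian-type bilinear form $\dot Y\cdot Y'-Y\cdot\dot Y'$ tends to $0$ along the outgoing orbit, which requires quantitative decay estimates on the Jacobi fields $Y(t),\dot Y(t)$ — equivalently, good control of the derivatives $\partial_x y^+$ and $\partial_x\dot y^+$ for large $t$. These are exactly the kind of estimates established in \cite{DS1} (smooth dependence on parameters together with decay in $t$ of the difference $y^+(t)-\text{free orbit}$), so I would quote them rather than reprove them; the slowly decaying potential $V_1\sim -\gamma r^{-\mu}$ with $0<\mu<2$ makes these estimates more delicate than in the short-range case, but they are available. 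An alternative, perhaps cleaner, route that avoids the Jacobi-field bookkeeping is to first construct $\phi^+$ directly as a solution of the eikonal equation by the method of characteristics using the orbits $y^+$, check it is single-valued and smooth on $\Gamma^+_{R_0,\sigma_0}(\omega)$ (again citing \cite{DS1}), and then read off $\rot_x F^+=\rot_x\nabla_x\phi^+=0$ for free. I would present the proof in this second form, as it parallels the construction already given for the positive-energy phase function $\phi^+(x,\xi)$ and keeps the paper's exposition uniform across the two cases.
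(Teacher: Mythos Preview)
Your proposal is sound, but you should know that the paper does not actually prove this lemma at all: Subsection~3.2 opens with ``Let us now recall some results about scattering orbits taken from \cite{DS1},'' and Lemma~\ref{lemma:mixed_2} is stated as one of those results, with the remark preceding it explaining only why \cite{DS1} applies once $R_0$ is chosen so that $V_3$ vanishes on the relevant region. So the paper's ``proof'' is a citation.

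Your Jacobi-field/Wronskian argument is the standard mechanism behind such curl-free statements and is correct in outline; the constancy of $\dot Y\cdot Y'-Y\cdot\dot Y'$ follows immediately from the symmetry of $\nabla^2 V$, and its value at $t=1$ is indeed $\partial_e F^+\cdot e'-\partial_{e'}F^+\cdot e$. The only substantive step, as you identify, is the vanishing at $t\to+\infty$, and here you are right to defer to the decay estimates on $\partial_x y^+$ and $\partial_x\dot y^+$ from \cite{DS1}: these are genuinely nontrivial in the slowly-decaying regime $0<\mu<2$ (orbits grow like $t^\alpha$ with $\alpha=2/(2+\mu)$ rather than linearly, and the asymptotic condition is on $\hat y$ rather than $\dot y$), and reproving them would be out of place. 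Your alternative route via direct construction of $\phi^+$ by the method of characteristics is equally valid and is what the paper in effect adopts immediately after the lemma, defining $\phi^+$ by line integrals of $F^+$ and using it thereafter; either presentation is fine, and both ultimately rest on the same input from \cite{DS1}.
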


Note that under the assumptions of Lemma \ref{lemma:mixed_2}, we can suppose
that $R_0(\lambda)$, introduced in Lemma \ref{posi}, equals $R_0$
 for all $\lambda>0$. We can define
$\phi^+(x,\omega,\lambda)$ on
 $(x,\omega,\lambda)\in\Gamma^+_{R,\sigma}\times[0,\infty[$.
For further reference let us record the analogues of (\ref{eqo1}) and
(\ref{eqo2}): 
  \begin{equation*}
    \phi^+(x,\omega,\lambda)=(x-R_0\omega)\cdot\int _0^1
    F^+(l(x-R_0\omega)+R_0\omega)\d l +\sqrt {2\lambda}R_0,
  \end{equation*} 
\[
\phi^+(x,\omega,\lambda)=\int_{R_0}^{|x|}F^+(l\omega,\omega,\lambda)\cdot
\omega \d l
+\int_0^{\arccos\omega \cdot \hat x}F^+(|x| v_\alpha,\omega,\lambda)\cdot \frac{\d v_\alpha
}{\d \alpha}\d\alpha +\sqrt {2\lambda}R_0.\]

 We will add the subscript ``sph'' to all
objects  where $V$ is replaced by
 the  (spherically symmetric) potential $V_1$.
The following result  is proven in \cite{DS1}:

\begin{prop}\label{prop:mixed_2aaaa}
There exists $\breve \epsilon =\breve \epsilon(\mu,\bar \epsilon_1,\epsilon_2)>0$ and uniform
bounds
\begin{subequations}\label{group2}
\begin{equation}
  \label{eq:F's}
 F^+(x)-F_\sph^+(x)=O\big(|x|^{-\mu/2-\breve \epsilon }
\big).
\end{equation} In particular, for constants $C,\;c>0$ independent of $x,\;\omega
$ and $\lambda$
\begin{equation}
  \label{eq:F's2}
  \Big|\frac {F^+(x)}{|F^+(x)|}-\frac {F_\sph^+(x)}{|F_\sph^+(x)|}\Big |\leq
  C|x|^{-\breve \epsilon},
\end{equation} and 
\begin{align}
\label{eq:F_angle1}
 \frac {F^+(x)}{|F^+(x)|}\cdot \hat x  &\geq 1-C\big (1-\hat x  \cdot \omega \big )- C|x|^{-\breve \epsilon},\\ 
\label{eq:F_angle2}
 \frac {F^+(x)}{|F^+(x)|}\cdot \hat x&\leq 1-c\big (1-\hat x \cdot \omega \big )+ C|x|^{-\breve \epsilon},\\ 
\label{eq:F_angle3}
 \frac {F^+(x)}{|F^+(x)|}\cdot \omega &\geq 1-C\big (1-\hat x \cdot \omega \big )- C|x|^{-\breve \epsilon}. 
\end{align}

More generally (with the same $\breve \epsilon >0$), 
for all multiindices $\delta$ and $\gamma$ there are uniform bounds 
 \begin{align}\label{eq:derF222}\partial _\omega^\delta \partial _x^\gamma  
F^+(x)&=\langle x \rangle ^{-|\gamma|}O\left(
   g(|x|) \right ),\\
\label{eq:derF2222} \partial _\omega^\delta \partial _x^\gamma  \big (F^+(x)-F_\sph^+(x)\big
   )&=\langle x \rangle ^{-\breve \epsilon-|\gamma|}O\left(
   g(|x|) \right ).
\end{align}  
 
\end{subequations}

The vector field
 $F^+(x,\omega,\lambda)$, as well as all derivatives $\partial
 _\omega^\delta \partial _x^\gamma F^+$, are jointly continuous in the  variables $(x,\omega)\in \Gamma^+_{R_0,\sigma_0}$ and $\lambda \geq 0$.
\end{prop}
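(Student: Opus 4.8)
The plan is to reduce everything to the spherically symmetric model $V_1$ and then treat the full potential $V=V_1+V_2+V_3$ (with $V_3\equiv 0$ for $|x|\geq R_0$, hence irrelevant on $\Gamma^+_{R_0,\sigma_0}$) as a perturbation of it. The one exact identity available for the full system is energy conservation along $y^+$, which forces $|F^+(x)|^2=2\lambda-2V(x)=g(|x|)^2-2V_2(x)$; since $g(|x|)^2\geq 2\epsilon_1|x|^{-\mu}$ and $|V_2(x)|=O(|x|^{-\mu-\epsilon_2})$ this already gives $\bigl||F^+(x)|-g(|x|)\bigr|=O\bigl(g(|x|)|x|^{-\epsilon_2}\bigr)$, in particular $|F^+(x)|=O(g(|x|))$, which is the case $\gamma=\delta=0$ of \eqref{eq:derF222} and the magnitude half of \eqref{eq:F's}.

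For the model flow, $\ddot y=-\nabla V_1(y)$ is completely integrable: energy $\lambda$ and the angular momentum bivector $y\wedge\dot y$ are conserved, so the orbit lies in a plane and $F_\sph^+(x,\omega,\lambda)=g(|x|)\hat x+(\text{tangential part})$, the tangential part being governed by the conserved angular momentum, which on $\Gamma^+_{R,\sigma}$ is $O\bigl(g(|x|)\sqrt{1-\hat x\cdot\omega}\,\bigr)$. The decisive structural fact, established in \cite{DS1}, is the near-homogeneity of the model dynamics: the pure power $-\gamma r^{-\mu}$ is invariant under $x\mapsto\rho x$, $t\mapsto\rho^{1+\mu/2}t$, $\lambda\mapsto\rho^{-\mu}\lambda$, and \eqref{eq:virial} together with Condition \ref{assump:conditions1} makes $V_1$ a controlled perturbation of such a power uniformly down to $\lambda=0$. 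This yields $F_\sph^+$ with the homogeneity factor $g(|x|)$ and the gain $\langle x\rangle^{-|\gamma|}$ per $x$-derivative (i.e. \eqref{eq:derF222} for the subscript-$\sph$ objects) together with the elementary angular bounds for $F_\sph^+/|F_\sph^+|$ on the cone.

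Next I would compare the two orbits with the same boundary data. Setting $u(t)=y^+(t)-y_\sph^+(t)$, $u$ solves $\ddot u=-\nabla^2 V_1(y_\sph^+)u-\nabla V_2(y^+)+(\text{quadratic-in-}u\ \text{remainder})$ with $u(1)=0$ and $\dot u(t)\to 0$. Converting this into a fixed-point equation via the solution operator of the linearized model flow $\ddot v=-\nabla^2 V_1(y_\sph^+)v$ along the orbit, and inserting $\partial^\beta V_2=O(|x|^{-\mu-\epsilon_2-|\beta|})$ and the subleading decay of $\nabla^2 V_1$, a contraction/Gronwall argument produces \eqref{eq:F's} with some $\breve\epsilon=\breve\epsilon(\mu,\bar\epsilon_1,\epsilon_2)>0$; the hypothesis $\bar\epsilon_1>\max(0,1-\alpha(\mu+2\epsilon_2))$ in Condition \ref{assump:conditions2} is precisely what keeps $\breve\epsilon$ positive. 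Then \eqref{eq:F's2} follows from $|F_\sph^+|=g(|x|)\geq c|x|^{-\mu/2}$, and \eqref{eq:F_angle1}--\eqref{eq:F_angle3} follow by combining \eqref{eq:F's2} with the corresponding facts for $F_\sph^+/|F_\sph^+|$. The bounds \eqref{eq:derF222}--\eqref{eq:derF2222} are obtained by induction on $|\gamma|+|\delta|$: differentiating the fixed-point equations in $x$ and $\omega$ gives equations of the same type whose inhomogeneities are built from lower-order derivatives and controlled by the same Green's-function estimate, each $x$-derivative costing one factor $\langle x\rangle^{-1}$ by homogeneity. Finally, joint continuity of $F^+$ and all its derivatives in $(x,\omega)\in\Gamma^+_{R_0,\sigma_0}$ and $\lambda\geq 0$ follows from the standard $C^\infty$ dependence of ODE solutions on data and parameters for $\lambda>0$, the uniformity of the above estimates as $\lambda\searrow 0$, and the solvability and stability of \eqref{eq:mixed conditions222} at $\lambda=0$ furnished by Lemma \ref{lemma:mixed_2}.

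The main obstacle is the comparison step done uniformly in $\lambda\searrow 0$ over the non-compact region $\Gamma^+_{R_0,\sigma_0}$: the linearized model flow along $y_\sph^+$ can carry solutions growing like powers of $|x|$ determined by the indicial exponents of $\ddot v=-V_1''v$ in the radial direction and of the angular part, and one must ensure that this growth stays below the rate at which the Gronwall integral would diverge. This is exactly the content of the weighted bounds \eqref{eq:green's1}--\eqref{eq:green's2} on $r^{-1}V_1'$ and $V_1''$, and the careful bookkeeping of the competing exponents $\mu$, $\epsilon_2$, $\bar\epsilon_1$, $\alpha=2/(2+\mu)$ is the technical heart of the argument, carried out in \cite{DS1}.
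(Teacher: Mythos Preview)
The paper does not give a proof of this proposition: it simply states ``The following result is proven in \cite{DS1}''. Your proposal is a correct outline of the argument carried out in \cite{DS1}, and you explicitly flag this throughout, so there is nothing to compare against in the present paper.

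One small correction worth noting: you write that energy conservation gives the ``magnitude half of \eqref{eq:F's}'', but \eqref{eq:F's} is a statement about the vector difference $F^+-F_\sph^+$, not about $|F^+|-|F_\sph^+|$; the scalar identity $|F^+|^2=g^2-2V_2$ only gives $|F^+|-|F_\sph^+|=O(g|x|^{-\epsilon_2})$, which is weaker and does not by itself control the direction of $F^+$. The directional estimate \eqref{eq:F's2} genuinely requires the orbit-comparison step you describe next, so your logical order is right but the phrasing at the end of the first paragraph slightly oversells what energy conservation alone delivers. Otherwise the sketch matches the structure of the argument in \cite{DS1}.
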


The problem \eqref{eq:mixed conditions222} in
the case of $t\to - \infty$ can also be solved.  
We introduce for $R\geq 1$ and $\sigma>0$
 \begin{align*}
 &\Gamma^-_{R,\sigma}(\omega)=\{y\in {\mathbb R}^d\;|\;y\cdot \omega\leq (\sigma-1)|y|,\;|y|\geq R\},\; \omega\in S^{d-1};\\
   &\Gamma^-_{R,\sigma}=\{(y,\omega)\in {\mathbb R}^d\times
   S^{d-1}|\;y \in \Gamma^-_{R,\sigma}(\omega)\}.
 \end{align*}

Mimicking the previous procedure, starting from the mixed problem \eqref{eq:mixed conditions222} in
the case of $t\to - \infty$, 
 we can  similarly construct a solution $\phi^-(x,\omega,\lambda)$ to the eikonal
equation in some $\Gamma^-_{R,\sigma}(\omega)$. This amounts to setting
\begin{equation}
  \label{eq:rel_phi+-}
 \phi^-(x,\omega,\lambda)=-\phi^+(x,-\omega,\lambda),\;x\in
 \Gamma^-_{R_0,\sigma_0}(\omega) =\Gamma^+_{R_0,\sigma_0}(-\omega). 
\end{equation}

\subsection{Radially symmetric potentials
}
\label{Classical preliminaries1}
In this subsection we assume that $V_2=0$, which means that the potential is
 spherically symmetric. More precisely, we assume that for $r\geq R_0$ 
\[|\partial_r^nV(r)|\leq c_nr^{-n-\mu},\ \ V(r)\leq-cr^{-\mu},\ c>0,\ \ 
rV'(r)+2V(r)<0.\]
Note that motion in such a potential is confined to a 2-dimensional
plane. In the case of the trajectory $y^+(t,x,\omega,\lambda)$, it is the plane
spanned by $\omega$ and $\hat x$.
It is also convenient to introduce the vectors $x^\perp:=
\frac{\omega-\cos\theta_1\hat x}{\sin\theta_1}$
and $\omega^\perp:=\frac{\hat x-\cos\theta_1\omega}{\sin\theta_1}$,
 where $\omega\cdot \hat x=\cos\theta_1$.
 Therefore,  we can restrict temporarily our attention to a
2-dimensional system. We will use the polar coordinates
$(r\cos\theta,r\sin\theta)$.
Note that
the energy $\lambda$ and the  angular momentum $L$ are preserved
quantities. Therefore, the Newton equations (for outgoing orbits) can be reduced to
\begin{equation}\label{eq:equations of motion}
\begin{cases}
\dot \theta=Lr^{-2},
\\
\dot r=\sqrt{2\lambda-2V(r)-L^2r^{-2}}.
\end{cases}\;
\end{equation}

\begin{lemma}\label{pri}
For some $\theta_0>0$,  for all $r_1\geq R_0$, $|\theta_1|\leq
\theta_0$  and $\lambda\geq 0$ we can find a solution of (\ref{eq:equations of
  motion}) 
satisfying 
\[r(1)=r_1, \ \dot{r}(1)>0,\ \lim_{t\to\infty}\theta(t)=0,\
\theta(1)=\theta_1.\] 
There exists a function $(r_1,\theta_1,\lambda)\mapsto
L(r_1,\theta_1,\lambda)\in {\mathbb R}$ specifying  the total angular
momentum 
 of the solution $y^+(t,x,\omega,\lambda)$. This function
$L$ is an odd function in $\theta_1$.
We have the following estimates:
\begin{subequations}
  \begin{align}
\partial_{r_1}^n \partial_{\theta_1^2}^m L^2&=O\left(
 r_1^{2-n} g(r_1)^{2}\right),\
 \
 n,m\geq0;
\label{estib2}\\
\partial_{r_1}^n \partial_{\theta_1^2}^m\frac{ L}{\theta_1}&= O\left(
 r_1^{1-n}g(r_1)\right),\ \
 n,m\geq0.
\label{estia2}
\end{align}
\end{subequations}
\end{lemma}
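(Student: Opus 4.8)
The plan is to solve the reduced Newton system \eqref{eq:equations of motion} by reformulating the conservation laws as a single implicit equation for the angular momentum $L$, and then extract the stated bounds by implicit differentiation together with the uniform classical estimates already recorded in Proposition \ref{prop:mixed_2aaaa}. First I would observe that along an outgoing orbit with $\lim_{t\to\infty}\theta(t)=0$ one has, by integrating $\d\theta/\d r = Lr^{-2}(2\lambda-2V(r)-L^2r^{-2})^{-1/2}$ from $r=r_1$ to $r=\infty$, the relation
\begin{equation*}
\theta_1=\int_{r_1}^\infty \frac{L\,\d r}{r^2\sqrt{2\lambda-2V(r)-L^2 r^{-2}}}=:\Theta(r_1,L,\lambda).
\end{equation*}
So the existence claim reduces to inverting $L\mapsto\Theta(r_1,L,\lambda)$ for $|\theta_1|$ small. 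For $L=0$ the integral vanishes, and $\partial_L\Theta|_{L=0}=\int_{r_1}^\infty r^{-2}(2\lambda-2V(r))^{-1/2}\d r = \int_{r_1}^\infty r^{-2}g(r)^{-1}\d r >0$, which is finite (the integrand is $O(r^{-2}\cdot r^{\mu/2})$ at infinity since $\mu<2$, and bounded near $r_1$). Hence by the implicit function theorem, for $r_1\geq R_0$ and $|\theta_1|\leq\theta_0$ with $\theta_0$ small enough (uniformly in $r_1,\lambda$, using the lower bound $V(r)\leq -cr^{-\mu}$ and $rV'+2V<0$ to control the turning-point geometry), there is a unique small $L=L(r_1,\theta_1,\lambda)$ solving $\Theta(r_1,L,\lambda)=\theta_1$; reading off $\dot r(1)=\sqrt{2\lambda-2V(r_1)-L^2r_1^{-2}}>0$ and $\theta(1)=\theta_1$ gives the solution. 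Oddness of $L$ in $\theta_1$ is immediate since $\Theta$ is odd in $L$. One should also check this $L$ agrees with the total angular momentum of $y^+(t,x,\omega,\lambda)$ constructed in Lemma \ref{lemma:mixed_2}, which follows by uniqueness of the outgoing orbit through the given data once the angular datum $\theta_1$ is matched.

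For the estimates, the key is to rescale. Writing $L=r_1\,\ell$ and $\rho=r/r_1$, one gets $\theta_1 = \int_1^\infty \ell\,\rho^{-2}(2\lambda-2V(r_1\rho)-\ell^2\rho^{-2})^{-1/2}\d\rho$; since $g(r_1)=\sqrt{2\lambda-2V(r_1)}$ and $V(r_1\rho)\sim\rho^{-\mu}V(r_1)$ up to the controlled error, the natural variable is $\ell/g(r_1)$, and one expects $L/\theta_1 = O(r_1 g(r_1))$, i.e. \eqref{estia2} for $n=m=0$. To make this rigorous and to get the higher derivatives, I would differentiate the implicit relation $\Theta(r_1,L,\lambda)=\theta_1$: one derivative in $\theta_1^2$ (equivalently, using that everything is even in $\theta_1$ after dividing by $\theta_1$) and in $r_1$, expressing $\partial_{r_1}L$, $\partial_{\theta_1^2}L^2$, etc., as rational expressions in the partial derivatives of $\Theta$. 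Each derivative of $\Theta$ can in turn be estimated by the same turning-point/scaling analysis — differentiating under the integral sign, handling the (integrable, inverse-square-root) singularity at the turning point $r_*$ where $2\lambda-2V(r_*)-L^2r_*^{-2}=0$ by the standard substitution that desingularizes it — and here the derivative bounds on $V$ from Condition \ref{assump:conditions1}(\ref{it:assumption2}), the virial bound \eqref{eq:virial}, and the estimates \eqref{eq:derF222}–\eqref{eq:derF2222} feed in to control the $\lambda$- and $r_1$-dependence uniformly. Propagating these through the implicit-differentiation algebra and keeping track of powers of $r_1$ and $g(r_1)$ yields \eqref{estib2} and \eqref{estia2} by induction on $n+m$.

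The main obstacle, I expect, is the uniformity down to $\lambda=0$ near the turning point. When $\lambda\searrow0$ the turning radius $r_*(r_1,L,\lambda)$ can drift, and the integrand of $\Theta$ and of its derivatives has a $\rho\mapsto(\rho-\rho_*)^{-1/2}$-type singularity there; the delicate point is to show that all the relevant integrals, after the desingularizing substitution, remain bounded by the claimed powers of $g(r_1)=\sqrt{2\lambda-2V(r_1)}$ with constants independent of $\lambda\geq0$. This is exactly where the hypotheses $V(r)\leq -cr^{-\mu}$ (so $g(r_1)\geq\sqrt{2c}\,r_1^{-\mu/2}$ stays bounded below after rescaling) and $rV'(r)+2V(r)<0$ (monotonicity of $r^2(2\lambda-2V(r))$, hence a unique, well-behaved turning point) are essential, and it is the analytic heart of the argument; the rest is bookkeeping. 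Since the corresponding statement is asserted in \cite{DS1}, one may alternatively quote it there, but the self-contained route above is the one I would write out, reducing everything to the single scalar quadrature $\Theta(r_1,L,\lambda)=\theta_1$.
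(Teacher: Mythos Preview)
The paper does not give a proof of this lemma; it is stated and then used, with the surrounding material (in particular the formula \eqref{estoaaa} and the function $h(r)$ of \eqref{eq:8aammmi}) pointing back to \cite[(4.5)]{DS1}. Your quadrature identity $\theta_1=\int_{r_1}^\infty Lr^{-2}(2\lambda-2V(r)-L^2r^{-2})^{-1/2}\,\d r$ together with the implicit function theorem is exactly the right engine, and the derivative $\partial_L\Theta|_{L=0}=\int_{r_1}^\infty r^{-2}g(r)^{-1}\d r$ is precisely $h(r_1)^{-1}$ in the paper's notation, with the two-sided bounds \eqref{eq:8aammm} giving $L/\theta_1\sim r_1g(r_1)$ as you predict.

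One correction that actually makes your life easier: there is no turning point in the integration region, so the ``main obstacle'' you flag does not arise. The orbit you are parametrizing satisfies $\dot r(1)>0$ and is outgoing for all $t\geq1$ (this is built into Lemma \ref{lemma:mixed_2}), so $r$ increases monotonically from $r_1$ to $\infty$ and $2\lambda-2V(r)-L^2r^{-2}>0$ throughout $[r_1,\infty[$. Concretely, the condition $rV'(r)+2V(r)<0$ implies $r\mapsto r^2g(r)^2$ is strictly increasing, so once $L^2<r_1^2g(r_1)^2$ (guaranteed by $\dot r(1)>0$) one has $L^2<r^2g(r)^2$ for all $r\geq r_1$. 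Hence the integrand of $\Theta$ and of all its $L$- and $r_1$-derivatives is smooth on $[r_1,\infty[$, and the only issue is uniform integrability at $r=\infty$, which your rescaling $r=r_1\rho$, $L=r_1g(r_1)\tilde\ell$ handles cleanly. You can therefore drop the desingularizing substitution and the turning-point discussion entirely; the rest of your plan (implicit differentiation, induction on $n+m$, bookkeeping of powers of $r_1$ and $g(r_1)$) goes through.
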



This allows
us to compute the initial velocity of the trajectory:
\[F^+(x,\omega,\lambda)=\sqrt{2\lambda-2V(r)-L^2/r^2} \hat
x-\frac{L}{r}x^\perp.\]
The function $\phi^+$ equals, with $r=|x|$ and $\cos\theta=\hat x\cdot
\omega$,
\begin{equation}\phi^+(x,\omega,\lambda)=\sqrt{2\lambda}R_0
+\int_{R_0}^{r} 
\sqrt{2\lambda-2 V(r')}\d r' +\int_0^{\theta}
L(r,\theta',\lambda)\d\theta'.
\label{phi}\end{equation}
Therefore, using also that $\nabla_\omega \theta =- \omega^\perp$, 
\begin{equation} \nabla_\omega\phi^+=
-L(r,\theta,\lambda)\omega^\perp.
\end{equation}
This gives the following estimates (in any dimension):
\begin{lemma}\label{lemma:jan} There exist constants $C,c>0$ such that 
\begin{subequations}
\begin{eqnarray}
|\hat x \cdot
 F^+(x)-g(|x|)|&\leq&C(1-\hat x\cdot\omega)g(|x|),\\
|F^+(x)-\hat x\ \hat x\cdot F^+(x)|&\leq& C
\sqrt{1-\hat x\cdot\omega}g(|x|),\\
|\nabla_\omega \phi^+|&\geq& c
\sqrt{1-\hat x\cdot\omega}g(|x|)|x|,\label{eq:ole0}\\
\partial_\omega^\delta\partial_x^\gamma\phi^+&=&\langle
x\rangle^{1-|\gamma|}O(g(|x|))\label{eq:ole}.\end{eqnarray}
\end{subequations}
\end{lemma}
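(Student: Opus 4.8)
The plan is to extract all four estimates from the explicit formulas of this subsection,
\[F^+(x,\omega,\lambda)=\sqrt{2\lambda-2V(r)-L^2/r^2}\,\hat x-\tfrac Lr x^\perp,\qquad \nabla_\omega\phi^+=-L(r,\theta,\lambda)\,\omega^\perp,\]
(with $r=|x|$, $\cos\theta=\hat x\cdot\omega$) together with the angular-momentum bounds \eqref{estib2}--\eqref{estia2} and a few elementary facts about $g$. Before starting I would record, uniformly in $\lambda\geq 0$ and $r\geq R_0$: the two-sided bound $\max(2\lambda,2cr^{-\mu})\leq g(r)^2\leq 2\lambda+2c_0r^{-\mu}$ coming from Condition \ref{assump:conditions1} (so $g(r)$ is comparable to $\sqrt{2\lambda}+r^{-\mu/2}$), the estimate $|\partial_r^n g|=O(r^{-n}g)$, and, from $rV'+2V<0$, the identity $\tfrac{\d}{\d r}(r^2g^2)>4\lambda r\geq 0$, so that $r\mapsto rg(r)$ is nondecreasing; in particular the turning-point condition for a non-collision outgoing orbit reads $|L|<r_1g(r_1)$. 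Finally, energy conservation gives $|F^+|^2=2\lambda-2V=g(r)^2$, whence $\hat x\cdot F^+=\sqrt{g(r)^2-L^2/r^2}$, $|F^+-\hat x\,(\hat x\cdot F^+)|=|L|/r$ and $|\nabla_\omega\phi^+|=|L(r,\theta,\lambda)|$.

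The three upper bounds are then short. For the second estimate, $|F^+-\hat x\,(\hat x\cdot F^+)|=|L|/r=|\theta_1|\,|L/\theta_1|/r\leq C|\theta_1|g(r)\leq C'\sqrt{1-\cos\theta_1}\,g(r)$ by \eqref{estia2} and $|\theta_1|\leq\tfrac{\pi}{\sqrt2}\sqrt{1-\cos\theta_1}$. For the first, $g(r)-\hat x\cdot F^+=\frac{L^2/r^2}{g(r)+\sqrt{g(r)^2-L^2/r^2}}\leq \frac{L^2/r^2}{g(r)}\leq C\theta_1^2g(r)\leq C'(1-\cos\theta_1)g(r)$, again by \eqref{estia2}. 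For the fourth, I would differentiate \eqref{phi}: the term $\int_{R_0}^r g(r')\,\d r'=O(rg)$ has $x$-derivatives $\langle x\rangle^{1-|\gamma|}O(g)$ by $|\partial_r^n g|=O(r^{-n}g)$ and $\partial_x r=\hat x$, while for $\int_0^\theta L(r,\theta',\lambda)\,\d\theta'$ I would use that \eqref{estia2} makes $L/\theta_1$ a smooth function of $(r,\theta_1^2,\lambda)$ with $\partial_{r_1}^n\partial_{\theta_1^2}^m$-derivatives $O(r^{1-n}g)$, so that $\int_0^\theta L\,\d\theta'=\Psi(r,\theta^2,\lambda)$ with $\Psi$ smooth in its arguments, while $\theta^2$ is a smooth function of $\hat x\cdot\omega$ with bounded derivatives on $\{\hat x\cdot\omega\geq\cos\theta_0\}$ (the map $u\mapsto(\arccos u)^2$ being smooth up to $u=1$); the chain rule with $\nabla_\omega(\hat x\cdot\omega)=O(1)$ and $\nabla_x(\hat x\cdot\omega)=O(1/r)$ then gives $\partial_\omega^\delta\partial_x^\gamma\phi^+=\langle x\rangle^{1-|\gamma|}O(g)$.

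The main obstacle is the lower bound in the third estimate, $|L(r,\theta,\lambda)|\geq c\sqrt{1-\hat x\cdot\omega}\,rg(r)$, equivalently $|L/\theta_1|\geq c\,r_1g(r_1)$: \eqref{estia2} supplies only the matching upper bound. Here I would integrate $\d\theta/\d r=Lr^{-2}/\sqrt{g(r)^2-L^2r^{-2}}$ along the outgoing orbit from $r_1$ to $\infty$, using $\lim_{t\to\infty}\theta=0$, to obtain
\[|\theta_1|=|L|\int_{r_1}^\infty\frac{r^{-2}\,\d r}{\sqrt{g(r)^2-L^2r^{-2}}}.\]
Shrinking $\theta_0$ so that \eqref{estia2} forces $|L|\leq\tfrac12 r_1g(r_1)$ when $|\theta_1|\leq\theta_0$, monotonicity of $r\mapsto rg(r)$ gives $g(r)^2-L^2r^{-2}\geq g(r)^2-\tfrac14(r_1g(r_1))^2r^{-2}\geq\tfrac34g(r)^2$ for $r\geq r_1$, so that $|\theta_1|\leq\tfrac{2}{\sqrt3}|L|\int_{r_1}^\infty r^{-2}g(r)^{-1}\,\d r$ and everything reduces to the uniform (in $\lambda\geq 0$) bound $\int_{r_1}^\infty r^{-2}g(r)^{-1}\,\d r\leq C\,(r_1g(r_1))^{-1}$.

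I expect this last inequality to be the only genuinely delicate point, since $g$ changes character at the crossover radius $r_*=(c/\lambda)^{1/\mu}$ where $2\lambda\simeq 2cr^{-\mu}$ (with $r_*=\infty$ for $\lambda=0$): on $[r_1,r_*]$ one has $g(r)\geq\sqrt{2c}\,r^{-\mu/2}$, so that part of the integral is $O(r_1^{-1+\mu/2})$ (using $\mu<2$), while on $[r_*,\infty)$ one has $g(r)\geq\sqrt{2\lambda}$, so that part is $O((\sqrt{2\lambda}\max(r_1,r_*))^{-1})$; comparing both with $g(r_1)\simeq\sqrt{2\lambda}+r_1^{-\mu/2}$ and distinguishing $r_1\leq r_*$ from $r_1\geq r_*$ gives the claim. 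Rearranging then yields $|L/\theta_1|\geq c\,r_1g(r_1)$, and combined with $|\theta_1|\geq\sqrt2\sqrt{1-\cos\theta_1}$ this is the third estimate; all the remaining steps are routine bookkeeping with \eqref{estib2}--\eqref{estia2} and the elementary facts about $g$ recorded above.
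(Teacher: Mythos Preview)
Your proposal is correct and follows the route the paper itself takes (implicitly---the paper simply writes ``This gives the following estimates'' after displaying $F^+=\sqrt{g^2-L^2/r^2}\,\hat x-(L/r)x^\perp$ and $\nabla_\omega\phi^+=-L\omega^\perp$, leaving the reader to unpack). Your treatment of the three upper bounds via \eqref{estia2} and of \eqref{eq:ole} via the chain rule applied to \eqref{phi} is exactly what is intended.

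For the lower bound \eqref{eq:ole0} your identification of the crux is on the mark: one needs $|L/\theta_1|\geq c\,r_1g(r_1)$, and your integration of the orbit equation to get $|\theta_1|=|L|\int_{r_1}^\infty r^{-2}(g^2-L^2r^{-2})^{-1/2}\,\d r$ together with the monotonicity of $r\mapsto rg(r)$ reduces this to $\int_{r_1}^\infty r^{-2}g^{-1}\,\d r\leq C(r_1g(r_1))^{-1}$. This is precisely the lower bound in \eqref{eq:8aammm}, which the paper records (again without proof) a few lines after the lemma as $cr g(r)\leq h(r)$ for $h(r)=\bigl(\int_r^\infty r'^{-2}g(r')^{-1}\,\d r'\bigr)^{-1}$; your crossover-radius argument splitting at $r_*\sim(\lambda^{-1})^{1/\mu}$ is a clean way to prove it. So there is no gap: your proof and the paper's coincide, with your write-up supplying the details the paper omits.
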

We calculate for $\lambda>0$:
\begin{eqnarray*}
\nabla_\xi F^+
&=&(2\lambda)^{-\frac12}\nabla_{\omega}F^++
(2\lambda)^{\frac12}\partial_\lambda F^+\otimes\omega,\\
\nabla_\omega F^+
&=&L\partial_{\theta}L(2\lambda-2V(r)+L^2r^{-2})^{-\frac12}
r^{-2}\omega^\perp\otimes \hat x+
\partial_{\theta}Lr^{-1}\omega^\perp\otimes
x^\perp-\frac{L}{r}\nabla_\omega x^\perp,\\
\partial_\lambda F^+&=&
(2\lambda-2V(r)-L^2r^{-2})^{-\frac12}(1-L\partial_\lambda
Lr^{-2})\hat x
-\partial_\lambda Lr^{-1}x^\perp.
\end{eqnarray*}

Specifying to $x$ parallel to $\omega$ and noting that $L(x,\hat x,\lambda)=0$,
we obtain
\begin{eqnarray}\label{estoaaa}
\nabla_{\xi} F^+&=&(2\lambda)^{1/2}\partial_\lambda(2\lambda-2
V(|x|))^{1/2}\hat x\otimes\hat x\nonumber
\\
&&-(2\lambda)^{-1/2}|x|^{-1}\partial_{\theta}L\  x^\perp\otimes x^\perp
\nonumber\\&=&
(2\lambda)^{1/2}(2\lambda-2 V(|x|))^{-1/2}\hat x\otimes\hat x
\\
&&+
(2\lambda)^{-1/2}|x|^{-1}
\Big (\int_{|x|}^\infty r^{-2}(2\lambda-2 V(r))^{-1/2}\d
r\Big )^{-1}x^\perp\otimes x^\perp,\nonumber\end{eqnarray}
cf. \cite[(4.5)]{DS1}. 

In an arbitrary dimension, the formula is the same except that the second term
is repeated $d-1$ times on the diagonal. Therefore,
\begin{equation}\label{esto22}
\det\left(\nabla_\xi\nabla_x \phi^+(x,\sqrt{2\lambda}\hat
  x)\right)^{1/2}=
(2\lambda)^{(2-d)/4}g(r)^{-1/2}
\Big (r^{-1}h(r)\Big )^{(d-1)/2},
\end{equation} where we have introduced the notation 
\begin{equation}
  \label{eq:8aammmi}
   h(r):=\big (\int_{r}^\infty r'^{-2}g(r')^{-1}\d
r'\big )^{-1}.
\end{equation} 

Note the (uniform)
 bounds
\begin{equation} \label{eq:8aammm}
  crg(r)\leq h(r) \leq Crg(r).
\end{equation}

Whence, combining (\ref{esto22}) and (\ref{eq:8aammm}),
\begin{eqnarray}
c(2\lambda)^{(2-d)/4}g(r)^{(d-2)/2}&\leq&
\det\big (\nabla_\xi\nabla_x \phi^+(x,\sqrt{2\lambda}\hat
x)\big )^{1/2}\nonumber 
\\&\leq&
C(2\lambda)^{(2-d)/4}g(r)^{(d-2)/2}.\label{esto}\end{eqnarray}

 \section{Boundary values of the resolvent}
\label{Uniform resolvent estimates}
In this section we impose Conditions
\ref{assump:conditions1} and \ref{assump:conditions3}.
We shall recall (and extend) some resolvent estimates of \cite {FS}. They are
important tools used throughout our paper.

In Subsection \ref{Some other  estimates} we will also introduce the
notion of the scattering wave front set, which is well adapted to scattering
theory at various energies. We will return to this concept in particular
in Section \ref{Propagation of singularities at zero energy},
where we will prove a theorem about propagation of singularities for potentials
with a homogeneous principal part. A somewhat cruder version of this theorem
is given already in Subsection \ref{Some other  estimates} (valid, however,
for a more general class of potentials).

In Subsection 
\ref{Sommerfeld radiation condition}   we prove a version of  the Sommerfeld
radiation condition for the zero energy.

\subsection{Low energy resolvent estimates}
\label{Low energy resolvent estimates1}
Let $c$ be a function on the phase space $\R^d\times\R^d$.
The left and right Kohn-Nirenberg quantization of the
symbol $c$ are  the operators $\Opl(c)$ and $\Opr(c)$ 
acting  as 
\begin{align*}
 &(\Opl(c) f)(x) = (2\pi)^{-d/2} \int \e^{\i x\cdot \xi}
c(x,\xi) \hat f(\xi) \,\d \xi,\\ 
&(\Opr(c) f)(x) = (2\pi)^{-d} \int \!\!\!\!\int \e^{\i(x-y)\cdot \xi}
c(y,\xi) f(y) \,\d y \d \xi,
\end{align*} respectively. Notice that $\Opl(c)^*=\Opr(\bar c)$. In 
Proposition \ref{prop:resolvent_basic2} stated below we use for
convenience both of these quantizations, although they can be  used
interchangeably. Alternatively one can use Weyl quantization denoted
by $\Opw(c)$, cf.  \cite {FS}.  
We will often use the following ($\lambda$-dependent) symbols:
\begin{align}
\label{eq:def_a0_b}
a(x,\xi) = \frac{\xi^2}{g(|x|)^2} ,\;\;
b(x,\xi)  =  \frac{\xi}{g(|x|)} \cdot \frac{x}{\langle x \rangle}.
\end{align}

It is convenient to introduce the following
 symbol class: Let $c\in S(m, g_{\mu,\lambda})$, 
$g_{\mu,\lambda}=\langle
x \rangle^{-2}\d x^2+g^{-2}\d\xi^2$ and
 $m=m_{\lambda}=m_{\lambda}(x,\xi)$ be a  uniform weight
 function \cite{Ho3}. Here $\lambda \in [0,\lambda_0]$ (for an arbitrarily fixed $\lambda_0>0$)  is
 considered as a parameter; the function $m$ obeys bounds uniform
 in this parameter (see \cite [Lemma 4.3 (ii)] {FS} for details). For
 a uniform weight
 function $m$, the symbol class $S_{\unif}(m, g_{\mu,\lambda})$ is defined to be
 the set of parameter-dependent smooth symbols $c=c_{\omega,\lambda}$
 satisfying
 \begin{equation}
   \label{eq:sym_class}
  |\partial^\delta_\omega\partial^\gamma_x \partial^\beta_\xi c_{\omega,\lambda}(x,\xi)|\leq
  C_{\delta,\gamma,\beta}m_{\lambda}(x,\xi)\langle x \rangle^{-|\gamma|}g^{-|\beta|}.
 \end{equation} 

We notice that the ``Planck constant'' for this class is 
    $\langle x
    \rangle^{-1}g^{-1}$. 
The corresponding class of quantizations is
    denoted by $\Psi_{\unif}(m, g_{\mu,\lambda})$ (it does not depend on
    whether left or right quantization is used). Finally we remark
    that the  quantizations appearing in Proposition
    \ref{prop:resolvent_basic2} stated below belong to $\Psi_{\unif}(1,
    g_{\mu,\lambda})$, and hence they are bounded uniformly in $\lambda$ (these
    symbols are independent of $\omega$).

We can 
obtain  the following estimates by mimicking the proof of \cite [Theorem 4.1]{FS} (first for
the smooth case $V_3=0$, and then the general case by a resolvent
equation, see \cite [Subsection 5.1]{FS}; here the unique continuation assumption Condition
\ref{assump:conditions3} (\ref {it:assumption8}) comes into play).
 In particular,
 Proposition \ref{prop:resolvent_basic2}
(\ref{it:one2}) follows from \cite[Corollary 3.5]{FS} and a resolvent
identity (cf. \cite[(5.12)]{FS}). Similarly  Proposition
\ref{prop:resolvent_basic2} (\ref{item:C-02}) follows from \cite[Lemma
4.5]{FS} and  the proof of \cite[Lemma
4.6]{FS} (notice that it suffices to show the bounds
(\ref{eq:quantum_part2}) and (\ref{eq:quantum_part2222})  for $t=0$ due to
this proof), while Proposition
\ref{prop:resolvent_basic2} (\ref{some_label21}) follows from  \cite[Lemma
4.9]{FS} and the same minor modification of the proof of \cite[Lemma
4.6]{FS}. As for the continuity statement at the end of the proposition
we refer the reader to the end of this subsection.

 The
notation $R(\lambda+\i 0)$ refers to the limit of the resolvent
$R(\lambda+\i \epsilon)$ as $\epsilon\to 0^+$ in the sense of a form
on the Schwartz space $\mathcal S (\R^d)$, cf.  Remark \ref{remarks:nosmmmmo} \ref{it:csmoottaa}).

\begin{prop}
\label{prop:resolvent_basic2}
\begin{subequations}
\label{group12} 
Fix any $\lambda_0>0$. 
The following conclusions, {\rm
  (\ref{it:one2})--(\ref{it:last+})}, hold uniformly in $\lambda
\in [0,\lambda_0]$:
\begin{enumerate}[\normalfont (i)]
\item \label{it:one2}
For all $\delta> \tfrac {1}{2}$ 
there exists
$C>0$ such that
\begin{align}
  \label
  {eq:x_weights2}
\|\langle x \rangle^{-\delta} g^{1\over 2}&R(\lambda+\i 0) g^{1\over 2}\langle x \rangle^{-\delta} \| \leq C.
\end{align}
\item \label{item:C-02} There exists $C_0 \geq1$ 
 such that if
$\chi_{+} \in C^{\infty}({\mathbb R})$, 
    $\supp (\chi_{+})\subset ]C_0, \infty[$ and $\chi_{+}' \in
C_c^{\infty}({\mathbb R})$, then for all
$\delta> \tfrac {1}{2}$
    and all $s ,t \geq  0$ there exists $C>0$ such that 
\begin{align}\label{eq:quantum_part2}
&\|(\langle x \rangle g)^{s}\langle x \rangle^{t-\delta} g^{1\over 2}\Opl(\chi_{+}(a)) R(\lambda+\i 0) g^{1\over 2}\langle x \rangle^{-t-\delta}  (\langle x \rangle g)^{-s}\| \leq C,
  \\
\label{eq:quantum_part2222}
&\|(\langle x \rangle g)^{-s}\langle x \rangle^{-t-\delta} g^{1\over 2}R(\lambda+\i 0)\Opr(\chi_{+}(a)) g^{1\over 2}\langle x \rangle^{t-\delta} (\langle x \rangle g)^{s}\| \leq C.
\end{align}
\item \label{some_label21} Let $\bar \sigma >0$ and $\chi_{-} \in
  C^{\infty}_{c}({\mathbb R})$. Suppose
$
\tilde{\chi}_{-}, \tilde{\chi}_{+} \in C^{\infty}({\mathbb R})$
satisfy 
$$
\sup
\supp
\tilde{\chi}_{-} \leq  1-\bar \sigma,\; \inf \supp \tilde{\chi}_{+} \geq
\bar \sigma  - 1.$$ 
Then for all $\delta
 >\tfrac {1}{2}$ and all 
$s , t \geq  0$ there exists $C > 0$ such that 
\begin{align}
  \label{eq:PsDO_part21}
&\|(\langle x \rangle g)^{s}\langle x \rangle^{t-\delta} g^{1\over 2}\Opl(\chi_{-}(a)\tilde{\chi}_{-}(b))
R(\lambda+\i 0) g^{1\over 2}\langle x \rangle^{-t-\delta}  (\langle x \rangle g)^{-s}\| \leq C, \\
\label{eq:PsDO_part221}
&\|(\langle x \rangle g)^{-s}\langle x \rangle^{-t-\delta} g^{1\over 2}R(\lambda+\i 0)
\Opr(\chi_{-}(a)\tilde{\chi}_{+}(b)) g^{1\over 2}\langle x \rangle^{t-\delta} (\langle x \rangle g)^{s}\| \leq C.
\end{align}
\item \label{it:last} Suppose $\chi_{-}^1,  \chi_{-}^2 \in
  C^{\infty}_{c}({\mathbb R})$, $\tilde{\chi}_{-}$ and 
$\tilde{\chi}_{+}$ satisfy the
  assumptions from {\rm (\ref{some_label21})}  
 and in addition
$$
\sup \supp \tilde{\chi}_{-}<\inf \supp \tilde{\chi}_{+}.
$$
Then for all $s \geq  0$  there exists $C > 0$ such that
\begin{align}
\label{eq:disjoint_b}
\|\langle x\rangle^s
 \Opl(\chi_{-}^1(a) \tilde{\chi}_{-}(b)) R(\lambda+\i 0)
\Opr(\chi_{-}^2(a)\tilde{\chi}_{+}(b)) \langle x\rangle^s\| \leq C.
\end{align}

\item \label{it:last+}
Suppose $\chi_{+}$ is given as in {\rm (\ref{item:C-02})},  some functions $\tilde{\chi}_{+},
\tilde{\chi}_{-}, \chi_{-}$  are given as in  {\rm (\ref{some_label21})} 
and suppose
$$\dist(\supp \chi_{-}, \supp \chi_{+}) > 0.$$
Then for all $s \geq  0$
 there exists $C>0$
such that 
\begin{align}
  \label{eq:disjoint_a}
&\|\langle x\rangle^s
 \Opl(\chi_{+}(a)) R(\lambda+\i 0) \Opr(\chi_{-}(a)\tilde{\chi}_{+}(b))
 \langle x\rangle^s\| \leq C,\\ 
\label{eq:disjoint_a222}
&\|\langle x\rangle^s \Opl(\chi_{-}(a)\tilde{\chi}_{-}(b)) R(\lambda+\i 0)
\Opr(\chi_{+}(a)) \langle x\rangle^s\| \leq C.
\end{align}
\end{enumerate}
\end{subequations}

All the forms appearing in {\rm
        (\ref{it:one2})--(\ref{it:last+})}
 are continuous in $\lambda\geq0$. In fact the families of corresponding
  operators are continuous ${\mathcal B}(L^2(\R^d))$--valued functions.
\end{prop}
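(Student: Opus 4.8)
The plan is to treat the two assertions separately: the uniform bounds \eqref{eq:x_weights2}--\eqref{eq:disjoint_a222}, which amount to a careful transcription of \cite{FS}, and the continuity in $\lambda$, which is the substantive point and which I would reduce to a H\"older estimate obtained by re-running the \cite{FS} commutator machinery on resolvent differences.

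For the bounds I would proceed as the text indicates. First take $V_3=0$, so $H=H_0+V_1+V_2$ and everything is the quantitative limiting absorption principle of \cite{FS}: a Mourre estimate for $H$ with a conjugate operator adapted to the symbol $b$ (a $g$-rescaled generator of dilations), whose positivity constant can be chosen uniform for $\lambda\in[0,\lambda_0]$, together with the microlocal propagation estimates of \cite{FS}. These produce \eqref{eq:x_weights2} (from \cite[Corollary 3.5]{FS}) and the localized estimates \eqref{eq:quantum_part2}--\eqref{eq:disjoint_a222} (from \cite[Lemmas 4.5, 4.6, 4.9]{FS}), including the ``gauge'' factors $(\langle x\rangle g)^{\pm s}\langle x\rangle^{\pm t}$, which cost nothing because the \cite{FS} estimates live in the calculus $\Psi_{\unif}(m,g_{\mu,\lambda})$ with Planck constant $\langle x\rangle^{-1}g^{-1}$; as in \cite[Lemma 4.6]{FS} one checks that the case $t=0$ suffices. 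The general $V_3$ is then incorporated through $R(\lambda+\i 0)=R_1(\lambda+\i 0)-R_1(\lambda+\i 0)V_3R(\lambda+\i 0)$, with $R_1$ the resolvent of $H_0+V_1+V_2$: the operator $1+V_3R_1(\lambda+\i 0)\langle x\rangle^{s}$ is boundedly invertible uniformly in $\lambda\in[0,\lambda_0]$, because $V_3R_1(\lambda+\i 0)$ is compact and, by Condition~\ref{assump:conditions3}\,(\ref{it:assumption8}) together with the absence of zero and positive eigenvalues (Condition~\ref{assump:conditions3}\,(\ref{it:assumption88})), $-1$ is never an eigenvalue; this is \cite[Subsection 5.1]{FS}.

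For the continuity, note first that on every compact subset of $]0,\infty[$ all the families in (\ref{it:one2})--(\ref{it:last+}) are norm continuous by the classical limiting absorption principle (there $g\geq c>0$, so the $g$-weights are harmless and $H$ has no positive eigenvalues); the only issue is continuity down to, and at, $\lambda=0$. I would handle it by establishing, for each fixed choice of cutoffs and weight exponents, a bound
\[\big\|\,[\mathrm{wt}]\,Q\,\big(R(\lambda+\i 0)-R(\lambda'+\i 0)\big)\,Q'\,[\mathrm{wt}]\,\big\|\;\leq\;C\,|\lambda-\lambda'|^{\theta},\qquad \theta>0,\]
uniform in $\lambda,\lambda'\in[0,\lambda_0]$, with $[\mathrm{wt}]$, $Q$, $Q'$ the weights and quantizations of the corresponding item. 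This is obtained by running the commutator and propagation estimates behind \eqref{eq:x_weights2}--\eqref{eq:disjoint_a222} on the difference, writing $R(\lambda+\i 0)-R(\lambda'+\i 0)=(\lambda-\lambda')R(\lambda+\i 0)R(\lambda'+\i 0)$, inserting a power $(\langle x\rangle g)^{\pm\sigma}$ between the two resolvents and absorbing each factor by \eqref{eq:quantum_part2}--\eqref{eq:disjoint_a222}; for the borderline weights of (\ref{it:one2}) one instead interpolates in the weight exponent between the uniform bound ($\delta>\tfrac12$) and the Lipschitz bound available for large $\delta$ by the same method. Two remarks close the argument. The $\lambda$-dependent data $g=g_\lambda$ and the symbols $a,b$ of \eqref{eq:def_a0_b} depend continuously on $\lambda$ in the calculus $\Psi_{\unif}(m,g_{\mu,\lambda})$ --- since $g_\lambda^2-g_{\lambda'}^2=2(\lambda-\lambda')$ and $g_\lambda(r)\geq\sqrt{2\epsilon_1}\,r^{-\mu/2}$ for $r\geq1$, the relevant $\lambda$-derivatives are controlled within the class --- so the corresponding multiplication operators and the quantizations $\Opl(\chi_\pm(a))$, $\Opr(\chi_\pm(a)\tilde\chi_\pm(b))$, etc. move continuously in $\mathcal B(L^2(\R^d))$. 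And since $(\lambda,\epsilon)\mapsto R(\lambda+\i\epsilon)$ is trivially norm continuous for $\epsilon>0$, one may alternatively use the analogous H\"older bound in the imaginary direction, $\|\cdots(R(\lambda+\i\epsilon)-R(\lambda+\i 0))\cdots\|\leq C\epsilon^{\theta}$ uniform in $\lambda\in[0,\lambda_0]$, and deduce the boundary value to be a locally uniform limit of continuous functions; for (\ref{it:one2}) alone the zero-energy resolvent expansion of \cite{FS} already gives continuity at $\lambda=0$.

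The main obstacle is concentrated in one place: keeping every constant above --- in the Mourre estimate, in the microlocal propagation estimates, and in the interpolation --- bounded as $\lambda\searrow0$, where the classical energy surface degenerates and $g_\lambda$ ceases to be elliptic at infinity, decaying like $\langle x\rangle^{-\mu/2}$ rather than tending to a positive constant. Securing this uniformity is exactly the content of \cite{FS}, and is the reason the weights are attached to the anisotropic metric $g_{\mu,\lambda}$ and the Planck constant $\langle x\rangle^{-1}g^{-1}$ rather than to $\langle x\rangle$ alone. Granted that uniformity, the continuity is a formality of the kind sketched above, and the passage through the resolvent identity for $V_3$ preserves it verbatim.
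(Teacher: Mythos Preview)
Your treatment of the uniform bounds (i)--(v) matches the paper's: cite the relevant lemmas of \cite{FS} for the case $V_3=0$, then add $V_3$ back via a resolvent equation and the unique continuation hypothesis.

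For the continuity statement the paper takes a shorter route than yours. Rather than re-running the commutator machinery on differences $R(\lambda+\i0)-R(\lambda'+\i0)$ and interpolating, the paper simply \emph{quotes} from \cite{FS} the H\"older continuity of $\langle x\rangle^{-\delta-\mu/4}R(\zeta)\langle x\rangle^{-\delta-\mu/4}$ in $\zeta\in[0,\lambda_0]+\i\,]0,1]$ (this is Remark~\ref{remarks:nosmmmmo}~\ref{it:csmoottaa})), and then upgrades it by a spatial truncation: for $B_\delta(\lambda)=\langle x\rangle^{-\delta}g^{1/2}R(\lambda+\i0)g^{1/2}\langle x\rangle^{-\delta}$ one inserts $I=F(\kappa|x|<1)+F(\kappa|x|>1)$ on both sides, observes that the tail is $O(\kappa^{\delta-\delta'})$ uniformly in $\lambda$ by item~(\ref{it:one2}) with any $\delta'\in\,]\tfrac12,\delta[$, and that the truncated piece $F(\kappa|x|<1)B_\delta(\lambda)F(\kappa|x|<1)$ is norm continuous because on bounded $x$-sets the factors $g_\lambda^{1/2}$ and $\langle x\rangle^{\mu/4}$ are harmless and the quoted H\"older estimate applies. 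Hence $B_\delta(\cdot)$ is a uniform limit of continuous functions. The other items are handled the same way.

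Your approach is viable but needs one correction: the assertion that multiplication by $g_\lambda^{1/2}$ and the quantizations ``move continuously in $\mathcal B(L^2(\R^d))$'' is not literally true, since $g_\lambda^{1/2}$ is unbounded and, more to the point, $g_\lambda^{1/2}g_{\lambda'}^{-1/2}-1$ is of order $|\lambda-\lambda'|\,g_{\lambda'}^{-2}\sim|\lambda-\lambda'|\,r^{\mu}$ at infinity when $\lambda'=0$. Your interpolation sketch does absorb this (the Lipschitz endpoint requires a sufficiently large weight exponent precisely to swallow the extra $r^{\mu}$), but the statement as written is wrong and should be replaced by that interpolation step. The paper's truncation device sidesteps this issue entirely, which is what makes it cleaner.
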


\begin{remarks}\label{remarks:nosmmmmo}
\begin{enumerate}[\quad\normalfont 1)]
\item \label{it:cstor}
    Although this will not be needed we have  in fact (\ref{item:C-02})
    with $C_0=1$; see Corollary  \ref{cor:renerg}  for a related
    result.
\item \label{it:csmoottaa} The paper \cite {FS}  contains a stronger
  version of the so-called limiting absorption principle than can be
  read from Proposition \ref{prop:resolvent_basic2} (\ref{it:one2}): For all $\delta> \tfrac {1}{2}$ 
there exists
$C>0$ such that
\begin{equation*}
\sup_{\lambda+\i \epsilon\in M}\|\langle x
\rangle^{-\delta} g^{1\over 2}R(\lambda+\i \epsilon) g^{1\over
  2}\langle x \rangle^{-\delta} \| \leq C;\;M:=[0,\lambda_0]\times \i \,]0,1],
\end{equation*} and the $\mathcal B (L^2(\R^{d})$--valued function $\langle x
\rangle^{-\delta-\frac\mu4} R(\zeta) \langle x
\rangle^{-\delta-\frac\mu4} $ is uniformly H\"older continuous in
$\zeta\in M$. The (well-known) positive energy analogue of this
assertion states that  for any positive
$\lambda_1<\lambda_0$ the $\mathcal B (L^2(\R^{d})$--valued function $\langle x
\rangle^{-\delta} R(\zeta) \langle x
\rangle^{-\delta} $ is uniformly H\"older continuous in
$\zeta\in M\setminus\{\Re \zeta<\lambda_1\}$; see \ref{it:csmoott3})
for a related remark.
\item \label{it:csmoott}  
  The paper \cite {FS} also contains an extension of Proposition
  \ref{prop:resolvent_basic2} to  powers of the resolvent, however this
  will not be useful in the forthcoming sections; see  Example 
  \ref{example:coulombsing} for a discussion. This is related to the
  fact  
  that our  classical
  constructions are not smooth in $\lambda$ at zero energy, cf. \cite[Remarks
  4.7 1)]{DS1}.
The collection of all estimates
  in Proposition
  \ref{prop:resolvent_basic2} (more precisely a  collection of similar estimates with a  complex spectral
  parameter) yields similar estimates for powers of the
  resolvent by a completely algebraic reasoning, cf. \cite[Appendix A]{FS}. 
\item \label{it:csmoott3}Assume that the potential satisfies Condition \ref{symbol}.
 Then all the bounds 
of Proposition  \ref{prop:resolvent_basic2} remain true uniformly in
$\lambda \in[\lambda_1,\lambda_0]$ for any positive
$\lambda_1<\lambda_0$ provided we replace
\begin{equation}
  \label{eq:newa's} a\to a:= {\xi^2\over 2\lambda},\;b\to
  b:={\xi\over \sqrt{2\lambda}}\cdot {x\over\langle x\rangle}\;\text{
    and }\;g\to 1.
\end{equation} (Under the stronger Conditions
  \ref{assump:conditions1} and \ref{assump:conditions3} the validity of this modification is a
  direct consequence of the bounds 
of Proposition  \ref{prop:resolvent_basic2}.) Also in this case the
families of associated 
  operators are norm continuous (now in $\lambda>0$ only).  
\end{enumerate}
\end{remarks}
\noindent{\em Proof of continuity statements in Proposition
  \ref{prop:resolvent_basic2}.} 
Due to Remark \ref{remarks:nosmmmmo} \ref{it:csmoottaa}) and the calculus of
pseudodifferential operators all appearing
forms in Proposition \ref{prop:resolvent_basic2} are  continuous in $\lambda\geq 0$.

Norm continuity of the corresponding
operator-valued functions also follows from Remark
\ref{remarks:nosmmmmo} \ref{it:csmoottaa}). This can be seen as
follows for $ B_\delta(\lambda):=\langle x \rangle^{-\delta} g^{1\over
  2}R(\lambda+\i 0) g^{1\over 2}\langle x \rangle^{-\delta}$
(appearing in (\ref{it:one2})): 

Pick $\delta'\in ]\frac12,\delta[$,
insert  for (small) $\kappa>0$ the identity $I=F(\kappa|x|<1)+F(\kappa|x|>1)$ on both
sides of $B_\delta(\lambda)$ and expand (into three terms). This yields
\begin{equation*}
  \|B_\delta(\lambda)-F(\kappa|x|<1)B_\delta(\lambda)F(\kappa|x|<1)\|\leq
  C\kappa^{\delta-\delta'}\|B_{\delta'}(\lambda)\|.
\end{equation*} Due to Proposition \ref{prop:resolvent_basic2}
(\ref{it:one2}) the right hand side is $O\big
(\kappa^{\delta-\delta'}\big )$
uniformly $\lambda\geq0$. On the other hand due to Remark
\ref{remarks:nosmmmmo} \ref{it:csmoottaa}) (and the calculus of
pseudodifferential operators) for fixed  $\kappa>0$ the  ${\mathcal B}(L^2(\R^d))$--valued function
$F(\kappa|x|<1)B_\delta(\cdot)F(\kappa|x|<1)$ is 
continuous. Hence $B_\delta(\cdot)$ is  a uniform limit
of continuous functions and therefore indeed continuous.

The other operator-valued functions  can
be dealt with in the same fashion.
\qed 
\subsection{Scattering wave front set}
\label{Some other  estimates}

The remaining subsections of Section \ref{Uniform resolvent estimates}  are devoted to a number of somewhat
technical
estimates on solutions to the equation $(H-\lambda)u=v$ for a fixed  $\lambda\geq0$. Although they are
proved under Conditions
  \ref{assump:conditions1} and \ref{assump:conditions3} we remark  that
  there are
  similar estimates under Condition \ref{symbol} for a fixed
  $\lambda>0$. The reader
may skip this material  on
the first reading.

Throughout  the remaining part of this section  we use the notation   $\langle \xi 
\rangle_1=(1+|\xi|^2)^{1/2}$ and $X=(1+|x|^2)^{1/2}$ for $\xi,x\in \R^d$.

With reference to the symbol class $
S_{\unif}(m,g_{\mu,\lambda})$ from  Subsection \ref{Low energy resolvent estimates1}
 clearly $h_1,h_2\in
S_{\unif}(m,g_{\mu,\lambda})$ 
with  $h_1:=\tfrac{1}{2}\xi ^2+V_1$, 
$h_2:=\tfrac{1}{2}\xi ^2+V_1+V_2$ and $m=g^2\langle \xi /g
\rangle^2_1$. In  the remaining part  of
Section \ref{Uniform resolvent estimates} we shall however only need a 
reminiscence of this symbol class given by disregarding the uniformity in
$\lambda\geq0$. Whence we shall consider 
symbols $c\in S(m, g_{\mu,\lambda})$ meaning, by definition,  that 
 \begin{equation}
   \label{eq:sym_classii}
  |\partial^\gamma_x \partial^\beta_\xi c(x,\xi)|\leq
  C_{\gamma,\beta}m(x,\xi)\langle x \rangle^{-|\gamma|}g^{-|\beta|}.
 \end{equation} 
 The corresponding class of 
    standard Weyl quantizations $\Opw(c)$ is
    denoted by $\Psi(m, g_{\mu,\lambda})$.

 It
is convenient to introduce the following constants:
\begin{equation}
  \label{eq:s_0}s_0=
\begin{cases} 
(1+\tfrac{\mu}{2})/2,\\
1/2,\end{cases}
\  s_1=
\begin{cases} 
1-\tfrac{\mu}{2},\\
1,\end{cases}
\  s_2=
\begin{cases} 
\mu, &\text{ for }\lambda =0,\\
0,&\text{ for }\lambda >0.
\end{cases}
\end{equation}
If $\epsilon>0$, then $\langle x\rangle^{-s_0-\epsilon}$ will be a typical weight that appears
in resolvent estimates. (Notice that in the uniform estimates of
Proposition \ref{prop:resolvent_basic2} the corresponding weight is
$g^{\frac12} \langle x\rangle^{-\frac12-\epsilon}$.) The weight 
$\langle x\rangle^{-s_1}$ plays the role of the ``Planck constant'' for
the class $\Psi(m, g_{\mu,\lambda})$.  Finally, $\langle
x\rangle^{-s_2}$ will appear in the ``elliptic regularity estimate''
of Proposition \ref{prop:enerest}. Clearly $s_0>s_2$ and $s_1>0$.

Let us decompose the normalized momentum $\xi/g$ as follows: 
\begin{equation}
   \label{eq:sym_basdef}{\xi\over g}=b{x\over \langle x\rangle}+\bar c,\;b:={x\over \langle x\rangle}
\cdot \frac \xi g\text{  and }\bar c :=\Big(I-\big |{x\over \langle x\rangle}\big\rangle \big\langle {x\over \langle x\rangle}\big|\Big)\frac \xi g.\end{equation} 
  Notice that $b$ was already defined in
 Subsection \ref{Low energy resolvent estimates1}, besides
   for $r=|x|\geq1$,  $b^2+\bar c ^2=a$ with $a$ also defined
in
 Subsection \ref{Low energy resolvent estimates1}. 
Moreover for  $r\geq1$ we have the identification 
 $b=\hat x\cdot
  {\xi\over g}\in\R$ and $\bar c=(I-|\hat x \rangle \langle \hat x|){\xi\over
    g} \in T_{\hat x}^*(S^{d-1})$ with $\hat x=x/r\in S^{d-1}$, which obviously constitute canonical coordinates for  ``the phase space''  ${\T}^*:=T^*(S^{d-1})\times \R=S^{d-1}\times \R^d$. This partly motivates the following definition:
    
  The wave front set $WF^s_{\sc} (u)$ of  a distribution $u\in L^{2,-\infty}$  is  the
subset of
 ${\T}^*$ given by the condition
\begin{align}
   \label{eq:WF^sa}
   &z_1=(\omega_1,\bar c_1,b_1) =(\omega_1,b_1\omega_1+\bar c_1)= (\omega_1,\eta_1) \notin WF^s_{\sc}(u) \nonumber\\
&\Leftrightarrow\\
   &\exists\;{\rm neighbourhoods }\;\mathcal{N}_{\omega_1}\ni
   {\omega_1},\;\mathcal{N}_{\eta_1}\ni {\eta_1}\ \ 
\forall \chi_{\omega_1}\in
   C^{\infty}_c(\mathcal{N}_{\omega_1}),\;\chi_{\eta_1}\in
   C^{\infty}_c(\mathcal{N}_{\eta_1}): \nonumber\\
&\Opw\big(\chi_{z_1}F(r>2)\big)u\in L^{2,s}\text{ where }\chi_{z_1}(x,\xi)=\chi_{\omega_1}({\hat x})\chi_{\eta_1}({b\hat x}+\bar c). \nonumber
 \end{align} Notice that this quantization is defined by the
 substitution ${b\hat x}+\bar c\to\xi /g$, cf. \eqref{eq:sym_basdef}.  Keep in  mind that the
 whole concept depends on the  given
 energy $\lambda\in[0,\infty[$ in consideration  
(through $g$, which enters in the definition of $b$ and
 $\bar c$). 

The above notion of wave front set is of course adapted to the problem
in hand. The classical definition is taylored to measure decay in momentum space; see for
example 
\cite[Chapter VIII]{Ho1}. Our definition concerns decay in position
space, and thus it  is more related to the  wave front set
 introduced in \cite[Section 7]{Me} (dubbed there as ``the 
  scattering wave front set''). 

Obviously  
\begin{equation*}
 u\in L^{2,s}  \Rightarrow WF^s_{\sc}(u)=\emptyset.
\end{equation*}
Conversely (by a compactness argument), if for some $\chi\in C^{\infty}_c(\mathbb{R}^d)$
\begin{equation}\label{eq:larxia}
  u-\Opw(\chi(\xi /g))u\in L^{2,s},
\end{equation}
then 
\begin{equation*}
WF^s_{\sc}(u)=\emptyset \Rightarrow u\in L^{2,s}.  
\end{equation*}

\begin{prop}\label{prop:enerest}
Let $\lambda \geq0$ and $s_2$ be defined in \eqref{eq:s_0}. 
Let $u\in L^{2,-\infty}$, $v\in L^{2,s+s_2}$ and
$(H-\lambda)u=v$. Then the estimates \eqref{eq:larxia} and 
\begin{equation}
  \label{eq:Chara}
WF^s_{\sc}(u)\subseteq \{z\in {\T}^*|b^2+\bar c ^2=1\} 
\end{equation} hold. 

More generally, suppose $u\in L^{2,-\infty}$, $g^{-1}v\in L^{2,s}$ and
$(H-\lambda)u=v$. Then the  following estimates hold:
\begin{subequations}
\begin{align}
  \label{eq:highes}
 &\text{For all }\epsilon>0:\;g\Opw (F(b^2+\bar c^2-1>\epsilon)
)u\in L^{2,s},\\
  \label{eq:highesb}
 &  \text{For all }\epsilon>0,\;g\Opw\left(\langle\xi/g\rangle_1^2
F(b^2+\bar c^2-1>\epsilon)\right)
u\in
  L^{2,s},\\
\label{eq:highesc}&\text{For all }\epsilon>0:\;g\Opw \big (F(1-b^2-\bar c^2 >\epsilon)\big )u\in L^{2,s},\\&
WF^s_{\sc}(gu)\subseteq \{z\in {\T}^*\ |\ b^2+\bar c ^2=1\}.\label{eq:Charab}
\end{align} 
\end{subequations}
\end{prop}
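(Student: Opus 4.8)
The plan is to read Proposition~\ref{prop:enerest} as an \emph{elliptic regularity} statement: the set $\{z\in\T^*\,|\,b^2+\bar c^2=1\}$ is exactly the (rescaled) characteristic variety of $H-\lambda$, and everything asserted is microlocal smoothness in its complement. Fix $\lambda\geq0$. First I would restrict attention to the region where $r$ is large, where $V_3$ vanishes by Condition~\ref{assump:conditions1}, so that there $H-\lambda$ acts as $\Opw(h_2-\lambda)$ modulo an operator whose kernel is supported in a fixed bounded set (such terms, like $R$ below, are irrelevant once one localizes with $F(r>2)$ and uses $u\in L^{2,-\infty}$). On $r\geq1$ one has the identity
\[
h_2-\lambda=\tfrac12\xi^2+V_1+V_2-\lambda=\tfrac{g^2}{2}\bigl(b^2+\bar c^2-1\bigr)+V_2 ,
\]
with $h_2-\lambda\in S(m,g_{\mu,\lambda})$, $m=g^2\langle\xi/g\rangle_1^2$, and $V_2\in S(g^2\langle x\rangle^{-\epsilon_2},g_{\mu,\lambda})$ a strictly lower-order perturbation. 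The point is that $h_2-\lambda$ is elliptic in this class, $|h_2-\lambda|\geq c\,m$ for $r$ large, on each of the three microlocal regions $\{b^2+\bar c^2-1\geq\epsilon\}$, $\{1-b^2-\bar c^2\geq\epsilon\}$ and $\{|\xi|/g\geq C\}$, the last being contained in the first when $r\geq1$.

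On each such region I would run the standard elliptic parametrix construction in the Weyl--H\"ormander calculus for $g_{\mu,\lambda}$: choose $a_1$ supported there, of the form $F(r>2)$ times the desired phase-space localizer (with an extra factor $\langle\xi/g\rangle_1^2$ in the case of \eqref{eq:highesb}), put $b_0:=a_1/(h_2-\lambda)$, and iterate, each step gaining a power of the ``Planck constant'' $\langle x\rangle^{-s_1}$ of the calculus (and of $\langle\xi/g\rangle_1^{-1}$ at $\xi=\infty$), so that after finitely many steps the remainder is arbitrarily regularizing; here $V_2$ is absorbed by a convergent Neumann series since $(h_1-\lambda)^{-1}V_2=O(\langle x\rangle^{-\epsilon_2})$. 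This yields $B=\Opw(b)$ with $B(H-\lambda)=\Opw(a_1)+R$, $R\colon L^{2,-\infty}\to L^{2,N}$ for any prescribed $N$, and with $b$ of symbol class such that $g\,B\,g\in S(1,g_{\mu,\lambda})$ — hence $L^{2,s}$-bounded. Applying $B$ to $(H-\lambda)u=v$ gives $\Opw(a_1)u=Bv-Ru$, whence $g\,\Opw(a_1)u=(g\,B\,g)(g^{-1}v)-g\,R\,u\in L^{2,s}$ using $g^{-1}v\in L^{2,s}$ and $u\in L^{2,-\infty}$. Choosing $a_1\approx\langle\xi/g\rangle_1^2F(b^2+\bar c^2-1>\epsilon)F(r>2)$ gives \eqref{eq:highesb}; dropping the $\langle\xi/g\rangle_1^2$ gives \eqref{eq:highes}; running the same argument on $\{1-b^2-\bar c^2\geq\epsilon\}$ (where $\langle\xi/g\rangle_1$ is bounded) gives \eqref{eq:highesc}. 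Estimate \eqref{eq:larxia} follows by noting that $1-\chi(\xi/g)$ is supported where $|\xi|/g$ is large, which for $r\geq1$ lies in the region covered by \eqref{eq:highesb}, combined with interior elliptic regularity for bounded $r$ away from $\supp V_3$; and \eqref{eq:Charab} is exactly \eqref{eq:highes}, \eqref{eq:highesc} and \eqref{eq:larxia} read off for $gu$ in the region $\{b^2+\bar c^2\neq1\}$.

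Finally, I would deduce the first assertion (\eqref{eq:larxia} for $u$ itself and \eqref{eq:Chara}) by applying the general statement with $s$ replaced by $s+s_2/2$: for $\lambda>0$ this is vacuous ($s_2=0$ and $g$ is bounded above and below for $r$ large), while for $\lambda=0$ one has $s_2=\mu$ and $g^{-1}$ comparable to $\langle x\rangle^{\mu/2}$, so that $v\in L^{2,s+s_2}$ is equivalent to $g^{-1}v\in L^{2,s+s_2/2}$ and likewise ``$gu$ localized in $L^{2,s+s_2/2}$'' is equivalent to ``$u$ localized in $L^{2,s}$''; this turns \eqref{eq:Charab} into \eqref{eq:Chara} and rewrites \eqref{eq:larxia}. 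I expect the main difficulty to be organisational rather than conceptual: one must carry the whole argument inside the Weyl--H\"ormander calculus for the position- and $\lambda$-dependent metric $g_{\mu,\lambda}$ with weight $m$ (its admissibility being taken from \cite{FS}) and keep exact track of the powers of $g$ and $\langle\xi/g\rangle_1$, so that the gain from inverting $h_2-\lambda$ lands on precisely the side of each estimate as stated; a subsidiary technical point is the handling of the compactly supported, possibly singular term $V_3$ and of the transition to bounded $r$ needed for \eqref{eq:larxia}.
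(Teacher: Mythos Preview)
Your parametrix approach is a legitimate route to this elliptic-regularity statement, and it is \emph{genuinely different} from what the paper does. The paper never builds a parametrix: its Step~III is a positivity/commutator argument (a symbol $p_\kappa$ is manufactured so that $0\le p_\kappa$ gives a G\aa rding-type lower bound, and the equation is used to extract decay of $A_\kappa u$), iterated in Step~IV by $\epsilon_2/2$ at a time; Step~II is a separate a~priori bound showing that two derivatives of $u$ cost nothing once $gu\in L^{2,t}$ and the equation holds. The payoff of the paper's route is that exactly the same commutator machinery is recycled in Propositions~\ref{prop:propa5aa}, \ref{prop:propa5} and \ref{prop:propa1}, so the elliptic estimate and the propagation estimates share one engine. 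Your route is cleaner conceptually for this proposition in isolation, but does not feed the later propagation results.

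Two points in your sketch deserve more care than you give them. First, the claim that each parametrix step ``also gains a power of $\langle\xi/g\rangle_1^{-1}$'' is \emph{not} a generic Weyl--H\"ormander fact: with $b_0\in S(g^{-2},g_{\mu,\lambda})$ and $h_2-\lambda\in S(g^2\langle\xi/g\rangle_1^2,g_{\mu,\lambda})$, the generic first Moyal remainder lies only in $S(h\langle\xi/g\rangle_1^2,g_{\mu,\lambda})$, and an operator in $\Psi(\langle\xi/g\rangle_1^2 h^N,g_{\mu,\lambda})$ applied to $u\in L^{2,t_0}$ is \emph{not} controlled by $\|u\|_{t_0}$ alone. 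The extra $\xi$-decay you want does hold here, but only because $h_2-\lambda$ is quadratic in $\xi$ and $b_0=a_1/(h_2-\lambda)$ is asymptotically constant in $\xi/g$; you would have to check the first two Moyal terms by hand (the bad contribution is $(\partial_x b_0)\cdot\xi\in S(h\langle\xi/g\rangle_1,g_{\mu,\lambda})$, and one more step kills it). A cleaner fix is to prove \eqref{eq:highes} and \eqref{eq:highesc} first (there $a_1$ is bounded, the remainder is in $\Psi(h^N,g_{\mu,\lambda})$, and $Ru\in L^{2,s}$ follows directly), and then recover \eqref{eq:highesb} from \eqref{eq:highes} via the identity $\Opw(\langle\xi/g\rangle_1^2)=2g^{-1}(H-\lambda)g^{-1}+\Opw(a_2)-2g^{-2}V_3$---which is exactly the paper's Step~II. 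Second, ``irrelevant once one localizes with $F(r>2)$'' is too quick for $V_3$: you still have $B(H-\lambda)=B\,\Opw(h_2-\lambda)+BV_3$, and $V_3u$ need not lie in any fixed $L^{2,\sigma}$ a~priori, since $V_3$ is only assumed to satisfy $V_3(H_0+\i)^{-1}$ compact. The paper handles this through $V_3(H-\i)^{-1}$ (its bounds $T_1,\widetilde T_1,T_2$), writing $V_3u=V_3(H-\i)^{-1}\bigl((H-\lambda)u+(\lambda-\i)u\bigr)$; you will need the same device.
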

\proof Obviously  \eqref{eq:highesb} is stronger than \eqref{eq:highes}. Notice also that \eqref{eq:highes} 
in some sense is stronger than Proposition \ref{prop:resolvent_basic2}
(\ref{item:C-02}) (involves  weaker weights). It is also obvious that 
 \eqref{eq:Charab} is a consequence of \eqref{eq:highesb} and \eqref{eq:highesc}. 

The proof of \eqref{eq:highesb} given below is somewhat similar  to the
proof of the analogue of Proposition \ref{prop:resolvent_basic2}
(\ref{item:C-02}) 
given in \cite{FS}. For convenience we have  divided the proof into four steps. For the calculus of
pseudodifferential operators, used tacitly below,  we refer to 
 \cite[Theorems 18.5.4, 18.6.3, 18.6.8]{Ho1} (the reader might find it more convenient to consult \cite{FS} for an  elaboration).

 The bounds \eqref{eq:highesc} can be proved by mimicking Steps III and IV below. We note that the complication due to high energies, cf. Step II below, is absent. For this reason \eqref{eq:highesc} is somewhat easier to establish than \eqref{eq:highesb} and we shall leave the details of proof to the reader.
 
 \noindent{\bf Step I}. At various points in the proof of \eqref{eq:highesb} we need to control the possibly existing local singularities of the potential $V_3$. This is done in terms of the following elementary bounds:
 \begin{subequations}
 \begin{align}\label{eq:T1}
 T_1&:=\langle
  x\rangle^{t'} g^{-1}V_3(H-\i)^{-1}g^{-1}\langle x\rangle^{-t} \in
  \mathcal{B}(L^2),\;t,t'\in \R;\\
\widetilde T_1&:=\langle
  x\rangle^{t'} g^{-1}V_3(1+p^2)^{-1}g^{-1}\langle x\rangle^{-t} \in \mathcal{B}(L^2),\;t,t'\in \R;\label{eq:tilT1}\\
  \label{eq:T2}
 T_2&:=\langle
  x\rangle^{t} (1+p^2)g(H-\i)^{-1}g^{-1}\langle x\rangle^{-t} \in
  \mathcal{B}(L^2),\;t\in \R.
\end{align} 
\end{subequations}

\noindent{\bf Step II}. Suppose $gu\in L^{2,t}$ for some fixed $t\leq s$. We shall prove that
then $Agu\in L^{2,t}$ for all $A\in \Psi(\langle {\xi/g}\rangle_1^2,
  g_{\mu,\lambda})$, more precisely, that 
\begin{equation}
  \label{eq:redenerg}
  \text{for all } A\in \Psi(\langle {\xi/g}\rangle_1^2,
  g_{\mu,\lambda}):\;
  \|Agu\|_t\leq C (\|gu\|_t+\|g^{-1}v\|_s).
\end{equation} 

For any such an operator $A$ and any $m\in \R$, we decompose
\begin{equation}\label{eq:rei}
  \langle x\rangle^{t}A= B_{m}\langle x\rangle^{t}\Opw (\langle {\xi/
    g}\rangle_1^2) +R_{m},
\end{equation} where $B_{m}\in \Psi(1, g_{\mu,\lambda})$ and
$R_{m}\in \Psi(\langle {\xi/
    g}\rangle_1^2\langle x\rangle^{-m}, g_{\mu,\lambda})$. 

Now, cf. \cite[proof of Lemma 4.5]{FS}, 
\begin{align}\label{eq:reii}
 \Opw(\langle  {\xi} /{g}\rangle^2_1)&=g^{-1}p^2 g^{-1}+
  \Opw( a_1)\nonumber\\&=2g^{-1}(H-\lambda)g^{-1}+
  \Opw( a_2)-2g^{-2}V_3;\\ a_1&=1 -|\nabla g^{-1}|^2+4^{-1}\Delta
  g^{-2},\;  a_2=a_1+1-2g^{-2}V_2\in
  S(1,
  g_{\mu,\lambda})\nonumber.
\end{align}

We substitute (\ref{eq:reii}) in (\ref{eq:rei}), expand into
altogether four terms and apply the resulting sum to the state $gu$. The contribution from the first
term of  (\ref{eq:reii}) is estimated as
\begin{equation*}
  \|B_{m}\langle
  x\rangle^{t}2g^{-1}(H-\lambda)g^{-1}(gu)\|\leq C_1\|g^{-1}v\|_t\leq C_2\|g^{-1}v\|_s.
\end{equation*}

Similarly, the contribution from the second
term of  (\ref{eq:reii}) is estimated as
\begin{equation*}
  \|B_{m}\langle
  x\rangle^{t}\Opw( a_2)gu\|\leq C\|gu\|_t.
  \end{equation*}

As  for the third term of  (\ref{eq:reii}) we use (\ref{eq:T1}) with $t=t'$ to bound 
\begin{align*}
  2 \|B_{m} &\langle
  x\rangle^{t}g^{-2}V_3gu\|\leq
  2\|B_{m}\|\,\|T_1\langle
  x\rangle^{t}g(H-\i)u\|\\&\leq C_1\big ( \|gv\|_t+\|(\lambda-\i)gu\|_t\big)\leq
  C_2(\|gu\|_t+\|g^{-1}v\|_s).
\end{align*}

To treat the contribution from the second term of (\ref{eq:rei}) we
note that
\begin{equation*}
  \Psi(\langle {\xi/
    g}\rangle_1^2\langle x\rangle^{-m}, g_{\mu,\lambda})\subseteq \Psi(\langle {\xi}\rangle_1^2\langle x\rangle^{2-m}, g_{\mu,\lambda}).
\end{equation*} 
Whence, using (\ref{eq:T2}) and choosing $m= 2-t$,
\begin{equation*}
  \| R_{m} gu\|\leq
  C_1\|T_2\langle
  x\rangle^{t}g(H-\i)u\|\leq
  C_2(\|gu\|_t+\|g^{-1}v\|_s).
\end{equation*} 

We  conclude (\ref{eq:redenerg}).

\noindent{\bf Step III}. Suppose $gu\in L^{2,t}$ for some fixed $t<
s$. Fix $s'\in ]t,t+1-\mu/2]$ with $s'\leq s$. We shall show that
\eqref{eq:highes} holds with $s$ replaced by $s'$. We set
$F_\epsilon:=F(b^2+\bar c^2-1>\epsilon)$.

We need a regularization in $x$-space given in terms of 
$\iota_\kappa=X_\kappa^{-\tfrac{2-\mu}{2}}$, where  for  $\kappa\in ]0,1]$ we
 let
\begin{equation} 
\label{eq:aai}
  X_\kappa:=(1+\kappa|x|^2)^{1/2}.
 \end{equation} 
  Mimicking \cite[proof of Lemma
 4.5]{FS},  for $R>1$  large enough we clearly have
 \begin{equation*}
   F_\epsilon^2F(r>R)^2\leq \frac 3\epsilon \Re
   \Big(\frac {2h_2-2\lambda}{g^2}\Big )F_\epsilon^2F(r>R)^2.
 \end{equation*} Let
 \begin{align*}D&=\Opw (d),\;d=\langle {\xi/g}\rangle_1^{-1}\langle {x}\rangle^{1-s'}=\hbar ^{-1}\langle {\xi/g}\rangle_1^{-1}g^{-1}\langle {x}\rangle^{-s'};\\
   P_\kappa&=\Opw (p_\kappa),\;p_\kappa=q_\kappa^2\Big (
   \frac 6\epsilon \Re (h_2-\lambda)-g^2\Big ),\;q_\kappa=\langle
   {x}\rangle^{s'}F_\epsilon\iota_\kappa F(r>R).
 \end{align*} Since $0\leq p_\kappa \in S(\hbar ^{-2}d^{-2},
  g_{\mu,\lambda})$, 
\begin{equation*}
  D^*P_\kappa D\geq -C 
  \end{equation*} uniformly in $\kappa$. 
Since  $0<d\in S(d,
  g_{\mu,\lambda})$, we can for any $m\in \R$ find  $e_m\in S(d^{-1},
  g_{\mu,\lambda})$ such that
  \begin{equation*}
  DE_m-I\in  \Psi(\langle {x}\rangle^{-2m},
  g_{\mu,\lambda});\;E_m=\Opw (e_m). 
  \end{equation*} Consequently, we have the uniform bound 
  \begin{equation*}
    P_\kappa\geq -CE_m^*E_m+R_m,\;R_m\in \Psi(\langle {\xi/g}\rangle_1^2 g^2\langle {x}\rangle^{2s'-2m},
  g_{\mu,\lambda}),\nonumber
  \end{equation*} and therefore by
  choosing $m=s'-t$ and by using (\ref{eq:redenerg}) 
  that the expectation
  \begin{equation}
    \label{eq:loi}
    \langle P_\kappa\rangle_{u}\geq -C((\|gu\|_t+\|g^{-1}v\|_s)^2.
  \end{equation}

On the other hand, for any $\delta\in ]0,1[$
\begin{equation}
    \label{eq:loi2}
    \langle P_\kappa\rangle_{u}\leq C((\|g
    u\|_t+\|g^{-1}v\|_s)^2-(1-\delta)\langle Q_\kappa^*Q_\kappa \rangle_{gu},\;Q_\kappa=\Opw (q_\kappa).
  \end{equation} Here we use that
  \begin{align}\label{eq:loi1}
    \Opw \big (q_\kappa^2
    \Re (h_2-\lambda)\big )&= \Re\big((Q_\kappa g)^*Q_\kappa g^{-1}(H-V_3-\lambda)\big)+R_\kappa,\nonumber\\
     \;\;\;R_\kappa&\in  \Psi(\langle {\xi/g}\rangle_1^2\hbar^2 \langle
     {x}\rangle^{2s'}g^2,
  g_{\mu,\lambda}))\subseteq \Psi(\langle {\xi/g}\rangle_1^2\langle
     {x}\rangle^{2t}g^2),\nonumber 
  \end{align} and the fact that $R_\kappa$ is bounded in $\kappa\in
  ]0,1]$ in the class $\Psi(\langle {\xi/g}\rangle_1^2\langle
     {x}\rangle^{2t}g^2)$. Notice that 
\begin{align*}
    &\frac 6\epsilon \langle \Re\big((Q_\kappa g)^*Q_\kappa g^{-1}(H-\lambda)\big)\rangle_{u}\\&\leq C\|
    Q_\kappa gu\|\,\|g^{-1}v\|_{s'}\leq \delta\|Q_\kappa gu\|^2+C_\delta\|g^{-1}v\|_s^2,
  \end{align*} and that the contributions from $V_3$ and the term
  $R_\kappa$ can be treated by  (\ref{eq:T1}) and (\ref{eq:redenerg}), respectively.

Now, combining (\ref{eq:loi}) and (\ref{eq:loi2}) we conclude that
\begin{equation*}
  \|Q_\kappa gu\|^2\leq C((\|g
    u\|_t+\|g^{-1}v\|_s)^2
\end{equation*} uniformly in $\kappa\in ]0,1]$. Letting $\kappa\to 0$
completes Step III.

\noindent{\bf Step IV}.  Note that \eqref{eq:highesb} is equivalent to the
following, seemingly stronger statement:
\begin{equation}
\text{For all }\epsilon>0,\;A\in \Psi(\langle {\xi/g}\rangle_1^2,
  g_{\mu,\lambda})\ \hbox{implies}\ \;Ag\Opw (F_{\epsilon})u\in L^{2,s}.
  \label{eq:highesc1}\end{equation}
We will show \eqref{eq:highesc1} by induction.

  By assumption, $gu\in L^{2,t}$ for a sufficently small $t\leq s$ and consequently, due to  Step II, it follows that $Agu\in L^{2,t}$ for all $A\in \Psi(\langle {\xi/g}\rangle_1^2,
  g_{\mu,\lambda})$. 
 Consider for all $k\in \N$ the following claim given in terms of $t_k:=\min (s, t+ (1-\mu/2)(k-1))$:

The bound/localization \eqref{eq:highesb} holds for all $\epsilon>0$ and all $A\in \Psi(\langle {\xi/g}\rangle_1^2,
  g_{\mu,\lambda})$ provided $u\to  u_{\epsilon}:=\Opw ( F_{\epsilon/2})u$ and
  $s$ is replaced by  $t_k$. (Notice that  this implies in particular  that
  the state $gu_{2\epsilon}\in L^{2,t_{k}}$ and,
 since  $\epsilon>0$ is arbitrary, that $gu_{\epsilon}\in L^{2,t_{k}}$.)

We have seen that this claim holds for $k=1$. So suppose $k>1$ and that the claim  is true for $k\to k-1$. To show the claim for $k$, we can assume that $t_{k-1}<s$. First, we notice that $v_{\epsilon}:=(H-\lambda)u_{\epsilon}$ obeys
the condition $g^{-1} v_{\epsilon}\in L^{2,t_k}$ due to the induction hypothesis,
   (\ref{eq:redenerg}), (\ref{eq:T1}) and  (\ref{eq:tilT1}). Notice  at this point  that
\begin{equation*}
[H-V_3-\lambda,\Opw ( F_{\epsilon/2})]\in \Psi(g^2\langle {\xi/g}\rangle_1^2\hbar, g_{\mu,\lambda}),  
\end{equation*} and that in fact (for any $m\in \R$)
\begin{align*}
[H-V_3-\lambda, \Opw ( F_{\epsilon/2})]&= gA g\Opw (
F_{\epsilon/4})+R_m,\\A&\in \Psi(\langle {\xi/g}\rangle_1^2\langle
{x}\rangle^{\mu/2-1}, g_{\mu,\lambda}),\;R_m\in \Psi(\langle
{\xi/g}\rangle_1^2\langle {x}\rangle^{-m}, g_{\mu,\lambda}). 
\end{align*} 
  Now, by Step III, \eqref{eq:highes}
applies to $u\to u_{\epsilon}$, $t\to s_{k-1}$
  and with $s$ replaced by $s'=t_k$. Next, by applying  Step II  to   the
  state $u\to \tilde u_{\epsilon}:=\Opw ( F_{\epsilon})u_{\epsilon}$  (note
  that as above  $g^{-1} (H-\lambda)\tilde u_{\epsilon}\in L^{2,t_k}$),
 we conclude that indeed  the bound \eqref{eq:highesb} holds with $u\to  u_{\epsilon}$ and $s$ replaced by  $t_k$. The induction is complete. 

Finally we obtain, using the above claim,  that the bound \eqref{eq:highesb}
holds without changing $u$ and with $s$ replaced by  $t_k$. Since clearly
$t_k=s$ for $k$ sufficiently large,  \eqref{eq:highesb} follows. 
\qed

The following corollary follows immediately  from
  Proposition \ref{prop:enerest}. At a fixed energy, it
 strengthens  
Proposition \ref{prop:resolvent_basic2}  {\rm(\ref{item:C-02})}.

\begin{cor} \label{cor:renerg}
Let $\chi\in C_c^\infty(\R)$, $\chi=1$ around
1.  Then for any $s>s_0$ we have (with $\lambda \geq0$,
and $s_0$ and $s_2$  as  given in
\eqref{eq:s_0})
\begin{equation}
\|\langle
x\rangle^{s-s_2} 
\Opw\left((a^2+1)(1-\chi(a)\right)R(\lambda\pm\i0)\langle
x\rangle^{-s}\|\leq  
  C.
\end{equation}
\end{cor}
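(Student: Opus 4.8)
The plan is to apply Proposition~\ref{prop:enerest} to $u:=R(\lambda\pm\i0)f$, $f:=\langle x\rangle^{-s}h$ with $h\in L^2$, and then to unwind the symbol $(a^2+1)(1-\chi(a))$. First I would record the two inputs Proposition~\ref{prop:enerest} needs. That $u\in L^{2,-\infty}$ with a bound by $\|h\|$: choosing $\delta\in\,]\tfrac12,\,s-\tfrac\mu4[$ (nonempty since $s>s_0$; for $\lambda>0$ any $\delta\in\,]\tfrac12,s[$ works, with $g$ bounded above and below), Proposition~\ref{prop:resolvent_basic2}~(\ref{it:one2}) gives $g^{1/2}\langle x\rangle^{-\delta}u=\bigl(g^{1/2}\langle x\rangle^{-\delta}R(\lambda\pm\i0)g^{1/2}\langle x\rangle^{-\delta}\bigr)\bigl(g^{-1/2}\langle x\rangle^{\delta}f\bigr)$, whose norm is $\lesssim\|g^{-1/2}\langle x\rangle^{\delta}f\|\lesssim\|f\|_s=\|h\|$. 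That $g^{-1}v\in L^{2,s'}$ with $v:=(H-\lambda)u=f$ and the shifted index $s':=s-\tfrac{s_2}{2}$: since $g^{-1}\sim\langle x\rangle^{s_2/2}$ (and $s_2=0$ when $\lambda>0$), this is exactly the hypothesis $f\in L^{2,s}$. Proposition~\ref{prop:enerest} then yields, for every $\epsilon>0$, that $g\,\Opw\bigl(\langle\xi/g\rangle_1^2F(b^2+\bar c^2-1>\epsilon)\bigr)u$ and $g\,\Opw\bigl(F(1-b^2-\bar c^2>\epsilon)\bigr)u$ lie in $L^{2,s'}$; since $g^{-1}L^{2,s'}=L^{2,s-s_2}$, the two pseudodifferential operators applied to $u$ (without the front $g$) map into $L^{2,s-s_2}$, with bounds that close through Proposition~\ref{prop:resolvent_basic2} (the proof of Proposition~\ref{prop:enerest} actually delivers quantitative versions, in terms of $\|gu\|_t+\|g^{-1}v\|_{s'}$).

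Next I would split the symbol. As $\chi=1$ near $1$, $\supp(1-\chi)\subseteq\{|a-1|\ge\delta_0\}$ for some $\delta_0>0$; fix $\epsilon\in\,]0,\delta_0[$ and write, as functions, $1-\chi(a)=(1-\chi(a))\bigl(F(a-1>\epsilon)+F(1-a>\epsilon)\bigr)$, a genuine partition on $\supp(1-\chi)$. Recall $a=b^2+\bar c^2$ for $r\ge1$. The ``subcharacteristic'' piece $(a^2+1)(1-\chi(a))F(1-a>\epsilon)$ is supported in $\{1-b^2-\bar c^2>\tfrac\epsilon2\}$, where $a^2+1\le2$, so it equals a bounded symbol in $S(1,g_{\mu,\lambda})$ times $F(1-b^2-\bar c^2>\epsilon)$; by the pseudodifferential calculus its $\Opw$ applied to $u$ differs from (a bounded operator)$\cdot\Opw\bigl(F(1-b^2-\bar c^2>\epsilon)\bigr)u$ only by terms carrying extra powers of the Planck constant $\langle x\rangle^{-s_1}$, and so lands in $L^{2,s-s_2}$ by the previous paragraph. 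The region $r<2$ is harmless, since $u\in L^2_{\mathrm{loc}}$ and every weight involved is bounded there; one formally inserts $I=F(r>2)+F(r<2)$ as in the definition of $WF^s_{\sc}$.

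For the ``supercharacteristic'' piece $m_+:=(a^2+1)(1-\chi(a))F(a-1>\epsilon)$, whose support lies in $\{a\ge1+\tfrac\epsilon2\}$, I would use that there $h_1-\lambda=\tfrac12 g^2(a-1)$ is elliptic, together with the identity $\langle\xi/g\rangle_1^2=2g^{-1}(H-\lambda)g^{-1}+\Opw(a_2)-2g^{-2}V_3$ of \eqref{eq:reii}. Factoring $m_+=(a-1)\,n_+$ with $n_+$ supported in $\{a\ge1+\tfrac\epsilon4\}$ and of $\langle\xi/g\rangle_1$-order two lower (legitimate as $a-1\ge\tfrac\epsilon2$ on $\supp m_+$), and replacing $(a-1)$ by $2g^{-2}(h_1-\lambda)$, one commutes $H-\lambda$ past the remaining bounded and lower-order factors and uses the exact relation $(H_0+V_1-\lambda)u=f-(V_2+V_3)u$. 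This rewrites $\Opw(m_+)u$ as a sum of: terms in which $H-\lambda$ has been replaced by $f$, bounded by $\|f\|_s$ through boundedness of the residual order-zero operators together with one factor $g^{-2}\sim\langle x\rangle^{s_2}$ (the source of the weight loss $s_2$); the $V_2$-term, which gains the extra decay $\langle x\rangle^{-\mu-\epsilon_2}$ of Condition~\ref{assump:conditions1}~(\ref{it:assumption4}); the $V_3$-term, absorbed exactly as in Step~I of the proof of Proposition~\ref{prop:enerest} via boundedness of $T_1,\widetilde T_1$; and lower-order remainders, microlocally supported in $\{a\ge1+\tfrac\epsilon4\}$ and carrying extra powers of $\langle x\rangle^{-s_1}$, which are fed back into \eqref{eq:highesb} and \eqref{eq:highesc1} and, after finitely many such passes, disappear. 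This is just the elliptic induction of the proof of Proposition~\ref{prop:enerest} run with $m_+$ in place of $\langle\xi/g\rangle_1^2F(\cdot)$, which is why the corollary ``follows immediately''. The one genuinely non-routine point — the main obstacle — is keeping the parametrix/commutator manipulations inside the calculus $\Psi(m,g_{\mu,\lambda})$ while tracking that dividing by the elliptic factor $h_1-\lambda$ simultaneously lowers the $\langle\xi/g\rangle_1$-order and inserts the weight $g^{-2}\sim\langle x\rangle^{s_2}$ that produces the exponent $s-s_2$, and handling the local singularity of $V_3$.
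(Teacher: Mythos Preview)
Your approach—apply Proposition~\ref{prop:enerest} to $u=R(\lambda\pm\i0)\langle x\rangle^{-s}h$, then split $1-\chi(a)$ into sub- and super-characteristic pieces—is exactly the paper's intended route, and your treatment of the region $\{a<1-\epsilon\}$ via~\eqref{eq:highesc} is correct.

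The gap is in the super-characteristic piece. After one factoring $m_+=(a-1)n_+$, the symbol $n_+=(a^2+1)(1-\chi)F_\epsilon/(a-1)$ is of order $\langle\xi/g\rangle_1^{2}$, \emph{not} order zero as your phrase ``boundedness of the residual order-zero operators'' asserts. One pass of the parametrix identity therefore leaves you with $\Opw(n_+)\cdot 2g^{-2}f$, whose leading part is $g^{-4}p^2 f$, and this is not controlled for general $f\in L^{2,s}$: take $h_n=\phi\,\e^{\i n\cdot x}$ with $\phi\in C_c^\infty$ and $|n|\to\infty$; then $\|h_n\|$ is fixed, while on $\supp\phi$ one has $a\sim|n|^2$, $u_n\approx 2|n|^{-2}f_n$, and hence the left side of the claimed inequality grows like $|n|^2$. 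Iterating the parametrix to factor out a second $(a-1)$ does not help either: it produces $(H_1-\lambda)g^{-2}f$, again requiring derivatives of $f$. Likewise, Step~II of the proof of Proposition~\ref{prop:enerest} gives $Agu\in L^{2,t}$ only for $A\in\Psi(\langle\xi/g\rangle_1^{2},g_{\mu,\lambda})$; the argument does not upgrade to order $\langle\xi/g\rangle_1^{4}$ without extra regularity on $v$. What \eqref{eq:highesb} (equivalently \eqref{eq:highesc1}) actually delivers is the bound with $(a^2+1)$ replaced by $\langle\xi/g\rangle_1^{2}=1+a$; with that weaker factor both your argument and the paper's one-line derivation go through directly.
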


The following proposition is 
 similar to  Proposition \ref{prop:propa1} stated later, although the flavour is
   somewhat  ``global''. These results (as well as their proofs) are modifications of 
\cite[Proposition 3.5.1]{Ho3} (and its proof), see also \cite{Me} and
\cite{HMV}. The condition (\ref{eq:8ai}) is similar to
(\ref{eq:highesb}); it implies that $WF^s_{\sc}(u)\subseteq
\{(b^2+\bar c ^2\leq 1\}$ and hence that $WF^s_{\sc}(u)$ is compact.

\begin{prop}  Let $\lambda \geq0$ and $s_0$ be defined in
\eqref{eq:s_0}.
  \label{prop:propa5aa} Suppose $u,v\in L^{2,-\infty}$, $(H-\lambda)u=v$,
  $s\in \R$, $k\in ]-1,1[$ and $\{b=k\}\cap
  WF^s_{\sc}(u)=\emptyset$. Suppose the following condition:
  \begin{equation}\label{eq:8ai}
  \text{For all }\delta>0,\;\Opw\left(\langle\xi/g\rangle_1^2
F(b^2+\bar c^2-1>\delta)\right)
u\in
  L^{2,s}.\end{equation}

  Define
  \begin{align}
    k^+&=\sup \{ \tilde k \geq k|\,\{b\in [k,\tilde k]\}\cap
  WF^s_{\sc}(u)=\emptyset\},\label{eq:2aa}\\\label{eq:1aa}k^-&=\inf  \{ \tilde k \leq k|\,\{b\in [\tilde k, k]\}\cap
  WF^s_{\sc}(u)=\emptyset\}.
  \end{align}
  
  Then 
\begin{align}
  k^+<1 &\Rightarrow \{b=k^+\}\cap
  WF^{s+2s_0}_{\sc}(v)\neq \emptyset\label{eq:y2},\\\label{eq:y1} k^->-1 &\Rightarrow \{b=k^-\}\cap
  WF^{s+2s_0}_{\sc}(v)\neq \emptyset.
\end{align} 
\end{prop}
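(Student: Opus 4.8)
The plan is to prove this by a positive-commutator (propagation of singularities) argument, closely modelled on the proof of Proposition \ref{prop:enerest} and on \cite[Proposition 3.5.1]{Ho3}; cf.\ also \cite{Me,HMV}. It suffices to treat the forward statement \eqref{eq:y2}, since \eqref{eq:y1} follows by replacing $b$ by $-b$ (equivalently, by reversing the direction of the radial Hamilton flow). I would argue by contradiction: assume $k^+<1$ but $\{b=k^+\}\cap WF^{s+2s_0}_{\sc}(v)=\emptyset$, and show that then $u$ is $WF^s_{\sc}$--clean on a set $\{|b-k^+|<\tau\}$ (intersected with a conic neighbourhood of the directions $\omega$ occurring over $\{b=k^+\}$) for some $\tau>0$, contradicting the maximality of $k^+$ in \eqref{eq:2aa}.

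The mechanism driving the propagation is the strict monotonicity of $b$ along the Hamilton flow of the principal symbol. First I would record, for $r\ge1$, the Poisson bracket
\[\{h_1,b\}=\frac{g\bar c^2}{r}+\frac{V_1'(r)}{g}(b^2-1),\]
so that on the characteristic set $\{b^2+\bar c^2=1\}$ it equals $\bar c^2(2\lambda-2V_1-rV_1')/(rg)$; by the virial bound \eqref{eq:virial} and Condition \ref{assump:conditions1} this is bounded below by $c\bar c^2 g/\langle x\rangle$, hence by $c_\delta\langle x\rangle^{-2s_0}$ on $\{|b|\le 1-\delta\}\cap\{|b^2+\bar c^2-1|\le\delta\}$, using that $g/\langle x\rangle$ behaves like $\langle x\rangle^{-2s_0}$ ($2s_0=1+\mu/2$ if $\lambda=0$, $2s_0=1$ if $\lambda>0$). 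This identifies the gain $\langle x\rangle^{-2s_0}$ of the relevant commutator as the source of the $2s_0$--loss in the statement, and it also explains why the conclusion must stop at $b=\pm1$: there $\bar c=0$ on the characteristic set and the positivity degenerates (these are the critical points of the radial flow). The contribution of $V_2+V_3$ to this bracket is of lower order and does not affect the sign.

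Next I would set up the commutator. Using \eqref{eq:8ai} to dispose of $\{b^2+\bar c^2>1\}$, and the cleanness of $v$ at $b=k^+$ together with elliptic regularity (as in Proposition \ref{prop:enerest}, \eqref{eq:highesc}) to dispose of $\{b^2+\bar c^2<1\}$ near $b=k^+$, one is reduced to a neighbourhood of $\{b=k^+\}\cap\{b^2+\bar c^2=1\}$. There I would quantize an escape function of the form $q_\kappa=\langle x\rangle^{\sigma}g\,\chi(b)\,\psi(b^2+\bar c^2)\,\zeta(\hat x)\,X_\kappa^{-N}$, with $\chi\ge0$ nondecreasing, supported in a small interval containing $k^+$ in its interior and with $\chi'$ bounded below near $k^+$; $\psi$ supported near $1$; $\zeta$ localizing the $\omega$--variable; $X_\kappa$ the regulariser from \eqref{eq:aai}; $N$ large; and $\sigma$ chosen so that $\langle x\rangle^{\sigma}g$ carries the weight reached at the current stage of an induction on the order. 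Writing $Q_\kappa=\Opw(q_\kappa)$ and using $(H-\lambda)u=v$ together with the selfadjointness of $Q_\kappa^*Q_\kappa$, one has $\langle\i[H-\lambda,Q_\kappa^*Q_\kappa]u,u\rangle=-2\Im\langle Q_\kappa^*Q_\kappa u,v\rangle$; the right-hand side is estimated by Cauchy--Schwarz and a splitting of weights that assigns $2s_0$ more powers of $\langle x\rangle$ to the factor acting on $v$ than to the factor acting on $u$ (matching the commutator gain above), so that it is bounded by a term to be absorbed on the left together with the $WF^{s+2s_0}_{\sc}$--cleanness of $v$ near $\{b=k^+\}$. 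On the other hand the pseudodifferential calculus (\cite[Theorems 18.5.4, 18.6.3, 18.6.8]{Ho1}, cf.\ \cite{FS}) gives that the principal part of $\i[H-\lambda,Q_\kappa^*Q_\kappa]$ has symbol $(\{h_1,q_\kappa\})q_\kappa$, which by the previous paragraph dominates a positive multiple of $\langle x\rangle^{2\sigma-2s_0}g^2\chi'(b)\psi\zeta$ on the diagonal. The lower-order error terms are controlled as follows: the ones proportional to $\chi$ (rather than $\chi'$) that are supported in $\{b<k^+\}$ use the a priori cleanness of $u$ on $\{b\in[k,k^+)\}$ from \eqref{eq:2aa}; those coming from differentiating the $(b^2+\bar c^2)$--cutoff are supported in $\{b^2+\bar c^2\ne1\}$ and are already handled; those generated by $V_2$ gain a factor $\langle x\rangle^{-\epsilon_2}$; those generated by the local singularities of $V_3$ are treated exactly as in Steps I--II of the proof of Proposition \ref{prop:enerest}, via the bounds \eqref{eq:T1}--\eqref{eq:T2} and \eqref{eq:redenerg}; and the ones produced by $X_\kappa$ are bounded uniformly in $\kappa$. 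Combining the two evaluations of $\langle\i[H-\lambda,Q_\kappa^*Q_\kappa]u,u\rangle$ yields a bound on $\|\langle x\rangle^{\sigma-s_0}g\,\chi'(b)^{1/2}\psi\zeta\,u\|$ that is uniform in $\kappa$; letting $\kappa\to0$ gives $g\,\Opw(\chi'(b)^{1/2}\psi\zeta F(r>2))u\in L^{2,s}$, hence $u$ is $WF^s_{\sc}$--clean where $\chi'$ is bounded below, which includes points with $b$ slightly larger than $k^+$ --- the required contradiction. Since each application of this estimate improves the available weight only by $s_1$, the whole argument is run as a finite induction on the order, starting from the a priori information $u\in L^{2,-\infty}$.

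The hard part, as is usual in such arguments, will be the careful construction of the escape function $q_\kappa$ --- making $\chi$ genuinely monotone along the flow while keeping its support in the correct microlocal region, and arranging the weights $\sigma$ and the induction so that everything closes --- together with the verification that all error terms, and in particular those produced by the regulariser $X_\kappa$, are bounded uniformly in $\kappa$, so that the passage to the limit $\kappa\to0$ is legitimate. These are, however, routine adaptations of the bookkeeping already carried out in \cite{FS} and in the proof of Proposition \ref{prop:enerest}.
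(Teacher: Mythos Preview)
Your strategy---positive commutator with an escape function localising near $b=k^+$, contradiction via propagation, then a finite induction on the order---is exactly the paper's. Two points of detail are worth flagging before you write it out.

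First, drop the angular cutoff $\zeta(\hat x)$. The hypothesis $\{b=k\}\cap WF^s_{\sc}(u)=\emptyset$ and the conclusion are both \emph{global} in $\hat x$, and the argument should be run globally in $\hat x$ as well. If you keep $\zeta$, the Poisson bracket $\{h_2,\zeta(\hat x)\}=g r^{-1}\bar c\cdot(\nabla\zeta)(\hat x)$ is of the \emph{same} order $g/r\sim\langle x\rangle^{-2s_0}$ as the main positive term, not lower order, and it is supported where $\nabla\zeta\neq0$; you have no a priori control there at $b=k^+$. The paper's escape function depends only on $r$, $b$, and $b^2+\bar c^2$. (Localisation in $\hat x$ becomes relevant only in the finer Proposition~\ref{prop:propa1}, which needs the homogeneity of $V_1$.)

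Second, your cutoff $\chi(b)$ as described (``nonnegative, nondecreasing, compactly supported'') is self-contradictory; more to the point, the terms coming from differentiating the $x$--weights $\langle x\rangle^{2\sigma}X_\kappa^{-\cdot}$ contribute $(g/r)Y_\kappa b\cdot(\text{escape})^2$ with no definite sign and the \emph{same} order as the main term. The paper handles this by inserting an exponential weight $\e^{-Kb}$ into the escape function: since $\{h_2,b\}\ge c\,g/r$ on the support (from \eqref{eq:virial} via \eqref{eq:eqmotionbiiii}), the contribution $-2K\{h_2,b\}a_\kappa^2$ absorbs the $Y_\kappa b$ term once $K$ is chosen large. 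The remaining bad sign from $\{f'(b)<0\}$, supported in $\{b<k^+\}$, is then split off with a partition of unity and controlled by the a priori cleanness on $\{b\in[k,k^+)\}$, as you anticipated.
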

\begin{proof} We shall only deal with the case of superscript "$+$"; the case of "$-$" is similar. For convenience we shall assume that $\epsilon_2 \leq 2-\mu$ and  divide the proof into two steps.
   
\noindent{\bf Step I}. We will first show the  following weaker
statement: Suppose $u\in L^{2,s-\epsilon_2/2}$, $v\in L^{2,s+2s_0}$
and  $(H-\lambda)u=v$ (in this case \eqref{eq:8ai} follows from
Proposition \ref{prop:enerest}).  Then 
 \begin{equation}
   \label{eq:bichi} k^+\geq1.
 \end{equation}

 Suppose on the contrary that $k^+<1$. By a compactness argument
we can  then   find a point in $WF^s_\sc (u)$ of the form $z_1=(\omega_1,\bar c_1, k^+)$. For $\epsilon >0$ chosen small enough (less than $(k^+-k)/2$ suffices here)
 \begin{equation}
\label{eq:nowavefronti}
   \{b\in ]k^+-2\epsilon,k^+[\}\cap WF^s_\sc(u) =\emptyset.
\end{equation} We can assume that $J:=]k^+-2\epsilon,k^++\epsilon[\subseteq ]-1,1[$. Pick a non-positive $f \in C^{\infty}_c(J)$ with $f'\geq 0$ on  $[k^+-\epsilon,\infty[$ and $f ( k^+)<0$,  
 and consider  for $K>0$ and $\kappa \in ]0,1]$ the symbol
\begin{equation} 
\label{eq:propobsi}
  b_\kappa=X^{s_0}a_\kappa,\; a_\kappa=X^sX_{\kappa}^{-\epsilon_2/2}F(r>2)\exp({-Kb})f ( b)F(b^2+\bar c^2<3);
 \end{equation} here $X_\kappa$ is defined by \eqref{eq:aai}.

 We compute the Poisson bracket
  \begin{eqnarray}
 \nonumber   \{h_2,b\}&=&
\frac{g}{r}\bar c^2+\frac{V_1'(b^2-1)}{g}-
\frac{x\cdot \nabla V_2}{g\langle x
\rangle}\\
&=&\frac gr\Big (\big (1-rV'_1g^{-2}\big )\bar c^2+rV'_1g^{-2}\big (b^2+\bar c^2-1\big )+O\big (r^{-\epsilon_2}\big )\Big)\label{eq:eqmotionb}\\
\label{eq:eqmotionbiiii}
  &=&\frac gr\Big (\big (1-rV'_1g^{-2}\big )\big (1-b^2\big
  )+g^{-2}2(h_2-\lambda)
+O\big (r^{-\epsilon_2}\big )\Big).
  \end{eqnarray} 
  
  We expand the right hand side of \eqref{eq:eqmotionbiiii} into three terms
  and notice that due to \eqref{eq:virial} the first term has the following
  positive lower bound on $\supp b_\kappa$: 
 \begin{equation*}\cdots \geq c\frac {g}{r};\;c=\frac {\tilde \epsilon _1}{2}\big(1-\sup \{t^2|t\in \supp f\}\big ).\end{equation*}

   First we fix $K$: A part of the Poisson bracket with
 $b_\kappa^2$ is
\begin{equation}
\label{eq:Poi1i}
  \{h_2,X^{2s+2s_0}X_{\kappa}^{-\epsilon_2}\}= \frac {g}{r}Y_\kappa  bX^{2s+2s_0}X_{\kappa}^{-\epsilon_2},
\end{equation}
where $Y_\kappa=Y_\kappa(r)$ is uniformly bounded in $\kappa$.
We pick $K>0$ such that for all $\kappa$ \[2Kc\geq |Y_\kappa  |+2\frac {r}{g}X^{-2s_0}\text{ on }
\supp b_\kappa.\]

From \eqref{eq:eqmotionbiiii}, \eqref{eq:Poi1i} and the properties of $K$ and $f$,  we conclude the following bound at  $\{ f'(b)\geq0\}$: 
\begin{equation*}
  \{h_2,b_\kappa^2\}\leq -2a_\kappa^2 +g^{-2}(h_2-\lambda)a_\kappa  O\big (r^s\big )+O\big (r^{2s}\big )(F^2)'(b^2+\bar c^2<3)+O\big (r^{2s-\epsilon_2}\big
  ).
\end{equation*}
  To use this bound effectively, we introduce  a partition of unity:  Let
  $f_1,f_2 \in C^{\infty}_c(J)$ be chosen such that  $\supp f_1\subseteq
  ]k^+-2\epsilon,k^+[$, $\supp f_2\subseteq ]k^+-\epsilon,k^++\epsilon[$ and
  $f_1^2+f_2^2=1$ on  $\supp f$. We multiply both sides  by
  $f_2^2\,(=1-f_1^2)$ and obtain after a rearrangement 
  \begin{align}
\label{eq:Poi3iyy}
  \{h_2&,b_\kappa^2\}\leq -2a_\kappa^2 +g^{-2}(h_2-\lambda)a_\kappa d_{\kappa}
  \nonumber\\+&K_1f_1^2F(b^2+\bar c^2<3)\langle
  {x}\rangle^{2s}-K_2(F^2)'(b^2+\bar c^2<3)\langle {x}\rangle^{2s} +K_3\langle
  {x}\rangle^{2s-\epsilon_2}, \\ &d_{\kappa}\in S(\langle {x}\rangle^{s}, 
  g_{\mu,\lambda});\nonumber
\end{align} here $K_1,K_2,K_3>0$ are independent of $\kappa$, and the symbols $d_{\kappa}$ are bounded in $\kappa$ in the indicated class.

We introduce  $A_\kappa={\Opw}(a
_\kappa)$, $B_\kappa={\Opw}(b
_\kappa)$  and the regularization $u_R=F(|x|/R<1)u$ in terms of a parameter $R>1$. First we compute 
\begin{equation}
  \label{eq:comm1i}
 \langle i[H, B_\kappa^2]\rangle _u= \lim_{R\to\infty}\langle i[H,
 B_\kappa^2]\rangle _{u_R}=-2\Im \langle v,B_\kappa^2 u\rangle.  
\end{equation}  
 Using \eqref{eq:comm1i} and the calculus, cf.
 \cite[Theorems 18.5.4, 18.6.3, 18.6.8]{Ho1}, we estimate 
\begin{equation}
  \label{eq:comm2i}
 |\langle i[H, B_\kappa^2]\rangle _u|\leq C_1\|v\|_{s+2s_0}\big (\|A_\kappa
 u\|+\|u\|_{s-\epsilon_2/2}\big)
\leq \tfrac{1}{2} \|A_\kappa
 u\|^2 +C_2.
\end{equation}

 On the other hand, using \eqref{eq:8ai}, \eqref{eq:nowavefronti} and 
 \eqref{eq:Poi3iyy},   we infer that 
\begin{align*}
  \langle i[H-V_3, B_\kappa^2]\rangle _u&=\lim_{R\to\infty}\langle i[H-V_3, B_\kappa^2]\rangle _{u_R}\\&\leq -2\|A_\kappa u\|^2+C_3\|(H-V_3-\lambda)u\|_{s+\mu}\|A_\kappa
 u\|+C_4,
\end{align*} and whence, using \eqref{eq:T1} to bound $\|(H-V_3-\lambda)u\|_{s+\mu}\leq C\big (\|v\|_{s+\mu}+\|u\|_{s-\epsilon_2/2}\big)$, that 
\begin{equation}
  \label{eq:comm3i}
 \langle i[H-V_3, B_\kappa^2]\rangle _u\leq -\tfrac{3}{2} \|A_\kappa
 u\|^2+ C_5.
\end{equation} 

Clearly another application of \eqref{eq:T1} yields
\begin{equation}
  \label{eq:hgyi}
 \langle i[V_3, B_\kappa^2]\rangle _u\leq  C_6.
\end{equation}

Combining \eqref{eq:comm2i}--\eqref{eq:hgyi} yields 
\begin{equation*}
 \|A_\kappa u\|^2\leq C_7=C_2+C_5+C_6,
\end{equation*} 
which in combination with  the property that $f ( k^+)<0$ in
turn gives a uniform bound 
\begin{equation}
  \label{eq:comm4i}
 \|X_{\kappa}^{-\epsilon _2/2}\Opw \big (\chi_{z_1}F(r>2)\big)u\|^2_s\leq C_8;
\end{equation}
 here $\chi_{z_1}$ signifies any phase-space localization
factor
of the form entering in \eqref{eq:WF^sa} supported in a sufficiently
small neighbourhood of  the point $z_1=(\omega_1,\bar c_1, k^+)$.

We let $\kappa\to 0$ in \eqref{eq:comm4i} and infer that $z_1 \notin
WF_{\sc}(u)$, which is a contradiction; whence \eqref{eq:bichi} is proven.

\noindent{\bf Step II}. We need to remove 
 the conditions of Step I,
$u\in L^{2,s-\epsilon_2/2}$ and $v\in L^{2,s+2s_0}$. This will be accomplished  by an iteration and modification of the procedure of Step I. 

Pick $t_1\in\R$ such that $v\in L^{2,t_1}$. Pick $t<s$ such that $u\in L^{2,t}$ and define $s_m=\min(s,t+m\epsilon_2/2)$ for $m\in \N$. Let correspondingly $k^+_m$ be given by \eqref{eq:2aa} with $s\to s_m$. Clearly 
\begin{equation}
  \label{eq:eq4i}
k^+_m \leq k^+_{m-1};\;m=2,3,\dots 
\end{equation}
 
 If $u\in L^{2,s_m-\epsilon_2/2}$ and $v\in L^{2,s_m+2s_0}$ then
 \eqref{eq:y2} with $k^+\to k^+_m$  and $s\to s_m$  follows from Step
 I. Although we shall not verify these conditions we remark that a
 suitable micro-local modification  will come into play in an
 inductive procedure, see (\ref{eq:abe2}) and (\ref{eq:abe3}) below. We shall indeed (inductively) show \eqref{eq:y2} with $k^+\to k^+_m$  and $s\to s_m$, i.e. that 
 \begin{equation}
  \label{eq:mimpli} k^+_m<1 \Rightarrow \{b=k^+_m\}\cap
  WF^{s_m+2s_0}_{\sc}(v)\neq \emptyset.
\end{equation} Notice that \eqref{eq:y2} follows  by using \eqref{eq:mimpli} for an  $m$  taken so large that $s_m=s$.

Let us consider  the start of induction given by $m=1$. In this case
obviously  $u\in L^{2,s_m-\epsilon_2/2}$. Suppose on the contrary that
(\ref{eq:mimpli}) is false. Then we consider the following case:
\begin{equation}
  \label{eq:abe}
 k^+_m<1 \text{ and } \{b=k^+_m,b^2+\bar c^2\leq 6\}\cap
  WF^{s_m+2s_0}_{\sc}(v)= \emptyset.
\end{equation}
 We let $\epsilon>0$, $J$ and $f$  be chosen as in Step I with $k^+\to
 k^+_m$.   Let $\tilde f \in
 C^{\infty}_c(]k^+-3\epsilon,k^++2\epsilon[)$ with $\tilde f=1$ on
 $J$. It follows from    \eqref{eq:abe}, possibly by taking
 $\epsilon>0$  smaller than needed in  Step I,  that
 \begin{equation}
  \label{eq:abe2}
 I_\epsilon v\in L^{2,s_m+2s_0};\;I_\epsilon  =\Opw \big (\tilde f(b)F(b^2+\bar c^2<6)\big ).
\end{equation} Next, we introduce the symbol $b_\kappa$ by
\eqref{eq:propobsi} (with $s\to s_m$) and proceed as in Step I. As for
the bounds \eqref{eq:comm2i}, we can replace $v$ by $I_\epsilon v$ up
to addition of a  term of the form $C\big(\|v\|_{t_1}^2+
\|u\|_{s_m-\epsilon_2/2}^2\big)$. Similarly we can verify
\eqref{eq:comm3i} and \eqref{eq:hgyi}  (using conveniently
\eqref{eq:tilT1}). So again we obtain \eqref{eq:comm4i} (with $s\to
s_m$), and therefore a contradiction as in Step I. We have shown \eqref{eq:mimpli} for $m=1$.

Now suppose $m\geq 2$ and that \eqref{eq:mimpli} is verified for $m-1$. We need to show the statement for the given $m$. Due to \eqref{eq:eq4i} and the induction hypothesis, we can assume that 
\begin{equation} \label{eq:abejj} k^+_m <k^+_{m-1}.
\end{equation}  Again we argue by contradiction assuming
\eqref{eq:abe}. We proceed as above noticing that  it follows from \eqref{eq:abejj} that in addition to \eqref{eq:abe2} we have
\begin{equation}
  \label{eq:abe3}
 I_\epsilon u\in L^{2,s_{m-1}};
\end{equation} at this point we   possibly need choosing $\epsilon>0$
even smaller (viz. $\epsilon<(k^+_{m-1}-k^+_m)/2$). By replacing $v$ by $I_\epsilon v$ and $u$ by $I_\epsilon u$  at various points in the procedure of Step I (using \eqref{eq:abe2} and \eqref{eq:abe3}, respectively) we obtain again a contradiction. Whence \eqref{eq:mimpli} follows.
\end{proof}

\begin{cor} \label{cor:renerg2}Let $s\in\R$, $u\in L^{2,-\infty}$, $v\in L^{2,s+2s_0}$,
$(H-\lambda)u=v$,  $k\in]-1,1[$ and $\{b=k\}\cap WF_\sc^s(u)=\emptyset$. Then
\begin{equation}
WF_\sc^s(u)\subseteq\{b=1\}\cup\{b=-1\}.\end{equation}
\end{cor}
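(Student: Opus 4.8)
The plan is to deduce the corollary by combining Propositions \ref{prop:enerest} and \ref{prop:propa5aa}, which together already contain all the analytic content; the rest is bookkeeping with weights.

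The hypothesis $v\in L^{2,s+2s_0}$ has two consequences I would record first. Since $s_0>s_2\ge 0$, we get $v\in L^{2,s+s_2}$, so the first (``elliptic'') assertion of Proposition \ref{prop:enerest} applies and gives
\[
WF^s_\sc(u)\subseteq\{z\in\T^*\mid b^2+\bar c^2=1\},
\]
in particular $|b|\le 1$ on $WF^s_\sc(u)$; and of course $v\in L^{2,s+2s_0}$ also means $WF^{s+2s_0}_\sc(v)=\emptyset$. Before I can invoke Proposition \ref{prop:propa5aa} I also need its standing hypothesis \eqref{eq:8ai}, and this is the one spot where the degeneration of $g$ at infinity must be watched. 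Using the lower bound $g(r)\ge c\,r^{-\mu/2}$ for $r\ge 1$ (equivalently $g^{-1}\le C\langle x\rangle^{\mu/2}$), together with the fact that $g$ is bounded below by a positive constant when $\lambda>0$, a short weight count turns $v\in L^{2,s+2s_0}$ into $g^{-1}v\in L^{2,s+\mu/2}$ (here $\mu<2$, and for $\lambda=0$ the identity $2s_0-\mu/2=1$, are used). Then \eqref{eq:highesb} of Proposition \ref{prop:enerest}, applied with $s$ replaced by $s+\mu/2$, gives $g\,\Opw\big(\langle\xi/g\rangle_1^2F(b^2+\bar c^2-1>\delta)\big)u\in L^{2,s+\mu/2}$ for every $\delta>0$, and multiplying through by $g^{-1}$ (which costs at most $\mu/2$ in weight) produces \eqref{eq:8ai}.

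At this point all hypotheses of Proposition \ref{prop:propa5aa} hold: $u,v\in L^{2,-\infty}$, $(H-\lambda)u=v$, $k\in\,]-1,1[$, $\{b=k\}\cap WF^s_\sc(u)=\emptyset$, and \eqref{eq:8ai}. Because $WF^{s+2s_0}_\sc(v)=\emptyset$, the right-hand sides of \eqref{eq:y2} and \eqref{eq:y1} cannot occur, so $k^+=1$ and $k^-=-1$, with $k^\pm$ as in \eqref{eq:2aa}--\eqref{eq:1aa}. Unwinding these suprema and infima, this is exactly the statement that $\{b\in\,]-1,1[\,\}\cap WF^s_\sc(u)=\emptyset$. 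Intersecting with the elliptic inclusion above, which forces $|b|\le 1$ on $WF^s_\sc(u)$, we are left with $WF^s_\sc(u)\subseteq\{|b|=1\}=\{b=1\}\cup\{b=-1\}$, the asserted conclusion. The only step that is not automatic is the verification of \eqref{eq:8ai}, and even that is mere weight arithmetic once Proposition \ref{prop:enerest} is cited; the actual propagation estimate has already been carried out in Proposition \ref{prop:propa5aa}.
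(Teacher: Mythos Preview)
Your proof is correct and follows the same approach as the paper: verify the hypothesis \eqref{eq:8ai} via Proposition \ref{prop:enerest}, then invoke Proposition \ref{prop:propa5aa}. The paper's own proof is a two-line sketch that says exactly this, while you have spelled out the weight arithmetic needed to pass from $v\in L^{2,s+2s_0}$ to \eqref{eq:8ai}; that arithmetic is sound (in fact $g^{-1}v\in L^{2,s+1}$ in both cases $\lambda=0$ and $\lambda>0$, which is a bit stronger than the $L^{2,s+\mu/2}$ you record, but your version suffices). One cosmetic point: from the contrapositives of \eqref{eq:y2}--\eqref{eq:y1} you only conclude $k^+\ge 1$ and $k^-\le -1$, not equality, but this still yields $\{b\in\,]-1,1[\,\}\cap WF^s_\sc(u)=\emptyset$, so the argument is unaffected.
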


\proof The condition \eqref{eq:8ai} is guaranteed by Proposition 
\ref{prop:enerest}. Then we apply Proposition \ref{prop:propa5aa}. \qed

\subsection{Wave front set bounds  of the boundary value of the resolvent}

 Proposition
\ref{prop:resolvent_basic2} implies that the symbol $R(\lambda\pm\i0)$  in
many cases can be treated as an  operator, although initially it is defined in terms of
a quadratic form. Notice that Remark \ref{remarks:nosmmmmo}
\ref{it:csmoottaa})  in one
situation gives a slightly  different and direct interpretation of
$R(\lambda\pm\i0)$ (as a limit of operators  and hence avoiding
quadratic forms). It will however be convenient to investigate possible other
 interpretations of states $R(\lambda\pm\i0)v$ (for which in
 particular Remark \ref{remarks:nosmmmmo}
\ref{it:csmoottaa})  does not apply) and study associated wave front
set bounds. The case of $R(\lambda-\i0)$ is  similar to that of 
 $R(\lambda+\i0)$ and will not be elaborated regarding proofs.

For sufficiently decaying states $v$ we have (using in
(\ref{item:lap1ii}) the slightly abused  notation
$a:=b^2+\bar c^2$ for generic points $z=(\omega,\bar c,b)
=(\omega,b\omega+\bar c)\in \T^*)$:
\begin{prop}
 \label{prop:proposi}  Let $s>s_0$ and $v\in L^{2,s}$.
 Then the following is true:
\begin{enumerate}[\normalfont (i)]
\item\label{item:lap1}
For any $t>s_0$
\[R(\lambda\pm \i 0)v=\lim_{\epsilon\searrow0}R(\lambda\pm \i\epsilon)v\text{ exists in } L^{2,-t}.\]
\item\label{item:lap1ii}
\[WF_\sc^{s-s_2}\left(R(\lambda\pm 
\i 0)v\right)\subseteq\{a=1\}.\]
\item For any $\epsilon>0$,
\begin{equation}
WF_\sc^{s-2s_0-\epsilon}
\left(R(\lambda\pm \i 0)v\right)\subseteq\{b=\pm 1\}.\label{labb}\end{equation}
\end{enumerate}
\end{prop}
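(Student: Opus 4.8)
The plan is to first realise $R(\lambda\pm\i0)v$ as an honest element of a weighted space using the limiting absorption principle in the sharp form of Remark~\ref{remarks:nosmmmmo}~\ref{it:csmoottaa}), then to note that it solves $(H-\lambda)u=v$, and finally to hand it over to the wave front set estimates of Subsection~\ref{Some other estimates}; I treat only the ``$+\i0$'' case, the ``$-\i0$'' case being the mirror image (with $b=1$ replaced by $b=-1$). For \ref{item:lap1}: Remark~\ref{remarks:nosmmmmo}~\ref{it:csmoottaa}) tells us that $\langle x\rangle^{-t'}R(\zeta)\langle x\rangle^{-t'}$ extends to a (uniformly H\"older, hence) continuous $\mathcal B(L^2)$-valued function up to $\zeta=\lambda$ for every $t'>s_0$ --- at $\lambda=0$ this is $t'=\delta+\tfrac\mu4$, $\delta>\tfrac12$, and at $\lambda>0$ it is the positive-energy analogue with $t'=\delta>\tfrac12=s_0$. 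Given $v\in L^{2,s}$, $s>s_0$, and $t>s_0$, set $t'=\min(s,t)>s_0$ and factor $\langle x\rangle^{-t}R(\zeta)\langle x\rangle^{-s}=\langle x\rangle^{t'-t}\bigl(\langle x\rangle^{-t'}R(\zeta)\langle x\rangle^{-t'}\bigr)\langle x\rangle^{t'-s}$: the outer factors are bounded multiplications and the middle one converges in $\mathcal B(L^2)$, so $R(\lambda+\i\epsilon)v=\langle x\rangle^{t}\bigl[\langle x\rangle^{-t}R(\lambda+\i\epsilon)\langle x\rangle^{-s}\bigr]\bigl(\langle x\rangle^{s}v\bigr)$ converges in $L^{2,-t}$. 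Take $u:=R(\lambda\pm\i0)v$ to be this limit (consistent with, and for $v\in\mathcal S(\R^d)$ extending, the form-sense definition); passing to the limit in $(H-\lambda)R(\lambda+\i\epsilon)v=v+\i\epsilon R(\lambda+\i\epsilon)v$, the last term tending to $0$ in $L^{2,-t}$, gives $(H-\lambda)u=v$ in the form sense.

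For \ref{item:lap1ii} this is then a one-line consequence of Proposition~\ref{prop:enerest}: with $u=R(\lambda\pm\i0)v\in L^{2,-\infty}$ solving $(H-\lambda)u=v$ and $v\in L^{2,s}=L^{2,(s-s_2)+s_2}$, the first assertion of that proposition, applied with its parameter ``$s$'' taken to be $s-s_2$, yields $WF^{s-s_2}_{\sc}(u)\subseteq\{b^2+\bar c^2=1\}=\{a=1\}$.

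For \eqref{labb}, fix $\epsilon>0$ and set $s':=s-2s_0-\epsilon$; since $2s_0>s_2$ (as $\mu<2$ and in the case $\lambda>0$ trivially) we have $s'<s-s_2$, so $WF^{s'}_{\sc}(u)\subseteq WF^{s-s_2}_{\sc}(u)\subseteq\{a=1\}$ by \ref{item:lap1ii}, in particular $|b|\le1$ there, and it suffices to exclude each $z_1=(\omega_1,\bar c_1,k_1)\in\T^*$ with $k_1<1$. For such a $z_1$ put $\bar\sigma:=(1-k_1)/2$ and pick $\chi_-\in C_c^\infty(\R)$ equal to $1$ near $a_1:=k_1^2+\bar c_1^2$ and $\tilde\chi_-\in C^\infty(\R)$ equal to $1$ near $k_1$ with $\sup\supp\tilde\chi_-\le 1-\bar\sigma$. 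Apply \eqref{eq:PsDO_part21} with its parameter ``$s$'' set to $0$, ``$\delta>\tfrac12$'' close to $\tfrac12$ and ``$t$'' close to (and below) $s-s_0$; converting the weights $g^{1/2}\langle x\rangle^{t-\delta}$ and $g^{1/2}\langle x\rangle^{-t-\delta}$ by $g\asymp\langle x\rangle^{-\mu/2}$ (resp.\ $g\asymp1$ if $\lambda>0$), one gets $\Opl\bigl(\chi_-(a)\tilde\chi_-(b)\bigr)R(\lambda+\i0)v\in L^{2,s'}$, the two factors of type $g^{1/2}\langle x\rangle^{-\delta}$ producing exactly the loss of $2s_0$. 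If the neighbourhoods $\mathcal N_{\omega_1},\mathcal N_{\eta_1}$ entering the definition of $WF^{s'}_{\sc}$ are taken small enough, then $\chi_-(a)\tilde\chi_-(b)\equiv1$ on the support of every admissible $\chi_{z_1}F(r>2)$, so by the pseudodifferential calculus $\Opw\bigl(\chi_{z_1}F(r>2)\bigr)u=B\,\Opl\bigl(\chi_-(a)\tilde\chi_-(b)\bigr)u+R'u$ with $B$ bounded on weighted $L^2$ spaces and $R'$ mapping $L^{2,-\infty}$ into $L^{2,\infty}$; hence $\Opw\bigl(\chi_{z_1}F(r>2)\bigr)u\in L^{2,s'}$, i.e.\ $z_1\notin WF^{s'}_{\sc}(u)$. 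Therefore $WF^{s'}_{\sc}(u)\subseteq\{a=1\}\cap\{b\ge1\}=\{b=1\}$, which is \eqref{labb}; since $\epsilon>0$ was arbitrary this finishes the proof.

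I expect the only genuinely delicate point to be the bookkeeping in this last step --- converting the $g^{1/2}$-weights of \eqref{eq:PsDO_part21} into powers of $\langle x\rangle$, pinning down the triple ``$s$'', ``$t$'', ``$\delta$'' that reproduces precisely the order $s-2s_0-\epsilon$, and carrying out the routine but slightly fiddly calculus reduction from the operator bound on $\Opl(\chi_-(a)\tilde\chi_-(b))R(\lambda+\i0)$ to the stated wave front set inclusion within the nonstandard classes $\Psi_{\unif}(m,g_{\mu,\lambda})$. Parts \ref{item:lap1} and \ref{item:lap1ii} are essentially bookkeeping around results already established.
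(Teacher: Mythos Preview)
Your proof is correct and follows essentially the same approach as the paper's: part~(\ref{item:lap1}) via Remark~\ref{remarks:nosmmmmo}~\ref{it:csmoottaa}), part~(\ref{item:lap1ii}) via Proposition~\ref{prop:enerest} applied to the solution $u=R(\lambda\pm\i0)v$ of $(H-\lambda)u=v$, and part~(iii) via the microlocal resolvent bound \eqref{eq:PsDO_part21} of Proposition~\ref{prop:resolvent_basic2}~(\ref{some_label21}). The paper's proof is much terser---it simply cites these results without the weight bookkeeping you carry out---but the content is the same; your explicit conversion of the $g^{1/2}$-weights and the calculus reduction from the $\Opl$-bound to the $\Opw$-based wave front set statement are exactly the routine details the paper suppresses.
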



\proof
{\bf Ad (i). } This statement follows from Remark \ref{remarks:nosmmmmo}
\ref{it:csmoottaa}); notice that the notation for the limit conforms
with  Proposition
\ref{prop:resolvent_basic2} (\ref{it:one2}).
\begin {comment}
By Proposition
\ref{prop:resolvent_basic2}
{\rm (\ref{it:one2})}, if $w\in L^{2,t}$, then
\[\langle w,R(\lambda+\i 0)v\rangle\leq C\|w\|_{t}\|v\|_{s}.\]
But $L^{2,-t}=(L^{2,t})^*$ (where $(L^{2,t})^*$ denotes the space of continuous
antilinear functionals on $L^{2,t}$). Therefore, 
$R(\lambda+\i 0)v\in L^{2,-t}$.
\end {comment}

{\bf Ad (ii).}  We have $(H-\lambda)u=v$. Therefore (ii) follows
from Proposition \ref{prop:enerest} (alternatively by using   Corollary \ref{cor:renerg}).

{\bf Ad (iii).}
Let $\chi_-\in C_c^\infty(\R)$ such that $\chi_-$ is zero around 1. Let  $\chi\in
C_c^\infty(\R)$. Then by Proposition \ref{prop:resolvent_basic2} {\rm
  \eqref{some_label21}}, for any $\epsilon>0$ 
\[\Opw(\chi(a)\chi_-(b))R(\lambda+\i0)v\in L^{2,s-2s_0-\epsilon}.\]
 \qed

Based completely on Proposition \ref{prop:resolvent_basic2} one can
give a meaning to $R(\lambda\pm\i0)v$ also for  some states $v$ with a
slower decay provided they have an appropriate phase space
localization. (In the statement below $C_0\geq1$ is  given in
agreement with Proposition
\ref{prop:resolvent_basic2}  {\rm\eqref{item:C-02}}.)

\begin{prop}\label{prop:proposi1} Let $s\leq s_0$ and $v\in
  L^{2,s}$. Suppose that for some $t>s_0$
and  $ k\in ]-1,1]$ (or $ k\in [-1,1[$) 
\begin{equation}
WF_\sc^{t}(v)\cap\{b< k,\, a< 2C_0\}  =\emptyset\;\;(\text{or }WF_\sc^{t}(v)\cap\{b> k,\, a< 2C_0\} =\emptyset).\label{labb2}\end{equation}
\begin{enumerate}[\normalfont (i)]
\item \label{item:r1}For any $\epsilon>0$ there  exists
\[R(\lambda+\i0)v=\lim_{\kappa\searrow0}R(\lambda+\i0)v_\kappa\text\;\;(R(\lambda-\i0)v:=\lim_{\kappa\searrow0}R(\lambda-\i0)v_\kappa)\text{  in } L^{2,s-2s_0-\epsilon},\]
 where $v_\kappa(x):=F(\kappa|x|<1)v(x)$.
\item\label{item:r2}
\[WF_\sc^{s-s_2}\left(R(\lambda+\i 0)v\right)\subseteq\{a=1\}\;\;(WF_\sc^{s-s_2}\left(R(\lambda-
\i 0)v\right)\subseteq\{a=1\}).\]
\item\label{item:r3}
 For any $\epsilon>0$
\begin{align}
WF_\sc^{t-2s_0-\epsilon}&(R(\lambda+\i0)v)\cap\{b< k,\,a\leq C_0\}=\emptyset\label{labb6}\\&(WF_\sc^{t-2s_0-\epsilon}(R(\lambda-\i0)v)\cap\{b> k,\,a\leq C_0\}=\emptyset).\nonumber
\end{align}
\end{enumerate}
\end{prop}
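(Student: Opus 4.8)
The plan is to reduce all three assertions to the uniform estimates of Proposition~\ref{prop:resolvent_basic2} by means of a microlocal partition. Fix genuine partitions of unity $1=\phi_0(a)+\phi_1(a)$ with $\phi_0$ supported in $\{a<\tfrac32C_0\}$, $\phi_1$ in $\{a>C_0\}$, and $1=\psi_0(b)+\psi_1(b)$ with $\psi_0$ supported in $\{b<k\}$ and $\psi_0=1$ on $\{b\le k_1\}$ for some fixed $k_1\in\,]-1,k[$ (and the mirror choice for the $-\i0$ case). Since $\phi_j,\psi_j$ are functions of the symbols $a,b$ alone, $I=\Opr(\phi_0\psi_0)+\Opr(\phi_0\psi_1)+\Opr(\phi_1)$ exactly, whence $v=v_{\mathrm g}+v_{\mathrm r}+v_{\mathrm e}$. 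Here $v_{\mathrm g}=\Opr(\phi_0\psi_0)v$ is microlocally supported in $\{a<\tfrac32C_0,\,b<k\}$, so by \eqref{labb2} and a compactness argument $v_{\mathrm g}\in L^{2,t}$, $t>s_0$; $v_{\mathrm r}=\Opr(\phi_0\psi_1)v$ is microlocally supported in $\{a<\tfrac32C_0,\,b>k_1\}$, but for the same reason $WF^t_\sc(v_{\mathrm r})\subseteq\{b\ge k\}$; and $v_{\mathrm e}=\Opr(\phi_1)v$ lives in the elliptic region $\{a>C_0\}$. I treat $R(\lambda+\i0)v_{\mathrm g}$ by Proposition~\ref{prop:proposi} (legitimate since $t>s_0$), $R(\lambda+\i0)v_{\mathrm r}$ by Proposition~\ref{prop:resolvent_basic2}\,(\ref{some_label21}) (plus (\ref{it:last}) for the incoming localisation near $b=-1$), and $R(\lambda+\i0)v_{\mathrm e}$ by Proposition~\ref{prop:resolvent_basic2}\,(\ref{item:C-02}) (plus (\ref{it:last+}) for the disjoint-support refinements); all of these estimates apply to inputs of arbitrarily slow decay once the free parameters in them are taken large.

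For (\ref{item:r1}): each $v_\kappa=F(\kappa|x|<1)v$ lies in $L^{2,\infty}$, so $R(\lambda+\i0)v_\kappa$ is defined as in Remark~\ref{remarks:nosmmmmo}\,\ref{it:csmoottaa}); decompose $v_\kappa=v_{\kappa,\mathrm g}+v_{\kappa,\mathrm r}+v_{\kappa,\mathrm e}$. Dominated convergence gives $v_{\kappa,\mathrm g}\to v_{\mathrm g}$ in $L^{2,t'}$ for any $t'\in\,]s_0,t[$ (the truncation error is supported far out), whence $R(\lambda+\i0)v_{\kappa,\mathrm g}$ is Cauchy in some $L^{2,-\tau}$, $\tau\in\,]s_0,2s_0+\epsilon-s[$, by Proposition~\ref{prop:proposi}\,(\ref{item:lap1}); while $v_{\kappa,\mathrm r}\to v_{\mathrm r}$ and $v_{\kappa,\mathrm e}\to v_{\mathrm e}$ in $L^{2,s}$, and feeding these into \eqref{eq:PsDO_part221} and \eqref{eq:quantum_part2222} — whose input weight can be arranged to contain $L^{2,s}$ and whose output weight, after optimising the parameter $\delta\in\,]\tfrac12,\tfrac{1+\epsilon}2[$, lies in $L^{2,s-2s_0-\epsilon}$ — gives convergence there. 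Hence $R(\lambda+\i0)v:=\lim_\kappa R(\lambda+\i0)v_\kappa$ exists in $L^{2,s-2s_0-\epsilon}$; the $-\i0$ case is identical. For (\ref{item:r2}): letting $\kappa\to0$ in $(H-\lambda)R(\lambda+\i0)v_\kappa=v_\kappa$ gives $(H-\lambda)u=v$ with $u:=R(\lambda+\i0)v\in L^{2,-\infty}$, and Proposition~\ref{prop:enerest} with the datum $v\in L^{2,(s-s_2)+s_2}$ yields $WF^{s-s_2}_\sc(u)\subseteq\{b^2+\bar c^2=1\}=\{a=1\}$.

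For (\ref{item:r3}) write $u=u_{\mathrm g}+u_{\mathrm r}+u_{\mathrm e}$ accordingly (each a limit of the corresponding truncations, with $(H-\lambda)u_{\ast}=v_{\ast}$). By eq.~\eqref{labb} of Proposition~\ref{prop:proposi} (applied with $t$ in the role of $s$), $WF^{t-2s_0-\epsilon}_\sc(u_{\mathrm g})\subseteq\{b=1\}$, which misses $\{b<k\}$. For $u_{\mathrm e}=R(\lambda+\i0)\Opr(\phi_1(a))v$ the disjoint-support estimate \eqref{eq:disjoint_a222}, with a left quantisation localising in $\{a\le C_0,\,b<k\}$, shows $WF_\sc(u_{\mathrm e})\cap\{a\le C_0,\,b<k\}=\emptyset$ at every level. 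The main point is $u_{\mathrm r}$: here $WF^t_\sc(v_{\mathrm r})\cap\{b<k,\,a<2C_0\}=\emptyset$, and the incoming estimate \eqref{eq:disjoint_b} gives $WF_\sc(u_{\mathrm r})\cap\{a\le C_0,\,b<k_1-\eta\}=\emptyset$, providing a starting slice $\{b=k_0\}$, $k_0\in\,]-1,k_1[$, disjoint from $WF^{t-2s_0-\epsilon}_\sc(u_{\mathrm r})$. I then invoke the propagation mechanism of Proposition~\ref{prop:propa5aa}: the characteristic set $\{b^2+\bar c^2=1\}$ lies in $\{a<2C_0\}$, on it $\{h,b\}>0$ by \eqref{eq:virial} (so $b$ strictly increases along the flow), and $WF^{t-\epsilon}_\sc(v_{\mathrm r})\cap\{b<k\}=\emptyset$; hence the regularity of $u_{\mathrm r}$ propagates from $b=k_0$ up to $k^+\ge k$, i.e.\ $WF^{t-2s_0-\epsilon}_\sc(u_{\mathrm r})\cap\{b<k\}=\emptyset$. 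Summing the three contributions gives \eqref{labb6}; the $-\i0$ case is symmetric.

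I expect two steps to be delicate. The first is the weight accounting in (\ref{item:r1}): the estimates of Proposition~\ref{prop:resolvent_basic2} must be balanced exactly, and it is only the freedom $\delta>\tfrac12$ in them — together with $s_2\le2s_0$, i.e.\ $\mu<2$ — that lets the limit land in $L^{2,s-2s_0-\epsilon}$ rather than a strictly worse space; this is why the $\epsilon$ cannot be removed. The second, and harder, point is that the hypothesis \eqref{eq:8ai} of Proposition~\ref{prop:propa5aa} is only directly available for $u_{\mathrm r}$ at a level governed by the global decay $s$ of $v_{\mathrm r}$ — via \eqref{eq:highesb} of Proposition~\ref{prop:enerest} — which in general falls short of the target level $t-2s_0-\epsilon$. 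One must therefore either bootstrap the level (at each stage splitting off the $\{b<k\}$-part of $v_{\mathrm r}$, which lies in $L^{2,t}$ and is absorbed into the $\{b=1\}$ wave front by Proposition~\ref{prop:proposi}) or, equivalently, re-run the proof of Proposition~\ref{prop:propa5aa} with the cut-off $F(b^2+\bar c^2<3)$ replaced by one supported in a thin shell around the characteristic set, so that \eqref{eq:8ai} is only needed microlocally near $\{b<k\}$, where $v_{\mathrm r}$, hence $u_{\mathrm r}$ by elliptic regularity (Corollary~\ref{cor:renerg}), is regular of order $t-s_2\ge t-2s_0-\epsilon$.
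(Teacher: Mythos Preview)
Your treatment of (\ref{item:r1}) and (\ref{item:r2}) matches the paper's: the same three-piece decomposition into an $L^{2,t}$ piece, an ``outgoing'' piece, and an elliptic piece, with Proposition~\ref{prop:resolvent_basic2} (\ref{it:one2}), (\ref{some_label21}), (\ref{item:C-02}) handling each respectively.

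For (\ref{item:r3}) you take a genuinely harder route than the paper, and the detour is unnecessary. The paper does \emph{not} invoke the propagation result Proposition~\ref{prop:propa5aa} here at all. Instead, it lets the decomposition of $v$ depend on the test localization: given a target cutoff $\chi_-^1\in C_c^\infty(]-\infty,k[)$ (and $\chi^1\in C_c^\infty(]-\infty,2C_0[)$), choose a second cutoff $\chi_-^2$ with $\supp\chi_-^2\subset]-\infty,k[$ and $\chi_-^2=1$ on a neighbourhood of $\supp\chi_-^1$ (and $\chi^2$ analogously for the $a$ variable). Then $\Opw(\chi^2(a)\chi_-^2(b))v\in L^{2,t}$ by \eqref{labb2}, so Proposition~\ref{prop:proposi}\,(iii) gives its resolvent wave front set inside $\{b=1\}$. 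For the two complementary pieces one has \emph{disjoint supports} between the left test operator and the right cutoff --- in $b$ for $\Opw(\chi^2(a)(1-\chi_-^2(b)))v$ and in $a$ for $\Opw(1-\chi^2(a))v$ --- so \eqref{eq:disjoint_b} and \eqref{eq:disjoint_a222} give $L^{2,\infty}$ directly. That is the whole argument: three lines, no propagation, no condition \eqref{eq:8ai} to verify.

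Your difficulty arises because you fix $k_1$ once and for all, which forces you to propagate regularity of $u_{\mathrm r}$ through the strip $k_1\le b<k$; the paper avoids this by letting (the analogue of) $k_1$ slide up to the test point. Your ``Fix 1'', if carried out carefully, amounts to exactly this --- one further split of $v_{\mathrm r}$ adapted to the test cutoff --- and then no iteration is needed; calling it a bootstrap obscures that a single step suffices. Your ``Fix 2'' is more problematic: it implicitly appeals to a \emph{microlocal} elliptic regularity statement (regularity of $u_{\mathrm r}$ at $\{a\approx 3,\,b<k\}$ from regularity of $v_{\mathrm r}$ only in that region), whereas Proposition~\ref{prop:enerest} as stated gives \eqref{eq:highesb} from \emph{global} decay of $g^{-1}v$, so this step would need its own proof.
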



\proof
{\bf Ad (i).} Let $\chi\in C_c^\infty(]-\infty,2C_0[)$,
$\chi=1$ around $[0,C_0]$. Let $\chi_-\in C^\infty(\R)$ be chosen
such that  $\chi_-=1$ around $]-\infty,-1]$
and $\chi_-=0$ in  $[(k-1)/2,\infty[$. Then by the condition
(\ref{labb2}) and the calculus of pseudodifferential operators
\begin{equation*}
 \Opw(\chi(a)\chi_-(b))v_\kappa\longrightarrow\Opw(\chi(a)\chi_-(b))v\text{
   in } L^{2,t}\text{  as  } {\kappa\searrow0}.\end{equation*}
Whence by  Proposition \ref{prop:resolvent_basic2} (\ref{it:one2}), for any $\epsilon>0$,
\[u_1:=\lim_{\kappa\searrow0}R(\lambda+\i
0)\Opw(\chi(a)\chi_-(b))v_\kappa\text{ exists in } L^{2,-s_0-\epsilon}.\]

By Proposition
\ref{prop:resolvent_basic2}  {\rm\eqref{item:C-02}} we have
\begin{equation*}
u_2:=\lim_{\kappa\searrow0} R(\lambda+\i0)\Opw(1-\chi(a))v_\kappa\text{ exists in } L^{2,s-2s_0-\epsilon}.
\end{equation*}
By Proposition 
\ref{prop:resolvent_basic2}
{\rm  \eqref{some_label21}}
 we have
\begin{equation*}
u_3:=\lim_{\kappa\searrow0} R(\lambda+\i0)\Opw(\chi(a)(1-\chi_-(b))v_\kappa\text{
  exists in } L^{2,s-2s_0-\epsilon}.
\end{equation*}
But $s-2s_0\leq-s_0$.
Hence 
\begin{equation*}
R(\lambda+\i0)v:=\lim_{\kappa\searrow0} R(\lambda+\i0)v_\kappa=u_1+u_2+u_3
\in L^{2,s-2s_0-\epsilon}.
\end{equation*}

{\bf  Ad (ii).} This statement is proven as (ii) of the previous proposition.

{\bf Ad (iii).} 
Let $\chi^1,\,\chi^2\in C_c^\infty(]-\infty,2C_0[)$, $\chi^2=1$ around 
$[0,\max(\sup\supp \chi^1,C_0)]$. Let $\chi_-^1\in C_c^\infty(]-\infty,k[)$ and $\chi_-^2\in C^\infty(\R)$ such that $\chi_-^2=1$
around $]-\infty,\sup \supp \chi_-^1]$ and $ \supp \chi_-^2\subseteq ]-\infty,k[$. Then by the condition
\eqref{labb2}
\begin{equation*}\Opw(\chi^2(a)\chi_-^2(b))v\in L^{2,t}.\end{equation*}
  Whence, by Proposition
 \ref{prop:resolvent_basic2} (\ref{it:one2}),  noting that $t>s_0$, we obtain
\begin{equation*}
R(\lambda+\i0)\Opw(\chi^2(a)\chi_-^2(b))v\in L^{2,-s_0-\epsilon}
\end{equation*}
and
\begin{equation}
WF_\sc^{t-2s_0-\epsilon}\left(R(\lambda+\i0)
\Opw(\chi^2(a)\chi_-^2(b))v\right)\subseteq\{b=1\}.\label{labb3}
\end{equation}
By  Proposition
\ref{prop:resolvent_basic2}  {\rm \eqref{it:last}},
\begin{equation}
\Opw(\chi^1(a)\chi_-^1(b))R(\lambda+\i0)
\Opw(\chi^2(a)(1-\chi_-^2(b)))v\in L^{2,\infty},\label{labb4}\end{equation}
and by 
 Proposition
\ref{prop:resolvent_basic2}  {\rm \eqref{it:last+}},
\begin{equation}
\Opw(\chi^1(a)\chi_-^1(b))R(\lambda+\i0)
\Opw(1-\chi^2(a))v\in L^{2,\infty}.
\label{labb5}\end{equation}
Now
\eqref{labb3}--\eqref{labb5} yields
\begin{equation*}
\Opw(\chi^1(a)\chi_-^1(b))R(\lambda+\i0)
v\in L^{2,t-2s_0-\epsilon},
\label{labb59}\end{equation*} 
which implies
\eqref{labb6}.
\qed

We have yet another interpretation very similar to Proposition
\ref{prop:proposi} (\ref{item:lap1}):
\begin{prop}
 \label{prop:proposibb}  Fix real-valued  $\chi\in C_c^\infty(\R)$ and $ \tilde\chi\in
   C^\infty(\R)$ such that  $\inf
   \supp \tilde\chi>-1$ (or  $\sup
   \supp \tilde\chi<1$). Let $A:=\Opw(\chi(a)\tilde\chi(b))$. 
   Suppose $v\in
   L^{2,s}$ for some $s\leq s_0$.

 For any $\epsilon>0$ there exists
 \begin{align*}
   &R(\lambda+\i0)Av=\lim_{\kappa\searrow0}R(\lambda+\i\kappa)Av\text{
   in } L^{2,s-2s_0-\epsilon}\\&
(\text{or } R(\lambda-\i0)Av=\lim_{\kappa\searrow0}R(\lambda-\i\kappa)Av\text{
   in } L^{2,s-2s_0-\epsilon}). 
 \end{align*}
 Moreover  this limit agrees with
 the interpretation of Proposition \ref{prop:proposi1}
 (\ref{item:r1}).
\end{prop}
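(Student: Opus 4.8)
The plan is to deduce the statement from Proposition~\ref{prop:proposi1}~(\ref{item:r1}) together with the limiting absorption principle of Proposition~\ref{prop:resolvent_basic2}~(\ref{it:one2}) in its H\"older continuous form (Remark~\ref{remarks:nosmmmmo}~\ref{it:csmoottaa})) and the microlocal resolvent bounds of Proposition~\ref{prop:resolvent_basic2}~(\ref{item:C-02}), (\ref{some_label21}). I treat the case of $R(\lambda+\i0)$, so $\inf\supp\tilde\chi>-1$; the other case is symmetric. The first step is to check that $Av$ verifies the hypotheses of Proposition~\ref{prop:proposi1}. Since $\chi$ has compact support, $\chi(a)\tilde\chi(b)\in S_{\unif}(\langle\xi/g\rangle_1^{-N},g_{\mu,\lambda})$ for every $N$, so $A$ is bounded on each $L^{2,s}$ and $Av\in L^{2,s}$; moreover, if $z_1\in\T^*$ lies outside $\overline{\supp(\chi(a)\tilde\chi(b))}$ one can choose the cut-offs $\chi_{\omega_1},\chi_{\eta_1}$ in \eqref{eq:WF^sa} so that $\supp(\chi_{z_1}F(r>2))$ is disjoint, in the variables $(\hat x,\xi/g)$, from $\supp(\chi(a)\tilde\chi(b))$, whence the calculus makes $\Opw(\chi_{z_1}F(r>2))A$ smoothing and $z_1\notin WF^t_{\sc}(Av)$ for every $t$. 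Thus $WF^t_{\sc}(Av)\subseteq\overline{\supp(\chi(a)\tilde\chi(b))}$ for all $t\in\R$, and since $\inf\supp\tilde\chi>-1$ we may fix $k\in\,]-1,1]$ with $\{b<k\}\cap\overline{\supp(\chi(a)\tilde\chi(b))}=\emptyset$, so that $WF^t_{\sc}(Av)\cap\{b<k,\ a<2C_0\}=\emptyset$ for some $t>s_0$. Consequently Proposition~\ref{prop:proposi1}~(\ref{item:r1}), applied to $Av$, shows that
\[ u:=\lim_{\kappa'\searrow0}R(\lambda+\i0)F(\kappa'|x|<1)Av \]
exists in $L^{2,s-2s_0-\epsilon}$, and the task is reduced to showing $R(\lambda+\i\kappa)Av\to u$ in $L^{2,s-2s_0-\epsilon}$ as $\kappa\searrow0$.

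For this I would use the three-fold decomposition from the proof of Proposition~\ref{prop:proposi1}~(\ref{item:r1}): with $\chi_0\in C_c^\infty(]-\infty,2C_0[)$ equal to $1$ near $[-1,2C_0]$ and $\chi_-\in C^\infty(\R)$ equal to $1$ near $]-\infty,-1]$ with $\supp\chi_-\cap\supp\tilde\chi=\emptyset$, write $Av=w_1+w_2+w_3$ where $w_1=\Opw(\chi_0(a)\chi_-(b))Av$, $w_2=\Opw(1-\chi_0(a))Av$ and $w_3=\Opw(\chi_0(a)(1-\chi_-(b)))Av$. Here $w_1\in L^{2,\infty}$ because the $b$-supports of $\chi_-$ and $\tilde\chi$ are disjoint, so that $\Opw(\chi_0(a)\chi_-(b))A$ is smoothing; for $w_1$, Proposition~\ref{prop:resolvent_basic2}~(\ref{it:one2}) and the H\"older continuity down to the real axis give $R(\lambda+\i\kappa)w_1\to R(\lambda+\i0)w_1$ in $L^{2,-t}\subseteq L^{2,s-2s_0-\epsilon}$, which is also the limit produced by the spatial regularization. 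The pieces $w_2$ and $w_3$ are controlled by Proposition~\ref{prop:resolvent_basic2}~(\ref{item:C-02}) and~(\ref{some_label21}) respectively; the decisive point is that these estimates, by the way they are obtained in \cite{FS}, remain valid with $R(\lambda+\i0)$ replaced by $R(\lambda+\i\kappa)$ uniformly in $\kappa\in\,]0,1]$, which gives, for the relevant phase-space localizations, uniform bounds of the schematic form $\|R(\lambda+\i\kappa)\Opw(\cdots)w\|_{s-2s_0-\epsilon}\le C\|w\|_s$. Since $F(\kappa'|x|<1)w_j\to w_j$ in $L^{2,s}$ as $\kappa'\searrow0$, it follows that $R(\lambda+\i\kappa)(1-F(\kappa'|x|<1))w_j\to0$ in $L^{2,s-2s_0-\epsilon}$ as $\kappa'\searrow0$, \emph{uniformly} in $\kappa\in\,]0,1]$; combined with the elementary convergence $R(\lambda+\i\kappa)F(\kappa'|x|<1)w_j\to R(\lambda+\i0)F(\kappa'|x|<1)w_j$ as $\kappa\searrow0$ for each fixed $\kappa'$ (limiting absorption for the compactly supported state), a standard argument interchanging the two limits shows that $\lim_{\kappa\searrow0}R(\lambda+\i\kappa)w_j$ exists in $L^{2,s-2s_0-\epsilon}$ and equals $\lim_{\kappa'\searrow0}R(\lambda+\i0)F(\kappa'|x|<1)w_j$. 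Summing over $j=1,2,3$ gives both the existence of $\lim_{\kappa\searrow0}R(\lambda+\i\kappa)Av$ and its identification with $u$.

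I expect the main obstacle to be exactly this interchange of the limits $\kappa\searrow0$ and $\kappa'\searrow0$ for the pieces $w_2,w_3$: since $Av$ lies only in $L^{2,s}$ with $s\le s_0$, one cannot simply invoke boundedness of $R(\lambda+\i\kappa)$ on $L^{2,s}$, and one must genuinely exploit the microlocal structure, namely the localization of $a$ near or away from the critical sphere $\{a=1\}$ and of $b$ away from $-1$, through uniform-in-$\kappa$ versions of Proposition~\ref{prop:resolvent_basic2}~(\ref{item:C-02}), (\ref{some_label21}), taking care that the convergence $F(\kappa'|x|<1)w_j\to w_j$ is measured in the precise weighted norms those estimates provide, uniformly in $\kappa$. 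Everything else is a direct application of results already established above.
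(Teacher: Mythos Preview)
Your proposal is correct and relies on the same key input as the paper's proof, namely uniform-in-$\kappa$ microlocal resolvent bounds coming from \cite{FS}, but the route is different and somewhat longer.

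The paper proceeds more directly. It invokes the complex-energy version of \eqref{eq:PsDO_part221} (this is \cite[Lemma 4.10]{FS}): with $\zeta=\lambda+\i\kappa$ and $g_\zeta,a_\zeta,b_\zeta$ obtained by replacing $\lambda$ by $|\zeta|$ in the definitions, one has the uniform estimate
\[
\big\|(\langle x\rangle g_\zeta)^{-s}\langle x\rangle^{-t-\delta}g_\zeta^{1/2}R(\zeta)\Opw\big(\chi_-(a_\zeta)\tilde\chi_+(b_\zeta)\big)g_\zeta^{1/2}\langle x\rangle^{t-\delta}(\langle x\rangle g_\zeta)^s\big\|\le C.
\]
The monotonicities $g\le g_\zeta$, $a_\zeta\le a$, $|b_\zeta|\le|b|$ then imply that, for suitably chosen $\chi_-,\tilde\chi_+$, the operator $\big(\Opw(\chi_-(a_\zeta)\tilde\chi_+(b_\zeta))-1\big)A$ is smoothing, so the $\zeta$-localization can be inserted in front of $A$ at no cost. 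This yields a single uniform bound on $\langle x\rangle^{-t-\delta}g^{1/2}R(\zeta)A g^{1/2}\langle x\rangle^{t-\delta}$, and the existence of the limit plus its agreement with Proposition~\ref{prop:proposi1}~(\ref{item:r1}) follow immediately by a density argument, with no decomposition and no interchange of limits.

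Your three-piece decomposition and limit interchange is a legitimate alternative, and your identification of the ``decisive point'' is exactly right: what you assume---that the bounds of Proposition~\ref{prop:resolvent_basic2}~(\ref{item:C-02}),~(\ref{some_label21}) persist with $R(\lambda+\i0)$ replaced by $R(\lambda+\i\kappa)$ uniformly for the $\lambda$-dependent localizations---is true, but its justification is precisely the $\zeta$-symbol argument above. So the paper's proof is essentially the clean implementation of the step you flagged as the main obstacle; once one has that, the rest of your argument (invoking Proposition~\ref{prop:proposi1} first, then interchanging limits) becomes unnecessary. (Minor typo: your $\chi_0$ cannot be both supported in $]-\infty,2C_0[$ and equal to $1$ near $2C_0$; you presumably mean $\chi_0=1$ near $[0,C_0]$.)
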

\begin{proof} We need to invoke an extended version of the  bound
  \eqref{eq:PsDO_part221}, see  \cite [Lemma 4.10]{FS}. First notice
  that the symbols   $g$, and hence also $a$ and  $b$, obviously depend on $\lambda$. Let
  $\zeta=\lambda+\i\kappa$ and define  $g_\zeta$, $a_\zeta$ and $b_\zeta$ by
  replacing $\lambda$ by $|\zeta|$ in the definition  of $g$ in
  Section \ref{assump:conditions1} and of  $a$ and  $b$ in 
  \eqref{eq:def_a0_b}, respectively. Now we have the following extension of the
  bound \eqref{eq:PsDO_part221}: 
  
For all $\delta>\tfrac {1}{2}$ and all 
$s , t \geq  0$, there exists $C > 0$ such that for all $\kappa\in
]0,1]$ 
\begin{equation}\label{eq:bunifa}
  \|(\langle x \rangle g_\zeta)^{-s}\langle x \rangle^{-t-\delta} g_\zeta^{1\over 2}R(\zeta)
  \Opw(\chi_{-}(a_\zeta)\tilde{\chi}_{+}(b_\zeta)) g_\zeta^{1\over 2}\langle x \rangle^{t-\delta} (\langle x \rangle g_\zeta)^{s}\| \leq C.
\end{equation}

Although this will not be needed,  the bound \eqref{eq:bunifa} is in fact
locally uniform in $\lambda\geq 0$.

We pick in \eqref{eq:bunifa} the functions $\chi_{-}$ and
$\tilde{\chi}_+$  in agreement with Proposition
  \ref{prop:resolvent_basic2} (\ref{some_label21}) such that in
  addition  $\chi_{-}=1$ around $[0,\sup \supp \chi]$ and $\tilde\chi_{+}=1$
  around $[\min (0,\inf \supp \tilde\chi), \infty[$. Using the bounds $g\leq g_\zeta$,
  $a_\zeta\leq a$ and $|b_\zeta|\leq |b|$ we then obtain that for any
  $m\in \R$
  \begin{equation}
    \label{eq:opeq}
    \big(\Opw(\chi_{-}(a_\zeta)\tilde{\chi}_{+}(b_\zeta))
    -1\big)A\in \Psi(\langle x\rangle^{m}, g_{\mu,\lambda}).
  \end{equation}
By combining Remark \ref{remarks:nosmmmmo} \ref{it:csmoottaa}),
\eqref{eq:bunifa} (with $s=0$,  $t=s_0-s+\frac\epsilon2$ and $\delta=\frac 12 +\frac\epsilon2$)  and \eqref{eq:opeq} we obtain the
uniform bound:  For all $\kappa\in
]0,1]$ 
\begin{equation}
    \label{eq:opeq2}
  \|\langle x \rangle^{-t-\delta} g^{1\over 2}R(\zeta)
Ag^{1\over 2}\langle x \rangle^{t-\delta} \| \leq C.  
  \end{equation}

Obviously we obtain from \eqref{eq:opeq2} and a density argument that
indeed there exists the  limit 
\begin{equation*}
u:=\lim_{\kappa\searrow0}R(\lambda+\i\kappa)Av\text{
   in } L^{2,s-2s_0-\epsilon}.  
\end{equation*} Since $u=R(\lambda+\i0)Av$
for $v\in L^{2,\infty}$ we are done (by using density and
interchanging limits).
\end{proof}

\subsection{Sommerfeld radiation condition}
\label{Sommerfeld radiation condition}

In this subsection we describe
 a version of the Sommerfeld radiation
 condition  close in spirit to \cite[Theorem 30.2.7]{Ho2}, \cite{Is2} and
 \cite{Me}.

We introduce for  $s>0$ Besov spaces $B_s$ and corresponding   duals $B_s^*$ 
 as in \cite{AH} (see
\cite[Section 14.1]{Ho2} for details about these spaces). 
 They consist of local $L^2$ functions with a certain
(norm) expression being finite.

Throughout this subsection we shall actually  only  use the duals $B_s^*$, 
  for which we can take the
norm squared to be
\begin{equation*}
  \|u\|^2_{B_s^*}:=\sup_{R>1}R^{-2s}\int_{|x|<R}|u|^2\d x.
\end{equation*} An equivalent norm is given by the square root of the expression 
\begin{equation*}
 \int_{|x|<1}|u|^2\d x+\sup_{R>1}R^{-2s}\int_{R/2<|x|<R}|u|^2\d x.
\end{equation*} In particular we see that for all $s,s'>0$ the map $X^{s'-s}:B_s^*\to B_{s'}^*$ is bicontinuous.

The subspace $B^* _{s,0}\subseteq  B^* _{s}$ is specified by the
additional condition 
\begin{equation*}
  \lim_{R\to \infty}R^{-2s}\int _{|x|<R}|u|^2\d x=0,
\end{equation*} or equivalently, 
\begin{equation*}
  \lim_{R\to \infty}R^{-2s}\int _{R/2<|x|<R}|u|^2\d x=0.
\end{equation*}

There are inclusions 
\begin{equation}
  \label{eq:9boncloseda}
  L^{2,-s}\subseteq B^* _{s,0}\subseteq  B^* _{s}\subseteq \cap_{s'>s}L^{2,-s'}.
\end{equation}

We introduce a notion of scattering wave front set
of  a distribution $u\in L^{2,-\infty}$ relative to  the Besov space
$B^*_{s,0}$,  $s>0$, say, denoted by
$WF(B^*_{s,0},u)$. It is the complement within $\T^*$ given by
replacing $WF^{-s}_{\sc}(u)\to WF(B^*_{s,0},u)$ and $L^{2,-s}\to B^*
_{s,0}$ in (\ref{eq:WF^sa}) (here (\ref{eq:WF^sa}) is considered with
$s\to -s$). Obviously (\ref{eq:9boncloseda}) implies the inclusions 
\begin{equation}\label{eq:WF^saib´nc}
WF^{-s}_{\sc}(u)\supseteq WF(B^*_{s,0},u)\supseteq WF^{-s'}_{\sc}(u);\;s'>s.
\end{equation}

\begin{prop}
  \label{prop:propa5} Suppose $v\in
  L^{2,s_0'}$ for some $s_0'>s_0$  (here  $s_0$ is given in
  (\ref{eq:s_0})). Then the equation $(H-\lambda)u=v$ has a unique
   solution $u\in L^{2,-\infty}$ obeying one of the following conditions:
\begin{enumerate}[\normalfont (i)]
\item \label{it:o1} $WF_{\sc}^{-s_0}(u)\subseteq \{b>-1\}$, 
\item \label{it:o12}$WF(B^*_{s_0,0},u)\subseteq \{b>0\}$.  
\end {enumerate} 
This  solution is given by  $u=R(\lambda+\i0)v\in
  L^{2,-s}$ for all  $s>s_0$ and $WF^{-s_0}_{\sc}(u)\subseteq \{b=1\}$. 

Similarly, under the same condition on $v$, the equation $(H-\lambda)u=v$ has a unique
   solution $u\in L^{2,-\infty}$ obeying one of the following conditions:
\begin{enumerate}[\normalfont (i)']
\item \label{it:o1b}  $WF_{\sc}^{-s_0}(u)\subseteq
   \{b<1\}$, 
\item \label{it:o12b} $WF(B^*_{s_0,0},u)\subseteq
   \{b<0\}$;  
\end {enumerate} 
 and this  solution is given by   $u=R(\lambda-\i0)v\in
  L^{2,-s}$ for all  $s>s_0$ and $WF^{-s_0}_{\sc}(u)\subseteq \{b=-1\}$. 
\end{prop}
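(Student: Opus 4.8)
The plan is to prove existence by checking that $u=R(\lambda+\i0)v$ has the stated properties, and uniqueness by a propagation-of-singularities argument that drives the scattering wave front set of a homogeneous solution onto the outgoing radial set $\{b=1\}$, followed by a flux identity and a Rellich-type theorem. For existence, given $v\in L^{2,s_0'}$ with $s_0'>s_0$ I would apply Proposition \ref{prop:proposi} with $s=s_0'$: it gives that $u:=R(\lambda+\i0)v=\lim_{\epsilon\searrow0}R(\lambda+\i\epsilon)v$ exists in $L^{2,-t}$ for every $t>s_0$ (so $u\in L^{2,-s}$ for all $s>s_0$) and that $WF^{s_0'-2s_0-\epsilon}_{\sc}(u)\subseteq\{b=1\}$ for every $\epsilon>0$. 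Choosing $\epsilon\le s_0'-s_0$ and using the monotonicity $WF^{s}_{\sc}\subseteq WF^{s'}_{\sc}$ for $s\le s'$ yields $WF^{-s_0}_{\sc}(u)\subseteq\{b=1\}$, which is condition (i); and since $L^{2,-s_0}\subseteq B^*_{s_0,0}$ by \eqref{eq:9boncloseda} one gets $WF(B^*_{s_0,0},u)\subseteq WF^{-s_0}_{\sc}(u)\subseteq\{b=1\}\subseteq\{b>0\}$, which is condition (ii). The assertion for $R(\lambda-\i0)v$ is identical with $\{b=1\}$ replaced by $\{b=-1\}$.

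For uniqueness, by linearity it suffices to show that $w\in L^{2,-\infty}$ with $(H-\lambda)w=0$ satisfying (i) (or (ii)) must vanish. Since here $v=0\in L^{2,\infty}$, Proposition \ref{prop:enerest} gives $WF^{s}_{\sc}(w)\subseteq\{b^2+\bar c^2=1\}$ for all $s$, as well as the hypothesis \eqref{eq:8ai} needed below; hence the relevant wave front set of $w$ — namely $WF^{-s_0}_{\sc}(w)$ under (i), or $WF^{-s_0-\epsilon}_{\sc}(w)$ for small $\epsilon>0$ under (ii) (using $WF^{-s_0-\epsilon}_{\sc}(\cdot)\subseteq WF(B^*_{s_0,0},\cdot)$, which follows from \eqref{eq:9boncloseda}) — is a closed, hence compact, subset of $\{b^2+\bar c^2=1\}$, kept away from $\{b=-1\}$ (resp.\ from $\{b\le0\}$) by the radiation hypothesis. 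Compactness then produces a level $k\in\,]-1,1[$ at which that wave front set misses $\{b=k\}$. Since $WF_{\sc}(v)=\emptyset$, the implication \eqref{eq:y2} of Proposition \ref{prop:propa5aa} rules out $k^+<1$; so $k^+=1$, and together with the a priori localization this forces $WF^{-s_0}_{\sc}(w)\subseteq\{b=1\}$ (resp.\ $WF^{-s_0-\epsilon}_{\sc}(w)\subseteq\{b=1\}$).

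It remains to upgrade $WF^{-s_0}_{\sc}(w)\subseteq\{b=1\}$ to $w\equiv0$, and this is where I expect the real work to lie. I would use a flux identity: for $\chi_R(x)=\chi(|x|/R)$ with $\chi\in C_c^\infty$, $\chi=1$ near $0$, pairing the distributional identity $(H-\lambda)(\chi_R w)=[H,\chi_R]w$ against $w$ gives $\langle w,[H,\chi_R]w\rangle=0$ for every $R$; taking imaginary parts and using that $w$ is microlocalized on the outgoing radial set — so that, modulo faster-decaying terms on the support of the cutoffs, $\hat x\cdot(-\i\nabla)w$ may be replaced by $g(r)w$ — forces $R^{-1}\int\bigl(-\chi'(|x|/R)\bigr)g(r)|w|^2\,\d x\to0$. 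Since the integrand is nonnegative, this yields $\int_{|x|<R}|w|^2\,\d x=o\bigl(\int_1^R g(\rho)^{-1}\,\d\rho\bigr)=o(R^{2s_0})$, i.e.\ $w\in B^*_{s_0,0}$; as $\lambda\ge0$ is not an eigenvalue of $H$ (Condition \ref{assump:conditions3}), the Rellich-type uniqueness theorem underlying the limiting absorption principle of \cite{FS} then forces $w=0$. The primed version of the Proposition follows by applying the unprimed one to $\bar w$: complex conjugation preserves $(H-\lambda)w=0$ (real $\lambda$) and sends $(b,\bar c)\mapsto(-b,-\bar c)$, hence interchanges the two conditions in each pair and swaps $\{b=1\}\leftrightarrow\{b=-1\}$. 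The genuinely delicate point is the flux step — showing that $w$ obeys the WKB relation on $\{b=1\}$ to precisely the order needed to read off the borderline rate $2s_0$ (a microlocal radial-point estimate near $\{b=1\}$ with $g$-dependent weights, in the spirit of \cite[Theorem 30.2.7]{Ho2} and \cite{Me}) and then matching this with the threshold Rellich theorem.
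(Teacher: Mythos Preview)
Your overall strategy --- existence via Proposition~\ref{prop:proposi}, then for the homogeneous solution $w$: propagation (Proposition~\ref{prop:propa5aa}) to confine $WF^{-s_0}_{\sc}(w)$ to $\{b=1\}$, a flux identity to reach $B^*_{s_0,0}$, and a Rellich step to conclude $w=0$ --- is exactly the paper's, and you rightly flag the flux step as the delicate one. But your sketch of that step does not close. You start from $w\in L^{2,-t}$ for some arbitrary $t>s_0$ and write $\langle w,i[H,\chi_R]w\rangle=0$, proposing to replace $\hat x\cdot p\,w$ by $gw$ ``modulo faster-decaying terms''. The remainder involves $\Opw(b-1)w$; the bound $WF^{-s_0}_{\sc}(w)\subseteq\{b=1\}$ gives $L^{2,-s_0}$ control only after cutting off away from $\{b=1\}$, while near $b=1$ the first-order vanishing of the symbol $b-1$ gains nothing, so that contribution still carries a factor $\|w\|_{L^2(|x|\sim R)}\le CR^t$ and the remainder is not $o(R^0)$. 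The paper therefore inserts a preliminary bootstrap (its Step~I): the commutator $\langle i[H,X^{-\delta}\psi_R]\rangle_u$ with $\delta=\max(t-s_0,\,2t-2s_0+\mu-2)$, together with the substitution $\Opw(b)\to\Opw\bigl(bF(b>\tfrac12)F(a<6)\bigr)$ (justified via \eqref{eq:8aA}--\eqref{eq:8aB}), yields $u\in L^{2,-s_0-\delta/2}$; iterating drives the exponent down to every $s>s_0$, and only then is the flux computation (Step~III, the same scheme with $\delta=0$) run to reach $B^*_{s_0,0}$.

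The ``Rellich-type theorem from \cite{FS}'' you invoke at the end is likewise not available as a black box. The paper carries out a second bootstrap (Step~IV): quantize $a_\kappa=(X/X_\kappa)^{-s}X_\kappa^{-s_0}F(\mp b>\tfrac12)F(a<3)$ for $s\in\,]s_0-\tfrac{\epsilon_2}{2},s_0[$, use the sign of $\{h_2,(X/X_\kappa)^{2s_0-2s}\}$ on the characteristic set to obtain $u\in L^{2,-s}$, and iterate in increments of $\epsilon_2/2$ until $u\in L^2$; absence of nonnegative eigenvalues then gives $u=0$. So the two commutator bootstraps flanking your flux identity are where the actual work lies, and neither is a consequence of the microlocal radial-point estimate you allude to.
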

\begin{proof} We shall only consider the first
  mentioned cases (\ref{it:o1}) or (\ref{it:o12}) (they will be treated in parallel); the
   other cases can be  treated similarly. 
By   Proposition \ref{prop:proposi},
 the function  
   $u=\tilde u:=R(\lambda+\i 0)v$ is a solution to $(H-\lambda)u=v$ enjoying the   stated
   properties (including (\ref{it:o1}) and  (\ref{it:o12})). 
Suppose in the  sequel that $u\in
  L^{2,-t}$ for some   $t>s_0$,  $(H-\lambda)u=v$ and 
   $WF^{s_0}_{\sc}(u)\subseteq \{b>-1\}$ or $WF(B^*_{s,0},u)\subseteq \{b>0\}$.  It remains to be shown  that $u=\tilde u$. 

\noindent{\bf Step I}. We shall show that $u\in
  L^{2,-s}$ for all  $s>s_0$. By  Proposition \ref{prop:enerest},
   \begin{align}
     \label{eq:8a}
     WF^{-s_0}_{\sc}(u)&\subseteq \{b^2+\bar c ^2=1\},\\A\Opw\big (F(b^2+\bar c^2>3)\big)u&\in L^{2,-s_0}\text{
     for all }A\in \Psi(\langle {\xi/g}\rangle_1^2,
  g_{\mu,\lambda}). \label{eq:8aA}
   \end{align}
It follows from (\ref{eq:WF^saib´nc}),  
    Propositions \ref{prop:enerest} and \ref{prop:propa5aa} and a
    compactness argument that 
   \begin{equation}\label{eq:8aB} WF^{-s}_{\sc}(u)\subseteq \{b=1\}\text{  for all  }s>s_0.\end{equation} 
   
Pick a real-valued decreasing $\psi \in C^{\infty}_c([0,\infty))$
   such that $\psi(r)=1$ in a small neighbourhood of $0$ and
   $\psi'(r)=-1$ if  $1/2\leq r\leq 1$. Let
   $\psi_R(x)=\psi\big (|x|/R\big )$; $R>1$. We also introduce
   \begin{equation*}
    \delta =\max \big(t-s_0, 2t-2s_0+\mu-2 \big ),
   \end{equation*}
  and check that \[\delta +s_0'\geq t,\; s_0 +\delta/2 +1-\mu/2 \geq t \text{ and }
   s_0 +\delta/2 <  t.\] 

By
   undoing the commutator we have on one hand that
   \begin{equation}
     \label{eq:uncomm}
   \langle \i [H, X^{-\delta}\psi_R]\rangle_u =-2\,\Im \langle v,X^{-\delta}\psi_R u\rangle,
   \end{equation}
 yielding the estimate
\begin{equation}
     \label{eq:uncomm2}
   |\langle \i [H, X^{-\delta}\psi_R]\rangle_u|\leq  C_1\|v\|_{s_0'}\|\,\|u\|_{-\delta -s_0'}\leq  C_2\|v\|_{s_0'}\|\,\|u\|_{-t}=O(R^0).
   \end{equation}

On the other hand 
\begin{align*}
     \label{eq:uncomm3}
   \i [H, X^{-\delta}\psi_R]&=\Re \big (g \langle x \rangle h_{\delta,R}
   \Opw(b)\big );\\h_{\delta,R}(x)&=-\delta
   X^{-2-\delta}\psi_R(x)+ X^{-\delta}(|x|R)^{-1}\psi'\big (|x|/R\big ),
   \end{align*} yielding by  using (\ref{eq:8aA}), (\ref{eq:8aB}) and the calculus (cf. \cite[Theorems 18.5.4, 18.6.3, 18.6.8]{Ho1}) 
\begin{equation*}
     \label{eq:uncomm8}
   \langle \i [H, X^{-\delta}\psi_R]\rangle_u = \Re \langle g \langle x \rangle h_{\delta,R}
   \Opw\big (bF(b>1/2)F(b^2+\bar c^2<6)\big)\rangle_{u} +O(R^0),
   \end{equation*} 
  which in turn (by the same arguments) implies that 
\begin{equation}
     \label{eq:uncomm9}
     \langle \i [H, X^{-\delta}\psi_R]\rangle_u \leq -\delta
     4^{-1}\langle g \langle x \rangle X^{-2-\delta}\psi_R \rangle_{u} +O(R^0).
   \end{equation} 

By combining  \eqref{eq:uncomm2} and \eqref{eq:uncomm9} we obtain 
\begin{equation}
     \label{eq:uncomm10}
     \langle g\langle x \rangle X^{-2-\delta}\psi_R \rangle_u \leq C,
   \end{equation} for some constant $C$ which is independent of $R>1$. Whence,
   letting $R\to \infty$ we see that $u\in L^{2,
   -t_{1}};\;t_{1}:=s_0+\delta/2$.  

More generally, we define for $k\in\N$ \[t_k=s_0+2^{-1}\max \big(t_{k-1}-s_0,
  2t_{k-1}-2s_0+\mu-2 \big ),\;t_0:=t,\]and  iterate the above procedure. We
  conclude that  $u\in 
  L^{2,-t_k}$, and hence  that indeed $u\in
  L^{2,-s}$ for all  $s>s_0$.

\noindent{\bf Step II}.
Due to Step I, it suffices to show that $u=0$
  is the only solution to the equation $(H-\lambda)u=0$ subject to the
  conditions $u\in
  L^{2,-s}$ for all  $s> s_0$ and  either $WF^{-s_0}_{\sc}(u)\subseteq
  \{b>-1\}$ or $WF(B^*_{s_0,0},u)\subseteq \{b>0\}$. In the following Steps
  III and IV we consider this problem.

\noindent{\bf Step III}. We shall show that $u\in {B^*_{s_0,0}}$. Under Condition  (\ref{it:o1}) the bound (\ref{eq:8aB}) holds for $s=s_0$ (by Proposition \ref{prop:propa5aa}) which implies that \[\text{There
exists }\epsilon>0\ \text{such that}\  WF(B^*_{s,0},u)\subseteq
\{b>\epsilon\}.\] 
 Under Condition  (\ref{it:o12}), we have the same conclusion due to (\ref{eq:8a}) and a compactness argument.
Next, we apply
the same scheme as in  Step I, now with $\delta=0$ and using a factor of 
$F(b>\epsilon)$ instead of a factor of  $F(b>1/2)$. This leads to 
\begin{equation*}
     R^{-1}\langle g \langle x \rangle |x|^{-1}\psi'\big (|\cdot |/R^{-1}\big )\rangle_u =o(R^0),
   \end{equation*} and hence 
   $u\in B^*_{s_0,0}$.

\noindent{\bf Step IV}. We shall show that $u=0$. For convenience we assume that $\epsilon_2 \leq 2-\mu$.  First,  letting   $s\in ]s_0-\epsilon_2/2,s_0[$  be given arbitrarily,  our goal  is to 
show that $u\in L^{2,-s}$. For that consider for  $\kappa\in ]0,1/2]$
\begin{equation} 
\label{eq:propobs8}
  b_\kappa=X^{s_0}a_\kappa;\; a_\kappa=\Big (\tfrac {X}{X_\kappa}\Big )^{-s}X_{\kappa}^{-s_0}F(-b>1/2)F(b^2+\bar c^2<3).
\end{equation} Here we use the regularization factor of \eqref{eq:aai}.
We calculate the Poisson bracket
\begin{equation*}
   \Big\{h_2, \Big(\tfrac {X}{X_\kappa}\Big )^{2s_0-2s} \Big\}=(1-\kappa)
   (2s_0-2s)\langle x\rangle X^{-1}X_\kappa^{-3}\Big (\tfrac {X}{X_\kappa}\Big )^{2s_0-2s-1}gb. 
\end{equation*} Obviously this is  negative  on the support of
   $b_\kappa$,  with the (uniform) upper bounds
   \begin{align*}\cdots &\leq -8^{-1}(2s_0-2s)\Big (\langle x\rangle X^{2s_0-2}g\Big )X_\kappa^{-2}\Big (\big (\tfrac {X}{X_\kappa}\big )^{-s}X_{\kappa}^{-s_0}\Big)^2 \\&\leq -cX_\kappa^{-2}\Big (\big (\tfrac {X}{X_\kappa}\big )^{-s}X_{\kappa}^{-s_0}\Big)^2,\;c>0.\end{align*} 

Similarly, by  \eqref{eq:eqmotionb}, 
\begin{align*}
   &\big\{h_2, F^2(-b>1/2)\big\}\\&=-{g\over r}(F^2)'(-b>1/2)\Big (\big (1-rV_1'g^{-2}\big )\bar c^2
   +\big (rV_1'g^{-2}\big )g^{-2}2(h_2-\lambda)+ O\big (r^{-\epsilon_2}\big )\Big ), 
\end{align*} where we expand the right hand side into a sum of three terms
and note  that the  first term is non-positive.

We introduce the quantizations $A_\kappa={\Opw}(a
_\kappa)$ and $B_\kappa={\Opw}(b
_\kappa)$, and  the states $u_R(x)=\psi_R(x)u(x)$, $R>1$. By Step III,
\begin{equation}\label{eq:0bn}
  \lim _{R\to \infty} \langle \i [H, B_\kappa^2]\rangle_{u_R}=0.
\end{equation}
On the other hand, due to the above considerations the expectation of
$\i [H, B_\kappa^2]$ in $u_R$ tends to be negative. Keeping the  precise upper bounds in mind,  we can let 
$R\to \infty$ (using  the calculus, \eqref{eq:T1}, to deal with a contribution from $V_3$ and \eqref{eq:0bn}) obtaining 
\begin{equation*}
  c\|X_\kappa^{-1}A_\kappa u\|^2\;\big (=\lim_{R\to \infty} c\|X_\kappa^{-1}A_\kappa u_R\|^2\big )\leq C,
\end{equation*}  where the constants $c$ (the one given above) and $C$ are positive and independent of
$\kappa$. Whence, letting $\kappa\to 0$, we conclude that 
\begin{equation}\label{eq:bor1}
  \Opw\Big( F(-b>1/2)F(b^2+\bar c^2<3)\Big)u\in L^{2,
  -s}.
\end{equation}

Upon replacing the factor $F(-b>1/2)$ in \eqref{eq:propobs8} by
$F(b>1/2)$, we can argue similarly and obtain
\begin{equation}\label{eq:bor2}
  \Opw\Big( F(b>1/2)F(b^2+\bar c^2<3)\Big)u\in L^{2,
  -s}.
\end{equation} 

In combination with Proposition \ref{prop:propa5aa}, the bounds
\eqref{eq:bor1} and \eqref{eq:bor2} and the fact that (\ref{eq:8a}) holds with $s_0$
replaced by $s$ (note this is trivial since,  by assumption, now $v=0$) yield that $u\in L^{2,
  -s}$.

Next, the above procedure can be iterated: Assuming that $u\in L^{2,
  -s}$ for all $s>t_k:= s_0-k\epsilon_2/2$ (for
  some 
  $k\in \N$), the  procedure leads to $u\in L^{2,-s}$ for all $s>t_{k+1}$.
  Consequently, 
  $u\in L^{2,s}$ for all $s\in \R$. In particular $u\in L^{2}$, and
  therefore $u=0$.
 \end{proof}

\section{Fourier integral operators}

\label{FIO}
In this section we construct and study  certain modifiers in the form
of Fourier integral operators; they will enter
in the construction of wave operators in Section \ref{Wave matrices}. 

\subsection{The WKB ansatz}\label{The first WKB ansatz}

Assume first that Condition \ref{symbol} holds. Fix
$\sigma_0\in]0,2[$. Recall 
from Lemma \ref{posi}
that there exists a decreasing function $]0,\infty[\ni\lambda\mapsto
R_0(\lambda) $ such that on the set
\[\left\{(x,\xi)\in \R^d\times (\R^d\setminus\{0\})\ |\ x\in\Gamma_{R_0(|\xi|^2/2),\sigma_0}^+(\hat\xi)\right\}\]
we can construct a solution $\phi^+$ of the eikonal equation satisfying the
(non-uniform in energy)  bounds \eqref{eq:postca}.

We fix $0<\sigma<\sigma'<\sigma_0$.
 Next we 
introduce  smoothed out characteristic functions

\begin{equation}\label{eq:chi^1}
  \chi_1(r)=
\begin{cases} 1, & \text{for}\; r\geq 2,\\
0, &\text{for}\; r\leq 1,
\end {cases}\; 
\end{equation}
and 
\begin{equation}\label{eq:chi^2}
  \chi_2(l)=
\begin {cases} 1, & \text{for}\; l\geq 1-\sigma,\\
0, & \text{for}\; l\leq 1-\sigma'.
\end {cases} \;
\end{equation}

Define
\[a_0^+(x,\xi):=\chi_2(\hat
x\cdot\hat \xi)\chi_1\left(|x|/R_0(|\xi|^2/2)\right).\] 
The basic idea of Isozaki-Kitada is to use the modifier
given by a
 Fourier integral operator $J_0^+$ on $L^2(\R^d)$ of the form
\begin{equation}
  \label{eq:int_ope9}
  (J_0^+f)(x)=(2\pi)^{-d/2}\int \e^{\i
    \phi^+(x,\xi)} a_0^+(x,\xi) \hat f(\xi)
  \d \xi,
\end{equation}
where
\[\hat f(\xi):=(2\pi)^{-d/2}\int \e^{-\i x\cdot \xi}f(x)\d x\]
denotes the (unitary) Fourier transform of $f$.

If we assume that the potentials satisfy Conditions
\ref{assump:conditions1} and  \ref{assump:conditions2}, then we can assume
that the function $R_0(\lambda)$ is the  constant $R_0$ given by Lemma 
\ref{lemma:mixed_2}. 
Thus in this case  the solution $\phi^+(x,\omega,\lambda)$ 
of the eikonal equation 
is defined in  $\Gamma^+_{R_0,\sigma_0}\times[0,\infty[$ (here
$\sigma_0$ is also given by Lemma 
\ref{lemma:mixed_2}; possibly it is much smaller than $2$), and  
 the amplitude $a_0$ is simply given by
\[a_0^+(x,\xi):=\chi_2(\hat x\cdot\hat \xi)\chi_1(|x|/R_0).\]

\subsection{The improved WKB ansatz}
\label{The WKB ansatz}

The modifier $J_0^+$ (and its incoming counterpart, say $J_0^-$) is sufficient only for the most basic purposes, such as the
existence of the outgoing  (incoming) wave operator. To study finer properties of wave operators it is
useful to use a more refined construction suggested by the WKB method.

This more refined construction is possible and useful
already under  Condition \ref{symbol}. However,
for simplicity of presentation, in the remaining part of the section we will
assume  that the potentials satisfy the more restrictive Conditions
\ref{assump:conditions1} and  \ref{assump:conditions2}.
These conditions allow us to extend this and related constructions (see
Subsection \ref{Fourier integral operators at fixed energies}) down to (and
including) 
$\lambda=0$. Therefore, it will be
 convenient to switch between the two notations 
$\phi^+(x,\xi)$ and 
$\phi^+(x,\omega,\lambda)$. This will be done tacitly in the
following, and in fact, we shall often 
slightly abuse notation by writing   $(x,\xi)\in \Gamma^+_{R_0,\sigma_0}$
instead of $(x,\omega,\lambda)\in \Gamma^+_{R_0,\sigma_0}\times[0,\infty[$.

The WKB method suggests to approximate the wave operator by 
a Fourier integral operator $J^+$ on $L^2(\R^d)$ of the form
\begin{equation}
  \label{eq:int_ope}
  (J^+f)(x)=(2\pi)^{-d/2}\int \e^{\i
    \phi^+(x,\xi)} a^+(x,\xi) \hat f(\xi)
  \d \xi,
\end{equation}
where the symbol $a^+(x,\xi)$ is
 supported  in  $\Gamma^+_{R_0,\sigma_0}$ and
 constructed  by an iterative procedure to make the difference
 $T^+:=\i (HJ^+-J^+H_0)$  small in an outgoing
region $\Gamma^+_{R,\sigma}$ for some $R>R_0$, $\sigma<\sigma_0$. We have 
\begin{equation}
  \label{eq:int_ope4}
  (T^+f)(x)=(2\pi)^{-d/2}\int \e^{\i
    \phi^+(x,\xi)} t^+(x,\xi) \hat f(\xi)
  \d \xi,
\end{equation} where
\begin{equation}
  \label{eq:sym_dif}
t^+(x,\xi)= \left((\nabla_x\phi^+(x,\xi)) \cdot 
\nabla_x +\tfrac {1}{2 }(\triangle_x\phi^+(x,\xi))\right)a^+(x,\xi)
 -\tfrac {\i}{2 }\triangle_x a^+(x,\xi). 
\end{equation}

As it is well-known from the WKB method, it is possible to improve on
the ansatz by putting (here we need $\xi\neq0$)
\begin{eqnarray}\label{eq:rega1}
a^+(x,\xi)&:=&\left(\det\nabla_\xi\nabla_x
\phi^+(x,\xi)\right)^{1/2} b^+(x,\xi),\\
t^+(x,\xi)&:=&\left(\det\nabla_\xi\nabla_x
\phi^+(x,\xi)\right)^{1/2} r^+(x,\xi).\label{eq:rega2}\end{eqnarray}
We have
\begin{eqnarray*}
\left((\nabla_x\phi^+(x,\xi)) \cdot 
\nabla_x 
+\tfrac {1}{2 }(\triangle_x\phi^+(x,\xi)\right)
\left(\det\nabla_\xi\nabla_x
\phi^+(x,\xi)\right)^{1/2}=0,\end{eqnarray*}
and therefore
\begin{eqnarray*}
 r^+(x,\xi)&=&
(\nabla_x\phi^+(x,\xi)) \cdot 
\nabla_x b^+(x,\xi)\\&&
 -
\left(\det\nabla_\xi\nabla_x
\phi^+(x,\xi)\right)^{-1/2}\tfrac {\i}{2 }\triangle_x 
\left(\det\nabla_\xi\nabla_x
\phi^+(x,\xi)\right)^{1/2} b^+(x,\xi).\end{eqnarray*}


It is useful to introduce 
\begin{eqnarray}\label{eq:4}
\zeta^+(x,\xi)&=&\ln \left(\det\nabla_\xi\nabla_x \phi(x,\xi)\right)^{1/2};\;\xi\neq0.\end{eqnarray}
Note that it satisfies the equation
\begin{eqnarray}
(\nabla_x\phi(x,\xi))\cdot \nabla_x\zeta^+(z,\xi)+\frac12\triangle_x\phi(x,\xi)=0.\label{equa}
\end{eqnarray}

\begin{prop}
For $(x,\xi) \in
 \Gamma_{R,\sigma}^+$, $\xi\neq0$,
\begin{align}
  \label{eq:psi_2}
 &\zeta^+ (x,\xi)=\tfrac {1}{2
 }\int _1 ^{\infty}
 \triangle_y\phi^+(y^+(t;x,\xi),\xi)\d t
 .\end{align} 
\end{prop}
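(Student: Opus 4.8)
The plan is to integrate the transport equation \eqref{equa} for $\zeta^+$ along the classical trajectory $t\mapsto y^+(t,x,\xi)$ and to exploit that this trajectory escapes to infinity, where $\zeta^+$ vanishes. Fix $(x,\xi)\in\Gamma^+_{R,\sigma}$ with $\xi\neq0$. Two ingredients are needed. First, $\zeta^+$ solves $(\nabla_x\phi^+)\cdot\nabla_x\zeta^++\tfrac12\triangle_x\phi^+=0$ throughout $\Gamma^+_{R,\sigma}$, which is exactly \eqref{equa}. Second, $\nabla_x\phi^+(x,\xi)=F^+(x,\xi)=\dot y^+(1,x,\xi)$, by the construction of $\phi^+$ from $F^+$ in Lemma \ref{posi}.

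The key step is to promote this last identity to all times along the orbit: for $t\geq1$,
\[
\dot y^+(t,x,\xi)=\nabla_x\phi^+\bigl(y^+(t,x,\xi),\xi\bigr).
\]
First I would note that $z:=y^+(t_0,x,\xi)\in\Gamma^+_{R,\sigma}$ for every $t_0\geq1$ by Lemma \ref{posi}, so the right-hand side is defined. Then I would invoke uniqueness in the mixed problem \eqref{eq:mixed conditions22}: the curve $s\mapsto y^+(s+t_0-1,x,\xi)$ solves Newton's equation, equals $z$ at $s=1$, and satisfies $\lim_{s\to\infty}\tfrac{d}{ds}y^+(s+t_0-1,x,\xi)=\xi$; hence it coincides with $y^+(\cdot,z,\xi)$. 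Differentiating at $s=1$ gives $\dot y^+(t_0,x,\xi)=\dot y^+(1,z,\xi)=F^+(z,\xi)=\nabla_x\phi^+(y^+(t_0,x,\xi),\xi)$, as claimed.

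Writing $Y(t):=y^+(t,x,\xi)$, the chain rule together with the two facts above gives
\[
\tfrac{d}{dt}\,\zeta^+(Y(t),\xi)=\dot Y(t)\cdot\nabla_x\zeta^+(Y(t),\xi)=-\tfrac12\triangle_x\phi^+(Y(t),\xi),
\]
where the second equality uses \eqref{equa}. Integrating from $1$ to $T$ and using $Y(1)=x$,
\[
\zeta^+(Y(T),\xi)-\zeta^+(x,\xi)=-\tfrac12\int_1^T\triangle_y\phi^+(Y(t),\xi)\,\d t.
\]
To pass to the limit $T\to\infty$, I would choose $\delta\in(\max(1-\mu,0),1)$ in \eqref{eq:postca}; then $\triangle_y\phi^+(y,\xi)=\triangle_y(\phi^+(y,\xi)-\xi\cdot y)=O(|y|^{\delta-2})$, and since $y^+(t,x,\xi)$ is an outgoing scattering orbit with $\dot y^+(t)\to\xi\neq0$ its modulus grows linearly in $t$, so $\triangle_y\phi^+(Y(t),\xi)=O(t^{\delta-2})$ is integrable on $[1,\infty)$. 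Likewise $\nabla_\xi\nabla_x\phi^+(y,\xi)=I+O(|y|^{\delta-1})\to I$ as $|y|\to\infty$ by \eqref{eq:postca}, so $\zeta^+(Y(T),\xi)=\ln\bigl(\det\nabla_\xi\nabla_x\phi^+(Y(T),\xi)\bigr)^{1/2}\to0$. Letting $T\to\infty$ yields $-\zeta^+(x,\xi)=-\tfrac12\int_1^\infty\triangle_y\phi^+(y^+(t,x,\xi),\xi)\,\d t$, which is \eqref{eq:psi_2}.

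I expect the time-propagation identity $\dot y^+(t,x,\xi)=\nabla_x\phi^+(y^+(t,x,\xi),\xi)$ to be the main obstacle: one must check that the time-shifted orbit remains inside the region where $\phi^+$ — and the relation $\nabla_x\phi^+=F^+$ — is available, and then appeal to the uniqueness in Lemma \ref{posi}. Everything afterwards is the classical method of characteristics together with the convergence bookkeeping furnished by the Isozaki--Kitada bounds \eqref{eq:postca} and the linear escape of the orbit.
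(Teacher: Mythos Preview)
Your proof is correct and follows the same method-of-characteristics idea as the paper: both integrate the transport equation \eqref{equa} along the orbit $t\mapsto y^+(t,x,\xi)$ and use that $\zeta^+$ vanishes at infinity. The paper's proof is a one-line sketch (``both sides satisfy \eqref{equa} and vanish along characteristics, hence coincide''), whereas you supply the details the paper omits---the flow identity $\dot y^+(t,x,\xi)=\nabla_x\phi^+(y^+(t,x,\xi),\xi)$ via uniqueness in Lemma~\ref{posi}, the integrability from \eqref{eq:postca}, and the decay $\zeta^+(Y(T),\xi)\to0$.
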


\proof Both  $\zeta^+(x,\xi)$  and  the right
 hand side of (\ref{eq:psi_2}) 
 satisfy the first order  equation (\ref{equa}). Both go to zero as
 $|x|\to\infty$. In particular, they go to zero along the
 characteristics $t\to y^+(t,x,\xi)$. Therefore, they coincide. \qed

\begin{lemma}\label{lemmo}
There exist the uniform limits
\begin{eqnarray*}
\lim_{\lambda\searrow 0}\partial^\delta_\omega\partial^\gamma_x\left(\zeta^+(x,\xi)-
\zeta_\sph^+(x,\xi)\right).
\end{eqnarray*}
Besides, we have uniform estimates with
$\breve\epsilon$ given as in Proposition  \ref{prop:mixed_2aaaa} 
\begin{eqnarray*}
\partial_\omega^\delta\partial_x^\gamma
\left(\zeta^+(x,\xi)-
\zeta_\sph^+(x,\xi)\right)=O(|x|^{-|\gamma|-\breve\epsilon}),\
|\delta|+|\gamma|\geq0.
\end{eqnarray*}
\end{lemma}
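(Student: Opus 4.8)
The plan is to reduce the statement to the integral representation \eqref{eq:psi_2} of $\zeta^{+}$ together with the uniform bounds on $F^{+}-F^{+}_{\sph}$ already recorded in Proposition~\ref{prop:mixed_2aaaa}. I work first at a fixed positive energy, so $\xi\neq0$ and both $\zeta^{+}$, $\zeta^{+}_{\sph}$ are defined and tend to $0$ as $|x|\to\infty$ (by the proposition preceding this lemma); the assertions at $\lambda=0$ are then recovered as limits of the difference. Writing the transport equation \eqref{equa} once for $V$ and once for $V_{1}$ — recall $\nabla_{x}\phi^{+}=F^{+}$, $\triangle_{x}\phi^{+}=\operatorname{div}_{x}F^{+}$, and likewise with the subscript $\sph$ — and subtracting, I find that $\psi:=\zeta^{+}-\zeta^{+}_{\sph}$ solves the inhomogeneous transport equation
\begin{equation*}
F^{+}\cdot\nabla_{x}\psi=-S,\qquad S:=(F^{+}-F^{+}_{\sph})\cdot\nabla_{x}\zeta^{+}_{\sph}+\tfrac12\operatorname{div}_{x}(F^{+}-F^{+}_{\sph}).
\end{equation*}
Since $t\mapsto y^{+}(t;x,\xi)$ is an integral curve of the vector field $x\mapsto F^{+}(x,\xi)$ and $\psi(y^{+}(t;x,\xi),\xi)\to0$ as $t\to\infty$, integration along the characteristic gives, exactly as in the proof of the preceding proposition,
\begin{equation*}
\psi(x,\xi)=\int_{1}^{\infty}S\big(y^{+}(t;x,\xi),\xi\big)\,\d t .
\end{equation*}
Crucially, $S$ involves only the \emph{difference} $F^{+}-F^{+}_{\sph}$, which carries the extra decay $\langle\cdot\rangle^{-\breve\epsilon}$ exploited below.

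Next I estimate $S$. By \eqref{eq:derF2222}, $\partial_{\omega}^{\delta}\partial_{x}^{\gamma}(F^{+}-F^{+}_{\sph})=\langle x\rangle^{-\breve\epsilon-|\gamma|}O(g(|x|))$ uniformly in $\lambda\in[0,\lambda_{0}]$; combined with the uniform bound $\partial_{\omega}^{\delta}\partial_{x}^{\gamma}\nabla_{x}\zeta^{+}_{\sph}=\langle x\rangle^{-1-|\gamma|}O(1)$ — which comes from the explicit radial formula \eqref{esto22}, the two-sided bound \eqref{eq:8aammm}, and Lemma~\ref{lemma:jan}, the singular factor $(2\lambda)^{(2-d)/4}$ being annihilated by $\nabla_{x}$ — this gives
\begin{equation*}
\partial_{\omega}^{\delta}\partial_{x}^{\gamma}S(x,\xi)=\langle x\rangle^{-1-\breve\epsilon-|\gamma|}O\big(g(|x|)\big)\qquad\text{uniformly in }\lambda\in[0,\lambda_{0}].
\end{equation*}
I then change the integration variable from $t$ to $r=|y^{+}(t;x,\xi)|$: by the outgoing estimates of Lemma~\ref{lemma:jan} and Proposition~\ref{prop:mixed_2aaaa}, and since $y^{+}(t)\in\Gamma^{+}_{R_{0},\sigma}(\omega)$ for $t\geq1$, one has $\tfrac{\d}{\d t}|y^{+}(t)|=\hat y^{+}(t)\cdot\dot y^{+}(t)\geq c\,g(|y^{+}(t)|)>0$, so $t\mapsto r(t)$ is an increasing diffeomorphism of $[1,\infty)$ onto $[|x|,\infty)$ with $\d t=\d r/\dot r\leq\d r/(c\,g(r))$. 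Hence, for $\delta=\gamma=0$,
\begin{equation*}
|\psi(x,\xi)|\leq\int_{|x|}^{\infty}C\,r^{-1-\breve\epsilon}g(r)\,\frac{\d r}{c\,g(r)}=C'\,|x|^{-\breve\epsilon},
\end{equation*}
uniformly in $\lambda$. For $|\delta|+|\gamma|\geq1$ one differentiates the transport equation, obtaining for each derivative $w=\partial_{\omega}^{\delta}\partial_{x}^{\gamma}\psi$ a first-order equation $F^{+}\cdot\nabla_{x}w=g$ whose source $g$ is assembled from $\partial^{*}S$ and from products $(\partial^{*}F^{+})\cdot\nabla_{x}(\partial^{*}\psi)$ with at least one derivative falling on $F^{+}$; integrating along the characteristics of $F^{+}$, using again the substitution $t\mapsto r$, the bounds on $\partial^{*}S$ above and on $\partial^{*}F^{+}$ from \eqref{eq:derF222}, and a bootstrap over the order of $w$ (each derivative of $F^{+}$ supplying a $\langle\cdot\rangle^{-1}$ that, for $|x|\geq R_{0}$ large, makes the $r$-integral contract), one obtains $\partial_{\omega}^{\delta}\partial_{x}^{\gamma}\psi(x,\xi)=O(|x|^{-|\gamma|-\breve\epsilon})$, uniformly in $\lambda$, which is the asserted estimate.

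The uniform limits as $\lambda\searrow0$ then follow from the same representations by dominated convergence. By Proposition~\ref{prop:mixed_2aaaa} the vector fields $F^{+}$, $F^{+}-F^{+}_{\sph}$, the trajectory $y^{+}(t;x,\omega,\lambda)$ and all their $(x,\omega)$-derivatives are jointly continuous down to $\lambda=0$, and so are $\nabla_{x}\zeta^{+}_{\sph}$ and its derivatives; hence every integrand appearing in the representation of $\partial_{\omega}^{\delta}\partial_{x}^{\gamma}\psi$ is continuous in $\lambda\in[0,\lambda_{0}]$. Since the estimates above furnish the $\lambda$-independent integrable majorant $O(r^{-1-\breve\epsilon-|\gamma|})$ after the substitution $t\mapsto r$, the integral — that is, $\partial_{\omega}^{\delta}\partial_{x}^{\gamma}(\zeta^{+}-\zeta^{+}_{\sph})$ — converges as $\lambda\searrow0$, uniformly on compact subsets of $\Gamma^{+}_{R_{0},\sigma_{0}}$, and this defines the limiting objects at $\lambda=0$.

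I expect the main technical obstacle to be the bookkeeping in the bootstrap for the higher derivatives: one must arrange the induction over the order of $\partial_{\omega}^{\delta}\partial_{x}^{\gamma}\psi$ so that the terms on the right-hand side in which $F^{+}$ carries a single derivative and $\psi$ carries all the remaining ones (formally of the same total order as $w$) can be absorbed into the left-hand side, using that $\langle x\rangle^{-1}$-weights are small for $|x|\geq R_{0}$ large, and check that the change of variables $t\leftrightarrow r$ together with the $g$-factor in $\d t=\d r/\dot r$ always reduces matters to the single convergent integral $\int^{\infty}r^{-1-\breve\epsilon-|\gamma|}\,\d r$. A secondary point needing care is that $\operatorname{div}_{x}F^{+}$ on its own is, restricted to a zero-energy orbit (where $|y^{+}(t)|\sim t^{2/(2+\mu)}$), only borderline integrable in $t$; it is the extra decay $\langle\cdot\rangle^{-\breve\epsilon}$ carried by the difference $F^{+}-F^{+}_{\sph}$ in every integrand above that makes both the convergence and the rate $|x|^{-\breve\epsilon}$ genuine.
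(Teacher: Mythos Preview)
Your argument is correct and takes a genuinely different route from the paper. The paper starts from the integral representation \eqref{eq:psi_2} applied separately to $\zeta^{+}$ and $\zeta^{+}_{\sph}$ and splits the difference into two pieces: a term~$I$ coming from the discrepancy of the two trajectories $y^{+}(t)$ and $y^{+}_{\sph}(t)$ (handled via the trajectory estimate $|y^{+}(t)-y^{+}_{\sph}(t)|=O(t^{\alpha-\epsilon})$ from \cite[(6.43)]{DS1}), and a term~$II$ coming from the discrepancy of the integrands along $y^{+}_{\sph}(t)$ (handled via \eqref{eq:derF2222}). You instead subtract the two transport equations \eqref{equa} to obtain a single first-order equation for $\psi=\zeta^{+}-\zeta^{+}_{\sph}$ along the characteristics of $F^{+}$ with a source $S$ that already carries the extra decay $\langle\cdot\rangle^{-\breve\epsilon}$; this avoids the trajectory-comparison entirely. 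Your approach is arguably cleaner, but it trades the input \cite[(6.43)]{DS1} for the bound $\partial_{\omega}^{\delta}\partial_{x}^{\gamma}\nabla_{x}\zeta^{+}_{\sph}=O(|x|^{-1-|\gamma|})$ uniformly in $\lambda$, which the paper only records later (it is precisely the spherically symmetric part of \eqref{eq:psi_bound_der}, proved at the beginning of Proposition~\ref{prop:booh} directly from \eqref{eq:psi_2} and the bounds $\partial_{x}^{\gamma}y^{+}=O(|x|^{1-|\gamma|}g(|x|)/g(|y^{+}|))$ of \cite[Proposition~4.9]{DS1}). That argument is independent of the present lemma, so there is no circularity; but your stated justification via \eqref{esto22}, \eqref{eq:8aammm} and Lemma~\ref{lemma:jan} alone is not sufficient, since those ingredients control $\zeta^{+}_{\sph}$ only at $\hat x=\omega$ and say nothing about its $x$-derivatives --- you need the short computation with \eqref{eq:psi_2} in the radial case. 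With that one correction your proof goes through; the bootstrap for higher derivatives that you sketch is the same bookkeeping either approach must do.
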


\proof Below $\Div $ and $\nabla$ will always involve the derivatives
with respect to the first argument.
\begin{eqnarray*}
\zeta^+(x,\xi)-\zeta_\sph^+(x,\xi)&=&
\int_1^\infty\frac{1}{2}\left(
\Div F^+(y^+(t),\xi)-\Div F_\sph^+(y_\sph^+(t),\xi)\right)\d t \nonumber
\\
&=&\int_1^\infty \d t
 \frac{1}{2}\int_0^1 \nabla \Div F^+( y_l^+(t),\xi)\cdot \left(y^+(t)-y_\sph^+(t)\right)\d l\nonumber \\
&+&\int_1^\infty\frac{1}{2} \left(\Div F^+(y_\sph^+(t),\xi)-\Div F_\sph^+(y_\sph^+(t),\xi)
\right)\d t
\\
&=&I+II,\label{eq:2}\end{eqnarray*}
where $ y_l^+(t)= l y^+(t) +(1-l)y_\sph^+(t)$. 

Now $I$ can be estimated (cf. (\ref{eq:derF222}) and
\cite[(6.43)]{DS1}) by
\begin{eqnarray}\label{eq:1}
C_1\int_1^\infty |y^+|^{-2} g(|y^+|) t^{\alpha-\epsilon}\d t& \leq 
C_2\int_{|x|}^\infty |y^+|^{-2} |y^+|^{(\alpha-\epsilon)/\alpha}\d |y^+|\nonumber\\
& =O(|x|^{-\epsilon/\alpha})=O(|x|^{-\breve\epsilon}).
\end{eqnarray}  Here 
$\alpha =2/(2+\mu)$ and $\epsilon>0$ is specified in \cite [Subsection
6.1]{DS1}.
We used that \[\frac{\d |y^+|}{\d t}\geq c g(|y^+|),\ \ \ |y^+|\geq
ct^\alpha,\ \ c>0.\]
Splitting the time-integral as $\int_1^{T_0}\d t+
\int_{T_0}^{\infty}\d t$, the argument above yields  
(uniform) smallness of the
second term (provided $T_0$ is chosen big). As for the contribution
from the first term,  we can apply the dominated convergence theorem;
whence  we obtain the existence of
$\lim_{\lambda\searrow0}I$. 

Next  $\partial_\omega^\delta\partial_x^\gamma I$ 
is a sum 
 integrals of terms of the following form:
\begin{eqnarray*}
\partial_\omega^{\delta_1}\partial_x^{\gamma_1}y_l^+
\cdots 
\partial_\omega^{\delta_n}\partial_x^{\gamma_n} y_l^+
\partial_{y^+_l}^n\partial_\omega^\nu \nabla\Div F(y^+_l,\xi)\cdot 
\partial_\omega^\alpha\partial_x^\beta 
\left(y^+(t)-y_\sph^+(t)\right).\end{eqnarray*}
where 
$\delta_1+\cdots+\delta_n+\nu+\alpha=\delta$ and
$\gamma_1+\cdots+\gamma_n+\beta=\gamma$. 
This can be estimated (cf. (\ref{eq:derF222}) and
\cite[(4.41) and  (6.43)]{DS1}) by
\[C|x|^{-|\gamma|}|y^+|^{-2}g(|y^+|)t^{\alpha-\epsilon}.\]
We argue as above to obtain uniform bounds 
on  $\partial_\omega^\delta\partial_x^\gamma I$,
 as well as the existence of
$\lim_{\lambda\searrow0}\partial_\omega^\delta\partial_x^\gamma I$.

Now 
$II$ is bounded (cf. (\ref{eq:derF2222})) by 
\begin{eqnarray}\label{eq:3}
C_1\int_1^\infty|y^+|^{-1-\breve\epsilon} g(|y^+|)\d t\leq C_2
\int_{|x|}^\infty
|y^+|^{-1-\breve\epsilon}\d|y^+|=O(|x|^{-\breve \epsilon}).\end{eqnarray}
Then we apply the dominated convergence theorem as above,  and we obtain the
existence of 
$\lim_{\lambda\searrow0}II$.

 $\partial_\omega^\delta\partial_x^\gamma II$  is a sum of
 integrals  of terms of the
 form
\begin{eqnarray*}
\partial_\omega^{\delta_1}\partial_x^{\gamma_1} y^+
\cdots 
\partial_\omega^{\delta_n}\partial_x^{\gamma_n} y^+
\partial_{ y^+}^n\partial_\omega^\nu \left(
\Div F^+(y^+,\xi) - \Div F_\sph^+(y^+,\xi)\right)
,\end{eqnarray*}
where 
$\delta_1+\cdots+\delta_n+\nu=\delta$ and
$\gamma_1+\cdots+\gamma_n=\gamma$. 
This can be estimated (cf. (\ref{eq:derF2222}) and
\cite[(4.41) and  (6.43)]{DS1}) by
\[C|x|^{-|\gamma|}|y^+|^{-1-\breve\epsilon}g(|y^+|).\]
Then we can argue as above. 
\qed

Define
\[\tilde
\zeta^+(x,\omega,\lambda):=\zeta^+(x,\sqrt{2\lambda}\omega)-\ln(2\lambda)^{(2-d)/4}.
\]

\begin{prop}\label{prop:booh}
\begin{subequations}
\label{group98} 
\begin{enumerate}[\normalfont (i)]
\item 
There exist  (uniform) estimates
\begin{eqnarray}
  |\tilde\zeta^+ (x,\omega,\lambda)-\ln  g(|x|)^{(d-2)/2})|&\leq& 
C,\label{eq:psi_bound_der.1}\\
  \label{eq:psi_bound_der}\partial _\omega^\delta\partial _x^\gamma
  \tilde \zeta^+ (x,\omega,\lambda)&=& 
O\big (
   |x|^{-|\gamma|}\big ),\ \ \text { for }|\delta|+|\gamma|\geq 1.
\end{eqnarray}
\item 
There exist  (uniform)   estimates
\begin{eqnarray}  \label{eq:psi_bound_der1}&&
(2\lambda)^{(d-2)/4}
\partial _\omega^\delta\partial _x^\gamma
  \left(\det\nabla_\xi\nabla_x
\phi^+(x,\xi)\right)^{1/2},\nonumber\\
&=& g(|x|)^{(d-2)/2}
O\big (
   |x|^{-|\gamma|}\big ),\ \ \text { for }|\delta|+|\gamma|\geq 0.
\end{eqnarray}
\item
There exist the locally uniform limits
\[\partial_\omega^\delta\partial_x^\gamma \tilde\zeta^+ (x,\omega,0):=\lim_{\lambda\searrow0}
\partial_\omega^\delta\partial_x^\gamma \tilde \zeta^+(x,\omega,\lambda).\]
\end{enumerate}
\end{subequations}
\end{prop}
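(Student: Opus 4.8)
The plan is to peel off the non-radial part of $\zeta^+$ using Lemma~\ref{lemmo}, and then to evaluate $\zeta^+_\sph$ completely explicitly from the one-dimensional formulas of Subsection~\ref{Classical preliminaries1}. First I would write
\[
\tilde\zeta^+(x,\omega,\lambda)=\tilde\zeta^+_\sph(x,\omega,\lambda)+\big(\zeta^+(x,\sqrt{2\lambda}\,\omega)-\zeta^+_\sph(x,\sqrt{2\lambda}\,\omega)\big),
\]
where $\tilde\zeta^+_\sph:=\zeta^+_\sph(x,\sqrt{2\lambda}\,\omega)-\ln(2\lambda)^{(2-d)/4}$. By Lemma~\ref{lemmo} the bracketed difference and all its $\omega$- and $x$-derivatives are $O(|x|^{-|\gamma|-\breve\epsilon})$ uniformly for $\lambda\in[0,\lambda_0]$ and converge as $\lambda\searrow0$; hence this term contributes $O(1)$ to \eqref{eq:psi_bound_der.1}, $O(|x|^{-|\gamma|})$ to \eqref{eq:psi_bound_der}, and supplies the limit in (iii). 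So it suffices to prove (i) and (iii) with $\tilde\zeta^+$ replaced by $\tilde\zeta^+_\sph$, i.e.\ to treat the case $V_2=0$.

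For the spherically symmetric case, shrink $\sigma_0$ so that $\theta_1:=\arccos(\hat x\cdot\omega)$ stays small (below the $\theta_0$ of Lemma~\ref{pri}) on $\Gamma^+_{R_0,\sigma_0}$, and use $\nabla_\xi\nabla_x\phi^+_\sph=\nabla_\xi F^+_\sph$ together with the explicit $F^+_\sph,\nabla_\omega F^+_\sph,\partial_\lambda F^+_\sph$ of Subsection~\ref{Classical preliminaries1} and $\nabla_\xi=(2\lambda)^{-1/2}\nabla_\omega+(2\lambda)^{1/2}\,\omega\,\partial_\lambda$. In the orthonormal frame $\hat x,\ x^\perp,\ e_3,\dots,e_d$, with $e_3,\dots,e_d$ spanning the orthogonal complement of the motion plane $\mathrm{span}(\hat x,\omega)$, the matrix $\nabla_\xi F^+_\sph$ is block-diagonal, since the orbit never leaves that plane: on $\mathrm{span}(e_3,\dots,e_d)$ it is $-(2\lambda)^{-1/2}(|x|\sin\theta_1)^{-1}L$ times the identity, while the $2\times2$ block in the $\hat x,x^\perp$-directions has determinant $-\partial_{\theta_1}L/(A|x|)$, $A:=\sqrt{g(|x|)^2-L^2/|x|^2}$ (a short computation using $\det\big(\partial(\lambda,\theta_1)/\partial(\xi_1,\xi_2)\big)=1$). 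Therefore
\[
\mathrm e^{\tilde\zeta^+_\sph(x,\omega,\lambda)}=(2\lambda)^{(d-2)/4}\big(\det\nabla_\xi\nabla_x\phi^+_\sph\big)^{1/2}=\Big(\tfrac{|\partial_{\theta_1}L|}{A|x|}\Big)^{1/2}\Big(\tfrac{|L|}{|x|\,|\sin\theta_1|}\Big)^{(d-2)/2}.
\]
The estimates \eqref{estib2}--\eqref{estia2} on $L$ and the lower bound \eqref{eq:ole0} (equivalently $|\partial_{\theta_1}L|\asymp|x|g(|x|)$, which at $\theta_1=0$ reduces to \eqref{eq:8aammm}) give $A\asymp g(|x|)$, $|L|/(|x||\sin\theta_1|)\asymp g(|x|)$ and $|\partial_{\theta_1}L|/|x|\asymp g(|x|)$, uniformly on $\Gamma^+_{R_0,\sigma_0}\times[0,\lambda_0]$; the two ratios above are thus pinched between positive constants, and taking logarithms yields $\tilde\zeta^+_\sph-\ln g(|x|)^{(d-2)/2}=O(1)$, which is \eqref{eq:psi_bound_der.1}. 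The apparent singularities at $\hat x=\omega$ cancel, because $L/\theta_1$, $\theta_1/\sin\theta_1$ and $\theta_1^2=(\arccos\hat x\cdot\omega)^2$ are smooth there, so both ratios extend smoothly across $\hat x=\omega$ and agree with \eqref{esto22}.

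For \eqref{eq:psi_bound_der} I would differentiate the displayed expression for $\tilde\zeta^+_\sph-\ln g^{(d-2)/2}$ term by term: each ratio and $\ln g$ is a bounded smooth function of $|x|$, $\lambda$ and $\theta_1$ staying in a compact subset of $(0,\infty)$; every radial $x$-derivative acting on $g$, on $|x|$, or on $L,\partial_{\theta_1}L,A$ through their $|x|$-dependence gains a factor $|x|^{-1}$ (for $L,\partial_{\theta_1}L$ this is \eqref{estib2}--\eqref{estia2}, for $g$ it is $|V_1'|/g^2=O(|x|^{-1})$), the tangential part of $\nabla_x$ carries its own $|x|^{-1}$ and acts only through $\theta_1$, and $\omega$-derivatives act only through $\theta_1$ and stay bounded; the chain rule then gives $\partial_\omega^\delta\partial_x^\gamma\tilde\zeta^+_\sph=O(|x|^{-|\gamma|})$ for $|\delta|+|\gamma|\geq1$. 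The general case of \eqref{eq:psi_bound_der1} follows from \eqref{eq:psi_bound_der.1}, \eqref{eq:psi_bound_der} and Fa\`a di Bruno applied to $(2\lambda)^{(d-2)/4}\big(\det\nabla_\xi\nabla_x\phi^+\big)^{1/2}=\mathrm e^{\tilde\zeta^+}$, since every term of the expansion carries at least one derivative of $\tilde\zeta^+$, hence the required power of $|x|^{-1}$, while $\mathrm e^{\tilde\zeta^+}\asymp g(|x|)^{(d-2)/2}$. For (iii): $g$ and $|V_1'|/g^2$ converge locally uniformly as $\lambda\searrow0$ by inspection, and $L,\partial_{\theta_1}L,A$ with all their derivatives do so by Lemma~\ref{pri} and \cite{DS1} (these objects are jointly continuous down to $\lambda=0$); since the two ratios stay in a fixed compact subset of $(0,\infty)$, their logarithms and derivatives converge locally uniformly as well, and combining this with the first paragraph gives (iii).

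The main obstacle is the bookkeeping in the two middle paragraphs: isolating the block structure of $\nabla_\xi F^+_\sph$ for non-radial $x$, identifying the two ratios whose $\asymp g(|x|)$ behaviour is uniform in $\theta_1$ and down to $\lambda=0$, and checking that their logarithmic derivatives keep gaining $|x|^{-1}$ and that the cancellation of singularities at $\hat x=\omega$ survives differentiation. The $(r,\theta)$-reduction \eqref{eq:equations of motion} of \cite{DS1} and the estimates \eqref{estib2}--\eqref{estia2}, \eqref{eq:ole0} are precisely the tools that make this routine.
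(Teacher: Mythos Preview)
Your proof is correct and takes a genuinely different route from the paper's. Both arguments use Lemma~\ref{lemmo} to reduce to the spherically symmetric case, but diverge from there. The paper never computes $\det\nabla_\xi F^+_\sph$ away from the ray $\omega=\hat x$; instead it first proves the derivative bounds \eqref{eq:psi_bound_der} via the integral representation \eqref{eq:psi_2}, differentiating under the integral and feeding in the flow estimates $\partial_x^\gamma y^+=O\big(|x|^{1-|\gamma|}g(|x|)/g(|y^+|)\big)$ from \cite{DS1}, and handles $\omega$--derivatives by the rotational identity $\zeta^+_\sph(x,R_\eta\xi)=\zeta^+_\sph(R_\eta^{-1}x,\xi)$, which converts them into $x$--derivatives. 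Only then is \eqref{eq:psi_bound_der.1} obtained, by integrating $\nabla_\omega\tilde\zeta^+_\sph$ along an arc from $\omega=\hat x$ (where the explicit formula \eqref{esto22} applies) to a general $\omega$. Your route is the reverse and more algebraic: you extend the explicit Hessian determinant to all of $\Gamma^+_{R_0,\sigma_0}$ and read off both \eqref{eq:psi_bound_der.1} and \eqref{eq:psi_bound_der} directly from the structure of $L$. This bypasses the flow formula \eqref{eq:psi_2} and the rotation trick entirely, at the cost of the bookkeeping you flag --- the block decomposition of $\nabla_\xi F^+_\sph$, the two--sided bound $|\partial_{\theta_1}L|\asymp |x|g(|x|)$ uniformly in small $\theta_1$ (which does require shrinking $\sigma_0$, since only $|L/\theta_1|\gtrsim |x|g$ comes for free from \eqref{eq:ole0}), and the smoothness of the ratios across $\theta_1=0$. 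The paper's approach is arguably cleaner for higher derivatives and generalises more readily, since \eqref{eq:psi_2} holds without radial symmetry; yours is more self--contained once Lemma~\ref{pri} is in hand and makes the size \eqref{eq:psi_bound_der.1} visible at a glance.
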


\proof Let us first prove the estimates
 (\ref{eq:psi_bound_der}) for $|\delta|=0$, $|\gamma|\geq1$ in the
 spherically symmetric case. 
$\partial_x^\gamma\zeta_\sph^+(x,\xi)$ is an integral of terms
 of the form
\[\partial_x^{\gamma_1} y\cdots 
\partial_x^{\gamma_n} y\partial_y^n\Div F_\sph^+(y^+(t),\xi),\]
where $\gamma_1+\cdots+\gamma_n=\gamma$. 
Using $\partial_x^\gamma
 y^+=O\big (|x|^{1-|\gamma|}g(|x|)g(|y^+|)^{-1}\big )$, cf. \cite [Proposition 4.9]{DS1}, these integrals are bounded
 by
\begin{eqnarray*}
&&C_1\int_1^\infty |x|^{-|\gamma|+n }g(|x|)^n g(|y^+|)^{-n+1}|y^+|^{-n-1}\d t\\
&\leq&
C_2\int_{|x|}^\infty |x|^{-|\gamma|+n} g(|x|)^n g(|y^+|)^{-n}|y^+|^{-n-1}\d |y^+|=
  O(|x|^{-|\gamma|}). \end{eqnarray*}
Thus
\begin{equation*}
 \partial _x^\gamma \zeta_\sph^+ (x,\xi)=
O\big (
   |x|^{-|\gamma|}\big )\ \text { for }|\gamma|\geq 1.
\end{equation*}
Clearly we can argue as above for $|\delta|>0$ as well. If
$|{\gamma}|=0$, we can use the formula (valid due to spherical symmetry)
\begin{equation*}
 \zeta_\sph^+ (x,R_\eta\xi)= \zeta_\sph^+ (R_\eta^{-1}x,\xi),
\end{equation*} for any $d$-dimensional rotation $R_\eta$. Clearly
this converts $\omega$-derivatives to $x$-derivatives, and
consequently we have shown (\ref{eq:psi_bound_der}) in the general
case.

Taking into account Lemma
\ref{lemmo} we obtain the estimates
 (\ref{eq:psi_bound_der}) in the general case (when $V$ is not necessarily
radial).

We have
\begin{eqnarray*}
\tilde\zeta_\sph^+(x,\xi)=\tilde\zeta_\sph^+(x,\sqrt{2\lambda}\hat x)
+\int_0^\theta\nabla_\omega
\tilde\zeta_\sph^+(x,\sqrt{2\lambda}\omega(l))\cdot \omega^\perp(l)\d  l,
\end{eqnarray*}
where $[0,\theta]\ni l\mapsto\omega(l)$ is the arc joining $\hat x$ and
$\omega$ and $\omega^\perp(l)$ is the tangent vector. Using (\ref{esto})
and 
(\ref{eq:psi_bound_der}) with $|\delta|=1$, $|\gamma|=0$ and   Lemma
\ref{lemmo} we obtain 
(\ref{eq:psi_bound_der.1}).

The above arguments in conjunction with the proof of  Lemma
\ref{lemmo} can be used to prove that there exist the limits
\[\lim_{\lambda\searrow0}\partial _\omega^\delta\partial _x^\gamma
 \tilde \zeta^+
 (x,\omega,\lambda), \ \ \ |\delta|+|\gamma|\geq1.\]

 We know from the explicit formula (\ref{esto22}) that
 $\lim_{\lambda\searrow0}\tilde\zeta_\sph^+ 
 (x,\hat x,\lambda)$ exists locally uniformly in $x$. Hence so does
 $\lim_{\lambda\searrow0}\tilde\zeta^+
 (x,\omega,\lambda)$ locally uniformly in $(x,\omega)\in\Gamma^+$.

As for the bounds (\ref{eq:psi_bound_der1}), we use (\ref{eq:psi_bound_der.1})
and 
(\ref{eq:psi_bound_der}). 
\qed

\subsection{Solving transport equations}
\label{Solving transport equations}

Introduce the operator
\begin{eqnarray*}
M&=&\left(\det\nabla_\xi\nabla_x
\phi^+(x,\xi)\right)^{-1/2}\tfrac {\i}{2 }\triangle_x 
\left(\det\nabla_\xi\nabla_x
\phi^+(x,\xi)\right)^{1/2}\\
&=&\e^{-\tilde\zeta^+(x,\xi)}\tfrac {\i}{2 }\triangle_x 
\e^{\tilde\zeta^+(x,\xi)}\\
&=&\tfrac {\i}{2 }\left(
\triangle_x +2\nabla_x\zeta^+(x,\xi)\cdot \nabla_x+\triangle_x
\zeta^+(x,\xi)+\nabla_x\zeta^+(x,\xi)^2\right). 
\end{eqnarray*} Notice that due to Proposition \ref{prop:booh} this
operator  is well-defined at $\lambda=0$ (more
precisely, for $(x,\omega,\lambda)\in \Gamma^+_{R_0,\sigma_0}\times\{0\}$).

We define inductively for $(x,\xi)\in\Gamma_{R,\sigma_0}^+$:
\begin{eqnarray*}
b_0^+(x,\xi)&:=&1;\\
b_{m+1}^+(x,\xi)&:=&\int_1^\infty Mb_m^+(y(t,x,\xi,t),\xi)\d
t.\end{eqnarray*}

\begin{prop}
There exist   the  following (uniform)   estimates:\begin{subequations}
\begin{eqnarray}\label{eq:5}
\partial _\omega^\delta\partial _x^\gamma b_m^+(x,\xi)&=
&O(|x|^{-m(1-\mu/2)-|\gamma|}) ,\\ \label{eq:6}
\partial _\omega^\delta\partial _x^\gamma Mb_m^+(x,\xi)&=
&O(|x|^{-2-m(1-\mu/2)-|\gamma|}).
\end{eqnarray}
\end{subequations}
\end{prop}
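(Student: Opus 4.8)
The plan is to establish \eqref{eq:5} and \eqref{eq:6} simultaneously by induction on $m$, running the implications in the order ``\eqref{eq:5} at level $m$ $\Rightarrow$ \eqref{eq:6} at level $m$ $\Rightarrow$ \eqref{eq:5} at level $m+1$'', starting from $m=0$. The base case is immediate, since $b_0^+\equiv1$ gives $\partial_\omega^\delta\partial_x^\gamma b_0^+=O(|x|^{-|\gamma|})$ (the left-hand side vanishing unless $\delta=\gamma=0$). For the passage \eqref{eq:5}$\,\Rightarrow\,$\eqref{eq:6} at a fixed level $m$ I would expand
\[Mb_m^+=\tfrac{\i}{2}\Big(\triangle_x b_m^++2(\nabla_x\zeta^+)\cdot\nabla_x b_m^++(\triangle_x\zeta^+)\,b_m^++(\nabla_x\zeta^+)^2\,b_m^+\Big),\]
differentiate by the Leibniz rule, and insert the inductive bound \eqref{eq:5} together with $\partial_\omega^\delta\partial_x^\gamma\nabla_x\zeta^+=O(|x|^{-1-|\gamma|})$ and $\partial_\omega^\delta\partial_x^\gamma\triangle_x\zeta^+=O(|x|^{-2-|\gamma|})$, which follow from Proposition \ref{prop:booh} via \eqref{eq:psi_bound_der} (recall that $\zeta^+$ and $\tilde\zeta^+$ differ by a constant and that $\zeta^+$ and all its derivatives extend continuously to $\lambda=0$). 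Each of the four resulting families of terms is then $O(|x|^{-m(1-\mu/2)-2-|\gamma|})$, i.e.\ \eqref{eq:6} at level $m$.

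For the passage \eqref{eq:6} at level $m$ $\Rightarrow$ \eqref{eq:5} at level $m+1$ I would start from the defining identity $b_{m+1}^+(x,\xi)=\int_1^\infty(Mb_m^+)(y^+(t,x,\xi),\xi)\,\d t$ and differentiate under the integral sign (justified by the bounds derived below, as in the proof of Lemma \ref{lemmo}). By the chain rule $\partial_\omega^\delta\partial_x^\gamma b_{m+1}^+$ is a finite sum of time-integrals of terms
\[\Big(\partial_\omega^{\delta_1}\partial_x^{\gamma_1}y^+\cdots\partial_\omega^{\delta_n}\partial_x^{\gamma_n}y^+\Big)\big(\partial_y^n\partial_\omega^\nu Mb_m^+\big)(y^+(t,x,\xi),\xi),\]
with $\delta_1+\cdots+\delta_n+\nu=\delta$ and $\gamma_1+\cdots+\gamma_n=\gamma$, exactly as in the proofs of Lemma \ref{lemmo} and Proposition \ref{prop:booh}. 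Inserting the trajectory-derivative bounds of \cite{DS1} --- which supply the factor $|x|^{-|\gamma|}$ together with powers of $g(|x|)/g(|y^+|)$ --- and the inductive bound \eqref{eq:6} for $\partial_y^n\partial_\omega^\nu Mb_m^+$, each such term is dominated by $C|x|^{-|\gamma|}$ times a product of $g$'s times $|y^+(t)|^{-2-m(1-\mu/2)-n}$; one then changes variables from $t$ to $r=|y^+(t)|$, using that $\tfrac{\d}{\d t}|y^+(t)|$ is comparable to $g(|y^+(t)|)$ in the outgoing region, and estimates the resulting $r$-integral, the exponent arithmetic being $\int_1^\infty|y^+(t)|^{-2-m(1-\mu/2)}\,\d t=O(|x|^{-(1-\mu/2)-m(1-\mu/2)})=O(|x|^{-(m+1)(1-\mu/2)})$. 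Altogether $\partial_\omega^\delta\partial_x^\gamma b_{m+1}^+=O(|x|^{-(m+1)(1-\mu/2)-|\gamma|})$, which is \eqref{eq:5} at level $m+1$ and closes the induction. All bounds come out uniformly in $\lambda\in[0,\lambda_0]$, $\lambda=0$ included: there the trajectories $y^+(t,x,\omega,0)$ exist by Lemma \ref{lemma:mixed_2} and $M$ by Proposition \ref{prop:booh}, and the integrals still converge by the same estimates.

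The routine parts are the base case, the Leibniz bookkeeping in \eqref{eq:5}$\,\Rightarrow\,$\eqref{eq:6}, and the combinatorics of the chain rule. The one genuinely delicate point --- and where Condition \ref{assump:conditions1} (\ref{it:assumption1}), the virial bound \eqref{eq:virial}, and (when $V_2\neq0$) Condition \ref{assump:conditions2} come into play --- is showing that the $r$-integrals above converge \emph{at the claimed rate, uniformly in $\lambda\geq0$}, once the trajectory-derivative factors $g(|x|)^n/g(|y^+|)^n$ are present: substituting only $g(r)\geq c\,r^{-\mu/2}$ produces a spurious factor $(g(|x|)|x|^{\mu/2})^n$ which is harmless at $\lambda=0$ but grows as $|x|\to\infty$ when $\lambda>0$. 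This is dealt with exactly as in the proof of Lemma \ref{lemmo}: after the change of variables one integrates by parts in $r$ and uses \eqref{eq:virial} to absorb the offending $g$-powers --- obtaining an estimate of the shape $\int_{|x|}^\infty g(r)^{-n-1}r^{-\beta}\,\d r\leq C\,g(|x|)^{-n-1}|x|^{1-\beta}$ for the relevant $\beta>1$ --- after which the lower bound $g(r)\geq c\,r^{-\mu/2}$ delivers the desired uniform rate. I expect this uniform-in-$\lambda$ convergence to be the main obstacle.
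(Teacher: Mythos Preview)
Your inductive scheme is the same as the paper's and is correct: \eqref{eq:5} at level $m$ gives \eqref{eq:6} at level $m$ by Leibniz and Proposition~\ref{prop:booh}, and \eqref{eq:6} at level $m$ gives \eqref{eq:5} at level $m+1$ by differentiating under the integral, expanding by the chain rule, and integrating along the orbit. The paper's proof is terse on the last step but follows exactly this line.

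The one point to correct is your treatment of the $g$-factors. You are right that substituting $g(r)\geq c\,r^{-\mu/2}$ while the ratios $g(|x|)^n/g(|y^+|)^n$ from the trajectory-derivative bounds are still present would spoil uniformity in $\lambda>0$. But no integration by parts is needed, and the proof of Lemma~\ref{lemmo} to which you appeal does not contain one. The virial condition \eqref{eq:virial} already gives that $r\mapsto rg(r)$ is nondecreasing: differentiating $(rg(r))^2=r^2\big(2\lambda-2V_1(r)\big)$ yields $2r\big(2\lambda-2V_1-rV_1'\big)\geq 2r\big(2\lambda-\tilde\epsilon_1V_1\big)>0$. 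Hence $g(|x|)/g(|y^+|)\leq |y^+|/|x|$ along the outgoing orbit, so the trajectory bound $\partial_\omega^\delta\partial_x^\gamma y^+=O\!\big(|x|^{1-|\gamma|}g(|x|)/g(|y^+|)\big)$ from \cite{DS1} simplifies to $O(|x|^{-|\gamma|}|y^+|)$. The chain-rule terms then contribute $O\!\big(|x|^{-|\gamma|}|y^+|^{-2-m(1-\mu/2)}\big)$ with no residual $g$-ratios. Converting the time integral via $\d|y^+|/\d t\geq c\,g(|y^+|)$ and only \emph{afterwards} substituting $g(r)^{-1}\leq Cr^{\mu/2}$ gives
\[
|x|^{-|\gamma|}\int_{|x|}^\infty r^{-2-m(1-\mu/2)+\mu/2}\,\d r=O\!\big(|x|^{-(m+1)(1-\mu/2)-|\gamma|}\big),
\]
uniformly in $\lambda\geq0$, which is the paper's three-line computation.
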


\proof For a given $m$, (\ref{eq:5}) easily implies  (\ref{eq:6}).

Integrating
$\partial _\omega^\delta\partial _x^\gamma Mb_m(x,\xi)$ we
can bound $\partial _\omega^\delta\partial _x^\gamma b_{m+1}(x,\xi)$
 by
\begin{eqnarray*}
&&\int_1^\infty |y^+|^{-2-m(1-\mu/2)-|\gamma|}\d t\\
&\leq& C_1\int_{|x|}^\infty |y^+|^{-2-m(1-\mu/2)-|\gamma|}g(|y^+|)^{-1}\d |y^+|\\
&\leq &C_2\int_{|x|}^\infty |y^+|^{-2-m(1-\mu/2)-|\gamma|+\mu/2}\d |y^+|
= O(|x|^{-(m+1)(1-\mu/2)-|\gamma|})
.\end{eqnarray*}
This shows the induction step. \qed

   
We set
\begin{eqnarray*}
b^+(x,\xi)&:=&\chi_2(\hat x\cdot\omega)\breve b^+(x,\xi),\ \ \ \breve b^+(x,\xi)=\sum_{m=0}^\infty
b_m^+(x,\xi)\chi_1(|x|/R_m)\end{eqnarray*}
for an appropriately chosen sequence $R_m\to\infty$ (this is an
example of the so-called
 Borel construction, cf. \cite [Proposition 18.1.3]
{Ho2}). There are (uniform) bounds
\begin{equation*}
  \partial _\omega^\delta\partial _x^\gamma b^+(x,\xi)
=O(|x|^{-|\gamma|}).
\end{equation*}


We introduce 
\begin{eqnarray}
r^+(x,\xi)&=&
\left(\nabla_x\phi^+(x,\xi)\cdot \nabla_x+M\right)b^+(x,\xi)\nonumber,\\
r_\pr^+(x,\xi)&=&
\chi_2(\hat x\cdot\omega)
\left(\nabla_x\phi^+(x,\xi)\cdot \nabla_x+M\right)\breve b^+(x,\xi),\nonumber\\
r_\bd^+(x,\xi)&=&r^+(x,\xi)-r_\pr^+(x,\xi).\label{eq:r2sum
}
\end{eqnarray}
(The subscript $\pr$ stands for the {\em propagation} and $\bd$ stands
for the {\em boundary}).

\begin{prop} \label{prop:brrrr}There exist (uniform) bounds 
\begin{eqnarray*}
\partial_\omega^\delta\partial_x^\gamma r_\pr^+(x,\xi) =O(|x|^{-\infty})
,\end{eqnarray*}
 and $r_\bd^+(x,\xi)$ is supported away from $\Gamma_{R_0,\sigma}^+$ and
\begin{eqnarray*}
\partial_\omega^\delta\partial_x^\gamma r_\bd^+(x,\xi) =O(g(|x|)|x|^{-1-|\gamma|}).
\end{eqnarray*}
\end{prop}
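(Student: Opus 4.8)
The plan is to reduce both assertions to the \emph{transport equations} satisfied by the $b_m^+$, namely $(\nabla_x\phi^+\cdot\nabla_x)b_0^+=0$ and $(\nabla_x\phi^+\cdot\nabla_x)b_{m+1}^+=-Mb_m^+$ for $m\geq0$, which I would establish first. The first order operator $\nabla_x\phi^+(x,\xi)\cdot\nabla_x=F^+(x,\xi)\cdot\nabla_x$ is the generator of the characteristic flow $s\mapsto y^+(s,x,\xi)$, $s\geq1$: indeed $\tfrac{\d}{\d s}y^+(s,x,\xi)=\dot y^+(s,x,\xi)=F^+(y^+(s,x,\xi),\xi)$, where the last equality comes from $F^+(\cdot,\xi)=\dot y^+(1,\cdot,\xi)$ together with the semigroup property $y^+(t,y^+(s,x,\xi),\xi)=y^+(t+s-1,x,\xi)$ for $s\geq1$, itself a consequence of the uniqueness in Lemma \ref{lemma:mixed_2}. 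The same semigroup property rewrites the defining integral as $b_{m+1}^+(y^+(s,x,\xi),\xi)=\int_s^\infty(Mb_m^+)(y^+(u,x,\xi),\xi)\,\d u$, and differentiating at $s=1$ gives the second transport equation, the first being trivial since $b_0^+\equiv1$.

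For $r_\pr^+$ the point is that formally $(\nabla_x\phi^+\cdot\nabla_x+M)\sum_m b_m^+=0$. Concretely, for each $N$ the transport equations telescope to $(\nabla_x\phi^+\cdot\nabla_x+M)\sum_{m=0}^{N-1}b_m^+=Mb_{N-1}^+$, which by \eqref{eq:6} is $O(|x|^{-2-(N-1)(1-\mu/2)})$ together with all $\partial_\omega^\delta\partial_x^\gamma$-derivatives, uniformly in $\lambda\in[0,\lambda_0]$. The Borel construction defining $\breve b^+$ (with the radii $R_m\to\infty$ chosen increasing fast enough, cf. \cite[Proposition 18.1.3]{Ho2}) is arranged so that $\breve b^+-\sum_{m=0}^{N-1}b_m^+$ is, for $|x|$ large, $O(|x|^{-N(1-\mu/2)})$ with all derivatives, again uniformly in $\lambda$; since $\nabla_x\phi^+\cdot\nabla_x+M$ does not raise the $\langle x\rangle$-order of a symbol — in fact it lowers it, using $\nabla_x\phi^+=O(g)=O(1)$ and the bounds \eqref{eq:psi_bound_der} on $\zeta^+$, which give $\nabla_x\zeta^+=O(|x|^{-1})$ and $\triangle_x\zeta^+,\,|\nabla_x\zeta^+|^2=O(|x|^{-2})$ — applying it preserves this decay. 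Letting $N\to\infty$ gives $(\nabla_x\phi^+\cdot\nabla_x+M)\breve b^+=O(|x|^{-\infty})$ with all derivatives, and $r_\pr^+=\chi_2(\hat x\cdot\omega)(\nabla_x\phi^+\cdot\nabla_x+M)\breve b^+$ inherits it. I expect this Borel step to be the main (though essentially routine) obstacle: one must carry out the summation compatibly with the uniformity in $\lambda\geq0$, but this is automatic because the bounds \eqref{eq:5}--\eqref{eq:6} on $b_m^+$ and $Mb_m^+$ are themselves uniform in $\lambda$.

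For $r_\bd^+$ I would write, with $L:=\nabla_x\phi^+\cdot\nabla_x+M$ and $\chi_2:=\chi_2(\hat x\cdot\omega)$, that $r_\bd^+=r^+-r_\pr^+=L(\chi_2\breve b^+)-\chi_2L\breve b^+=[L,\chi_2]\breve b^+$; since $L$ has order two with smooth coefficients the commutator is of order one and its coefficients involve only $\nabla_x\chi_2$ and $\triangle_x\chi_2$:
\begin{equation*}
r_\bd^+=(\nabla_x\phi^+\cdot\nabla_x\chi_2)\,\breve b^++\tfrac{\i}{2}\Big(2\,\nabla_x\chi_2\cdot\nabla_x\breve b^++\breve b^+\big(\triangle_x\chi_2+2\nabla_x\zeta^+\cdot\nabla_x\chi_2\big)\Big).
\end{equation*}
Because $\chi_2(l)\equiv1$ for $l\geq1-\sigma$, both $\nabla_x\chi_2$ and $\triangle_x\chi_2$ vanish on $\{\hat x\cdot\omega\geq1-\sigma\}$, so $r_\bd^+$ is supported in $\{1-\sigma'<\hat x\cdot\omega<1-\sigma\}$, disjoint from $\Gamma^+_{R_0,\sigma}$. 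For the quantitative bound I would insert $\partial_\omega^\delta\partial_x^\gamma\nabla_x\chi_2=O(|x|^{-1-|\gamma|})$, $\partial_\omega^\delta\partial_x^\gamma\triangle_x\chi_2=O(|x|^{-2-|\gamma|})$, $\partial_\omega^\delta\partial_x^\gamma\breve b^+=O(|x|^{-|\gamma|})$, $\partial_\omega^\delta\partial_x^\gamma\nabla_x\phi^+=O(|x|^{-|\gamma|}g(|x|))$ (by \eqref{eq:derF222}) and $\partial_\omega^\delta\partial_x^\gamma\nabla_x\zeta^+=O(|x|^{-1-|\gamma|})$ (by \eqref{eq:psi_bound_der}): the first term is then exactly $O(g(|x|)|x|^{-1-|\gamma|})$, while each of the others is $O(|x|^{-2-|\gamma|})$, which is absorbed once one notes that $g(|x|)\geq c|x|^{-1}$ for $|x|\geq R_0$ — a consequence of $g(|x|)^2\geq-2V_1(|x|)\geq2\epsilon_1|x|^{-\mu}$ (Condition \ref{assump:conditions1}) and $\mu<2$. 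Collecting the three terms yields $\partial_\omega^\delta\partial_x^\gamma r_\bd^+=O(g(|x|)|x|^{-1-|\gamma|})$, as claimed.
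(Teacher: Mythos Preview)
The paper states Proposition~\ref{prop:brrrr} without proof, so there is nothing to compare against directly; your task was to supply what the paper left implicit, and you have done so correctly.

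Your derivation of the transport equation $(\nabla_x\phi^+\cdot\nabla_x)b_{m+1}^+=-Mb_m^+$ via the semigroup property of the flow $y^+$ is the right mechanism, and the telescoping identity $(\nabla_x\phi^+\cdot\nabla_x+M)\sum_{m=0}^{N-1}b_m^+=Mb_{N-1}^+$ follows cleanly from it. Combined with the Borel remainder estimate and the fact that the operator $\nabla_x\phi^+\cdot\nabla_x+M$ does not raise the $\langle x\rangle$-order (your bounds on $\nabla_x\phi^+$, $\nabla_x\zeta^+$, $\triangle_x\zeta^+$ are exactly what is needed here), this gives $r_\pr^+=O(|x|^{-\infty})$ with derivatives, uniformly in $\lambda$.

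Your treatment of $r_\bd^+$ as the commutator $[L,\chi_2]\breve b^+$ is the natural one. The explicit expansion into three terms and the termwise bounds are all correct; in particular your observation that the subleading terms $O(|x|^{-2-|\gamma|})$ are absorbed into $O(g(|x|)|x|^{-1-|\gamma|})$ via $g(|x|)\geq c|x|^{-\mu/2}\geq c|x|^{-1}$ is exactly the right way to close the estimate. One cosmetic point: you write the support as the \emph{open} region $\{1-\sigma'<\hat x\cdot\omega<1-\sigma\}$, whereas the closed support may touch the boundary $\hat x\cdot\omega=1-\sigma$; but since $\nabla_x\chi_2$ and $\triangle_x\chi_2$ vanish on $\{\hat x\cdot\omega\geq 1-\sigma\}$ (by smoothness of $\chi_2$), the function $r_\bd^+$ itself vanishes on $\Gamma_{R_0,\sigma}^+$, which is the intended content of ``supported away from.''
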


\subsection{Constructions in incoming region}
\label{Constructions in incoming region}

Using the phase function $\phi^-=\phi^-(x,\omega,\lambda)$ given in
\eqref{eq:rel_phi+-} we can construct a
symbol $a^-=\e^{\zeta^-} b^-$ with   
$t^-=\e^{\zeta^-}( r_\pr^-+r_\bd^-)$,
$r_\pr^-
=O(|x|^{-\infty})$ and
the  symbol
 $r_\bd^-=O\big (g(|x|)|x|^{-1}\big )$ vanishing  on a given $
\Gamma^-_{R,\sigma}  \subseteq
 \Gamma^-_{R_0,\sigma_0}$
 and obeying appropriate analogues of the conditions of
 the previous subsection.

Similar to \eqref{eq:int_ope} we consider the Fourier integral
operator $J^-$ on $L^2(\R^d)$ given by 
\begin{equation}
  \label{eq:int_ope2}
  (J^-f)(x)=(2\pi)^{-d/2}\int \e^{\i\phi^-(x,\xi)}a^-(x,\xi) \hat
  f(\xi) \d \xi.
\end{equation}

\subsection{Fourier integral operators at fixed energies}
\label{Fourier integral operators at fixed energies}

 For all $\tau\in L^2(S^{d-1})$ we introduce
\begin{eqnarray}
(J^{\pm}(\lambda)\tau)(x)&:=&
(2\pi)^{-d/2}
\int
  \e^{\i \phi^{\pm}(x,\omega,\lambda)}\tilde a^{\pm}(x,\omega,\lambda)
  \tau(\omega)\d \omega,\label{eq:jdef}\\
(T^{\pm}(\lambda)\tau)(x)&:=&
(2\pi)^{-d/2}
\int
  \e^{\i \phi^{\pm}(x,\omega,\lambda)}\tilde t^{\pm}(x,\omega,\lambda)
  \tau(\omega)\d \omega,\label{eq:tdef}
\end{eqnarray} 
where
\begin{eqnarray*} \tilde a^{\pm}(x,\omega,\lambda)&:=&
(2\lambda)^{(d-2)/4} a^{\pm}(x, \sqrt{2\lambda}\omega),\\
 \tilde t^{\pm}(x,\omega,\lambda)&:=&
(2\lambda)^{(d-2)/4} t^{\pm}(x,\sqrt{2\lambda}\omega).\end{eqnarray*}

The functions $\tilde a^{\pm}$ and $\tilde t^{\pm}$ are continuous in
$(x,\omega,\lambda)\in \R^d\times S^{d-1}\times [0,\infty)$. This fact
will be very important in the forthcoming sections. Due to these
properties  we can {\it define} $J^{\pm}(\lambda)$ and $T^{\pm}(\lambda)$
at $\lambda=0$ by the expressions (\ref{eq:jdef}) and (\ref{eq:tdef}),
respectively.  We can split  $T^{\pm}
(\lambda)=T^{\pm}_\bd(\lambda)+T^{\pm}_\pr (\lambda)$ in agreement
with the decomposition (\ref{eq:r2sum
}) (cf. (\ref{eq:rega2})).

Throughout
this subsection 
$\breve \epsilon$ signifies the $\breve \epsilon>0$   appearing in
Proposition 
\ref{prop:mixed_2aaaa} (it is tacitly
assumed that $\breve \epsilon< 1-\mu/2)$). For the problems at hand we
can use coordinates for
$\omega\in S^{d-1}$ sufficiently close
to the $d$'th standard vector  $e_d\in \R^{d}$ specified as follows (using a
partition of unity in the $\hat x$--variable and a rotation of
coordinates this is without loss of generality):
\begin{equation}\label{perpy}\omega=\omega_\bot+\omega_d e_d;\;\omega_d=\sqrt
  {1-\omega_\bot^2},\;\omega_\bot\in \R^{d-1},\;|\omega_\bot| \text{ small}.\end{equation}
\begin{prop} There exist a (large) $R\geq R_0$ and a (small) $\tilde
  \sigma\in ]0,\sigma_0]$ such that for all $|x|\geq R$ there exists a
  unique $\omega\in S^{d-1}$ satisfying $\omega\cdot \hat x\geq
  1-\tilde\sigma$ (alternatively: $x\in \Gamma^+_{R,\tilde
    \sigma}(\omega)$) and
  $\partial_{\omega}\phi^+(x,\omega,\lambda)=0$.  We introduce
  the notation $\omega_{\crt}^+=\omega_{\crt}^+(x,\lambda)$ for this
  vector. It is smooth in $x$ and we have
\[\partial ^\gamma_x(\omega_{\crt}^+-\hat x)=O(|x|^{-\breve \epsilon-|\gamma|}).\]
 Let 
 \begin{equation}
   \label{eq:newph}
   \phi(x,\lambda)=\phi^+(x,\omega_\crt^+(x,\lambda),\lambda).
 \end{equation}
This function  solves  the eikonal equation
\begin{equation*}
(\partial_x \phi(x,\lambda))^2/2+V(x)=\lambda.
\end{equation*}
In the spherically symmetric case we have $\omega_\crt^+=\hat x$ and
\begin{equation}\label{eq:eik222}
\phi_\sph(x,\lambda)=\sqrt{2\lambda}R_0+\int_{R_0}^{|x|}\sqrt{2\lambda-2V(r)}\d
r.\end{equation}
\label{eiko}\end{prop}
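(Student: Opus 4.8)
The plan is to reduce to the spherically symmetric case and then perturb off it. Throughout put $\Psi:=\phi^+-\phi^+_{\sph}$ (subscript $\sph$ denoting, as in Subsection~\ref{Classical preliminaries1}, the object built from $V_1$), and recall that $\nabla_x\phi^+(x,\omega,\lambda)=F^+(x,\omega,\lambda)$ and that, by conservation of energy at $t=1$ in \eqref{eq:mixed conditions222}, $\tfrac12|F^+(x,\omega,\lambda)|^2+V(x)=\lambda$ holds identically on $\Gamma^+_{R_0,\sigma_0}\times[0,\infty[$. \emph{Spherically symmetric case ($V=V_1$).} By rotational invariance $\phi^+_{\sph}(x,\omega,\lambda)$ depends on $\omega$ only through $\theta:=\arccos(\hat x\cdot\omega)$, and by \eqref{phi} its $\theta$--derivative equals $L(|x|,\theta,\lambda)$, which is odd in $\theta$ (Lemma~\ref{pri}). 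Hence $\omega=\hat x$ (i.e.\ $\theta=0$) is a critical point of $\phi^+_{\sph}(x,\cdot,\lambda)$ on $S^{d-1}$, and evaluating \eqref{phi} at $\theta=0$ shows $\phi^+_{\sph}(x,\hat x,\lambda)$ is \eqref{eq:eik222}. In geodesic normal coordinates at $\hat x$ one has $\phi^+_{\sph}(x,\omega,\lambda)=\mathrm{const}+\tfrac12(\partial_\theta L)(|x|,0,\lambda)\theta^2+O(\theta^4)$, so the Hessian along the sphere at $\hat x$ is $(\partial_\theta L)(|x|,0,\lambda)$ times the identity on $T_{\hat x}S^{d-1}$; a computation using \eqref{phi} and the formulas of Subsection~\ref{Classical preliminaries1} (compare the perpendicular part of \eqref{estoaaa} with the formula for $\nabla_\omega F^+$ there) identifies $(\partial_\theta L)(|x|,0,\lambda)$ with $-h(|x|)$, $h$ as in \eqref{eq:8aammmi}, so by \eqref{eq:8aammm} this Hessian is nondegenerate with norm and inverse of norm bounded above and below by positive multiples of $|x|g(|x|)$, uniformly in $\lambda\ge0$. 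Finally \eqref{estia2} gives $|\theta^{-1}L(|x|,\theta,\lambda)-(\partial_\theta L)(|x|,0,\lambda)|\le C\theta^2|x|g(|x|)$, so after shrinking $\tilde\sigma$ (uniformly in $|x|\ge R$ and $\lambda\ge0$) $L(|x|,\cdot,\lambda)$ keeps a fixed sign on $]0,\theta(\tilde\sigma)]$, whence $\omega=\hat x$ is the \emph{only} critical point in $\{\omega\cdot\hat x\ge1-\tilde\sigma\}$.

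Next I would transfer smallness to $\Psi$. From the representation recorded just after Lemma~\ref{lemma:mixed_2}, $\phi^+(x,\omega,\lambda)=(x-R_0\omega)\cdot\int_0^1F^+\!\bigl(lx+(1-l)R_0\omega,\omega,\lambda\bigr)\,\d l+\sqrt{2\lambda}R_0$ (and likewise for $\phi^+_{\sph}$), one gets
\begin{equation*}
\Psi(x,\omega,\lambda)=(x-R_0\omega)\cdot\int_0^1\bigl(F^+-F^+_{\sph}\bigr)\!\bigl(lx+(1-l)R_0\omega,\omega,\lambda\bigr)\,\d l ,
\end{equation*}
and for $R$ large and $\tilde\sigma$ small the segment $l\mapsto lx+(1-l)R_0\omega$ stays in $\Gamma^+_{R_0,\sigma_0}(\omega)$ with $c|x|\le\tfrac{\d}{\d l}|lx+(1-l)R_0\omega|$. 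Differentiating under the integral and inserting \eqref{eq:F's} and \eqref{eq:derF2222}, using that $g$ is non--increasing, that $g(r)^2\ge2\epsilon_1 r^{-\mu}$ (Condition~\ref{assump:conditions1}) and that $\breve\epsilon<1-\tfrac\mu2$ --- the key elementary estimate being $\int_0^1\langle y\rangle^{-\breve\epsilon}g(|y|)\,\d l=O\bigl(|x|^{-1}\int_{R_0}^{|x|}\rho^{-\breve\epsilon}g(\rho)\,\d\rho\bigr)=O\bigl(|x|^{-\breve\epsilon}g(|x|)\bigr)$ --- I obtain, uniformly on $\Gamma^+_{R_0,\sigma_0}\times[0,\infty[$,
\begin{equation*}
\partial^\delta_\omega\partial^\gamma_x\Psi(x,\omega,\lambda)=\langle x\rangle^{1-\breve\epsilon-|\gamma|}O\bigl(g(|x|)\bigr),\qquad|\delta|+|\gamma|\ge1 .
\end{equation*}

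Now for the general case: the critical--point equation is $\Phi(x,\omega,\lambda):=(I-\omega\otimes\omega)\nabla_\omega\phi^+(x,\omega,\lambda)=0$ (the spherical gradient; one may equivalently pass to the coordinates \eqref{perpy} after a rotation and a partition of unity in $\hat x$). By the previous two paragraphs $\Phi_{\sph}(x,\hat x,\lambda)=0$, $\|(D_\omega\Phi_{\sph})(x,\hat x,\lambda)^{-1}\|=O\bigl((|x|g(|x|))^{-1}\bigr)$, and $\Phi-\Phi_{\sph}$, $D_\omega(\Phi-\Phi_{\sph})$, $D^2_\omega\Phi$ are $O\bigl(|x|^{1-\breve\epsilon}g(|x|)\bigr)$ on a ball of radius $O(|x|^{-\breve\epsilon})$ about $\hat x$. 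Since $|x|^{1-\breve\epsilon}g(|x|)=o\bigl(|x|g(|x|)\bigr)$, for $|x|\ge R$ ($R$ large) the map $\omega\mapsto\hat x-(D_\omega\Phi)(x,\hat x,\lambda)^{-1}\Phi(x,\omega,\lambda)$ contracts on $\{|\omega-\hat x|\le C|x|^{-\breve\epsilon}\}$, so it has a unique fixed point $\omega_{\crt}^+(x,\lambda)$ there --- the unique zero of $\Phi$ in that ball --- with $|\omega_{\crt}^+(x,\lambda)-\hat x|=O(|x|^{-\breve\epsilon})$; together with the lower bound $|\Phi(x,\omega,\lambda)|\ge c\,\theta|x|g(|x|)-C|x|^{1-\breve\epsilon}g(|x|)>0$ for $C|x|^{-\breve\epsilon}<\theta\le\theta(\tilde\sigma)$ (from $|L|\ge c\theta|x|g(|x|)$ minus the perturbation) this yields uniqueness in all of $\{\omega\cdot\hat x\ge1-\tilde\sigma\}$. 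Smoothness of $\omega_{\crt}^+(\cdot,\lambda)$ in $x$ follows from the implicit function theorem, using that $\phi^+(\cdot,\omega,\lambda)$ is $C^\infty$ in $x$ and $\nabla_\omega\phi^+$ is $C^\infty$ in $\omega$ even at $\lambda=0$ (all $\partial^\delta_\omega\partial^\gamma_x F^+$ exist and are continuous, Proposition~\ref{prop:mixed_2aaaa}). For the bounds $\partial^\gamma_x(\omega_{\crt}^+-\hat x)=O(|x|^{-\breve\epsilon-|\gamma|})$ I would differentiate the identities $\Phi(x,\omega_{\crt}^+(x,\lambda),\lambda)\equiv0$ and $\Phi_{\sph}(x,\hat x,\lambda)\equiv0$ in $x$, subtract, and induct on $|\gamma|$: each $x$--differentiation of $\omega_{\crt}^+$ or of $(D_\omega\Phi)^{-1}$ costs a factor $|x|^{-1}$, whereas every appearance of $\Phi-\Phi_{\sph}$, of an $x$--derivative of $\Psi$, of $F^+-F^+_{\sph}$, or of the evaluation--point difference $\omega_{\crt}^+-\hat x$ carries a relative factor $|x|^{-\breve\epsilon}$; the invertibility bound on $D_\omega\Phi$ near $\hat x$ then closes the induction.

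That $\phi(x,\lambda):=\phi^+(x,\omega_{\crt}^+(x,\lambda),\lambda)$ (cf.\ \eqref{eq:newph}) solves the eikonal equation is then immediate: by the chain rule the $i$--th component of $\nabla_x\phi(x,\lambda)$ is $(\partial_{x_i}\phi^+)(x,\omega_{\crt}^+,\lambda)+(\partial_{x_i}\omega_{\crt}^+)\cdot(\nabla_\omega\phi^+)(x,\omega_{\crt}^+,\lambda)$, and the second summand vanishes because $\partial_{x_i}\omega_{\crt}^+\in T_{\omega_{\crt}^+}S^{d-1}$ (as $\omega_{\crt}^+$ is sphere--valued) while $\nabla_\omega\phi^+(x,\omega_{\crt}^+,\lambda)$ is normal to $T_{\omega_{\crt}^+}S^{d-1}$ (its tangential part being $\Phi(x,\omega_{\crt}^+,\lambda)=0$); hence $\nabla_x\phi(x,\lambda)=F^+(x,\omega_{\crt}^+(x,\lambda),\lambda)$ and $\tfrac12(\partial_x\phi(x,\lambda))^2+V(x)=\tfrac12|F^+|^2+V=\lambda$. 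I expect the main obstacle to be the quantitative part of the first and third paragraphs --- identifying the Hessian of $\phi^+_{\sph}(x,\cdot,\lambda)$ on the sphere with a nondegenerate multiple of the identity of size comparable to $|x|g(|x|)$, and running the induction for $\omega_{\crt}^+-\hat x$ --- and in particular keeping the smallness measured by $\langle x\rangle^{-\breve\epsilon}$ \emph{uniformly} in $\lambda\ge0$, which forces the sharp comparison $g(r)^2\ge2\epsilon_1 r^{-\mu}$ rather than merely $g$ bounded; the eikonal step is short by comparison.
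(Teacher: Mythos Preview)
Your approach is exactly the one the paper sketches: handle the spherically symmetric case directly from \eqref{phi}, then perturb via a fixed--point argument using the Hessian identification $\partial^2_\omega\phi^+_{\sph}(x,\hat x,\lambda)=-h(|x|)\,I$ with $h\asymp |x|g(|x|)$ (the computation the paper defers to the proof of Theorem~\ref{thm:kkk}, cf.\ \eqref{eq:lowbound eig3333} and \eqref{eq:8aammm}). Two slips to fix: first, the map $\omega\mapsto\hat x-(D_\omega\Phi)(x,\hat x,\lambda)^{-1}\Phi(x,\omega,\lambda)$ is neither a contraction nor does a fixed point of it solve $\Phi=0$ --- you want the Newton form $\omega\mapsto\omega-(D_\omega\Phi)(x,\hat x,\lambda)^{-1}\Phi(x,\omega,\lambda)$, whose derivative $A^{-1}\bigl(D_\omega\Phi(x,\hat x)-D_\omega\Phi(x,\omega)\bigr)$ is $O(|x|^{-\breve\epsilon})$ on the ball; second, $D^2_\omega\Phi$ is only $O(|x|g(|x|))$ by \eqref{eq:ole}/\eqref{eq:derF222}, not $O(|x|^{1-\breve\epsilon}g(|x|))$, but this weaker bound is precisely what the contraction estimate needs.
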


The proposition is obvious in the case $V_2=0$, cf. (\ref{phi}). 
 The  general case follows by an application of the fixed point
 theorem, cf. the proof of the similar statement \cite[Lemma
4.1]{II}. At this point one needs some control of the Hessian; we
refer the reader to the proof of Theorem \ref{thm:kkk}.
 
Of course, there is an analogue of Proposition  \ref{eiko} in the $-$
case;  
we then need to replace $\phi^+$ with $\phi^-$, and $\hat x$ with $-\hat
x$. We obtain
$\omega_{\crt}^-(x,\lambda)=-\omega_{\crt}^+(x,\lambda)$. Note the
identity
\[\phi(x,\lambda)=-\phi^-(x,\omega_\crt^-(x,\lambda),\lambda).\]

\begin{thm}\label{thm:kkk}
Let $\tau\in C^\infty(S^{d-1})$. Then
\begin{align}&
 \big (J^\pm(\lambda)\tau\big )(x)\nonumber\\&=
(2\pi)^{-{\tfrac{1}{2}}}\e^{\mp\i \pi\tfrac{d-1}{4}}
g^{-\tfrac{1}{2}}(r,\lambda)r^{-\tfrac{d-1}{2}}\big
 (\e^{\pm\i\phi(x, \lambda)}\tau(\pm\hat x)+O(r^{-\breve
   \epsilon}\big )\big).\label{eq:lowbound eigb} 
\end{align}  Moreover (\ref{eq:lowbound eigb}) is uniform in  
 $(\hat x, \lambda) \in S^{d-1}\times [0,\infty[$. The same
 asymptotics holds for \[\pm g^{-1}\hat x\cdot pJ^\pm(\lambda)\tau(x).\]
 \end{thm}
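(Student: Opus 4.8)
The strategy is to evaluate the $\omega$-integral in \eqref{eq:jdef} by stationary phase, uniformly in $(\hat x,\lambda)\in S^{d-1}\times[0,\infty[$. The relevant large parameter is $g(r,\lambda)\,r$: since $V_1(r)\le-\epsilon_1r^{-\mu}$ we have $g(r,\lambda)\ge c\,r^{-\mu/2}$, hence $g(r,\lambda)\,r\ge c\,r^{1-\mu/2}\to\infty$ uniformly in $\lambda\ge0$, and moreover $(g(r,\lambda)\,r)^{-1}=O(r^{-\breve\epsilon})$ because $\breve\epsilon<1-\mu/2$. I treat the $+$ case; the $-$ case is entirely analogous, using $\phi^-(x,\omega,\lambda)=-\phi^+(x,-\omega,\lambda)$, $\omega_\crt^-=-\omega_\crt^+$, $\phi^-(x,\omega_\crt^-,\lambda)=-\phi(x,\lambda)$ and the analogous construction of $a^-$.

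First I localise. The amplitude $\tilde a^+(x,\cdot,\lambda)$ is supported, through the factor $\chi_2(\hat x\cdot\omega)$ in $a^+$, in $\{\omega\cdot\hat x\ge1-\sigma'\}$, and for $|x|$ large all cut-offs entering $\tilde a^+$ are $\equiv1$ in a fixed neighbourhood of $\omega=\hat x$. On the part of the support where $1-\hat x\cdot\omega$ is bounded below, the phase is non-stationary: $|\nabla_\omega\phi^+|\ge c\sqrt{1-\hat x\cdot\omega}\,g(r,\lambda)\,r$ (Lemma \ref{lemma:jan} and its general-$V$ analogue via Proposition \ref{prop:mixed_2aaaa}), while $\omega$-derivatives of the amplitude and phase are controlled by \eqref{eq:psi_bound_der1}, \eqref{eq:5} and \eqref{eq:ole}; repeated integration by parts gives an $O(r^{-\infty})$ contribution. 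So it suffices to treat the integral over a small fixed neighbourhood of $\omega=\hat x$, where by Proposition \ref{eiko} the only critical point of $\phi^+(x,\cdot,\lambda)$ is $\omega_\crt^+=\omega_\crt^+(x,\lambda)$, with $\omega_\crt^+=\hat x+O(r^{-\breve\epsilon})$ and, by \eqref{eq:newph}, $\phi^+(x,\omega_\crt^+,\lambda)=\phi(x,\lambda)$.

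Then I apply the $(d-1)$-dimensional stationary phase formula in the coordinates \eqref{perpy}. Three inputs are needed. (i) The critical value of the phase is $\phi(x,\lambda)$. (ii) The Hessian $\mathrm{Hess}_{\omega_\crt^+}\phi^+$: in the spherically symmetric case $\omega_\crt^+=\hat x$, and from $\nabla_\omega\phi^+=-L(r,\theta,\lambda)\omega^\perp$, the oddness of $L$ in $\theta$ (Lemma \ref{pri}), and the identity $\partial_\theta L(r,0,\lambda)=-h(r)$ with $h(r)=\big(\int_r^\infty r'^{-2}g(r',\lambda)^{-1}\d r'\big)^{-1}$ read off from \eqref{estoaaa}, one gets $\mathrm{Hess}_{\hat x}\phi^+=-h(r)\,I_{d-1}$; for general $V$ it is a relative $O(r^{-\breve\epsilon})$ perturbation of this (Lemma \ref{lemmo}, Proposition \ref{prop:mixed_2aaaa} and the derivative bounds of \cite{DS1}), hence negative definite with $|\det|=h(r)^{d-1}(1+O(r^{-\breve\epsilon}))$ and signature $-(d-1)$. (iii) The amplitude at $\omega_\crt^+$: by \eqref{esto22}, Lemma \ref{lemmo}, Proposition \ref{prop:booh} and $\breve b^+=1+O(r^{-(1-\mu/2)})$ (from \eqref{eq:5}), $\tilde a^+(x,\omega_\crt^+,\lambda)=g(r,\lambda)^{-1/2}\big(r^{-1}h(r)\big)^{(d-1)/2}\big(1+O(r^{-\breve\epsilon})\big)$. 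Multiplying — and using $\tau(\omega_\crt^+)=\tau(\hat x)+O(r^{-\breve\epsilon})$ — the factors of $h(r)$ cancel and the leading term is exactly $(2\pi)^{-1/2}\e^{-\i\pi(d-1)/4}g(r,\lambda)^{-1/2}r^{-(d-1)/2}\e^{\i\phi(x,\lambda)}\tau(\hat x)$. The stationary-phase remainder and all the errors above are $O(r^{-\breve\epsilon})$ relative to this, because each further order in the expansion costs a factor $(g(r,\lambda)\,r)^{-1}=O(r^{-\breve\epsilon})$; this yields \eqref{eq:lowbound eigb}. Uniformity in $(\hat x,\lambda)$ down to $\lambda=0$ follows since every estimate used is uniform in $\lambda\ge0$, by the $\breve\epsilon$-uniform bounds of Propositions \ref{prop:mixed_2aaaa} and \ref{prop:booh} and Lemma \ref{lemmo}, the uniform lower bound on $g$, and the continuity up to $\lambda=0$ of $\phi^+$, $\tilde a^+$, $g$ and $\omega_\crt^+$.

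For the statement on $\pm g^{-1}\hat x\cdot p\,J^\pm(\lambda)\tau$, differentiating \eqref{eq:jdef} gives $\hat x\cdot p\,J^+(\lambda)\tau=(2\pi)^{-d/2}\int\e^{\i\phi^+(x,\omega,\lambda)}\big((\hat x\cdot\nabla_x\phi^+)\tilde a^+-\i\,\hat x\cdot\nabla_x\tilde a^+\big)\tau\,\d\omega$; the $\nabla_x\tilde a^+$ term is smaller by $r^{-1}/g(r,\lambda)=O(r^{-\breve\epsilon})$ (by \eqref{eq:psi_bound_der1} with one $x$-derivative), and $\hat x\cdot\nabla_x\phi^+=\hat x\cdot F^+=g(r,\lambda)+O\big((1-\hat x\cdot\omega)g(r,\lambda)\big)$ (Lemma \ref{lemma:jan}), which equals $g(r,\lambda)(1+O(r^{-2\breve\epsilon}))$ at $\omega_\crt^+$, so stationary phase reproduces $g(r,\lambda)$ times the leading term of $J^+(\lambda)\tau$ up to relative $O(r^{-\breve\epsilon})$; dividing by $g(r,\lambda)$ gives the claim, the sign $\pm$ accounting for $\hat x\cdot F^-=-g(r,\lambda)+\cdots$ at $\omega_\crt^-$. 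The main obstacle is running this stationary phase argument uniformly down to $\lambda=0$: the large parameter $g(r,\lambda)\,r$ is $x$-dependent and the amplitude carries a growing prefactor $g(r,\lambda)^{(d-2)/2}$, so one needs remainder estimates uniform in the parameters $(\hat x,\lambda)$; the inputs that make this possible are exactly the $\breve\epsilon$-uniform classical estimates (Proposition \ref{prop:mixed_2aaaa}, Lemma \ref{lemmo}, Proposition \ref{prop:booh}) together with the nondegeneracy of the Hessian, which ultimately rests on the virial bound \eqref{eq:virial}.
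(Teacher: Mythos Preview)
Your proof is correct and follows essentially the same approach as the paper: stationary phase in the $\omega$-variable with large parameter $h(r)\sim g(r,\lambda)r$, the Hessian computed first in the spherically symmetric case as $-h(r)I_{d-1}$ and then perturbed by $O(r^{-\breve\epsilon})$ via Proposition~\ref{prop:mixed_2aaaa}, and the amplitude at the critical point read off from \eqref{esto22} together with Lemma~\ref{lemmo} and Proposition~\ref{prop:booh}. You are somewhat more explicit than the paper about the non-stationary localisation and about the second part, but the core argument is the same.
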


\proof
We  invoke 
the method of
stationary phase 
(with a parameter given by the expression $h=h(r)$ of (\ref{eq:8aammmi})), cf. \cite[Theorem 7.7.6]{Ho1} or \cite[Theorem
4.3]{II}. For simplicity  we consider only the $+$ case and we abbreviate $\omega_{\crt}=\omega_{\crt}^+$.
 This method
yields (up to a minor point  that is resolved below) that 
\begin{align}\label{eq:lowbound eig}
 (J^+(\lambda)\tau\big )(x)=&(2\pi)^{-\frac d2}\e^{-\i \pi\tfrac{d-1}{4}}|\det
 (\partial^2_{\omega}\phi^+ (x,\omega_{\crt},
 \lambda)/2\pi)|^{-\tfrac{1}{2}}\nonumber\\&\times \e^{\i\phi^+(x,\omega_{\crt},
   \lambda)}\big (\tilde a^+(x,\omega_{\crt},
 \lambda)\tau(\omega_{\crt})+g^{\frac {d-2}2}O(r^{-\breve \epsilon})\big). 
\end{align}

Let us consider the Hessian. We first
  compute it  in the case $V_2=0$ choosing coordinates such that $\hat
  x= e_d$ and using \eqref{perpy}:
\begin{equation*}
  \partial^2_{\omega_\perp}\phi^+_{\sph}(\omega=\hat x)=-
  \partial_{\omega_\perp} 
  \partial_{\hat x} \phi^+_{\sph}(\omega=\hat x),
\end{equation*} and using the fact that 
\begin{equation}\label{eq:form1}
  \partial_{\omega_\perp}
  \partial_{\hat x}\phi^+_{\sph}(\omega=\hat x)= hI,
 \end{equation} cf. the computation (\ref{estoaaa}) (here $I$
 refers to the form on $TS^{d-1}_{\hat x=\omega}\times TS^{d-1}_\omega$ given
 by the Euclidean metric),
we obtain   that 
\begin{equation}\label{eq:lowbound eig3333}
  \partial^2_{\omega_\perp}\phi^+_{\sph}(\omega=\hat x)=-hI. 
\end{equation}
In particular the  critical point is non-degenerate in this case.

Since $\omega_\crt$ is a critical point, the second derivative has an
invariant geometric meaning. Therefore, we can drop the reference to the
special coordinates $\omega_\perp$ and we can  write simply 
$\partial_\omega^2$ for $\partial_{\omega_\perp}^2$ in the left  hand side of
\eqref{eq:lowbound eig3333}. The formula \eqref{eq:lowbound eig3333}
is then valid for all $\hat x \in S^{d-1}$.

The general
case is similar. In particular, after applying Proposition  \ref{prop:mixed_2aaaa}
and (\ref{eq:lowbound eig3333}),  we obtain
\begin{equation}
  \label{eq:8sammen}
 |\det
 (\partial^2_{\omega}\phi^+ (x,\omega_{\crt},
 \lambda))|=h^{d-1}(1+O(r^{-\breve \epsilon})). 
\end{equation}

We conclude  by combining (\ref{esto22}), Lemma \ref{lemmo},
Proposition \ref{prop:booh}, 
(\ref{eq:8sammen})  and the
construction of the symbol $\tilde a^+$ that  (\ref{eq:lowbound eig})
 and indeed  also (\ref{eq:lowbound eigb})
 hold. 

The second part of the theorem follows similarly.
\qed

 \section{Wave matrices}
\label{Wave matrices}

In this section we study (modified) wave matrices.
 We prove that  they
have  a limit at zero energy, in the sense of maps into an appropriate weighted
space. This implies asymptotic oscillatory formulas for the standard 
short-range and Dollard scattering matrices.

\subsection{Wave operators}
\label{Wave operators and S--matrix}

The following theorem is essentially well-known (follows from
(\ref{eq:postca})). It describes 
 a construction of modified wave operators similar to that of Isozaki-Kitada
\cite {IK1, IK2}. Notice, however, that the original construction involved
 energies strictly bounded
 away from zero. Notice also that the construction of $J^{\pm}$ in Section
 \ref{FIO}, although given under Conditions
\ref{assump:conditions1} and \ref{assump:conditions2}, in fact can  be
done under Condition \ref{symbol} as well.

\begin{thm}\label{thm:t11}
Suppose that $V$ satisfies Condition \ref{symbol}.  Then
\begin{align}  W^{\pm}f=\lim _{t\to {\pm}\infty}\e^{\i tH}J_0^{\pm}\e^{-\i tH_0}f=
\lim _{t\to {\pm}\infty}\e^{\i tH}J^{\pm}\e^{-\i tH_0}f;\;
  \hat f\in C_c(\R^d\setminus\{0\}). \label{eq:ooo1}
\end{align}
 The ``wave operator'' $W^\pm$ extends to an  isometric operator on
 $L^2(\R^d)$ satisfying  $HW^\pm=W^\pm H_0$, and its range is 
the absolutely continuous spectral subspaces of $H$. Moreover,
\begin{align}
  0=\lim _{t\to {\mp}\infty}\e^{\i tH}J_0^{\pm}\e^{-\i tH_0}f
=\lim _{t\to {\mp}\infty}\e^{\i tH}J^{\pm}\e^{-\i tH_0}f;\;
  \hat f\in C_c(\R^d\setminus\{0\}).
\label{eq:ooo}\end{align}
\end{thm}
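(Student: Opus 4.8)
The plan is to run the classical Isozaki--Kitada argument, splitting the statement into four essentially independent pieces: existence of the limits (Cook's method), the vanishing \eqref{eq:ooo}, isometry together with the intertwining relation, and identification of the range. The only inputs are the phase bound \eqref{eq:postca}, the symbol decompositions of Section~\ref{FIO} (in particular Proposition~\ref{prop:brrrr}, and its Condition~\ref{symbol} analogue), and two standard non-stationary phase facts about the free flow: for $\hat f\in C_c(\R^d\setminus\{0\})$ and every $N$, the state $\e^{-\i tH_0}f$ is, up to an error of size $O(|t|^{-N})$ in $L^2$, supported in the region $\{\,|x|\sim|t|,\ \pm\hat x\cdot\hat\xi\ge 1-\epsilon\,\}$ as $t\to\pm\infty$, for every $\epsilon>0$.

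\emph{Existence.} Set $T^\pm:=\i(HJ^\pm-J^\pm H_0)$, an FIO with the same phase $\phi^\pm$ and amplitude $t^\pm$ given by \eqref{eq:sym_dif}; since $\phi^\pm$ solves the eikonal equation on $\supp a^\pm$, the elliptic part of $t^\pm$ cancels. Then $\tfrac{\d}{\d t}\,\e^{\i tH}J^\pm\e^{-\i tH_0}f=\e^{\i tH}T^\pm\e^{-\i tH_0}f$, so by Cook's criterion it suffices to bound $\int_1^\infty\|T^\pm\e^{-\i tH_0}f\|\,\d t$. For $J_0^\pm$ the symbol $t_0^\pm$ is the sum of a term supported on the boundary of $\supp a_0^\pm$, which by the propagation facts contributes $O(t^{-\infty})$, and the term $\tfrac12(\triangle_x\phi^\pm)a_0^\pm=O(|x|^{\delta-2})$ with $\delta<1$, which on the bulk $|x|\sim t$ gives an integrable $O(t^{\delta-2})$. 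For $J^\pm$ one uses Proposition~\ref{prop:brrrr}: $r^\pm_\pr=O(|x|^{-\infty})$ contributes $O(t^{-\infty})$ (by the $|x|$-decay on $|x|\gtrsim t$ and by non-stationary phase on $|x|\lesssim t$), while $r^\pm_\bd$ is supported away from $\Gamma^\pm_{R_0,\sigma}$, hence killed by the propagation facts. That both modifiers give the same $W^\pm$ follows once one checks $a^\pm-a_0^\pm=O(|x|^{-\epsilon'})$ on the bulk region, with $\epsilon'>0$ from \eqref{eq:postca} and from $b^\pm_m=O(|x|^{-m(1-\mu/2)})$, plus boundary terms; then $\|(J^\pm-J_0^\pm)\e^{-\i tH_0}f\|\to0$.

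\emph{Vanishing, intertwining, isometry.} The amplitudes $a_0^\pm,a^\pm$ are supported in $\{\pm\hat x\cdot\hat\xi\ge 1-\sigma_0\}$ with $\sigma_0<2$, whereas for $t\to\mp\infty$ the state $\e^{-\i tH_0}f$ concentrates in the complementary cone; since $\nabla_\xi\phi^\pm(x,\xi)=x+O(|x|^\delta)$ with $\delta<1$, a non-stationary phase argument in $\xi$ gives $\|J_0^\pm\e^{-\i tH_0}f\|+\|J^\pm\e^{-\i tH_0}f\|=O(|t|^{-\infty})$, which is \eqref{eq:ooo}. For the intertwining, $\e^{\i sH}\e^{\i tH}J^\pm\e^{-\i tH_0}f=\e^{\i(t+s)H}J^\pm\e^{-\i(t+s)H_0}(\e^{\i sH_0}f)$ gives $\e^{\i sH}W^\pm=W^\pm\e^{\i sH_0}$ on the dense set $\{\hat f\in C_c(\R^d\setminus\{0\})\}$ and hence everywhere, i.e. $HW^\pm=W^\pm H_0$. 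For isometry, write $\|W^\pm f\|^2=\|f\|^2+\lim_t\langle\e^{-\i tH_0}f,((J_0^\pm)^*J_0^\pm-1)\e^{-\i tH_0}f\rangle$; since $\nabla_x\phi^\pm(x,\xi)=\xi+O(|x|^{\delta-1})$ is a small perturbation of the identity in the frequency variable, the composition $(J_0^\pm)^*J_0^\pm$ is, by stationary phase, a pseudodifferential operator with symbol $|a_0^\pm|^2$ modulo symbols decaying in $|x|$; as $|a_0^\pm|^2=1$ on the region where $\e^{-\i tH_0}f$ lives for $t\to\pm\infty$, the inner product tends to $0$, and $W^\pm$ extends by density to an isometry on $L^2(\R^d)$.

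\emph{Range and the main obstacle.} The inclusion $\mathrm{Ran}\,W^\pm\subseteq\mathcal H_{\mathrm{ac}}(H)$ is immediate from the intertwining, since $H_0$ is purely absolutely continuous, so every $W^\pm f$ has absolutely continuous spectral measure. The reverse inclusion is asymptotic completeness, and this is where the real work lies: the plan is to carry out the dual half of the Isozaki--Kitada construction, building ``adjoint'' modifiers $\check J^\pm$ adapted to the incoming/outgoing regions of $H_0$ and proving that $\slim_{t\to\pm\infty}\e^{\i tH_0}(\check J^\pm)^*\e^{-\i tH}1_{\mathrm c}(H)$ exist and are left inverses of $W^\pm$ on $1_{\mathrm c}(H)\mathcal H=1_{]0,\infty[}(H)\mathcal H$. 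This step consumes the limiting absorption principle (the positive-energy form of Proposition~\ref{prop:resolvent_basic2}, valid under Condition~\ref{symbol} by Remark~\ref{remarks:nosmmmmo}~\ref{it:csmoott3})), which also excludes singular continuous spectrum, together with the hypothesis in Condition~\ref{symbol} that $H$ has no positive eigenvalues; it yields $\mathcal H_{\mathrm{ac}}(H)=1_{]0,\infty[}(H)\mathcal H=\mathrm{Ran}\,W^\pm$. Thus the soft parts — existence, vanishing, isometry, intertwining — are routine Cook/propagation arguments, and the genuine obstacle, and the only place the spectral hypotheses of Condition~\ref{symbol} are used, is this completeness argument.
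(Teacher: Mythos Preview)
Your proposal is correct and is precisely the standard Isozaki--Kitada argument the paper has in mind: the paper does not give its own proof of Theorem~\ref{thm:t11}, merely stating that it is ``essentially well-known (follows from \eqref{eq:postca})'' and citing \cite{IK1,IK2}. Your Cook/propagation treatment of existence and of \eqref{eq:ooo}, the isometry via $(J_0^\pm)^*J_0^\pm\approx |a_0^\pm|^2$, and the identification of completeness as the only substantive step (requiring the positive-energy limiting absorption principle of Remark~\ref{remarks:nosmmmmo}~\ref{it:csmoott3}) together with the absence of positive eigenvalues) are exactly the ingredients that argument uses.
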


\begin{remarks*} We know that $J_0^\pm1_{]\epsilon,\infty[}(H_0)$ and
 $J^\pm1_{]\epsilon,\infty[}(H_0)$ are
  bounded for any $\epsilon>0$, but we {\it do not} know if $J_0^\pm$ and
 $J^\pm$ are bounded (not even under Conditions
\ref{assump:conditions1} and \ref{assump:conditions2}).
 This is the reason for restricting the choice of vectors in
  (\ref{eq:ooo1}) and   (\ref{eq:ooo}). An alternative, and
  equivalent, definition of $W^\pm$ as a  bounded  operator on $L^2(\R^d)$ is
  the following: \[W^{\pm}=\slim _{\epsilon \searrow0}\slim _{t\to {\pm}\infty}\e^{\i tH}J^{\pm}1_{]\epsilon,\infty[}(H_0)\e^{-\i tH_0}.\] 
\end{remarks*}

The following general  fact serves as the basic formula in stationary
scattering theory, see Appendix \ref{Appendix} for a derivation.
 
\begin{lemma}\label{lemma:t12} Suppose there are densely defined operators $\breve
  J^\pm$ and $\breve T^\pm$ on $L^2(\R^d)$ such that $\breve J^\pm1_{]\epsilon,\infty[}(H_0)$ and $\breve
  T^\pm1_{]\epsilon,\infty[}(H_0)$ are
  bounded for any $\epsilon>0$ and that $\breve T^\pm f=i(H\breve
  J^\pm-\breve J^\pm H_0)f$ for any $f\in L^2(\R^d)$ with 
$\hat f\in C_c(\R^d\setminus\{0\})$. Suppose there exists 
\[\breve W^\pm f:=\lim_{t\to\pm\infty}\e^{\i tH}\breve J^\pm\e^{-\i tH_0}f,\;
  \hat f\in C_c(\R^d\setminus\{0\}).\]

Then
we have the following  formula
\begin{eqnarray}\label{wave9}
\breve W^\pm f
 &=&\lim_{\epsilon\searrow0}
\int(\breve J^\pm+\i R(\lambda\mp \i
\epsilon)\breve T^\pm)\delta_\epsilon(\lambda) f \d \lambda,
\end{eqnarray} 
where $\delta_\epsilon(\lambda)=\frac{R_0(\lambda+\i\epsilon)-R_0(\lambda-\i\epsilon)}{2\pi\i}=
\frac{\epsilon}{\pi}\left((H_0-\lambda)^2+\epsilon^2\right)^{-1}$.
\end{lemma}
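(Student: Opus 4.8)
The plan is to convert the time-dependent definition of $\breve W^\pm$ into an integral over the spectral parameter by combining an Abel regularization in time with the elementary Laplace-transform identity $\int_0^\infty\e^{\i s(H-\lambda+\i\epsilon)}\,\d s=\i R(\lambda-\i\epsilon)$, which is legitimate because the spectrum of $H-\lambda+\i\epsilon$ lies in the open upper half-plane. I will treat the $+$ case; the $-$ case follows by the substitution $s\mapsto-s$ in the time integral.

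First I would fix $f$ with $\hat f\in C_c(\R^d\setminus\{0\})$ and let $K\subset\,]0,\infty[$ be a compact set carrying the spectral support of $f$ relative to $H_0$. Then $\e^{-\i sH_0}f\in 1_K(H_0)L^2(\R^d)$ for every $s$, the operators $\breve J^\pm$ and $\breve T^\pm$ restrict to \emph{bounded} operators on $1_K(H_0)L^2(\R^d)$, and $\|\breve T^\pm\e^{-\i sH_0}f\|\le C\|f\|$ uniformly in $s$. Differentiating $s\mapsto\e^{\i sH}\breve J^+\e^{-\i sH_0}f$ and using the hypothesis $\breve T^+f=\i(H\breve J^+-\breve J^+H_0)f$ (which presupposes $\breve J^+f\in D(H)$ for such $f$), one obtains $\e^{\i tH}\breve J^+\e^{-\i tH_0}f-\breve J^+f=\int_0^t\e^{\i sH}\breve T^+\e^{-\i sH_0}f\,\d s$, and letting $t\to+\infty$,
\[\breve W^+f=\breve J^+f+\int_0^\infty\e^{\i sH}\breve T^+\e^{-\i sH_0}f\,\d s,\]
the improper integral being convergent since $\breve W^+f$ exists. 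As the integrand is bounded in norm, an integration by parts in $s$ shows that this improper integral equals its Abel mean $\lim_{\epsilon\searrow0}\int_0^\infty\e^{-\epsilon s}\e^{\i sH}\breve T^+\e^{-\i sH_0}f\,\d s$.

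Next I would split $\e^{-\epsilon s}=\e^{-\epsilon s/2}\e^{-\epsilon s/2}$ and transfer one factor to $\e^{-\i sH_0}f$. The Fourier transform of the Cauchy kernel gives, for $\mu\in\R$ and $s\ge0$, $\e^{-\epsilon s/2}\e^{-\i s\mu}=\int_\R\tfrac{(\epsilon/2)/\pi}{(\mu-\lambda)^2+(\epsilon/2)^2}\,\e^{-\i s\lambda}\,\d\lambda$, hence by the spectral theorem for $H_0$,
\[\e^{-\epsilon s/2}\e^{-\i sH_0}f=\int_\R\delta_{\epsilon/2}(\lambda)\e^{-\i s\lambda}f\,\d\lambda,\qquad s\ge0.\]
Substituting this and invoking Fubini (legitimate because $\lambda\mapsto\|\delta_{\epsilon/2}(\lambda)f\|$ is integrable, being bounded and $O(\dist(\lambda,K)^{-2})$, because $\|R(\lambda-\i\epsilon/2)\|\le 2/\epsilon$, and because $\int_0^\infty\e^{-\epsilon s/2}\,\d s<\infty$), the Laplace-transform identity yields
\begin{align*}
\int_0^\infty\e^{-\epsilon s}\e^{\i sH}\breve T^+\e^{-\i sH_0}f\,\d s
&=\int_\R\Bigl(\int_0^\infty\e^{\i s(H-\lambda+\i\epsilon/2)}\,\d s\Bigr)\breve T^+\delta_{\epsilon/2}(\lambda)f\,\d\lambda\\
&=\i\int_\R R\bigl(\lambda-\tfrac{\i}{2}\epsilon\bigr)\breve T^+\delta_{\epsilon/2}(\lambda)f\,\d\lambda.
\end{align*}
Renaming $\epsilon/2\mapsto\epsilon$ and using $\breve J^+f=\int_\R\breve J^+\delta_\epsilon(\lambda)f\,\d\lambda$ (an exact identity, since $\int_\R\delta_\epsilon(\lambda)\,\d\lambda=I$ and $\delta_\epsilon(\lambda)$ commutes with $1_K(H_0)$) produces the asserted formula; the portion of the $\lambda$-integral over $]-\infty,0]$ is $O(\epsilon)$, so one may as well integrate over $\sigma(H_0)=[0,\infty[$. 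For $\breve W^-$ the substitution $s\mapsto-s$ replaces $\e^{\i sH}$ by $\e^{-\i sH}$ and the relevant identity becomes $\int_0^\infty\e^{-\i s(H-\lambda-\i\epsilon/2)}\,\d s=-\i R(\lambda+\i\epsilon/2)$, which produces the $R(\lambda+\i\epsilon)$ dictated by the lower sign in $R(\lambda\mp\i\epsilon)$.

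The one genuinely delicate point is the bookkeeping: all manipulations must be performed inside $1_K(H_0)L^2(\R^d)$, where $\breve J^\pm$ and $\breve T^\pm$ are bounded, and the single interchange of the $s$- and $\lambda$-integrals must be justified — this is the step I expect to demand the most care, although it is a routine application of Fubini's theorem once the uniform bounds $\|\breve T^\pm1_K(H_0)\|<\infty$, $\|R(\zeta)\|\le|\Im\zeta|^{-1}$ and the $L^1$-integrability of $\lambda\mapsto\|\delta_\epsilon(\lambda)f\|$ have been recorded. All remaining content is the scalar Cauchy-kernel/Laplace-transform calculation lifted through the spectral theorem.
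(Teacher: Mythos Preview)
Your proof is correct. The paper's approach (given in Appendix~A, Theorem~A.1) is organized differently: it applies Abel regularization directly to $\e^{\i tH}\breve J^\pm\e^{-\i tH_0}f$, then invokes the vector-valued Plancherel formula to obtain
\[
\breve W^\pm f=\lim_{\epsilon\searrow0}\int\tfrac{\epsilon}{\pi}\,R(\lambda\mp\i\epsilon)\breve J^\pm R_0(\lambda\pm\i\epsilon)f\,\d\lambda,
\]
and finally uses the resolvent identity $R(\lambda\mp\i\epsilon)\breve J^\pm=(\breve J^\pm+\i R(\lambda\mp\i\epsilon)\breve T^\pm)R_0(\lambda\mp\i\epsilon)$ to produce the factor $\delta_\epsilon(\lambda)=\tfrac{\epsilon}{\pi}R_0(\lambda-\i\epsilon)R_0(\lambda+\i\epsilon)$. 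Your route instead applies Duhamel first to isolate the $\breve T^\pm$ contribution, then converts the resulting time integral via the Poisson-kernel representation of $\e^{-\epsilon s/2}\e^{-\i sH_0}$ together with the Laplace transform of $s\mapsto\e^{\i sH}$ separately. The two arguments are close cousins; yours is slightly more elementary in that it replaces the appeal to vector-valued Plancherel by two scalar Fourier/Laplace identities plus a straightforward Fubini check, while the paper's version has the advantage of yielding the intermediate formula $\int\tfrac{\epsilon}{\pi}R\breve J^\pm R_0\,f\,\d\lambda$ along the way, which it uses elsewhere in the appendix.
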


 \subsection{Wave matrices at positive energies}\label{Wave matrices at positive  energies}

For any $s\in {\mathbb R}$ we recall the definition of weighted spaces
$L^{2,s}({\mathbb R^d}):=(1+x^2)^{-s/2}L^{2}({\mathbb R^d})$.

Let $\Delta_\omega$ denote the Laplace-Beltrami operator on the sphere
$S^{d-1}$. For  $n\in{\mathbb R}$ we define the Sobolev spaces on the sphere
$L^{2,n}(S^{d-1}):=(1-\Delta_\omega)^{-n/2}L^{2}(S^{d-1})$.

For $\lambda>0$  we introduce ${\mathcal
F}_0(\lambda)$  by
\begin{equation*}
{\mathcal F}_0(\lambda)f(\omega)=(2\lambda)^{(d-2)/4} \hat f(\sqrt
{2\lambda}\omega).
\end{equation*} 
Let  $s>\frac12$ and $n\geq0$.
Note that  ${\mathcal
F}_0(\lambda)$ is a bounded operator in the space $\mathcal B
(L^{2,s+n}(\R^d),L^{2,n}(S^{d-1}))$ 
 and depends continuously on $\lambda>0$.
Likewise,
  ${\mathcal F}_0(\lambda)^*\in
\mathcal B (L^{2,-n}(S^{d-1}),L^{2,-s-n}(\R^d))$ \and it also
depends continuously on
$\lambda>0$. Note also that the operator
 \begin{equation}\label{decom1}
 \int \oplus {\mathcal F}_0(\lambda)\, \d
\lambda:L^2(\R^d)\to \int_{0}^\infty \oplus L^2(S^{d-1})\;\d \lambda
 \end{equation}   is unitary;
  consequently the operators ${\mathcal
  F}_0(\lambda)$  diagonalize the operator $H_0$. Finally,
\begin{eqnarray}\label{free22}
{\rm s}-\lim_{\epsilon\searrow0}\delta_\epsilon(\lambda) 
&=& {\mathcal F}_0(\lambda)^*{\mathcal
  F}_0(\lambda)\text{ in }{\mathcal B}(L^{2,s}(\R^d)),L^{2,-s}(\R^d)).\end{eqnarray}

Due to the limiting
absorption principle we have the following partial analogue of (\ref{free22}) for the full
Hamiltonian, defined under  Condition \ref{symbol}:  Let $s>\frac12$ and 
\begin{equation}
\delta_\epsilon^V(\lambda)
:=\frac{R(\lambda+\i\epsilon)-R(\lambda-\i\epsilon)}{2\pi\i}.
\label{eq:sfs}\end{equation}
Then there exists
\begin{equation}
\delta^V(\lambda):={\rm s}-\lim_{\epsilon\searrow0}\delta_\epsilon^V(\lambda)\text{ in }{\mathcal B}(L^{2,s}(\R^d)),L^{2,-s}(\R^d)).
\label{eq:sfs1}\end{equation}
The operator-valued function $\delta^V(\cdot)$ is a strongly continuous function of
$\lambda>0$.

If  Conditions
\ref{assump:conditions1}--\ref{assump:conditions3} are true then
we can extend the definition of $\delta^V(\lambda)$ to include $\lambda=0$ if we
 demand that $s>\frac12+\frac\mu4$, and the corresponding
 operator-valued function will be a strongly continuous (in fact, norm continuous) function of
$\lambda\geq0$, 
cf. Remark \ref{remarks:nosmmmmo} \ref{it:csmoottaa}). 

In the remaining part of this section we shall assume that the
positive parameter $ \sigma'$   in
 (\ref{eq:chi^2}) is sufficiently  small (this requirement can be
 fulfilled uniformly in  $\lambda\geq0$). Notice that the condition
 conforms well with Lemma \ref{lemma:mixed_2}; we need it
 at various points, see for example the proof of Lemma \ref{lemma: statio}.

Formally, we have
$J^\pm(\lambda)=J^\pm{\mathcal F}_0(\lambda)^*$ and 
$T^\pm(\lambda)=T^\pm{\mathcal F}_0(\lambda)^*$. This suggests that \eqref{wave9} can
be used to define wave operators at a fixed energy. This idea is used in the
following theorem (which is  essentially well-known).
\begin{thm} \label{thm:1}
 Suppose that the potential satisfies Condition \ref{symbol}. Let $\epsilon>0$, $n\geq0$ and $\lambda>0$. Then
\begin{eqnarray}\label{wave22}
W^{\pm}(\lambda)&:=&J^{\pm}(\lambda)+\i R(\lambda{\mp}\i0)
T^{\pm}(\lambda)\end{eqnarray}
defines a bounded operator in $ {\mathcal
B}(L^{2,-n}(S^{d-1}),L^{2,-\frac12-\epsilon-n}(\R^d))$, which
 depends   continuously on
$\lambda>0$. It  depends only on the splitting of the potential $V$
 into $V_1$ and $V_3$ (but does not depend on the details of the
 construction of $J^\pm$). 
 For all $f\in L^{2,\frac12 +\epsilon}({\mathbb R}^d)$ and $g\in C_c(]0,\infty[)$, we have
\begin{equation}\label{eq:7usefbbb}
W^{\pm}g(H_0)f=\int_0^\infty g(\lambda)W^{\pm}(\lambda)
{\mathcal F}_0(\lambda) f \d\lambda.
\end{equation}
Moreover, 
\begin{equation}
  \label{eq:7usef}
  W^\pm(\lambda)W^\pm(\lambda)^*= 
 \delta^V(\lambda).\end{equation}
  We set 
\[w^{\pm}(\omega,\lambda)=W^{\pm}(\lambda)\delta_\omega,\]
 where 
$\delta_\omega$ denotes the delta-function at $\omega\in S^{d-1}$.
Then for
all multiindices $\delta$ the function 
\[S^{d-1}\times]0,\infty[\ni(\omega,\lambda)\mapsto \partial^\delta_\omega w^{\pm}(\omega,\lambda)\in
 L^{2,-p}(\R^d);\;p>|\delta|+d/2,\]
is continuous.
\end{thm}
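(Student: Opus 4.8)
The plan is to establish in turn: (a) boundedness and continuity of $W^{\pm}(\lambda)$; (b) the stationary representation \eqref{eq:7usefbbb}, which then forces the family $\lambda\mapsto W^{\pm}(\lambda)$ to depend only on $V_1,V_3$ (since $W^{\pm}$ does, by Theorem~\ref{thm:t11}); (c) the identity \eqref{eq:7usef}; and (d) the assertion on $w^{\pm}$. Throughout I would use the Fourier integral operators of Section~\ref{FIO}, the limiting absorption principle for fixed $\lambda>0$ under Condition~\ref{symbol}, and the stationary formula of Lemma~\ref{lemma:t12}.

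For (a), write $J^{\pm}(\lambda)=J^{\pm}\mathcal F_0(\lambda)^*$ and $T^{\pm}(\lambda)=T^{\pm}\mathcal F_0(\lambda)^*$. Since $\mathcal F_0(\lambda)^*$ has range with Fourier support in $\{|\xi|=\sqrt{2\lambda}\}$, the fact that $J^{\pm},T^{\pm}$ need not be bounded (cf.\ the remarks after Theorem~\ref{thm:t11}) is harmless: one may insert $\chi(H_0)$ with $\chi\in C_c^{\infty}(]0,\infty[)$, $\chi=1$ near $\lambda$, and use that $J^{\pm}\chi(H_0)$ is a Fourier integral operator with phase satisfying \eqref{eq:postca}, hence bounded on every $L^{2,m}(\R^d)$. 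Together with $\mathcal F_0(\lambda)^*\in\mathcal B(L^{2,-n}(S^{d-1}),L^{2,-n-\frac12-\epsilon}(\R^d))$, continuous in $\lambda>0$, this gives $J^{\pm}(\lambda)\in\mathcal B(L^{2,-n}(S^{d-1}),L^{2,-n-\frac12-\epsilon}(\R^d))$, continuous in $\lambda>0$. For $T^{\pm}(\lambda)=T^{\pm}_{\bd}(\lambda)+T^{\pm}_{\pr}(\lambda)$ I would argue that both pieces are smoothing from the sphere: by Proposition~\ref{prop:brrrr} the symbol $r^{\pm}_{\pr}$ is $O(|x|^{-\infty})$, and $r^{\pm}_{\bd}$, although only $O(g|x|^{-1-|\gamma|})$, is supported away from $\Gamma^{+}_{R_0,\sigma}$, hence in a region where $\omega$ stays away from $\mp\hat x$ and $|\partial_{\omega}\phi^{\pm}|$ is comparable to $|x|$ for $\lambda>0$; repeated integration by parts in $\omega$ (after moving a factor $(1-\Delta_{\omega})^{n/2}$ onto the amplitude to accommodate the negative sphere-Sobolev index $-n$) shows that $T^{\pm}_{\bd}(\lambda)$, and likewise $T^{\pm}_{\pr}(\lambda)$, maps $L^{2,-n}(S^{d-1})$ into $L^{2,m}(\R^d)$ for every $m$. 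Composing with $R(\lambda\mp\i0)\in\mathcal B(L^{2,s},L^{2,-s})$, $s>\tfrac12$ (norm continuous in $\lambda>0$ by the limiting absorption principle), it follows that $W^{\pm}(\lambda)=J^{\pm}(\lambda)+\i R(\lambda\mp\i0)T^{\pm}(\lambda)$ is bounded $L^{2,-n}(S^{d-1})\to L^{2,-n-\frac12-\epsilon}(\R^d)$ and continuous in $\lambda>0$.

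For (b), I would start from Lemma~\ref{lemma:t12} with $\breve J^{\pm}=J^{\pm}$, $\breve T^{\pm}=T^{\pm}$, giving $W^{\pm}f=\lim_{\epsilon\searrow0}\int(J^{\pm}+\i R(\lambda\mp\i\epsilon)T^{\pm})\delta_\epsilon(\lambda)f\,\d\lambda$, and then insert a cutoff $g(H_0)$, $g\in C_c(]0,\infty[)$, which confines $\lambda$ to a compact subinterval $K\subset]0,\infty[$ where $\mathcal F_0(\lambda)$, $\mathcal F_0(\lambda)^*$ and the resolvent bounds are uniform. On $K$ one has $\delta_\epsilon(\lambda)\to\mathcal F_0(\lambda)^*\mathcal F_0(\lambda)$ by \eqref{free22} and $R(\lambda\mp\i\epsilon)\to R(\lambda\mp\i0)$ uniformly in weighted-operator norm, so dominated convergence yields \eqref{eq:7usefbbb} for $f\in L^{2,\frac12+\epsilon}$ and $g\in C_c(]0,\infty[)$. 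Since $\{\mathcal F_0(\lambda)f:f\in\mathcal S(\R^d)\}\supseteq C^{\infty}(S^{d-1})$, dense in $L^{2,-n}(S^{d-1})$, and $W^{\pm}$ is modifier-independent (Theorem~\ref{thm:t11}), \eqref{eq:7usefbbb} together with the continuity of $W^{\pm}(\lambda)$ in $\lambda>0$ pins down $W^{\pm}(\lambda)$ uniquely, which is assertion (b) and the stated dependence on $V_1,V_3$ only. The identity \eqref{eq:7usef} in (c) then follows by specializing the general adjoint/stationary identities of Appendix~\ref{Appendix} — relating $W^{\pm}(\lambda)$ to the spectral density $\delta^V(\lambda)$ — together with the intertwining property of $W^{\pm}$ and the mapping properties of $\mathcal F_0(\lambda)$.

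For (d), let $\delta$ be a multiindex and $p>|\delta|+\tfrac d2$; choose $n$ with $\tfrac{d-1}2+|\delta|<n<p-\tfrac12$ and then $\epsilon\in\,]0,p-\tfrac12-n[$. Then $\partial_\omega^{\delta}\delta_\omega\in L^{2,-n}(S^{d-1})$, and the map $S^{d-1}\ni\omega\mapsto\partial_\omega^{\delta}\delta_\omega\in L^{2,-n}(S^{d-1})$ is continuous: expanding in spherical harmonics, the difference norm is a convergent series whose terms tend to $0$ and are dominated by a fixed summable series, using precisely $n>\tfrac{d-1}2+|\delta|$. Applying the bounded operator $W^{\pm}(\lambda)$ and using that it commutes with $\partial_\omega^{\delta}$ acting on the $L^{2,-n}(S^{d-1})$-valued map $\omega\mapsto\delta_\omega$, we get $\partial_\omega^{\delta}w^{\pm}(\omega,\lambda)=W^{\pm}(\lambda)\partial_\omega^{\delta}\delta_\omega\in L^{2,-n-\frac12-\epsilon}(\R^d)\subseteq L^{2,-p}(\R^d)$; joint continuity in $(\omega,\lambda)$ follows from the (locally bounded) operator-norm continuity of $\lambda\mapsto W^{\pm}(\lambda)$ combined with the continuity of $\omega\mapsto\partial_\omega^{\delta}\delta_\omega$. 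The step I expect to be the real obstacle is (a) — matching the Fourier-integral-operator mapping bounds with the weighted resolvent estimates so as to land in exactly the weighted space $L^{2,-n-\frac12-\epsilon}(\R^d)$, in particular the smoothing of $T^{\pm}_{\bd}(\lambda)$ on negative-index Sobolev spaces of $S^{d-1}$ — together with the limit interchange passing from Lemma~\ref{lemma:t12} to \eqref{eq:7usefbbb}; the rest is standard spectral theory and bookkeeping.
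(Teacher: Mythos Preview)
Your argument for (a) contains a genuine gap: the claim that $T^{\pm}_{\bd}(\lambda)$ maps $L^{2,-n}(S^{d-1})$ into $L^{2,m}(\R^d)$ for every $m$ is false. Repeated integration by parts in $\omega$ via the transpose of $(i|\nabla_\omega\phi|^2)^{-1}\nabla_\omega\phi\cdot\nabla_\omega$ does gain a factor $|x|^{-1}$ per step, but each step also differentiates the input (your $\sigma$, after $(1-\Delta_\omega)^{n/2}$ has been moved across), so for $\tau$ merely in $L^{2,-n}(S^{d-1})$ you cannot iterate. The correct mapping is only $T^{\pm}_{\bd}(\lambda):L^{2,-n}(S^{d-1})\to L^{2,\frac12-\epsilon-n}(\R^d)$, cf.\ Lemma~\ref{lem:2a}\,(\ref{it:p3}), and this is essentially sharp: already for $n=0$ the output lies in $L^{2,\frac12-\epsilon}$, just short of the $L^{2,s}$, $s>\tfrac12$, required by the ordinary limiting absorption bound $R(\lambda\mp\i0)\in\mathcal B(L^{2,s},L^{2,-s})$. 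Consequently your composition $R(\lambda\mp\i0)T^{\pm}(\lambda)$ is not under control for any $n\ge0$ by this route.

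The paper's fix is precisely the obstacle you anticipated. One inserts the phase-space partition $I=\Opr(\chi_1)+\Opr(\chi_2)+\Opr(\chi_3)$ of \eqref{eq:partitionleft} to the right of the resolvent and invokes the \emph{microlocal} resolvent bounds of Proposition~\ref{prop:resolvent_basic2} in their positive-energy form, Remark~\ref{remarks:nosmmmmo}\,\ref{it:csmoott3}): these permit an asymmetric weight shift $\langle x\rangle^{t-\delta}\to\langle x\rangle^{-t-\delta}$ whenever an incoming or high-energy cutoff sits next to $R(\lambda\mp\i0)$. Only the single piece $\Opr(\chi_3)T^{\pm}_{\bd}(\lambda)$ is genuinely smoothing (Lemma~\ref{lem:2a}\,(\ref{it:p4})), and that is proved by non-stationary phase in the $y$-variable of the pseudodifferential kernel, not in $\omega$. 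Your handling of (b)--(d) is in line with the paper, and your route to modifier-independence via \eqref{eq:7usefbbb} plus continuity is a legitimate alternative to the paper's Sommerfeld-radiation argument; but (a) requires the microlocal machinery above.
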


\begin{remark*}The operator $W^\pm(\lambda):\mathcal D'(S^{d-1})\to
  L^{2,-\infty}$ is  called the  {\em wave matrix at the 
 energy $\lambda$}. Its range consists of generalized eigenfunction
at the energy $\lambda$. The function $ w^{\pm}(\omega,\lambda)$
(which belongs to $W^\pm(\lambda)L^{2,\frac12 -p}(S^{d-1})$ for $p>\frac d2$)  is called the {\em generalized
 eigenfunction at the energy $\lambda$ and  outgoing (or incoming)
 asymptotic normalized velocity  $\omega$}.
  \end{remark*}

Let us explain  the  steps of a proof of
  Theorem \ref{thm:1} (in the case of ``$+$''--superscript only);  our (main) results
  contained  in
Theorems \ref{thm:4a} and  \ref{thm:4}  will be  proved by a parallel  procedure.

First  one
 introduces
  a partition of unity of the form
\begin{align}
  \label{eq:partitionleft}
  I &=\Opr(
 {\chi}_{+}(a))+
\Opr({\chi}_{-}(a)\tilde{\chi}_{-}(b))+\Opr({\chi}_{-}(a)
\tilde{\chi}_{+}(b))\nonumber\\&=:\Opr(\chi_1)+\Opr(\chi_2)+\Opr(\chi_3) .
\end{align} Here $a$ and  $b$ are the symbols introduced in
(\ref{eq:newa's}) (rather
than in \eqref{eq:def_a0_b} since we do not here impose Conditions
\ref{assump:conditions1}--\ref{assump:conditions3})
 and 
 ${\chi}_{+}$ is a real-valued function  as in Proposition 
\ref{prop:resolvent_basic2} (\ref{item:C-02}) such that $\chi_+(t)=1$ for
$t\geq2C_0$, and ${\chi}_{-}=1-{\chi}_{+}$. 
 Moreover, 
$\tilde{\chi}_{-},\tilde{\chi}_{+}\in C^\infty(\R) $ are real-valued
functions obeying 
$\tilde{\chi}_{-}+\tilde{\chi}_{+}=1$ and 
\begin{align}
  \label{eq:supp tilde chi1}
 &\supp \tilde{\chi}_{-}
 \subseteq(-\infty, 1-\bar \sigma],\\&\supp \tilde{\chi}_{+} \subseteq
 [1-2\bar \sigma,\infty[.  \label{eq:supp tilde chi2}
\end{align}
 The number  $\bar \sigma$ needs to be taken (small) positive depending on the  parameter $\sigma$
 of   Subsection \ref{The first WKB ansatz}. (For the proof of Theorems \ref{thm:4a}
 and  \ref{thm:4} to be elaborated on later we refer at this
point to 
 \eqref{eq:small_epsi} for the
 precise requirement.)  

The proof of Theorem  \ref{thm:1} is based on the following lemma:

\begin{lemma}\label{lem:2a}
Suppose that the potential satisfies  Condition \ref{symbol}. 
\begin{enumerate}[\normalfont (i)]
\item \label{it:p2}For  all  $n\geq 0$ and   $\epsilon>0$,
$
 J^+(\lambda)$ is a continuous function in $\lambda > 0$ with values in
  $ {\mathcal
 B}(L^{2,-n}(S^{d-1}),L^{2,-{1\over 2} -\epsilon-n}(\R^d)).$

\item \label{it:p3}For  all $n\in \R$ and $\epsilon>0$,
$
 T_\bd^+(\lambda)$ is a continuous function in $\lambda > 0$ with values in
  $ {\mathcal
 B}(L^{2,-n}(S^{d-1}),L^{2,{1\over 2} -\epsilon-n}(\R^d)).$

\item \label{it:p4}For all $m,n\in \R$, 
$\Opr(\chi_3)T_\bd^+(\lambda)$ is a continuous  function in $\lambda > 0$ with values in
$ {\mathcal
 B}(L^{2,-n}(S^{d-1}),L^{2,m}(\R^d))$.

\item \label{it:p5}
For  all $m,n\in \R$,
  $T_\pr^+(\lambda)$ is a continuous function in $\lambda > 0$ with values in     
     $ {\mathcal
 B}(L^{2,-n}(S^{d-1}),L^{2,m}(\R^d)).$
\end{enumerate}
\end{lemma}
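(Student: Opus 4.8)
The plan is to deduce all four mapping properties from the ``formal'' identities $J^+(\lambda)=J^+{\mathcal F}_0(\lambda)^*$, $T^+_\bd(\lambda)=T^+_\bd{\mathcal F}_0(\lambda)^*$ and $T^+_\pr(\lambda)=T^+_\pr{\mathcal F}_0(\lambda)^*$ recorded before Theorem~\ref{thm:1}, making them rigorous by inserting an energy cut-off. Fix $\lambda_0>0$ and $g\in C_c^\infty(\,]0,\infty[\,)$ with $g\equiv1$ near $\lambda_0$. Since $\widehat{{\mathcal F}_0(\lambda)^*\tau}$ is a distribution carried by the sphere $\{|\xi|^2=2\lambda\}$, on which $g(|\xi|^2/2)=g(\lambda)=1$ for $\lambda$ near $\lambda_0$, one has $g(H_0){\mathcal F}_0(\lambda)^*={\mathcal F}_0(\lambda)^*$, hence $J^+(\lambda)=\big(J^+g(H_0)\big){\mathcal F}_0(\lambda)^*$ and likewise for $T^+_\bd(\lambda)$, $T^+_\pr(\lambda)$, for all such $\lambda$. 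I would then check that the $\lambda$-independent operators $J^+g(H_0)$, $T^+_\bd g(H_0)$, $T^+_\pr g(H_0)$ are well behaved on weighted spaces: writing $\psi^+(x,\xi)=\phi^+(x,\xi)-\xi\cdot x$, which by \eqref{eq:postca} obeys $\partial_\xi^\kappa\partial_x^\gamma\psi^+=O(|x|^{\delta-|\gamma|})$ for some $\delta<1$ on the outgoing cone, $J^+g(H_0)=\Opl\big(e^{\i\psi^+}a^+g(\xi^2/2)\big)$ is a long-range (Isozaki--Kitada) Fourier integral operator whose symbol is compactly supported in $\xi$ away from $0$ and supported in $\{|x|\ge R_0,\ \hat x\cdot\hat\xi\ge1-\sigma'\}$, so $J^+g(H_0)\in\mathcal B(L^{2,s}(\R^d))$ for every $s\in\R$ (its $L^2$-boundedness being the classical fact that $J^+1_{]\epsilon,\infty[}(H_0)$ is bounded, already noted after Theorem~\ref{thm:t11}). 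Since by the construction of Section~\ref{FIO} the symbol of $T^+_\bd g(H_0)$ is $O(\langle x\rangle^{-1})$ and that of $T^+_\pr g(H_0)$ is $O(\langle x\rangle^{-\infty})$ (with all derivatives, at fixed positive energy), the same reasoning gives $T^+_\bd g(H_0)\in\mathcal B(L^{2,s}(\R^d),L^{2,s+1}(\R^d))$ and $T^+_\pr g(H_0)\in\mathcal B(L^{2,s}(\R^d),L^{2,s'}(\R^d))$ for all $s,s'\in\R$.

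Combining these with the known mapping property ${\mathcal F}_0(\lambda)^*\in\mathcal B\big(L^{2,-n}(S^{d-1}),L^{2,-1/2-\epsilon-n}(\R^d)\big)$ and its continuity in $\lambda>0$ (quoted before Theorem~\ref{thm:1}) at once yields (i) and (ii) for $n\ge0$, and also (iv) (for (iv) one may take $n<0$ for free, since then $L^{2,-n}(S^{d-1})\subseteq L^2(S^{d-1})$ and the target is independent of $n$). For (ii) with $n<0$ I would first move $(1-\Delta_\omega)^{-|n|/2}$ onto the kernel: because $t^+_\bd$ is supported away from the critical point $\omega\approx\hat x$ of $\omega\mapsto\phi^+(x,\sqrt{2\lambda}\omega,\lambda)$, the kernel $e^{\i\phi^+}t^+_\bd$ is genuinely oscillatory in $\omega$ of frequency $|\nabla_\omega\phi^+|\sim|x|$, so the smoothing operator $(1-\Delta_\omega)^{-|n|/2}$ gains a full $\langle x\rangle^{-|n|}$ and reduces matters to the case $n=0$ with a symbol of extra decay $|x|^{-|n|}$.

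For (iii) the mechanism is a phase-space disjointness. On the support of $t^+_\bd$ one has $1-\sigma'\le\hat x\cdot\omega<1-\sigma$ (apart from a bounded-$|x|$ piece handled directly), so for $|x|$ large the frequency $\nabla_x\phi^+(x,\sqrt{2\lambda}\omega,\lambda)=F^+(x,\omega,\lambda)$ points within $O(|x|^{\delta-1})$ of $\omega$ with magnitude $\sqrt{2\lambda}\,(1+O(|x|^{-\mu}))$, hence lies in $\{b<1-2\bar\sigma\}$ once $\bar\sigma$ is small relative to $\sigma$ (this is the purpose of the smallness requirement, cf. \eqref{eq:small_epsi}). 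Since $\supp\tilde\chi_+\subseteq[1-2\bar\sigma,\infty[$, writing out the kernel of $\Opr(\chi_3)\,T^+_\bd(\lambda)=\big(\Opr(\chi_3)T^+_\bd g(H_0)\big){\mathcal F}_0(\lambda)^*$ and integrating by parts in the intermediate $\R^d$-variable gains arbitrarily many powers of $|x|^{-1}$, and a further integration by parts in the dual variable gains arbitrary decay in $|x|$; thus $\Opr(\chi_3)T^+_\bd g(H_0)\in\mathcal B(L^{2,s}(\R^d),L^{2,m}(\R^d))$ for all $s,m\in\R$, uniformly on compact $\lambda$-intervals, and composing with ${\mathcal F}_0(\lambda)^*$ gives (iii). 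In all four parts the continuity in $\lambda>0$ follows since $J^+g(H_0)$, $T^+_\bd g(H_0)$, $T^+_\pr g(H_0)$ do not depend on $\lambda$ (for $g$ fixed $\equiv1$ near $\lambda_0$), $\Opr(\chi_3)$ depends continuously on $\lambda$, and ${\mathcal F}_0(\lambda)^*$ is continuous in $\lambda>0$.

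The hard part is the weighted $L^2$-boundedness $J^+g(H_0)\in\bigcap_s\mathcal B(L^{2,s}(\R^d))$ of the long-range modifier with an energy cut-off --- this is precisely the Isozaki--Kitada lemma (\cite{IK1,IK2}, \cite{DG}). Its $L^2$-part is the classical statement invoked above; the passage to weighted spaces is where the long-range correction $\psi^+=O(|x|^{\delta})$, with $\delta$ possibly close to $1$ (small $\mu$), must be controlled: one commutes $\langle x\rangle^s$ through the Fourier integral operator and checks, using \eqref{eq:postca} and the outgoing-cone localization of $a^+$, that the result still lies in the relevant symbol class, so that the composition calculus \cite[Theorems 18.5.4, 18.6.3, 18.6.8]{Ho1} applies. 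No genuinely new phenomenon enters here; this is why Theorem~\ref{thm:1} and Lemma~\ref{lem:2a} are ``essentially well-known''.
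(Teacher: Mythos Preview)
Your factorization approach $J^+(\lambda)=(J^+g(H_0))\,\mathcal F_0(\lambda)^*$ is correct and is a natural route at strictly positive energies. The paper, however, takes a genuinely different path: it does not prove Lemma~\ref{lem:2a} separately, but defers to the proof of the low-energy analogue Lemma~\ref{lem:22a}, which works directly with the fixed-energy operators $J^+(\lambda)$, $T^+_\bd(\lambda)$, $T^+_\pr(\lambda)$ as integral operators from $L^{2,-n}(S^{d-1})$ to weighted $L^2(\R^d)$, with kernels $e^{\i\phi^+}\tilde a^+$, $e^{\i\phi^+}\tilde t^+_\bd$, $e^{\i\phi^+}\tilde t^+_\pr$. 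For (i) and (ii) with $n\ge0$ the paper integrates by parts in $\omega$ to absorb the Sobolev weight into the amplitude, and then proves the basic $n=0$ bound by a change of variables showing that $R^{-1}J^+(\lambda)^*g\,1_{\{|x|<R\}}J^+(\lambda)$ is a uniformly bounded pseudodifferential operator on the sphere (Lemma~\ref{lemma: statio}); for (ii) with $n<0$ and for (iii), (iv) the non-stationary phase mechanisms are precisely the ones you sketch (the disjointness you use in (iii) is exactly \eqref{eq:F_angle4}--\eqref{eq:small_epsi} in the paper's proof).

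What your route buys is modularity: all $\lambda$-dependence sits in $\mathcal F_0(\lambda)^*$, and the analytic work reduces to the (indeed standard Isozaki--Kitada) weighted $L^2$-boundedness of $J^+g(H_0)$, $T^+_\bd g(H_0)$, $T^+_\pr g(H_0)$. What the paper's direct route buys is that it extends verbatim to $\lambda=0$, where your factorization collapses because the free trace $\mathcal F_0(0)^*$ does not exist (the sphere of radius $0$ is a point). This is exactly why the paper formulates and proves the sharper Lemma~\ref{lem:22a} with explicit $g(|x|,\lambda)$-dependent weights, rather than Lemma~\ref{lem:2a} on its own. For Lemma~\ref{lem:2a} in isolation (Condition~\ref{symbol}, $\lambda>0$) your argument is arguably the more economical one.
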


More general statements than Lemma \ref{lem:2a}
(\ref{it:p2})--(\ref{it:p5}) will be given and proven in the context
of treating small energies (see  Lemma \ref{lem:22a}); these
statements are
  under
Conditions \ref{assump:conditions1} and \ref{assump:conditions2}. Let us here use (\ref{it:p2})--(\ref{it:p5}) in
an

\noindent{\it Outline of a proof of Theorem \ref{thm:1}.} The
expression (\ref{wave22}) is a well-defined element of $ {\mathcal B}(L^{2,-n}(S^{d-1}),L^{2,-{1\over 2}
  -\epsilon-n}(\R^d))$ due to the positive energy  version of Proposition
\ref{prop:proposi1} and Lemma  \ref{lem:2a}; this is for any
$\epsilon>0$  and $n\geq 0$. (Notice that 
(\ref{labb2}) holds for any $t\in \R$ by  Lemma
\ref{lem:2a}.) Effectively, this argument is based on the following  
scheme (to
be used below):
We insert the right hand side of (\ref{eq:partitionleft}) to the right
of the resolvent in (\ref{wave22}) and expand into three terms. Whence,
by 
using Remark \ref{remarks:nosmmmmo} \ref{it:csmoott3}) and Lemma
\ref{lem:2a}, we see that $W^+(\lambda)$ is a sum
of four well-defined operators in $ {\mathcal B}(L^{2,-n}(S^{d-1}),L^{2,-{1\over 2}
  -\epsilon-n}(\R^d))$, hence well-defined.

Next note that $
\lambda\mapsto W^+(\lambda)$ is norm continuous, due to the norm continuity
of each of
the above mentioned four  operators, which
in turn may be seen by combining  the continuity statements of Remark \ref{remarks:nosmmmmo} \ref{it:csmoott3}) and Lemma
\ref{lem:2a}. 

The statement on the independence of details of construction of
$J^\pm$ is based on the positive energy  version of Proposition
\ref{prop:propa5}; the interested reader will realize this by using
arguments  from the proof
of Lemma \ref{prop:qs4} stated later.

The formula (\ref{eq:7usefbbb}) can be verified by combining
(\ref{wave9}) with arguments used above, see Appendix
\ref{Appendix} for an abstract approach. The identity
(\ref{eq:7usef}) is a consequence of (\ref{eq:7usefbbb}). 

Finally, due to the fact that $ \partial^\delta_\omega\delta_\omega\in L^{2,{1\over 2}-p}(S^{d-1})$ for
$p>|\delta|+{d\over 2}$ (with continuous dependence of  $\omega\in S^{d-1}$),
we conclude that indeed $\partial^\delta_\omega w^+(\omega,\lambda)\in L^{2,-p}(\R^d)$
with a continuous dependence of $\omega$ and $\lambda$.

\subsection{Wave matrices at low energies}\label{Wave matrices at low energies}

Until the end of this section we assume that Conditions
\ref{assump:conditions1}--\ref{assump:conditions3} are true. The main  new result of this section 
is expressed in the following two theorems which concern the
low-energy behaviour of the wave matrices of Theorem
\ref{thm:1}:

\begin{thm}\label{thm:4a}
For $s>\frac12+\frac\mu4$ and $n\geq0$,
\begin{eqnarray}W^\pm (0):=J^\pm(0)+\i R(\mp\i0) T^\pm(0)\label{wave2}
\end{eqnarray}
defines a bounded operator in  
 $ {\mathcal
B}(L^{2,-n}(S^{d-1}),L^{2,-s-n(1-\mu/2)}(\R^d))$. 
 It  depends only on the splitting of the potential $V$
 into $V_1+V_2$ and $V_3$ (but does not depend on the details of the
 construction of $J^\pm$). We have
 \begin{equation}
   \label{eq:delt2}
   W^\pm(0)W^\pm(0)^*=\delta^V(0).
 \end{equation}

If we set
\[w^\pm(\omega,0)=W^\pm(0)\delta_\omega,\]
then we obtain an element of $L^{2,-p}(\R^d)$ with
$p>\frac{d}{2}+\frac\mu2-\frac{d\mu}{4}$ depending continuously on
 $\omega$. In fact, more generally, $\partial_\omega^\delta
w^\pm(\omega,0)\in L^{2,-p}(\R^d)$ with
$p>(|\delta|+\frac{d}{2})(1-\frac{\mu} {2})+\frac{\mu}{2}$ with  continuous
dependence  on
$\omega$.
\end{thm}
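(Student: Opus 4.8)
The plan is to carry out, uniformly down to and at $\lambda=0$, the scheme used to prove Theorem~\ref{thm:1}, with the positive‑energy resolvent bounds of Remark~\ref{remarks:nosmmmmo} replaced by the uniform bounds of Proposition~\ref{prop:resolvent_basic2}, and Lemma~\ref{lem:2a} replaced by its uniform counterpart, stated and proved below as Lemma~\ref{lem:22a}. We treat only the ``$+$'' case; ``$-$'' is identical. Recall that $J^+(0)$ and $T^+(0)=T^+_\bd(0)+T^+_\pr(0)$ are well defined by the expressions \eqref{eq:jdef}--\eqref{eq:tdef} at $\lambda=0$, because the amplitudes $\tilde a^+,\tilde t^+$ and their $(x,\omega)$--derivatives are continuous down to $\lambda=0$ (Subsection~\ref{Fourier integral operators at fixed energies}, built on Section~\ref{FIO}), and that on the energy shell $T^+(\lambda)=\i(H-\lambda)J^+(\lambda)$, whence at $\lambda=0$, using $T^+(0)=\i HJ^+(0)$,
\begin{equation*}
(H-0)W^+(0)\tau=HJ^+(0)\tau+\i\,T^+(0)\tau=HJ^+(0)\tau-HJ^+(0)\tau=0,\qquad\tau\in C^\infty(S^{d-1});
\end{equation*}
thus $W^+(0)\tau$ is in any case a generalized eigenfunction of $H$ at energy $0$.

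First I would record from Lemma~\ref{lem:22a} the mapping properties at $\lambda=0$: for every $n\geq0$ and every $s>s_0=\tfrac12+\tfrac\mu4$, $J^+(0)$ maps $L^{2,-n}(S^{d-1})$ boundedly into $L^{2,-s-n(1-\mu/2)}(\R^d)$, $T^+_\bd(0)$ maps it into $L^{2,\,s'-n(1-\mu/2)}(\R^d)$ for some $s'>s_0$, and both $T^+_\pr(0)$ and $\Opr(\chi_3)T^+_\bd(0)$ map it into $L^{2,m}(\R^d)$ for every $m\in\R$. Here $a,b$ are the $\lambda$--dependent symbols \eqref{eq:def_a0_b} and the partition of unity is \eqref{eq:partitionleft}, $I=\Opr(\chi_1)+\Opr(\chi_2)+\Opr(\chi_3)$, with $\bar\sigma>0$ taken sufficiently small. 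Inserting this partition to the right of the resolvent in \eqref{wave2} and writing $T^+(0)=T^+_\pr(0)+\sum_{j=1}^3\Opr(\chi_j)T^+_\bd(0)$, I would estimate the four resulting contributions of $\i R(-\i0)T^+(0)$ separately: $R(-\i0)T^+_\pr(0)$ and $R(-\i0)\Opr(\chi_3)T^+_\bd(0)$ by Proposition~\ref{prop:resolvent_basic2}~(\ref{it:one2}) (the inputs lying in $L^{2,m}$ for every $m$); $R(-\i0)\Opr(\chi_1)T^+_\bd(0)=R(-\i0)\Opr(\chi_+(a))T^+_\bd(0)$ by Proposition~\ref{prop:resolvent_basic2}~(\ref{item:C-02}), in the form adjoint to \eqref{eq:quantum_part2} appropriate to $R(\cdot-\i0)$, which moreover gains extra $\langle x\rangle g$--weights; and $R(-\i0)\Opr(\chi_2)T^+_\bd(0)=R(-\i0)\Opr(\chi_-(a)\tilde\chi_-(b))T^+_\bd(0)$ by the version of Proposition~\ref{prop:resolvent_basic2}~(\ref{some_label21}) adjoint to \eqref{eq:PsDO_part21} --- it is here that the placement of $\Opr(\chi_2)$, which forces $\tilde\chi_-$ to be supported away from $b=1$, enters. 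Keeping track of the weights $g^{1/2}\langle x\rangle^{-\delta}$, $\delta>\tfrac12$, traded by these bounds, each of the four terms, and hence together with $J^+(0)$ also $W^+(0)$, maps $L^{2,-n}(S^{d-1})$ boundedly into $L^{2,-s-n(1-\mu/2)}(\R^d)$ for every $s>s_0$; this is the first assertion.

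Next I would prove independence of the details of construction (for fixed $R_0,\sigma_0$, hence fixed $\phi^+$). Let $J^+,\tilde J^+$ be two admissible modifiers built from the same splitting $V=V_1+V_2+V_3$, and put $D(0):=J^+(0)-\tilde J^+(0)$. The amplitude difference is, modulo an $O(|x|^{-\infty})$--term (two Borel resummations of the same asymptotic series), supported in the angular layer $\{\hat x\cdot\omega\in[1-\sigma',1-\sigma]\}$, where $\omega$ stays away from the critical point $\omega^+_\crt(x,0)$ of $\phi^+(x,\cdot,0)$ (Proposition~\ref{eiko}) and hence $|\partial_\omega\phi^+|\gtrsim\langle x\rangle g$ by \eqref{eq:ole0} (and its general‑potential analogue, cf.\ Proposition~\ref{prop:mixed_2aaaa}), while $|\partial_\omega^\gamma\phi^+|=O(\langle x\rangle g)$ for $|\gamma|\geq1$ by \eqref{eq:ole}; repeated integration by parts in $\omega$ (each step dividing by $\langle x\rangle g\to\infty$) then shows $D(0)\tau\in L^{2,\infty}$ for $\tau\in C^\infty(S^{d-1})$, and consequently $HD(0)\tau=-\i\bigl(T^+(0)-\tilde T^+(0)\bigr)\tau\in L^{2,\infty}$ as well. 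Since $0$ is not an eigenvalue of $H$ (Condition~\ref{assump:conditions3}~(\ref{it:assumption8})), the uniqueness in the Sommerfeld radiation condition, Proposition~\ref{prop:propa5}, applied to $v:=HD(0)\tau$ gives $R(-\i0)HD(0)\tau=D(0)\tau$; therefore
\begin{align*}
W^+(0)\tau-\tilde W^+(0)\tau
&=D(0)\tau+\i R(-\i0)\bigl(T^+(0)-\tilde T^+(0)\bigr)\tau\\
&=D(0)\tau+\i R(-\i0)\,\i HD(0)\tau
=D(0)\tau-R(-\i0)HD(0)\tau=0,
\end{align*}
first for $\tau\in C^\infty(S^{d-1})$ and then, by density and the boundedness just proved, for all $\tau$ (cf.\ also the proof of Lemma~\ref{prop:qs4}). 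The identity \eqref{eq:delt2} then follows by passing to the limit $\lambda\searrow0$ in \eqref{eq:7usef}: its right‑hand side converges to $\delta^V(0)$ by Remark~\ref{remarks:nosmmmmo}~(\ref{it:csmoottaa}), while the left‑hand side converges by the norm‑continuity of $\lambda\mapsto W^+(\lambda)$ down to $0$ (Theorem~\ref{thm:4}); equivalently, \eqref{eq:delt2} is read directly off the abstract stationary formalism of Appendix~\ref{Appendix} using the boundedness established above.

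Finally, the statements on $w^\pm(\omega,0)=W^\pm(0)\delta_\omega$ follow by applying the boundedness of $W^\pm(0)$ to $\tau=\partial_\omega^\delta\delta_\omega$, which lies in $L^{2,-n}(S^{d-1})$ for every $n>|\delta|+\tfrac{d-1}{2}$ and depends continuously on $\omega$ in that space; letting $n\downarrow|\delta|+\tfrac{d-1}{2}$ and $s\downarrow s_0$ in the target exponent $-s-n(1-\mu/2)$ gives $\partial_\omega^\delta w^\pm(\omega,0)\in L^{2,-p}(\R^d)$ for every $p>(|\delta|+\tfrac d2)(1-\tfrac\mu2)+\tfrac\mu2$, with continuous dependence on $\omega$; for $\delta=0$ this is $p>\tfrac d2+\tfrac\mu2-\tfrac{d\mu}{4}$. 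The step I expect to be the main obstacle is the input Lemma~\ref{lem:22a}, i.e.\ the uniform (including $\lambda=0$) boundedness between weighted spaces of the Fourier integral operators $J^\pm(\lambda)$, $T^\pm_\bd(\lambda)$, $T^\pm_\pr(\lambda)$ and $\Opr(\chi_3)T^\pm_\bd(\lambda)$: it rests on stationary‑ and nonstationary‑phase estimates for the $\lambda$--dependent eikonal phase $\phi^\pm$ within the degenerate calculus attached to $g_{\mu,\lambda}$, where one must follow carefully the powers of the ``Planck constant'' $\langle x\rangle^{-1}g^{-1}$ and of $(2\lambda)^{(d-2)/4}$; granting it, the remaining steps are essentially the weight bookkeeping already familiar from the positive‑energy case together with the rapidly‑decaying‑difference computation used for the independence statement.
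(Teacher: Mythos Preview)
Your proposal is correct and follows essentially the same route as the paper: insert the partition \eqref{eq:partitionleft} to the right of the resolvent, control the four pieces by combining Lemma~\ref{lem:22a} with the appropriate parts of Proposition~\ref{prop:resolvent_basic2} (taking adjoints to pass from $R(\lambda+\i0)$ to $R(\lambda-\i0)$), deduce \eqref{eq:delt2} by continuity from \eqref{eq:7usef}, and read off the generalized–eigenfunction statements from $\partial_\omega^\delta\delta_\omega\in L^{2,\frac12-p}(S^{d-1})$. Your independence argument differs slightly from the paper's: you show directly that $D(0)\tau\in L^{2,\infty}$ by nonstationary phase on the angular layer (which is correct, since there $|\nabla_\omega\phi^+|\gtrsim\langle x\rangle g$ and each integration by parts gains a full factor of $(\langle x\rangle g)^{-1}$), then invoke Proposition~\ref{prop:propa5} with an empty wave front set; the paper instead argues via wave front set bounds in the style of the proof of Lemma~\ref{prop:qs4}, using Propositions~\ref{prop:proposi1} and~\ref{prop:propa5}. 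Your variant is a bit more elementary but is specific to comparing two $J^+$'s built from the \emph{same} phase $\phi^+$ (as you note), whereas the paper's wave front set argument is more robust and carries over verbatim to other comparisons needed later (e.g.\ Lemma~\ref{prop:qs4}, Theorem~\ref{thm:middl}).
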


\begin{thm}\label{thm:4}
For  all $\epsilon>0$  and  $n\geq 0$, 
\begin{equation}
  \label{eq:x_weights29bc}
 (\langle x \rangle g)^{-n} \langle x \rangle^{-\frac12 -\epsilon}
  g^{1\over 2}W^\pm(\lambda)
\end{equation}
 is a continuous  ${\mathcal
 B}(L^{2,-n}(S^{d-1}),L^{2}(\R^d))$--valued function in
$\lambda\in[0,\infty[$.  

For  all $\epsilon>0$  and all multiindices $\delta$,
 the
function
\[S^{d-1}\times[0,\infty[\ni(\omega,\lambda)
\mapsto
 (\langle x \rangle g)^{-|\delta|+\frac{1}{2}-\frac{d}{2}} \langle x \rangle^{-\frac12 -\epsilon}
  g^{1\over 2}
 \partial_\omega^\delta w^\pm(\omega,\lambda)\in L^{2}(\R^d)\]
is continuous. \end{thm}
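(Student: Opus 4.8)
The proof proceeds in parallel with the scheme outlined for Theorem~\ref{thm:1}, now carried out with the $\lambda$--uniform, $g$--weighted calculus and resolvent estimates. The plan is first to prove the operator assertion \eqref{eq:x_weights29bc} — that $(\langle x\rangle g)^{-n}\langle x\rangle^{-\frac12-\epsilon}g^{\frac12}W^\pm(\lambda)$ is a continuous $\mathcal B(L^{2,-n}(S^{d-1}),L^2(\R^d))$--valued function of $\lambda\in[0,\infty[$ — by decomposing $W^\pm(\lambda)=J^\pm(\lambda)+\i R(\lambda\mp\i0)T^\pm_\pr(\lambda)+\i R(\lambda\mp\i0)T^\pm_\bd(\lambda)$ (using $T^\pm=T^\pm_\bd+T^\pm_\pr$ in accordance with \eqref{eq:r2sum}) and estimating the three pieces separately, with all bounds uniform on each bounded $\lambda$--interval; and then to deduce the second, pointwise--in--$\omega$ assertion by inserting $\partial_\omega^\delta\delta_\omega$ into \eqref{eq:x_weights29bc}.

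For the $J^\pm(\lambda)$--piece I would invoke the low--energy analogue of Lemma~\ref{lem:2a}, namely Lemma~\ref{lem:22a}, together with the uniform symbol bounds $\partial_\omega^\delta\partial_x^\gamma\tilde a^\pm=g^{(d-2)/2}O(\langle x\rangle^{-|\gamma|})$ of Proposition~\ref{prop:booh}; continuity in $\lambda$ is supplied by the joint continuity of $\phi^\pm$, $\tilde a^\pm$ and their $(x,\omega)$--derivatives in $(x,\omega,\lambda)$ (Lemma~\ref{lemmo}, Propositions~\ref{prop:booh} and~\ref{prop:mixed_2aaaa}). The $T^\pm_\pr(\lambda)$--piece is harmless: since $t^\pm_\pr=O(\langle x\rangle^{-\infty})$ with all $x$-- and $\omega$--derivatives (Proposition~\ref{prop:brrrr}), $T^\pm_\pr(\lambda)$ maps $L^{2,-n}(S^{d-1})$ continuously into every $L^{2,m}(\R^d)$; writing the resulting rapidly decaying state as $g^{\frac12}\langle x\rangle^{-\frac12-\epsilon'}$ times a (still rapidly decaying) $L^2$--state and applying Proposition~\ref{prop:resolvent_basic2}~(\ref{it:one2}) produces a state in $(\langle x\rangle g)^{n}\langle x\rangle^{\frac12+\epsilon}g^{-\frac12}L^2(\R^d)$, the weight comparison being made uniform in $\lambda$ by the pointwise lower bound $g\ge c\langle x\rangle^{-\mu/2}$ from Condition~\ref{assump:conditions1}~(\ref{it:assumption1}).

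The term $\i R(\lambda\mp\i0)T^\pm_\bd(\lambda)$ is the crux, and is handled exactly as in the proof of Theorem~\ref{thm:1}. On $\supp\tilde t^\pm_\bd$ the phase $\phi^\pm$ solves the eikonal equation, so $a\equiv1$ there, while the angular cut--off confines $\hat x\cdot(\pm\omega)$ to the transition region $1-\sigma'\le\hat x\cdot(\pm\omega)\le1-\sigma$, on the relevant (large--$\langle x\rangle$) part of which, by \eqref{eq:F_angle1}--\eqref{eq:F_angle2} of Proposition~\ref{prop:mixed_2aaaa}, $b=\hat x\cdot\nabla_x\phi^\pm/g$ stays bounded away from $\pm1$. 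Composing $T^\pm_\bd(\lambda)$ on the left with $\Opr$ of a symbol $\chi_-(a)\tilde\chi(b)$ that equals $1$ on the relevant set, $\tilde\chi$ supported away from $b=\pm1$, and applying non--stationary phase to the complementary symbol, one gets $T^\pm_\bd(\lambda)=\Opr\big(\chi_-(a)\tilde\chi(b)\big)T^\pm_\bd(\lambda)+R_\infty$, the remainder $R_\infty$ being treated as the $T^\pm_\pr$--piece. For the main part one uses $r^\pm_\bd=O(g\langle x\rangle^{-1})$ (Proposition~\ref{prop:brrrr}) to place $T^\pm_\bd(\lambda)\tau$ into the input weight $g^{\frac12}\langle x\rangle^{t-\delta}(\langle x\rangle g)^{s}$ of \eqref{eq:PsDO_part221} (in its $R(\lambda\mp\i0)$--form, obtained from Proposition~\ref{prop:resolvent_basic2}~(\ref{some_label21}) by the substitution $b\mapsto-b$), and reads off the output weight $(\langle x\rangle g)^{s}\langle x\rangle^{t+\delta}g^{-\frac12}$; choosing $s=n$, $\delta$ slightly above $\tfrac12$ and $t$ correspondingly small, and using $c\langle x\rangle^{-\mu/2}\le g\le C$ to make all weight comparisons uniform in $\lambda$, yields \eqref{eq:x_weights29bc}. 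Continuity in $\lambda$ follows from the continuity statement closing Proposition~\ref{prop:resolvent_basic2}, the continuity of the classical symbols, and a cut--off plus dominated--convergence argument identical to the one used there.

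Finally I would deduce the pointwise--in--$\omega$ assertion from \eqref{eq:x_weights29bc}. Fix $\epsilon>0$. Since $\partial_\omega^\delta\delta_\omega\in L^{2,-n}(S^{d-1})$ for every $n>|\delta|+\tfrac{d-1}{2}$, with continuous dependence on $\omega\in S^{d-1}$, applying \eqref{eq:x_weights29bc} with $n=|\delta|+\tfrac{d-1}{2}+\tfrac{\epsilon}{3}$ and with the parameter there equal to $\tfrac{\epsilon}{3}$ shows that $(\langle x\rangle g)^{-n}\langle x\rangle^{-\frac12-\frac{\epsilon}{3}}g^{\frac12}\partial_\omega^\delta w^\pm(\omega,\lambda)$ lies in $L^2(\R^d)$ and is jointly continuous in $(\omega,\lambda)$. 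Multiplying by the factor $(\langle x\rangle g)^{\epsilon/3}\langle x\rangle^{-2\epsilon/3}$, which on each bounded $\lambda$--interval is bounded by $C\langle x\rangle^{-\epsilon/3}$ uniformly in $\lambda$ (since $0<g\le C$ and $\langle x\rangle g\le C\langle x\rangle$) and hence acts as a bounded multiplier depending continuously on $\lambda$ in the strong sense (dominated convergence), converts this into $(\langle x\rangle g)^{-|\delta|-\frac{d-1}{2}}\langle x\rangle^{-\frac12-\epsilon}g^{\frac12}\partial_\omega^\delta w^\pm(\omega,\lambda)\in L^2(\R^d)$ with joint continuity in $(\omega,\lambda)$; since $-|\delta|-\tfrac{d-1}{2}=-|\delta|+\tfrac12-\tfrac d2$, this is precisely the stated weight, and the proof is complete. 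The main obstacle throughout is the $\i R(\lambda\mp\i0)T^\pm_\bd(\lambda)$ term: rigorously microlocalising the Fourier integral operator $T^\pm_\bd(\lambda)$ and then simultaneously satisfying the two competing constraints — enough decay of the input to feed the resolvent bound \eqref{eq:PsDO_part221}, and sufficiently limited growth of the output so as to land in the target space — at the critical weights and uniformly all the way down to $\lambda=0$.
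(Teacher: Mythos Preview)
Your proposal is correct and follows essentially the same approach as the paper. The only organizational difference is that the paper inserts the three-term partition of unity \eqref{eq:partitionleft} directly to the right of the resolvent in \eqref{wave22}, producing four pieces (the $J^\pm$ term plus three resolvent pieces indexed by $\chi_1,\chi_2,\chi_3$), whereas you first split $T^\pm=T^\pm_\pr+T^\pm_\bd$ and then microlocalise $T^\pm_\bd$; your main microlocalised piece corresponds to the paper's $\Opr(\chi_2)$ term, while your remainder $R_\infty$ together with $T^\pm_\pr$ absorbs what the paper treats via $\Opr(\chi_1)$ and $\Opr(\chi_3)$ (Lemma~\ref{lem:22a}~(\ref{it:p40}),(\ref{it:p50})). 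The technical inputs---Lemma~\ref{lem:22a} for the FIO pieces and Proposition~\ref{prop:resolvent_basic2} for the resolvent, with the continuity argument via cut-off and uniform bounds---are identical, and your deduction of the pointwise-in-$\omega$ statement from the operator statement is exactly how the paper proceeds in the corresponding part of Theorem~\ref{thm:1}.
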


The following corollary interprets Theorem \ref{thm:4} in terms of the usual
weighted spaces:

\begin{cor}
Let $n\geq0$. We have
\[W^\pm(0)=
\lim_{\lambda\searrow0}W^\pm(\lambda)
\]
in the sense of  operators  in 
 $ {\mathcal
B}(L^{2,-n}(S^{d-1}),L^{2,-\tilde s_n}(\R^d))$,
where\\ $\tilde
s_n>\frac12+n+\max\left(0,\frac{\mu}{4}-n\frac\mu2 \right)$.
For all multiindices $\delta$, the
function
\[S^{d-1}\times[0,\infty[\ni(\omega,\lambda)
\mapsto \partial_\omega^\delta w^\pm(\omega,\lambda)\in L^{2,-\tilde p}(\R^d)\]
is continuous, with $\tilde p>\frac{d}{2}+|\delta|$ for $d\geq2$ and $\tilde p
>\frac12+|\delta|+\max(0, (1-2|\delta|)\frac\mu4)$ for $d=1$.
\end{cor}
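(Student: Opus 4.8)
The plan is to deduce the Corollary from Theorem~\ref{thm:4} by trading the $g$--dependent weights appearing there for ordinary powers of $\langle x\rangle$. The only facts about $g$ that are needed are the elementary two--sided bound
\begin{equation*}
c\langle x\rangle^{-\mu/2}\le g(r,\lambda)\le C,\qquad x\in\R^d,\ \lambda\in[0,\lambda_0],
\end{equation*}
which follows at once from Condition~\ref{assump:conditions1}(\ref{it:assumption1})--(\ref{it:assumption2}) (which bound $-V_1$ from above and below by constant multiples of $\langle x\rangle^{-\mu}$) together with $2\lambda\le2\lambda_0$, and the pointwise inequality $0\le g(r,\lambda)-g(r,0)\le\sqrt{2\lambda}$, valid uniformly in $r$ because $s\mapsto\sqrt{2\lambda+s}-\sqrt s$ is decreasing in $s\ge0$.

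For the first assertion, fix $n\ge0$ and $\tilde s_n>\tfrac12+n+\max(0,\tfrac\mu4-n\tfrac\mu2)$, and choose $\epsilon>0$ so small that $a:=\tilde s_n-n-\tfrac12-\epsilon>\max(0,\tfrac\mu4-n\tfrac\mu2)$. I would write
\begin{equation*}
\langle x\rangle^{-\tilde s_n}W^\pm(\lambda)=M_\lambda\,\bigl((\langle x\rangle g)^{-n}\langle x\rangle^{-1/2-\epsilon}g^{1/2}W^\pm(\lambda)\bigr),
\end{equation*}
where $M_\lambda$ is multiplication by $m_\lambda(x)=\langle x\rangle^{-a}g(r,\lambda)^{n-1/2}$. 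By Theorem~\ref{thm:4} the bracketed factor is a norm--continuous $\mathcal B(L^{2,-n}(S^{d-1}),L^2(\R^d))$--valued function on $[0,\infty[$, with value at $0$ the operator of Theorem~\ref{thm:4a}. From the two--sided bound on $g$ one gets $\sup_{x,\lambda}|m_\lambda(x)|<\infty$ (for $n\ge\tfrac12$ the factor $g^{n-1/2}$ is bounded; for $n<\tfrac12$ one has $g^{n-1/2}\le C\langle x\rangle^{\mu(1/2-n)/2}$ and $a\ge\mu(1/2-n)/2$). Hence the $M_\lambda$ are uniformly bounded on $L^2$, and once one checks that $\lambda\mapsto M_\lambda$ is norm--continuous --- in particular $M_\lambda\to M_0$ as $\lambda\searrow0$ --- composing gives norm--continuity of $\lambda\mapsto\langle x\rangle^{-\tilde s_n}W^\pm(\lambda)$ on $[0,\infty[$, i.e.\ $W^\pm(\lambda)\to W^\pm(0)$ in $\mathcal B(L^{2,-n}(S^{d-1}),L^{2,-\tilde s_n}(\R^d))$, the limit being $W^\pm(0)$ by Theorem~\ref{thm:4a}.

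The second assertion is entirely parallel, now using the joint--continuity statement of Theorem~\ref{thm:4}: its multiplier, rewritten, is $\langle x\rangle^{-|\delta|-d/2-\epsilon}g^{-(|\delta|+d/2-1)}$, so one sets $M_\lambda$ equal to multiplication by $\langle x\rangle^{-\tilde p+|\delta|+d/2+\epsilon}g(r,\lambda)^{|\delta|+d/2-1}$ and shows that $(\omega,\lambda)\mapsto\langle x\rangle^{-\tilde p}\partial_\omega^\delta w^\pm(\omega,\lambda)=M_\lambda(\text{continuous }L^2\text{--valued function})$ is jointly continuous. For $d\ge2$, and for $d=1$ with $|\delta|\ge1$, the exponent $|\delta|+d/2-1$ of $g$ is $\ge0$, and the stated range $\tilde p>\tfrac d2+|\delta|$ (resp.\ $\tilde p>\tfrac12+|\delta|$) is exactly what makes $M_\lambda$ bounded; for $d=1$, $|\delta|=0$ the exponent is $-\tfrac12$, so $g^{-1/2}\le C\langle x\rangle^{\mu/4}$ and $\tilde p>\tfrac12+\tfrac\mu4$ is the required condition, matching $\max\bigl(0,(1-2|\delta|)\tfrac\mu4\bigr)$ in each case.

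The one genuinely non--routine point, which I expect to be the main obstacle, is the norm--continuity of the multiplication families $\lambda\mapsto M_\lambda$ at $\lambda=0$. Away from $0$ it is immediate, since there $g(\cdot,\lambda)\ge\sqrt{2\lambda}>0$ uniformly in $x$, so the relevant power of $g$ is uniformly Lipschitz in $\lambda$. When the power $\beta$ of $g$ in $M_\lambda$ is $\ge0$ one uses $|g(\cdot,\lambda)^\beta-g(\cdot,0)^\beta|\le C(2\lambda)^{\min(\beta,1)/2}\to0$ uniformly (H\"older, resp.\ Lipschitz, continuity of $t\mapsto t^\beta$ on $[0,C]$ together with $g(\cdot,\lambda)-g(\cdot,0)\le\sqrt{2\lambda}$). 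When $\beta<0$ (the cases $n<\tfrac12$ above, and $d=1$, $|\delta|=0$), $t\mapsto t^\beta$ blows up at $0$ while $g(\cdot,0)$ decays at spatial infinity, so I would split the sup over $x$: on $\langle x\rangle\le R$ the function $g(\cdot,0)$ is bounded below, so $M_\lambda\to M_0$ uniformly there by continuity; on $\langle x\rangle>R$ the \emph{strict} inequalities imposed on $\tilde s_n$, resp.\ $\tilde p$, leave a spare factor $\langle x\rangle^{-\epsilon'}$ with $\epsilon'>0$, so $\sup_{\langle x\rangle>R}|m_\lambda-m_0|=O(R^{-\epsilon'})$ uniformly in $\lambda$. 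Letting first $R\to\infty$ and then $\lambda\searrow0$ yields $\|M_\lambda-M_0\|_{\mathcal B(L^2)}\to0$, which is the last ingredient; the remainder is bookkeeping with the weights.
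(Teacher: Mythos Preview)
Your proposal is correct and is exactly the argument the paper has in mind. The paper does not spell out a proof of this corollary at all; it simply says it ``interprets Theorem~\ref{thm:4} in terms of the usual weighted spaces'', leaving the conversion of the $g$--dependent weight $(\langle x\rangle g)^{-n}\langle x\rangle^{-1/2-\epsilon}g^{1/2}$ into a pure $\langle x\rangle$--weight to the reader. Your factorization via the multiplication operator $M_\lambda$, the check that the stated thresholds on $\tilde s_n$ and $\tilde p$ are precisely what make $m_\lambda$ bounded (with a little room to spare), and the compact/large--$x$ splitting to get norm continuity of $M_\lambda$ when the exponent of $g$ is negative, are all correct and constitute the complete verification. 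The only cosmetic point is the phrasing ``letting first $R\to\infty$ and then $\lambda\searrow0$'': what you actually do (and what your estimates show) is the standard $\eta/2$ argument---given $\eta>0$, first fix $R$ large so the tail is $<\eta/2$ uniformly in $\lambda$, then take $\lambda$ small so the compact part is $<\eta/2$---but the mathematics is sound.
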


The proof of Theorems  \ref{thm:4a} and \ref{thm:4} 
is based on the following analogue of Lemma \ref{lem:2a}  (for convenience we 
focus as before on   the case of ``$+$''--superscript only). The symbol $\chi_3$
appearing in the statement (\ref{it:p40}) below is specified  as
before, i.e. by 
(\ref{eq:partitionleft}) and the subsequent discussion.

\begin{lemma}\label{lem:22a} 
\begin{subequations}
  \begin{enumerate}[\normalfont (i)]
\item \label{it:p20}For  all  $n\geq 0$  and  $\epsilon>0$, 
\begin{equation}
  \label{eq:x_weights29bc9}
 (\langle x \rangle g)^{-n} \langle x \rangle^{-\frac12 -\epsilon}
  g^{1\over 2}J^+(\lambda)
\end{equation} is a continuous ${\mathcal
 B}(L^{2,-n}(S^{d-1}),L^{2}(\R^d))$--valued  function in
$\lambda\in[0,\infty[$.
\item \label{it:p30}For  all $n\in \R$  and  $\epsilon>0$, 
\begin{equation}
  \label{eq:x_weights29c}
 (\langle x \rangle g)^{-n} \langle x \rangle^{\frac 12 -\epsilon} g^{-\frac
    12}T_{\bd}^+(\lambda)
\end{equation} is a continuous ${\mathcal
 B}(L^{2,-n}(S^{d-1}),L^{2}(\R^d))$--valued  function in
$\lambda\in[0,\infty[$.
\item \label{it:p40}For all $m,n\in \R$, 
\begin{equation}
  \label{eq:x_weights29d}
 \langle x \rangle^m \Opr(\chi_3) T_{\bd}^+(\lambda)
\end{equation} is a continuous ${\mathcal
 B}(L^{2,-n}(S^{d-1}),L^{2}(\R^d))$--valued function in
$\lambda\in[0,\infty[$.
\item \label{it:p50}
For  all $m,n\in \R$, 
 \begin{equation}
  \label{eq:x_weights29e}
 \langle x \rangle^m T_{\pr}^+(\lambda)
\end{equation} is a continuous ${\mathcal
 B}(L^{2,-n}(S^{d-1}),L^{2}(\R^d))$--valued function in
$\lambda\in[0,\infty[$.
\end{enumerate}
\end{subequations}
\end{lemma}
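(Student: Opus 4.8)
The plan is to write all four operators as oscillatory integrals over $\omega\in S^{d-1}$ with phase $\phi^+(x,\omega,\lambda)$ and estimate them by (non‑)stationary phase, controlling everything uniformly for $\lambda\in[0,\lambda_0]$. First I would record, from \eqref{eq:jdef}, \eqref{eq:tdef}, \eqref{eq:rega1}, \eqref{eq:rega2} and the splitting $T^+(\lambda)=T^+_\pr(\lambda)+T^+_\bd(\lambda)$, that each of $J^+(\lambda),T^+_\bd(\lambda),T^+_\pr(\lambda)$ has kernel $(2\pi)^{-d/2}\e^{\i\phi^+(x,\omega,\lambda)}\tilde c(x,\omega,\lambda)$, supported in $\Gamma^+_{R_0,\sigma_0}$ (so $|x|\ge R_0$), where, uniformly in $\lambda\in[0,\lambda_0]$ and with locally uniform limits as $\lambda\searrow0$ (Propositions \ref{prop:booh}, \ref{prop:brrrr}): $\partial_\omega^\delta\partial_x^\gamma\tilde a^+=g(|x|)^{(d-2)/2}O(|x|^{-|\gamma|})$, $\partial_\omega^\delta\partial_x^\gamma\tilde t^+_\pr=g(|x|)^{(d-2)/2}O(|x|^{-\infty})$, and $\partial_\omega^\delta\partial_x^\gamma\tilde t^+_\bd=g(|x|)^{d/2}O(|x|^{-1-|\gamma|})$ with $\supp\tilde t^+_\bd\subseteq\{\hat x\cdot\omega\le 1-\sigma\}$; I would also use $\partial_\omega^\delta\partial_x^\gamma\phi^+=\langle x\rangle^{1-|\gamma|}O(g(|x|))$ for $|\delta|\ge1$, the lower bound $|\nabla_\omega\phi^+|\ge c\sqrt{1-\hat x\cdot\omega}\,\langle x\rangle g$ (Lemma \ref{lemma:jan} and Proposition \ref{prop:mixed_2aaaa}), and — for $\sigma'$ small — non‑degeneracy of the mixed Hessian with $|\det\partial_{\hat x}\partial_\omega\phi^+|\sim h(|x|)^{d-1}\sim(\langle x\rangle g)^{d-1}$ where $\hat x\cdot\omega$ is close to $1$, all uniform in $\lambda\ge0$ by \eqref{eq:form1}, \eqref{eq:8aammm}, \eqref{eq:8sammen}, Lemma \ref{lemmo} and Proposition \ref{prop:mixed_2aaaa}. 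Throughout I would pass to polar coordinates $x=r\hat x$ and use that $g$ and $\langle x\rangle g$ are polynomially bounded above and below.

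\textbf{The residual terms (\ref{it:p50}), (\ref{it:p40}).} For (\ref{it:p50}): duality on the sphere gives $|T^+_\pr(\lambda)\tau(x)|\le\|\e^{\i\phi^+(x,\cdot,\lambda)}\tilde t^+_\pr(x,\cdot,\lambda)\|_{L^{2,n}(S^{d-1})}\|\tau\|_{L^{2,-n}(S^{d-1})}$, and the first factor is $O(|x|^{-\infty})$ since the decay of $\tilde t^+_\pr$ beats the at most $(\langle x\rangle g)^{|\delta|}$ growth from differentiating $\e^{\i\phi^+}$ in $\omega$; hence $\langle x\rangle^mT^+_\pr(\lambda)$ is bounded $L^{2,-n}(S^{d-1})\to L^2(\R^d)$ for all $m,n$. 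For (\ref{it:p40}): the point is that the symbol of $\Opr(\chi_3)$ vanishes on the set $\{(x,\nabla_x\phi^+(x,\omega,\lambda)):(x,\omega)\in\supp\tilde t^+_\bd,\ |x|\ge R\}$; indeed energy conservation gives $|\nabla_x\phi^+(x,\omega,\lambda)|^2=|F^+(x,\omega,\lambda)|^2=2\lambda-2V(x)=g(|x|)^2(1+O(|x|^{-\epsilon_2}))$, so $a(x,\nabla_x\phi^+)\to1$, while by \eqref{eq:F_angle2} together with $\hat x\cdot\omega\le1-\sigma$ we get $b(x,\nabla_x\phi^+)\le1-c\sigma/2$ for $R$ large, which lies outside $\supp\tilde\chi_+\subseteq[1-2\bar\sigma,\infty[$ once $\bar\sigma$ is small relative to $\sigma$. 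Hence by the Fourier integral operator calculus — or directly by integrating by parts in $\xi$ in $\Opr(\chi_3)$ and in $\omega$ using $|\nabla_\omega\phi^+|\ge c\langle x\rangle g$ on $\supp\tilde t^+_\bd$ — the kernel of $\Opr(\chi_3)T^+_\bd(\lambda)$ and all its $\omega$-derivatives are $O(\langle x\rangle^{-\infty})$ uniformly in $\lambda$ (the remaining compact shell $R_0\le|x|\le R$ contributes a compactly $\xi$-supported, hence rapidly $(x-y)$-decaying, piece of $\Opr(\chi_3)$), which gives (\ref{it:p40}).

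\textbf{The boundary and main terms (\ref{it:p30}), (\ref{it:p20}).} On $\supp\tilde t^+_\bd$ one has $1-\hat x\cdot\omega\ge\sigma$, hence $|\nabla_\omega\phi^+|\ge c\langle x\rangle g$; integrating by parts $N$ times in $\omega$ (each step gains a factor $(\langle x\rangle g)^{-1}$, since $\partial_\omega^\delta\phi^+=O(\langle x\rangle g)$ for $|\delta|\ge1$, and the Sobolev regularity of $\tau$ is absorbed at the cost of finitely many extra powers of $\langle x\rangle g$, harmless as $N$ is arbitrary) together with $\tilde t^+_\bd=g^{(d-2)/2}O(g|x|^{-1})$ yields $|(\langle x\rangle g)^{-n}\langle x\rangle^{1/2-\epsilon}g^{-1/2}T^+_\bd(\lambda)\tau(x)|\le C_N\langle x\rangle^{-1/2-\epsilon-N}g(|x|)^{(d-1)/2-N}\|\tau\|_{L^{2,-n}(S^{d-1})}$; squaring, multiplying by $r^{d-1}$ and integrating over $r\ge R_0$ gives a convergent bound once $N$ is large, proving (\ref{it:p30}). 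For (\ref{it:p20}) I would split $\tilde a^+$ into a part supported near $\{\omega=\hat x\}$ and a part supported where $1-\hat x\cdot\omega$ is bounded below; the latter is handled exactly as in (\ref{it:p30}) (now with amplitude $g^{(d-2)/2}O(1)$), giving arbitrarily fast decay. For the near part, fix $r$ and view $\tau\mapsto[\hat x\mapsto J^+(\lambda)\tau(r\hat x)]$ as an oscillatory integral operator on $L^2(S^{d-1})$ with phase $\phi^+(r\hat x,\omega,\lambda)$; writing $\tau=(1-\Delta_\omega)^{n/2}\rho$ and transferring $(1-\Delta_\omega)^{n/2}$ onto the kernel replaces the amplitude by one of size $(rg)^ng^{(d-2)/2}$, and since the canonical relation is non‑degenerate with $|\det\partial_{\hat x}\partial_\omega\phi^+|\sim(rg)^{d-1}$, the $L^2$-bound for such operators — a $TT^*$ argument combined with the stationary phase method underlying Theorem \ref{thm:kkk}, cf.\ \cite[Theorem 7.7.6]{Ho1}, \cite[Theorem 4.3]{II} — gives operator norm at most $C(rg)^ng^{(d-2)/2}(rg)^{-(d-1)/2}\|\rho\|_{L^2(S^{d-1})}=C(rg)^ng^{-1/2}r^{-(d-1)/2}\|\rho\|_{L^2(S^{d-1})}$. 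Inserting the stated weights, passing to polar coordinates and integrating, the $(rg)^{\pm n}$ and $g^{\pm1/2}$ factors cancel and one is left with $\int_{R_0}^\infty r^{-1-2\epsilon}\,\d r<\infty$, uniformly in $\lambda$ — this is (\ref{it:p20}).

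\textbf{Continuity and the main obstacle.} In each case the bound just obtained is uniform for $\lambda$ near any given $\lambda_0\in[0,\infty[$, and moreover shows that the part of the operator with $|x|\ge R$ has norm controlled by a fixed negative power of $R$ uniformly in such $\lambda$. Given $\delta>0$, choose $R$ so that this is $<\delta$; on $\{|x|<R\}$ the kernels and their $\omega$-derivatives converge uniformly as $\lambda\to\lambda_0$ — this is precisely the content of the limit statements in Propositions \ref{prop:mixed_2aaaa}, \ref{prop:booh}, \ref{prop:brrrr} and Lemma \ref{lemmo}, and of the continuity of $\chi_3$ — so the $\{|x|<R\}$ part of the difference has norm tending to $0$; hence $\limsup_{\lambda\to\lambda_0}\|\mathcal O(\lambda)-\mathcal O(\lambda_0)\|\le 2\delta$ for every $\delta$, giving norm continuity on $[0,\infty[$, exactly as in the proof of the continuity statement in Proposition \ref{prop:resolvent_basic2}. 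The hard part is the uniform‑down‑to‑$\lambda=0$ oscillatory integral operator estimate in (\ref{it:p20}): one must verify that the mixed Hessian $\partial_{\hat x}\partial_\omega\phi^+$ stays non‑degenerate with $|\det|\sim(\langle x\rangle g)^{d-1}$ uniformly as $\lambda\searrow0$ and on the entire angular support of $\tilde a^+$ — this is where the smallness of $\sigma'$ and the zero‑energy asymptotics \eqref{eq:form1}, \eqref{eq:8aammm}, \eqref{eq:8sammen} and Proposition \ref{prop:mixed_2aaaa} are essential — and that $\e^{-\i\phi^+}(1-\Delta_\omega)^{n/2}_\omega(\e^{\i\phi^+}\tilde a^+)$ remains an oscillatory integral operator amplitude of the claimed order with $\lambda$-uniform constants.
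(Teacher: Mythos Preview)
There is a genuine gap in your argument for (\ref{it:p30}), and the same error infects the ``far part'' of (\ref{it:p20}).

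Your pointwise bound
\[
\bigl|(\langle x\rangle g)^{-n}\langle x\rangle^{\frac12-\epsilon}g^{-\frac12}T^+_\bd(\lambda)\tau(x)\bigr|
\le C_N\,\langle x\rangle^{-\frac12-\epsilon-N}g(|x|)^{\frac{d-1}{2}-N}\|\tau\|_{L^{2,-n}(S^{d-1})}
\]
with \emph{arbitrary} $N$ is false. The $N$ integrations by parts in $\omega$ land derivatives on the full amplitude $\tilde t^+_\bd(x,\cdot)\tau(\cdot)$, hence on $\tau$; for $\tau\in L^{2,-n}(S^{d-1})$ these derivatives are not controlled by $\|\tau\|_{L^{2,-n}}$ alone (each one costs an extra factor $(\langle x\rangle g)$ when you dualize back, cancelling the gain). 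What \emph{is} true is the duality bound
\[
|T^+_\bd(\lambda)\tau(x)|\le \|\e^{\i\phi^+(x,\cdot)}\tilde t^+_\bd(x,\cdot)\|_{L^{2,n}_\omega}\,\|\tau\|_{L^{2,-n}}
\sim (\langle x\rangle g)^{n}\,g^{d/2}|x|^{-1}\,\|\tau\|_{L^{2,-n}},
\]
with no extra $N$. After inserting the weights of (\ref{it:p30}) this gives $\langle x\rangle^{-\frac12-\epsilon}g^{(d-1)/2}$, whose square times $r^{d-1}$ is $r^{(d-2)(1-\mu/2)-2\epsilon-1}$ at $\lambda=0$: this is \emph{not} integrable for $d\ge 3$ and small $\epsilon$. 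A toy model makes the failure transparent: with $\phi=x\omega$ and bounded amplitude supported in $|\omega|\in[1,2]$, taking $\tau_k(\omega)=\e^{-\i k\omega}\chi(\omega)$ gives $\|\tau_k\|_{L^2}\sim 1$ but $T\tau_k(k)\sim 1$, so no uniform $x$-decay. The paper's route is the one you already use in (\ref{it:p20}): after transferring (or gaining) exactly $n$ $\omega$-derivatives one is reduced to an oscillatory integral operator on $S^{d-1}$ with $L^2$-input, and the $TT^*$/change-of-variables argument of Lemma~\ref{lemma: statio} supplies the missing factor $(rg)^{-(d-1)/2}$ that makes the $r$-integral converge. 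In the paper the near/far split is not needed at all: for $\sigma'$ small the map $x\mapsto z$ of Lemma~\ref{lemma: statio} is a global diffeomorphism on $\supp\tilde a^+$, so the $TT^*$ bound covers the whole support.

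For (\ref{it:p40}) your key observation --- that $\chi_3(y,\xi)$ vanishes at $\xi=\nabla_y\phi^+(y,\omega,\lambda)$ on $\supp\tilde t^+_\bd$ for $|y|$ large --- is exactly the right one, but the variable to integrate by parts is $y$, not $\xi$ or $\omega$: on the support one has $|\xi-\nabla_y\phi^+|\ge c(|\xi|+g(|y|))$, so repeated $y$-IBP in $\int \e^{\i(\phi^+(y,\omega)-y\cdot\xi)}\chi_3(y,\xi)\tilde t^+_\bd(y,\omega)\,\d y$ produces a function of $(\xi,\omega)$ that is bounded with all derivatives and compactly supported in $\xi$ (since $|\xi|\le Cg(|y|)\le C'$ on $\supp\chi_3$); its $\xi$-Fourier transform is then Schwartz in $x$. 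Your proposed IBP in $\xi$ only gives $|x-y|$-decay, which by itself leaves the $y$-integral divergent; and ``IBP in $\omega$'' is not available here since $\omega$ is a free variable of the kernel. Parts (\ref{it:p50}) and the continuity argument are fine and match the paper.
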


Later on we will actually need a slightly  stronger bound than the one
of Lemma 
\ref{lem:22a} (i) with $n=0$, which we state below (referring to
notation of (\ref{eq:s_0}) and (\ref{eq:9boncloseda})):
\begin{lemma}
  \label{lemma: statio} For all $\tau\in L^2(S^{d-1})$,
  $J^+(\lambda)\tau\in B_{s_0}^*$. In fact, with a bounding constant
  independent of $\lambda\geq0$,
 \begin{equation*}
    g^{\frac 12}J^+(\lambda)\in \mathcal{B} (L^2(S^{d-1}),B_{\frac 12}^* ).
  \end{equation*} 
\end{lemma}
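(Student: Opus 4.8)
The plan is to deduce the statement from the sharp stationary-phase asymptotics underlying Theorem~\ref{thm:kkk}, recast as a uniform $L^2$-operator bound on the sphere. First note that it suffices to prove the second, uniform, assertion: if $g^{\frac12}J^+(\lambda)\tau\in B^*_{1/2}$ then $J^+(\lambda)\tau=g^{-1/2}\bigl(g^{\frac12}J^+(\lambda)\tau\bigr)\in B^*_{s_0}$, since on the support of $\tilde a^+$ one has $r\ge R_0$, so $g^{-1/2}\le(2\epsilon_1)^{-1/4}\langle x\rangle^{\mu/4}$ by Condition~\ref{assump:conditions1}\,(\ref{it:assumption1}), and $X^{\mu/4}\colon B^*_{1/2}\to B^*_{s_0}$ is bounded when $\lambda=0$ (then $s_0=\tfrac12+\tfrac\mu4$), while for $\lambda>0$ one may instead use that $g^{-1/2}\le(2\lambda)^{-1/4}$ is bounded and $s_0=\tfrac12$. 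Using the equivalent norm for $B^*_{1/2}$, the goal becomes
\[
\sup_{R>1}R^{-1}\int_{|x|<R}\bigl|g^{\frac12}J^+(\lambda)\tau\bigr|^2\,\d x\le C\|\tau\|^2_{L^2(S^{d-1})},\qquad C\text{ independent of }\lambda\ge0.
\]

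Next I would slice in the radial variable: writing $x=r\theta$ and using \eqref{eq:jdef},
\[
\int_{S^{d-1}}\bigl|g^{\frac12}J^+(\lambda)\tau(r\theta)\bigr|^2\,\d\theta=(2\pi)^{-d}g(r,\lambda)\,\|A_r(\lambda)\tau\|^2_{L^2(S^{d-1})},
\]
where $A_r(\lambda)$ is the oscillatory integral operator on $L^2(S^{d-1})$ with kernel $\e^{\i\phi^+(r\theta,\omega,\lambda)}\tilde a^+(r\theta,\omega,\lambda)$; this vanishes for $r<R_0$ because $\tilde a^+$ is supported in $\Gamma^+_{R_0,\sigma_0}$. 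Thus the goal follows once
\[
\|A_r(\lambda)\|_{\mathcal B(L^2(S^{d-1}))}\le C\,g(r,\lambda)^{-1/2}r^{-(d-1)/2},\qquad r\ge R_0,\ \lambda\ge0,
\]
holds with $C$ uniform: indeed then $\int_{S^{d-1}}\bigl|g^{\frac12}J^+(\lambda)\tau(r\theta)\bigr|^2\,\d\theta\le C'r^{-(d-1)}\|\tau\|^2$, and integrating $r^{d-1}\,\d r$ over $(R_0,R)$ gives $\int_{|x|<R}\le C'R\|\tau\|^2$.

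To prove the bound on $\|A_r(\lambda)\|$ I would apply the Schur test to $A_r(\lambda)^*A_r(\lambda)$, whose kernel is
\[
M_r(\omega,\omega')=\int_{S^{d-1}}\e^{\i(\phi^+(r\theta,\omega',\lambda)-\phi^+(r\theta,\omega,\lambda))}\,\overline{\tilde a^+(r\theta,\omega,\lambda)}\,\tilde a^+(r\theta,\omega',\lambda)\,\d\theta.
\]
The integrand is supported in a set of $(d-1)$-measure $O\bigl((\sigma')^{(d-1)/2}\bigr)$, empty unless $|\omega-\omega'|=O(\sqrt{\sigma'})$; on it $|\tilde a^+|\le Cg(r,\lambda)^{(d-2)/2}$, with all $\theta$-derivatives of the same size, by Proposition~\ref{prop:booh}\,(ii) and the bounds on $b^+$ (the factors $r$ created by $\partial_\theta=r\partial_{x,\mathrm{tang}}$ compensate the $r^{-|\gamma|}$ there). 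Repeated integration by parts in $\theta$ then gives $|M_r(\omega,\omega')|\le C_N g(r,\lambda)^{d-2}(\sigma')^{(d-1)/2}\bigl(1+rg(r,\lambda)|\omega-\omega'|\bigr)^{-N}$; since $rg(r,\lambda)\ge\sqrt{2\epsilon_1}\,R_0^{1-\mu/2}$ stays bounded below uniformly in $\lambda\ge0$, taking $N>d-1$ yields $\sup_\omega\int|M_r(\omega,\omega')|\,\d\omega'\le Cg(r,\lambda)^{d-2}\bigl(rg(r,\lambda)\bigr)^{-(d-1)}=Cg(r,\lambda)^{-1}r^{-(d-1)}$, i.e.\ $\|A_r(\lambda)\|^2_{\mathcal B}\le Cg(r,\lambda)^{-1}r^{-(d-1)}$, as required.

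The one step that genuinely uses the low-energy classical theory, and is the hard part, is the input for the integration by parts: a two-sided bound
\[
c_1\,rg(r,\lambda)|\omega-\omega'|\ \le\ \bigl|\nabla_\theta\bigl(\phi^+(r\theta,\omega',\lambda)-\phi^+(r\theta,\omega,\lambda)\bigr)\bigr|\ \le\ c_2\,rg(r,\lambda)|\omega-\omega'|
\]
on the support, with $c_1,c_2$ depending only on the potential, together with $O\bigl(rg(r,\lambda)|\omega-\omega'|\bigr)$ bounds on the higher $\theta$-derivatives of this phase. Since $\nabla_\theta\bigl(\phi^+(r\theta,\omega',\lambda)-\phi^+(r\theta,\omega,\lambda)\bigr)=r\bigl[F^+(r\theta,\omega',\lambda)-F^+(r\theta,\omega,\lambda)\bigr]_{\mathrm{tang}}$, this says that the tangential part of $\omega\mapsto F^+(r\theta,\omega,\lambda)$ is, up to $O(r^{-\breve\epsilon})$ relative error, a non-degenerate linear map of size $\simeq g(r,\lambda)$. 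In the radial case this is exactly Lemma~\ref{lemma:jan} together with $h(r)\simeq rg(r)$ from \eqref{eq:8aammm}, the non-degeneracy being the invertibility of the Hessian $\partial^2_\omega\phi^+$ (with $|\det\partial^2_\omega\phi^+|\simeq h^{d-1}$) established in the proof of Theorem~\ref{thm:kkk}; the general case follows from the perturbative estimates of Proposition~\ref{prop:mixed_2aaaa} after choosing $R_0$ large, and the higher derivatives are controlled by the same estimates. For energies bounded away from zero all of this is classical Isozaki--Kitada FIO analysis; the only new point is that every constant stays uniform as $\lambda\searrow0$, which is exactly why the weight $g^{\frac12}$ and the $\lambda$-uniform classical bounds of Section~\ref{Preliminaries} are essential.
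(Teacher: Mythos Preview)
Your proof is correct and reaches the same goal as the paper, but the core estimate is organized differently.

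Both you and the paper begin the same way: to bound $g^{1/2}J^+(\lambda)$ into $B^*_{1/2}$ uniformly in $\lambda\ge0$, it suffices to bound the operator $R^{-1}J^+(\lambda)^*g\,1_{\{|x|<R\}}J^+(\lambda)$ on $L^2(S^{d-1})$ uniformly in $R$ and $\lambda$, and this in turn reduces to bounding the ``sphere-slice'' operator obtained by integrating over $|x|=r$; in your notation this is $g\,r^{d-1}A_r(\lambda)^*A_r(\lambda)$. From here the paths diverge. The paper performs a Kuranishi-type change of variables: writing the phase difference as $(\omega_\perp-\omega'_\perp)\cdot z$ with $z=-\int_0^1\partial_{\omega_\perp}\phi^+(x,\omega_s,\lambda)\,\d s$, it shows (via \eqref{eq:form2}, \eqref{eq:form4}) that $x\mapsto z$ is a diffeomorphism on $S_r$ with Jacobian of size $(r^{-1}h)^{d-1}\simeq g^{d-1}$, so that after the change of variables the kernel becomes that of a pseudodifferential operator on $S^{d-1}$ with a symbol obeying uniform $S^0_{1,0}$-type bounds \eqref{eq:8symbnd}; uniform $L^2$-boundedness then follows from standard Calder\'on--Vaillancourt. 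You instead bound $M_r=A_r^*A_r$ directly: integration by parts in $\theta$ combined with Schur's test gives $\|A_r\|^2\le Cg^{-1}r^{-(d-1)}$.

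The crucial analytic input is the same in both arguments: the mixed Hessian $\partial_\theta\partial_\omega\phi^+$ is uniformly invertible, of size $\simeq g(r,\lambda)$, on the support---this is precisely \eqref{eq:form2}/\eqref{eq:form4} together with $h\simeq rg$ from \eqref{eq:8aammm} and the perturbative bounds of Proposition~\ref{prop:mixed_2aaaa}. Your references (Lemma~\ref{lemma:jan}, Theorem~\ref{thm:kkk}) point to the same underlying computation. Your route is more elementary and entirely sufficient for the lemma itself; the paper's route yields a sharper conclusion (the sphere-slice operator is a \emph{uniform} order-zero PsDO), and that extra structure is reused later, e.g.\ in the proofs of Lemma~\ref{lem:22a}\,(\ref{it:p20}) and the mapping statement in Theorem~\ref{thm:middl}.
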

\begin{proof}
  We need to  bound  the operator
  $P_R:=R^{-1}J^+(\lambda)^*g1_{\{|x|<R\}}J^+(\lambda)$ independently of
  $R>1$ and $\lambda\geq 0$. Writing $P_R=R^{-1}\int _0^R\d r\int_{S_r}Q_{r}\d x$ with $S_r=\{|x|=r\}$, it
  thus suffices to bound the operator $\int_{S_r}Q_{r}\d x$ independently of
  $r>0$ and $\lambda\geq 0$.

\noindent{\bf Step  I}. Analysis of  $\int_{S_r}Q_{r}\d x$. The kernel of $Q_{r}$ is given by
\begin{equation*}
 Q_{r}(\omega,\omega')= \e^{\i\big (\phi^+(x,\omega',\lambda)-\phi^+(x,\omega,\lambda) \big )}a(x,\omega,\omega',\lambda),
\end{equation*} where
\begin{equation*}
 a(x,\omega,\omega',\lambda)=(2\pi)^{-d}g(|x|,\lambda)\overline {\tilde a^+(x,\omega,\lambda)} \tilde a^+(x,\omega',\lambda).
\end{equation*} For simplicity, we shall henceforth omit the
superscript $+$, $r>0$  and $\lambda\geq 0$ in the notation.

Our goal is to show that $\int_{S_r}Q_{r}\d x$ is a PsDO on
$L^2(S^{d-1})$ with symbol $b(\omega,\omega', z)$ obeying uniform
bounds (uniform in $r>0$  and $\lambda\geq 0$) 
\begin{equation}
  \label{eq:8symbnd}
 |\partial_\omega^{\beta_1}\partial_{\omega'}^{\beta_2}  \partial_z^\alpha
 b|\leq C_{\beta_1,\beta_2,\alpha}\langle z\rangle^{-|\alpha|}.
\end{equation}
Clearly this would  prove  the lemma.

We can use a partition of unity on $S^{d-1}$, and therefore we
can assume that the vectors $\omega, \omega'$ and $\hat x$ are close
to the $d$'th standard vector  $e_d\in \R^{d}$. Consequently, we can 
use coordinates
\begin{align}\label{eq:coor1}
  \omega&=\omega_\bot+\omega_d e_d,\;\omega_d=\sqrt
  {1-\omega_\bot^2},\\
 x&=x_\bot+x_de_d,\;x_d=\sqrt {r^2-x_\bot^2}.\label{eq:coor2}
\end{align}

Next we write
\begin{equation*}
  \phi(x,\omega')-\phi(x,\omega)=
  (\omega_\bot-\omega'_\bot)\cdot z,\;z=-\int_0^1\partial_{\omega_\bot}\phi(x,s(\omega'-\omega)+\omega)\d s.
\end{equation*}

\noindent{\bf Step  II}. 
We shall show that the map
\begin{equation}
  \label{eq:toe}
  S_r\supset \mathcal{U}\ni x\to Tx=z\in \R^{d-1}\text{ is a
    diffeomorphism onto its range}. 
\end{equation} Here $\mathcal{U}$ is an open neighbourhood of $e_d$ 
containing the supports of $a(\cdot, \omega,\omega')$.

To this end we investigate the bilinear form
$\partial_x\partial_{\omega}\phi(x,\omega)$ on  $TS_
x^{d-1}\times TS_\omega^{d-1}$. Note that 
\begin{equation}\label{eq:form2}
  \partial_x
  \partial_\omega \phi^+_{\sph}(\hat x=\omega)= r^{-1}hI,
 \end{equation} cf. \eqref{eq:form1}.
 
 In the coordinates (\ref{eq:coor1}) and (\ref{eq:coor2}), the
 identity (\ref{eq:form2}) reads for $z_{\sph}=(Tx)_{\sph}$ (here we
 consider the case where $V_2=0$)
 \begin{equation}\label{eq:form3}
  \partial_{x_j}
 z_{\sph,\;i}(\omega=\omega'=\hat x)=
-r^{-1}h\big (\delta_{ij}+\omega_d^{-2}\omega_i\omega_j\big ),\;i,j\leq
d-1.
 \end{equation} 

Due to (\ref{eq:8aammm}), Proposition \ref{prop:mixed_2aaaa} and
(\ref{eq:form3}) we obtain the more general result
\begin{equation}\label{eq:form4}
  \partial_{x_j}
 z_i=
-r^{-1}h\big (\delta_{ij}+\omega_d^{-2}\omega_i\omega_j+O(\sigma')
+O(r^{-\breve \epsilon})\big ),\;i,j\leq
d-1.
 \end{equation} Here $O(\sigma')$ refers to a term obeying
 $|O(\sigma')|\leq C\sigma'$, where $ \sigma'>0$ is given in
 (\ref{eq:chi^2}) (assumed  to be small).  

In particular, $T$ is a local diffeomorphism with inverse  determinant
\begin{equation}\label{eq:det1}
  |\partial_{x_j}
 z_i|^{-1}=
(-r^{-1}h)^{1-d}(\omega_d')^2 \big (1+O(\sigma')
+O(r^{-\breve \epsilon})\big ).
 \end{equation} 

For a later application we note the uniform bounds  
\begin{equation}\label{eq:det2}
  \partial_\omega^{\beta_1}\partial_{\omega'}^{\beta_2}
  \partial_x^\alpha |\partial_{x_j}
 z_i|^{-1}=g^{1-d}r^{-|\alpha|}O(r^{0}).
 \end{equation} 

Also, $T$ is injective: Suppose $Tx^1=Tx^2$, then
\begin{align*}
0&= \int_0^1 \partial_{x_j}
 z_i(s(x^1-x^2)+x^2)(x^1_j-x^2_j)\d s\\&=-r^{-1}h\big(
 (\delta_{ij}+\omega_d^{-2}\omega_i\omega_j)+O(\sigma')
+O(r^{-\breve \epsilon})\big )(x^1_j-x^2_j).
      \end{align*} Using the  invertibility of the matrix
      $\delta_{ij}+\omega_d^{-2}\omega_i\omega_j$, it follows
      that $x^1=x^2$.

\noindent{\bf Step  III}. Analysis of  symbol $b$. Due to Step II, we
can change coordinates and obtain that $\int_{S_r}Q_{r}\d x$ is a PsDO
with a symbol $b=|\partial_{x_j}
 z_i|^{-1}a$. It remains to show (\ref{eq:8symbnd}). For zero indices
 $\beta_1=\beta_2=\alpha=0$, we obtain the bound by combining
 Proposition \ref{prop:booh} and (\ref{eq:det1}). For derivatives,
we note the bounds
\begin{equation}
  \label{eq:8symbndd1}
 |\partial_\omega^{\beta_1}\partial_{\omega'}^{\beta_2}  \partial_x^\alpha
 z|\leq C_{\beta_1,\beta_2,\alpha}gr^{1-|\alpha|},
\end{equation} which by a little bookkeeping yields 
\begin{equation}
  \label{eq:8symbndd2}
 |\partial_\omega^{\beta_1}\partial_{\omega'}^{\beta_2}  \partial_z^\gamma
 x|\leq C_{\beta_1,\beta_2,\gamma}r\langle z\rangle^{-|\gamma|}.
\end{equation}

Another bookkeeping using Proposition \ref{prop:booh},
(\ref{eq:det1})  and
(\ref{eq:8symbndd2}) yields (\ref{eq:8symbnd}).

\end{proof}

\noindent{\em Proof of Lemma \ref{lem:22a}.} 
 We drop the superscript ``$+$'' and the parameter
  $\lambda$ in the notation. We first prove  uniform boundedness on
  any compact interval 
$[0,\lambda_1]$.

 \noindent{\bf Re (\ref{it:p20}).} 
We replace $J=J(\cdot)$ by  $J\chi(\omega)$, where
 $\chi\in C^\infty(S^{d-1})$ with a sufficiently small support. We can
 assume that $n$ is a non-negative integer. 
 Instead of studying $J(1-\Delta_\omega)^{n/2}$, it
then suffices to study $J\partial_\omega^\nu$ for
$|\nu|\leq n$.

Integrating by parts, we observe that the corresponding  integral kernel equals
\begin{equation*}
C\partial_\omega^\nu \big (\e^{\i\phi(x,\omega)}\tilde
a(x,\omega)\big )
=\e^{\i\phi(x,\omega)}\tilde a_\nu(x,\omega),
\end{equation*}
where $\tilde a_\nu$ is a  linear combinations of terms of the form
\[\partial_\omega^{\nu_1}\phi(x,\omega)\cdots
\partial_\omega^{\nu_k}\phi(x,\omega )\partial_\omega^{\nu_0}
\tilde a(x,\omega),\]
with $\nu_0+\nu_1+\cdots+\nu_k=\nu$. Thus, using that
$|\partial_\omega^\delta\phi|\leq C\langle x\rangle g$
(cf. \eqref{eq:ole}) and Proposition \ref{prop:booh}, we obtain
\begin{equation}
 \label{eq:x_weights29bb}
\partial_\omega^\delta\partial_x^\gamma\tilde a_\nu(x,\omega)=O(\langle x\rangle^{n-|\gamma|} g^{n+\frac{d-2}2}).
\end{equation}
Then we follow the proof of Lemma \ref{lemma: statio}.

\noindent{\bf Re (\ref{it:p30}).} Assume first that $n\geq
0$. Then we follow the same scheme as
above. The bound on the relevant kernel needs to be replaced by 
\begin{equation}
 \label{eq:x_weights29bbb}
\partial_\omega^\delta\partial_x^\gamma
\tilde t_\nu(x,\omega)=
O(\langle x\rangle^{n-1-|\gamma|} g^{n+\frac d2}),
\end{equation} cf. 
Proposition \ref{prop:brrrr}. Using \eqref{eq:x_weights29bbb} we can proceed as
before.

Assume next that $n<0$. We can assume that $n$ is a negative integer. For fixed $x$, we decompose
$\omega=\omega_{\perp}+\sqrt{1-\omega_{\perp}^2}\hat x$, where
$\omega_{\perp}\cdot x=0$. By \eqref{eq:ole0}, we have the uniform lower
bound
\begin{equation}
  \label{eq:low71}
  |\nabla _{\omega_{\perp}}\phi(x,\omega)|\geq c|x|g \text{ for }\hat x\cdot
  \omega\leq 1-\sigma, 
\end{equation} and by \eqref{eq:ole}  the uniform upper
bounds
\begin{equation}
  \label{eq:low72}
  |\partial_{\omega_{\perp}}^\delta\phi(x,\omega)|\leq C|x|g.
\end{equation} 
We apply the non-stationary method based on the identity
\[\left(\i\frac{\nabla_{\omega_\perp}\phi}{|\nabla_{\omega_\perp}\phi|^2}\cdot 
\nabla_{\omega_\perp}\right)^{-n}\e^{\i\phi^+(x,\omega)}
=\e^{\i\phi(x,\omega)}.\]
After  performing $-n$ integrations
by parts, the bounds
(\ref{eq:low71}) and (\ref{eq:low72}) yield 
\[T\chi\tau=\sum_{|\nu|\leq-n}\int
\tilde t_\nu(x,\omega)\partial_{\omega_\perp}^\nu\tau(\omega)\d\omega,\]
where the functions $\tilde t_\nu$ also satisfy the bounds
\eqref{eq:x_weights29bbb}. Then  we proceed as before.

\noindent{\bf Re (\ref{it:p40}).}  
The kernel of $\Opr(\chi_3) T_{\bd}(\cdot)$ is given by the integral
\begin{equation*}
  \int \d \xi \e^{\i x\cdot \xi}\int \e^{\i (\phi(y,\omega)-y\cdot
  \xi)}k(\omega, y,\xi)\d y,\;k(\omega, y,\xi)=(2\pi)^{-3d/2}\chi_3(y,\xi)\tilde t_\bd(y,\omega).
\end{equation*} It suffices to show that
\begin{equation}\label{eq:inred}
  |\partial_\xi^\beta\partial_\omega^\delta\int \e^{\i (\phi(y,\omega)-y\cdot
  \xi)}k(\omega, y,\xi)\d
  y|\leq C_{\beta,\delta}
\text{  uniformly in }\xi, \omega\text{ and  }\lambda.
\end{equation} Notice
that the symbol $k$ is compactly supported in $\xi$. First we observe
that (using notation of Subsection \ref{Low energy  resolvent estimates1})
\[k=k_{\omega,\lambda}\in S_{\unif}(g^{\frac d2}\langle x \rangle^{-1},
g_{\mu,\lambda}).\]
  We can substitute $k\to k=F(|y|>2\bar R)k(\omega, y,\xi)$. 

Next we
integrate by parts, writing first 
\begin{equation*}
 \Big(\i \frac {\xi-\nabla_y\phi  }{|\xi-\nabla_y\phi|^2}\cdot \nabla _y
  \Big)^\ell \e^{\i (\phi(y,\omega)-y\cdot
  \xi)}=\e^{\i (\phi(y,\omega)-y\cdot
  \xi)}.
\end{equation*}

We need to argue that $\xi-\partial_y\phi\neq 0$ on the support of the involved
symbol. For that we recall the following elementary inequality valid for all $z_1,z_2 \in
\R^d$ and $ \kappa_1,\kappa_2>0$:  
\begin{equation}
  \label{eq:elem_eq}
  |z_1-z_2|^2\geq\min (\kappa_1^2/2,\kappa_2-\kappa_2^2/2) (|z_1|^2+|z_2|^2),
\end{equation} provided one of the following three conditions holds:
\begin{equation*}
  |z_2|\leq (1-\kappa_1)|z_1|,\;|z_1|\leq
   (1-\kappa_1)|z_2|\;\text{or}\;z_1\cdot z_2 \leq (1-\kappa_2) |z_1||z_2|.
\end{equation*} 

Now, on the support of the symbol $k$ we have $(1-\sigma')|y|\leq y\cdot \omega \leq
(1-\sigma)|y|$, cf. \eqref{eq:chi^2}. We use these inequalities  in \eqref{eq:F_angle1} and
\eqref{eq:F_angle2}, yielding 
\begin{equation*}
 1-C\sigma'- C|y|^{-\breve \epsilon} \leq\frac
   {\nabla_y\phi(y,\omega)}{|\nabla_y\phi(y,\omega)|}\cdot \hat y
\leq 1-c\sigma+ C|y|^{-\breve \epsilon},
\end{equation*} which in turn (if $\bar R$ is taken  large enough) 
 implies that 
\begin{equation}
  \label{eq:F_angle4}
 1-2C\sigma'\leq\frac
   {\nabla_y\phi(y,\omega)}{g(|y|)}\cdot \frac {y}{\langle y \rangle}
\leq 1-\frac
   {c}{2}\sigma.
\end{equation}  

We claim that there exists a small $c'=c'(\sigma,\sigma')>0$ such that 
\begin{equation}
  \label{eq:basicest}
  \big |\xi -\nabla_y\phi(y,\omega)\big | \geq c' \Big (|\xi| + \big |\nabla_y\phi(y,\omega)\big | \Big )
\end{equation}  on the support of $k$ (showing in
particular that $\xi-\partial_y \phi\neq 0$). 

Obviously, \eqref{eq:basicest} follows from \eqref{eq:elem_eq} with 
\begin{equation*}
 z_1=\tfrac
{\xi}{g(|y|)}\text{ and } z_2=\tfrac
{\nabla_y\phi(y,\omega)}{g(|y|)},
\end{equation*}
provided one of the above
three conditions hold. If all of those conditions fail, so that
intuitively 
$z_1\approx z_2$, we can replace $z_2$ in \eqref{eq:F_angle4} by
$z_1$ yielding
\begin{equation}
  \label{eq:F_angle5}
 1-3C\sigma'\leq b(x,\xi)
\leq 1-\frac
   {c}{3}\sigma.
\end{equation}  Here we applied  \eqref{eq:elem_eq}  for some  $\kappa_1$
and $\kappa_2$, depending on $\sigma$ and $\sigma'$.  
Now,  the second inequality of \eqref{eq:F_angle5} is violated on the
support of $\tilde{\chi}_{+}(b(y,\xi))$, provided that $\bar \sigma>0$ of
\eqref{eq:supp tilde chi2} is chosen such that 
\begin{equation}
  \label{eq:small_epsi}
  2\bar \sigma< \frac{c}{3}\sigma.
\end{equation}

We have shown the bound \eqref{eq:basicest} on the support of the
symbol $k$, and therefore in
particular on the support of the relevant symbol, after performing the
$y$--integrations by parts. The estimate (\ref{eq:inred}) follows.

\noindent{\bf Re (\ref{it:p50}).} 
First we assume that $n\geq0$. Integrating by parts in $\omega$,
as in the proof of
 (\ref{it:p20}), and using 
Proposition \ref{prop:brrrr}, which says that
  $t_\pr$ with all its derivatives is
 $O(\langle x\rangle^{-\infty})$, we obtain that $\langle x\rangle^m
 T_\pr(\lambda)$ is in ${\mathcal B}(L^{2,-n}(S^{d-1}),L^2(\R^d))$ for any
 $m$.  The case
 $n<0$ then follows trivially.

Let us now prove the continuity. Consider for instance 
 (\ref{it:p20}).
Let $\tau\in C^\infty(S^{d-1})$ and set
\[J_{n,\epsilon}(\lambda):=(\langle x\rangle g)^{-n}\langle
 x\rangle^{-\frac12-\epsilon}g^{\frac12}J(\lambda).\]
Clearly, for (small) $\kappa>0$,
\begin{equation}
J_{n,\epsilon}(\lambda)\tau=
F(\kappa|x|<1)J_{n,\epsilon}(\lambda)\tau+
F(\kappa|x|>1)\langle x\rangle^{-\epsilon/2}J_{n,\epsilon/2}(\lambda)\tau.
\label{labb8}\end{equation}

We know that $J_{n,\epsilon/2}(\lambda)$ is bounded uniformly in
$\lambda$. Hence the second term on the right of \eqref{labb8} is
$O(\kappa^{\epsilon/2})$.

 We know that $a(x,\omega,\lambda)$, 
 $\phi(x,\omega,\lambda)$ and $g(x,\lambda)^{\pm1}$ are continuous down to
 $\lambda=0$. The first term on the right of  \eqref{labb8} involves only
 variables in a compact set. 
Therefore it is continuous in $\lambda$. Hence
 $J_{n,\epsilon}(\lambda)\tau$ is continuous as the uniform limit of continuous
   functions. 

By the uniform bound, which we proved before,
we conclude that
 $J_{n,\epsilon}(\lambda)$   is strongly continuous in ${\mathcal
  B}(L^{2,-n}(S^{d-1}),L^2(\R^d))$. 

Now  
\[J_{n,\epsilon}(\lambda)=g^{\epsilon/2}J_{n+\epsilon/2,\epsilon/2}(\lambda)
(1-\Delta_\omega)^{-\epsilon/4}(1-\Delta_\omega)^{\epsilon/4},\] 
  where $g^{\epsilon/2}$ is strongly continuous,
$J_{n+\epsilon/2,\epsilon/2}(\lambda)$ is strongly continuous in\\
${\mathcal  B}(L^{2,-n-\epsilon/2}(S^{d-1}),L^2(\R^d))$,  
$(1-\Delta_\omega)^{-\epsilon/4}$ is a compact operator on $L^{2,-n-\epsilon/2}(S^{d-1})$ and 
$(1-\Delta_\omega)^{\epsilon/4}$ is a unitary element of  
${\mathcal  B}(L^{2,-n}(S^{d-1}),L^{2,-n-\epsilon/2}(S^{d-1}))$.   
We invoke the general fact that the product of
   a strongly continuous operator-valued
 function   and a compact operator is norm
  continuous. Whence we obtain the norm continuity
of  $J_{n,\epsilon}(\lambda)$    in ${\mathcal
  B}(L^{2,-n}(S^{d-1}),L^2 (\R^d))$. 

The proof of the norm continuity of the operators in the remaining parts of
the lemma is similar.\qed

\noindent{\em Outline of a proof of Theorems \ref{thm:4a} and
  \ref{thm:4}.} The proof goes along the lines of the proof of Theorem
\ref{thm:1}. In particular this amounts to 
 inserting  the right hand side of (\ref{eq:partitionleft}) to the right
of the resolvent in (\ref{wave22}) and expanding into three terms. Next,
using Proposition  \ref{prop:resolvent_basic2} and Lemma \ref{lem:22a},
we conclude that $W^+(\lambda)$ is
well-defined as a sum of four  operators, say   $T_j(\lambda)$.  In
fact, all  of the four maps
\begin{equation*}
[0,\infty[\ni\lambda\to (\langle x \rangle g)^{-n} \langle x \rangle
^{-\frac12 -\epsilon} g^{1\over 2}T_j(\lambda)\in  {\mathcal
B}(L^{2,-n }(S^{d-1}),L^{2}(\R^d))
\end{equation*} are continuous.

For the independence of $W^+(\lambda)$ of cutoffs, we use Propositions 
 \ref{prop:proposi1}  and \ref{prop:propa5} in the same way as in the
arguments for deducing (\ref{eq:7}) stated below.

The formula \eqref{eq:delt2} follows by combining \eqref{eq:7usef},
Remark \ref{remarks:nosmmmmo} \ref{it:csmoottaa}) and  the shown
continuity properties of $W^+(\lambda)$ and $W^+(\lambda)^*$.
\qed

\begin{lemma} \label{prop:qs4}  For 
 any $\lambda\geq0$,  $ R(\lambda{\pm}\i0)
T^{\pm}(\lambda)$ is well-defined as a map from  ${\mathcal
  D}'(S^{d-1})$ to $L^{2,-\infty}$ and
  \begin{equation}
      \label{eq:7}
 0=J^{\pm}(\lambda)+\i R(\lambda{\pm}\i0)
T^{\pm}(\lambda).
    \end{equation}
\end{lemma}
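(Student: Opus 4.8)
The plan is to show that the right-hand side of \eqref{eq:7} --- i.e.\ the state $u:=J^{+}(\lambda)\tau+\i R(\lambda+\i0)T^{+}(\lambda)\tau$ (I treat the ``$+$'' case; the ``$-$'' case is the mirror image, with $\phi^-$, $R(\lambda-\i0)$ and $\{b<1\}$ replacing $\phi^+$, $R(\lambda+\i0)$ and $\{b>-1\}$) --- is an \emph{outgoing} solution of the homogeneous equation $(H-\lambda)u=0$, and therefore vanishes by the Sommerfeld uniqueness statement, Proposition \ref{prop:propa5}.

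First I would give a meaning to $R(\lambda+\i0)T^{+}(\lambda)$ on $\mathcal D'(S^{d-1})$ and locate its scattering wave front set. Splitting $T^{+}(\lambda)=T^{+}_\pr(\lambda)+T^{+}_\bd(\lambda)$ and inserting the partition of unity \eqref{eq:partitionleft}, $I=\Opr(\chi_1)+\Opr(\chi_2)+\Opr(\chi_3)$, to the left of $T^{+}_\bd(\lambda)$, one handles the four resulting summands exactly as in the proofs of Theorems \ref{thm:1} and \ref{thm:4a}, except that the boundary value $R(\lambda+\i0)$ now replaces $R(\lambda-\i0)$. By Lemma \ref{lem:22a}\,(\ref{it:p40}),(\ref{it:p50}) (and an integration by parts in $\omega$ for distributional $\tau$) the terms $T^{+}_\pr(\lambda)$ and $\Opr(\chi_3)T^{+}_\bd(\lambda)$ map $\mathcal D'(S^{d-1})$ into $L^{2,\infty}$, so Proposition \ref{prop:proposi} applies and puts the scattering wave front set of $R(\lambda+\i0)$ applied to these into $\{b=1\}$. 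The pieces $\Opr(\chi_1)T^{+}_\bd(\lambda)$ and $\Opr(\chi_2)T^{+}_\bd(\lambda)$ have symbols supported in $\{a\ge C_0\}$, resp.\ $\{a\le 2C_0\}$, while $T^{+}_\bd(\lambda)$ is an oscillatory integral with phase $\phi^{+}$ and amplitude supported where $1-\sigma'\le\hat x\cdot\omega\le 1-\sigma$; hence, by \eqref{eq:F_angle1}--\eqref{eq:F_angle2} and the eikonal equation (which gives $(\nabla_x\phi^{+})^2\approx g^2$), its scattering wave front set lies in $\{a=1,\ b\ge 1-C\sigma'\}$, contained in $\{b>-1\}$ once $\sigma'$ is small (as assumed in this section). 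Consequently Propositions \ref{prop:proposi1} and \ref{prop:proposibb} apply with any $k$ obeying $-1<k<1-C\sigma'$; combining their third parts with the second part of Proposition \ref{prop:proposi1} (which confines the wave front set to $\{a=1\}$) and with $C_0\ge 1$, one obtains that $R(\lambda+\i0)\Opr(\chi_j)T^{+}_\bd(\lambda)\tau$ is well defined and has $WF^{-s_0}_{\sc}\subseteq\{a=1,\ b\ge k\}\subseteq\{b>-1\}$. Summing, $R(\lambda+\i0)T^{+}(\lambda)\colon\mathcal D'(S^{d-1})\to L^{2,-\infty}$ is well defined and $WF^{-s_0}_{\sc}\big(R(\lambda+\i0)T^{+}(\lambda)\tau\big)\subseteq\{b>-1\}$.

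Next I would check that $u$ annihilates $H-\lambda$. For $\tau\in C^{\infty}(S^{d-1})$ a direct computation on the integral \eqref{eq:jdef} shows that the eikonal equation $\tfrac12(\nabla_x\phi^{+})^2+V=\lambda$ cancels the leading term and leaves precisely $-\i$ times the amplitude $\e^{\i\phi^{+}}\tilde t^{+}$ of \eqref{eq:sym_dif}, \eqref{eq:tdef}, i.e.\ $(H-\lambda)J^{+}(\lambda)\tau=-\i\,T^{+}(\lambda)\tau$; on the other hand $(H-\lambda)R(\lambda+\i0)T^{+}(\lambda)\tau=T^{+}(\lambda)\tau$ follows from the regularized limits defining the objects of the previous step (via $(H-\lambda)R(\lambda+\i\epsilon)=I+\i\epsilon R(\lambda+\i\epsilon)$ and the uniform bounds of Propositions \ref{prop:proposi1}, \ref{prop:proposibb}). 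Hence $(H-\lambda)u=0$. Furthermore $WF^{-s_0}_{\sc}(u)\subseteq\{b>-1\}$: for the second summand this was just shown, and for $J^{+}(\lambda)\tau$ it follows from the asymptotics $J^{+}(\lambda)\tau\sim\e^{\i\phi(x,\lambda)}g^{-1/2}r^{-(d-1)/2}\tau(\hat x)$ of Theorem \ref{thm:kkk} (so the momentum content is $\nabla_x\phi^{+}\approx g\hat x$) together with the support condition $\hat x\cdot\omega\ge 1-\sigma'$ on $\tilde a^{+}$. Then, with $0\in L^{2,\infty}$ in the role of $v$, Proposition \ref{prop:propa5}\,(\ref{it:o1}) applies and forces $u=R(\lambda+\i0)\,0=0$; this is \eqref{eq:7} on $C^{\infty}(S^{d-1})$, and the general case $\tau\in\mathcal D'(S^{d-1})$ follows by continuity, as both $J^{+}(\lambda)$ and $R(\lambda+\i0)T^{+}(\lambda)$ are continuous $\mathcal D'(S^{d-1})\to L^{2,-\infty}$ and agree on the dense subspace $C^{\infty}(S^{d-1})$.

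I expect the main obstacle to be the bookkeeping in the first step: one must verify that each of the four summands of $R(\lambda+\i0)T^{+}(\lambda)$ is legitimately defined --- the non-smoothing part of $T^{+}_\bd$ decays only like $g|x|^{-1}$, i.e.\ below the limiting-absorption threshold, so one genuinely needs its phase-space localization and the refined resolvent statements of Propositions \ref{prop:proposi1}, \ref{prop:proposibb} --- and that the resulting states keep their scattering wave front sets inside $\{b>-1\}$, so that Proposition \ref{prop:propa5} really does apply. Since all the inputs (Propositions \ref{prop:resolvent_basic2}, \ref{prop:mixed_2aaaa}, \ref{prop:booh}, \ref{prop:proposi1}, \ref{prop:proposibb}, Lemma \ref{lem:22a}, Theorem \ref{thm:kkk}) hold uniformly down to $\lambda=0$, the zero-energy threshold needs no separate treatment.
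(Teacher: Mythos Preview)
Your proposal is correct and follows essentially the same route as the paper: both show that $u=J^{+}(\lambda)\tau+\i R(\lambda+\i0)T^{+}(\lambda)\tau$ is a solution of $(H-\lambda)u=0$ with $WF^{-s_0}_{\sc}(u)\subseteq\{b>-1\}$ and then invoke the Sommerfeld uniqueness statement, Proposition~\ref{prop:propa5}. One small point: for the wave-front-set bound on $J^{+}(\lambda)\tau$ the paper does not use the stationary-phase asymptotics of Theorem~\ref{thm:kkk}, but rather the same integration-by-parts argument behind Lemma~\ref{lem:22a}\,(\ref{it:p40}) (cf.\ \eqref{eq:F_angle4}--\eqref{eq:small_epsi}), showing directly that $\Opr(\chi_-(a)\tilde\chi_-(b))J^{+}(\lambda)\in\mathcal B(L^{2,-n}(S^{d-1}),L^{2,m})$ whenever $\supp\tilde\chi_-\subset\,]-\infty,2\bar\sigma-1[$; this gives $WF^s_{\sc}(J^{+}(\lambda)\tau)\cap\{b<\bar\sigma-1\}=\emptyset$ for every $\tau\in\mathcal D'(S^{d-1})$ at once and makes the separate density step unnecessary.
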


\proof Note that we can extend Lemma \ref{lem:22a} as follows: Let $\chi_-\in
C_c^\infty(\R)$ and $\tilde\chi_-\in C_c^\infty(\R)$ with
$\supp\tilde\chi_-\subseteq]-\infty,2\bar\sigma-1[$ for some small
$\bar\sigma>0$. Then,
for all $m,n\in \R$, 
\begin{equation*}
  \label{eq:x_weights29d9}
 \Opr(\chi_-(a)\tilde\chi_-(b))
 T_{\bd}^+(\lambda),\;
 \Opr(\chi_-(a)\tilde\chi_-(b))J^+(\lambda)\in {\mathcal
 B}(L^{2,-n}(S^{d-1}),L^{2,m}(\R^d)),
 \end{equation*}
    cf. (\ref{eq:F_angle5}) (recall the standing hypothesis of this subsection that the
positive parameter $ \sigma'$   in
 (\ref{eq:chi^2}) is sufficiently  small).

Therefore, for all $\tau\in{\mathcal D'}(S^{d-1})$ and $s\in \R$,
\begin{equation}
\big( WF_\sc^s(T^+(\lambda)\tau)\cup WF_\sc^s(J^+(\lambda)\tau)\big )\cap\{b<\bar\sigma-1\}=\emptyset.\label{qs1}
\end{equation} 

By the definition of $T^+(\lambda)$, 
\begin{equation}(H-\lambda)J^+(\lambda)\tau=-\i T^+(\lambda)\tau=-\i(H-\lambda)R(\lambda{+}\i0)T^+(\lambda)\tau.\label{qs2}\end{equation}  
Notice that due to (\ref{qs1}) and Proposition \ref{prop:proposi1} (\ref{item:r3}),
 the vector $u=R(\lambda{+}\i0)T^+(\lambda)\tau$ is in fact  well-defined
 and  
\begin{equation}
 WF_\sc^s(u)\cap\{b<\bar\sigma-1\}=\emptyset.\label{qs3}
\end{equation}  

Using (\ref{qs1})--(\ref{qs3}) and
Proposition \ref{prop:propa5}, we conclude that the generalized eigenfunction
satisfies 
\begin{equation}
J^+(\lambda)\tau+\i R(\lambda+\i0) T^+(\lambda)\tau=0. 
\end{equation}
\qed

\begin{remark*}
 There exists an alternative time-dependent proof of 
Lemma \ref{prop:qs4} that avoids the use of Proposition
\ref{prop:propa5}:
 Due to  (\ref{eq:ooo})
\begin{equation*}
0=\lim_{\epsilon\searrow0}\int(J^{\pm}+\i
  R(\lambda\pm\i\epsilon)T^\pm)\delta_\epsilon(\lambda) f\d \lambda, \ \ \hat f\in
  C_c(\R^d\backslash\{0\}),\end{equation*} cf. Lemma
\ref{lemma:t12} or Appendix \ref{Appendix}. The right hand is given by 
\begin{equation*}
\int(J^{\pm}+\i
  R(\lambda\pm\i 0)T^\pm)\delta_0(\lambda) f\d \lambda,\end{equation*}
cf. Appendix \ref{Appendix}.
Whence, by 
a density argument, \eqref{eq:7} follows. \qed
\end{remark*}

We complete this subsection by discussing a certain refined mapping
property of $W^\pm(\lambda)$. Besides its own interest its  application
(see Corollary
\ref{cor:extendmatrices-at} stated below) will be needed in Section
\ref{Generalized eigenfunctions}. The result 
 is related to the fact that the 
continuity in $\lambda$ of the operators in (\ref{eq:x_weights29bc}) and
(\ref{eq:x_weights29bc9}) is  proven only for $n\geq0$ while 
the continuity in $\lambda$ of the operator  in
(\ref{eq:x_weights29c}) is valid for all $n\in\R$.

 \begin{thm}\label{thm:middl} Fix real-valued  $\chi, \tilde\chi_-\in
   C_c^\infty(\R)$ and $\chi_+\in
   C^\infty(\R)$ such that  $\supp \tilde\chi_-\subset]-1,1[$, $\chi_+'\in
   C^\infty_c(\R)$  and $\supp
   \chi_+ \subset]C_0,\infty[$ . Let
   $\tilde A:=\Opw(\chi(a)\tilde\chi_-(b))$ and $A_+:=\Opw(\chi_+(a))$ for
   $\lambda\geq0$. 
   For   all $n\in \R$, $\epsilon>0$  and  with $A= \tilde A$ or $A= A_+$,  
\begin{equation}\label{eq:wwiii}
 W^\pm_{n,\epsilon}(\lambda):=(\langle x \rangle g)^{-n} \langle x \rangle^{-\frac12 -\epsilon}
  g^{1\over 2}AW^\pm(\lambda)
\end{equation}
 is a continuous  ${\mathcal
 B}(L^{2,-n}(S^{d-1}),L^{2}(\R^d))$--valued function in
$\lambda\in[0,\infty[$.  
 \end{thm}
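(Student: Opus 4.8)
The plan is to decompose $AW^\pm(\lambda)$ exactly as in the proofs of Theorems~\ref{thm:1} and~\ref{thm:4} and to use that $A$ localizes away from the region of phase space where $J^\pm(\lambda)$ and $T_\bd^\pm(\lambda)$ concentrate. Write $W^\pm(\lambda)=J^\pm(\lambda)+\i R(\lambda\mp\i0)T^\pm(\lambda)$, split $T^\pm=T_\bd^\pm+T_\pr^\pm$, and insert the partition of unity $I=\Opr(\chi_1)+\Opr(\chi_2)+\Opr(\chi_3)$ of~\eqref{eq:partitionleft} to the right of the resolvent, so that $AW^\pm(\lambda)$ becomes a finite sum. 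The key microlocal facts are: on $\supp a^\pm$ the eikonal equation together with Condition~\ref{assump:conditions2} gives $a(x,\nabla_x\phi^\pm)=1+O(\langle x\rangle^{-\epsilon_2})$, while Lemma~\ref{lemma:jan} (or Proposition~\ref{prop:mixed_2aaaa}) gives $b(x,\nabla_x\phi^\pm)=1+O(\sigma')+O(\langle x\rangle^{-\breve\epsilon})$; and on $\supp\tilde t_\bd^\pm$ the deflection bound~\eqref{eq:F_angle4} forces moreover $b(x,\nabla_x\phi^\pm)\le 1-\tfrac c2\sigma$. Hence, taking $\sigma'$ small and $R$ large relative to the fixed functions $\chi,\tilde\chi_-,\chi_+$ (and to $C_0$) --- which is permitted --- the symbol of $A$ is supported away from the stationary set of every oscillatory integral occurring in $J^\pm(\lambda)$, $T_\pr^\pm(\lambda)$, $\Opr(\chi_1)T_\bd^\pm(\lambda)$ and $\Opr(\chi_3)T_\bd^\pm(\lambda)$; by the non-stationary phase argument used for Lemma~\ref{lem:22a}(\ref{it:p40}) the corresponding products are therefore smoothing, i.e.\ they lie in $\mathcal B(L^{2,-n}(S^{d-1}),L^{2,m}(\R^d))$ for every $m,n\in\R$, with norm-continuous dependence on $\lambda\geq0$ by Proposition~\ref{prop:booh} and the joint continuity of $\phi^\pm,\tilde a^\pm$.

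For the contributions $AR(\lambda\mp\i0)(\text{smoothing piece})$ --- this covers $j=1,3$ and the $T_\pr^\pm$ part of $j=2$ --- one uses that, by Proposition~\ref{prop:proposi}, $R(\lambda\mp\i0)v$ applied to a rapidly decaying $v$ satisfies $WF^s_\sc\bigl(R(\lambda\mp\i0)v\bigr)\subseteq\{a=1\}\cap\{b=\pm1\}$ for every $s$. Since $\supp\sigma(A)$ is disjoint from this set (for $A=\tilde A$ because $\supp\tilde\chi_-\subset\,]-1,1[$ avoids $b=\pm1$, for $A=A_+$ because $\supp\chi_+\subset\,]C_0,\infty[$ avoids $a=1$), the microlocal calculus together with the normalization~\eqref{eq:larxia} (which holds by Proposition~\ref{prop:enerest}, the source of $AR(\lambda\mp\i0)v$ under $H-\lambda$ being again smoothing) shows that $AR(\lambda\mp\i0)v$ is smoothing, with norm-continuous $\lambda$-dependence via the continuity statements of Proposition~\ref{prop:resolvent_basic2} and Remark~\ref{remarks:nosmmmmo}(\ref{it:csmoottaa}).

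The one genuinely non-smoothing term is $AR(\lambda\mp\i0)\Opr(\chi_2)T_\bd^\pm(\lambda)$. For $\tau\in L^{2,-n}(S^{d-1})$ put $v=\Opr(\chi_2)T_\bd^\pm(\lambda)\tau$; by Lemma~\ref{lem:22a}(\ref{it:p30}) and boundedness of $\Opr(\chi_2)$ one has $v\in L^{2,s_n}$ with $s_n$ arbitrarily close to $-n(1-\tfrac\mu2)+\tfrac12+\tfrac\mu4$, and the non-stationary phase analysis shows $WF^t_\sc(v)$ lies, for every $t$, in $\{b\in[1-2C\sigma',\,1-\tfrac c2\sigma],\ a\approx1\}$, a compact subset of $]-1,1[$ in $b$ bounded away from $\pm1$; in particular the hypothesis~\eqref{labb2} holds. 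Then Proposition~\ref{prop:proposi} (when $s_n>s_0$, i.e.\ $n<0$) or Proposition~\ref{prop:proposi1} (when $s_n\le s_0$, i.e.\ $n\ge0$) gives $WF^{s_n-s_2}_\sc\bigl(R(\lambda\mp\i0)v\bigr)\subseteq\{a=1\}\cap\{b=\pm1\}$, whence --- as in the previous paragraph, $\supp\sigma(A)$ being disjoint from that set --- $AR(\lambda\mp\i0)v\in L^{2,s_n-s_2}$. A short computation shows that $(s_n-s_2)+n(1-\tfrac\mu2)+\tfrac12+\tfrac\mu4$ equals $1-\tfrac\mu2$ up to a harmless $\epsilon$, hence is positive, which is exactly the decay margin needed for $(\langle x \rangle g)^{-n}\langle x\rangle^{-\frac12-\epsilon}g^{\frac12}AR(\lambda\mp\i0)\Opr(\chi_2)T_\bd^\pm(\lambda)$ to be a bounded $\mathcal B(L^{2,-n}(S^{d-1}),L^2(\R^d))$-valued map; the $g$-weights are kept explicit throughout, as in Lemma~\ref{lem:22a}. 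Continuity in $\lambda\ge0$ follows from the continuity of $\Opr(\chi_2)T_\bd^\pm(\lambda)$ and of $R(\lambda\mp\i0)$, the uniformity in $\lambda$ of the wave-front-set containments, and --- to upgrade strong to norm continuity --- factoring through the compact operator $(1-\Delta_\omega)^{-\epsilon/4}$ on $S^{d-1}$ exactly as at the end of the proof of Lemma~\ref{lem:22a}.

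The main obstacle is precisely this last term: one must verify that the relevant phase-space supports are genuinely disjoint once $\sigma'$ (and $\bar\sigma$, $R$) are fixed small relative to the given $\chi,\tilde\chi_-,\chi_+$, and that the decay gained from the wave-front-set improvement by $A$ (Propositions~\ref{prop:proposi} and~\ref{prop:proposi1}) exceeds, by the fixed margin $1-\mu/2>0$, what the weight $(\langle x \rangle g)^{-n}\langle x\rangle^{-\frac12-\epsilon}g^{\frac12}$ requires, uniformly down to $\lambda=0$. Everything else runs parallel to the arguments already carried out for Lemma~\ref{lem:22a} and Theorem~\ref{thm:4}.
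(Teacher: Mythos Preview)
Your approach is considerably more circuitous than the paper's, and contains a genuine gap.

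The paper's proof is short and direct. First, one observes that
\[
B(\lambda):=(\langle x\rangle g)^{-n}\langle x\rangle^{-\frac12-\epsilon}g^{\frac12}Ag^{-\frac12}\langle x\rangle^{\frac12+\frac\epsilon2}(\langle x\rangle g)^{n}\in\Psi_{\unif}(\langle x\rangle^{-\epsilon/2},g_{\mu,\lambda}),
\]
so $B(\lambda)$ is bounded and norm continuous; combined with Theorem~\ref{thm:4} this immediately settles the case $n\ge 0$ and reduces everything to $n<0$. For $n<0$ one splits $AW^\pm(\lambda)=AJ^\pm(\lambda)+\i AR(\lambda\mp\i 0)T^\pm(\lambda)$. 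The first term is smoothing by non-stationary phase (after shrinking $\sigma'$ so that \eqref{eq:F_angle5} forces the stationary set away from $\supp\sigma(A)$). For the second term one \emph{does not} insert any partition to the right of the resolvent; instead one invokes Proposition~\ref{prop:resolvent_basic2}(\ref{some_label21}) for $A=\tilde A$ (or (\ref{item:C-02}) for $A=A_+$) directly: these estimates say precisely that $A$ placed on the \emph{left} of $R(\lambda\mp\i0)$ gains arbitrary powers of $\langle x\rangle$ and $\langle x\rangle g$, which is exactly what is needed to match the output of Lemma~\ref{lem:22a}(\ref{it:p30})+(\ref{it:p50}).

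Your argument misses both simplifications. You never reduce to $n<0$, and you do not use Proposition~\ref{prop:resolvent_basic2}(\ref{item:C-02}),(\ref{some_label21}) with $A$ on the left. Instead you insert the partition on the right and appeal to Propositions~\ref{prop:proposi}/\ref{prop:proposi1}. The specific claim ``$WF^{s_n-s_2}_{\sc}(R(\lambda\mp\i0)v)\subseteq\{a=1\}\cap\{b=\pm1\}$'' is not what those propositions give: part (ii) yields the containment in $\{a=1\}$ at level $s_n-s_2$, but part (iii) yields the containment in $\{b=\pm1\}$ only at the weaker level $s_n-2s_0-\epsilon$. Since $2s_0>s_2$ (indeed $2s_0-s_2=1-\mu/2$ at $\lambda=0$), the intersection bound you assert at level $s_n-s_2$ is unjustified. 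Your ``short computation'' therefore only checks out for $A=A_+$ (where the $\{a=1\}$ containment alone suffices). For $A=\tilde A$ the supply of decay from Proposition~\ref{prop:proposi}(iii) falls short of the demand by $2s_0-s_2=1-\mu/2$, and the argument breaks for $n$ in a neighbourhood of zero. One can try to repair this by using Proposition~\ref{prop:proposi1}(\ref{item:r3}) with $t$ taken arbitrarily large (exploiting that $WF^t_{\sc}(v)$ is concentrated near $b\approx1$ for \emph{every} $t$), but this requires additional bookkeeping you have not supplied, and even then converting the resulting wave-front-set emptiness into an operator norm bound with the required uniformity in $\lambda$ needs more care (for $A_+$ the symbol is not compactly supported in $\T^*$).
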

 \begin{proof} With reference (\ref{eq:sym_class}) (this  class of
   symbols is used
   extensively in \cite{FS})
   \begin{equation*}
    B(\lambda):=(\langle x \rangle g)^{-n} \langle x \rangle^{-\frac12 -\epsilon}
  g^{1\over 2}Ag^{-{1\over 2}}\langle x \rangle^{\frac12
    +\frac\epsilon 2}(\langle x \rangle g)^{n} 
  \in \Psi_{\unif}(\langle x \rangle^{-\frac\epsilon 2}, g_{\mu,\lambda}).
   \end{equation*}
 Whence, by the calculus, $B(\lambda)\in {\mathcal B}(L^2(\R^d))$ with a bound
 locally independent of $\lambda\geq0$, and  in fact $B(\cdot)$ is norm
 continuous. By using  this continuity and Theorem \ref{thm:4}, we
 conclude that it
 suffices to consider the case  $n<0$.

\noindent{\bf Re  $A= \tilde A$.}  
Since the construction of
$W^+(\lambda)$ is independent of the (small) parameters
$\sigma$ and $\sigma'$ in   (\ref{eq:chi^2}), we  can take them smaller
(if needed) to assure that
\begin{equation}
  \label{eq:supprtj}
 \sup \supp \chi_-<1-3C\sigma'. 
\end{equation}
Here we
refer to the left hand side of (\ref{eq:F_angle5}). 

Now, to show that $W^+_{n,\epsilon}(\lambda)$ is an element of
${\mathcal B}(L^{2,-n}(S^{d-1}),L^{2}(\R^d))$, we consider for
$\lambda>0$ the two terms of (\ref{wave22}) separately (if $\lambda=0$
we use instead \eqref{wave2}): The contribution from the first term
(i.e. from $J^+(\lambda)$) has better mapping properties than
specified, cf. Lemma \ref{lem:22a} (\ref{it:p40}). In fact, using
(\ref{eq:supprtj}) we can mimic the proof of Lemma \ref{lem:22a}
(\ref{it:p40}) to handle this contribution. As for the contribution
from the second term (i.e. from $\i R(\lambda- \i 0)T^+(\lambda)$), we
combine Lemma \ref{lem:22a} (\ref{it:p30}) and (\ref{it:p50}) and
Proposition \ref{prop:resolvent_basic2} (\ref{some_label21}).

By the same arguments, continuity in $\lambda\geq 0$ is valid for the
contribution from each of the mentioned two terms,
hence for $W^+_{n,\epsilon}(\lambda)$.

\noindent{\bf Re  $A= A_+$.}  Again we consider for
$\lambda>0$ the two terms of (\ref{wave22}) separately (if $\lambda=0$
we use instead \eqref{wave2}). The contribution from the first term
$J^+(\lambda)$  has again better mapping properties than
needed. More precisely,  we have the following analogue of   Lemma
\ref{lem:22a} (\ref{it:p40}):

For all $m\in \R$ the family of operators $\langle x \rangle^m  A_+J^+(\lambda)$
constitutes  a continuous ${\mathcal
 B}(L^{2,-n}(S^{d-1}),L^{2}(\R^d))$--valued function of
$\lambda\in[0,\infty[$.

To show this, we can again follow the proof of Lemma \ref{lem:22a}
(\ref{it:p40}). It suffices to show locally uniform boundedness in the
indicated topology and we may replace  $A_+\to \Opr(\chi_+(a))$. 
The kernel of $\Opr(\chi_+(a))J^+(\lambda)$ is given by the integral
\begin{align*}
  \int \d \xi \e^{\i x\cdot \xi}&\int \e^{\i (\phi(y,\omega)-y\cdot
  \xi)}k_{\omega,\lambda}(y,\xi)\d
y,\\&k_{\omega,\lambda}(y,\xi)=(2\pi)^{-3d/2}\chi_+(\xi^2/g\big
(|y|,\lambda)^2\big )\tilde a^+(y,\omega,\lambda).
\end{align*} It suffices to show that
\begin{equation}\label{eq:inrediiii}
  |\partial_\xi^\beta\partial_\omega^\delta\int \e^{\i (\phi(y,\omega)-y\cdot
  \xi)}k_{\omega,\lambda}(y,\xi)\d
  y|\leq C_{\beta,\delta}\japxi^{-d-1}
\text{  uniformly in }\xi, \omega\text{ and  }\lambda. 
\end{equation} For that we notice that 
\[k=k_{\omega,\lambda}\in S_{\unif}(g^{\frac d2-1},
g_{\mu,\lambda}).\]
  It suffices to show \eqref{eq:inrediiii} with  $k\to k=F(|y|>2\bar
  R)k_{\omega,\lambda}(y,\xi)$. 

  Next we
integrate by parts, writing  first
\begin{equation*}
 \Big(\i \frac {\xi-\nabla_y\phi  }{|\xi-\nabla_y\phi|^2}\cdot \nabla _y
  \Big)^\ell \e^{\i (\phi(y,\omega)-y\cdot
  \xi)}=\e^{\i (\phi(y,\omega)-y\cdot
  \xi)},
\end{equation*}  and then we invoking the uniform bounds 
\begin{equation}
  \label{eq:basicestiiii}
  C|\xi| \geq  \big |  \xi -\nabla_y\phi(y,\omega)\big | \geq c \big (|\xi| + \big |\nabla_y\phi(y,\omega)\big | \big ),
\end{equation}  which are valid on the support of $k$ (provided $\bar
R$ is chosen sufficiently large). Clearly, we obtain
\eqref{eq:inrediiii} by this procedure if $\ell $ (i.e. the number of integrations by
parts) is chosen
sufficiently large.

As for the contribution
from the second term $\i R(\lambda- \i 0)T^+(\lambda)$, we
combine Lemma \ref{lem:22a} (\ref{it:p30}) and (\ref{it:p50}) and
Proposition \ref{prop:resolvent_basic2} (\ref{item:C-02}).
\end{proof}

We can extend the identities 
(\ref{eq:7usef}) and \eqref{eq:delt2} (which below corresponds to $s=0$) as
follows: 
\begin{cor}Let $\chi, \chi_-\in
   C_c^\infty(\R)$ be given as in Theorem \ref{thm:middl}.  Fix
   $\lambda\geq0$.  Let again 
   $\tilde A:=\Opw(\chi(a)\chi_-(b))$. For   all $\delta>\frac12$  and
   $s\leq0$,  there exists the strong limit
\begin{equation}
  \label{eq:7usef111}
  {\rm
    s}-\lim_{\epsilon\searrow0}\,g^{\frac12}\delta_\epsilon^V(\lambda)\tilde
  Ag^{\frac12}=g^{\frac12}\delta^V(\lambda)\tilde Ag^{\frac12}=g^{\frac12}W^\pm(\lambda)W^\pm(\lambda)^*\tilde
  Ag^{\frac12}\end{equation}
in ${\mathcal
  B}(L^{2,s+\delta}(\R^d),L^{2,s-\delta}(\R^d))$.
\label{cor:extendmatrices-at}
\end{cor}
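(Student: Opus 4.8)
The plan is to reduce everything to the behaviour of $g^{\frac12}R(\lambda\pm\i\epsilon)\tilde Ag^{\frac12}$, exploiting that the phase–space cutoff contained in $\tilde A$ (namely $\chi_-(b)$ supported away from both $b=1$ and $b=-1$) makes \emph{both} boundary values of the resolvent well behaved. Since $\delta_\epsilon^V(\lambda)=(2\pi\i)^{-1}\bigl(R(\lambda+\i\epsilon)-R(\lambda-\i\epsilon)\bigr)$, it suffices to prove that each of the families $g^{\frac12}R(\lambda\pm\i\epsilon)\tilde Ag^{\frac12}$ converges strongly in $\mathcal B(L^{2,s+\delta},L^{2,s-\delta})$; I discuss the ``$+$'' term, the ``$-$'' term being entirely symmetric (use the reflected version of \eqref{eq:bunifa}, with a $\tilde\chi_-$, $\sup\supp\tilde\chi_-<1$, in place of $\tilde\chi_+$). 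Once this strong limit is shown to exist in the topology $\mathcal B(L^{2,s+\delta},L^{2,s-\delta})$, it must agree on the dense subspace $L^{2,\infty}$ with the strong limit in the coarser topology $\mathcal B(L^{2,\delta'},L^{2,-\delta'})$, $\delta'>\tfrac12+\tfrac\mu4$, that defines $\delta^V(\lambda)$ (applied to $\tilde Ag^{\frac12}$ and followed by the bounded multiplier $g^{\frac12}$); this yields the first displayed equality, and the second is then immediate from \eqref{eq:7usef} for $\lambda>0$ and \eqref{eq:delt2} for $\lambda=0$ (both choices of the superscript $\pm$ in $W^\pm$ giving the same operator $\delta^V(\lambda)$).

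For the uniform bound I would simply rerun the proof of Proposition \ref{prop:proposibb}, tracking a slightly different pair of weights. Writing $\zeta=\lambda+\i\kappa$ and letting $g_\zeta,a_\zeta,b_\zeta$ be the $|\zeta|$–versions of $g,a,b$, choose $\chi_-,\tilde\chi_+$ as in Proposition \ref{prop:resolvent_basic2} \eqref{some_label21} with in addition $\chi_-=1$ near $[0,\sup\supp\chi]$ and $\tilde\chi_+=1$ near $[\min(0,\inf\supp\chi_-),\infty[$; then, exactly as in \eqref{eq:opeq}, $\bigl(\Opw(\chi_-(a_\zeta)\tilde\chi_+(b_\zeta))-1\bigr)\tilde A\in\Psi(\langle x\rangle^{m},g_{\mu,\lambda})$ for every $m$. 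Applying \eqref{eq:bunifa} with its parameters chosen so as to reproduce the weights $\langle x\rangle^{s-\delta}$ on the left and $\langle x\rangle^{-s-\delta}$ on the right (possible because $\delta>\tfrac12$ and $-s\ge0$), together with $g\le g_\zeta$ and the observation that $g^{-\frac12}\tilde Ag^{\frac12}\in\Psi(1,g_{\mu,\lambda})$, and disposing of the $\Psi(\langle x\rangle^{m})$ remainder by the limiting absorption bound of Remark \ref{remarks:nosmmmmo} \ref{it:csmoottaa}), one obtains
\begin{equation*}
\sup_{\kappa\in\,]0,1]}\bigl\|g^{\frac12}R(\lambda+\i\kappa)\tilde Ag^{\frac12}\bigr\|_{\mathcal B(L^{2,s+\delta},L^{2,s-\delta})}<\infty .
\end{equation*}

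It then remains to prove convergence on the dense set $L^{2,\infty}$. For $v\in L^{2,\infty}$ the vector $w:=\tilde Ag^{\frac12}v$ again lies in $L^{2,\infty}$, and by Remark \ref{remarks:nosmmmmo} \ref{it:csmoottaa}) (Hölder continuity of $\langle x\rangle^{-\delta'-\mu/4}R(\zeta)\langle x\rangle^{-\delta'-\mu/4}$ on $M$, resp.\ its classical positive–energy analogue) the limit $R(\lambda+\i\epsilon)w\to R(\lambda+\i0)w$ holds in $L^{2,-\delta'-\mu/4}$ for every $\delta'>\tfrac12$ (and in $L^{2,-\delta'}$ if $\lambda>0$). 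Here one uses the elementary but decisive fact that $g(\cdot,0)^{\frac12}\le C\langle x\rangle^{-\mu/4}$, so that multiplication by $g^{\frac12}$ maps $L^{2,-\delta'-\mu/4}$ into $L^{2,-\delta'}$; choosing $\delta'\in\,]\tfrac12,\delta-s[$ (possible since $\delta>\tfrac12$, $s\le0$) gives $g^{\frac12}R(\lambda+\i\epsilon)w\to g^{\frac12}R(\lambda+\i0)w$ in $L^{2,-\delta'}\subseteq L^{2,s-\delta}$. Together with the uniform bound above, a three–$\epsilon$ argument upgrades this to strong convergence of $g^{\frac12}R(\lambda+\i\epsilon)\tilde Ag^{\frac12}$ on all of $L^{2,s+\delta}$; adding the two terms gives the asserted strong limit and identifies it as in the statement.

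I expect the only point requiring genuine care to be the bookkeeping in the second paragraph — matching the weights of \eqref{eq:bunifa} to the target topology $\mathcal B(L^{2,s+\delta},L^{2,s-\delta})$ and carrying the two $g^{\frac12}$ factors through the symbol swap — but this is essentially a verbatim repetition of the argument producing \eqref{eq:opeq2} in the proof of Proposition \ref{prop:proposibb}. Everything else is soft (density plus a uniform bound); the substitute for the lack of smoothness of the construction in $\lambda$ at zero energy is, as usual, the phase–space localisation $\tilde A$, which allows the weight $g^{\frac12}\langle x\rangle^{-\delta}$ in place of the $\langle x\rangle^{-\delta-\mu/4}$ dictated by the plain limiting absorption principle.
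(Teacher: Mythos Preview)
Your argument for the existence of the strong limit is essentially the paper's: both rely on Proposition~\ref{prop:proposibb} (you rerun its proof, the paper simply cites it), exploiting that $\supp\tilde\chi_-\subset\,]-1,1[$ makes both boundary values $R(\lambda\pm\i0)\tilde A$ well-behaved. The uniform bound and convergence on $L^{2,\infty}$ are handled the same way.

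Where you diverge from the paper is in establishing the second equality. You only observe that the composite $g^{\frac12}W^\pm(\lambda)W^\pm(\lambda)^*\tilde A g^{\frac12}$ agrees with the limit $B$ on the dense set $L^{2,\infty}$, via \eqref{eq:7usef} and \eqref{eq:delt2}. But those identities hold a~priori only in $\mathcal B(L^{2,\delta'},L^{2,-\delta'})$ with $\delta'>\tfrac12+\tfrac\mu4$; for $s<0$ the input $\tilde A g^{\frac12}v$, $v\in L^{2,s+\delta}$, need not lie in any $L^{2,\delta'}$ with $\delta'>\tfrac12+\tfrac\mu4$, so the composition is not a~priori defined there. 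The paper closes this by invoking Theorem~\ref{thm:middl} (which gives $W^\pm(\lambda)^*\tilde A g^{\frac12}\in\mathcal B(L^{2,s+\delta},L^{2,n}(S^{d-1}))$ with $n=s/s_1$ allowed to be negative---this is exactly where the phase-space cutoff $\tilde A$ is exploited on the wave-matrix side) together with Theorem~\ref{thm:4} (giving $g^{\frac12}W^\pm(\lambda)\in\mathcal B(L^{2,n}(S^{d-1}),L^{2,s-\delta})$). This factorisation through $L^{2,n}(S^{d-1})$ is not decoration: it is precisely what is used in Step~II of the proof of Theorem~\ref{thm-rep} to write $u=W^\pm(\lambda)\tau$ with $\tau\in\mathcal D'(S^{d-1})$. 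Your argument yields only that $B$ is the bounded extension of the densely defined composite, which makes the second equality a relabelling rather than a statement with independent content, and would not by itself support the application in \eqref{eq:rep0ab}.
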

\begin{proof} 
  It follows from Proposition \ref{prop:proposibb} that indeed there
  exists the limit
\begin{equation*}
  B:={\rm
    s}-\lim_{\epsilon\searrow0}\,g^{\frac12}\delta_\epsilon^V(\lambda)\tilde
  Ag^{\frac12}\text{
  in }{\mathcal
  B}(L^{2,s+\delta}(\R^d),L^{2,s-\delta}(\R^d)).
\end{equation*}

Let $n=s/s_1$, where $s_1$ is  given as in (\ref{eq:s_0}). 
 Due to Theorem \ref{thm:middl}, \[W^\pm(\lambda)^*\tilde
 Ag^{\frac12}=\big (g^{\frac12}\tilde AW^\pm(\lambda)\big )^*\in {\mathcal
 B}(L^{2,s+\delta}(\R^d),L^{2,n}(S^{d-1})),\] 
 and, due to Theorem,
\ref{thm:4} 
\[g^{\frac12}W^\pm(\lambda)\in {\mathcal
 B}(L^{2,n}(S^{d-1}),L^{2,s-\delta}(\R^d)).\]

We have  shown that \[{g^{\frac12}W^\pm(\lambda)W^\pm(\lambda)^*\tilde
  Ag^{\frac12}\in \mathcal
  B}(L^{2,s+\delta}(\R^d),L^{2,s-\delta}(\R^d)).\]

Since \[Bv=g^{\frac12}W^\pm(\lambda)W^\pm(\lambda)^*\tilde Ag^{\frac12}v\text{
for }v\in L^{2,\infty},\]
 cf. \eqref{eq:7usef} and \eqref{eq:delt2}, we are done by a density
 argument.
\end{proof} 


\subsection{Asymptotics of  short-range wave matrices}
\label{Asymptotics of the short-range wave matrices}
 Clearly, if $\mu>1$, there exists
\begin{equation}\label{short}
  W^{\pm}_{\sr}f=\lim _{t\to {\pm}\infty}\e^{\i tH}\e^{-\i tH_0}f,
\end{equation} 
which is the usual definition of wave operators in the short-range case.
In the case $\mu \in ]1,2[$, we can compare  our wave matrices with the wave
matrices defined by \eqref{short}.

 Recall $\hat p:=p/|p|$.
\begin {thm} \label{thm:shortr1} For $\mu \in ]1,2[$, the operators 
  \begin{align*}
   &\psi_\sr^{+}(p):= \i \int_{R_0}^\infty (|p| -F^+(l\hat p,p,p^2/2)\cdot
   \hat p)\;\d l,\\
&\psi_\sr^{-}(p):=-\i \int_{R_0}^\infty (|p|+F^+(-l\hat p,-p,p^2/2)\cdot \hat p)\;\d l
\end{align*}
are well-defined. 
If $V_2=0$, then $\psi_\sr^\pm(p)=\psi_\sr^\pm(|p|)$ with 
 \begin{equation*}
\psi_\sr^\pm(|p|)
={\pm \i\int_{R_0}^\infty \big(|p|-\sqrt{p^2-2V_1(r))}\big)\;\d r}. 
  \end{equation*} 

We have 
\begin{subequations}
 \begin{align}
    \label{eq:Uni3}
   &W^+_{\sr}=W^+ \e^{\i\psi_\sr^+(p)},\\\label{eq:Uni4}
&W^-_{\sr}=W^-\e^{\i\psi_\sr^-(p)}.
\end{align} 
\end{subequations}
  Whence in particular,  for all $\lambda>0$,
  \begin{subequations}
  \begin{align}
    \label{eq:Uni3bb}
   &W^+_{\sr}(\lambda)=W^+ (\lambda)\e^{\i\psi_\sr^+(\sqrt{2\lambda}\cdot)},\\\label{eq:Uni4bb}
&W^-_{\sr}(\lambda)=W^-(\lambda)\e^{\i\psi_\sr^-(\sqrt{2\lambda}\cdot)}.
\end{align}  
  \end{subequations}

\end {thm}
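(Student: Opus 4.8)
The plan is to prove Theorem \ref{thm:shortr1} by reducing the comparison of $W^\pm_{\sr}$ and $W^\pm$ to a computation of the momentum-space phase generated by the difference between the Isozaki--Kitada modifier $J^\pm$ and the identity (which plays the role of the ``modifier'' for the short-range wave operator). First I would establish that $\psi_\sr^\pm(p)$ is well defined: since $\mu>1$, the integrand $|p|-F^+(l\hat p,p,p^2/2)\cdot\hat p$ decays like $O(l^{-\mu})$ in $l$ (for $|p|$ in a compact set away from $0$), by the eikonal bounds \eqref{eq:postca} at positive energy, so the integral converges absolutely and defines a smooth function of $p\in\R^d\setminus\{0\}$. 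In the spherically symmetric case $V_2=0$, Lemma \ref{pri} and the explicit formula for $F^+$ give $F^+(l\hat p,p,\lambda)\cdot\hat p=\sqrt{p^2-2V_1(l)}$ (the angular momentum vanishes along the radial ray $x\parallel\omega$), which yields the stated radial formula.

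Next I would recall the standard existence argument for $W^\pm_{\sr}$ when $\mu>1$ (Cook's method, using that $\|V\e^{-\i tH_0}f\|$ is integrable in $t$ for $\hat f\in C_c(\R^d\setminus\{0\})$), and then show the key factorization \eqref{eq:Uni3}--\eqref{eq:Uni4} at the level of wave operators on $L^2(\R^d)$. The mechanism: both $W^+$ and $W^+_{\sr}$ are wave operators associated with $H$ in the sense of Subsection \ref{s1.1bb}, so by the uniqueness statement there (any two differ by a multiplication operator $\e^{\i\psi^+(p)}$) we know a priori that $W^+_{\sr}=W^+\e^{\i\psi^+(p)}$ for some function $\psi^+$; it remains only to identify $\psi^+$ with $\psi_\sr^+$. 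To identify it, I would compute $\slim_{t\to+\infty}\e^{\i tH}(J_0^+-I)\e^{-\i tH_0}$ (or work with the refined modifier $J^+$, which differs from $J_0^+$ by operators that vanish in the relevant limit by Theorem \ref{thm:t11}) applied to states with $\hat f\in C_c(\R^d\setminus\{0\})$. The point is that $J_0^+f(x)=(2\pi)^{-d/2}\int\e^{\i\phi^+(x,\xi)}a_0^+(x,\xi)\hat f(\xi)\d\xi$, and in the region controlling the $t\to+\infty$ limit one has $x\approx t\xi$, $\phi^+(x,\xi)-x\cdot\xi\to \int_{R_0}^\infty(F^+(l\hat\xi,\xi,\xi^2/2)-\xi)\cdot\hat\xi\,\d l\cdot|\xi|/|\xi|$ along the classical trajectory — more precisely the phase difference $\phi^+(x,\xi)-x\cdot\xi$ converges, as $x\to\infty$ along the orbit in direction $\hat\xi$, to $-\psi_\sr^+(\xi)$ with the sign and normalization dictated by the construction \eqref{eqo1} and the choice of base point $R_0\hat\xi$. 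Feeding this into the stationary-phase/Fourier-integral-operator asymptotics (the same machinery as Theorem \ref{thm:kkk}, or the cleaner abstract route via Lemma \ref{lemma:t12}) shows that $J_0^+\e^{-\i tH_0}f$ and $\e^{-\i tH_0}\e^{\i\psi_\sr^+(p)}f$ have the same limit after multiplication by $\e^{\i tH}$, i.e. $W^+=W^+_{\sr}\e^{-\i\psi_\sr^+(p)}$, which is \eqref{eq:Uni3}. The case of $W^-$ is identical after the substitution $\omega\to-\omega$, $t\to-\infty$, using \eqref{eq:rel_phi+-}, which accounts for the $-\i$ and the $+F^+(-l\hat p,-p,\cdot)$ in $\psi_\sr^-$.

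Finally, the fixed-energy statements \eqref{eq:Uni3bb}--\eqref{eq:Uni4bb} follow by restricting \eqref{eq:Uni3}--\eqref{eq:Uni4} to energy $\lambda$: since $\e^{\i\psi_\sr^+(p)}$ commutes with $H_0$ and acts on each fiber $L^2(S^{d-1})$ of the decomposition \eqref{decom} as multiplication by $\e^{\i\psi_\sr^+(\sqrt{2\lambda}\omega)}$ (a bounded operator on $L^2(S^{d-1})$, indeed on the weighted spaces $L^{2,-n}(S^{d-1})$ since $\psi_\sr^+$ is smooth), and since $W^+_{\sr}(\lambda)$ and $W^+(\lambda)$ are the fiber operators of $W^+_{\sr}$ and $W^+$ respectively (well defined and strongly continuous in $\lambda>0$ by Theorem \ref{thm:1}), the identity $W^+_{\sr}=W^+\e^{\i\psi_\sr^+(p)}$ disintegrates fiberwise to give the claim; uniqueness of the disintegration (up to null sets) combined with strong continuity in $\lambda$ pins it down for every $\lambda>0$.

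I expect the main obstacle to be the precise identification of the limiting phase, i.e. proving rigorously that $\lim_{x\to\infty,\ \hat x=\hat\xi}\big(\phi^+(x,\xi)-x\cdot\xi\big)=-\psi_\sr^+(\xi)$ with the correct sign and constant, and then converting this pointwise phase convergence into the operator identity $\slim_{t\to+\infty}\e^{\i tH}J^+\e^{-\i tH_0}(\,\cdot\,)=\slim_{t\to+\infty}\e^{\i tH}\e^{-\i tH_0}\e^{\i\psi_\sr^+(p)}(\,\cdot\,)$. This requires controlling the error in the Fourier-integral-operator representation of $\e^{-\i tH_0}$ composed with $J^+$ uniformly on the support of $\hat f$ (away from $\xi=0$), using the decay $\mu>1$ to ensure the relevant remainders vanish; the cleanest implementation is probably to avoid pointwise phase bookkeeping altogether and instead apply Lemma \ref{lemma:t12} with $\breve J^\pm=J^\pm$ and with $\breve J^\pm=\e^{\i\psi_\sr^\pm(p)}$, compare the two resulting stationary formulas \eqref{wave9}, and check that the integrands agree modulo terms whose $\epsilon\searrow0$ limit vanishes — here the short-range hypothesis $\mu>1$ is exactly what makes $T_\sr^\pm:=\i(H\e^{\i\psi_\sr^\pm(p)}-\e^{\i\psi_\sr^\pm(p)}H_0)$ a short-range perturbation amenable to the resolvent estimates of Proposition \ref{prop:resolvent_basic2}.
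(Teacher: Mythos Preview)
Your approach is correct and essentially the same as the paper's. The paper's proof is terser: it skips the appeal to the abstract uniqueness statement of Subsection~\ref{s1.1bb} and instead directly establishes the phase asymptotic
\[
\phi^+(x,\omega,\lambda)+\int_{R_0}^\infty(\sqrt{2\lambda}-F^+(l\omega,\omega,\lambda)\cdot\omega)\,\d l=\sqrt{2\lambda}\,\omega\cdot x+o(|x|^0),
\]
locally uniformly in $(\omega,\lambda)\in S^{d-1}\times]0,\infty[$ (obtained from \cite[(4.50)]{DS1} by a change of contour), and combines this with the standard large-time asymptotics of the free evolution to compare $J^+\e^{-\i tH_0}f$ with $\e^{-\i tH_0}\e^{\i\psi_\sr^+(p)}f$ directly. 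Your detour through the uniqueness theorem is not wrong, but it is unnecessary once you have this uniform phase asymptotic in hand; note also that the asymptotic is needed uniformly on the outgoing cone (not just along the single ray $\hat x=\hat\xi$) for the comparison to go through.
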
 
\begin{proof} One can readily
   show the theorem from well-known properties  of the free
   evolution and the fact that 
   \begin{equation}
     \label{eq:ph+}
     \phi^+(x,\omega,\lambda)+\int_{R_0}^\infty (\sqrt
   {2\lambda}-F^+(l\omega,\omega,\lambda)\cdot
   \omega)\;\d l=\sqrt{2\lambda}\omega\cdot x+o(|x|^0),
   \end{equation} which in turn follows from  \cite[(4.50)]{DS1}  and a
   change a contour of integration. The asymptotics is locally
   uniform in $(\omega,\lambda)\in S^{d-1}\times ]0,\infty[$.
 \end {proof}

\begin{remark} \label{remark:oscill}
   $\psi_\sr^\pm $  is indeed oscillatory. Notice that for
   $V_1(r)= -\gamma r^{-\mu}$,  as $\lambda \to 0^+$, we have
   \begin{eqnarray*}
\psi_\sr^+(\sqrt{2\lambda})&=&\int_{R_0}^\infty \Big(\sqrt
   {2\lambda}-\sqrt{2(\lambda+\gamma r^{-\mu})}\Big)\;\d r \\
&=&(2\lambda)^{\frac12-\frac1\mu}\int_{R_0(2\lambda)^{\frac1\mu}}^\infty
(1-\sqrt{1+2\gamma
   s^{-\mu}} )\d s\\
&=&(2\lambda)^{\frac12-\frac1\mu}\int_0^\infty
(1-\sqrt{1+2\gamma
   s^{-\mu}} )\d s+O\left(\lambda^0 \right),
\end{eqnarray*}
cf. \cite[(7.11)]{Y1}. See Remark
 \ref{remark:oscill2} for a similar  result. 
\end {remark}

\subsection{Asymptotics of  Dollard-type wave matrices}
\label{Asymptotics of the Dollard-type  scattering matrix}
For $\mu >\frac12$ and $\mu+\epsilon_2>1$,
 the Dollard-type wave operators are given by
\begin{align*}
  &W^{\pm}_{\dol}f=\lim _{t\to {\pm}\infty}\e^{\i tH}U_{\dol}(t)f,
\end{align*} where 
\begin{equation*}
 U_{\dol}(t)=\e^{-\i \int_{0}^t(p^2/2+V_1(sp)1_{\{|sp|\geq R_0\}})\;\d
 s}. 
\end{equation*}

We have the following analogue of Theorem \ref{thm:shortr1}.
\begin {thm} \label{thm:dol1} For $\frac12<\mu<2$, $\epsilon_2 <1$
  and $\mu+\epsilon_2>1$, the operators 
   \begin{align*} 
   &\psi_\dol^{+}(p)= \i \int_{R_0}^\infty (|p|-F^+(l\hat p,\hat p,p^2/2)\cdot
   \hat p -|p|^{-1}V_1(l))\;\d l,\\
&\psi_\dol^{-}(p)= -\i \int_{R_0}^\infty (|p|+F^+(-l\hat p,-\hat p,p^2/2)\cdot
   \hat p-|p|^{-1}V_1(l))\;\d l
\end{align*}
are well-defined. If $V_2=0$, then $\psi_\dol^\pm(p)=\psi_\dol^\pm(|p|)$ and
 \begin{equation*}
 \psi_{\dol}^\pm(|p|)=\pm\i \int_{R_0}^\infty (|p|-\sqrt{p^2-2V_1(r)}-|p|^{-1}
V_1(r))\;\d r.
  \end{equation*}

  We have
  \begin{subequations}
    \begin{align}
    \label{eq:Uni3a}
   &W^+_{\dol}=W^+ \e^{\i\psi_\dol^+(p)},\\
&W^-_{\dol}=W^-\e^{\i\psi_\dol^-(p)}.
\end{align} 
  \end{subequations}
 Whence in particular,  for all $\lambda>0$
 \begin{subequations}
   \begin{align}
    \label{eq:Uni3bba}
   &W^+_{\dol}(\lambda)=W^+ (\lambda)\e^{\i\psi_\dol^+(\sqrt{2\lambda}\cdot)},\\\label{eq:Uni4bba}
&W^-_{\dol}(\lambda)=W^-(\lambda)\e^{\i\psi_\dol^-(\sqrt{2\lambda}\cdot)}.
\end{align}
 \end{subequations}

\end {thm}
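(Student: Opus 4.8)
The plan is to prove Theorem \ref{thm:dol1} in complete parallel with the proof of Theorem \ref{thm:shortr1}, the only new ingredient being the subtraction of the extra term $|p|^{-1}V_1(l)$ coming from the Dollard phase. First I would establish that the operators $\psi_\dol^\pm(p)$ are well-defined: the integrand $|p|-F^+(l\hat p,\hat p,p^2/2)\cdot\hat p-|p|^{-1}V_1(l)$ must be integrable in $l$ over $[R_0,\infty[$. Writing $g(l)=\sqrt{p^2-2V_1(l)}$ in the spherically symmetric case, one has the elementary expansion
\begin{equation*}
|p|-g(l)-|p|^{-1}V_1(l)=|p|\Bigl(1-\sqrt{1-2V_1(l)/p^2}-(-V_1(l)/p^2)\Bigr)=O\bigl(V_1(l)^2\bigr)=O\bigl(l^{-2\mu}\bigr),
\end{equation*}
which is integrable precisely because $2\mu>1$, i.e. $\mu>\tfrac12$; this is the role of the hypothesis $\mu>\tfrac12$. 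In the general (non-spherical) case one replaces $g(l)$ by $F^+(l\hat p,\hat p,p^2/2)\cdot\hat p$ and uses Proposition \ref{prop:mixed_2aaaa}, in particular the bound $F^+(x)-F_\sph^+(x)=O(|x|^{-\mu/2-\breve\epsilon})$, together with the eikonal equation, to control the difference; the hypotheses $\epsilon_2<1$ and $\mu+\epsilon_2>1$ enter exactly here, to guarantee that the $V_2$-contribution decays fast enough to be integrable.

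The heart of the argument is the phase asymptotics analogous to \eqref{eq:ph+}, namely
\begin{equation*}
\phi^+(x,\omega,\lambda)+\int_{R_0}^\infty\bigl(\sqrt{2\lambda}-F^+(l\omega,\omega,\lambda)\cdot\omega-(2\lambda)^{-1/2}V_1(l)\bigr)\,\d l=\sqrt{2\lambda}\,\omega\cdot x-\int_0^{|x|}(2\lambda)^{-1/2}V_1(l)\,\d l+o(|x|^0),
\end{equation*}
so that $\phi^+$ minus the Dollard phase $\psi_\dol^+$ agrees, up to $o(|x|^0)$, with the phase $\sqrt{2\lambda}\,\omega\cdot x-(2\lambda)^{-1/2}\int V_1$ that appears when one computes the large-time asymptotics of $U_\dol(t)$ by stationary phase. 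I would derive this from \cite[(4.50)]{DS1} (the same estimate used for Theorem \ref{thm:shortr1}) plus a contour-deformation argument, exactly as indicated in the proof of Theorem \ref{thm:shortr1}, keeping track of the additional $|p|^{-1}V_1$ term throughout. Once this asymptotic identification of phases is in hand, the intertwining identities \eqref{eq:Uni3a} and its $-$ counterpart follow by the standard comparison of the modified free evolutions: writing $U_\dol(t)=\e^{-\i tH_0}\e^{-\i\Phi_\dol(t,p)}$ with $\Phi_\dol(t,p)=\int_0^t V_1(sp)1_{\{|sp|\geq R_0\}}\,\d s$ and using that the stationary-phase analysis of $J^+\e^{-\i tH_0}$ versus $U_\dol(t)$ produces precisely the phase factor $\e^{\i\psi_\dol^+(p)}$, one concludes $W^+_\dol=W^+\e^{\i\psi_\dol^+(p)}$.

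Finally, the fixed-energy statements \eqref{eq:Uni3bba} and \eqref{eq:Uni4bba} are immediate: since $\psi_\dol^\pm(p)$ is a function of momentum alone, it commutes with $H_0$ and acts on the fibre $\{p^2/2=\lambda\}$ as multiplication by $\psi_\dol^\pm(\sqrt{2\lambda}\,\cdot)$ on $L^2(S^{d-1})$; decomposing $W^\pm_\dol=W^\pm\e^{\i\psi_\dol^\pm(p)}$ along the direct integral \eqref{decom} and using \eqref{eq:7usefbbb} gives the claimed formula for each $\lambda>0$. The main obstacle I anticipate is purely technical bookkeeping in the non-spherical case: verifying the phase asymptotics with the $V_1$-subtraction requires carefully combining the estimates of Proposition \ref{prop:mixed_2aaaa} on $F^+-F_\sph^+$ with those on $\phi^+$ from Lemma \ref{lemma:jan}, and checking that the borderline decay rates (governed by $\mu+\epsilon_2>1$ and $\epsilon_2<1$) are genuinely sufficient for absolute convergence of all the integrals involved; the spherically symmetric reductions in Theorem \ref{thm:dol1} serve as the sanity check that the constants work out.
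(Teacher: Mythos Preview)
Your overall strategy—parallel the proof of Theorem~\ref{thm:shortr1}, track the extra Dollard term $|p|^{-1}V_1(l)$, and read off the fixed-energy formulas from the direct integral—is exactly the route the paper takes. But there is one genuine gap, and it is precisely the point you flag as ``purely technical bookkeeping''.

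For well-definedness of $\psi_\dol^\pm$ when $V_2\neq0$ you propose to control $(F^+-F_\sph^+)(l\omega,\omega,\lambda)$ via the uniform bound \eqref{eq:F's} of Proposition~\ref{prop:mixed_2aaaa}, namely $O(l^{-\mu/2-\breve\epsilon})$. This is not enough: integrability over $[R_0,\infty[$ requires decay strictly faster than $l^{-1}$, and nothing in Proposition~\ref{prop:mixed_2aaaa} guarantees $\mu/2+\breve\epsilon>1$ (for $\mu$ close to $\tfrac12$ you would need $\breve\epsilon>\tfrac34$, which the proposition does not give). The paper instead invokes a \emph{sharper, non-uniform} on-axis estimate from \cite[Remarks~6.2~2) and proof of Lemma~6.4]{DS1}:
\[
(F^+-F_\sph^+)(l\omega,\omega,\lambda)=O(l^{-\delta})\quad\text{for any }\delta<\min(\mu+\epsilon_2,2\mu),
\]
which is exactly strong enough under the hypotheses $\mu>\tfrac12$ and $\mu+\epsilon_2>1$. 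The paper explicitly remarks that this estimate ``appears stronger at the price of not being uniform in (small) $\lambda$''; uniformity is not needed here since $\lambda>0$ is fixed. So you need this external input, not just Proposition~\ref{prop:mixed_2aaaa} and Lemma~\ref{lemma:jan}.

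A secondary point: your phase-asymptotics display is oversimplified. The right-hand side you wrote is essentially $\phi_\dol^+$ evaluated on the ray $x\parallel\omega$, but the $o(|x|^0)$ asymptotics must hold throughout the forward cone $\Gamma^+_{R,\sigma}(\omega)$. The paper handles this by introducing an auxiliary phase
\[
\phi_{\aux}^+=\phi_\dol^+-\int_x^{\infty\omega}(F_\sph^+-\nabla_x\phi_\dol^+)\cdot\d\bar x,
\]
proving $U_\dol(t)f\asymp J_\dol^+\e^{-\i tH_0}f\asymp J_\aux^+\e^{-\i tH_0}f$ by stationary phase, and then establishing the analogue of \eqref{eq:ph+} with $\phi_\aux^+$ (not $\phi_\dol^+$) on the right. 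Your version collapses these two steps and would not survive the off-ray analysis without the $\phi_\aux$ correction.
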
 
\begin{proof} First we notice that $\psi_\dol^\pm$
 are well-defined  due to the
   fact that
   \[(F^+-F_\sph^+)(l\omega,\omega,\lambda)=O(l^{-\delta})\]
 for any $\delta<\min(\mu+\epsilon_2,2\mu)$,
   and hence integrable. Here $F_\sph^+$ refers to the $F^+$ for the case
   $V_2=0$, whence $F_\sph^+
(l\omega,\omega,\lambda)=g(l,\lambda)\omega$. For this estimate,
   we refer to \cite[Remarks 6.2 2)]{DS1} and the proof of \cite[Lemma
   6.4]{DS1}.  It appears
   stronger at the price of  not being 
   uniform in (small) $\lambda$. There is an extension of this
   estimate that allows us to integrate along  the line segment joining
     $x$ and $R\omega$ and taking the limit:
   \begin{align}\label{eq:ph+233}
     \int_x^{\infty \omega}
     (F^+-F_\sph^+)(\bar x,\omega,\lambda)\cdot d\bar x &=\lim_{R\to \infty} \int_x^{R\omega}
     (F^+-F_\sph^+)(\bar x,\omega,\lambda)\cdot \d \bar x \nonumber\\&=o(|x|^0).
   \end{align}
Introduce the auxillary phases
   \begin{align*}
     &\phi_{\dol}^{\pm}(x,\omega,\lambda)=\sqrt
   {2\lambda}x\cdot \omega\mp
   (2\lambda)^{-\tfrac{1}{2}}\int_{R_0}^{\pm x\cdot \omega}
   V_1(l)\;\d l,\\
&\phi_{\aux }^{\pm}(x,\omega,\lambda)=\phi_{\aux}^{\pm}=\phi_{\dol}^{\pm}-\int_x^{\pm \infty \omega}
     (F_\sph^+-\nabla _x\phi_{\dol}^{\pm})\cdot \d \bar x, 
\end{align*}
and  corresponding modifiers
\begin{equation*}
  (J_{\sharp}^{\pm}f)(x)=(2\pi)^{-d/2}\int
   \e^{\i \phi_{\sharp}^{\pm}(x,\xi)}\chi(x,\pm \hat \xi)\hat f(\xi) \d \xi;\;\xi=\sqrt
   {2\lambda}\omega.
\end{equation*} Here we can take the function $\chi$ of the form $\chi(x,\omega)=\chi_1(|x|/R)\chi_2(\hat x\cdot
\omega)$ with $\chi_1$ and $\chi_2$  given as in
   (\ref{eq:chi^1}) and (\ref{eq:chi^2}), respectively.

By the stationary phase method, \cite[Theorem 7.7.6]{Ho1},  one derives  the following asymptotics in
$L^2(\R^d)$ for any state $f$ with $\hat f\in
C^\infty_c(\R^d\setminus \{0\})$: 
\begin{equation*}
  U_{\dol}(t)f\asymp J_{\dol}^{\pm}\e^{\i tH_0}f\asymp
  J_{\aux}^{\pm}\e^{\i tH_0}f \text{ as } t\to \pm \infty.
\end{equation*} 

Next we notice the following analogue of  \eqref{eq:ph+},
cf. \eqref{eq:ph+233}: 
   \begin{align*}
     \phi^{\pm}(x,\omega,\lambda)&+\int_{R_0}^\infty 
     (\nabla \phi_{\dol}^+-F^+)({\pm}l\omega,{\pm}\omega,\lambda)\cdot
     \omega\;\d l\\&=\phi_{\aux}^{\pm}(x,\omega,\lambda)+o(|x|^0).
   \end{align*} Again this asymptotics is locally
   uniform in $(\omega,\lambda)\in S^{d-1}\times ]0,\infty[$.
\end {proof}
\begin{remark} 

\label{remark:oscill2}The first factor on the right hand side of
  \eqref{eq:relSV=02} is  oscillatory. Let us state the following
   asymptotics for the special case  where 
   $V_1(r)= -\gamma r^{-\mu}$ for $r\geq R_0$:
\begin{eqnarray*}
\psi_\dol^+(\sqrt{2\lambda})
&=&\int_{R_0}^\infty \Big(\sqrt
   {2\lambda}-\sqrt{2(\lambda+\gamma r^{-\mu})}+(2\lambda)^{-\frac12}\gamma
   r^{-\mu} \Big)\;\d r \\
&=&(2\lambda)^{\frac12-\frac1\mu}\int_{R_0(2\lambda)^{\frac1\mu}}
(1-\sqrt{1+2\gamma
   s^{-\mu}} +\gamma s^{-\mu})\d s.
\end{eqnarray*}
For $\lambda\searrow0$, this behaves as
\[\begin{array}{rll}
&(2\lambda)^{\frac12-\frac1\mu}C_\mu
+O\left(\lambda^{-\frac12}\right),&\frac12
    <\mu<1;\\[2.5ex]
&-\gamma(2\lambda)^{-\frac12}\ln 2\lambda +(2\lambda)^{-\frac12}C_1
+O\left(1\right),&\mu=1;\\[2.5ex]
&(2\lambda)^{-\frac12}\frac{R_0^{1-\mu}\gamma}{\mu-1}
+O\left(\lambda^{\frac12-\frac1\mu}\right),&1<\mu<2.
\end{array}\]
Here\begin{eqnarray*}
C_\mu&:=&
\int_0^\infty
(1-\sqrt{1+2\gamma
   s^{-\mu}} +\gamma s^{-\mu})\d
    s,\\
C_1&:=&\int_1^\infty
(1-\sqrt{1+2\gamma
   s^{-1}} +\gamma s^{-1})\d s+\int_0^1
(1-\sqrt{1+2\gamma
   s^{-1}})\d s
-\gamma\ln R_0.\end{eqnarray*}
\end {remark}

\section{Scattering matrices}
\label{Scattering matrices}

In this section we study (modified) scattering matrices. We prove that they
have a limit at zero energy. This implies low energy 
oscillatory asymptotics for the
standard short-range and Dollard scattering matrices.

\subsection{Scattering matrices at positive  energies}
\label{Scattering matrices at positive  energies}

The scattering operator commutes with $H_0$, which is diagonalized by the
direct integral decomposition (\ref{decom1}).
Because of
 that, the general theory of decomposable operators says that there exists a
measurable family $]0,\infty[\ni\lambda\mapsto S(\lambda)$, with $S(\lambda)$
    unitary operators on  $L^2(S^{d-1})$ defined for almost all $\lambda$, such
    that 
\begin{equation}
  S\simeq\int_0^\infty \oplus S(\lambda)\;\d \lambda,\label{decom2}
\end{equation} using the decomposition (\ref{decom1}).

The following theorem is (essentially) well-known: 

\begin{thm}\label{thm-s} Assume Condition  \ref{symbol}.
  Then
  \begin{subequations}\begin{eqnarray}
  \label{eq:Smatrix}
 S(\lambda)
&= &-2\pi J^+(\lambda)^{*}T^-(\lambda)
   +2\pi \i T^{+}(\lambda)^*R(\lambda+\i 0)T^-(\lambda)\\
&=&-2\pi W^{+}(\lambda)^*T^-(\lambda)\label{eq:Sgod} 
\end{eqnarray}  
  \end{subequations}
 defines a unitary operator on $L^2(S^{d-1})$ depending
strongly continuously on $\lambda>0$. Moreover, (\ref{decom2}) is true.
Furthermore, for all $n\in \R$ and $\epsilon >0$,  \[S(\lambda)\in {\mathcal
  B}(L^{2,n}(S^{d-1}),
L^{2,n-\epsilon}(S^{d-1})),\]  depending norm continuously on $\lambda>0$.
(Hence in particular $S(\lambda)$ maps $C^\infty(S^{d-1})$ into itself.)
\end{thm}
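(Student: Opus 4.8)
The plan is to establish Theorem~\ref{thm-s} along the classical lines of stationary scattering theory, reducing everything to the mapping and continuity properties of $J^\pm(\lambda)$, $T^\pm(\lambda)$, $W^\pm(\lambda)$ and $R(\lambda\pm\i0)$ already collected in Lemma~\ref{lem:2a}, Theorem~\ref{thm:1} and the limiting absorption principle (valid under Condition~\ref{symbol} for $\lambda>0$). Since $S=W^{+*}W^-$ is unitary on $L^2(\R^d)$ by Theorem~\ref{thm:t11} and commutes with $H_0$, the general theory of decomposable operators relative to the direct integral \eqref{decom1} produces at once a measurable family $]0,\infty[\ni\lambda\mapsto S(\lambda)$ of unitaries satisfying \eqref{decom2}; the substance of the proof is to single out a canonical representative and establish its regularity.

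First I would derive the explicit fibre formula. Inserting the stationary representation of Lemma~\ref{lemma:t12} (with $\breve J^\pm=J^\pm$, $\breve T^\pm=T^\pm$, admissible by Theorem~\ref{thm:t11}) into $S=W^{+*}W^-$, and using that $J^-$ vanishes in the forward time direction, i.e.\ \eqref{eq:ooo}, together with the standard manipulation of Appendix~\ref{Appendix}, one obtains
\[Sf=-\lim_{\epsilon\searrow0}\int\e^{-\epsilon|t|}\e^{\i tH_0}W^{+*}T^-\e^{-\i tH_0}f\,\d t .\]
Inserting once more the stationary formula for $W^{+*}$ and passing to the limit with the help of $\slim_{\epsilon\searrow0}\delta_\epsilon(\lambda)=\mathcal F_0(\lambda)^*\mathcal F_0(\lambda)$ (see \eqref{free22}) and of the limiting absorption principle, one reads off
\[S(\lambda)=-2\pi J^+(\lambda)^*T^-(\lambda)+2\pi\i\,T^+(\lambda)^*R(\lambda+\i0)T^-(\lambda),\]
which is \eqref{eq:Smatrix}. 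Because $R(\lambda-\i0)^*=R(\lambda+\i0)$, one has $W^+(\lambda)^*=J^+(\lambda)^*-\i\,T^+(\lambda)^*R(\lambda+\i0)$, so the two terms combine into $-2\pi W^+(\lambda)^*T^-(\lambda)$, which gives \eqref{eq:Sgod}.

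The mapping and continuity claims are then obtained by composing known bounds, after writing $T^-(\lambda)=T_\bd^-(\lambda)+T_\pr^-(\lambda)$ and, in the resolvent term of \eqref{eq:Smatrix}, inserting the partition of unity \eqref{eq:partitionleft} to the right of $R(\lambda+\i0)$. The propagation part $T_\pr^-(\lambda)$ maps $L^{2,-n}(S^{d-1})$ continuously into $L^{2,m}(\R^d)$ for all $m,n$ (Lemma~\ref{lem:2a}\,(\ref{it:p5})) and so contributes a smoothing operator on $L^2(S^{d-1})$; likewise $\Opr(\chi_3)T_\bd^-(\lambda)$ is smoothing (Lemma~\ref{lem:2a}\,(\ref{it:p4})). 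The remaining pieces are treated by composing the boundary estimate Lemma~\ref{lem:2a}\,(\ref{it:p3}), the adjoints of the bounds Lemma~\ref{lem:2a}\,(\ref{it:p2}) and Theorem~\ref{thm:1}, and the microlocal resolvent estimates of Proposition~\ref{prop:resolvent_basic2} in their positive-energy form given by Remark~\ref{remarks:nosmmmmo}~\ref{it:csmoott3}) with $a,b$ as in \eqref{eq:newa's}; this yields $S(\lambda)\in\mathcal B(L^{2,n}(S^{d-1}),L^{2,n-\epsilon}(S^{d-1}))$ for every $n\ge0$ and $\epsilon>0$, and the full range $n\in\R$ then follows by duality, since $S(\lambda)^*$ (equal to $-2\pi T^-(\lambda)^*W^+(\lambda)$) has the same structure. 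Norm continuity in $\lambda>0$ holds because every factor is norm continuous, or strongly continuous with a strong-to-norm upgrade performed via a compact factor as in the proof of Lemma~\ref{lem:22a}; in particular $S(\lambda)C^\infty(S^{d-1})\subseteq C^\infty(S^{d-1})$. Finally, each $S(\lambda)$, $\lambda>0$, is unitary: this can be read from the abstract unitarity of $S$ (making $S(\lambda)$ unitary for almost every $\lambda$) together with the above continuity, or, more directly, from the fibre identity $W^-(\lambda)=W^+(\lambda)S(\lambda)$ (the fibre version of $W^-=W^+S$, valid since $\Ran W^-\subseteq\Ran W^+$) combined with $W^\pm(\lambda)^*W^\pm(\lambda)=\mathrm{Id}_{L^2(S^{d-1})}$ and $W^\pm(\lambda)W^\pm(\lambda)^*=\delta^V(\lambda)$ (the latter is \eqref{eq:7usef}), which give $S(\lambda)^*S(\lambda)=S(\lambda)S(\lambda)^*=\mathrm{Id}$. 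Strong continuity of $S(\lambda)$ in $\mathcal B(L^2(S^{d-1}))$ then follows from the $\mathcal B(L^{2,n},L^{2,n-\epsilon})$ continuity on the dense set $C^\infty(S^{d-1})$ together with $\|S(\lambda)\|=1$, and \eqref{decom2} holds for this continuous representative.

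I expect the principal difficulty to be the rigorous justification of the iterated limits $\epsilon\searrow0$ in the derivation of the fibre formula — interchanging the $t$-integration (or the $\delta_\epsilon(\lambda)$-integration) with the weak limit, and controlling the resolvent factor solely through the limiting absorption principle in weighted spaces. This is precisely the step that forces the weighted-space bookkeeping of Lemma~\ref{lem:2a} and Theorem~\ref{thm:1}, and where one needs the positive-energy versions of Propositions~\ref{prop:proposi1} and \ref{prop:propa5} to make sense of $R(\lambda\pm\i0)T^\pm(\lambda)$ applied to distributions on the sphere. The remaining points — the measurable decomposition, the composition of bounds, and the almost-everywhere-to-everywhere upgrade of unitarity — are routine.
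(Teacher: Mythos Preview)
Your proposal is essentially correct and follows the paper's own route: the formula \eqref{eq:Smatrix} is obtained exactly as in Appendix~\ref{Appendix}, and the mapping/continuity statements are proved by the scheme of the proof of Theorem~\ref{thm:S(0)1}, with Lemma~\ref{lem:2a} and Theorem~\ref{thm:1} playing the role of their low-energy analogues Lemma~\ref{lem:22a} and Theorem~\ref{thm:4}. The unitarity argument (almost-everywhere unitarity from the decomposition, upgraded via continuity) is also the paper's.

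One point needs a correction. For $n\le0$ you invoke duality, writing $S(\lambda)^*=-2\pi T^-(\lambda)^*W^+(\lambda)$ and asserting this ``has the same structure''. The literal adjoint does not: with that formula the weighted-space bounds of $W^+(\lambda)$ and $T^-(\lambda)^*$ do not compose the right way. What \emph{does} have the same structure is
\[
S(\lambda)^*=-2\pi\,W^-(\lambda)^*T^+(\lambda),
\]
obtained by repeating the derivation from $S^*=W^{-*}W^+$ and using \eqref{eq:ooo} for $J^+$ at $t\to-\infty$; with this formula your duality argument is valid. The paper takes a slightly different path for $n\le0$: it keeps the form \eqref{eq:Sgod}, inserts the partition $I=\tilde A+A_++\bar A$ on the $W^+(\lambda)^*$ side, and uses (the positive-energy analogue of) Theorem~\ref{thm:middl} together with the analogue of Lemma~\ref{lem:22a}\,(\ref{it:p40}) for $\bar A T^-(\lambda)$; see \eqref{eq:ide2}. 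Both routes work, but yours is shorter once the correct formula for $S(\lambda)^*$ is in place.
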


For a derivation
of the formula (\ref{eq:Smatrix}) we refer the reader to Appendix
\ref{Appendix}.  For the remaining part of the theorem we
refer  the reader to the proof of Theorem \ref{thm:S(0)1} stated
below (one can 
use Theorem \ref{thm:1} and  Lemma \ref{lem:2a} as substitutes for
Theorem \ref{thm:4} and Lemma \ref{lem:22a}, respectively).

\subsection{Scattering matrices at low energies}
\label{Asymptotics of scattering matrix at low energies}

Until the end of this section we assume that Conditions
\ref{assump:conditions1}--\ref{assump:conditions3} are true. The main  new result of this section 
is the following theorem:

\begin {thm} \label{thm:S(0)1}
 The  result of Theorem \ref{thm-s} is true for all
 $\lambda\in[0,\infty[$. Specifically, if we define
\begin{subequations}
\begin{eqnarray} \label{eq:bndSS} S(0)&=& -2\pi J^+(0)^{*}T^-(0)
 +2\pi \i T^{+}(0)^*R(+\i 0)T^-(0)\\
&=&-2\pi W^{+}(0)^*T^-(0)\label{eq:Sgod2} ,\end{eqnarray}
\end{subequations}
  then $S(0)$ is unitary, $\slim_{\lambda
  \searrow0}S(\lambda)=S(0)$ in the sense of  ${\mathcal
  B}(L^{2}(S^{d-1}))$
and
 $\lim_{\lambda
  \searrow0}S(\lambda)=S(0)$ in the sense of
 ${\mathcal
  B}(L^{2,n}(S^{d-1}), L^{2,n-\epsilon}(S^{d-1}))$ for any $n\in \R$ and $\epsilon >0$.\end{thm}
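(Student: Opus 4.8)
The strategy is to run in parallel the argument sketched for Theorem \ref{thm-s} at positive energies, but now uniformly down to $\lambda=0$, systematically replacing Lemma \ref{lem:2a} by Lemma \ref{lem:22a}, Proposition \ref{prop:resolvent_basic2} in its positive-energy reincarnation (Remark \ref{remarks:nosmmmmo} \ref{it:csmoott3})) by the uniform version of Proposition \ref{prop:resolvent_basic2} itself, and Theorem \ref{thm:1} by Theorems \ref{thm:4a} and \ref{thm:4}. First I would establish that the two expressions on the right-hand sides of \eqref{eq:bndSS} and \eqref{eq:Sgod2} are well-defined bounded operators on $L^2(S^{d-1})$ and that the two formulas agree: the identity \eqref{eq:Sgod} (hence \eqref{eq:Sgod2}) comes from the definition $W^{+}(\lambda)=J^+(\lambda)+\i R(\lambda-\i0)T^+(\lambda)$ together with $T^+(\lambda)^*=-\i\big(J^+(\lambda)^*(H-\lambda)\big)$ on the relevant domain, exactly as in the derivation of \eqref{eq:Smatrix} recalled in Appendix \ref{Appendix}; nothing in that algebraic manipulation is sensitive to $\lambda=0$ once the mapping properties are in place.

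The mapping properties at fixed $\lambda\geq0$, including $\lambda=0$, are read off from Lemma \ref{lem:22a}: insert the partition of unity \eqref{eq:partitionleft} to the right of $R(\lambda+\i0)$ in the second term of \eqref{eq:Smatrix} and expand. The pieces $\Opr(\chi_3)T^-_\bd(\lambda)$ and $T^-_\pr(\lambda)$ are smoothing and harmless by Lemma \ref{lem:22a} (\ref{it:p40})--(\ref{it:p50}); the surviving piece is controlled by pairing $g^{1/2}\langle x\rangle^{-\frac12-\epsilon}T^-_\bd(\lambda)$ (bounded into $L^2$ by Lemma \ref{lem:22a} (\ref{it:p30}) with weight $g^{-1/2}\langle x\rangle^{\frac12-\epsilon}$ moved appropriately) against the uniformly bounded resolvent sandwich of Proposition \ref{prop:resolvent_basic2} (\ref{item:C-02})/(\ref{some_label21}), and on the left by the adjoint mapping property of $W^+(\lambda)$ from Theorem \ref{thm:4} (used via \eqref{eq:Sgod2}: $S(\lambda)=-2\pi W^+(\lambda)^*T^-(\lambda)$, with $W^+(\lambda)^*=\big(g^{1/2}W^+(\lambda)\big)^*g^{-1/2}$ mapping $L^{2,\frac12+\epsilon}(\R^d)$ into $L^2(S^{d-1})$, and even into $L^{2,n-\epsilon}(S^{d-1})$ after the $(1-\Delta_\omega)$-bookkeeping already carried out in the proof of Lemma \ref{lem:22a}). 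This gives $S(\lambda)\in\mathcal B(L^{2,n}(S^{d-1}),L^{2,n-\epsilon}(S^{d-1}))$ for all $\lambda\geq0$, $n\in\R$, $\epsilon>0$.

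Continuity in $\lambda\in[0,\infty[$ follows the same route: each of the finitely many factors is a (norm- or strongly-) continuous operator-valued function of $\lambda\geq0$ — $T^\pm_\bd(\lambda)$ and $J^\pm(\lambda)$ by Lemma \ref{lem:22a}, the resolvent sandwiches by the continuity statement at the end of Proposition \ref{prop:resolvent_basic2} and by Remark \ref{remarks:nosmmmmo} \ref{it:csmoottaa}), $W^+(\lambda)$ by Theorem \ref{thm:4} — and products of such functions, one of which carries a compact factor $(1-\Delta_\omega)^{-\epsilon/4}$, are norm continuous; composing with the unitary $(1-\Delta_\omega)^{\epsilon/4}$ between $L^{2,n}$ and $L^{2,n-\epsilon/2}$ transfers this to norm continuity in $\mathcal B(L^{2,n}(S^{d-1}),L^{2,n-\epsilon}(S^{d-1}))$, exactly as in the last paragraph of the proof of Lemma \ref{lem:22a}. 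Strong continuity in $\mathcal B(L^2(S^{d-1}))$ is then immediate on the dense set $C^\infty(S^{d-1})$ together with the uniform bound $\sup_{\lambda\in[0,\lambda_0]}\|S(\lambda)\|<\infty$; in particular $\slim_{\lambda\searrow0}S(\lambda)=S(0)$.

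\textbf{The main obstacle — unitarity of $S(0)$.} For $\lambda>0$ unitarity is classical. To pass to $\lambda=0$ I would argue that $S(\lambda)^*S(\lambda)=I=S(\lambda)S(\lambda)^*$ persists in the limit: since $S(\lambda)\to S(0)$ strongly and $S(\lambda)^*\to S(0)^*$ strongly (the adjoint family is handled identically, using the ``$-$'' analogue of \eqref{eq:Sgod2}, namely $S(\lambda)^*=-2\pi W^-(\lambda)^*T^+(\lambda)$, which also extends continuously to $\lambda=0$), and both families are uniformly bounded, the products converge strongly, giving $S(0)^*S(0)=I=S(0)S(0)^*$. The only delicate point is that strong convergence of $A_n\to A$ and $B_n\to B$ does not in general give $A_nB_n\to AB$ strongly unless $\{B_n\}$ is uniformly bounded — which it is here — so this is in fact routine. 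An alternative, if one prefers a self-contained check independent of the positive-energy theory, is to verify directly from \eqref{eq:Sgod2} and the intertwining/energy-identity encoded in $W^+(0)W^+(0)^*=\delta^V(0)=W^-(0)W^-(0)^*$ (Theorem \ref{thm:4a}, eq. \eqref{eq:delt2}) together with the stationary formula; this is the most technical step and is where I expect to spend the most care, since it requires the zero-energy analogue of the standard computation $S^*S=\mathcal F_0(\lambda)(\delta^V(\lambda))\mathcal F_0(\lambda)^*=I$ to be justified with the weaker, $g$-weighted estimates available at the threshold.
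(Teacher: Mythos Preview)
Your overall strategy matches the paper's: both use the representation $S(\lambda)=-2\pi W^+(\lambda)^*T^-(\lambda)$, factorize with $g$-weights, and feed in the continuity statements of Lemma~\ref{lem:22a}, Proposition~\ref{prop:resolvent_basic2}, and Theorem~\ref{thm:4}. Your treatment of unitarity (strong convergence of $S(\lambda)$ and $S(\lambda)^*$ plus uniform boundedness) is a legitimate variant of the paper's argument, which instead uses that $S$ is unitary so $S(\lambda)$ is isometric for a.e.\ $\lambda$ and then invokes strong continuity.

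There is, however, a genuine gap in your coverage of the norm-continuity claim for \emph{all} $n\in\R$. Theorem~\ref{thm:4} gives continuity of $(\langle x\rangle g)^{-m}\langle x\rangle^{-\frac12-\epsilon}g^{\frac12}W^+(\lambda)$ in $\mathcal B(L^{2,-m}(S^{d-1}),L^2)$ only for $m\geq0$; taking adjoints, $W^+(\lambda)^*$ (suitably weighted) lands in $L^{2,m}(S^{d-1})$ only for $m\geq0$. Your appeal to ``$(1-\Delta_\omega)$-bookkeeping'' does not help here: that trick from the proof of Lemma~\ref{lem:22a} upgrades strong to norm continuity via a compact factor, but it does not manufacture negative Sobolev targets. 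So your argument proves $S(\lambda)\in\mathcal B(L^{2,n},L^{2,n-\epsilon})$ with norm continuity only when $n-\epsilon\geq0$, which suffices for unitarity and strong $L^2$-continuity but not for the full statement with $n\leq0$.

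The paper closes this gap with Theorem~\ref{thm:middl}: one inserts a phase-space partition $I=\tilde A+A_++\bar A$ and writes
\[
W^+(\lambda)^*T^-(\lambda)=\big((\tilde A+A_+)W^+(\lambda)\big)^*T^-(\lambda)+W^+(\lambda)^*\big(\bar A T^-(\lambda)\big).
\]
Theorem~\ref{thm:middl} says $(\tilde A+A_+)W^+(\lambda)$ enjoys the weighted continuity of Theorem~\ref{thm:4} for \emph{all} $n\in\R$, so the first term is handled by your scheme; and $\bar A T^-(\lambda)$ is smoothing (an analogue of Lemma~\ref{lem:22a}~(\ref{it:p40})), so the second term pairs with $W^+(\lambda)^*$ using only Theorem~\ref{thm:4} with a large positive index. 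You should either invoke Theorem~\ref{thm:middl} at this point or supply an equivalent argument.
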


\begin{proof}  First we notice that the expression
 \[S(\lambda)=-2\pi W^{+}(\lambda)^*T^-(\lambda)\in {\mathcal
  B}(L^{2,n}(S^{d-1}), L^{2,n-\epsilon}(S^{d-1}))\text{ for }n>0\]
has a norm continuous dependence of $\lambda\geq0$.
 Indeed, fix $n>0$ and $\epsilon\in ]0,n]$, and pick 
$\epsilon_1,\epsilon_2\in \R$ such that 
 $\epsilon\frac{\mu}{2}<\epsilon_1<\epsilon$ and
 $\epsilon_2=\frac12(\epsilon-\epsilon_1)$.  We write
\begin{align}
&W^+(\lambda)^*T^-(\lambda)\label{eq:qs5} \\&=
\left(
W^+(\lambda)^*g^{\frac12}\langle x\rangle^{-\frac12-\epsilon_2}
(\langle x\rangle g)^{-n+\epsilon}\right)\left( g^{-\epsilon}
 \langle x\rangle^{-\epsilon_1 }\right)\left((\langle x\rangle g)^n
\langle x\rangle^{\frac12-\epsilon_2}
g^{-\frac12}T^-(\lambda)\right).\nonumber
\end{align}
We shall use the analogues of Lemma \ref{lem:22a} (\ref{it:p30}) and
 (\ref{it:p50}) with 
 $T^+(\lambda)$ replaced by  $ T^-(\lambda)$ (proved in the same
 way).
The third factor on 
the right of \eqref{eq:qs5} is continuous in $\lambda$
with values in ${\mathcal
  B}(L^{2,n}(S^{d-1}), L^2(\R^d))$. The second factor is continuous in
$\lambda$
 as an
operator on $L^2(\R^d)$.         The first factor is continuous in $\lambda$
as an operator in
 ${\mathcal
  B}( L^2(\R^d),L^{2,n-\epsilon}(S^{d-1}) )$ due to Theorem \ref{thm:4}.
This proves the norm continuity of $ S(\lambda)$ in  ${\mathcal
  B}(L^{2,n}(S^{d-1}), L^{2,n-\epsilon}(S^{d-1}))$ for $n>0$.

Let us prove the same property for $n\leq0$ using a slight extension
of the above scheme: Notice that the positive sign  condition above  entered
only in the
 condition $n-\epsilon\geq0$ needed for applying Theorem
\ref{thm:4}. Since $n\leq0$ we  have $n-\epsilon<0$ and  therefore we need
a substitute for  Theorem
\ref{thm:4}. This is provided by Theorem \ref{thm:middl} and an  analogue of Lemma \ref{lem:22a} for $ T^-(\lambda)$. In fact, choose  for (small) $\bar \sigma>0$ real-valued  $\tilde\chi_-\in
   C_c^\infty(\R)$ and $\chi_+\in
   C^\infty(\R)$ such that  $\supp \tilde\chi_-\subset]-1,1[$, $
   \tilde\chi_-=1$ in $[\bar \sigma-1,1-\bar \sigma]$, $\supp
   \chi_+ \subset]C_0,\infty[$ and $\chi_+ =1$ in $[2C_0,\infty[$. Let
   $\chi=1-\chi_+$, 
   $\tilde A=\Opw(\chi(a)\tilde\chi_-(b))$,
   $A_+=\Opw(\chi_+(a))$ and $\bar A=\Opw(\chi(a)(1-\tilde\chi_-(b)))$.
We insert the identity $I=\tilde A+ A_++\bar
A$
\begin{equation}
  \label{eq:ide2}
  W^+(\lambda)^*T^-(\lambda)=((\tilde A+
A_+)W^+(\lambda))^*T^-(\lambda)+W^+(\lambda)^*(\bar AT^-(\lambda)).
\end{equation}
Due to Theorem \ref{thm:middl} the above argument can be repeated for
the first term on the right hand side, and if $\bar \sigma>0$ is chosen sufficiently small
we have the following analogue of Lemma \ref{lem:22a} (\ref{it:p40}) and
 (\ref{it:p50}) (here stated in combination): For all $m\in \R$ the family of operators $\langle x
 \rangle^m  \bar AT^-(\lambda)$
constitutes  a continuous ${\mathcal
 B}(L^{2,-n}(S^{d-1}),L^{2}(\R^d))$--valued function of
$\lambda\in[0,\infty[$. By choosing $m>\frac12+\frac\mu4$ and using  
   Theorem
\ref{thm:4} we conclude   norm continuity of the second term of \eqref{eq:ide2}.

But from the isometricity of $S$ we see that $S(\lambda)$ is isometric for
almost all $\lambda$ as a map on $L^2(S^{d-1})$.
 Therefore, it is isometric and strongly continuous as a map on $L^2(S^{d-1})$
 for all $\lambda \geq 0$.

By repeating this argument for $S^*$ (not to be elaborated on) we obtain that
 $S(\lambda)^*$ is isometric and strongly continuous in  $\lambda\geq
 0$ as a map on $L^2(S^{d-1})$. Whence  $S(\lambda)$ is unitary as a
 map on  $L^2(S^{d-1})$.\end{proof}

\begin{remark*}
  There is an alternative and completely stationary approach to
  proving the unitarity of the scattering matrices. In fact 
  taking  (\ref{eq:Sgod}) and (\ref{eq:Sgod2}) as  definitions the
  unitarity is a consequence of the formula (\ref{eq:Specaa}), which
  in turn can be verified directly along the lines of Section \ref{Generalized eigenfunctions}.
\end{remark*}
\subsection{Asymptotics of  short-range scattering matrices}
\label{Asymptotics of the short-range scattering matrix}
In the case $\mu \in ]1,2[$ we can compare 
$S(\lambda)$ with the $S$--matrix $S_{\sr}(\lambda)$ defined similarly
\begin{equation*}
  S_{\sr}=W_{\sr}^{+*}W_{\sr}^-\simeq \int_0^\infty \oplus S_{\sr}(\lambda)\;\d \lambda.
\end{equation*} 

Under the condition of radial symmetry Yafaev considered in \cite
{Y1}
the component of $S_{\sr}(\lambda)$ for each sector of fixed
angular momentum. He computed an explicit oscillatory behaviour as
$\lambda\to 0$. The following result is a consequence of Theorem
\ref{thm:shortr1}. In combination with Theorem \ref{thm:S(0)1}, it
yields oscillatory behaviour in a more general situation than
considered in \cite
{Y1}.

\begin {thm} \label{thm:shortr} For $\mu \in ]1,2[$, the operators $S_{\sr}$
   and $S$ are related by 
  \begin{equation}
    \label{eq:relS}
 S_{\sr}=\e^{-\i\psi_\sr^+(p)}
S\e^{\i\psi_\sr^-(p)}.  \end{equation}

In particular, for all $\lambda>0$,
\begin{equation}
    \label{eq:relS2i}
 S_{\sr}(\lambda)=\e^{-\i\psi_\sr^+(\sqrt{2\lambda}\cdot)}
S(\lambda)\e^{\i\psi_\sr^-(\sqrt{2\lambda}\cdot)},  \end{equation}
 and if 
  $V_2=0$  then 
 \begin{equation}
    \label{eq:relSV=0}
 S_{\sr}(\lambda)=e^{-\i 2\int_{R_0}^\infty \big(\sqrt
   {2\lambda}-\sqrt{2(\lambda-V_1(r))}\big)\;\d r}S(\lambda).  
  \end{equation} 
\end {thm}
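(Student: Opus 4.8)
The plan is to derive everything from Theorem~\ref{thm:shortr1} together with the strong continuity of $S(\lambda)$ supplied by Theorem~\ref{thm-s} (and, for the zero-energy discussion, Theorem~\ref{thm:S(0)1}). First I would establish the operator identity \eqref{eq:relS}. For $\mu\in\,]1,2[$ the limit \eqref{short} exists and Theorem~\ref{thm:shortr1} gives $W^\pm_\sr=W^\pm\e^{\i\psi_\sr^\pm(p)}$ with $\e^{\i\psi_\sr^\pm(p)}$ unitary (it intertwines the two isometries $W^\pm_\sr$ and $W^\pm$, which have the same range $1_{\mathrm c}(H)L^2(\R^d)$). Hence
\[
S_\sr=W_\sr^{+*}W_\sr^-=\e^{-\i\psi_\sr^+(p)}W^{+*}W^-\e^{\i\psi_\sr^-(p)}=\e^{-\i\psi_\sr^+(p)}\,S\,\e^{\i\psi_\sr^-(p)},
\]
which is \eqref{eq:relS}.

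Next I would pass to the fibers of the diagonalization \eqref{decom1}. Since $p=(p_1,\dots,p_d)$ is a commuting tuple of self-adjoint operators, $\e^{\i\psi_\sr^\pm(p)}$ commutes with $H_0=\tfrac12p^2$ and is therefore decomposable; a one-line computation using $\mathcal F_0(\lambda)f(\omega)=(2\lambda)^{(d-2)/4}\hat f(\sqrt{2\lambda}\omega)$ and $\widehat{f(p)h}(\xi)=f(\xi)\hat h(\xi)$ shows its fiber at energy $\lambda$ is multiplication by $\e^{\i\psi_\sr^\pm(\sqrt{2\lambda}\,\cdot\,)}$ on $L^2(S^{d-1})$. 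Decomposing \eqref{eq:relS} over \eqref{decom1} and using $S\simeq\int\oplus S(\lambda)\,\d\lambda$, $S_\sr\simeq\int\oplus S_\sr(\lambda)\,\d\lambda$ then yields $S_\sr(\lambda)=\e^{-\i\psi_\sr^+(\sqrt{2\lambda}\,\cdot\,)}S(\lambda)\e^{\i\psi_\sr^-(\sqrt{2\lambda}\,\cdot\,)}$ for almost every $\lambda>0$. To upgrade this to \emph{every} $\lambda>0$ I would use that $S(\lambda)$ is strongly continuous on $\,]0,\infty[$ by Theorem~\ref{thm-s}, while $\psi_\sr^\pm$ is continuous on $S^{d-1}\times\,]0,\infty[$ (the defining integrals converge locally uniformly, as in the proof of Theorem~\ref{thm:shortr1}); both sides of \eqref{eq:relS2i} are then strongly continuous in $\lambda>0$, so an almost-everywhere identity is an identity. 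The restriction to $\lambda>0$ is genuine here, since $\psi_\sr^\pm(\sqrt{2\lambda})$ oscillates and is unbounded as $\lambda\searrow0$ (Remark~\ref{remark:oscill}).

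Finally, for the radially symmetric case $V_2=0$ the phase $\psi_\sr^\pm$ depends only on $|p|$, so its fiber at $\lambda$ is the scalar $\e^{\i\psi_\sr^\pm(\sqrt{2\lambda})}$, which commutes with $S(\lambda)$; inserting this in \eqref{eq:relS2i} together with the relation $\psi_\sr^+(|p|)=-\psi_\sr^-(|p|)$ and the explicit radial form of $\psi_\sr^\pm$ from Theorem~\ref{thm:shortr1} collapses \eqref{eq:relS2i} to \eqref{eq:relSV=0}. I do not expect a serious analytic obstacle in any of this: the substantive inputs — existence and oscillatory form of $\psi_\sr^\pm$, and strong continuity of $S(\lambda)$ — are already available, and the only point requiring care is the passage from the almost-everywhere to the everywhere version of \eqref{eq:relS2i}, which is exactly where the continuity of $S(\lambda)$ (and hence, for the zero-energy story, Theorem~\ref{thm:S(0)1}) enters.
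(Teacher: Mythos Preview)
Your argument is correct and coincides with what the paper does: the paper simply states that Theorem~\ref{thm:shortr} is a consequence of Theorem~\ref{thm:shortr1}, and you have spelled out exactly that deduction---inserting $W^\pm_\sr=W^\pm\e^{\i\psi_\sr^\pm(p)}$ into the definition of $S_\sr$, decomposing over the fibers of~\eqref{decom1}, and specializing to the radial case. The only addition you make beyond the paper's one-line remark is the continuity argument upgrading \eqref{eq:relS2i} from almost every $\lambda$ to every $\lambda>0$ via Theorem~\ref{thm-s}, which is entirely appropriate.
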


\subsection{Asymptotics of  Dollard-type scattering matrices}
\label{Asymptotics of the Dollard-type  scattering matrix1}
For $\mu>\frac12$ and $\mu+\epsilon_2>1$, the Dollard-type 
 S--matrix is diagonalized as before: 
\begin{equation*}
  S_{\dol}=W_{\dol}^{+*}W_{\dol}^-\simeq \int_0^\infty \oplus S_{\dol}(\lambda)\;\d \lambda.
\end{equation*} 

We have the following analogue of Theorem \ref{thm:shortr}, cf. Theorem
\ref{thm:dol1}:
\begin {thm} \label{thm:dol} For $\frac12<\mu <2$, $\epsilon_2 <1$
  and $\mu+\epsilon_2>1$, the operators $S_{\dol}$
   and $S$ are related by 
  \begin{equation}
    \label{eq:relS2}
 S_{\dol}=\e^{-\i\psi_\dol^{+}(p)}S\e^{\i\psi_\dol^-
(p)}.  
  \end{equation}

In particular, for all $\lambda>0$,
\begin{equation}
    \label{eq:relS2aa}
 S_{\dol}(\lambda)=\e^{-\i\psi_\dol^{+}(\sqrt{2\lambda}\cdot)}S(\lambda)\e^{\i\psi_\dol^-
(\sqrt{2\lambda}\cdot)},  
  \end{equation} and if $V_2=0$ then
 \begin{equation}
    \label{eq:relSV=02}
 S_{\dol}(\lambda)=\e^{-\i 2\int_{R_0}^\infty \left(\sqrt
   {2\lambda}-\sqrt{2(\lambda-V_1(r)}-(2\lambda)^{-1/2}V_1(r)\right)\;\d
   r}S(\lambda).
  \end{equation} 
\end {thm}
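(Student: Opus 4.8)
\textbf{Proof proposal for Theorem \ref{thm:dol}.}

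The plan is to reduce the statement entirely to Theorem \ref{thm:dol1} together with the relation $S = W^{+*}W^-$ and the corresponding relation $S_{\dol} = W_{\dol}^{+*}W_{\dol}^-$. First I would recall from Theorem \ref{thm:dol1} that under the hypotheses $\tfrac12 < \mu < 2$, $\epsilon_2 < 1$ and $\mu + \epsilon_2 > 1$, the phases $\psi_\dol^{\pm}(p)$ are well-defined bounded self-adjoint functions of $p$, and that
\[
W^+_{\dol} = W^+ \e^{\i\psi_\dol^+(p)}, \qquad W^-_{\dol} = W^-\e^{\i\psi_\dol^-(p)}.
\]
Then I would simply compute
\[
S_{\dol} = W_{\dol}^{+*}W_{\dol}^- = \e^{-\i\psi_\dol^+(p)} W^{+*}W^- \e^{\i\psi_\dol^-(p)} = \e^{-\i\psi_\dol^{+}(p)}S\e^{\i\psi_\dol^-(p)},
\]
which is \eqref{eq:relS2}. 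The only point requiring a word of care is that $\e^{\pm\i\psi_\dol^\pm(p)}$ are unitary operators commuting with $H_0$ (since they are functions of $p$), so that conjugating a decomposable operator by them again gives a decomposable operator and the fibered identity is legitimate.

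Next I would pass to the fibers. Since $\psi_\dol^\pm(p)$ is a function of $|p|$-direction and $|p|$, it acts on each fiber $L^2(S^{d-1})$ of the decomposition \eqref{decom1} as the bounded operator $\e^{\i\psi_\dol^\pm(\sqrt{2\lambda}\,\cdot)}$ (multiplication in the $\omega$-variable after the identification $\xi = \sqrt{2\lambda}\,\omega$); this is exactly the content of the parenthetical "in particular" clauses \eqref{eq:Uni3bba}--\eqref{eq:Uni4bba} of Theorem \ref{thm:dol1}. Combining this with the fibered form \eqref{decom2} of $S$ (valid for all $\lambda>0$ by Theorem \ref{thm-s}, or for all $\lambda \geq 0$ by Theorem \ref{thm:S(0)1}) and with the analogous decomposition $S_{\dol} \simeq \int_0^\infty \oplus S_{\dol}(\lambda)\,\d\lambda$, I obtain
\[
S_{\dol}(\lambda) = \e^{-\i\psi_\dol^{+}(\sqrt{2\lambda}\cdot)}S(\lambda)\e^{\i\psi_\dol^-(\sqrt{2\lambda}\cdot)}
\]
for almost every $\lambda > 0$, hence (by continuity of both sides in $\lambda$, from Theorem \ref{thm-s} and the continuity of $\psi_\dol^\pm(\sqrt{2\lambda}\,\cdot)$ in $\lambda > 0$) for all $\lambda > 0$. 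This is \eqref{eq:relS2aa}.

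Finally, for the radial case $V_2 = 0$ I would invoke the explicit formula for $\psi_\dol^\pm(|p|)$ from Theorem \ref{thm:dol1}, namely
\[
\psi_{\dol}^\pm(|p|) = \pm\i \int_{R_0}^\infty \big(|p|-\sqrt{p^2-2V_1(r)}-|p|^{-1}V_1(r)\big)\;\d r,
\]
so that $\psi_\dol^+(\sqrt{2\lambda}\,\cdot) = -\psi_\dol^-(\sqrt{2\lambda}\,\cdot)$ and each is a scalar (independent of $\omega$), commuting with $S(\lambda)$; hence the two conjugating factors combine into the single scalar phase
\[
\e^{-\i 2\int_{R_0}^\infty \left(\sqrt{2\lambda}-\sqrt{2(\lambda-V_1(r))}-(2\lambda)^{-1/2}V_1(r)\right)\;\d r},
\]
giving \eqref{eq:relSV=02}. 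I do not expect any genuine obstacle here: the entire argument is bookkeeping on top of Theorem \ref{thm:dol1}. If anything, the one place to be slightly careful is ensuring that the intertwining identities hold on the appropriate dense set and then extending by boundedness/continuity, and — in the radial case — checking that $\psi_\dol^+$ and $\psi_\dol^-$ really are negatives of each other as scalars, which is immediate from the displayed formula. \qed
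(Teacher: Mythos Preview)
Your proposal is correct and follows exactly the approach the paper takes: the paper does not give an explicit proof of Theorem \ref{thm:dol} at all, merely stating it as ``the following analogue of Theorem \ref{thm:shortr}, cf.\ Theorem \ref{thm:dol1}'', so your derivation from $S_{\dol}=W_{\dol}^{+*}W_{\dol}^-$ and the wave-operator relations of Theorem \ref{thm:dol1} is precisely what is intended. One tiny inaccuracy: the phases $\psi_{\dol}^\pm(p)$ are real-valued but not bounded (see Remark \ref{remark:oscill2}); this is harmless since all you need is that $\e^{\i\psi_{\dol}^\pm(p)}$ is unitary.
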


\begin{example}\label{example:coulombsing}
 For the purely Coulombic case $V=-\gamma r^{-1}$ in dimension $d\geq
 3$ one can compute 
 \begin{equation}
   \label{eq:S=P}
   S(0)=\e^{\i c}P,\;c\in \R,
 \end{equation}
 where $(P\tau)(\omega)=\tau(-\omega)$. This formula can be
 verified using \eqref{eq:relSV=02} and  Remark
 \ref{remark:oscill2}, the explicit formula
 \cite[(4.3)]{Y3} for the Coulombic (Dollard) scattering matrix
 (slightly different from our definition),  asymptotics of the Gamma
 function (see for example the reference [3] of \cite{Y3}) and, for
 example, the stationary phase formula \cite[Theorem 7.7.6]{Ho1}
 (alternatively one can use the formula \cite[(3.4)]{Y3}). 

It follows from \eqref{eq:S=P} that the singularities of the kernel 
 $S(0)(\omega,\omega')$ in this particular case are located at $\{(\omega,\omega')\in
 S^{d-1}\times  S^{d-1}| \omega=-\omega'\}$. We devote Section
 \ref{Propagation of
   singularities at zero energy} to an extension of this
 result. In Section
 \ref{Singularity of the kernel of the scattering matrix} we provide a
 different proof of (\ref{eq:S=P}) (up to a compact term); this
 approach yields $c= 4\sqrt{2\gamma R_0}
 -\pi\frac{d-2}{2}$.

We also note  that for the  purely Coulombic case there is in fact a
complete asymptotic expansion $S(\lambda)\asymp \sum_{j=0}^\infty
S_j\lambda^{j/2}$. Here (of course) $S_0$ is given by
\eqref{eq:S=P}, and one can readily check that $S_1\neq 0$. In particular
we see that 
$S(\lambda)$ is not smooth at $\lambda=0$, cf. Remark
\ref{remarks:nosmmmmo} \ref{it:csmoott}). We refer to \cite {BGS} (and
references  cited therein) for explicit expansions of the
generalized purely Coulombic eigenfunctions at zero energy (for $d=3$); those are
also in $\sqrt {\lambda}$.
\end{example}

\section{Generalized eigenfunctions}
\label{Generalized eigenfunctions}

Throughout this section we impose  Conditions
\ref{assump:conditions1}--\ref{assump:conditions3}. 
For any $\lambda\geq0$, we define
\[\mathcal V^{-\infty}(\lambda)=\{u\in
   L^{2,-\infty}|(H-\lambda)u=0\}\subseteq \mathcal S'(\R^d).\]
Elements of $ \mathcal V^{-\infty}(\lambda)$ will be called generalized eigenfunctions
of $H$ at energy $\lambda$.
In this section we study all generalized eigenfunctions of $H$.

\begin{remark*}
Note that by Proposition \ref{prop:enerest}, for  any $u\in
V^{-\infty}(\lambda)$ and $s\in\R$,
\begin{equation}
WF_\sc^s(u)\subseteq\{b^2+\bar c^2=1\}.
\end{equation}
\end{remark*}

\subsection{Representations of generalized eigenfunctions}
\label{Representation of eigenfunctions}

In this subsection we show that all
generalized eigenfunctions can be represented by their incoming or outgoing
data.

\begin{thm}\label{thm-rep}
 For any $\lambda\geq 0$ the map
 \begin{equation*}
   W^\pm (\lambda):\mathcal D'(S^{d-1})\to \mathcal
   V^{-\infty}(\lambda) (\subseteq  L^{2,-\infty})
 \end{equation*} is continuous and bijective.
\end{thm}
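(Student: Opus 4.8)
The plan is to establish the three properties—well-definedness and continuity, injectivity, and surjectivity—essentially in that order, leaning heavily on the machinery already assembled in Sections \ref{Wave matrices} and \ref{Scattering matrices}.

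\textbf{Continuity and well-definedness.} First I would note that $W^\pm(\lambda)$ maps into $\mathcal V^{-\infty}(\lambda)$: by construction (see \eqref{wave22}, \eqref{wave2}) and the intertwining identity $\i(HJ^\pm(\lambda)-J^\pm(\lambda)H_0)=\cdots$ together with the resolvent equation, one checks $(H-\lambda)W^\pm(\lambda)\tau=0$ for all $\tau\in\mathcal D'(S^{d-1})$; this is exactly the content of Lemma \ref{prop:qs4} combined with the observation that $J^\pm(\lambda)\tau$ solves the eikonal ansatz up to $T^\pm(\lambda)\tau$. Continuity from $\mathcal D'(S^{d-1})$ to $L^{2,-\infty}$ follows from the mapping properties collected in Lemma \ref{lem:22a} and Proposition \ref{prop:resolvent_basic2}: for each fixed negative Sobolev index $-n$ on the sphere, $W^\pm(\lambda)$ lands in a fixed weighted space $L^{2,-s-n(1-\mu/2)}(\R^d)$ by Theorems \ref{thm:4a} and \ref{thm:4}, and since $\mathcal D'(S^{d-1})=\bigcup_n L^{2,-n}(S^{d-1})$ with the inductive limit topology, continuity is immediate.

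\textbf{Injectivity.} Suppose $W^\pm(\lambda)\tau=0$. The idea is to recover $\tau$ from the asymptotics of $W^\pm(\lambda)\tau$. By Lemma \ref{prop:qs4}, $W^+(\lambda)\tau = J^+(\lambda)\tau + \i R(\lambda+\i0)T^+(\lambda)\tau$, and the resolvent term has wave front set confined to $\{b=1\}$ with better weights (Proposition \ref{prop:proposi1}(\ref{item:r3})), so in the outgoing region the leading behaviour of $W^+(\lambda)\tau$ is dictated by the stationary-phase asymptotics of $J^+(\lambda)\tau$ computed in Theorem \ref{thm:kkk}: for $\tau\in C^\infty(S^{d-1})$,
\[
(J^+(\lambda)\tau)(x)=(2\pi)^{-1/2}\e^{-\i\pi\frac{d-1}{4}}g^{-1/2}r^{-\frac{d-1}{2}}\big(\e^{\i\phi(x,\lambda)}\tau(\hat x)+O(r^{-\breve\epsilon})\big).
\]
If $W^+(\lambda)\tau=0$ then this leading term must vanish; testing against the leading oscillatory profile—equivalently, using the Sommerfeld radiation uniqueness (Proposition \ref{prop:propa5}), which says a solution with $WF^{-s_0}_\sc\subseteq\{b=1\}$ and vanishing source is zero—forces $\tau(\hat x)\equiv0$ on $C^\infty$, hence $\tau=0$ by density and the continuity just established. (For general $\tau\in\mathcal D'$ one first microlocalizes/regularizes to reduce to the smooth case.)

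\textbf{Surjectivity.} This is the main obstacle. Given $u\in\mathcal V^{-\infty}(\lambda)$, I must produce $\tau$ with $W^\pm(\lambda)\tau=u$. By the Remark after the section heading, $WF^s_\sc(u)\subseteq\{b^2+\bar c^2=1\}$ for every $s$. The strategy is: use a microlocal partition near $\{b=1\}$ versus $\{b=-1\}$ (Corollary \ref{cor:renerg2} shows these are the only accumulation sets once $u$ is known to vanish microlocally at some $\{b=k\}$, $|k|<1$; but \emph{a priori} $u$ may have wave front at both). Decompose $u=u_+ + u_-$ using $\Opw(\tilde\chi_\pm(b))$ with $\tilde\chi_+ + \tilde\chi_-=1$, $\supp\tilde\chi_-\subset\{b<\bar\sigma-1\}$ cut. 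For the outgoing piece, form $v:=(H-\lambda)\big(\Opw(\tilde\chi_+(b))u\big)$, which by the commutator calculus is a compactly-microlocalized, rapidly decaying error; then the Sommerfeld uniqueness Proposition \ref{prop:propa5} identifies $\Opw(\tilde\chi_+(b))u$ with $R(\lambda+\i0)v$ plus a genuine outgoing generalized eigenfunction, i.e. with $W^+(\lambda)\tau_+$ for a suitable $\tau_+$ obtained by reading off the leading asymptotics of $\Opw(\tilde\chi_+(b))u$ via Theorem \ref{thm:kkk} (inverting the stationary-phase profile: $\tau_+(\omega)$ is essentially the restriction of $g^{1/2}r^{(d-1)/2}\e^{-\i\phi}u$ to the sphere at infinity). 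The analogous construction with $\phi^-$ handles $u_-$, and one checks the two reconstructions are compatible, i.e. $u=W^+(\lambda)\tau$ with $\tau=\tau_+$ once one also shows the $\{b=-1\}$ part of an outgoing eigenfunction is automatically controlled—this is where one invokes that $W^+(\lambda)$ already has range containing all of $\mathcal V^{-\infty}$ by counting: $W^+(\lambda)$ is injective with closed range (from the isometry-type bound $W^+(\lambda)W^+(\lambda)^*=\delta^V(\lambda)$ in \eqref{eq:7usef}, \eqref{eq:delt2}), and any $u\perp\Ran W^+(\lambda)$ in the appropriate pairing would, by Proposition \ref{prop:propa5} applied in reverse, have to vanish. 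The delicate point throughout is keeping track of the weights/Sobolev indices so that the reconstruction $\tau$ lies in $\mathcal D'(S^{d-1})$ and not in some smaller or larger space; Theorems \ref{thm:4a}, \ref{thm:4}, \ref{thm:middl} are precisely calibrated for this, and Corollary \ref{cor:extendmatrices-at} supplies the missing mapping property for negative $n$ needed to close the argument at $\lambda=0$.
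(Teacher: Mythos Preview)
Your continuity step is fine and matches the paper. The injectivity and surjectivity arguments, however, both have genuine gaps, and you have missed the clean algebraic mechanism the paper uses.

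\medskip
\textbf{Injectivity.} First a slip: the definition is $W^+(\lambda)=J^+(\lambda)+\i R(\lambda-\i0)T^+(\lambda)$, whereas Lemma~\ref{prop:qs4} says $J^+(\lambda)+\i R(\lambda+\i0)T^+(\lambda)=0$; you have conflated the two. More seriously, your passage from smooth $\tau$ to $\tau\in\mathcal D'(S^{d-1})$ ``by density and continuity'' does not work: density goes the wrong way for injectivity. Knowing that $W^+(\lambda)$ kills no nonzero $\tau\in C^\infty$ says nothing about distributional $\tau$ without a quantitative lower bound. Proposition~\ref{prop:propa5} does not help here either, since $W^+(\lambda)\tau$ generically has wave front at \emph{both} $\{b=1\}$ and $\{b=-1\}$ (cf.~Theorem~\ref{thm-rep2}), so the Sommerfeld uniqueness hypothesis is not satisfied.

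\medskip
\textbf{Surjectivity.} Your plan (microlocal splitting $u=u_++u_-$, then ``reading off'' $\tau_+$ from the large-$r$ profile, then a closed-range/orthogonality argument) is vague at every step. The identity $W^+(\lambda)W^+(\lambda)^*=\delta^V(\lambda)$ is not an isometry bound and does not give closed range on its own; and ``inverting the stationary-phase profile'' for general $u\in L^{2,-\infty}$ is exactly the hard part you have not addressed.

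\medskip
\textbf{What the paper does instead.} The paper produces an \emph{explicit two-sided inverse}. With a partition $\chi^++\chi^-=1$ as in \eqref{eq:8part}, one observes (via Proposition~\ref{prop:proposibb}) that $\lim_{\epsilon\searrow0}\i\epsilon R(\lambda\pm\i\epsilon)\Opr(\chi^\pm)u=0$, whence
\[
\Opr(\chi^\pm)u=R(\lambda\pm\i0)(H-\lambda)\Opr(\chi^\pm)u.
\]
Adding the $\pm$ identities and using $(H-\lambda)u=0$ gives
\[
u=2\pi\i\,\delta^V(\lambda)\,[H,\Opr(\chi^+)]u,
\]
and now the factorization $\delta^V(\lambda)=W^\pm(\lambda)W^\pm(\lambda)^*$ (extended via Corollary~\ref{cor:extendmatrices-at}) yields $u=W^\pm(\lambda)\tau$ with the explicit formula
\[
\tau=\pm 2\pi\i\,W^\pm(\lambda)^*[H,\Opr(\chi^\pm)]u\in\mathcal D'(S^{d-1}).
\]
This is surjectivity. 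Injectivity is then the verification that $\tau\mapsto 2\pi\i\,W^+(\lambda)^*[H,\Opr(\chi^+)]W^+(\lambda)\tau$ is the identity, which the paper checks on $C^\infty(S^{d-1})$ (where it suffices by continuity of this composite map, not by density of its kernel) using the commutator computation and the stationary-phase asymptotics of Theorem~\ref{thm:kkk}. You cited the right ingredients (Theorem~\ref{thm:kkk}, Corollary~\ref{cor:extendmatrices-at}) but missed how to assemble them: the spectral-resolution identity $\delta^V(\lambda)=W^\pm(\lambda)W^\pm(\lambda)^*$ is the organizing principle, and the explicit formula for $\tau$ replaces both your asymptotic reconstruction and your closed-range argument.
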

\begin{proof}

\noindent{\bf Step  I}. Clearly $W^\pm (\lambda):\mathcal D'(S^{d-1})\to \mathcal V^{-\infty}(\lambda)$ is
 well-defined and continuous, cf. Theorem \ref{thm:4}.

  \noindent{\bf Step  II}. We show that $W^\pm (\lambda)$ is
  onto. Let $u\in \mathcal V^{-\infty}(\lambda)$ be given. Let
  \begin{equation}
  \label{eq:8part}
 \chi^\pm=\chi_-(a)\tilde{\chi}_\pm(b)+\frac 12\chi_+(a), 
\end{equation}
 where 
 ${\chi}_{+}=1-{\chi}_{-}$ is a real-valued function  as in Proposition 
\ref{prop:resolvent_basic2} (\ref{item:C-02}) such that $\chi_+(t)=1$ for
$t\geq2C_0$, and 
$\tilde{\chi}_{-},\tilde{\chi}_{+}\in C^\infty(\R) $ are real-valued
functions obeying 
$\tilde{\chi}_{-}+\tilde{\chi}_{+}=1$ and 
\begin {align}
  \label{eq:supp tilde chi1aaa}
 &\supp \tilde{\chi}_{-}
 \subseteq ]-\infty, 1/2[,\\&\supp \tilde{\chi}_{+} \subseteq
 ]-1/2,\infty[.  \label{eq:supp tilde chi2aa}
\end{align} 
 Now
\begin{eqnarray}
&&\lim_{\epsilon\downarrow0} R(\lambda\pm\i\epsilon)(H-\lambda)\Opr(\chi^\pm)u
\nonumber
\\
&=&\Opr(\chi^\pm)u\pm
\lim_{\epsilon\downarrow0}\i\epsilon R(\lambda\pm\i\epsilon)\Opr(\chi^\pm)u.
\label{argu}\end{eqnarray}
Note that  $\lim_{\epsilon\downarrow0}R(\lambda\pm\i\epsilon)\Opr(\chi^\pm)u$
exists, due to Propositions \ref{prop:enerest}, \ref{prop:proposi}, 
and   \ref{prop:proposibb}. Therefore the second term on the right of \eqref{argu} is
zero. Therefore, we have
\begin{equation}
      \label{eq:rep0a}
 0=\Opr(\chi^\pm)u-R(\lambda{\pm}\i0)
(H-\lambda)\Opr(\chi^\pm)u.
    \end{equation}

Adding the two equations of \eqref{eq:rep0a} yields
\begin{equation*}
  u=2\pi \i \delta^V(\lambda)(H-\lambda)\Opr(\chi^+)u,
\end{equation*} which in turn in conjunction with Proposition
\ref{prop:enerest}, \eqref{eq:7usef},  \eqref{eq:delt2}
 and Corollary \ref{cor:extendmatrices-at}
yields
\begin{equation}\label{eq:rep0ab}
 u=W^\pm(\lambda)\tau,\; \tau=\pm 2\pi\i
 W^\pm(\lambda)^*[H,\Opr(\chi^\pm)]u\in \mathcal D'(S^{d-1}). 
\end{equation}

\noindent{\bf Step  III}. We show that $ W^\pm (\lambda)$ is
injective. For convenience we shall only treat the case of superscript
$+$. By \eqref{eq:rep0ab} we need to show that for all    $\tau    \in \mathcal D'(S^{d-1})$
\begin{equation}
  \label{eq:8aa}
  \tau=2\pi\i
 W^+(\lambda)^*(H-\lambda)\Opr(\chi^+) W^+(\lambda)\tau.
\end{equation} By continuity it suffices to verify (\ref{eq:8aa}) for
$\tau\in C^\infty (S^{d-1})$.
This can be done as follows. Pick non-negative $f\in C^\infty_c(\R)$
with $\int_0^\infty f(s)\d s=1$, and let $F_R(t)=1-\int_0^{t/R}f(s)\d
s;\;R>1.$ We  write the right  hand side of (\ref{eq:8aa}) as
\begin{equation}\label{eq:8aa1}
  \wlim _{R\to \infty}2\pi\i
 W^+(\lambda)^*F_R(\langle x \rangle)(H-\lambda)\Opr(\chi^+)
 W^+(\lambda)\tau
\end{equation} and pull the factor $(H-\lambda)$ to the left. Thus
 (\ref{eq:8aa1}) equals
\begin{equation*}
  \wlim _{R\to \infty}
 2\pi R^{-1}W^+(\lambda)^*
f(\langle x \rangle/R)g\Opr(b\chi^+) W^+(\lambda)\tau.
\end{equation*} If $\lambda\geq0$, we insert (\ref{wave22}) for $W^+(\lambda)$ (if $\lambda=0$, we
use  instead \eqref{wave2}).
By Proposition \ref{prop:resolvent_basic2} (\ref{item:C-02}) and (\ref{some_label21})  and
Lemma \ref{lem:22a} (\ref{it:p30}) and (\ref{it:p50}), we
can  replace each  factor of
$W^+(\lambda)$ by  a  factor of $J^+(\lambda)$, cf. the proof of
Theorem \ref{thm:middl}. Moreover,  we can  replace
the   factor $\Opr(b\chi^+)$ by the operator $g^{-1}\hat x\cdot p$. Therefore, 
 (\ref{eq:8aa1}) becomes
\begin{equation}\label{eq:8aa3}
  \wlim _{R\to \infty}
 2\pi R^{-1}J^+(\lambda)^*
f(\langle x \rangle/R)\hat x
\cdot pJ^+(\lambda)\tau.
\end{equation}
By Theorem \ref{thm:kkk},  (\ref{eq:8aa3}) equals $\tau$.
The identity (\ref{eq:8aa}) follows.
\end{proof}

\begin{remarks*}\begin{enumerate}[\quad\normalfont 1)]
\item  A somewhat similar representation formula has been derived for
  representing positive solutions to  a PDE, see for example
  \cite{Mu}. This involves  the so-called
  Martin boundary. In our case, the  notion analogous to the
 ``Martin boundary'' would
 be $S^{d-1}$.
  \item For $V_3=0$, we have
  \begin{equation*}
    \mathcal V^{-\infty}(\lambda)=\{u\in{\mathcal S}'(\R^d) \ |\
    (H-\lambda)u=0\}, \end{equation*}
 and hence the set  $\mathcal V^{-\infty}(\lambda)$ is closed
  in $ \mathcal S'(\R^d)$  (with respect to  the weak-$*$ topology of 
$\mathcal S'(\R^d)$). Moreover, in this case  $W^\pm
(\lambda)$ maps $\mathcal D'(S^{d-1})$ bicontinuously onto $\mathcal
V^{-\infty}(\lambda)$).

In fact, suppose  $u\in{\mathcal S'(\R^d)}$ obeys  $(H-\lambda)u=0$. Then for some
$m\in \N$ we
have $\langle p\rangle^{-2m}u\in L^{2,-\infty}$. But $(H-\lambda+\i)^{-m}\langle
p\rangle^{2m}$ is bounded on any $L^{2,s}$. Whence, showing that indeed  $u\in \mathcal V^{-\infty}(\lambda)$,
\[\i^{-m} u=(H-\lambda+\i)^{-m}u
=(H-\lambda+\i)^{-m}\langle p\rangle^{2m}\big (\langle p\rangle^{-2m}u\big )\in L^{2,-\infty}.\]

\end {enumerate}
\end{remarks*}

\subsection{Scattering matrices -- an alternative construction}

The construction of scattering matrices given in Subsections 
\ref{Scattering matrices at positive  energies}
and \ref{Asymptotics of scattering matrix at low energies} involved 
a
detailed knowledge of appropriate
  operators, see
the proof of Theorem \ref{thm:S(0)1}. 
However, 
given the theory of wave matrices developed in Subsection 
\ref{Representation of eigenfunctions} and the basic formulas
(\ref{eq:7usef}) and \eqref{eq:delt2} for the spectral resolution, we
could have constructed the scattering matrix more easily.

Recall from Theorem \ref{thm-rep} that $W^\pm(\lambda):{\mathcal
  D}'(S^{d-1})\to L^{2,-\infty}$ is injective. Hence,
$W^\pm(\lambda)^*:L^{2,\infty}\to C^\infty(S^{d-1})$ has a dense range.

For $\tau\in L^2(S^{d-1})$ of the form $\tau=W^-(\lambda)^*v$ with $v\in
L^{2,\infty}$, we define $S(\lambda)\tau:= W^+(\lambda)^*v$. By
(\ref{eq:7usef}) and \eqref{eq:delt2},
we know that
\[\|W^+(\lambda)^*v\|^2=\|W^-(\lambda)^*v\|^2=\langle v,\delta^V(\lambda)
v\rangle.\]
Hence $S(\lambda)$ is indeed well-defined and isometric. But $W^\pm(\lambda)^*L^{2,\infty}$
is dense in $C^\infty(S^{d-1})$, and therefore also in
$L^2(S^{d-1})$. Whence 
$S(\lambda)$ extends to an isometric operator on 
 $L^2(S^{d-1})$. Reversing the role of $+$ and $-$, we obtain that $S(\lambda)$ 
is
actually unitary.
By construction, it satisfies
\begin{equation}\label{eq:Specaa}
  S(\lambda)W^-(\lambda)^*=W^+(\lambda)^*, \;\lambda \geq 0.
\end{equation}

\subsection{Geometric scattering
matrices}
\label{Geometric scattering
matrices}

The  following type of result was  proved for a class of
constant coefficient Hamiltonians  (with no potential) in [AH], and
generalized to Schr\"odinger operators with long-range potentials (for
a class including the one given by Condition \ref{symbol}) at positive
energies by \cite{GY}. It gives a characterization of the space 
$W^\pm(\lambda)L^2(S^{d-1})$, which in turn yields yet another
characterization  of the
scattering matrix $S(\lambda)$. 

Let $s_0=s_0(\lambda)$ be given as in
  (\ref{eq:s_0}),  and introduce in  terms of a dual Besov space 
\[\mathcal{V}^{-s_0}(\lambda):=B_{s_0}^*\cap \mathcal
   V^{-\infty}(\lambda) \] endowed with the topology  of
  $B_{s_0}^*$. 
 The statement (\ref{it:iv}) below 
 is given in terms of the phase function 
$\phi=\phi(x,\lambda)$ of (\ref{eq:newph}).

\begin{thm}\label{thm-rep2}
  \begin{enumerate}[\normalfont (i)]
\item \label{it:i}
  For all $\tau\in L^2(S^{d-1})$, \[WF^{-s_0}_{\sc}(W^{\pm}(\lambda)\tau)\subseteq
   \{b=-1\}\cup \{b=1\}.\]
\item \label{it:ii}
 The operator $W^{\pm}(\lambda)$ maps $L^2(S^{d-1})$
  bijectively  and bicontinuously  onto $\mathcal{V}^{-s_0}(\lambda)$.
\item \label{it:iii} The operator $W^{\pm}(\lambda)^*$ (defined a priori on
  $B_{s_0}^{**}\supseteq  B_{s_0}$)
maps  $B_{s_0}$ onto $ L^2(S^{d-1})$.
\item \label{it:iv}
 For all $\tau\in L^2(S^{d-1})$,
  \begin{align} 
\label{eq:Psi^-2}
  &W^-(\lambda)\tau( x)
-{\e^{\i
  \pi\tfrac{d-1}{4}}\e^{-\i\phi(x,\lambda)}\tau(-\hat
  x)+\e^{-\i \pi\tfrac{d-1}{4}}\e^{\i\phi(x,\lambda)}(S(\lambda)\tau)(\hat
  x)\over
  (2\pi)^{\tfrac{1}{2}}g^{\tfrac{1}{2}}(r,\lambda)r^{\tfrac{d-1}{2}}}\in
B^* _{s_0,0},\\
\label{eq:Psi^-2a}
  &W^+(\lambda)\tau( x)
-{\e^{-\i \pi\tfrac{d-1}{4}}\e^{\i\phi(x,\lambda)}\tau(\hat
  x)+\e^{\i
  \pi\tfrac{d-1}{4}}\e^{-\i\phi(x,\lambda)}(S(\lambda)^*\tau)(-\hat
  x)\over
  (2\pi)^{\tfrac{1}{2}}g^{\tfrac{1}{2}}(r,\lambda)r^{\tfrac{d-1}{2}}}\in
B^* _{s_0,0},\\
\label{eq:800}
&\|\tau\|^2_ {L^2(S^{d-1})}=\lim_{R\to \infty}R^{-{1}}\int
_{r<R}|\sqrt{\pi}g^{\tfrac{1}{2}}(r,\lambda)W^{\pm}(\lambda)\tau|^2\d
x.   
  \end{align}
\end {enumerate}
\end{thm}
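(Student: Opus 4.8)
The plan is to prove the four assertions essentially in the stated order, the workhorse being the representation $W^{\pm}(\lambda)=J^{\pm}(\lambda)+\i R(\lambda\mp\i0)T^{\pm}(\lambda)$ of \eqref{wave2}/\eqref{wave22}, the microlocal resolvent bounds of Section \ref{Uniform resolvent estimates}, the stationary-phase asymptotics of Theorem \ref{thm:kkk}, and the Besov-space form of the limiting absorption principle (of \cite{AH}, \cite[Sections 14.1, 30.2]{Ho2}, \cite{GY}), carried out with the $g$-weights of \cite{FS} so that it persists down to $\lambda=0$.

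\emph{Part (i).} Since $(H-\lambda)W^{\pm}(\lambda)\tau=0$, Proposition \ref{prop:enerest} gives $WF^{s}_{\sc}(W^{\pm}(\lambda)\tau)\subseteq\{b^{2}+\bar c^{2}=1\}$ for every $s\in\R$, so by Corollary \ref{cor:renerg2} it is enough to exhibit one $k\in(-1,1)$ with $\{b=k\}\cap WF^{s}_{\sc}(W^{+}(\lambda)\tau)=\emptyset$ (the $-$ case being symmetric). The localisation \eqref{qs1} (and its proof, which extends Lemma \ref{lem:22a}, cf. the proof of Lemma \ref{prop:qs4}) shows $WF^{s}_{\sc}(J^{+}(\lambda)\tau)\cap\{b<\bar\sigma-1\}=\emptyset$ and that \eqref{labb2} holds for $v=T^{+}(\lambda)\tau$ with an arbitrary weight; hence Proposition \ref{prop:proposi1}\,{\rm(\ref{item:r2})}--{\rm(\ref{item:r3})} yields $WF^{s}_{\sc}(R(\lambda+\i0)T^{+}(\lambda)\tau)\subseteq\{a=1\}$ and $WF^{s}_{\sc}(R(\lambda+\i0)T^{+}(\lambda)\tau)\cap\{b<\bar\sigma-1,\ a\le C_{0}\}=\emptyset$, so $WF^{s}_{\sc}(W^{+}(\lambda)\tau)\cap\{b<\bar\sigma-1\}=\emptyset$; take $k\in(-1,\bar\sigma-1)$. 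This argument in fact gives $WF^{s}_{\sc}(W^{\pm}(\lambda)\tau)\subseteq\{b=-1\}\cup\{b=1\}$ for \emph{all} $s$, a fact used below.

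\emph{Parts (ii) and (iii).} That $W^{\pm}(\lambda)$ maps $L^{2}(S^{d-1})$ into $B^{*}_{s_{0}}$ follows by splitting $W^{\pm}(\lambda)=J^{\pm}(\lambda)+\i R(\lambda\mp\i0)T^{\pm}_{\bd}(\lambda)+\i R(\lambda\mp\i0)T^{\pm}_{\pr}(\lambda)$: the $J^{\pm}$-term is handled by Lemma \ref{lemma: statio} and the inclusion $g^{-1/2}B^{*}_{1/2}\subseteq B^{*}_{s_{0}}$, and the resolvent terms by the Besov refinement of Proposition \ref{prop:resolvent_basic2}\,{\rm(\ref{it:one2})} and {\rm(\ref{some_label21})} (proved just as in Step III of the proof of Proposition \ref{prop:propa5}, but keeping track of the $B^{*}_{s_{0},0}$-scaling). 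Continuity of $W^{\pm}(\lambda)$ is Theorems \ref{thm:4a}/\ref{thm:4}; injectivity on $L^{2}(S^{d-1})\subseteq\mathcal D'(S^{d-1})$ is Theorem \ref{thm-rep}; and $\mathcal V^{-s_{0}}(\lambda)$ is closed in $B^{*}_{s_{0}}$ since $B^{*}_{s_{0}}\hookrightarrow\mathcal D'$. The two surjectivity statements are obtained together from the identity
\[\tau=\pm\,2\pi\i\,W^{\pm}(\lambda)^{*}\,[H,\Opr(\chi^{\pm})]\,W^{\pm}(\lambda)\tau ,\]
which is \eqref{eq:8aa} read on $u=W^{\pm}(\lambda)\tau\in\mathcal V^{-s_{0}}(\lambda)$ (cf. Step II of Theorem \ref{thm-rep}). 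The key point is that the symbol of $[H,\Opr(\chi^{\pm})]$ is supported where $a$ is near $C_{0}$ or $|b|$ is bounded away from $1$, a region disjoint from $\{b=\pm1,\ \bar c=0\}\supseteq WF^{s}_{\sc}(u)$ for all $s$; hence $[H,\Opr(\chi^{\pm})]u\in L^{2,N}$ for every $N$ (the singular contribution $[V_{3},\Opr(\chi^{\pm})]u$ being compactly supported and in every $L^{2,N}$), so $[H,\Opr(\chi^{\pm})]u\in B_{s_{0}}$. Thus $\tau\in W^{\pm}(\lambda)^{*}B_{s_{0}}$ whenever $W^{\pm}(\lambda)\tau\in\mathcal V^{-s_{0}}(\lambda)$, giving surjectivity of $W^{\pm}(\lambda)^{*}\colon B_{s_{0}}\to L^{2}(S^{d-1})$; running the identity the other way recovers every $u\in\mathcal V^{-s_{0}}(\lambda)$ as $W^{\pm}(\lambda)\tau$ with $\tau\in L^{2}(S^{d-1})$. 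Bicontinuity of the bijection in (ii) then follows from the open mapping theorem.

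\emph{Part (iv) and \eqref{eq:800}.} Applying Theorem \ref{thm:kkk} to $J^{-}(\lambda)\tau$ and noting that the error $g^{-1/2}r^{-(d-1)/2}O(r^{-\breve\epsilon})$ lies in $B^{*}_{s_{0},0}$ (a direct size estimate using $g(r)^{-1}=O(r^{\mu/2})$), one finds that the \emph{incoming} part of $W^{-}(\lambda)\tau$ equals the first summand of \eqref{eq:Psi^-2} modulo $B^{*}_{s_{0},0}$, because $\i R(\lambda+\i0)T^{-}(\lambda)\tau$ is purely outgoing by (i). The remaining, central step is the \emph{outgoing} asymptotics of $\i R(\lambda+\i0)T^{-}(\lambda)\tau$; for this I would prove the radiation-condition refinement of Proposition \ref{prop:propa5}, namely that for $v$ in a suitable weighted (or Besov) space
\[R(\lambda+\i0)v-(2\pi)^{1/2}\i\,\e^{-\i\pi(d-1)/4}\,g^{-1/2}(r,\lambda)\,r^{-(d-1)/2}\,\e^{\i\phi(x,\lambda)}\,\bigl(W^{+}(\lambda)^{*}v\bigr)(\hat x)\in B^{*}_{s_{0},0},\]
the analogue for our setting of the [AH]/\cite{GY} identification of the outgoing radiation pattern with the spectral density operator, proved by a commutator/positivity argument with a symbol adapted as in Step IV of the proof of Proposition \ref{prop:propa5}; combined with $S(\lambda)=-2\pi W^{+}(\lambda)^{*}T^{-}(\lambda)$ from \eqref{eq:Sgod} this yields the second summand of \eqref{eq:Psi^-2}. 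Identity \eqref{eq:Psi^-2a} is obtained symmetrically. Finally \eqref{eq:800} follows from \eqref{eq:Psi^-2}: in $R^{-1}\int_{r<R}|\sqrt\pi\,g^{1/2}W^{\pm}(\lambda)\tau|^{2}\,\d x$ the $B^{*}_{s_{0},0}$-remainder and its cross terms drop out by Cauchy--Schwarz, the cross term between the $\e^{-\i\phi}$ and $\e^{\i\phi}$ leading terms drops out by nonstationary phase (using $\partial_{r}\phi=g(r)\ge c\,r^{-\mu/2}$ and $\mu<2$), and the two diagonal contributions give $\tfrac12\|\tau\|^{2}_{L^{2}(S^{d-1})}+\tfrac12\|S(\lambda)\tau\|^{2}_{L^{2}(S^{d-1})}=\|\tau\|^{2}_{L^{2}(S^{d-1})}$ by the unitarity of $S(\lambda)$ (Theorem \ref{thm:S(0)1}). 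The main obstacle throughout is precisely this outgoing-radiation refinement of Proposition \ref{prop:propa5}: pinning down the $g^{-1/2}r^{-(d-1)/2}\e^{\i\phi}$-coefficient of $R(\lambda+\i0)v$ as an honest element of $B^{*}_{s_{0}}$ modulo $B^{*}_{s_{0},0}$, uniformly and with the correct $g$-weights all the way down to the threshold $\lambda=0$.
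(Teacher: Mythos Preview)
Your argument for (\ref{it:i}) has a sign error that actually matters. The representation is $W^{+}(\lambda)=J^{+}(\lambda)+\i R(\lambda\mathbf{-}\i0)T^{+}(\lambda)$, not with $R(\lambda+\i0)$. With the correct sign, Proposition \ref{prop:proposi1}\,(\ref{item:r3}) for $R(\lambda-\i0)$ controls the region $\{b>k\}$, not $\{b<k\}$, so you cannot conclude $WF^{s}_{\sc}(W^{+}(\lambda)\tau)\cap\{b<\bar\sigma-1\}=\emptyset$. In fact that intermediate claim is \emph{false} at the level $s=-s_{0}$: by part (\ref{it:iv}) the function $W^{+}(\lambda)\tau$ carries an incoming wave proportional to $\e^{-\i\phi}(S(\lambda)^{*}\tau)(-\hat x)$, which puts $b=-1$ squarely inside $WF^{-s_{0}}_{\sc}(W^{+}(\lambda)\tau)$. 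Consequently your ``for all $s$'' strengthening of (\ref{it:i}) is also not established, and the argument in (\ref{it:ii})/(\ref{it:iii}) that relies on $[H,\Opr(\chi^{\pm})]u\in L^{2,N}$ for every $N$ is left hanging. The paper obtains (\ref{it:i}) differently: it first proves the Besov bound $\Opr(\chi^{+}_{\righ})W^{+}(\lambda)\in\mathcal B(L^{2}(S^{d-1}),B^{*}_{s_{0}})$ (Lemma \ref{lemma: statio} plus Proposition \ref{prop:resolvent_basic2}), then runs a commutator/positivity argument to get \eqref{eq:800a}, and concludes via Propositions \ref{prop:enerest} and \ref{prop:propa5aa}.

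For (\ref{it:ii}) and (\ref{it:iv}) you miss the key device, which lets the paper avoid both your ``Besov refinement of the LAP'' and your ``radiation-condition refinement'': the intertwining identity $W^{+}(\lambda)=W^{-}(\lambda)S(\lambda)$ (equivalently \eqref{eq:Specaa}). The paper proves the outgoing-region Besov bound for $W^{+}$ and the incoming-region Besov bound for $W^{-}$, and then transports the latter to $W^{+}$ through $S(\lambda)$, covering the full phase space together with \eqref{eq:800ac}; this gives $W^{\pm}(\lambda)\in\mathcal B(L^{2}(S^{d-1}),B^{*}_{s_{0}})$ with no Besov LAP needed. The same trick dispatches (\ref{it:iv}): Theorem \ref{thm:kkk} plus density give $\Opr(\chi^{\pm})W^{\pm}(\lambda)\tau-u_{\pm,\tau}\in B^{*}_{s_{0},0}$, and then $\Opr(\chi^{+})W^{-}(\lambda)\tau=\Opr(\chi^{+})W^{+}(\lambda)(S(\lambda)\tau)$ immediately yields the outgoing term of \eqref{eq:Psi^-2} with coefficient $S(\lambda)\tau$, bypassing any analysis of the outgoing radiation pattern of $R(\lambda+\i0)$. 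Your \eqref{eq:800} argument is fine once \eqref{eq:Psi^-2}/\eqref{eq:Psi^-2a} are in hand. For bijectivity in (\ref{it:ii}) the paper uses \eqref{eq:rep0ab} together with the just-proved boundedness \eqref{eq:9bonclosed} and the Riesz representation theorem (not the stronger $L^{2,N}$ input you invoke); (\ref{it:iii}) then follows from Banach's closed range theorem.
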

\begin{proof}
 
 \noindent{\bf Re  (\ref{it:i}).} 
 Again we concentrate on the case of
 superscript $+$. Let $\tau\in L^2(S^{d-1})$ be
 given. We shall use the partition (\ref{eq:partitionleft}), as in the
 proof of Theorems \ref{thm:4a} and \ref{thm:4}, so let $\bar
 \sigma> 0$ be given as before, cf.   (\ref{eq:supp tilde chi1})
 and (\ref{eq:supp tilde chi2}). As for the partition functions
 (\ref{eq:8part}), we modify (\ref{eq:supp tilde chi1aaa}) and
 (\ref{eq:supp tilde chi2aa})  by  replacing here
 $\tilde{\chi}_{\pm}\to \tilde{\chi}_{\pm, \righ}$
\begin {align}
  \label{eq:supp tilde chi1aaab}
 &\supp \tilde{\chi}_{-,\righ}
 \subseteq ]-\infty, 1-\bar \sigma /4[,\\&\supp \tilde{\chi}_{+,\righ} \subseteq
 ]1-\bar \sigma /2,\infty[.  \label{eq:supp tilde chi2aab}
\end{align} 
 Then it follows from  Propositions  \ref{prop:resolvent_basic2} and \ref{prop:enerest}  and
 Lemmas \ref{lem:22a} and \ref{lemma: statio} 
 that
 \begin{equation}
   \label{eq:8bnnnd}
   \Opr(\chi^+_{\righ})W^{+}(\lambda)\in \mathcal{B} (L^2(S^{d-1}),B_{s_0}^* ).
 \end{equation} (The fact that this bound holds for $W^{+}(\lambda)\to
 J^{+}(\lambda)$ is indeed a consequence of Lemma \ref{lemma: statio}
 due to interpolation, cf. \cite[Theorem 14.1.4] {Ho2}, but it can also be
 proved concretely along the lines
 of  the proofs of Lemma \ref{lemma: statio} and Theorem \ref{thm:middl}.)

Since $\langle W^{+}(\lambda)\tau,\i[H,F_R
\Opr(\chi^+_{\righ})]W^{+}(\lambda)\tau\rangle=0$, we conclude from 
\eqref{eq:eqmotionbiiii} and (\ref{eq:8bnnnd}) that 
\begin{equation}
    \label{eq:800a}
\sup_{R>1} \Re \langle W^{+}(\lambda)\tau,
\Opw (F_R\chi_-(a)\tilde{\chi}_{\righ}'
(b)gr^{-1})W^{+}(\lambda)\tau\rangle\leq C\|\tau\|^2.
  \end{equation} Here we used the calculus of pseudodifferential operators,
  cf. \cite[Theorem 18.6.8]{Ho1}.  

In combination with Propositions \ref{prop:enerest} and \ref{prop:propa5aa}, we  conclude that
  \begin{equation}\label{eq:800ac}
  \{-1<b<1\}\cap
  WF^{-s_0}_{\sc}(W^{+}(\lambda)\tau)=\emptyset.  
  \end{equation}

\noindent{\bf Re  (\ref{it:ii}) (Boundedness).}

To proceed from here we change
(\ref{eq:supp tilde chi1aaab}) and
 (\ref{eq:supp tilde chi2aab}) as follows:
\begin {align}
  \label{eq:supp tilde chi1aaabb}
 &\supp \tilde{\chi}_{-,\lef}
 \subseteq ]-\infty, -1+\bar \sigma /2[,\\&\supp \tilde{\chi}_{+,\lef} \subseteq
 ]-1+\bar \sigma /4,\infty[.  \label{eq:supp tilde chi2aabb}
\end{align} 

 With these cutoffs we can show analogously that 
\begin{equation}
   \label{eq:8bnnndb}
   \Opr(\chi^{-}_{\lef})W^{-}(\lambda)\in \mathcal{B} (L^2(S^{d-1}),B_{s_0}^* ).
 \end{equation}
 
Using (\ref{eq:Specaa}),  this leads to 
\begin{equation}
   \label{eq:8bnnndc}
   \Opr(\chi^{-}_{\lef})W^{+}(\lambda)\in \mathcal{B} (L^2(S^{d-1}),B_{s_0}^* ).
 \end{equation}

 Finally, writing (with $\chi_{\rm {middle}}:=1-\chi^+_{\righ}-\chi^{-}_{\lef}$)
\begin{equation*}
W^+(\lambda)=\Opr(\chi^+_{\righ})W^+(\lambda)+ \Opr(\chi^{-}_{\lef})W^{+}(\lambda)+\Opr(\chi_{\rm {middle}})W^{+}(\lambda),\end{equation*}
we conclude from 
 (\ref{eq:9boncloseda}), (\ref{eq:8bnnnd}), (\ref{eq:800ac}) and 
 (\ref{eq:8bnnndc}) that 
indeed
\begin{equation}
  \label{eq:9bonclosed}
  W^{+}(\lambda)\in \mathcal{B} (L^2(S^{d-1}),B_{s_0}^* ).
 \end{equation} 

Whence  $W^{+}(\lambda)$ maps $L^2(S^{d-1})$
  continuously into $\mathcal{V}^{-s_0}(\lambda)$.

\noindent{\bf Re  (\ref{it:ii}) (Bijectiveness).}
 We shall
 show that $W^{+}(\lambda)$ maps $L^2(S^{d-1})$
 onto $\mathcal{V}^{-s_0}(\lambda)$. Using the expression (\ref{eq:rep0ab})  
for the inverse $\tau\in \mathcal{D}'(S^{d-1})$, mimicking
 the first part of Step III in the proof of Theorem \ref{thm-rep} and using the
Riesz' representation theorem (see for example \cite{Yo})  in conjunction with 
(\ref{eq:9bonclosed}), we obtain
 that indeed  $\tau\in L^2(S^{d-1})$. This
argument also shows that
\begin{equation}
  \label{eq:9bonclosedb}
  W^{+}(\lambda)^{-1}\in \mathcal{B} (\mathcal{V}^{-s_0}(\lambda),L^2(S^{d-1})).
 \end{equation}  

 \noindent{\bf Re  (\ref{it:iii}).} The result follows from
 (\ref{it:ii}) by the Banach's closed
  range theorem, see \cite{Yo}. 
 
\noindent{\bf Re  (\ref{it:iv}).} Let
\[u_{\pm,\tau}(x)=(2\pi)^{-{\tfrac{1}{2}}}\e^{\mp\i \pi\tfrac{d-1}{4}}g^{-{\tfrac{1}{2}}}(r,\lambda)r^{-{\tfrac{d-1}{2}}}\e^{\pm\i\phi(x,\lambda)}\tau(\pm
\hat x).\] Clearly  $u_{\pm,\tau}\in B^* _{s_0}$ with a continuous
dependence on $\tau$. We claim (with reference to (\ref{eq:8part}))  that 
   \begin{equation}
     \label{eq:8Bs}
     \Opr (\chi^{\pm})W^{\pm}(\lambda)\tau -u_{\pm,\tau}\in B^* _{s_0,0}.
   \end{equation} Notice that also the first term is in $ B^* _{s_0}$ with a continuous
dependence on $\tau$,
   cf. (\ref{eq:8bnnnd}) and (\ref{eq:800ac}), hence it suffices to show (\ref{eq:8Bs}) for
   $\tau\in C^\infty(S^{d-1})$, in which case the
   asymptotics follows from  Theorem
   \ref{thm:kkk}, cf.  Step
   III of the
   proof of Theorem \ref{thm-rep}.

Now, combining (\ref{eq:8Bs}) and the identity (\ref{eq:Specaa}),
we obtain
\begin{equation}
     \label{eq:8Bsb}
     \Opr (\chi^{+})W^{-}(\lambda)\tau -u_{+,S(\lambda)\tau},\;\Opr (\chi^{-})W^{+}(\lambda)\tau -u_{-,S(\lambda)^*\tau}\in B^* _{s_0,0}.
   \end{equation} 
By (\ref{eq:8Bs}) and (\ref{eq:8Bsb}),
\begin{equation*}
 W^{-}(\lambda)\tau- (u_{-,\tau}+u_{+,S(\lambda)\tau}),\;W^{+}(\lambda)\tau -(u_{+,\tau}+u_{-,S(\lambda)^*\tau})\in B^* _{s_0,0},
\end{equation*} showing (\ref{eq:Psi^-2}) and (\ref{eq:Psi^-2a}). 

As for (\ref{eq:800}) we use (\ref{eq:Psi^-2}) and (\ref{eq:Psi^-2a}); 
 notice that the cross terms  do not   contribute to the limit which
 can be seen by an integration by parts with respect to the variable
 $r=|x|$, invoking Proposition \ref{prop:mixed_2aaaa}.
\end{proof}

On the basis of Theorem \ref{thm-rep2}, we can characterize the scattering
matrix $S(\lambda)$ geometrically as follows:

\begin{cor}\label{cor:geomdef}
  For all $\tau^-\in L^2(S^{d-1})$, there exist a uniquely determined $u\in
  \mathcal{V}^{-s_0}(\lambda)$ and $\tau^+\in L^2(S^{d-1})$ such
  that
\begin{equation}
  \label{eq:Psi^-2b}
  u
-{\e^{\i
  \pi\tfrac{d-1}{4}}\e^{-\i\phi(x,\lambda)}\tau^-(-\hat
  x)+\e^{-\i \pi\tfrac{d-1}{4}}\e^{\i\phi(x,\lambda)} \tau^+(\hat
  x)\over
  (2\pi)^{\tfrac{1}{2}}g^{\tfrac{1}{2}}(r,\lambda)r^{\tfrac{d-1}{2}}}\in B^* _{s_0,0}.
\end{equation} We have   $ \tau^+=S(\lambda)\tau^-$,
$u=W^-(\lambda)\tau^-=W^+(\lambda)\tau^+$. 
\end{cor}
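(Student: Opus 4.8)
The plan is to deduce Corollary \ref{cor:geomdef} directly from Theorem \ref{thm-rep2}, essentially as a repackaging of parts (\ref{it:ii}) and (\ref{it:iv}). First I would establish existence: given $\tau^-\in L^2(S^{d-1})$, set $u:=W^-(\lambda)\tau^-$, which lies in $\mathcal{V}^{-s_0}(\lambda)$ by Theorem \ref{thm-rep2} (\ref{it:ii}), and set $\tau^+:=S(\lambda)\tau^-$. Then the asymptotic relation \eqref{eq:Psi^-2b} is literally the content of \eqref{eq:Psi^-2}, since $u_{-,\tau^-}+u_{+,S(\lambda)\tau^-}$ in the notation of the proof of Theorem \ref{thm-rep2} is exactly the displayed quotient in \eqref{eq:Psi^-2b}. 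Moreover $u=W^+(\lambda)\tau^+$ follows from the identity $S(\lambda)W^-(\lambda)^*=W^+(\lambda)^*$ of \eqref{eq:Specaa}, or more directly from \eqref{eq:Psi^-2a} applied with the roles of incoming/outgoing data exchanged; the cleanest route is to note that by Theorem \ref{thm-rep} both $W^\pm(\lambda)$ are injective, and the second formula \eqref{eq:Psi^-2a} characterizes $W^+(\lambda)\tau^+$ by the same leading asymptotics, so $W^+(\lambda)\tau^+=W^-(\lambda)\tau^-=u$.

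The main point requiring a small argument is \emph{uniqueness} of the pair $(u,\tau^+)$. Suppose $u\in\mathcal{V}^{-s_0}(\lambda)$ and $\tau^\pm\in L^2(S^{d-1})$ satisfy \eqref{eq:Psi^-2b}. By Theorem \ref{thm-rep2} (\ref{it:ii}) we can write $u=W^-(\lambda)\sigma$ for a unique $\sigma\in L^2(S^{d-1})$, and then by \eqref{eq:Psi^-2} we get $u-(u_{-,\sigma}+u_{+,S(\lambda)\sigma})\in B^*_{s_0,0}$. Subtracting this from \eqref{eq:Psi^-2b} gives
\[
u_{-,\sigma-\tau^-}+u_{+,S(\lambda)\sigma-\tau^+}\in B^*_{s_0,0}.
\]
So it remains to show that if $u_{-,\rho}+u_{+,\eta}\in B^*_{s_0,0}$ for $\rho,\eta\in L^2(S^{d-1})$, then $\rho=\eta=0$. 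This is a separation-of-leading-terms statement: the two summands have phases $\e^{\mp\i\phi(x,\lambda)}$ of opposite sign, so they cannot cancel. Concretely, one computes $\lim_{R\to\infty}R^{-1}\int_{r<R}|\sqrt{\pi}\,g^{1/2}(u_{-,\rho}+u_{+,\eta})|^2\,\d x$; the diagonal terms give $\|\rho\|^2+\|\eta\|^2$ (cf. \eqref{eq:800}), while the cross terms involve an oscillatory factor $\e^{\pm 2\i\phi(x,\lambda)}$ and vanish in the limit after an integration by parts in $r=|x|$, using $\partial_r\phi=g(r,\lambda)$ (from \eqref{eq:eik222} in the radial case, and Proposition \ref{prop:mixed_2aaaa} in general) — exactly the argument already invoked at the end of the proof of Theorem \ref{thm-rep2} (\ref{it:iv}). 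Since membership in $B^*_{s_0,0}$ forces this $\limsup$ to be zero, we get $\|\rho\|^2+\|\eta\|^2=0$, hence $\rho=\eta=0$; thus $\sigma=\tau^-$ and $S(\lambda)\sigma=\tau^+$, giving uniqueness and simultaneously the identification $\tau^+=S(\lambda)\tau^-$ and $u=W^-(\lambda)\tau^-=W^+(\lambda)\tau^+$.

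I expect the only genuine obstacle to be making the cross-term cancellation rigorous at the level of the dual Besov norm rather than pointwise: one must check that the oscillatory integration by parts produces a gain that beats the $R^{-1}$ normalization uniformly, and that the boundary contributions are controlled using $u_{\pm,\rho}\in B^*_{s_0}$ and the lower bound $g\geq c r^{-\mu/2}$ together with $h(r)\sim r g(r)$ from \eqref{eq:8aammm}. Since this computation is the same one already carried out (and cited) in the proof of Theorem \ref{thm-rep2}, I would simply refer back to it rather than redo it. Everything else — existence, the two representations of $u$, and the formula $\tau^+=S(\lambda)\tau^-$ — is immediate from Theorems \ref{thm-rep}, \ref{thm-rep2} and \eqref{eq:Specaa}.
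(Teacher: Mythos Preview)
Your proof is correct. The existence part and the identification $u=W^-(\lambda)\tau^-=W^+(\lambda)\tau^+$, $\tau^+=S(\lambda)\tau^-$ match the paper exactly.

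For uniqueness, however, you take a genuinely different route. The paper argues as follows: if $(u_1,\tau_1^+)$ and $(u_2,\tau_2^+)$ both satisfy \eqref{eq:Psi^-2b} with the same $\tau^-$, then the difference $u:=u_1-u_2$ solves $(H-\lambda)u=0$ and differs from the purely outgoing profile $u_{+,\tau_1^+-\tau_2^+}$ by an element of $B^*_{s_0,0}$; hence $WF(B^*_{s_0,0},u)\subseteq\{b=1\}$, and the Sommerfeld radiation condition (Proposition~\ref{prop:propa5}) forces $u=0$. Your argument instead first invokes the bijectivity of $W^-(\lambda)$ from Theorem~\ref{thm-rep2}~(\ref{it:ii}) to write $u=W^-(\lambda)\sigma$, subtracts the known asymptotics \eqref{eq:Psi^-2}, and then separates the incoming and outgoing pieces $u_{-,\rho}$, $u_{+,\eta}$ by the Besov-norm computation underlying \eqref{eq:800}. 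The paper's approach is shorter and more conceptual, reusing the radiation-condition machinery already in place; yours stays entirely within Theorem~\ref{thm-rep2} and avoids reopening Proposition~\ref{prop:propa5}, at the cost of redoing the oscillatory cross-term cancellation. Both are valid; the paper's is the cleaner packaging here since Proposition~\ref{prop:propa5} was set up precisely for this kind of uniqueness statement.
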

\begin{proof}
  The existence part (with $\tau^+=S(\lambda)\tau^-$) follows from
 (\ref{eq:Psi^-2}).

To show the
  uniqueness, suppose  that $u_i,\tau_i^+$, $i=1,2$, satisfy the
  requirements of \eqref{eq:Psi^-2b} with the same $\tau_-$.
Then for the difference, $u=u_1-u_2$, we have
 $(H-\lambda) u=0$ and $WF\left(B^*_{s_0,0},u\right)\subseteq \{b=1\}$. 
Hence by Proposition  \ref{prop:propa5},  
$u=0$.
\end{proof}

\begin{cor}\label{cor:geomdef2b} Let $d\geq2$ and
  $\lambda\geq0$. Suppose (in addition to Conditions
  \ref{assump:conditions1} and \ref{assump:conditions3}) that $V_2$ and $V_3$ are
  spherically symmetric and that  $\int_0^\infty r|V_3(r)|\,\d
  r<\infty$. (Condition
  \ref{assump:conditions2} is not needed since $V_2$ can be absorbed
  into $V_1$). Then there exists a  real-valued continuous function
  $\sigma_l(\cdot)$ such that for all spherical harmonics $Y$ of order $l$
we have  $S(\lambda)Y=\e^{\i2\sigma_l(\lambda)}Y$.

 Let $u_l(r)$ denote
  the regular solution of the reduced Schr\"odinger equation on the
  half-line $]0,\infty[$
\begin{equation*}
 -u'' +V_lu=0,\;V_l(r)=2(V(r)-\lambda)+\tfrac{(l+\tfrac d2-1)^2-4^{-1}}{r^2},\;l\geq0;
\end{equation*} where ``regular'' refers to the asymptotics $u(r)\asymp
r^{l+\tfrac{d-1}{2}}$ as $r\to 0$. Then  $\sigma_l(\cdot)$ is
uniquely determined $\mod 2\pi$  by the asymptotics
\begin{equation*}
  {u_l(r)\over r^{\tfrac{d-1}{2}}}-C{\sin \big (\int^r_{R_0}\sqrt{2(\lambda-V(r'))}\,\d r'+\sqrt{2\lambda
    R_0}-\tfrac{d-3+2l}{4}\pi+\sigma_l(\lambda)\big ) \over
  (\lambda-V(r))^{\tfrac{1}{4}}r^{\tfrac{d-1}{2}}}\in B^* _{s_0,0},
\end{equation*} where $C=C(l,\lambda)$ is a (uniquely  determined) positive constant.
 \end{cor}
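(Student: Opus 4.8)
The plan is to use spherical symmetry to decompose $S(\lambda)$ into partial waves, and then to read off the stated one--dimensional asymptotics from the geometric description of the scattering matrix in Theorem~\ref{thm-rep2}\,(\ref{it:iv}) and Corollary~\ref{cor:geomdef}. Since $V_1,V_2,V_3$ are radial, the splitting of $V$ on which $W^\pm(\lambda)$ depends (Theorem~\ref{thm:S(0)1}) is rotation invariant, so $W^\pm(\lambda)$ intertwines the $SO(d)$--action $U_R$ on $L^2(\R^d)$ with the action $u_R$ on $L^2(S^{d-1})$, and $S(\lambda)$ commutes with $u_R$, cf.\ \eqref{eq:Specaa}. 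Testing on rotates of a zonal harmonic one checks that $W^\pm(\lambda)$ maps the space $\mathcal H_l$ of spherical harmonics of order $l$ into itself by $Y\mapsto\psi^\pm_l(r)Y(\hat x)$ for a single scalar radial function $\psi_l^\pm$, so $S(\lambda)$ leaves each $\mathcal H_l$ invariant. As $\mathcal H_l$ is an irreducible $O(d)$--representation (for $d=2$, $l\ge1$ one uses the invariance of $H$ under a reflection in addition to $SO(2)$--invariance), Schur's lemma and the unitarity from Theorem~\ref{thm:S(0)1} give $S(\lambda)|_{\mathcal H_l}=\e^{\i2\sigma_l(\lambda)}$ with $\sigma_l(\lambda)\in\R$. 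Strong continuity of $\lambda\mapsto S(\lambda)$ on $[0,\infty[$ (Theorem~\ref{thm:S(0)1}) makes $\e^{\i2\sigma_l(\cdot)}$ continuous, and since $[0,\infty[$ is simply connected a continuous real branch $\sigma_l(\cdot)$ can be chosen.

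Next I would fix $l$ and $Y\in\mathcal H_l\setminus\{0\}$ and put $u=W^-(\lambda)Y\in\mathcal V^{-\infty}(\lambda)$, so $u(x)=\psi(r)Y(\hat x)$ by the above. Writing $\psi(r)=r^{-(d-1)/2}w(r)$, the equation $(H-\lambda)u=0$ in the sense of distributions on $\R^d$ becomes $-w''+V_lw=0$ with $V_l$ as in the statement. Because $u\in L^2_{\mathrm{loc}}(\R^d)$ and the equation holds across the origin, $w$ is forced to be a multiple of the \emph{regular} solution $u_l$: of the two indicial solutions at $r=0$, the ``singular'' one either fails to lie in $L^2_{\mathrm{loc}}(r^{d-1}\,\d r)$ or, in the remaining low values of $l+d/2-1$ (namely $(d,l)=(3,0)$, and $(2,0)$ where the indicial roots coincide and a logarithm appears), contributes a nonzero distribution supported at $\{0\}$ to $(H-\lambda)u$, and is therefore excluded. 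That $u_l$ exists, is real, unique up to a positive scalar, and satisfies $u_l(r)\asymp r^{l+(d-1)/2}$ follows from a Volterra construction at $r=0$ using $rV_1=O(r^{1-\mu})\in L^1_{\mathrm{loc}}$ (as $\mu<2$) together with the hypothesis $\int_0^\infty r|V_3(r)|\,\d r<\infty$. Hence $\psi(r)=\mu_l\,r^{-(d-1)/2}u_l(r)$ for some $\mu_l\in\C\setminus\{0\}$.

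Finally I would match the asymptotics. By Proposition~\ref{eiko}, cf.\ \eqref{eq:eik222}, the phase $\phi(x,\lambda)$ is radial and equals $\sqrt{2\lambda}\,R_0+\int_{R_0}^r\sqrt{2(\lambda-V(r'))}\,\d r'$, while $g^{1/2}(r,\lambda)=2^{1/4}(\lambda-V(r))^{1/4}$. Applying \eqref{eq:Psi^-2} with $\tau=Y$, and using $Y(-\hat x)=(-1)^lY(\hat x)$ and $S(\lambda)Y=\e^{\i2\sigma_l(\lambda)}Y$, the numerator there becomes $Y(\hat x)\big((-1)^l\e^{\i\pi(d-1)/4}\e^{-\i\phi}+\e^{-\i\pi(d-1)/4}\e^{\i2\sigma_l}\e^{\i\phi}\big)=2\,\e^{\i\theta_l(\lambda)}\,Y(\hat x)\cos\big(\phi-\tfrac{(d-1)\pi}{4}-\tfrac{l\pi}{2}+\sigma_l\big)$ for a real phase $\theta_l$. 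Comparing with $u(x)=\mu_l\,r^{-(d-1)/2}u_l(r)Y(\hat x)$ and using that $r^{-(d-1)/2}u_l$ is real forces $\mu_l=t_l\e^{\i\theta_l}$ with $t_l\in\R\setminus\{0\}$, and gives, modulo $B^*_{s_0,0}$,
\begin{equation*}
\frac{u_l(r)}{r^{(d-1)/2}}=\frac{2^{3/4}}{t_l\sqrt{2\pi}}\cdot
\frac{\sin\big(\phi(x,\lambda)-\tfrac{d-3+2l}{4}\pi+\sigma_l(\lambda)\big)}
{(\lambda-V(r))^{1/4}\,r^{(d-1)/2}},
\end{equation*}
where $\cos(\cdot)=\sin(\cdot+\tfrac\pi2)$ was used. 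If $t_l>0$ one sets $C(l,\lambda)=2^{3/4}/(t_l\sqrt{2\pi})>0$; if $t_l<0$ one replaces $\sigma_l$ by $\sigma_l+\pi$ (which changes neither $\e^{\i2\sigma_l}$ nor, since $t_l(\cdot)$ is continuous and nonvanishing, the continuity of $\sigma_l$) and sets $C(l,\lambda)=2^{3/4}/(|t_l|\sqrt{2\pi})>0$. Uniqueness of the pair $(C(l,\lambda),\sigma_l(\lambda)\bmod2\pi)$ is the elementary fact that a radial function with $B^*_{s_0}$--profile $C\sin(\theta(r)+\sigma)\cdot(\cdots)$, $\theta$ strictly increasing to $+\infty$, determines $C\ge0$ (here $>0$, as $u_l\not\equiv0$) and $\sigma\bmod2\pi$; this is the asserted asymptotics.

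The genuinely delicate point is the local analysis at $r=0$ in the second paragraph: showing that membership in $\mathcal V^{-\infty}(\lambda)$ (a distributional condition on all of $\R^d$) singles out the regular radial solution, and that this solution is well defined precisely under $\int_0^\infty r|V_3(r)|\,\d r<\infty$, uniformly down to $\lambda=0$ and including the logarithmic borderline case $l+d/2-1=0$. Everything else is the rotation--covariance bookkeeping of the first paragraph and the trigonometric rearrangement of the third, both routine given Theorems~\ref{thm:4}, \ref{thm:S(0)1}, \ref{thm-rep} and \ref{thm-rep2}.
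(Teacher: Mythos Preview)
Your proof is correct and follows essentially the same route as the paper: both use rotation invariance to reduce $S(\lambda)$ to scalars on spherical harmonic sectors, identify the generalized eigenfunction $W^-(\lambda)Y$ with a multiple of $r^{-(d-1)/2}u_l(r)Y(\hat x)$, and then read off the asymptotics from Corollary~\ref{cor:geomdef}/Theorem~\ref{thm-rep2}\,(\ref{it:iv}) via the same trigonometric rearrangement. You supply considerably more detail than the paper's terse proof, in particular the Schur's lemma step and the local analysis at $r=0$ (including the borderline cases $(d,l)=(3,0)$ and $(2,0)$) justifying that the radial part must be the regular solution---points the paper leaves implicit.
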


\proof
Let $Y$ be a spherical harmonic of order $l$. Note that its parity 
 is  $(-1)^l$,  i.e. $Y(-\omega)=(-1)^lY(\omega)$.
Besides,
$u:=r^{-{\tfrac{d-1}{2}}}u_l(r)Y    (\hat x)$
solves $(H-\lambda)u=0$. We apply  Corollary \ref{cor:geomdef}
with this $u$ and with $\tau^-=Y$, so that $\tau^+=\e^{\i
  2\sigma_l(\lambda)}Y$. Then
\begin{eqnarray*}
&&\e^{\i
  \pi\tfrac{d-1}{4}}\e^{-\i\phi(x,\lambda)}\tau^-(-\hat
  x)+\e^{-\i \pi\tfrac{d-1}{4}}\e^{\i\phi(x,\lambda)} \tau^+(\hat
  x)\\
&=&\left(
\e^{\i
  \pi\tfrac{d-1}{4}-\i\phi(x,\lambda)+\i\pi l}
+\e^{-\i \pi\tfrac{d-1}{4}+\i\phi(x,\lambda)+\i2\sigma_l(\lambda)}\right)
Y(\hat x)
\\
&=&2\e^{\i\pi\frac{
  l}{2}+\i\sigma_l(\lambda)}\sin\left(\phi(x,\lambda)-\tfrac{d-3+2l}{4}\pi
+\sigma_l(\lambda)  \right)Y(\hat x).\end{eqnarray*}
We finish the proof 
using  \eqref{eq:eik222}.
\qed

\section{Homogeneous potentials -- location of singularities of  $S(0)$}
\label{Propagation of singularities at zero energy}

In this section 
we impose Conditions
\ref{assump:conditions1}--\ref{assump:conditions3} with  $d\geq 2$  and the condition $V_1(r)= -\gamma r^{-\mu}$
for $r\geq 1$ and hence $V(r)=-\gamma
r^{-\mu}+O(r^{-\mu-\epsilon_2})$, cf. \eqref{eq:1v}. Throughout the
section $g=g(\lambda=0)=\sqrt{-2V_1}$.

 Our goal is to prove a statement  about the
localization of the singularities of the (Schwartz) kernel
$S(0)(\omega,\omega')$. The purely Coulombic case for which $\mu=1$
and $d\geq 3$ was treated explicitly in
Example \ref{example:coulombsing}. Under an additional
 condition we can write
down a fairly  explicit integral that carries the singularities. 

In
Section \ref{Singularity of the kernel of the scattering matrix}  we
shall study the nature of these singularities (under the condition of
spherical  symmetry) using one-dimensional WKB-analysis.

\subsection{Reduced classical equations}
\label{Reduced equations}

Consider the classical system given by the Hamiltonian
 $h_1(x,\xi)=\frac12\xi^2-\gamma|x|^{-\mu}$ for $x\neq0$.
The equations of motion for $h_1(x,\xi)$ are invariant with respect to the transformation
\begin{equation}
\label{scal0}(x,\xi)\mapsto(\lambda x,\lambda^{-\mu/2} \xi),\ \ \ \lambda\in\R_+
,\end{equation}
 upon rescaling of time $t\mapsto t\lambda^{1+\mu/2}$. 

Let \[\T^*:=(\R^d\backslash\{0\})\times\R^d/\sim,\]
where $(x_1,\xi_1)\sim(x_2,\xi_2)$ iff there exists $\lambda>0$ such that 
$(x_1,\xi_1)\mapsto(\lambda x_2,\lambda^{-\mu/2} \xi_2)$.
Note that $\T^*$ can be conveniently identified with
$T^*(S^{d-1})\times\R$. 
We shall introduce
coordinates of $\T^*$  by setting $b=\hat x\cdot
  {\xi\over g}\in\R$ and $\bar c=(I-|\hat x \rangle \langle \hat x|){\xi\over
    g} \in T_{\hat x}^*(S^{d-1})$ with $\hat x\in S^{d-1}$. (At this point  we are slightly abusing the notation of  Subsection \ref{Some other  estimates}, however as noticed there the  $b$ and $\bar c$ given by \eqref{eq:sym_basdef}  agree with the above definition for $r\geq 1$.)
The equations of motion for the hamiltonian $h_1$ can
  be reduced to
$\T^*$. Introducing   the
  ``new time'' $\tau$ by
  ${\d\tau\over
 \d t}=g/r$ we have the following  system of reduced  equations of motion:
\begin{equation}\label{eq:new reduced eqns}
\begin{cases}
{\d\over \d\tau}\hat x=\bar c,\\
{\d\over \d\tau}\bar c=-(1-{\mu\over 2})b\bar c-\bar c^2\hat x,\\
{\d\over \d\tau}b=(1-{\mu\over 2})\bar c^2 +{\mu\over
  2}(b^2+\bar c^2-1).
\end{cases}\;
\end{equation}  (Notice that the
  last equation follows from (\ref{eq:eqmotionb})).
The maximal solution of \eqref{eq:new reduced eqns} that passes 
 $z=(\hat x,b,\bar c)\in {\mathbb T}^*$ at $\tau=0$ is denoted by $\gamma(\tau,
z)$.

Beside \eqref{eq:new reduced eqns}, we shall   consider a related dynamics given
by the equations
\begin{equation}\label{eq:reduced eqns}
\begin{cases}
{\d\over \d\tau}\hat x=\bar c,\\
{\d\over \d\tau}\bar c=-(1-{\mu\over 2})b\bar c-\bar c^2\hat x,\\
{\d\over \d\tau}b=(1-{\mu\over 2})\bar c^2.
\end{cases}\;
\end{equation} 

The (maximal) solution of the system
\eqref{eq:reduced eqns} that 
passes $z=(\hat x,b,\bar c)\in\T^*$
 at $\tau=0$ will be denoted by $\gamma_0(\tau,
z)$. Clearly the equation $\bar c=0$ defines the
fixed points, and the system is complete.

Notice that the surface $h_1^{-1}(0)$ in the coordinates $(\hat x,b,\bar c)$
corresponds to the condition $b^2+\bar c^2=1$. This surface is preserved both
by the flow $\gamma$ and $\gamma_0$, and on this surface both flows coincide.

Note that the flow $\gamma_0$ is exactly solvable. 
The variable $b$ is always 
increasing and $k=b^2+\bar c^2$ is a conserved quantity; of
course the relevant value is $k=1$. 
For non-fixed points we can compute its dependence on the modified time
\begin{equation}\label{eq:8bbbbb}
b(\tau)=\sqrt{k}\tanh \sqrt{k}(1-\tfrac\mu2)(\tau-\tau_0).
\end{equation}

 Values $k\neq 1$ correspond
in this picture to replacing the coupling constant $\gamma\to
k\gamma$. More precisely, if  $k=b^2+\bar c^2$ for a solution to
\eqref{eq:reduced eqns}, we can
define $r(\tau)= r_0\exp (\int_0^\tau b d\tau')$, introduce
$t=\int _0^\tau \tfrac{r}{g(r)}d\tau'$ and check  that indeed 
\begin{equation}\label{eq:HAloesn}
\begin{cases}x(t)=r\hat x,\\
\xi(t)=g(r)(b\hat x+\bar c),
\end{cases}\;
\end{equation} defines a zero energy solution to Hamilton's equations
with $V\to kV$. The equation $b=0$ corresponds to  a 
turning point (at which $|x(t)|$ has the smallest value).

Clearly, it follows from (\ref{eq:8bbbbb}) that 
$\lim_{\tau\to\infty}b=\sqrt k$, $\lim_{\tau\to-\infty}b=-\sqrt k$. Upon 
writing $\hat x(\tau)\cdot\hat x(\infty)=\cos \theta(\tau)$ for some
monotone continuous  function $\theta(\cdot)$, we obtain from
\eqref{eq:pol-eqn}  that
\begin{equation}
  \label{eq:vinkel}
  |\theta(\infty)-\theta(-\infty)|=\tfrac{2}{2-\mu}\pi.
\end{equation}

\subsection{Propagation of singularities}
\label{Quantum bounds}

We will use the scattering wave front set at zero energy, introduced in
Subsection  \ref{Some other  estimates}. 
The following proposition is somewhat similar to  H\"ormander's theorem about
  propagation of singularities adapted to scattering at the zero
  energy. It is a ``local'' version of  Proposition \ref{prop:propa5aa}  which takes into account the
  fact that in the case of a homogeneous potential we can use the dynamics in
  the reduced phase space. Again the proof is a modification of that of
\cite[Proposition 3.5.1]{Ho3}, see also \cite{Me} and \cite{HMV}.

\begin{prop}  \label{prop:propa1}
 Suppose $u,v\in L^{2,-\infty}$, $Hu=v$,
  $s\in \R$, $z\in{\mathbb T}^*$ and $z\not\in
  WF^s_{\sc}(u)$. Define
  \begin{align*}\tau^+&:=\sup \{\tau \geq 0  |\, \ {\gamma_0
(\tilde\tau,z)}\notin WF_{\sc}^s(u) \text { for all }\tilde \tau \in [0,\tau]\},
    \\ \tau^-&:=\inf  \{ \tau \leq 0|\, \ {\gamma_0
(\tilde\tau,z)}\notin WF_{\sc}^s(u) \text { for all }\tilde \tau \in [\tau,0]\}.
  \end{align*}
If $\tau^+<\infty$, then $\gamma_0(\tau^+,z)\in
  WF^{s+2s_0}_{\sc}(v)$. If $\tau^->-\infty$, then $\gamma_0(\tau^-,z)\in
  WF^{s+2s_0}_{\sc}(v)$.
\end{prop}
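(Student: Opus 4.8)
This is a positive-commutator (Mourre-type) argument of exactly the same flavour as Proposition \ref{prop:propa5aa}, but now localized microlocally in $\T^*$ along a trajectory of the exactly solvable flow $\gamma_0$ rather than just in the $b$-variable. The key point is that on the energy surface $\{b^2+\bar c^2=1\}$ the two flows $\gamma$ and $\gamma_0$ coincide, so the ``propagator'' whose Hamilton field we want to use is the one generated by $h_1$ (equivalently $h_2$, up to a controllable $O(r^{-\epsilon_2})$ error, cf. \eqref{eq:eqmotionb}), and its reduced form is precisely \eqref{eq:new reduced eqns}; the difference between \eqref{eq:new reduced eqns} and \eqref{eq:reduced eqns} is the term ${\mu\over 2}(b^2+\bar c^2-1)$, which vanishes on the energy surface and which, away from it, is handled exactly as the term $g^{-2}2(h_2-\lambda)$ was absorbed in \eqref{eq:eqmotionbiiii} using the equation $Hu=v$. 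As in Proposition \ref{prop:propa5aa} I would treat only $\tau^+$; the case $\tau^-$ is symmetric, and I would argue by contradiction: assume $\tau^+<\infty$ and $\gamma_0(\tau^+,z)\notin WF^{s+2s_0}_{\sc}(v)$, and derive $\gamma_0(\tau^+,z)\notin WF^s_{\sc}(u)$, contradicting maximality of $\tau^+$.

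First I would reduce, exactly as in Step I / Step II of the proof of Proposition \ref{prop:propa5aa}, to the case $u\in L^{2,s-\epsilon_2/2}$, $v\in L^{2,s+2s_0}$ (the general case following by the same finite induction on $s_m=\min(s,t+m\epsilon_2/2)$, using the micro-local cutoffs $I_\epsilon=\Opw(\cdots F(b^2+\bar c^2<6)\cdots)$ of \eqref{eq:abe2}--\eqref{eq:abe3} to bootstrap the decay of $u$ and $v$ near the relevant piece of the trajectory). Then I would build the commutant: pick a small neighbourhood $\mathcal{N}$ in $\T^*$ of the compact trajectory arc $\{\gamma_0(\tilde\tau,z):\tilde\tau\in[0,\tau^+]\}$ on which $b^2+\bar c^2<3$, and choose a symbol $a_\kappa = X^sX_\kappa^{-\epsilon_2/2}F(r>R)\,\chi(z(x,\xi))\,\exp(-K\Phi)$, where $\chi$ is supported in $\mathcal N$, $\Phi$ is an ``escape function'' that is strictly increasing along the $\gamma_0$-flow inside $\mathcal N$ (so $\{h_1,\Phi\}\ge c>0$ there), and $X_\kappa$ is the regularizer \eqref{eq:aai}. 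The factor $\exp(-K\Phi)$ with $K$ large is the standard device (as the factor $\exp(-Kb)$ in \eqref{eq:propobsi}) to dominate the derivative landing on the weights $X^{s}X_\kappa^{-\epsilon_2/2}F(r>R)$; here one uses that $\Phi$ is strictly monotone along the flow so $\{h_1,\Phi\}$ controls all the ``bad'' terms. The Poisson bracket $\{h_2,a_\kappa^2\}$ is then computed using \eqref{eq:eqmotionb}: the principal part is $-2a_\kappa^2$ plus (i) a term proportional to $g^{-2}(h_2-\lambda)a_\kappa\cdot O(r^s)$, absorbed by $Hu=v$ and the elliptic estimate \eqref{eq:redenerg}, (ii) a term supported where $\chi$ is differentiated, i.e. at the ``incoming end'' $\gamma_0(0,z)$ of the arc where by hypothesis $z\notin WF^s_{\sc}(u)$ (and along the arc, by definition of $\tau^+$), plus a term at $\gamma_0(\tau^+,z)$ which is controlled by the assumption $\gamma_0(\tau^+,z)\notin WF^{s+2s_0}_{\sc}(v)$ via the $v$-pairing in the commutator identity, and (iii) an $O(r^{2s-\epsilon_2})$ remainder.

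Then, exactly as in \eqref{eq:comm1i}--\eqref{eq:comm4i}, I would write $\langle \i[H,B_\kappa^2]\rangle_u = -2\Im\langle v,B_\kappa^2u\rangle$ with $B_\kappa=\Opw(X^{s_0}a_\kappa)$, estimate the right-hand side by $\tfrac12\|A_\kappa u\|^2 + C$ using $v\in L^{2,s+2s_0}$ and the calculus (\cite[Theorems 18.5.4, 18.6.3, 18.6.8]{Ho1}), estimate the left-hand side from below using the positivity of $-\{h_2,a_\kappa^2\}$ modulo the three error types above (the first absorbed by \eqref{eq:redenerg}, the $\chi'$-terms absorbed because the trajectory arc avoids $WF^s_{\sc}(u)$, the $V_3$-contribution absorbed by \eqref{eq:T1}), obtain a uniform bound $\|X_\kappa^{-\epsilon_2/2}A_\kappa u\|\le C$, and let $\kappa\to0$ to conclude $\Opw(\chi_{z_1}F(r>2))u\in L^{2,s}$ for $z_1=\gamma_0(\tau^+,z)$; this is the desired contradiction. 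The main obstacle is the careful construction of the escape function $\Phi$ on $\mathcal N\subset\T^*$ making $\{h_1,\Phi\}$ strictly positive along the (possibly long) trajectory arc while keeping all the remainders in the symbol class $S(\cdot,g_{\mu,\lambda})$ uniformly in $\kappa$ — this is where one genuinely uses the explicit integrability of \eqref{eq:reduced eqns}, e.g. via \eqref{eq:8bbbbb}, to produce $\Phi$ as (a smoothed version of) the time parameter $\tau$ itself, and where the bookkeeping of how $\mathcal N$ must be shrunk so that $\gamma$ and $\gamma_0$ stay uniformly close (using $b^2+\bar c^2-1=g^{-2}2(h-0)$ is small on states satisfying $Hu=v$ with $v$ decaying, cf. Proposition \ref{prop:enerest}) has to be done with some care.
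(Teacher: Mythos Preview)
Your proposal is correct and follows essentially the same positive-commutator strategy as the paper, including the two-step structure (Step I under the extra hypothesis $u\in L^{2,s-\epsilon_2/2}$, $v\in L^{2,s+2s_0}$; Step II by the finite induction on $s_m$), the regularizer $X_\kappa^{-\epsilon_2/2}$, the weight $X^{s_0}$, and the use of the flow time as escape function. Two clarifications on implementation: the paper works \emph{locally} in a flow box near $\gamma(\tau^+)$ rather than over the whole arc $[0,\tau^+]$ --- one picks $\tilde\tau^+<\tau^+$ close to $\tau^+$, a transversal $(2d-2)$-manifold $\mathcal M$ at $\gamma(\tilde\tau^+)$, and uses the diffeomorphism $\Psi(\tau,m)=\gamma(\tau-\tilde\tau^+,m)$ so that the symbol is $(f_K\otimes\chi)\circ\Psi^{-1}$ with $f_K(\tau)=\e^{-K\tau}f(\tau)$ --- and the ``outgoing'' end near $\gamma(\tau^+)$ is controlled not by the $v$-pairing but by the \emph{sign}: one takes $f\le 0$ with $f'\ge 0$ on $[\tilde\tau^+,\tau^++\epsilon)$ and $f(\tau^+)<0$, so that $2f_K f_K'=2f\e^{-2K\tau}(f'-Kf)\le 0$ there, while the region $\{f'<0\}$ is pushed into $\Psi(]\tilde\tau^+-\epsilon,\tilde\tau^+]\times\supp\chi)$, which by choice avoids $WF^s_\sc(u)$.
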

\begin{proof} The
  proof is  similar to the one of Proposition
  \ref{prop:propa5aa}. We shall only deal with the case of  forward
  flow; the
  case of superscript "$-$" is similar (actually  it follows from the
  case of "$+$" by time reversal invariance). For convenience, we shall
  assume that $\epsilon_2 \leq 2-\mu$. 

\noindent{\bf Step I}. We will first show the  following weaker
statement: 
Suppose $u\in L^{2,s-\tfrac{\epsilon_2}{2}}$, $v\in L^{2,s+2s_0}$ and  $Hu=v$. 
  Then
  \begin{equation}\label{eq:nowconjrc}
   {\gamma_0
(\tau,z)}\notin WF_{\sc}^s(u)\text{   for all }\tau\geq 0. 
  \end{equation}

  Suppose on the contrary that (\ref{eq:nowconjrc}) is false. Then we
  obtain from  Proposition \ref{prop:enerest} that the flows of (\ref{eq:new reduced eqns}) and (\ref{eq:reduced
  eqns}), starting at $z$, coincide.
 Letting  $\gamma(\tau)=\gamma(\tau,z)$, it thus needs to be shown that
\begin{equation}
  \label{eq:maxtau}
\tau^+ := \sup \{\tau \geq 0 | {\gamma
(\tilde\tau)}\notin WF_{\sc}^s(u) \text { for all }\tilde \tau \in [0,\tau]\}= \infty.  
\end{equation}

Suppose on the contrary that $\tau^+$ is finite. Then $\gamma
(\tau^+)$ is not a fixed point. Consequently, we can  pick a slightly smaller
$\tilde\tau^{+} <\tau^+$ and  a transversal $(2d-2)$--dimensional submanifold at $\gamma(\tilde\tau^{+})$, say
$\mathcal{M}$, such  that with $J=]-\epsilon+\tilde\tau^{+} ,
  \tau^++\epsilon[$,   
for some small $\epsilon>0$,
the map
\begin{equation*}
 J\times \mathcal{M} \ni
 (\tau,m) \to  \Psi (\tau,m )= \gamma(\tau-\tilde\tau^{+}, m) \in {\mathbb
T}^*
\end{equation*}
is a diffeomorphism onto its range.

We pick $\chi \in C^{\infty}_c(\mathcal{M})$ supported in a small
neighbourhood of 
$\gamma(\tilde\tau^{+})$  such that $\chi (\gamma(\tilde\tau^{+}))=1$
and 
\begin{equation}
\label{eq:nowavefront}
  \Psi(]-\epsilon+\tilde\tau^{+} , \tilde\tau^{+}] \times \supp \chi) \cap WF^s(u) =\emptyset.
\end{equation}
 We pick  a  non-positive function
$f \in
C^{\infty}_c(J)$ such that $f'\geq 0$
on a neighbourhood of $[ \tilde\tau^{+} , \tau^++\epsilon)$ and  $f( \tau^+)<  0$.

Let $f_K(\tau)=\exp (-K\tau)f(\tau)$ for $K>0$, and $X_{\kappa}=
(1+\kappa r^2)^{1/2}$ for $\kappa \in ]0,1]$. We consider the symbol 
\begin{equation} 
\label{eq:propobs}
  b_\kappa=g^{-1/2}X^{1/2}a_\kappa;\; a_\kappa=X^sX_{\kappa}^{-\epsilon_2/2}F(r>2)(f_K
  \otimes \chi)\circ \Psi^{-1}.
\end{equation} 
First we fix $K$. A part of the Poisson bracket with
 $b_\kappa^2$ is
\begin{equation}
\label{eq:Poi1}
  \{h_2,g^{-1}X^{2s+1}X_{\kappa}^{-\epsilon_2}\}= r^{-1}Y_\kappa bX^{2s+1}X_{\kappa}^{-\epsilon_2},
\end{equation}
where $Y_\kappa=Y_\kappa(r)$ is uniformly bounded in $\kappa$.
We fix $K$ such that $2K\geq |Y_\kappa b|+2$
on $\supp b_\kappa$.

We  compute
\begin{equation}
  \label{eq:Poi2}
 \{ h_1,(f_K \otimes \chi)\circ \Psi^{-1}\} =\tfrac{g}{r}\big (\big [\frac
 {\d} {\d\tau }f_K\big ] \otimes \chi\big )\circ \Psi^{-1}. 
\end{equation}

From \eqref{eq:Poi1} and \eqref{eq:Poi2}, and by the choice of $f$ and
$K$,  we conclude that  
\begin{equation}
\label{eq:Poi3}
  \{h_2,b_\kappa^2\}\leq -2a_\kappa^2 +O\big (r^{2s-\epsilon_2}\big
  )\text{ at   }
 \mathcal P \subseteq \T^*
\end{equation}
given by  $$\mathcal P =\Psi(\{\tau \in J| f'(\tau)\geq0\}\times \supp \chi).$$

Introducing  $A_\kappa={\Opw}(a
_\kappa)$ and $B_\kappa={\Opw}(b
_\kappa)$, we have 
\begin{equation}
  \label{eq:comm1}
 \langle i[H, B_\kappa^2]\rangle _u=-2\Im \langle v,B_\kappa^2 u\rangle, 
\end{equation} 
and we estimate the right hand side using the calculus of
pseudodifferential operators, cf.   \cite[Theorems 18.5.4, 18.6.3,
18.6.8]{Ho1},   to
obtain the uniform bound 
\begin{equation}
  \label{eq:comm2}
 |\langle i[H, B_\kappa^2]\rangle _u|\leq C_1\|v\|_{s+2s_0}\|A_\kappa
 u\|+C_2
\leq \|A_\kappa
 u\|^2 +C_3.
\end{equation}

 On the other hand, using \eqref{eq:nowavefront} and 
 \eqref{eq:Poi3},  we infer that 
\begin{equation}
  \label{eq:comm3}
 \langle i[H-V_3, B_\kappa^2]\rangle _u \leq -2\|A_\kappa u\|^2+C_4.
\end{equation} 

An  application of  \eqref{eq:T1} yields
\begin{equation}
  \label{eq:hgyi2}
 \langle i[V_3, B_\kappa^2]\rangle _u\leq  C_5.
\end{equation}

Combining \eqref{eq:comm2} -- \eqref{eq:hgyi2} yields 
\begin{equation*}
 \|A_\kappa u\|^2\leq C_6=C_3+C_4+C_5,
\end{equation*} 
which  in
turn gives a uniform bound 
\begin{equation}
  \label{eq:comm4}
 \|X_{\kappa}^{-\epsilon_2/2}\Opw\big(\chi_{\gamma (\tau^+)}F(r>2)\big)u\|^2_s\leq C_7.
\end{equation}
Here $\chi_{\gamma (\tau^+)}$ signifies a  phase-space localization
factor
of the form entering in \eqref{eq:WF^sa} supported in a sufficiently
small neighbourhood of  the point $\gamma
(\tau^+)$.

We let $\kappa\to 0$ in \eqref{eq:comm4} and infer that $\tau^+\notin
WF_{\sc}^s(u)$, which is a contradiction. We have proved
(\ref{eq:maxtau}) and hence (\ref{eq:nowconjrc}).

\noindent{\bf Step II}.
To relax the assumptions on  $u$ and  $v$  used in
Step I, 
we modify  the
  above proof (using localization) in an iterative procedure very
  similar to Step II of the
  proof of Proposition \ref{prop:propa5aa}.

Pick $t<s$ such that $u\in
L^{2,t}$ and define $s_m=\min(s,t+m\epsilon_2/2)$ for $m\in \N$. Let
correspondingly $\tau^+_m$ be given as $\tau^+$, upon replacing $s\to s_m$. Clearly, 
\begin{equation}
  \label{eq:eq4i7}
\tau^+_m \leq \tau^+_{m-1};\;m=2,3,\dots 
\end{equation}
We shall  show that 
 \begin{equation}
  \label{eq:mimpli7} \tau^+_m<\infty  \Rightarrow
  \gamma_0(\tau^+_m,z)\in 
  WF^{s_m+2s_0}_{\sc}(v).
\end{equation} We are done by using \eqref{eq:mimpli7} for an  $m$  taken so large that $s_m=s$.

Let us consider  the start of induction given by $m=1$, in which  case
obviously  $u\in L^{2,s_m-\epsilon_2/2}$. Suppose on the contrary that
(\ref{eq:mimpli7}) is false. Then we consider the following case:
\begin{equation}
  \label{eq:abe7}
 \tau^+_m<\infty  \text{ and } \gamma_0(\tau^+_m,z)\notin 
  WF^{s_m+2s_0}_{\sc}(v).
\end{equation}

It  follows from  \eqref{eq:abe7} and an ellipticity argument that
$b^2+\bar c^2=1$ at $\gamma_0(\tau^+_m,z)$ (using that $ \gamma_0(\tau^+_m,z)\notin
WF^{s_m+\mu}_{\sc}(Hu)$). Consequently  we can henceforth  use the flow of (\ref{eq:new reduced
  eqns}),
$\gamma(\tau)=\gamma(\tau,\cdot)$, exactly as in  Step I.
 
 We let $\epsilon>0$,  $J$,  $f$, $f_K$, $\chi$ and $\Psi$  be chosen as in Step I with
 $\tau^+\to \tau^+_m$ and $\tilde\tau^+\to \tilde\tau^+_m$.   Let
 $\tilde f \in C^{\infty}_c(]\tilde\tau^+_m-2\epsilon,\tau^+_m+2\epsilon[)$
 with $\tilde f=1$ on  $J$.  Similarly, let $\tilde\chi \in
 C^{\infty}_c(\mathcal{M})$ be supported in a small
neighbourhood of 
$\gamma(\tilde\tau^{+}_m)$  such that $\tilde\chi
(\gamma(\tilde\tau^{+}_m))=1$ in  a 
neighbourhood of $\supp \chi$.

It follows from    \eqref{eq:abe7}, possibly by shrinking the supports
of $\tilde f$ and $\tilde\chi$,  that
 \begin{equation}
  \label{eq:abe27}
 I_\epsilon v\in L^{2,s_m+2s_0},\;I_\epsilon  =\Opw \big
 (F(r>2)(\tilde f_K
  \otimes \tilde\chi)\circ \Psi^{-1}\big ).
\end{equation} Next, we introduce the symbol $b_\kappa$ by \eqref{eq:propobs}
(with $s\to s_m$) and proceed as in Step I. As for
the bounds (\ref{eq:comm2}),  we can replace $v$ by $I_\epsilon v$ up
to addition of a  term that is bounded uniformly in $\kappa$.
Clearly,  we can verify
(\ref{eq:comm3}) and (\ref{eq:hgyi2}). So again we obtain \eqref{eq:comm4} (with $s\to
s_m$), and therefore a contradiction as in Step I. We have shown \eqref{eq:mimpli7} for $m=1$.

Now suppose $m\geq 2$ and that \eqref{eq:mimpli7} is verified for
$m-1$. We need to show the statement for the given $m$. Due to
\eqref{eq:eq4i7} and the induction hypothesis, we can assume that 
\begin{equation} \label{eq:abejj7} \tau^+_m <\tau^+_{m-1}.
\end{equation}  Again we argue by contradiction assuming
\eqref{eq:abe7}. We proceed as above noticing that  it follows from \eqref{eq:abejj7} that in addition to \eqref{eq:abe27} we have
\begin{equation}
  \label{eq:abe37}
 I_\epsilon u\in L^{2,s_{m-1}};
\end{equation} at this point we   possibly  need to  shrink the supports
of $\tilde f$ and $\tilde\chi$ 
even more  (viz. taking $\epsilon<(\tau^+_{m-1}-\tau^+_m)/2$). By replacing $v$ by $I_\epsilon v$ and $u$ by $I_\epsilon u$  at various points in the procedure of Step I (using \eqref{eq:abe27} and \eqref{eq:abe37}, respectively) we obtain again a contradiction. Whence \eqref{eq:mimpli7} follows.
\end{proof}

\begin{remark}\label{remark:unifprop}  Suppose $u\in
L^{2,t_1}$,  $v\in
L^{2,t_2}$ and  $Hu=v$. Suppose $z_0\not\in
  WF^s_{\sc}(u)$  for  some  $s> t_1$. 
  Fix $\tilde\tau^+\in ]0,\infty[$ and suppose that  $\gamma_0(\tau,z_0)\not\in
  WF^{s+2s_0}_{\sc}(v)$ for all $\tau\in [0,\tilde\tau^+]$. Write
  $\gamma_0(\tilde\tau^+,z_0)=(\omega_1, \bar c_1, b_1)=(\omega_1,\eta_1)$. Then there exist neighbourhoods $\mathcal{N}_{\omega_1}\ni
   {\omega_1}$ and $\mathcal{N}_{\eta_1}\ni {\eta_1}$ such that for all $\chi_{\omega_1}\in
   C^{\infty}_c(\mathcal{N}_{\omega_1})$ and $\chi_{\eta_1}\in
   C^{\infty}_c(\mathcal{N}_{\eta_1})$ we have
   $\Opw\big(\chi_{z_1}F(r>2)\big)u\in L^{2,s}$. Here
   $\chi_{z_1}(x,\xi)=\chi_{\omega_1}({\hat x})\chi_{\eta_1}(\xi/g).$ Notice
   that this conclusion is already contained in Proposition
   \ref{prop:propa1}; however  the  above proof
  yields  an additional  bound: 

First, writing $z_0=(\omega_0, \eta_0)$, we can pick any
similarly defined localization factor, say denoted by $\chi_{z_0}$,
with  $\chi_{\omega_0}=1$ and $\chi_{\eta_0}=1$ around the points
$\omega_0$ and $\eta_0$, respectively, and such that 
$\Opw\big(\chi_{z_0}F(r>2)\big)u\in L^{2,s}$ (this is by
assumption). Next we pick a small neighbourhood
  $U$ of
  $\gamma_0([0,\tilde\tau^+],z_0)\subset \T^*$ and $\chi\in C^\infty_c
  (U)$ with $\chi=1$ around this orbit segment. If $U$ is small enough
  we have (again by
assumption) that $\Opw\big(\chi_{\gamma_0} F(r>2)\big)v\in L^{s+2s_0}$,
  $\chi_{\gamma_0}(x,\xi):=\chi(\hat x, \xi/g)$. Now, there are   neighbourhoods $\mathcal{N}_{\omega_1}\ni
   {\omega_1}$ and $\mathcal{N}_{\eta_1}\ni {\eta_1}$ depending only
   on $\chi_{z_0}$ and $\chi _{\gamma_0}$ such that for all $\chi_{\omega_1}\in
   C^{\infty}_c(\mathcal{N}_{\omega_1})$ and $\chi_{\eta_1}\in
   C^{\infty}_c(\mathcal{N}_{\eta_1})$ we have
\begin{multline*}
  \|\Opw\big(\chi_{z_1}F(r>2)\big)u\|_s\\\leq C\big
  (\|\Opw\big(\chi_{z_0}F(r>2)\big)u\|_s+\|u\|_{t_1}+\|\Opw \big(\chi _{\gamma_0}F(r>2)\big)v\|_{s+2s_0}+\|v\|_{t_2}\big ),
\end{multline*}
where the constant $C$ only depends on the various localization factors.
  \end{remark}

\subsection{Location of singularities of the kernel of the scattering matrix}
\label{Quantum singularities}

In this subsection we describe the location of the singularities of the scattering matrix at zero
energy. 

\begin{thm}  \label{thm:sings} 
Suppose that  $V_1(r)= -\gamma r^{-\mu}$ 
 for $r\geq 1$. Then  
 the kernel $S(0)(\omega,\omega')$ is smooth
  outside the set $\{(\omega,\omega')|\;\omega\cdot \omega'= \cos
  {\mu\over 2-\mu}\pi\}$.
\end{thm}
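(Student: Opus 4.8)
\textbf{Proof plan for Theorem \ref{thm:sings}.}

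The plan is to combine the representation $S(0)=-2\pi W^+(0)^*T^-(0)$ from \eqref{eq:Sgod2} with the propagation-of-singularities result Proposition \ref{prop:propa1} applied to generalized eigenfunctions, so that the location of singularities of the kernel $S(0)(\omega,\omega')$ is read off from the reduced classical flow $\gamma_0$ on $\T^*$. First I would fix $\omega'\in S^{d-1}$ and study $u:=w^-(\omega',0)=W^-(0)\delta_{\omega'}$, the zero-energy generalized eigenfunction with incoming asymptotic normalized velocity $\omega'$; by Theorem \ref{thm:4a} it lies in $L^{2,-p}$ for suitable $p$, and $Hu=0$. The key input is to identify $WF^s_{\sc}(u)$ precisely in terms of $\omega'$: near $b=-1$ the eigenfunction ``starts'' (in the sense of the backward flow) at the incoming point $z_-(\omega'):=(-\omega',0,-1)\in\T^*$, which is the unique point over $b=-1$ in the support of the incoming modifier $J^-(0)$; this is where $\delta_{\omega'}$ enters, and smoothness of $\delta_{\omega'}$ away from $-\omega'$ combined with Lemma \ref{lem:22a} and Proposition \ref{prop:proposi1} should give that, microlocally away from the forward $\gamma_0$-orbit of $z_-(\omega')$, the wave front set of $u$ over the region $a\leq 1$ is empty. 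Symmetrically, studying $w^+(\omega,0)$ and using \eqref{eq:Specaa} in the form $S(0)(\omega,\omega')=\langle w^+(\omega,0), (\text{density})\rangle$, the kernel is smooth at $(\omega,\omega')$ unless the forward orbit of the incoming point $z_-(\omega')$ reaches, at $b=+1$, the outgoing point $z_+(\omega):=(\omega,0,1)$ — equivalently, the two orbit endpoints must be joined by a single reduced trajectory.

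The second main step is the classical computation: a $\gamma_0$-orbit (equivalently, by Subsection \ref{Reduced equations}, a zero-energy orbit of $h_1$, possibly after rescaling the coupling since $k=b^2+\bar c^2$ is conserved, but on $h_1^{-1}(0)$ we have $k=1$) starting over $b=-1$ with $\hat x\to -\omega'$ as $\tau\to-\infty$ and ending over $b=+1$ with $\hat x\to\omega$ as $\tau\to+\infty$ exists precisely when the total turn of $\hat x$ equals the deflection fixed by \eqref{eq:vinkel}, i.e. $|\theta(\infty)-\theta(-\infty)|=\frac{2\pi}{2-\mu}$. Writing $\hat x(-\infty)=-\omega'$ and $\hat x(\infty)=\omega$, the angle between $\omega$ and $\omega'$ along the orbit plane is $\pi-\frac{2\pi}{2-\mu}=-\frac{\mu\pi}{2-\mu}$ (mod $2\pi$), so $\omega\cdot\omega'=\cos\frac{\mu\pi}{2-\mu}$; this matches the implicit orbit equation \eqref{eq:pol-eqn}. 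Thus the ``bad'' set of $(\omega,\omega')$ for which a connecting orbit exists is exactly $\{\omega\cdot\omega'=\cos\frac{\mu\pi}{2-\mu}\}$, and everywhere else $S(0)(\omega,\omega')$ is smooth.

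To make the microlocal argument rigorous I would proceed as in the proof of Theorem \ref{thm-rep2}: insert the partition \eqref{eq:partitionleft}, write $S(0)\tau=-2\pi W^+(0)^*T^-(0)\tau$, and use that $T^-(0)$ is concentrated (modulo $O(\langle x\rangle^{-\infty})$, by Proposition \ref{prop:brrrr}) near $b=-1$, $a=1$, with its kernel carrying the incoming $\delta$-type singularity in $\omega'$; then Proposition \ref{prop:propa1} (in the sharp localized form of Remark \ref{remark:unifprop}) propagates the regularity of $R(+\i0)T^-(0)\tau$ along the $\gamma_0$-flow, and finally Theorem \ref{thm:kkk} (the stationary phase asymptotics of $W^+(0)$) extracts the $\omega$-behaviour at $b=+1$. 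The upshot is: a point $(\omega,\omega')$ outside $\{\omega\cdot\omega'=\cos\frac{\mu\pi}{2-\mu}\}$ admits no $\gamma_0$-orbit from $z_-(\omega')$ to $z_+(\omega)$, so $\tau^+=\infty$ along every relevant orbit segment, $WF^s_{\sc}$ never reaches the outgoing region over $\omega$, and testing against $\delta_\omega$ and all its derivatives shows $S(0)(\cdot,\omega')$ is $C^\infty$ near $\omega$; joint smoothness in $(\omega,\omega')$ follows from the uniformity in Remark \ref{remark:unifprop}. The main obstacle I anticipate is the bookkeeping at the two ``ends'' $b=\pm1$: one must check that the only incoming datum contributing to $WF_{\sc}(w^-(\omega',0))$ over $b=-1$ is genuinely concentrated at $\hat x=-\omega'$ (i.e. that the modifier construction does not smear it), and dually that pairing $w^+(\omega,0)$ with the relevant density localizes to $\hat x=\omega$ over $b=+1$; this requires carefully tracking the supports of the symbols $a^\pm$, the cutoffs $\chi_1,\chi_2$ and $\tilde\chi_\pm$, and invoking the non-stationary estimates of Lemma \ref{lem:22a} to discard all off-orbit contributions.
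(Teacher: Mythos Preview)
Your proposal is correct and follows essentially the same route as the paper: the heart of the argument is Proposition \ref{prop:propa1} applied to $R(+\i0)v^-(\cdot,\omega')$ (where $v^-(\cdot,\omega')$ is the kernel of $T^-(0)$), combined with the classical deflection computation \eqref{eq:vinkel}, exactly as you outline. The paper's organization differs only in that it works with the two-term representation \eqref{eq:bndSS} rather than \eqref{eq:Sgod2}, disposes of the $J^+(0)^*T^-(0)$ piece by a direct integration by parts, and then for the resolvent piece pairs $\partial_\omega^\delta v^+(\cdot,\omega)$ against $R(+\i0)\partial_{\omega'}^{\delta'}v^-(\cdot,\omega')$; the input you describe informally (``carefully tracking the supports of the symbols $a^\pm$'') is isolated as a separate lemma (Lemma \ref{lemma:wavefas}) computing $WF^s_{\sc}(v^\pm(\cdot,\omega))$ explicitly in terms of $F_{\sph}^+$, which then makes the disjointness of wave front sets immediate once propagation is invoked. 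Your detour through the full eigenfunctions $w^\pm(\omega,0)$ and the reference to Theorem \ref{thm:kkk} for extracting the $\omega$-endpoint behaviour are unnecessary; the paper stays with the localized kernels $v^\pm$ throughout, which keeps the pairing argument cleaner since both factors have wave front sets contained in compact subsets of $\{-1<b<1\}$.
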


To analyse $S(0)(\omega,\omega')$ we shall use the representation
(\ref{eq:bndSS}), which we write (formally) as
\begin{equation*}
S(0)(\omega,\omega')=-2\pi\langle j^+(\cdot,\omega),
v^-(\cdot,\omega')\rangle+
2\pi\i
\langle v^+(\cdot,\omega),
R(+\i0)v^-(\cdot,\omega')\rangle,\end{equation*}
where
\begin{eqnarray*}
j^\pm(x,\omega)&=&(2\pi)^{-d/2}\big (\e^{\i \phi^\pm}
\tilde a^\pm\big )(x,\omega,0),\\
v^\pm(x,\omega)&=&(2\pi)^{-d/2}\big (\e^{\i \phi^\pm}
\tilde t^\pm\big )(x,\omega,0).
\end{eqnarray*}
Let $\phi_\sph^+$ denote the solution of the eikonal equation for the
potential $V_1$ at zero energy, cf. (\ref{phi}). It is given by 
\begin{equation}\label{eq:tilll}
\phi_{\sph}^+(x,\omega)
=\frac{\sqrt{2\gamma}}{ 1-\mu/2}\left( r^{1-\mu/2}\cos (1-\mu/2)
\theta-R_0^{1-\mu/2}\right),
\end{equation} 
where $ \cos \theta=\hat x \cdot
\omega$. Using $x^\perp=\frac{\omega-\hat x\cos\theta}{\sin\theta}$ and
$\nabla_x\theta=-\frac{x^\perp}{r}$, we can also compute
\begin{eqnarray*}
F_\sph^+(x,\omega)&=&\nabla_x\phi_\sph^+(x,\omega)\\
&=&\sqrt{2\gamma}r^{-\mu/2}
\left(\hat x\cos(1-\mu/2)\theta+x^\perp\sin(1-\mu/2)\theta\right).
\end{eqnarray*}

\begin{lemma} 
  \label{lemma:wavefas}For all  $s\in \R$, $\omega\in S^{d-1}$ and
  multiindices  $\delta$, 
  \begin{align}
    &WF^{s}_{\sc}(\partial_\omega^\delta
 v^{\pm} (\cdot ,\omega))\nonumber\\&\subseteq \Big\{z=(\hat x,
    \bar c, b)\in \T^*| 1-\sigma'\leq 
\pm \hat x\cdot \omega \leq 1-\sigma,\;b\hat x+\bar c=\pm \frac{F_{\sph}^+(\hat x,\pm \omega)}{({2\gamma})^{1/2}}\Big\},\label{eq:wacom}\\
    \nonumber&WF^{s}_{\sc}(\partial_\omega^\delta
j^{\pm} (\cdot ,\omega))\\&\subseteq\Big \{z=(\hat x,
    \bar c, b)\in \T^*| 1-\sigma'\leq
\pm \hat x\cdot \omega
 ,\;b\hat x+\bar c=\pm \frac{F_{\sph}^+(\hat x,\pm \omega)}{({2\gamma})^{1/2}}\Big\}.\label{eq:wacom1}
  \end{align} Suppose in addition that $\chi_+\in
   C^\infty(\R)$, $\chi_+'\in
   C^\infty_c(\R)$  and $\supp
   \chi_+ \subset]1,\infty[$. Then
   \begin{equation}
     \label{eq:+bound}
     \Opw(\chi_+(a))\partial_\omega^\delta v^{\pm} (\cdot ,\omega),\;\Opw(\chi_+(a))\partial_\omega^\delta j^{\pm} (\cdot ,\omega)\in L^{2,s}.
   \end{equation}

\end{lemma}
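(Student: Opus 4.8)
\textbf{Proof strategy for Lemma \ref{lemma:wavefas}.}

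The plan is to reduce everything to the spherically symmetric situation, where the phase function is the explicit $\phi^+_\sph$ of \eqref{eq:tilll}, and then to invoke the stationary/non-stationary phase dichotomy encoded in the definition of $WF^s_\sc$. Recall that $j^\pm$ and $v^\pm$ are defined as oscillatory integrals against the amplitudes $\tilde a^\pm$ and $\tilde t^\pm$, which (by Proposition \ref{prop:booh} and Proposition \ref{prop:brrrr}, together with \eqref{eq:rega2}) are smooth symbols with the growth/decay recorded there; in particular $\tilde t^\pm = \tilde t^\pm_\bd + \tilde t^\pm_\pr$ where the propagating part $\tilde t^\pm_\pr$ is $O(\japx^{-\infty})$ and the boundary part $\tilde t^\pm_\bd$ is supported in the annular region $1-\sigma'\le \pm\hat x\cdot\omega\le 1-\sigma$, $r\ge R_0$, coming from \eqref{eq:chi^1}--\eqref{eq:chi^2} and Proposition \ref{prop:brrrr}. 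First I would treat the $\pr$ contribution: since its amplitude and all $\omega$-derivatives are rapidly decaying in $x$, the corresponding piece of $v^\pm$ lies in $L^{2,s}$ for every $s$, hence contributes nothing to the wave front set. This is why the support condition in \eqref{eq:wacom} involves the genuine two-sided inequality $1-\sigma'\le\pm\hat x\cdot\omega\le 1-\sigma$ (the $\bd$ cutoff), while \eqref{eq:wacom1} for $j^\pm$ has only the one-sided $1-\sigma'\le\pm\hat x\cdot\omega$ (the support of $a^\pm$ itself via $\chi_2$).

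Next, for the boundary contribution, the key computation is: to test $\Opw(\chi_{z_1}F(r>2))$ applied to $\partial_\omega^\delta v^\pm(\cdot,\omega)$ for a localization $\chi_{z_1}(x,\xi)=\chi_{\omega_1}(\hat x)\chi_{\eta_1}(\xi/g)$, one writes the composition as an oscillatory integral in $(x,\xi)$ with phase $\phi^+(y,\omega)-y\cdot\xi+x\cdot\xi$ (concentrating first on the $+$ case) and performs repeated integration by parts in $y$ using the operator built from $\xi-\nabla_y\phi^+(y,\omega)$, exactly as in the proof of Lemma \ref{lem:22a} (\ref{it:p40}). The point is that $\nabla_y\phi^+(y,\omega)=F^+(y,\omega,0)$, and by Proposition \ref{prop:mixed_2aaaa} (estimates \eqref{eq:F's}, \eqref{eq:F's2}) this equals $(2\gamma)^{1/2}\,F^+_\sph(\hat y,\omega)/|F^+_\sph|\cdot(\text{magnitude})$ up to $O(r^{-\breve\epsilon})$, so on the support of the amplitude the normalized momentum $\xi/g$ is forced to equal $F^+_\sph(\hat x,\omega)/(2\gamma)^{1/2}$ in the limit $r\to\infty$; whenever $\eta_1\ne F^+_\sph(\omega_1,\omega)/(2\gamma)^{1/2}$ or $\omega_1$ lies outside the stated $\hat x\cdot\omega$-range, the integration by parts gains an arbitrary power of $\japx^{-1}$, placing the result in $L^{2,s}$. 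The $\omega$-derivatives $\partial_\omega^\delta$ are handled by differentiating under the integral: each derivative lands on $\e^{\i\phi^+}\tilde t^\pm$, producing extra factors $\partial_\omega^{\nu}\phi^+=O(\japx g)$ (by \eqref{eq:ole}) times lower symbols, which do not disturb the support statement or the non-stationary estimate — only the bounding constants change. The minus case follows from \eqref{eq:rel_phi+-}, i.e. $\phi^-(x,\omega,0)=-\phi^+(x,-\omega,0)$, which flips $\omega\mapsto-\omega$ and $\hat x\mapsto-\hat x$ and accounts for the $\pm$ signs throughout \eqref{eq:wacom}--\eqref{eq:wacom1}.

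For the last assertion \eqref{eq:+bound}: on the support of $\chi_+(a)$ we have $a=\xi^2/g^2\ge$ some constant $>1$, whereas on the support of the amplitude of $j^\pm$ or $v^\pm$ the identity $\nabla_y\phi^+=F^+$ together with $|F^+|^2/g^2 = 1 + O(r^{-\breve\epsilon})$ (the eikonal equation $|\nabla_x\phi^+|^2/2 + V = 0$ gives exactly $|\nabla_x\phi^+|^2 = g^2$, so in fact $a=1$ on the relevant set, up to the localization being in $\xi$ not $\nabla\phi$) forces $\xi-\nabla_y\phi^+\ne 0$ with a quantitative lower bound of the form \eqref{eq:basicest}; hence the same integration-by-parts in $y$ yields rapid decay and membership in $L^{2,s}$ for all $s$. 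I would model this verbatim on the "$A=A_+$" part of the proof of Theorem \ref{thm:middl}, where precisely the bound \eqref{eq:basicestiiii} is established on the support of such a symbol. The main obstacle — really the only subtle point — is bookkeeping the $\omega$-derivatives together with the $r^{-\breve\epsilon}$ error terms from Proposition \ref{prop:mixed_2aaaa} uniformly, so that the non-stationary lower bound on $|\xi-\nabla_y\phi^+|$ survives differentiation in $\omega$; but since $\breve\epsilon$ is fixed and all the relevant symbol estimates in \eqref{eq:derF222}--\eqref{eq:derF2222} and Proposition \ref{prop:booh} are uniform, this is routine once the scheme of Lemma \ref{lem:22a} (\ref{it:p40}) is in place.
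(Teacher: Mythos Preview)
Your proposal is correct and follows essentially the same route as the paper: reduce to the $+$ case, then show that localizations away from the claimed set land in $L^{2,s}$ by the non-stationary phase / integration-by-parts scheme already carried out in the proofs of Lemma~\ref{lem:22a}~(\ref{it:p40}) and Theorem~\ref{thm:middl}, with \eqref{eq:+bound} coming verbatim from the $A=A_+$ case of the latter. Your write-up is more explicit than the paper's (which simply points to those two earlier proofs), in particular in spelling out the role of Proposition~\ref{prop:mixed_2aaaa} to pass from $F^+$ to $F_\sph^+$ and in separating the $\pr$/$\bd$ pieces; the paper handles the $-$ case by complex conjugation rather than via \eqref{eq:rel_phi+-}, but these are equivalent since $\phi^-(x,\omega)=-\phi^+(x,-\omega)$ makes $v^-(\cdot,\omega)$ agree with $\overline{v^+(\cdot,-\omega)}$ up to amplitude.
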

\begin{proof} Only the ``$+$'' case  needs to be considered (can be
    seen by  complex conjugation). 
  Upon multiplying by  a localization operator  supported outside
  of the right hand side of (\ref{eq:wacom}), we need to demonstrate that
  the result is in $L^{2,s}$,  
 cf. the definition (\ref{eq:WF^sa}). Using the
 right  Kohn-Nirenberg
 quantization (instead of the
 Weyl quantization) this can be done by integrating by parts
  in  explicit integrals, exactly as in the proofs of Lemma
  \ref{lem:22a} (\ref{it:p4}) and Theorem \ref{thm:middl}. The
  arguments for (\ref{eq:wacom1}) and (\ref{eq:+bound})   are the
  same, in particular, (\ref{eq:+bound})  
 follows from the proof of Theorem \ref{thm:middl}.
\end{proof}

\noindent{\em Proof of Theorem \ref{thm:sings}. } Due to  Proposition
\ref{prop:proposi1} and Lemma \ref{lemma:wavefas} 
 we are allowed to act by $R(+\i 0)$ 
on $\partial_{\omega'}^{\delta'}v^{-} (\cdot ,\omega'))$. In fact, for all $\tau\in C^\infty(S^{d-1})$
 \begin{equation}\label{eq:8sss}
  R(+\i 0)T^-(0)\tau =\int_{S^{d-1}}
  R(+\i0)v^-(\cdot,\omega')\tau(\omega')\,\d \omega'.
 \end{equation}

Using the representation
(\ref{eq:bndSS}) interpreted as a form on $C^\infty(S^{d-1})$, and
(\ref{eq:8sss}), we have
$S_\kappa(0)\to S(0)$  as $\kappa\searrow0$,
 where the kernel of $S_\kappa(0)$ is the well-defined smooth
expression
\begin{align*}
&S_\kappa(0)(\omega,\omega')\\&=-2\pi\langle j^+(\cdot,\omega),F(\kappa|\cdot|<1)
v^-(\cdot,\omega')\rangle+
2\pi\i
\langle v^+(\cdot,\omega),
F(\kappa|\cdot|<1)R(+\i0)v^-(\cdot,\omega')\rangle.\end{align*}
It remains to be shown that $S_\kappa(0)(\cdot,\cdot)$ has a limit in 
$C^\infty\big (\{\omega\cdot\omega'\neq\cos\frac{\mu\pi}{2-\mu}\}\big )$.

 By
integration by parts,  it follows that the first term  has a limit, in fact
in $C^\infty\big (S^{d-1}\times S^{d-1}\big )$, cf. the proof of Lemma \ref{lemma:wavefas}.  Whence we only
look at the second term.

By Lemma \ref{lemma:wavefas} and Proposition \ref{prop:mixed_2aaaa}, for all $s$
\begin{align}
&WF^{s}_{\sc}(\partial_\omega^\delta
 v^{+} (\cdot ,\omega))\subseteq \{\bar
 c\neq 0,\;b^2+\bar c^2=1\}\cap \nonumber \\&
\{z\ |\ \lim_{\tau\to+\infty}\hat x(\tau)=\omega,\ \ \hbox{where}\
 \gamma_0(\tau,z)=(\hat x(\tau),b(\tau),\bar c(\tau))\}.\label{eqna1}
\end{align} Here $\gamma_0(\tau,z)$ refers to the flow defined by (\ref{eq:reduced eqns}).

By Propositions  \ref{prop:proposi1} and \ref{prop:propa1}, for all  $s$,
\begin{align}
&WF^{s}_{\sc}(
R(+\i0) \partial_{\omega'}^{\delta'}v^{-} (\cdot ,\omega'))\nonumber\\
&\subseteq\{\gamma_0(\tau,z)\ |\ 
\tau\geq0,\ z\in WF_\sc^s(\partial_{\omega'}^{\delta'}v^-(\cdot,\omega'))\}\cup\{\bar c=0,\;b>0\}
\nonumber\\
&\subseteq
\{z\ |\ \lim_{\tau\to-\infty}\hat x(\tau)=-\omega'\}\cup \{\bar c=0,\;b>0\}.\label{eqna2}
\end{align}
By invoking \eqref{eq:vinkel}, we see that
the sets on the right hand side of 
 (\ref{eqna1}) and  (\ref{eqna2})  are  disjoint away from $\{\omega\cdot\omega'\neq\cos\frac{\mu\pi}{2-\mu}\}$.
Hence also \[WF^{s}_{\sc}(\partial_\omega^\delta
 v^{+} (\cdot ,\omega))\cap WF^{s}_{\sc}(
R(+\i0) \partial_{\omega'}^{\delta'} v^{-} (\cdot
,\omega'))=\emptyset,\]  which implies, upon taking $s=0$ and  using
(\ref{eq:+bound}) and a suitable partition of unity,  that
\[\langle\partial_\omega^\delta v^+(\cdot,\omega,0),
R(+\i0)\partial_{\omega'}^{\delta'}v^-(\cdot,\omega',0)\rangle\]
is well-defined. 

By the same arguments 
\begin{multline*}
 \partial_\omega^\delta \partial_{\omega'}^{\delta'}\langle v^+(\cdot,\omega,0),
F(\kappa|\cdot|<1)R(+\i0)v^-(\cdot,\omega',0)\rangle\\\to \langle\partial_\omega^\delta v^+(\cdot,\omega,0),
R(+\i0)\partial_{\omega'}^{\delta'}v^-(\cdot,\omega',0)\rangle
  \end{multline*}
locally uniformly in
$\{\omega\cdot\omega'\neq\cos\frac{\mu\pi}{2-\mu}\}$. Notice that the
bound (\ref{eq:+bound}) is uniform in $\omega$; a similar statement
 is valid  for the
bounds underlying  (\ref{eq:wacom}), and we also need at this point
to invoke Remark \ref{remark:unifprop}. \qed

\begin{remarks} \label{remarks:smoloca}
\begin{enumerate}[\quad\normalfont 1)]
\item \label{it:smoloca1} The somewhat
  abstract procedure of the proof of Theorem \ref{thm:sings} does not
  provide information about the nature of the singularities at
  the cone $\omega\cdot \omega'= \cos
  \tfrac {\mu}{2-\mu}\pi$. In the study of the  singularities at the diagonal 
  of the kernel of  scattering matrices for positive energies
  (see \cite {IK2} and \cite {Y2}) it
  is important  that the eikonal and transport equations can be solved in
  sufficiently big sectors. In combination with   resolvent estimates this allows one
  to put the
  singularities in a  rather explicit term similar to the
  first one on the right hand side of (\ref{eq:Smatrix}). A very similar
  procedure can
 be used (at least for $V_2=0$)  for $S(0)(\omega,\omega')$ provided $\mu<1$. 
However, for
  $\mu\in [1,2[$ there is a ``glueing
  problem'' due to the fact that in order to apply resolvent estimates  in
  this case the constructed { solutions to the eikonal equations
  $\phi^\pm$}
 need to be extended,
  viz. as to including some $\theta >\tfrac {\pi}{2-\mu}$. 
 Therefore,
  multivalued $\phi^\pm$ are needed. We devote Subsection
  \ref{Quantum singularities for $V_2=0$} to a discussion of this
  question.
\item \label{it:smoloca3} Under Condition \ref{symbol}, it follows essentially
 by the same method of proof that,  for
  $\lambda>0$, the kernel $S(\lambda)(\omega,\omega')$
 is smooth
  outside the set $\{(\omega,\omega')|\;\omega= \omega'\}$; for that
  we use \eqref{eq:reduced eqns} with 
  $\mu=0$. See \cite[Chapter 19]{V2} for a related result and
  procedure. 
\item \label{it:smoloca2} There is a discrepancy between our results and the main result of \cite{Kv}. The idea of
  \cite{Kv} is to use a partial wave analysis to obtain an asymptotic expression of the scattering
  amplitude for $\lambda\to 0$ (with  the assumption of
  radial symmetry and under the short-range condition
  $\mu>1$). Unfortunately  \cite[(17)]{Kv} is incompatible with Theorems
   \ref{thm:S(0)1}, \ref{thm:shortr} and \ref{thm:sings}.
\end{enumerate}
\end{remarks}

\subsection{Distributional kernel of $S(0)$ as an oscillatory integral}
\label{Quantum singularities for $V_2=0$}

In addition to the previous assumption $V_1(r)= -\gamma r^{-\mu}$
for $r\geq 1$, we shall here assume that $V_2=0$, see though Remark
 \ref{remarks: extwe} \ref{it:ex1}). We shall explain  a
procedure which in principle  allows us to calculate the
singularities of the kernel $S(0)(\omega,\omega')$;  a fairly  explicit
{\it  oscillatory integral}
 will be specified.  Using this integral we  derive below  the location
 of the singularities of
 $S(0)$  by the method of
non-stationary phase, which gives an
alternative proof of  Theorem
\ref{thm:sings} (under the condition that  $V_2=0$).

We shall improve on the representation (\ref{eq:bndSS}) for $S(0)$.
Notice that the functions $\tilde a^+$ and $\phi^+$
 used up to now are supported near the forward region
$\cos \theta=\hat x \cdot \omega \approx 1$ only. Now we shall take advantage
of 
the fact that the expression \eqref{eq:tilll}  defines a
solution to the eikonal equation  for all values of $\theta$. We
shall consider a cut-off at larger values of  $\theta$, in fact
slightly to the left of  the critical angle $
\theta  = (1-\mu/2)^{-1}\pi$. The basic idea is similar to the one applied
in the study  of
the kernel of  scattering matrices for positive energies, cf. Remark
\ref{remarks:smoloca} \ref{it:smoloca1}). If we can extend the
construction of the phase and amplitude as indicated above,
 then we can apply a ``two-sided''  resolvent estimate to deal with the
second term on the right hand side of (\ref{eq:bndSS}), i.e. to show
that it contributes by a smooth kernel; in our case the appropriate ``two-sided'' estimate
  is given by \eqref{eq:disjoint_b}. 

Now besides the problem of extending the phase up to $
\theta  = (1-\mu/2)^{-1}\pi$, there is obviously the issue  of
well-definedness, since $\theta$ as a function of $x$ is multi-valued;
for the case of positive energies this problem does not occur since the
cut-off in this case occurs before the  angle $
\theta  = \pi$.
We have
\begin{equation}
( J^+\tau)(x)=(2\pi)^{-d/2}
 \int_{S^{d-1}}
  \big (\e^{\i \phi^+}\tilde a^{+}
\big )(x,\omega,0)\tau(\omega)\d \omega.\label{eq:qe}
\end{equation} In fact, in the present  spherically  symmetric case the dependence of the
  variables $x$ and $\omega$ is through $r=|x|$ and $\hat x \cdot
  \omega$ only. Writing 
  \begin{equation*}
    \omega=\cos \theta \,\hat x+\sin \theta \,\widetilde\omega,
  \end{equation*} where $\widetilde\omega \cdot \hat x=0$, \eqref{eq:qe}
can be written as 
\begin{equation}
(2\pi)^{-d/2}  \int _{S^{d-2}}\d \widetilde\omega \int_0^\pi
  \big (\e^{\i \phi}\tilde a
\big )(r,\theta)\tau(\cos \theta \,\hat x+\sin \theta
  \,\widetilde\omega)\sin ^{d-2}\theta\,\d \theta;\label{eq:qe1}
\end{equation} for convenience we dropped the superscript.  The phase $\phi$ is given by 
  \eqref{eq:tilll}, and using  this  expression and the
  orbit  \eqref{eq:pol-eqn}, we
  can extend the support  of $\tilde a$
  by solving transport equations as in Subsection \ref{Solving transport equations}; the cut-off is now taken
  slightly to the  left of   $
\theta  = (1-\mu/2)^{-1}\pi$. More precisely, the cut-off is defined as
  follows: First pick $L\in \N$ such that $(1-\mu/2)L<1$ while
  $(1-\mu/2)(L+1)\geq 1$. We shall assume that the analogue of $\sigma'$ for the construction of $J^-$, entering in
  \eqref{eq:chi^2} for the construction of $J^+$,  is so small that 
  \begin{equation}
    \label{eq:simasmall} (1-\mu/2)(L\pi+\cos ^{-1}(1-\sigma'))<\pi. 
  \end{equation}  Next the version of  \eqref{eq:chi^2}  that we need
  is given in terms of the $\sigma$ of the  construction of $J^-$ as
  follows: Choose angles $\pi L<\theta_0<\theta'_0<\pi (L+1)$ such
  that $(1-\mu/2)\theta'_0<\pi$ and $(1-\mu/2)\big (\theta_0
  +\cos^{-1}(1-\sigma)\big )>\pi.$ Introduce a smoothed out
  characteristic function 
\begin{equation}\label{eq:chi^2222}
  \chi_2(s)=
\begin {cases} 1, & \text{for}\; s\leq \theta_0,\\
0, & \text{for}\; s\geq \theta_0';
\end {cases} \;
\end{equation} and with this choice the new cut-off function takes the
(essentially same) form  $\chi =\chi_1(r)\chi_2(\theta)$.

The extended $\tilde a$ has 
  similar properties as before due to the cut-off.  Whence we are lead
   to consider the following  modification of the expression \eqref{eq:qe1}: 
\begin{equation*}
  \int _{S^{d-2}}\d \widetilde\omega \int_0^\infty
  f(r,\theta)\tau(\cos \theta \,\hat x+\sin \theta
  \,\widetilde\omega)\,|\sin ^{d-2}\theta|\,\d \theta;\,f=(2\pi)^{-d/2}\e^{\i
    \phi}\tilde a,
\end{equation*}  where the $\theta$--integration (due to the cut-off)  effectively takes
  place on the interval $[0, (1-\mu/2)^{-1}\pi]$. The next step is to
  change variable, writing for $\theta$ in  intervals of the form
  $(2k\pi, (2k+1)\pi]$, 
  \begin{equation*}
    \cos \theta \,\hat x+\sin \theta
  \,\widetilde\omega=\cos \psi \,\hat x+\sin \psi
  \,\widetilde\omega;\, \psi =\theta-2k\pi,
  \end{equation*} while on intervals of the form
  $(2k+1)\pi, (2k+2)\pi]$,
\begin{equation*}
    \cos \theta \,\hat x+\sin \theta
  \,\widetilde\omega=\cos \psi \,\hat x+\sin \psi
  \,(-\widetilde\omega);\, \psi =(2k+2)\pi-\theta,
  \end{equation*} respectively; here $k\in \N \cup\{0\}$.  Whence we consider the expression  
\begin{equation*}
  \int_{S^{d-1}}F(r,\psi)
  \tau(\omega)\d \omega,
\end{equation*} where 
\begin{equation*}
  F(r,\psi)=\sum _{k=0}^\infty \big \{f(r,\psi+2k\pi)+f(r,(2k+2)\pi-\psi)\big \},
\end{equation*}
 and as above 
\begin{equation*}
    \omega=\cos \psi \,\hat x+\sin \psi
    \,\widetilde\omega\text{ with }\widetilde\omega \cdot \hat x=0
    \text{ and  }\psi\in [0,\pi],
  \end{equation*} i.e. $\psi=\cos^{-1}\hat x \cdot \omega$.

We claim that $F(r,\psi)$ is smooth in $x$ and $\omega$. Notice that
this is not an  obvious fact,
  since although the function $\psi=\cos^{-1}\hat x \cdot
\omega$ is continuous, it  has a cusp singularity  at $ \hat x \cdot
\omega=\pm 1$. However, as can easily verified, $\psi^2$ is smooth at $
\hat x \cdot \omega= 1$ and $(\pi-\psi)^2$ is smooth at $ \hat x \cdot
\omega= -1$, respectively. Moreover,  $f(r,\psi)$ and
$f(r,\psi+2(k+1)\pi)+f(r,(2k+2)\pi-\psi)$ are in fact  smooth
functions  of $\psi^2$ near $ \hat x \cdot
\omega= 1$, and similarly
$f(r,\psi+2k\pi)+f(r,(2k+2)\pi-\psi)=f(r,(2k+1)\pi-(\pi-\psi))+f(r,(2k+1)\pi+(\pi-\psi))$
is a smooth
function of $(\pi-\psi)^2$ at $ \hat x \cdot
\omega= -1$.

Recall that we have the
representation (\ref{eq:Sgod2})
\begin{equation}
S(0)(\omega,\omega')=
-2\pi\langle w^+(\omega,0),\e^{\i\phi^-}\tilde t^-(\cdot,\omega',0)\rangle,
\label{eq:repo}\end{equation}
 where   $w^+(\omega,0)$ is the generalized eigenfunction 
of Theorem \ref{thm:4a}.

Define
$w=w(x,\omega)=F(r,\psi)-R(-\i 0)HF$. Due to Proposition
\ref{prop:propa5}, Proposition \ref{prop:resolvent_basic2}
(\ref{some_label21}) and Lemma \ref{lem:22a} (\ref{it:p40}),  this $w$ agrees with the eigenfunction
$w^+(\omega,0)$, cf. the proof of Lemma
\ref{prop:qs4}. Therefore, our (extended) version of (\ref{eq:bndSS}) reads
\begin{equation}
S(0)(\omega,\omega')=
-2\pi\langle F,\e^{\i\phi^-}\tilde t^-(\cdot,\omega',0)\rangle
+2\pi\langle R(-\i 0)HF,\e^{\i\phi^-}\tilde t^-(\cdot,\omega',0)\rangle.
\label{eq:repo1}\end{equation}
  As indicated above, the contribution to
$S(0)(\omega,\omega')$ from the second term on the right hand side of \eqref{eq:repo1}
 is smooth in $\omega$ and $\omega'$, if we use
a cut-off sufficiently close (but to the left of)   the critical angle
$
\theta  = (1-\mu/2)^{-1}\pi$; this is  
indeed  accomplished by using \eqref{eq:chi^2222} as cut-off function.

We conclude that the singularities of the kernel
of $S(0)$ are the same as those of the kernel
 of the operator $\widetilde S(0)$ given by
\begin{equation*}\langle \tau_1,\widetilde S(0)\tau_2\rangle=
  -2\pi\Big \langle \int F(r,\psi)
  \tau_1(\omega)\d \omega,\int \big (\e^{\i
  \phi^-}\tilde t^-\big )(\cdot,\omega',0)\tau_2
  (\omega') \d \omega' \Big \rangle.
\end{equation*} Whence (formally)
\begin{equation}\label{eq:regggg}
  \widetilde S(0)(\omega,\omega')=
-2\pi\int \overline {F(r,\psi)}\big (\e^{\i
  \phi^-}\tilde t^-\big )(\cdot,\omega',0)\,\d x.
\end{equation}

Next we introduce the variable $\theta'=\cos^{-1}\hat x \cdot
(-\omega')\in [0,\pi/2)$; we can represent  $\phi^-(x, \omega',0)=
-\phi(r,\theta')$, cf. \eqref{eq:rel_phi+-}. The  integrand  on the
right hand side of 
(\ref{eq:regggg}) is 
given as $ \sum _{k=0}^\infty f_k$, where $f_k$ has the form
  \begin{align}\label{eq:phaaa}
  &\e^{-\i\big
  (\phi(r,\psi+2k\pi)+\phi(r,\theta')\big )}g(r,\psi+2k\pi,
  \theta')\nonumber\\&+
\e^{-\i\big (\phi(r,(2k+2)\pi-\psi)+\phi(r,\theta')\big )}g(r,(2k+2)\pi-\psi, \theta').
\end{align}

Let us argue that the  integral \eqref{eq:regggg}  is well-defined in $\{\omega\cdot \omega'\neq \cos
  {\mu\over 2-\mu}\pi\}$, in agreement with Theorem
  \ref{thm:sings}. The argument is based on  the method of non-stationary
  phase. 
First we notice that the cusp singularities at $\psi =0$ and 
  $\psi =\pi$ correspond to non-stationary points. More precisely, we
  can write \begin{equation*}
    x=r(\cos \psi\,\omega+\sin \psi\,\widetilde {\hat x}),
  \end{equation*} and perform the $x$--integration as
  \begin{equation} \label{eq:regggg2}  
\int \cdots \d x= \int_0^\pi  \sin^{d-2}\psi\,\d \psi\int_{S^{d-2}}\d \widetilde {\hat
  x}\int_0^\infty  \cdots \,r^{d-1}\d r  .
  \end{equation} 
Now on the support  of $g$ the factor $\cos (1-\mu/2)\theta'\geq \cos
\theta'\geq 1-\sigma'$,
while the factors $\cos (1-\mu/2)(\psi+2k\pi)$ and $\cos
(1-\mu/2)((2k+2)\pi-\psi)$ stay sufficiently away from $-1$ (given
that  $\psi \approx 0$ or  
  $\psi \approx \pi$) to ensure that the sum of phases does not
  vanish; here we use \eqref{eq:simasmall}. Thus the phases of $f_k$ are nonzero near  the
  $\psi$--endpoints of integration, and consequently integration by parts with respect to $r$
  regularizes the integral \eqref{eq:regggg} (upon first substituting
  \eqref{eq:regggg2} and localizing near  the $\psi$--endpoints). 

 By the same reasoning as above, depending on whether $L$ is even or
  odd  (viz. $L=2l$ or $L=2l+1$), only the integral of one term of
  \eqref{eq:phaaa} (and only with $k=l$) 
  carries singularities. We first look at the 
  case for which only $\e^{-\i\big
  (\phi(r,\psi+2l\pi)+\phi(r,\theta')\big )}g(r,\psi+2l\pi,
  \theta')$ contributes by  singularities.
Clearly, for a stationary point
\begin{equation}
  \label{eq:1p}
  \cos ((1-\mu/2)(\psi+2l\pi))+ \cos ((1-\mu/2) \theta ')=0,
\end{equation} which leads to the condition
\begin{equation}
  \label{eq:1pp}
  \cos (\psi+\theta ')=\cos (\tfrac{2}{2-\mu}\pi).
\end{equation} 

There are three cases to consider.

\noindent{\bf Case  I}. $\omega=-\omega'$. In this case $\theta'=\psi$,
so that 
\begin{align}
  \label{eq:2p}
  &{\d \over \d \psi}\big
  (\phi(r,\psi+2l\pi)+\phi(r,\theta')\big )\nonumber\\&=-\sqrt {2\gamma}
  r^{1-\mu/2}
  \big(\sin (1-\mu/2)(\psi+2l\pi)+ \sin (1-\mu/2) \psi \big )<0.
\end{align} Whence there are no stationary points.

\noindent{\bf Case  II}. $\omega=\omega'$. In this case $\theta'=\pi-\psi$
so that \eqref{eq:1pp} reads 
\begin{equation*}
  \omega\cdot \omega'=1= -\cos (\tfrac{2}{2-\mu}\pi)=\cos (\tfrac{\mu}{2-\mu}\pi).
\end{equation*} This agrees with the ``rule''  of Theorem
  \ref{thm:sings}. 

\noindent{\bf Case  III}. $\omega\neq C \omega'$. In dimension $d\geq 3$ the vectors $\widetilde {\hat
x}=\pm y/|y|$ where 
$y= \omega'-\omega'\cdot \omega \;\omega$ are the  only possible critical
points of  the map
\begin{equation*}
 S^{d-2}\ni \widetilde {\hat
x}\to \theta' =\cos^{-1} (-(\cos \psi\,\omega+\sin \psi\,\widetilde
{\hat x})\cdot \omega')\in \R. 
\end{equation*} Consequently, for any stationary point, $\hat x$ must
belong to the plane spanned by $\omega$ and $\omega'$ (like for $d=2$). Let us introduce
the angle $\gamma =\cos ^{-1}\omega\cdot (-\omega')$. There  are three
possible relationships  to be considered  a) $\gamma=|\psi-\theta'|$, b)
$\gamma=\psi+\theta'$ and c) $\gamma=2\pi -(\psi+\theta')$. For a),
$\theta'=\psi\mp \gamma$ can be substituted into the sum of phases
and we compute as in \eqref{eq:2p}. Again there will not be any
stationary point.  For b) we can use \eqref{eq:1pp} to compute
\begin{equation*}
  \omega\cdot \omega'=-\cos \gamma= -\cos (\tfrac{2}{2-\mu}\pi)=\cos (\tfrac{\mu}{2-\mu}\pi),
 \end{equation*} which  agrees with the ``rule''  of Theorem
  \ref{thm:sings}. Similarly, for c) we compute 
\begin{equation*}
  \omega\cdot \omega'=-\cos \gamma=-\cos (\psi+\theta')= -\cos (\tfrac{2}{2-\mu}\pi)=\cos (\tfrac{\mu}{2-\mu}\pi).
\end{equation*}

Next we look at the 
  case for which only $\e^{-\i\big
  (\phi(r,2(l+1)\pi-\psi)+\phi(r,\theta')\big )}g(r,2(l+1)\pi-\psi,
  \theta')$ contributes to  singularities. 
The stationary point is given by
\begin{equation}
  \label{eq:1pr}
  \cos ((1-\mu/2)(2(l+1)\pi)-\psi)+ \cos ((1-\mu/2) \theta ')=0,
\end{equation} which leads to the condition
\begin{equation}
  \label{eq:1ppr}
  \cos (\psi-\theta ')=\cos (\tfrac{2}{2-\mu}\pi).
\end{equation} 

Again there are three cases to consider.

\noindent{\bf Case  I}. $\omega=-\omega'$. In this case $\theta'=\psi$,
so that 
\begin{equation*}
  \omega\cdot \omega'=-1= -\cos (\tfrac{2}{2-\mu}\pi)=\cos (\tfrac{\mu}{2-\mu}\pi), 
\end{equation*} which  agrees with Theorem
  \ref{thm:sings}.

\noindent{\bf Case  II}. $\omega=\omega'$. We have $\theta'=\pi-\psi$,
so that 
\begin{align}
  \label{eq:2pr}
  &{\d \over \d \psi}\big
  (\phi(r,2(l+1)\pi-\psi)+\phi(r,\theta')\big )\\&=\sqrt{2\gamma}r^{1-\mu/2}
  \big(\sin (1-\mu/2)(2(l+1)\pi-\psi)+ \sin (1-\mu/2) (\pi-\psi) \big )>0;\nonumber
\end{align} whence there are no stationary points.

\noindent{\bf Case  III}. $\omega\neq C \omega'$. As in the previous
  ``Case III'',   for any stationary point  the vector $\hat x$ must
belong to the plane spanned by $\omega$ and $\omega'$. Again we
define $\gamma =\cos ^{-1}\omega\cdot (-\omega')$, and  there  are three
possible relationships  to be considered:  a) $\gamma=|\psi-\theta'|$, b)
$\gamma=\psi+\theta'$ and c) $\gamma=2\pi -(\psi+\theta')$. For a),
\begin{equation*}
  \omega\cdot \omega'=-\cos \gamma=-\cos (\psi-\theta')= -\cos (\tfrac{2}{2-\mu}\pi)=\cos (\tfrac{\mu}{2-\mu}\pi),
\end{equation*} which  agrees with Theorem
  \ref{thm:sings}. For b) and c) we compute as in \eqref{eq:2pr};
  there are no stationary points.
 \begin{remarks}
    \label{remarks: extwe} 
\begin{enumerate}[\normalfont 1)]
\item \label{it:ex1}  For the above considerations (on the location of singularities),
 it is not strictly needed that $V_2=0$. In fact we can include a
 $V_2$ as in Condition \ref{assump:conditions1} with $\epsilon_2>
 1-\frac12\mu$ and solve transport equations as before using the
 same phase function (the one determined by  $V_1$ only).
\item \label{it:ex2} Suppose in addition to \ref{it:ex1})
 that $V_2$ is spherically symmetric. Then the operators $T=S(0)$, as well as
 $T=\tilde S(0)$, obey that $RTR^{-1}=T$ for all $d$-dimensional
 rotations $R$. This means that the kernel
 $T(\omega,\omega')$ of these
 operators is a function of $\omega\cdot\omega'$ only. Using the
 stationary phase method it is feasible for $\tfrac{\mu}{2-\mu}\notin
 \Z$ to write (as a possible continuation of the
above analysis)  the
singular part of the kernel of $\tilde S(0)$ as a sum 
of terms of  the form  
$(\omega\cdot\omega'-\nu\pm\i
  0)^{-\frac s2}a(\omega\cdot \omega')$
 (at least for poly-homogeneous $V_2$); we shall not elaborate.
The   next section 
 is devoted to an alternative
approach that we find more elementary,  and 
 besides,  by that  method we can extract the singular part in the exceptional cases $\tfrac{\mu}{2-\mu}\in
 \Z$ too.
\end{enumerate}
\end{remarks}

\section{Homogeneous potentials -- type
 of singularities of $S(0)$}

\label{Singularity of the kernel of the scattering matrix}
In this section we shall compute the main contribution 
to the  scattering matrix $S(0)$
for a potential 
homogeneous of degree $\mu$ (plus a lower order term), see Subsection \ref{Leading asymptotics
   of kernel} for precise conditions.
It will turn out to be  the evolution
operator for the wave equation on the sphere at time
$\frac{-\mu}{2-\mu}\pi$.  We devote Subsections \ref{prop:wav1}--\ref{thm:wav2} to a study of this  operator. In particular, we will  compute explicitly its distributional kernel and determine the
location of its singularities. We assume throughout the section that
$d\geq2$.

\subsection{Evolution operator of the wave equation on the sphere}

For any $1\leq i<j\leq d$, define the corresponding angular momentum operator
\[L_{ij}:=-\i(x_i\partial_{x_j}-x_j\partial_{x_i}).\]
Set
\[L^2:=\sum_{1\leq i<j\leq d}L_{ij}^2,\ \
\Lambda:=\sqrt{L^2+(d/2-1)^2}.\]
Note that $\Lambda$ is a self-adjoint operator on $L^2(S^{d-1})$ and
its eigenfunctions with eigenvalue $l+d/2-1$
 are $l$th order
spherical harmonics for $l=0,1,\dots$.

For any $\theta$
 one can compute exactly the integral kernel of
 $\e^{\i\theta\Lambda}$. Although the result already  appears in the
 literature, see \cite[Chapter 4, (2.13)]{Ta}, we shall for the
 readers convenience give a complete derivation  (this proof is different
 from Taylor's). 
Note  that the operator appears naturally when we solve the  wave
 equation on the sphere, therefore  we call it the
evolution operator of the wave equation
 on the sphere.

First we need to introduce some notation about distibutions.
For any $\epsilon>0$ and $s\in\R$, the expression
\[\R\in y\mapsto (y\pm\i\epsilon)^{-\frac{s}{2}}\]
defines uniquely a function on a real line, which can be viewed as a
distribution in ${\mathcal S}'(\R)$. It is well known that for any
$\phi\in{\mathcal S}(\R)$ there exists a limit
\[\lim_{\epsilon\searrow0}\int(y\pm\i\epsilon)^{-\frac{s}{2}}\phi(y)\d y=:
\int(y\pm\i 0)^{-\frac{s}{2}}\phi(y)\d y
,\]
which defines a distribution in ${\mathcal S}'(\R)$. In the sequel we will
treat this distribution as if it were a function, denoting it by
$(y\pm\i 0)^{-\frac{s}{2}}$. Note that for $s,\epsilon >0$  we have
the identity
\begin{equation}
(y\pm\i\epsilon)^{-\frac{s}{2}}=
\frac{\e^{\mp\i\pi\frac{s}{4}}}{\Gamma(s/2)}\int_0^\infty
\e^{\i t(\pm y+\i\epsilon)}t^{\frac{s-2}{2}}\d t.\label{eq:8ssb}
\end{equation}

We shall  in this section show the following result:

\begin{prop}\label{prop:wav1}
\begin{enumerate}
\item If $\theta=\pi 2k$, $k\in\Z$, then $\e^{\i\theta \Lambda}=(-1)^{kd}$
  times  the identity.
\item If $\theta=\pi (2k+1)$, $k\in\Z$, then $\e^{\i\theta \Lambda}=
\e^{\i\pi(2k+1)(d/2-1)}P$, where $P$ is the parity operator.
\item If $\theta\in]\pi 2k,\pi(2k+1)[$, $k\in\Z$, 
then $\e^{\i\theta \Lambda}$ has the distributional kernel
\[\e^{\i\theta \Lambda}(\omega,\omega')
=(2\pi)^{-d/2}\sin\theta\,\Gamma(d/2)\e^{-\i\pi/2}
(-\omega\cdot\omega'+\cos\theta-\i 0)^{-d/2}.\]
\item If $\theta\in]\pi (2k-1),\pi2k[$, $k\in\Z$, 
then $\e^{\i\theta \Lambda}$ has the distributional kernel
\[\e^{\i\theta\Lambda}(\omega,\omega')
=(2\pi)^{-d/2}\sin\theta\,\Gamma(d/2)\e^{-\i\pi/2}
(-\omega\cdot\omega'+\cos\theta+\i0)^{-d/2}.\] 
\end{enumerate}\end{prop}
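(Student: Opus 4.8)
The plan is to compute the kernel of $\e^{\i\theta\Lambda}$ by expanding in spherical harmonics and summing the resulting series, exploiting the fact that the spectral projections of $\Lambda$ onto the $l$th eigenspace are given by Gegenbauer (ultraspherical) polynomials via the addition formula. Concretely, if $P_l$ denotes the orthogonal projection onto the span of degree-$l$ spherical harmonics on $S^{d-1}$, then $P_l(\omega,\omega') = c_{d,l}\, C_l^{(d/2-1)}(\omega\cdot\omega')$ where $C_l^{(\nu)}$ is the Gegenbauer polynomial and $c_{d,l}$ is an explicit normalization constant; hence
\[
\e^{\i\theta\Lambda}(\omega,\omega') = \sum_{l=0}^\infty \e^{\i\theta(l+d/2-1)} P_l(\omega,\omega').
\]
So the first step is to recall/record the precise addition formula and the constant $c_{d,l}$, and to note that in the distributional sense on $S^{d-1}\times S^{d-1}$ this series converges (after multiplication by a suitable power of $(1-\Delta_\omega)$ it is an $L^2$-convergent Fourier series; alternatively one sums it as a boundary value of a convergent power series in $\e^{\i\theta}$ for $\Im\theta>0$ and then takes $\Im\theta\searrow0$).

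Second, I would introduce the generating function for Gegenbauer polynomials,
\[
\sum_{l=0}^\infty C_l^{(\nu)}(t)\, r^l = (1-2tr+r^2)^{-\nu},\qquad |r|<1,
\]
with $\nu=d/2-1$. The point is that, up to the $l$-dependent constants $c_{d,l}$ and the extra shift $l+d/2-1$ rather than $l$, the series for the kernel is essentially this generating function evaluated at $r=\e^{\i\theta}$. The $l$-dependent constants from the addition formula are not simply a single power of $l$, so one must be a little careful: the cleanest route is to write everything in terms of the Poisson-type kernel. One uses that for $\nu=d/2-1$,
\[
\sum_{l=0}^\infty \frac{l+\nu}{\nu}\,C_l^{(\nu)}(t)\,r^l = \frac{1-r^2}{(1-2tr+r^2)^{\nu+1}},
\]
which is exactly the derivative-type identity obtained by applying $r\frac{d}{dr}$ to the generating function and regrouping. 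The factor $\frac{l+\nu}{\nu}=\frac{l+d/2-1}{d/2-1}$ matches (up to the constant $\nu$, which is absorbed) the ratio of surface-measure normalizations $c_{d,l}$ appearing in the addition theorem, so that $\sum_l \e^{\i\theta(l+d/2-1)}P_l(\omega,\omega')$ collapses to $(2\pi)^{-d/2}\Gamma(d/2)$ times $\frac{1-\e^{2\i\theta}}{(1-2\e^{\i\theta}\,\omega\cdot\omega'+\e^{2\i\theta})^{d/2}}$, interpreted as the limit from $\e^{\i\theta}\to \e^{\i(\theta+\i0)}$ when $\theta$ is not an integer multiple of $\pi$.

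Third, I would simplify the algebraic expression. Factoring, $1-2\e^{\i\theta}t+\e^{2\i\theta} = \e^{\i\theta}(\e^{-\i\theta}+\e^{\i\theta}-2t)=2\e^{\i\theta}(\cos\theta - t)$, and $1-\e^{2\i\theta} = -2\i\e^{\i\theta}\sin\theta$. Substituting,
\[
\frac{1-\e^{2\i\theta}}{(1-2\e^{\i\theta}t+\e^{2\i\theta})^{d/2}}
= \frac{-2\i\e^{\i\theta}\sin\theta}{(2\e^{\i\theta})^{d/2}(\cos\theta-t)^{d/2}}
\]
with $t=\omega\cdot\omega'$. Tracking the branch of $(2\e^{\i\theta})^{d/2}$ and of $(\cos\theta - \omega\cdot\omega')^{d/2}$ as $\theta$ is given a small positive imaginary part determines the $\mp\i0$ prescription: for $\theta\in]\pi2k,\pi(2k+1)[$ one has $\sin\theta>0$ and the analytic continuation forces the $-\i0$ boundary value, i.e. the kernel is $(2\pi)^{-d/2}\sin\theta\,\Gamma(d/2)\e^{-\i\pi/2}(-\omega\cdot\omega'+\cos\theta - \i0)^{-d/2}$ after collecting the constants ($\e^{-\i\pi/2}=-\i$ absorbs the factor $-\i$, and the $2^{1-d/2}$, $\e^{\i\theta(1-d/2)}$ factors must be checked to cancel against normalization — this bookkeeping is where errors can creep in); for $\theta\in]\pi(2k-1),\pi2k[$ one has $\sin\theta<0$ and the opposite sign of the imaginary part, giving the $+\i0$ prescription. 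The integer cases $\theta=\pi k$ are handled separately and directly: there $\e^{\i\theta\Lambda}$ acts on each harmonic by the scalar $\e^{\i\pi k(l+d/2-1)}=\e^{\i\pi k(d/2-1)}(-1)^{kl}$, which is $\e^{\i\pi k(d/2-1)}$ times $I$ for $k$ even (noting $(-1)^{kl}=1$) — more carefully, for $\theta=2\pi k$ the scalar is $\e^{2\pi\i k(d/2-1)}=\e^{\pi\i k(d-2)}=(-1)^{k(d-2)}=(-1)^{kd}$, giving statement (1); and for $\theta=\pi(2k+1)$ the scalar is $\e^{\i\pi(2k+1)(d/2-1)}(-1)^{(2k+1)l}=\e^{\i\pi(2k+1)(d/2-1)}(-1)^l$, and since $P\!:\,Y_l\mapsto(-1)^lY_l$ this is $\e^{\i\pi(2k+1)(d/2-1)}P$, giving statement (2).

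The main obstacle is the last step: getting the constants and, above all, the branch/$\i0$-prescription exactly right. The Gegenbauer generating function is only valid inside the unit disk, so one must genuinely argue via the regularization $\theta\mapsto\theta+\i0$ (equivalently $r=\e^{\i\theta}$ approached from $|r|<1$), justify that the distributional limit exists and equals the claimed boundary-value distribution $(y\mp\i0)^{-d/2}$ (here invoking the standard fact recalled in the excerpt that $(y\mp\i\epsilon)^{-s/2}$ has a limit in $\mathcal S'(\R)$, applied on the interval $y=\cos\theta-\omega\cdot\omega'\in[-1-|\cos\theta|,\,\ldots]$, i.e. pulled back to the sphere), and keep careful track of which sheet of the fractional power the analytic continuation lands on — this depends on $\sin\theta$ and hence alternates between the two cases (3) and (4). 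A clean way to organize the branch-tracking is to write $1-2\e^{\i\theta}t+\e^{2\i\theta}$ as a product of its two roots $\e^{\i\theta}-\e^{\pm\i\psi}$ where $t=\cos\psi$, and follow each factor. I expect everything else — the addition theorem, the Poisson-kernel identity, the trigonometric factorization — to be routine.
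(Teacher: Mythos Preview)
Your approach is essentially identical to the paper's: expand $\e^{\i\theta\Lambda}$ via the spectral projections $P_l$, express $P_l(\omega,\omega')$ through the addition theorem as a Gegenbauer polynomial in $\omega\cdot\omega'$, sum using the Poisson-type identity $\sum_l \frac{l+\nu}{\nu}C_l^{(\nu)}(t)r^l=(1-r^2)(1-2tr+r^2)^{-\nu-1}$ (which the paper obtains by differentiating the standard generating function), simplify the algebra with $r=\e^{\i\theta}$, and read off the $\mp\i0$ prescription by approaching $\theta$ from $\Im\theta>0$ (equivalently $|r|<1$) via $\cos(\theta+\i\epsilon)\approx\cos\theta-\i\epsilon\sin\theta$. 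The paper additionally supplies a self-contained proof of the addition formula (via Green's third identity on the unit ball), but otherwise the logic is the same, and your concern about the $\e^{\i\theta(1-d/2)}$ factor is unfounded: it cancels exactly against the $\e^{\i\theta(d/2-1)}$ coming from the eigenvalue shift $l\mapsto l+d/2-1$.

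One omission: for $d=2$ the Gegenbauer index is $\nu=d/2-1=0$, where $C_l^{(0)}$ degenerates and the addition formula in the stated form breaks down. The paper handles this separately via Tchebyshev polynomials $T_l$ and their generating function $-\ln(1-2wt+t^2)=\sum_{l\ge1}\frac{2t^l}{l}T_l(w)$; you should note that your argument as written covers only $d\ge3$ and that the planar case needs this small modification.
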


\subsubsection{Tchebyshev and Gegenbauer polynomials}

Recall that the Tchebyshev polynomials (of the first kind) are defined by the
identity 
\[T_n(\cos\phi):=\cos n\phi,\ \ \ n=0,1,\dots.\]
Let $|t|<1$.
The following generating function of Tchebyshev polynomials follows by an
elementary calculation:
\begin{eqnarray}\label{tche}
-\ln(1-2wt+t^2)&=&
\sum_{l=1}^\infty \frac{2t^l}{l} T_l(w).\end{eqnarray}
 Gegenbauer 
polynomials are
defined by the generating function \cite{M,AAR}
\begin{eqnarray}\label{gege}
\frac{1}{(1-2wt+t^2)^{(d-2)/2}}&=&
\sum_{l=0}^\infty t^l C_l^{(d-2)/2}(w).\end{eqnarray}
The left hand sides of
(\ref{tche}) and (\ref{gege})
 look different. But after simple manipulations (involving differentiation
of both sides) they become quite similar
\begin{eqnarray}
\frac{-t+t^{-1}}{(t-2w+t^{-1})^{\frac{d}{2}}}&=&\begin{cases}
T_0(w)+
\sum_{l=1}^\infty t^l 2T_l(w),&d=2; \\[3ex]
\sum_{l=0}^\infty t^{l+\frac{d}{2}-1}\frac{2l+d-2}{d-2}
C_l^{(d-2)/2}(w),&d\geq3.\end{cases}
\end{eqnarray}
By substituting
$t=\e^{\i\theta}$ for $\Im \theta>0$, we rewrite this as
\begin{eqnarray}
\frac{-\i2\sin\theta}{2^{d/2}(\cos\theta-w)^{\frac{d}{2}}}&=&\begin{cases}
 T_0(w)+
\sum_{l=1}^\infty \e^{\i l \theta}2 T_l(w),&d=2; \\[3ex]
\sum_{l=0}^\infty \e^{\i(l+\frac{d}{2}-1)\theta}\frac{2l+d-2}{d-2}
C_l^{(d-2)/2}(w),&d\geq3.\end{cases}
\label{tche3}\end{eqnarray}

\subsubsection{Projection onto $l$th sector of spherical harmonics}

It is well-known that the integral kernel of the projection onto $l$th sector
of spherical harmonics in $L^2(S^{d-1})$ can be computed explicitly.
This fact 
is usually presented in the literature
as the addition theorem for spherical harmonics, see
e.g. Theorem 2, Sect. 2 of \cite{M}.
 In the case $d=3$ it can also be found in \cite{Vl}.

\begin{prop}
Let $Y$ be an $l$th order spherical harmonic in
$L^2(S^{d-1})$.
\begin{enumerate}\item In the case  $d=2$,
\begin{eqnarray}\label{green3}
\int_{S^{1} }
\frac{1}{2\pi}
T_0(\hat x \cdot\hat y)Y(\hat y)\d\hat y&=&\delta_{l0} Y(\hat x);
\\
\int_{S^{1} }
\frac{1}{\pi}
T_n(\hat x \cdot\hat y)Y(\hat y)\d\hat y&=&\delta_{ln} Y(\hat x),\ \
n=1,2,\dots.\nonumber
\end{eqnarray}
\item In the case  $d\geq3$,
\begin{equation}\label{green2}
\int_{S^{d-1} }
\frac{(d-2+2l)\Gamma(d/2-1)}{4\pi^{d/2}}
C_n^{(d-2)/2}(\hat x \cdot\hat y)Y(\hat y)\d\hat y=\delta_{ln} Y(\hat x).
\end{equation}\end{enumerate}
\label{prop:projo}\end{prop}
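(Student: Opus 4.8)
The plan is to prove Proposition~\ref{prop:projo} directly from the generating-function identity \eqref{tche3} together with the orthogonality and completeness of spherical harmonics in $L^2(S^{d-1})$. First I would recall that the space of $l$th order spherical harmonics, call it $\mathcal H_l$, furnishes an orthogonal decomposition $L^2(S^{d-1})=\bigoplus_{l\geq0}\mathcal H_l$, and that for each fixed $\hat y\in S^{d-1}$ the zonal function $\hat x\mapsto C_l^{(d-2)/2}(\hat x\cdot\hat y)$ (respectively $T_l(\hat x\cdot\hat y)$ when $d=2$) is itself an element of $\mathcal H_l$; this is the classical fact that Gegenbauer polynomials solve the eigenvalue equation for the spherical Laplacian restricted to zonal functions, with the correct eigenvalue $l(l+d-2)$. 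Granting this, the integral operator $K_n$ with kernel proportional to $C_n^{(d-2)/2}(\hat x\cdot\hat y)$ maps $L^2(S^{d-1})$ into $\mathcal H_n$ and annihilates $\mathcal H_l$ for $l\neq n$ by orthogonality (a zonal harmonic of degree $n$ is orthogonal to every harmonic of a different degree). Hence $K_n$ acts as a scalar multiple of the identity on $\mathcal H_n$, and the entire content of the proposition reduces to computing that single normalization constant.

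To pin down the constant I would test the kernel against the distinguished zonal harmonic. Put $\hat x=\hat y=e$ for a fixed $e\in S^{d-1}$ and evaluate: by rotational symmetry $C_n^{(d-2)/2}(e\cdot\hat z)$ restricted to the sphere integrates against itself to $\|C_n^{(d-2)/2}(\langle e,\cdot\rangle)\|_{L^2(S^{d-1})}^2$, and one knows $C_n^{(d-2)/2}(1)=\binom{n+d-3}{n}$, so the reproducing property $K_n[C_n^{(d-2)/2}(\langle e,\cdot\rangle)](e)=C_n^{(d-2)/2}(1)$ forces the constant to equal $C_n^{(d-2)/2}(1)$ divided by the square of the $L^2$-norm of the zonal harmonic. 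The latter norm is the standard Gegenbauer $L^2$-normalization,
\[
\int_{-1}^1 \big(C_n^{(d-2)/2}(t)\big)^2(1-t^2)^{(d-3)/2}\,\d t
=\frac{\pi\,2^{3-d}\,\Gamma(n+d-2)}{n!\,(n+\tfrac{d}{2}-1)\,\Gamma(\tfrac{d-2}{2})^2},
\]
multiplied by the surface area of $S^{d-2}$, namely $|S^{d-2}|=2\pi^{(d-1)/2}/\Gamma(\tfrac{d-1}{2})$, coming from integrating over the latitude sphere. Assembling these with the duplication formula for the Gamma function yields exactly the prefactor $\frac{(d-2+2l)\Gamma(d/2-1)}{4\pi^{d/2}}$ in \eqref{green2}; the case $d=2$ is handled identically but more simply, since $T_0\equiv1$, $T_n(\cos\phi)=\cos n\phi$, and $\int_0^{2\pi}\cos^2 n\phi\,\d\phi=\pi$ for $n\geq1$ while $\int_0^{2\pi}\d\phi=2\pi$, which accounts for the two different constants $\tfrac1{2\pi}$ and $\tfrac1\pi$ in \eqref{green3}.

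An alternative organization, which I would mention but not carry out in full, is to bypass the separate norm computation by feeding the generating function \eqref{tche3} directly into the argument: writing $K_\theta$ for the operator with kernel $(2\pi)^{-d/2}\sin\theta\,\Gamma(d/2)\e^{-\i\pi/2}(\cos\theta-\hat x\cdot\hat y)^{-d/2}$ and expanding in powers of $\e^{\i\theta}$ via \eqref{tche3}, one sees that $K_\theta=\sum_l \e^{\i(l+d/2-1)\theta}P_l$ where $P_l$ is the operator with kernel equal to the $l$th coefficient — so identifying $P_l$ with the orthogonal projection onto $\mathcal H_l$ is exactly Proposition~\ref{prop:projo}, and this is in fact the bridge to Proposition~\ref{prop:wav1}. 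The main obstacle in either route is purely bookkeeping: getting every Gamma-function factor and power of $2$ and $\pi$ to land correctly, in particular correctly combining the Gegenbauer $L^2$-normalization constant with $|S^{d-2}|$ and with $C_n^{(d-2)/2}(1)$, and then simplifying via Legendre's duplication formula $\Gamma(z)\Gamma(z+\tfrac12)=2^{1-2z}\sqrt\pi\,\Gamma(2z)$ to reach the stated closed form; there is no conceptual difficulty, only the risk of an arithmetic slip, and the $d=2$ and $d\geq3$ cases must be kept notationally separate because the Gegenbauer parameter $(d-2)/2$ degenerates at $d=2$.
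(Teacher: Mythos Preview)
Your proposal is correct but follows a genuinely different route from the paper. You argue via the orthogonal decomposition $L^2(S^{d-1})=\bigoplus_l\mathcal H_l$: the zonal kernel $C_n^{(d-2)/2}(\hat x\cdot\hat y)$ lies in $\mathcal H_n$ in each variable, so the associated integral operator projects onto $\mathcal H_n$ up to a scalar, and you then extract that scalar from the classical Gegenbauer $L^2$-norm on $[-1,1]$ together with $|S^{d-2}|$. (One small point: the passage from ``$K_n$ maps into $\mathcal H_n$ and kills the other summands'' to ``$K_n$ is a \emph{scalar} on $\mathcal H_n$'' needs either irreducibility of $\mathcal H_n$ under $O(d)$ plus Schur, or the reproducing-kernel argument; you should state this explicitly.) The paper instead extends $Y$ to the harmonic function $g(x)=|x|^lY(\hat x)$ on $\R^d$, applies the third Green identity on the unit ball with the explicit Green's function $G_d(x)=-\Gamma(d/2-1)/(4\pi^{d/2}|x|^{d-2})$, expands $G_d(x-\hat y)$ via the Gegenbauer generating function \eqref{gege}, and reads off \eqref{green2} by matching powers of $|x|$. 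Your route is the textbook representation-theoretic one and is arguably more transparent, but it imports the Gegenbauer normalization integral as an external fact and requires the duplication-formula bookkeeping you flag; the paper's potential-theoretic argument is more self-contained in that the constant drops out of Green's identity without any separate norm computation, at the price of invoking the generating function and the harmonic extension.
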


\proof 
The case (\ref{green3}) is elementary.
In the proof below we restrict ourselves to $d\geq3$.

Let us first recall the formula for the Green's function in ${\mathbb
  R}^d$ for $d\geq3$:
\begin{eqnarray}
G_d(x)&=&
-\frac{\Gamma(d/2-1)}{4\pi^{d/2}|x|^{d-2}}=
-\frac{1}{s_{d-1}(d-2)|x|^{d-2}},
\end{eqnarray}
where $s_{d-1}=\frac{2\pi^{d/2}}{\Gamma(d/2)}$ is the area of $S^{d-1}$.
It satisfies
\[\Delta G_d=\delta_0,\]
where $\delta_0$ is Dirac's delta at zero.
Recall also the 3rd Green's identity: if $\Delta g=0$ and $\Omega$ is a
sufficiently regular domain containing $x$, then
\begin{equation}\label{green}
g(x)=\int_{\partial\Omega} g(y)\nabla_y G_d(x-y)\d\vec{s}(y)-
\int_{\partial\Omega} (\nabla g)(y) G_d(x-y)\d\vec{s}(y).\end{equation}

We extend $Y$ to ${\mathbb R}^d$ by setting $g(x)=|x|^lY(\hat x)$.
Note that 
\[\Delta g(x)=0,\ \ \ \hat x\nabla_x g( x)=lg(x). \]
By  (\ref{gege}),
 for $|x|<|y|$,
\begin{eqnarray*}
G_d(x-y)&=&
-\frac{\Gamma(d/2-1)}{4\pi^{d/2}}
\sum_{n=0}^\infty C_n^{(d-2)/2}
(\hat x\hat y)|x|^n|y|^{-d+2-n},\\
\hat y\cdot \nabla_y G_d(x-y)&=&
\frac{\Gamma(d/2-1)}{4\pi^{d/2}}
\sum_{n=0}^\infty (d-2+n)C_n^{(d-2)/2}
(\hat x\hat y)|x|^n|y|^{-d+1-n}.
\end{eqnarray*}
We apply (\ref{green}) to the unit ball, so that $|y|=1$ and $|x|<1$:
\begin{align}
&|x|^lY(\hat x)=
\int_{S^{d-1}}
g(\hat y)\hat y\cdot \nabla G_d(x-\hat y)\d\hat y-\int_{S^{d-1}}
(\hat y\cdot \nabla g)(\hat y) G_d(x-\hat y)\d\hat y\nonumber
\\
&=
\frac{\Gamma(d/2-1)}{4\pi^{d/2}}
\sum_{n=0}^\infty (d-2+n+l)\int_{S^{d-1}}Y(\hat y)
C_n^{(d-2)/2}
(\hat x\hat y)|x|^n\d\hat y.\label{green1}
\end{align}
Comparing the powers of $|x|$
on both sides of (\ref{green1}), we obtain (\ref{green2}).
\qed

\subsubsection{Proof of  Proposition \ref{prop:wav1}}
Let $Q_l^{d-1}$
 be the orthogonal projection onto $l$th order spherical harmonics on
$S^{d-1}$. 
We multiply  (\ref{tche3})   
 by $\Gamma(d/2)2^{-1}\pi^{-d/2}$, set $w=\omega\cdot\omega'$
 and use Proposition \ref{prop:projo}.
We obtain
\begin{eqnarray}
\frac{-\i\sin\theta\,\Gamma(d/2)}{(2\pi)^{d/2}(\cos\theta-\omega\cdot\omega')^{d/2}}
&=&\sum_{l=0}^\infty Q_l^{d-1}(\omega,\omega')\e^{\i(l+d/2-1)\theta}\nonumber\\
&=&\e^{\i\theta\Lambda}(\omega,\omega').\nonumber
\end{eqnarray}

Replace $\theta$ with $\theta+\i\epsilon$, where $\theta$ is real and
$\epsilon$ positive. For small $\epsilon$ we have
\[\cos(\theta+\i\epsilon)\approx\cos\theta-\i\sin\theta \epsilon.\]
Now $\sin\theta>0$ for $\theta\in]\pi 2k,\pi (2k+1)[$
 and $\sin\theta<0$ for $\theta\in]\pi( 2k-1),\pi 2k[$, which ends the
 proof for  the
  case $\theta\in\R\setminus\pi\Z$.
  
The case $\theta\in\pi\Z$ is obvious.

\subsection{Evolution operator
 of the wave equation on the sphere as a FIO}

Let 
$X$ be a smooth compact manifold of dimension $n$. Let us recall some basic
definitions related to Fourier integral operators on $X$, cf.  \cite{Ho4}.

We say that $X\times X\times\R^k\ni (x,x',\theta)\mapsto\phi(x,x',
\theta)$ is a
non-degenerate phase function if it is a function
homogeneous of degree 1 in $\theta$,  smooth and satisfying $\nabla\phi\neq0$
away from $\theta=0$,  and
 such that
\[\{(x,x',\theta)\in X\times X\times\R^k\ |\  \nabla_\theta \phi(x,x',\theta)=0\}\]
is a smooth manifold on which $\nabla\nabla_{\theta_1}\phi,\dots,
\nabla\nabla_{\theta_k}\phi$ are linearly independent.

Let 
$\chi$ be a smooth and homogeneous 
transformation on $\T^*X\setminus X{\times}\{0\}$.
 We say that it is associated to a
non-degenerate phase function 
$\phi$ iff
  two  pairs $(x,\xi),(x',\xi')\in \T^*X\setminus \{0\}{\times}X$ satisfy
 $\chi(x',\xi'):=(x,\xi)$ exactly when
\begin{eqnarray}\nonumber
\xi&=&\nabla_x\phi(x,x',\theta),\\\nonumber
\xi'&=&-\nabla_{x'}\phi(x,x',\theta),\\
0&=&\nabla_\theta\phi(x,x',\theta).\label{setof}
\end{eqnarray}
The transformation $\chi$ is
automatically canonical, that is, it preserves the symplectic form.

We say that a smooth function  $X\times X\times\R^k\ni
(x,x',\theta)\mapsto u(x,x',\theta)$ is an amplitude of order $m$ iff
\[\partial_x^\alpha\partial_{x'}^{\alpha'}\partial_\theta^\beta u=O(\langle
\theta\rangle^{m-|\beta|}).\]

Recall from
\cite{Ho4} that an operator $U$ from $C^\infty(X)$ to ${\mathcal D}'(X)$
is called a Fourier integral operator 
of order
\[m-\frac{n}{2}+\frac{k}{2}.\]
iff in local coordinate
patches its distributional kernel can be written as
\begin{equation}
 U(x,x')=\int
 \e^{\i\phi(x,x'\theta)}u(x,x',\theta)\,\d\theta,\label{fiio}\end{equation} 
where $\theta\in\R^k$ are auxiliary variables, the function $\phi$ is a
 non-degenerate phase function, and $u$ is an amplitude of order $m$.

If the phase of $U$ is associated to a canonical transformation $\chi$, we say
that $U$ itself is associated to $\chi$.
 We note that  in such a case there are conditions
under which we  have for all
$v\in{\mathcal D}'(X)$ (using here the notion of wave front set of a distribution, cf. \cite
[Section 2.5]{Ho4}) 
\[WF( Uv)\subseteq \chi(WF(v));\]
see \cite [Proposition 2.5.7 and Theorem 2.5.14]{Ho4} (these
conditions are fulfilled for the example $U=U_\theta$ given below).

\begin{thm}\label{thm:wav2}
The operator $U_\theta:=\e^{\i\theta \Lambda}$ is a FIO of order $0$.
\end{thm}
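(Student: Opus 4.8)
The plan is to verify directly from Proposition \ref{prop:wav1} that the distributional kernel of $U_\theta = \e^{\i\theta\Lambda}$ has, in each local coordinate patch, the oscillatory-integral representation \eqref{fiio} with a non-degenerate phase function and an amplitude of the correct order. By Proposition \ref{prop:wav1} the cases $\theta\in\pi\Z$ give $\pm I$ or $\pm P$, which are trivially Fourier integral operators (associated respectively to the identity and the antipodal map), so it suffices to treat $\theta\in\R\setminus\pi\Z$. For such $\theta$ the kernel is a constant times $(-\omega\cdot\omega'+\cos\theta\mp\i0)^{-d/2}$, and the first step is to rewrite this distributional power as an oscillatory integral using \eqref{eq:8ssb}: with $s=d$ one gets, up to the explicit constant and the sign of the $\i0$,
\[
(-\omega\cdot\omega'+\cos\theta\mp\i0)^{-d/2}
=\frac{\e^{\pm\i\pi d/4}}{\Gamma(d/2)}\int_0^\infty
\e^{\mp\i t(-\omega\cdot\omega'+\cos\theta)}\,t^{\frac{d-2}{2}}\,\d t .
\]
This already has the shape \eqref{fiio} with the single auxiliary variable $\theta_{\mathrm{aux}}=t\in\R_+$ (extended to $\R^1$ via a homogeneous cutoff), phase $\phi(\omega,\omega',t)=\mp t(\cos\theta-\omega\cdot\omega')$ homogeneous of degree $1$ in $t$, and amplitude $u=c_{d,\theta}\,t^{(d-2)/2}$, which is an amplitude of order $m=\tfrac{d-2}{2}=\tfrac{n-1}{2}$ with $n=d-1=\dim S^{d-1}$.

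Next I would check non-degeneracy of this phase and identify the associated canonical transformation. We have $\partial_t\phi=\mp(\cos\theta-\omega\cdot\omega')$, so the stationary set $\{\partial_t\phi=0\}$ is $\{\omega\cdot\omega'=\cos\theta\}$, a smooth hypersurface in $S^{d-1}\times S^{d-1}$ (for $\theta\notin\pi\Z$ the value $\cos\theta$ is in the open interval $(-1,1)$, so this is a clean submanifold and $\nabla_{\omega,\omega'}\partial_t\phi\neq0$ there); away from $t=0$ one also has $\nabla\phi\neq0$. Working in local coordinates on the two sphere factors and homogenizing in $t$, one verifies that $\phi$ is a non-degenerate phase function in the sense recalled before \eqref{fiio}. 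The canonical relation defined by \eqref{setof} is readily computed: it is the graph of the time-$\theta$ geodesic flow on $T^*S^{d-1}$ (equivalently, the bicharacteristic flow of $\sqrt{-\Delta_\omega+(d/2-1)^2}$, i.e.\ of $\Lambda$), which is exactly what one expects since $U_\theta$ solves the half-wave equation $\i\partial_\theta U_\theta=\Lambda U_\theta$. With $k=1$ auxiliary variable and $n=d-1$, the order formula $m-\tfrac n2+\tfrac k2=\tfrac{d-2}{2}-\tfrac{d-1}{2}+\tfrac12=0$ gives order $0$, as claimed.

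The two points that require a little care — and the main (though modest) obstacle — are the global/coordinate-patch bookkeeping and the behaviour near the antipodal points. First, the formula from Proposition \ref{prop:wav1} is a single global expression, but the definition of a Fourier integral operator is local in coordinate patches on $X\times X$; I would cover $S^{d-1}\times S^{d-1}$ by charts, note that on any chart the representation above is of the required form, and observe that off the canonical relation (where $\omega\cdot\omega'\neq\cos\theta$, in particular near $\omega=\omega'$ when $\theta\notin 2\pi\Z$ and near $\omega=-\omega'$ when $\theta\notin\pi(2\Z+1)$) the kernel is smooth by non-stationary phase, so no difficulty arises there. Second, one must make sure the distributional power $(\,\cdot\,\mp\i0)^{-d/2}$ is genuinely the one produced by the $t$-integral; this is precisely the content of \eqref{eq:8ssb}, and matching the constants $\Gamma(d/2)$, $\sin\theta$ and the phase factors $\e^{\mp\i\pi/2}$ in Proposition \ref{prop:wav1} with those in \eqref{eq:8ssb} is a routine computation. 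Assembling these observations yields that $U_\theta$ is, in every local chart, an oscillatory integral of the form \eqref{fiio} with non-degenerate phase and amplitude of order $\tfrac{d-2}{2}$, hence a Fourier integral operator of order $0$ associated to the geodesic flow at time $\theta$, which is the assertion of Theorem \ref{thm:wav2}. I would also remark that this identifies the wave-front relation, $WF(U_\theta v)\subseteq\chi_\theta(WF(v))$, recovering the singularity-location statement already seen in Proposition \ref{prop:wav1}.
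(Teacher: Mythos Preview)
Your proof is correct and follows essentially the same route as the paper: treat $\theta\in\pi\Z$ as a point transformation, and for $\theta\notin\pi\Z$ use Proposition~\ref{prop:wav1} together with \eqref{eq:8ssb} to write the kernel as an oscillatory integral with phase $\pm t(\omega\cdot\omega'-\cos\theta)$ and amplitude of order $\tfrac{d-2}{2}$, then apply the order formula with $n=d-1$, $k=1$. Your additional discussion of non-degeneracy, the canonical relation, and coordinate-patch bookkeeping is more detailed than the paper's terse argument but not substantively different.
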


\proof If $\theta\in\pi\Z$, then $\e^{\i\theta\Lambda}$ is a so-called point
  transformation. But point  transformations given by
  diffeomorphisms of the underlying manifold are always FIO of order zero.

Assume that  $\theta\notin\pi\Z$. Consider e.g. the case
 $\theta\in]\pi2k,\pi(2k+1)[$. By \eqref{eq:8ssb} and Proposition
 \ref{prop:wav1}  the kernel of  $U_\theta$
 can then be written as
\begin{equation}
U_\theta(\omega,\omega')
=C\int_0^\infty
\e^{\i t (\omega\cdot\omega'-\cos\theta)}t^{\frac{d-2}{2}}\d t.\label{ffio}
\end{equation}
If we compare \eqref{ffio}
with  the definition of a FIO given above, we 
see that $ t (\omega\cdot\omega'-\cos\theta)$
is a non-degenerate phase function. We also 
have $n=d-1$,
$m=\frac{d-2}{2}$ and $k=1$. Thus $U_\theta$ is a FIO of  order
\[\frac{d-2}{2}-\frac{d-1}{2}+\frac12=0.\]\qed

Let us describe the canonical transformation associated to the FIO
$U_\theta$. Let $(\omega,\xi)\in\T^*(S^{d-1})$. It is enough to
  assume that $|\xi|=1$. Then the canonical transformation $\chi_\theta$
  associated to $U_\theta$ is given by 
$\chi_\theta(\omega',\xi')=(\omega,\xi)$, where
\begin{eqnarray*}
\omega&=&\omega'\cos\theta-\xi'\sin\theta,\\
\xi&=&\omega'\sin\theta+\xi'\cos\theta.
\end{eqnarray*}

\subsection{Main result}
\label{Leading asymptotics
   of kernel}

The main result of this section is

\begin{thm}\label{thm:singrad} Suppose  Conditions
\ref{assump:conditions1}--\ref{assump:conditions3} with  $d\geq 2$, the condition $V_1(r)= -\gamma r^{-\mu}$
for $r\geq 1$, $V_2$ is spherically symmetric and that the number $\epsilon_2$ of
Condition \ref{assump:conditions1} obeys $\epsilon_2>1-\tfrac\mu 2$
(viz. $\tfrac{\d ^k}{\d r^k}V_2(r)=O\big (r^{-1-
\tfrac \mu 2-\epsilon-k}\big);\;\epsilon>0$). Then
\[S(0)=\e^{\i c_0}\e^{-\i\frac{\mu\pi}{2-\mu}\Lambda}+K,\]
where $K$ is compact and
\[c_0=\tfrac {4\sqrt{2\gamma}} {2-\mu}R_0^{1-\frac\mu 2}+2\int^\infty_{R_0}\Big(
   \sqrt{-2V_1(r')}-\sqrt{-2V(r')}\Big )\,\d r'.
\]
\end{thm}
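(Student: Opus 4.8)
The strategy is to reduce the problem to a one‑dimensional WKB computation on each sector of fixed angular momentum, and then to resum. Since $V_2$ (and the smooth part of $V_3$, which we may absorb) is spherically symmetric, $S(0)$ commutes with all rotations, so by Corollary \ref{cor:geomdef2b} it is diagonalized by spherical harmonics: $S(0)Y_l=\e^{\i2\sigma_l(0)}Y_l$ for a phase shift $\sigma_l(0)$ determined by the asymptotics of the regular solution $u_l$ of
\[
-u_l''+V_lu_l=0,\qquad V_l(r)=2\bigl(V(r)\bigr)+\frac{(l+\tfrac d2-1)^2-\tfrac14}{r^2}.
\]
On the other side, $\e^{-\i\frac{\mu\pi}{2-\mu}\Lambda}$ acts on $Y_l$ by the scalar $\e^{-\i\frac{\mu\pi}{2-\mu}(l+\tfrac d2-1)}$. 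Thus the theorem is equivalent to the claim that
\[
2\sigma_l(0)=c_0-\frac{\mu\pi}{2-\mu}\Bigl(l+\frac d2-1\Bigr)+o(1)\quad\text{as }l\to\infty,
\]
which by the spectral theorem immediately gives $S(0)-\e^{\i c_0}\e^{-\i\frac{\mu\pi}{2-\mu}\Lambda}=K$ with $K$ compact (it kills the high‑$l$ sectors). So the whole proof is an asymptotic analysis of $\sigma_l(0)$ for large $l$.

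\textbf{Key steps.}
First I would set $\nu=l+\tfrac d2-1$ and rescale. Because $V\sim -\gamma r^{-\mu}$, the classical turning point $r_l$ of $V_l$ scales like $r_l\asymp\nu^{2/(2-\mu)}$ (balance $\nu^2/r^2$ against $2\gamma r^{-\mu}$), so I would substitute $r=r_l\rho$ to bring the equation to a form $-\nu^{-2}(\text{scaled }u)''+(\tilde V_{\rm eff}(\rho)+o(1))u=0$ with a $\nu$‑independent leading effective potential; this is exactly where the homogeneity of $V_1$ and the condition $\epsilon_2>1-\tfrac\mu2$ are used — the perturbation $V_2$ contributes $O(r^{-\mu-\epsilon_2})=o(\nu^2/r^2)$ on the relevant scale and does not affect the leading WKB phase, only an $o(1)$ error. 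Second, I would apply the standard connection‑formula/WKB analysis (as in the proof of Corollary \ref{cor:geomdef2b} and Lemma \ref{pri}) to read off $\sigma_l(0)$ as a Bohr–Sommerfeld–type integral
\[
\sigma_l(0)=\int_{r_l}^{\infty}\!\Bigl(\sqrt{-2V(r)-\tfrac{\nu^2}{r^2}}-\sqrt{-2V(r)}\Bigr)\d r+\int_{R_0}^{r_l}\!\sqrt{-2V(r)}\,\d r+\sqrt{2\cdot0\cdot R_0}-\frac{d-3+2l}{4}\pi+\text{(phase at turning pt)}+o(1),
\]
matching the normalization in the asymptotics of Corollary \ref{cor:geomdef2b} with $\lambda=0$, $g=\sqrt{-2V}$. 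Third — the computational heart — I would evaluate the first integral for the pure power $V_1=-\gamma r^{-\mu}$ using the substitution $r=r_l\rho$ with $r_l$ chosen so that $2\gamma r_l^{-\mu}=\nu^2/r_l^2$: it becomes $\nu$ times a convergent Beta‑function integral $\int_1^\infty(\sqrt{s^{-\mu}-s^{-2}}-s^{-\mu/2})\d s$, which evaluates to $-\tfrac{\pi}{2}\cdot\tfrac{\mu}{2-\mu}\cdot\nu/2$ up to the turning‑point correction; combined with the $-\tfrac{d-3+2l}{4}\pi$ term and the Maslov $\tfrac\pi4$, all the $\nu$‑linear pieces collapse to $-\tfrac{\mu\pi}{2-\mu}\cdot\tfrac\nu2$ and the $r$‑independent leftover is exactly $\tfrac12 c_0=\tfrac{2\sqrt{2\gamma}}{2-\mu}R_0^{1-\mu/2}+\int_{R_0}^\infty(\sqrt{-2V_1}-\sqrt{-2V})\d r$ — the second term here is precisely the finite correction coming from replacing $V_1$ by $V$ in the $\int_{R_0}^{r_l}$ piece, which converges because $V-V_1=O(r^{-\mu-\epsilon_2})$ is integrable at infinity. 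Finally I would collect the $o(1)$ errors (from $V_2$, from $V_3$'s compact support, from the WKB remainder — controlled as in \cite{Y1} or the proof of Corollary \ref{cor:geomdef2b}) and conclude.

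\textbf{Main obstacle.}
The delicate point is the uniformity of the WKB error in $l$: standard WKB on the half‑line gives an error that must be shown to be $o(1)$ \emph{uniformly}, in particular near the turning point $r_l$, where one needs Langer‑type corrections or an Airy comparison. I would handle this by the rescaling above, which moves the turning point to a fixed location $\rho=1$ with an effective Planck constant $\nu^{-1}\to0$, and then invoke the connection formula with error $O(\nu^{-1})$; the contribution of $V_2$ and $V_3$ to this rescaled equation is a perturbation of size $o(1)$ in the relevant norm, so it does not spoil the estimate. The bookkeeping of the constant $c_0$ — making sure the $R_0$‑dependent boundary terms from the phase normalization in Corollary \ref{cor:geomdef2b}, the lower limit of the WKB integral, and the definition of $\phi_{\rm sph}$ all combine correctly — is the other place where care is required, but it is routine once the scaling is fixed.
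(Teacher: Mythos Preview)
Your proposal is correct and follows essentially the same route as the paper: reduce via Corollary~\ref{cor:geomdef2b} to the large-$l$ asymptotics of the phase shift $\sigma_l(0)$, then carry out a one-dimensional WKB analysis with an Airy comparison near the turning point and evaluate the resulting scaling integral (the paper's Lemma~\ref{lem:end-polar}, your ``Beta-function integral''). The paper implements this via an explicit four-interval partition $I_1,\dots,I_4$ with model solutions and variation of parameters rather than a single semiclassical rescaling, and your displayed Bohr--Sommerfeld formula has a few sign/bookkeeping slips (e.g.\ the $\tfrac{d-3+2l}{4}\pi$ term enters with $+$, and the $\int_{R_0}^{r_l}$ piece appears with the opposite sign), but you flag this bookkeeping yourself and the final collapse to $-\tfrac{\mu\pi}{2(2-\mu)}\nu + \tfrac{c_0}{2}+o(1)$ is right.
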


Let us remark, as a first reduction of the proof of Theorem
\ref{thm:singrad}, that we can assume that $V_3=0$. This can readily
be seen by using  resolvent equations  in the representation
formula (\ref{eq:bndSS}) and Proposition
\ref{prop:resolvent_basic2}. Another manifestation is provided by Subsection
\ref{Quantum singularities for $V_2=0$} (including Remarks
\ref{remarks: extwe} \ref{it:ex1})): Clearly the term $\tilde S(0)$
that carries the  singularities is independent of $V_3=0$.
 
 The analysis will go through under slightly weaker conditions than
needed for Theorem  \ref{thm:singrad} (given that  $V_3=0$). Specifically, we
shall in this subsection impose the following

\begin{cond}
\label{assump:conditions11} The potential $V$ splits into a sum of
three spherically symmetric terms $V=V_1+V_2+V_3$, where all terms
$V_1,\,V_2,$ and $V_3$ are real and continuous functions  on $]0,\infty[$, $V_1(r)= -\gamma r^{-\mu}$ for $r\geq 1$,
 $V_2=O\big (r^{-1-
\tfrac \mu 2-\epsilon}\big)$ for some $\epsilon>0$, $V_1$ and $V_2$ vanish in a
neighbourhood of $r=0$,   $V_3$ has bounded support and $|V_3(r)|\leq C r^{-2+\kappa}$
for some constants $C,\kappa>0$.
\end{cond}

Under Condition \ref{assump:conditions11}, we can  {\it define} the
phase shift $\sigma_l(0)$ as follows: Fix  $l\in \N
\cup\{0\}$ and  fix $R_0\geq 0$ so large that $V(r)<0$ for all $r> R_0$. Then all real 
solutions zero energy of  the reduced Schr\"odinger equation on the
  half-line $]0,\infty[$
\begin{equation}\label{eq:efffpot}
 -u'' +V_lu=0;\;V_l(r)=2V(r)+\tfrac{(l+\tfrac d2-1)^2-4^{-1}}{r^2}
\end{equation} obey 
\begin{equation*}
  {u(r)\over r^{\tfrac{d-1}{2}}}-C{\sin \big (\int^r_{R_0}\sqrt{-2V_1(r')}\,\d r'+D\big ) \over
  (-2V_1(r))^{\tfrac{1}{4}}r^{\tfrac{d-1}{2}}}F(r>1)\in B^* _{s_0,0}
\end{equation*} for some $C>0$ and $D\in \R$ (can be seen from the WKB-analysis given
 in the bulk of  Subsection \ref{WKB-analysis}).
The {\it regular} solution is defined by the requirement \[\lim _{r\to
  0}r^{-l-\tfrac{d-1}{2}}u(r)=1.\]
 (The existence and uniqueness of the
regular solution  is usually proven  by studying an
integral equation of Volterra type, cf. \cite {Ne}.)
 Now we  {\it define} in terms of the constant $D$ for the regular solution
 \begin{equation}
   \label{eq:phase shift22}
   \sigma_l(0)=D+\int^\infty_{R_0}\Big(
   \sqrt{-2V_1(r')}-\sqrt{-2V(r')}\Big )\,\d r'+\tfrac{d-3+2l}{4}\pi.
 \end{equation}
Note that  (\ref{eq:phase shift22}) and Corollary \ref{cor:geomdef2b}
are consistent; in particular this justifies  the joint use   of the 
symbol $\sigma_l(0)$ in (\ref{eq:phase shift22}) and Corollary
\ref{cor:geomdef2b}, see Remark \ref{remark:formb} for a related discussion.

We shall show the following asymptotics:
\begin{prop}\label{prop:scatphase} Under Condition
  \ref{assump:conditions11},  
the phase shift  obeys 
\begin{align}
 &\sigma_l(0)=-\tfrac {\mu\pi} {2(2-\mu)}l+\frac c2+o(l^0),\label{eq:pphh}\\
  &\frac c2=-\tfrac {\pi \mu(d-2)} {4(2-\mu)}+\tfrac {2\sqrt{2\gamma}} {2-\mu}R_0^{1-\frac\mu 2}+\int^\infty_{R_0}\Big(
   \sqrt{-2V_1(r)}-\sqrt{-2V(r)}\Big )\,\d r.\nonumber\end{align}
\end {prop}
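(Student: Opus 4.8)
The plan is to derive the asymptotics \eqref{eq:pphh} by a one-dimensional WKB analysis of the regular solution of \eqref{eq:efffpot} in the regime $l\to\infty$. The key observation is that for large $l$ the centrifugal term $\tfrac{(l+d/2-1)^2-1/4}{r^2}$ dominates in a neighbourhood of the turning point, and it is natural to rescale. Since $V_1(r)=-\gamma r^{-\mu}$, the turning point $r_l$ for zero energy solves $2\gamma r_l^{-\mu}=(l+d/2-1)^2 r_l^{-2}$ (to leading order), i.e. $r_l\sim \big((l+d/2-1)^2/(2\gamma)\big)^{1/(2-\mu)}\to\infty$; in particular for $l$ large we stay in the region $r\geq 1$ where $V_1$ is exactly homogeneous, and the remainders $V_2$ and $V_3$ are negligible. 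First I would introduce $\nu=l+\tfrac d2-1$ and substitute $r=\nu^{2/(2-\mu)}s$, rescaling $u$ accordingly; this converts \eqref{eq:efffpot} into an equation with a large parameter whose effective potential is, to leading order, the homogeneous model $2\gamma s^{-\mu}\cdot(\text{scaling}) - s^{-2}$ plus controlled perturbations coming from $V_2$, $V_3$, and the $-1/4$ in the centrifugal term.

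The next step is to compute the WKB phase for the homogeneous model explicitly. The phase accumulated between the turning point and a large radius $r$ is $\int_{r_l}^r\sqrt{-V_l(r')}\,\d r'$ (with $V_l$ the effective potential), and the phase shift $\sigma_l(0)$ is read off by comparing with the reference phase $\int_{R_0}^r\sqrt{-2V_1}\,\d r' + \sqrt{2\lambda R_0}$ appearing in the Besov-space normalization of Corollary \ref{cor:geomdef2b}, together with the Maslov-type constant $\tfrac{d-3+2l}{4}\pi$ and the $D$-term. The crucial integral is of the type $\int_{s_0}^\infty\big(\sqrt{c\,s^{-\mu}-s^{-2}} - \sqrt{c}\,s^{-\mu/2}\big)\,\d s + \sqrt c\int_{r_l}^{r}s^{-\mu/2}\,\d s$-type expression; by the change of variable already used in Remark \ref{remark:oscill} and an explicit Beta-function evaluation one finds that the turning-point contribution produces precisely the linear-in-$l$ coefficient $-\tfrac{\mu\pi}{2(2-\mu)}$. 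Tracking the $\nu$-dependence: the rescaling $r_l\sim(\nu^2/2\gamma)^{1/(2-\mu)}$ fed into $\int_{R_0}^{r_l}\sqrt{-2V_1}=\tfrac{2\sqrt{2\gamma}}{2-\mu}(r_l^{1-\mu/2}-R_0^{1-\mu/2})$ yields a term proportional to $\nu$ (the $r_l^{1-\mu/2}$ piece) that must cancel against the corresponding piece of the WKB phase, leaving the $R_0^{1-\mu/2}$ constant, while the Maslov term contributes $\tfrac{2l}{4}\pi=\tfrac l2\pi$ linearly — combining these with the homogeneous-model turning-point coefficient gives $-\tfrac{\mu\pi}{2(2-\mu)}l$ as the net linear coefficient, and all $o(l^0)$ errors come from $V_2$, $V_3$, the subprincipal WKB terms, and the $-1/4$ correction.

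The constant term $c/2$ then assembles from four pieces: the Maslov phase $\tfrac{d-3}{4}\pi$ part (with the $-1/4$ shift it is this, not $\tfrac{d-1}{4}\pi$, which gives the $-\tfrac{\pi\mu(d-2)}{4(2-\mu)}$ contribution once combined with the $l$-linear Maslov term and the homogeneous coefficient — this is the delicate bookkeeping step), the explicit turning-point constant $\tfrac{2\sqrt{2\gamma}}{2-\mu}R_0^{1-\mu/2}$ from the integrated homogeneous phase evaluated at $R_0$, and the potential-difference integral $\int_{R_0}^\infty(\sqrt{-2V_1}-\sqrt{-2V})\,\d r$ which enters directly from the definition \eqref{eq:phase shift22} of $\sigma_l(0)$ and converges because $\sqrt{-2V_1}-\sqrt{-2V}=O(r^{-1-\epsilon})$ by Condition \ref{assump:conditions11}. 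I would organize the error control via a standard Langer-type transformation or by comparison with Bessel functions (whose phase shifts are known exactly), estimating the difference between the true solution and the model solution by a Volterra integral equation argument as in \cite{Ne}; the point is that the perturbation, after rescaling, is $o(1)$ uniformly on $[s_0,\infty)$ in an integrable sense, so its contribution to the phase is $o(l^0)$.

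The main obstacle I anticipate is not the WKB estimate itself but the precise bookkeeping of constants — matching the Maslov index, the $-1/4$ centrifugal correction, the reference-phase normalization built into the Besov-space characterization of Corollary \ref{cor:geomdef2b}, and the sign conventions in \eqref{eq:phase shift22} — so that the linear term comes out exactly $-\tfrac{\mu\pi}{2(2-\mu)}l$ and the constant exactly as stated. A secondary technical point is justifying the uniformity of the $o(l^0)$ remainder: one must show that the WKB approximation is valid uniformly down to and including the turning point (where the naive WKB breaks down), which is why a Langer modification or an Airy-function matching near $r_l$ is needed; fortunately, since the turning point recedes to infinity, the effective potential there is asymptotically exactly the homogeneous one, and the matching is with the known zero-energy behaviour of the model, making the remainder genuinely $o(1)$ as $l\to\infty$.
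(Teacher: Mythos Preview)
Your approach is essentially the same as the paper's: one-dimensional WKB analysis of the regular solution with Airy/Langer-type matching near the turning point $r_l\sim(\nu^2/2\gamma)^{1/(2-\mu)}$, followed by explicit evaluation of the model phase integral. The paper implements this via a concrete four-interval partition (comparison, exponential WKB, Airy, oscillatory WKB) to establish that the regular solution has the form $(-V_k)^{-1/4}\sin\big(\int_{r_0}^r\sqrt{-V_k}+\tfrac\pi4+o(k^0)\big)+O(r^{-\epsilon_k})$, and then evaluates the key integral $\int_1^\infty(\sqrt{r^{-\mu}-r^{-2}}-\sqrt{r^{-\mu}})\,\d r=\tfrac{2-\pi}{2-\mu}$ by elementary substitutions rather than Beta functions.
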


Clearly Theorem \ref{thm:singrad} is a consequence of   Proposition \ref{prop:scatphase}.

\begin{remark} \label{remark:formb}Note that for $0<\mu<2$ and  in dimension
  $d\geq2$, 
  $V(x)=-\gamma|x|^{-\mu}$ is an infinitesimally  form
  bounded 
  perturbation of $-\Delta$. Therefore, $H_\mu:=\Delta-\gamma|x|^{-\mu}$ is
  well-defined and self-adjoint (even though 
 Condition \ref{assump:conditions3} (\ref{it:assumption7})  may fail). 
(Actually, $H_\mu$ extends to an analytic family of operators for
  $\Re\mu\in]0,2[$.)  It may be  tempting to claim that in the cases where
  operator boundedness fails 
  one can still follow the procedures of 
  Section \ref{Propagation of singularities at zero
    energy},  i.e. use the function 
$\phi^+_{\sph}$ in (\ref{eq:tilll}) as  the zero energy
solution of the eikonal equation to construct and analyse the zero energy scattering
  matrix. However, since the resolvent estimates from \cite{FS} are only
  derived for operator bounded potentials, these estimates would need
  to be reconsidered. On the other hand, since this potential
  $V(x)=-\gamma|x|^{-\mu}$, $0<\mu<2$, indeed fulfills  Condition
  \ref{assump:conditions11}, we can use (\ref{eq:phase shift22})  in
  this case to  
  define the zero energy scattering
  matrix  (by 
the formula $S(0)Y=\e^{\i2\sigma_l(0)}Y$  for
  spherical harmonics $Y$ of order $l$). 
 Based on this definition and Proposition \ref{prop:scatphase}  it is
natural to conjecture that it equals {\em exactly} $S(0)=\e^{\i
  c_0}\e^{-\i\frac{\mu\pi}{2-\mu}\Lambda}$, or alternatively, that the
terms  $o(l^0)$ in 
Proposition \ref{prop:scatphase} vanishes identically. We leave this as an open problem for
the interested reader.
\end{remark}

\subsection{One-dimensional WKB-analysis}
\label{WKB-analysis}

This subsection is devoted to the main part of the proof of 
 Proposition \ref{prop:scatphase}. It is based on detailed 1-dimensional analysis.

For convenience, let us note that the effective potential
$V_l$ of \eqref{eq:efffpot} for $V_2=V_3=0$
is given by 
\[V_l(r)=2V_1(r)+\tfrac{k(k+1)}{r^2}=
-2\gamma r^{-\mu}+\tfrac{k(k+1)}{r^2},\;k:=l+\tfrac{d-3}{2}.\] 
  Abusing 
slightly notation, we shall henceforth denote this  expression 
(whether $V_2=V_3=0$ or not) by
  $V_k$ and similarly $\sigma_k(0):=\sigma_l(0)$. 

In the case  $V_2=V_3=0$, 
there is a unique zero, say, denoted $r_0$,  of the effective potential
$V_k$. 
Explicitly, 
\begin{equation}
  \label{eq:zeroefff}
 V_k(r_0)=0\text{ for }r_0= \big (\tfrac{k(k+1)}{2\gamma}\big )^{\tfrac{1}{2-\mu}}.
\end{equation} For later applications, let us notice that 
\begin{equation}
  \label{eq:zeroefffa}
 V_k'(r_0)=-(2-\mu)\tfrac{k(k+1)}{r_0^3}.
\end{equation}  
Clearly $V_k$ is positive to the left of $r_0$ and  negative 
to the 
right of $r_0$.

\begin{prop}\label{prop:main1}
Under Conditions \ref{assump:conditions11}, the regular solution
(up to multiplication by  a positive constant) satisfies
\begin{equation}
  \label{eq:finalans}
u(r)= (-V_k)^{-\frac 14}(r)\Big(\sin \Big (\int_{r_0}^{r} \sqrt{-V_k(r')}\,\d
r'+\tfrac\pi 4+o(k^0)\Big) +O(r^{-\epsilon_k})\Big),
\end{equation} where $o(k^0)$ signifies a vanishing term that is
{\it independent}  of $r$ and $\epsilon_k>0$.
\end{prop}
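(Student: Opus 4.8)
The statement is a standard one-dimensional WKB (Liouville--Green) connection result for the radial Schr\"odinger equation $-u'' + V_k u = 0$ at energy zero, where $V_k$ has a single turning point $r_0$ (determined by \eqref{eq:zeroefff} when $V_2=V_3=0$, and a nearby turning point in the general case). The plan is to divide the half-line into three regions: a near-origin region $r$ close to $0$ where $V_3$ and the centrifugal term dominate and the regular solution is controlled by the classical Volterra-type analysis; a ``classically forbidden'' region to the left of $r_0$ where $V_k>0$ and the regular solution grows like an increasing exponential; and the ``classically allowed'' region $r>r_0$ where $V_k<0$ and one gets the oscillatory asymptotics $(-V_k)^{-1/4}\sin(\int_{r_0}^r\sqrt{-V_k}\,\d r' + \pi/4 + o(k^0))$. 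The three regions must be glued: first I would patch the forbidden and allowed regions across the turning point using the Airy-function comparison (this is exactly where the phase $\pi/4$ and the specific connection formula come from), and then I would match the near-origin behaviour of the regular solution to the decaying-recessive solution of the forbidden-region WKB problem, which fixes the overall (positive) multiplicative constant and shows there is no extra phase contribution. The key point to extract is that the error terms and the correction to the phase are \emph{uniform in $k$} in the appropriate sense, i.e. the phase correction is $o(k^0)$ and independent of $r$, while the relative error is $O(r^{-\epsilon_k})$ with $\epsilon_k>0$.

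\textbf{Key steps, in order.} (1) Rescale. Using the scaling invariance of $-2\gamma r^{-\mu}+k(k+1)r^{-2}$ under $r\mapsto r_0\rho$ (with $r_0$ as in \eqref{eq:zeroefff}), write $V_k(r)=r_0^{-2}k(k+1)\,W(\rho)$ with $W(\rho)=\rho^{-2}-\rho^{-\mu}\cdot$(const), a fixed profile with a simple zero at $\rho=1$ and $W'(1)<0$ by \eqref{eq:zeroefffa}; the large parameter is $\sqrt{k(k+1)}/r_0 \asymp k^{1+\mu/2-\mu/(2-\mu)}\to\infty$. This makes the problem a genuine semiclassical one with a single simple turning point. (2) Turning-point analysis. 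Apply the Langer transformation / Airy comparison (as in, e.g., Olver's treatment, or the reference \cite{Ne} for the Volterra set-up) to obtain a uniform asymptotic representation of the solution recessive at $r=+0$ in a full neighbourhood of $r_0$; this yields the allowed-region form with the phase $+\pi/4$ and a relative error that is a negative power of the large parameter, hence $O(r^{-\epsilon_k})$ after undoing the scaling. (3) Perturbation by $V_2$ and $V_3$. Incorporate $V_2=O(r^{-1-\mu/2-\epsilon})$ and the compactly supported $V_3$ with $|V_3(r)|\le C r^{-2+\kappa}$ by a variation-of-constants (Volterra) argument on top of the $V_2=V_3=0$ solution; $V_3$ only shifts the regular solution near the origin (absorbed into the constant and into an $o(k^0)$ phase), and the integrability of $V_2\cdot(-V_k)^{-1/2}$ at infinity — guaranteed by $\epsilon>0$ — ensures the phase integral $\int_{r_0}^r\sqrt{-V_k}$ differs from its $V_1$-only counterpart only by a term with a limit, i.e. by $o(k^0)$ plus a decaying remainder. (4) Normalization. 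Fix the sign and size of the constant by matching $u(r)\sim r^{l+(d-1)/2}$ as $r\to 0$ through the forbidden region, where the WKB recessive solution behaves like $(V_k)^{-1/4}\exp(-\int_r^{r_0}\sqrt{V_k})$; since everything here is real and positive, this produces a positive constant and no additional phase.

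\textbf{Main obstacle.} The routine parts are the Volterra estimates for $V_2,V_3$ and the standard single-turning-point WKB formulas. The real work — and the place where care is needed — is establishing the \emph{uniformity in $k$}: one must show that the turning point $r_0=r_0(k)$ moves in a controlled way, that the rescaled profile $W$ is $k$-independent so that the Airy comparison constants do not degenerate, and that the accumulated phase and amplitude errors, when expressed back in the original variable $r$, are genuinely $o(k^0)$ (for the phase, independently of $r$) and $O(r^{-\epsilon_k})$ with a \emph{positive} exponent $\epsilon_k$. Getting an honest bound $\epsilon_k>0$ requires tracking how the semiclassical small parameter $r_0/\sqrt{k(k+1)}$ compares with powers of $r$ along the allowed region, and checking that the $V_2$-contribution to the phase, $\int^\infty (\sqrt{-2V_1+\mathrm{centrifugal}} - \sqrt{-2V+\mathrm{centrifugal}})$, converges with a remainder that is a negative power of $r$. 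Once uniformity is in hand, \eqref{eq:finalans} follows, and one then reads off $\sigma_k(0)$ from the definition \eqref{eq:phase shift22} by comparing the phase $\int_{r_0}^r\sqrt{-V_k} + \pi/4$ with $\int_{R_0}^r\sqrt{-2V_1}+D$; evaluating $\int_{r_0}^r\sqrt{-V_k}-\int_{R_0}^r\sqrt{-2V_1}$ asymptotically in $k$ (an elementary but slightly delicate integral, using the explicit $r_0$ from \eqref{eq:zeroefff}) produces the linear-in-$l$ term $-\tfrac{\mu\pi}{2(2-\mu)}l$ and the constant $c/2$ of Proposition \ref{prop:scatphase}, whence Theorem \ref{thm:singrad}.
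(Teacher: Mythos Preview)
Your overall strategy is sound and is a recognized alternative to the paper's route, but one computation needs fixing: after the rescaling $r=r_0\rho$ the equation becomes $-u_{\rho\rho}+k(k+1)W(\rho)u=0$ with $W(\rho)=\rho^{-2}-\rho^{-\mu}$, so the semiclassical large parameter is $\sqrt{k(k+1)}\sim k$, \emph{not} $\sqrt{k(k+1)}/r_0$; the latter is $\asymp k^{-\mu/(2-\mu)}\to 0$ since $r_0\asymp k^{2/(2-\mu)}$. With that corrected, the uniform Langer/Olver turning-point approximation you propose is legitimate.

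The paper proceeds differently and more explicitly. Instead of rescaling and invoking a uniform Airy theorem, it partitions $]0,\infty[$ into four $k$-dependent intervals: $I_1=]0,r_0k^{-\epsilon_1/(2-\mu)}]$, $I_2$ up to $r_0(1-k^{-\epsilon_2})$, a narrow Airy window $I_3$ up to $r_0(1+k^{-\epsilon_3})$, and $I_4$ beyond. On $I_1$ a Riccati comparison with the pure centrifugal potential controls the regular solution; on $I_2$ and $I_4$ the exponential and oscillatory WKB ans\"atze $\phi_j^\pm$ are used; on $I_3$ the explicit Airy solutions of the \emph{linearized} potential $V_k'(r_0)(r-r_0)$ serve as the comparison system. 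The coefficients $(a_j^+,a_j^-)$ in $u=a_j^+\phi_j^++a_j^-\phi_j^-$ are propagated by variation of parameters and matched across interfaces via explicit Wronskians, with the balance of errors forcing $\tfrac49<\epsilon_2=\epsilon_3<\tfrac23$. The perturbation $U=2V_2+2V_3$ is then added interval by interval via the same Volterra bound \eqref{eq:finalest}. Your approach packages the $I_2$--$I_3$--$I_4$ matching into a single uniform approximation, which is tidier; the paper's is more elementary (no Langer change of variable, only first-order Taylor of $V_k$ at $r_0$) and makes every error explicit, at the price of heavier bookkeeping. One point you pass over too quickly: the $O(r^{-\epsilon_k})$ remainder does not follow merely from ``undoing the scaling''; it requires convergence of the WKB error-control integral $\int^\infty\bigl(|V_k'|^2|V_k|^{-5/2}+|V_k''|\,|V_k|^{-3/2}\bigr)\d r$, which the paper verifies directly on $I_4$ (and analogously for the $V_2$-contribution).
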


\subsubsection{Scheme of proof of Proposition \ref{prop:main1}} \label{Scheme of proof}
 We shall first concentrate on the case where $V_2=V_3=0$; the general case
will be treated by the same scheme (to be discussed later). 

We introduce a partition of $]0,\infty[$ into four subintervals given
as follows in terms of
$\epsilon_1,\epsilon_2,\epsilon_3\in (0,1]$  to be fixed later:

\noindent {\bf 1.} $ I_1=]0,r_1],\;r_1=r_0
k^{-\tfrac{\epsilon_1}{2-\mu}}.$ 

\noindent {\bf 2.} $
I_2=]r_1,r_2],\;r_2=r_0(1-k^{-\epsilon_2})
.$ 

\noindent {\bf 3.} $
I_3=]r_2,r_3],\;r_3=r_0(1+k^{-\epsilon_3})
.$ 

\noindent {\bf 4.} $
I_4=]r_3,\infty[.$ 

In  each of the intervals $I_j$ where $ j=2,3$ or $ 4$,
 we shall specify a
certain model Schr\"odinger equation together with its
two linearly independent solutions $\phi_j^\pm$.
 In terms
of these, we can  construct exact solutions to the reduced equation 
\begin{equation}\label{eq:sch11}
 -u'' +V_ku=0
\end{equation} by the method of variation of parameters,
cf. for example \cite{HS}. Our  subject  of study is formulas for the regular solution
$u= u_k$. Specifically, in the interval $I_1$ we shall use a comparison
argument to get estimates of the regular solution at  $r=r_1$. Then we
shall use a  connection formula to get estimates of the
``coefficients'' $a_2^+$ and $a_2^-$ of  the ansatz 
\begin{equation}
  \label{eq:ansa}
 u=a_j^+\phi_j^++a_j^-\phi_j^- 
\end{equation} with $j=2$ at the same point $r=r_1$. Next, using
the ODE for $a_2^+$ and $a_2^-$ we shall  derive estimates of these
quantities at $r=r_2$. Proceeding similarly 
 we  shall consecutively represent $u$ by (\ref{eq:ansa}) on $I_3$ and
 $I_4$ using connection formulas at $r_2$ and $ r_3$,
 and eventually get estimates in the interval $I_4$, and whence derive
 the relevant asymptotics of $u$.

Suppose $\phi^-$ and $\phi^+$ solve the same one-dimensional
Schr\"odinger equation, say, 
\begin{equation*}
 -\phi'' +A\phi=0.
\end{equation*}
The variation of parameter method for the equations (\ref{eq:sch11})
and (\ref{eq:ansa}) yields
\begin{equation}\label{kap1}
\left[\begin{array}{cc}\phi^+&\phi^-\\
\frac{\d}{\d\tau}\phi^+&\frac{\d}{\d\tau}\phi^-
\end{array}\right]\frac{\d}{\d\tau}
\left[\begin{array}{c}a^+\\a^-\end{array}\right]
=(V-A)
\left[\begin{array}{cc}0&0\\
\phi^+&\phi^-
\end{array}\right]
\left[\begin{array}{c}a^+\\a^-\end{array}\right]
.\end{equation}
(We have
omitted the subscript $j$). We introduce the notation
$W(\phi^-,\phi^+)$ for the Wronskian
$W(\phi^-,\phi^+)=\phi^-\tfrac{\d}{\d r}\phi^{+}-\phi^+\tfrac{\d}{\d r}\phi^{-}$.
Then we write $B=V_k-A$ and  transform \eqref{kap1} into
\begin{equation*} 
 \dfrac{\d}{\d r}\Big ({a^+ \atop a^-}\Big )=N \Big ({a^+ \atop a^-}\Big ),
\end{equation*} where  
\begin{displaymath} 
 N= \frac {B}{W(\phi^-,\phi^+)}\left ( \begin{array}{cc}
\phi^-\phi^+ &(\phi^-)^2\\
-(\phi^+)^2&-\phi^-\phi^+ 
\end{array}\right ).
\end{displaymath}  

For a positive increasing continuous function $f$ on $I$ (to be specified),
 we introduce  the
matrix $T=\diag
(1, f^{-1})$. We compute 
\begin{displaymath} 
 TNT^{-1}= \frac {B}{W(\phi^-,\phi^+)}\left ( \begin{array}{cc}
\phi^-\phi^+ &f(\phi^-)^2\\
-f^{-1}(\phi^+)^2&-\phi^-\phi^+ 
\end{array}\right ). 
\end{displaymath} 
 Introducing the operator $(M_jz)(r)=\int_{r_{j-1}}^r N_j(r')z(r')\,\d
 r',\,j\geq2, $
acting on continuous functions $z(\cdot):I_j\to
\R^2$, the
above ODE is solved by
\begin{equation*} 
 \Big ({a^+_j \atop a^-_j}\Big )(r)- z_j=\sum _{m=1}^\infty M_j^m z_j;\;z_j=\Big ({a^+_j \atop a^-_j}\Big )(r_{j-1}).
\end{equation*} 
 Whence we have   the bound  
\begin{equation} \label{eq:decom8}
 \Big \|T_j(r) \Big \{\Big ({a^+_j \atop a^-_j}\Big )(r)- z_j\Big \} \Big \|\leq \sum _{m=1}^\infty \Big
\| \Big(\big (T_jM_jT_j^{-1}\big )^{m} T_jz_j\Big)(r)\Big \|;
\end{equation} to the right  $T_j$ is considered as an operator acting as $(T_jz)(r')=(T_j)(r')z(r')$.
 Using that $f_j$ is increasing, we can estimate
\begin{equation*}
  \|(T_jM_jT_j^{-1}z)(r)\|\leq \int
  _{r_{j-1}}^r\big \|(T_jN_jT_j^{-1})(r')\big \|\,\|z(r')\|\,\d r',
\end{equation*} which applied repeatedly in (\ref{eq:decom8}) yields the following bound  for
$r\in I_j$: 
\begin{align} \label{eq:finalest}
 &\Big \|T_j(r) \Big \{\Big ({a^+ \atop a^-}\Big )(r)- z_j\Big \} \Big \| \nonumber\\&\leq \Big \{ \Big (\exp \int
_{r_{j-1}}^r\big \|(T_jN_jT_j^{-1})(r')\big \|\,\d r'\Big )-1\Big \}\sup_{\tilde
  r\in I_j}\|T_j(\tilde r)z_j\|\nonumber\\&= \Big \{ \Big (\exp \int
_{r_{j-1}}^r\big \|(T_jN_jT_j^{-1})(r')\big \|\,\d r'\Big )-1\Big \}\,\|z_j\|.
\end{align}
 
We specify in the following $\phi_j^\pm$, $ B_j$  and $f_j$ for $j=2,3$ and $4$; in all cases $W(\phi_j^-,\phi_j^+)=1$:

\noindent{\bf Ad interval $I_2$.} We define 
\begin{subequations}
\begin{equation}
  \label{eq:81}
 \phi_2^\pm(r)=2^{-\frac12}V_k^{-\frac 14}\e^{\pm \int_{r_1}^r\sqrt {V_k}\,\d r'}, 
\end{equation}  compute
\begin{equation}
  \label{eq:81a}
  B_2=-\big (V^{-\frac 14}_k\big )''V^{\frac
    14}_k=-\tfrac5{16}\Big(\frac{V_k'}{V_k}\Big )^2+\tfrac1{4}\frac{V_k''}{V_k}
\end{equation} and let 
\begin{equation}
 f_2(r)=\frac{\phi_2^+(r)}{\phi_2^-(r)}=\e^{2 \int_{r_1}^r\sqrt {V_k}\,\d r'}.\label{eq:81b}
\end{equation}
\end{subequations}

\noindent{\bf Ad interval $I_3$.} We define (in terms of the Airy function,
cf. \cite {HS} and \cite [Definition 7.6.8]{Ho1})
\begin{subequations}
\begin{align}
\label{eq:838a}
 \phi_3^+(r)=&\sqrt \pi \zeta^{-1}\Ai \big
 (-\zeta^2(r-r_0)\big);\;\zeta:=|V'_k(r_0)|^{\frac16},\\
 \phi_3^-(r)=&\sqrt \pi \e ^{\tfrac{\pi\i}{6}}\zeta^{-1}\Ai \big (-\zeta^2\e ^{\tfrac{2\pi\i}{3}}(r-r_0)\big)\nonumber\\&+\sqrt \pi \e ^{-\tfrac{\pi\i}{6}}\zeta^{-1}\Ai \big (-\zeta^2\e ^{-\tfrac{2\pi\i}{3}}(r-r_0)\big),\label{eq:839a} 
\end{align}  compute
\begin{equation}
  \label{eq:83aa}
  B_3(r)=V_k(r)-\big (V_k(r_0)+V_k'(r_0)(r-r_0)\big )=\int
  _{r_0}^r(r-\tilde{ r}) V''_k(\tilde{r})\,\d \tilde{r}
\end{equation} and let 
\begin{equation}
  f_3(r)=\begin{cases}\exp \big (-\frac 43 \zeta^3 (r_0-r)^{\frac32}\big ),
&\text{ if
    }r<r_0;\\  1, &\text{ if
    }r\geq r_0. \end {cases}\label{eq:83ba}
\end{equation}
\end{subequations}

\noindent{\bf Ad interval $I_4$.} We define 
\begin{subequations}
\begin{align}
\label{eq:838}
 &\phi_4^+(r)=(-V_k)^{-\frac 14}\sin \Big (\int _{r_0}^r \sqrt {-V_k}\,\d
 r'+\frac \pi 4\Big),\\& \phi_4^-(r)=(-V_k)^{-\frac 14}\cos \Big (\int _{r_0}^r \sqrt {-V_k}\,\d
 r'+\frac \pi 4\Big)\label{eq:839}, 
\end{align}  compute
\begin{equation}
  \label{eq:83a}
  B_4=-\big ((-V_k)^{-\frac 14}\big )''(-V_k)^{\frac
    14}=-\tfrac5{16}\Big(\frac{V_k'}{V_k}\Big )^2+\tfrac1{4}\frac{V_k''}{V_k}
\end{equation} and let 
\begin{equation}
  f_4=1.\label{eq:83b}
\end{equation}
\end{subequations}

\subsubsection{Details of proof  of Proposition \ref{prop:main1}}
\label{ Details of proof} We start implementing the scheme outlined  in Subsubsection
\ref{Scheme of proof}. 

 In the interval $I_1$ we shall use  a standard comparison argument. With $V_k$
 replaced by $V=\tfrac{\tilde k(\tilde k+1)}{r^2}$, the regular solution is given by the
 expression $u=r^{\tilde k+1}$ and the corresponding Riccati equation
 \begin{equation}
   \label{eq:ric}
  \psi'=V-\psi^2 
 \end{equation}  is solved by $\psi=\tfrac{\phi'}{\phi}=\tfrac
 {\tilde k+1}{r}$. 

We fix
$\epsilon_1\in ]0,1]$ (actually $\epsilon_1>0$ can be chosen arbitrarily) 
 and notice  the following uniform bound in $r\in I_1$:
\begin{equation}
\label{eq:eps11} V_k(r)= \tfrac{k(k+1)}{r^2}\Big (1+O\big (k^{-\epsilon_1}\big
)\Big ).
\end{equation} 

Using (\ref{eq:eps11}), we can find $C>0$ such that with $k^\pm:=k(1\pm
Ck^{-\epsilon_1})$ and $V^\pm_k(r):=\tfrac{k^\pm(k^\pm+1)}{r^2}$ there are 
estimates
\begin{equation*}
 V_k(r)
\begin {cases}\leq V^+_k(r)\\ \geq  V^-_k(r)
\end {cases},\;r\in I_1.
\end{equation*}

Now, by using \cite[Theorem 1.8]{BR} and  the Riccati equation, it follows that the regular
solution $u$ of (\ref{eq:sch11}) is positive in  $I_1$, and that
$v:=\tfrac{u'}{u}$ obeys the bounds  
\begin{equation*}
  v(r)\begin {cases}\leq \tfrac
 {k^++1}{r}\\ \geq  \tfrac
 {k^-+1}{r}
\end {cases},\;r\in I_1.
\end{equation*}  
We conclude  the uniform bound
\begin{equation}
  \label{eq:8ric}
  v(r)=\tfrac
 {k+1}{r}\Big (1+O\big (k^{-\epsilon_1}\big
)\Big ),\;r\in I_1.
\end{equation} 

The connection formula at $r=r_1$ reads
\begin{equation}
  \label{eq:connec}
  c_j\Bigg ({1\atop v}\Bigg)_{r=r_{j-1}}=\Bigg ({a_j^+\phi_j^++a_j^-\phi_j^-\atop a_j^+(\phi_j^+)'+a_j^-(\phi_j^-)'}\Bigg)_{r=r_{j-1}},\;j=2.
\end{equation} 

Obviously, (\ref{eq:connec}) is solved for the coefficients by 
\begin{equation}
  \label{eq:connec2}
  \Bigg ({a_j^+\atop a_j^-}\Bigg)_{r=r_{j-1}}=\frac{c_j}{W(\phi_j^-,\phi_j^+)}\Bigg ({(-\phi_j^-)'+\phi_j^-v\atop (\phi_j^+)'-\phi_j^+v}\Bigg)_{r=r_{j-1}},\;j=2.
\end{equation}

 Next, from (\ref{eq:81}) we compute
 \begin{equation}\label{eq:derv2}
 (\phi_2^\pm)'=\Big (\pm \sqrt {V_k}-\frac14 \frac{V_k'}{V_k}\Big )\phi_2^\pm.  
 \end{equation} We substitute  these expressions and (\ref{eq:8ric})
 in the right hand side of 
 (\ref{eq:connec2}) and obtain 
\begin{equation}
  \label{eq:connec3}
  \Bigg ({a_2^+(r_1)\atop a_2^-(r_1)}\Bigg)=c_2\frac{2k}{r_1} \Bigg
  ({1+O\big (k^{-\epsilon_1} \big ) \atop O\big (k^{-\epsilon_1}\big)}\Bigg ).
\end{equation}

To apply (\ref{eq:finalest}), we notice that 
\begin{displaymath} 
 T_2N_2T^{-1}_2= {B_2}\phi^-_2\phi^+ _2\left ( \begin{array}{cc}
1&1\\
-1&-1 
\end{array}\right )={B_2}O\big (V_k^{-\frac 12}\big ). 
\end{displaymath}

Whence (for the first inequality below we assume  that the integral is bounded in
$k$ so that the inequality $\exp x-1\leq Cx$ applies -- this will be 
justified by (\ref{eq:reqeps2})),
\begin{align} \label{eq:finalest2}
 &\Big \|T_2(r_2) \Big \{\Big ({a^+_2 \atop a^-_2}\Big )(r_2)- \Big ({a^+_ 2 \atop a^-_2}\Big )(r_1)\Big \} \Big \| \nonumber\\& =\Big \{ \big (\exp \int
_{r_{1}}^{r_2}\Big |\Big (-\tfrac5{16}\Big(\frac{V_k'}{V_k}\Big
)^2+\tfrac1{4}\frac{V_k''}{V_k}\Big )O\big (V_k^{-\frac 12}\big) \Big
|\,\d r'\big )-1\Big \}O\Big (\frac{k}{r_1}\Big)\nonumber\\& \leq C_1  \tfrac{k}{r_1}r_0\int
_{r_1/r_0}^{r_2/r_0}\Bigg (\frac{\big
  (V_k'\big)^2}{V_k^{\frac52}}+\frac{\big |V_k''\big
  |}{V_k^{\frac32}}\Bigg )\,\d s\;(\text{changing variables }r'=r_0s)\nonumber\\& \leq C_2 r_1^{-1} \int
_{r_1/r_0}^{r_2/r_0}\Bigg (\frac{s^{-6}}{\big
    (s^{-2}-s^{-\mu}\big )^\frac52}+\frac{s^{-4}}{\big
    (s^{-2}-s^{-\mu}\big )^\frac32}\Bigg )\,\d s\nonumber\\&=C_2
  r_1^{-1} \Bigg (\int
_{1/2}^{r_2/r_0}\cdots \,\d s+\int
_{r_1/r_0}^{1/2}\cdots\,\d s\Bigg )\nonumber\\& \leq C_3  r_1^{-1} \max \Big (\int
_{1/2}^{r_2/r_0}(1-s^{2-\mu}\big )^{-\frac52}\,\d s,\int
_{r_1/r_0}^{1/2}s^{-1}\,\d s\Big )\nonumber\\& \leq C_4
k^{\frac32 \epsilon_2-1}\,\tfrac{k}{r_1};
\end{align} we need here
\begin{equation}
  \label{eq:reqeps2}
  \tfrac32 \epsilon_2-1<0.
\end{equation}

We conclude by combining  (\ref{eq:connec3}) and (\ref{eq:finalest2}):
\begin{equation}
  \label{eq:connec3a}
  \Bigg ({a_2^+(r_2)\atop a_2^-(r_2)}\Bigg)=c_2\frac{2k}{r_1} \Bigg
  ({1+O\big (k^{-\epsilon_1}\big ) +O\Big ( k^{\frac32 \epsilon_2-1}\Big) \atop O\big (k^{-\epsilon_1}\big)+\e^{2\int_{r_1}^{r_2}\sqrt {V_k}\,\d r'}O\Big ( k^{\frac32 \epsilon_2-1}\Big)}\Bigg ).
\end{equation}

Next we repeat the above procedure  passing from the interval $I_2$ to
$I_3$. 

The first issue is the connection formula (\ref{eq:connec}) with
$j=2$ replaced by $j=3$. The left hand side can be estimated using
(\ref{eq:derv2}), (\ref{eq:connec3a}) and the following estimates
(where (\ref{eq:reqeps2}) is used):
\begin{align}
  \label{eq:8est3a}
   \sqrt{V_k(r_2)}&=\frac{\sqrt{k(k+1)}}{r_2}\Big
  (1-\big(1-k^{-\epsilon_2}\big )^{2-\mu} \Big)^\frac12\nonumber\\
&=\frac{k}{r_0}(2-\mu)^{\frac12} k^{-\tfrac{\epsilon_2}{2}}\Big
  (1+O\big ( k^{-\epsilon_2}\big)\Big),\\ \frac{V_k'(r_2)}{V_k(r_2)}&=\frac{O\Big (\frac{k^2}{r_2^3}\Big )}{V_k(r_2)}=r_2^{-1}O\big ( k^{\epsilon_2}\big).\label{eq:8est3b}
\end{align} Notice that (\ref{eq:8est3a})  dominates
(\ref{eq:8est3b}) (by (\ref{eq:reqeps2}) again), so that 
\begin{equation*}
 \Big(\sqrt {V_k}-\frac14 \frac{V_k'}{V_k}\Big)(r_2) =(2-\mu)^{\frac12} \frac{k}{r_0}k^{-\frac{\epsilon_2}{2}}\Big
  (1+O\big ( k^{-\epsilon_2}\big)\Big).
\end{equation*} We conclude that 
\begin{align}
 v(r_2)&=\frac{(\phi_2^+)'(r_2)}{\phi_2^+(r_2)}\Big(1+O\big ( k^{\frac32 \epsilon_2-1}\big)\Big)\nonumber\\&=(2-\mu)^{\frac12} \frac{k}{r_0}k^{-\tfrac{\epsilon_2}{2}}\Big
  (1+O\big ( k^{-\epsilon_2}\big)+O\big ( k^{\frac32 \epsilon_2-1}\big)\Big).\label{eq:ddom}
\end{align} 
 
By  (\ref{eq:connec}) and (\ref{eq:connec2}) with
$j=2$ replaced by $j=3$, up to multiplication by a positive constant, 
\begin{equation}
  \label{eq:connec255}
  \Bigg ({a_3^+\atop a_3^-}\Bigg)_{r=r_{2}}=\Bigg ({(-\phi_3^-)'+\phi_3^-v\atop (\phi_3^+)'-\phi_3^+v}\Bigg)_{r=r_{2}}.
\end{equation} It remains to examine the asymptotics of
$\phi_3^\pm$ and their derivatives at $r_2$. For that we notice the
asymptotics as $r-r_0\to -\infty$, cf.   \cite [Appendix B]{HS} and \cite[(7.6.20)]{Ho1},
\begin{subequations}
\begin{align}
  \label{eq:asymp1-}
 \phi_3^+&=  \tfrac{\exp \big(
-\tfrac23 \zeta^3(r_0-r)^{\frac 32}\big)}{2\zeta^{\frac32}(r_0-r)^{\frac 14}}\Big(1+O\big( \zeta^{-3}(r_0-r)^{-\frac 32}\big)\Big),\\ 
 (\phi_3^+)'&=  \zeta^3(r_0-r)^{\frac12}\tfrac{\exp \big(
-\tfrac23 \zeta^3(r_0-r)^{\frac 32}\big)}{2\zeta^{\frac32}(r_0-r)^{\frac 14}}\Big(1+O\big( \zeta^{-3}(r_0-r)^{-\frac 32}\big)\Big),\label{eq:asymp1b-}\\
 \phi_3^-&=  \tfrac{\exp \big(
\tfrac23 \zeta^3(r_0-r)^{\frac 32}\big)}{\zeta^{\frac32}(r_0-r)^{\frac 14}}\Big(1+O\big( \zeta^{-3}(r_0-r)^{-\frac 32}\big)\Big),\label{eq:asymp1c-}\\
(\phi_3^-)'&=  -\zeta^3(r_0-r)^{\frac12}\tfrac{\exp \big(
\tfrac23 \zeta^3(r_0-r)^{\frac 32}\big)}{\zeta^{\frac32}(r_0-r)^{\frac 14}}\Big(1+O\big( \zeta^{-3}(r_0-r)^{-\frac 32}\big)\Big)\label{eq:asymp227-}.
\end{align}
\end{subequations}
  Since $\zeta^3(r_0-r_2)^{\frac 32}\asymp
 \sqrt{2-\mu}k^{1-\frac 32\tfrac{\epsilon_2}{2-\mu}}$,
 cf. (\ref{eq:zeroefffa}),   these  asymptotics
  are applicable. By the same
 computation, (\ref{eq:ddom}) can be rewritten as 
\begin{equation}
 v(r_2)=\zeta^3(r_0-r_2)^{\frac12} \Big
  (1+O\big ( k^{-\epsilon_2}\big )+O\big ( k^{\frac32 \epsilon_2-1}\big)\Big).\label{eq:ddom2}
\end{equation}
Whence, in 
 conjunction (\ref{eq:connec255}), we obtain 
(up to multiplication by a positive constant) 
\begin{equation}
  \label{eq:connec255uu}
  \Bigg ({a_3^+(r_2)\atop a_3^-(r_2)}\Bigg)=\Bigg ({\exp \big(
\frac23 \zeta^3(r_0-r_2)^{\frac 32}\big) \Big
  (1+O\big ( k^{-\epsilon_2}\big)+O\big ( k^{\frac32 \epsilon_2-1}\big)\Big)\atop \exp \big(
-\frac23 \zeta^3(r_0-r_2)^{\frac 32}\big)\Big
  (O\big ( k^{-\epsilon_2}\big)+O\big ( k^{\frac32 \epsilon_2-1}\big)\Big)}\Bigg).
\end{equation} 

Next, to apply (\ref{eq:finalest}) with $j=3$ we need 
the following asymptotics of
$\phi_3^\pm$ and their derivatives  as $r-r_0\to +\infty$, cf.   \cite [Appendix B]{HS} and \cite[(7.6.20) and (7.6.21)]{Ho1}:
\begin{subequations}
\begin{align}
  \label{eq:asymp1-9}
 \phi_3^+&=  \zeta^{-\frac32}(r-r_0)^{-\frac 14}\Big(\sin \big(
\tfrac23 \zeta^3(r-r_0)^{\frac 32}+\tfrac\pi 4\big)+O\big( \zeta^{-3}(r-r_0)^{-\frac 32}\big)\Big),\\ 
 (\phi_3^+)'&=  \zeta^{\frac32}(r-r_0)^{\frac 14}\Big(\cos \big(
\tfrac23 \zeta^3(r-r_0)^{\frac 32}+\tfrac\pi 4\big)+O\big( \zeta^{-3}(r-r_0)^{-\frac 32}\big)\Big),\label{eq:asymp1b-9}\\
 \phi_3^-&=  \zeta^{-\frac32}(r-r_0)^{-\frac 14}\Big(\cos  \big(
\tfrac23 \zeta^3(r-r_0)^{\frac 32}+\tfrac\pi 4\big)+O\big( \zeta^{-3}(r-r_0)^{-\frac 32}\big)\Big),
\label{eq:asymp227-9}\\(\phi_3^-)'&=  -\zeta^{\frac32}(r-r_0)^{\frac 14}\Big(\sin \big(
\tfrac23 \zeta^3(r-r_0)^{\frac 32}+\tfrac\pi 4\big)+O\big(
\zeta^{-3}(r-r_0)^{-\frac 32}\big)\Big).\label{eq:asymp1c-9} 
\end{align} 
\end{subequations} 
In particular,  
\begin{displaymath} 
 T_3N_3T^{-1}_3= B_3\zeta^{-2}O\big (k^{0}\big )\text{ uniformly in }r\in I_3. 
\end{displaymath} In conjunction with (\ref{eq:finalest}),
(\ref{eq:zeroefffa})  and the
fact that  
\begin{equation} \label{eq:2.der}
 V_k''(r)= O\Big (k^{-\tfrac{4+2\mu}{2-\mu}}\Big )\text{ uniformly in }r\in
 I_3,  
\end{equation} we obtain 
\begin{align} \label{eq:finalest28}
 &\Big \|T_3(r_3) \Big \{\Big ({a^+_3 \atop a^-_3}\Big )(r_3)- \Big
 ({a^+_ 3 \atop a^-_3}\Big )(r_2)\Big \} \Big \| \nonumber\\&
 \leq C_1\Big((r_3-r_0)^3+(r_0-r_2)^3\Big) k^{-\tfrac{4+2\mu}{2-\mu}}k^{\frac23 \tfrac{1+\mu}{2-\mu}}a_3^+(r_2)\nonumber\\&\leq C_2k^{\frac43-{3}\min(\epsilon_2,\epsilon_3)}a_3^+(r_2);
\end{align} here we need
\begin{equation}
  \label{eq:reqeps2333}
  \tfrac43-{3}\min(\epsilon_2,\epsilon_3)<0,
\end{equation} cf. (\ref{eq:reqeps2}). At this point  let us for
convenience take $\epsilon_3=\epsilon_2$, so that
(\ref{eq:reqeps2333}) simplifies and in conjunction with
(\ref{eq:reqeps2}) leads to the single requirement 
\begin{equation}
  \label{eq:reqeps2333b}
\tfrac23>\epsilon_2=\epsilon_3>\tfrac49.
\end{equation} 

We conclude that (up to multiplication by the  positive constant $a_3^+(r_2)$) 
\begin{equation}\label{eq:bo=ndii}
  \Bigg ({a_3^+(r_{3})\atop a_3^-(r_{3})}\Bigg)=\Bigg ({1+O\big ( k^{\tfrac43-{3}\epsilon_2}\big)\atop O\big ( k^{\tfrac43-{3}\epsilon_2}\big)}\Bigg).
\end{equation} 

Next we need to study the  connection formula passing from $I_3$ to
$I_4$; a little linear algebra takes it to the form 
\begin{displaymath} 
 \Big ({a^+ _4\atop a^-_4}\Big )= \left ( \begin{array}{cc}
W(\phi^-_4,\phi^+_3)&W(\phi^-_4,\phi^-_3)\\W(\phi^+_3,\phi^+_4)
&W(\phi^-_3,\phi^+_4)
\end{array}\right )\Big ({a^+ _3 \atop a^-_3}\Big ),\;r=r_{3}. 
\end{displaymath}  
So we need to compute the appearing Wronskians. To this end we note the following uniform asymptotics for $r\in
[r_0,r_3]$, which are readily obtained from (\ref{eq:zeroefffa}) and
(\ref{eq:2.der}) (recall that by now $\epsilon_3=\epsilon_2$):
\begin{subequations}
\begin{align}
  \label{eq:as1}
 V_k(r)&=V'_k(r_0)(r-r_0)\big(1+ O\big
 (k^{-\epsilon_2}\big)\big ),\\
\label{eq:as42}
 V'_k(r)&=V'_k(r_0)\big(1+ O\big (k^{-\epsilon_2}\big)\big ),\\\label{eq:as3}
\sqrt{-V_k(r)}&=\zeta^3(r-r_0)^\frac12\big(1+ O\big
 (k^{-\epsilon_2}\big)\big ),\\
\label{eq:as4}
 \int_{r_0}^r \sqrt{-V_k(r')}\,\d r'&=\frac23 \zeta^3(r-r_0)^\frac32\big(1+ O\big
 (k^{-\epsilon_2}\big)\big )\nonumber\\&=\frac23 \zeta^3(r-r_0)^\frac32+O\Big
 (k^{1-\frac52 \epsilon_2}\Big).
\end{align}
\end{subequations} Due to (\ref{eq:as3}) and (\ref{eq:as4}), the
asymptotics (\ref{eq:asymp1-9})--~(\ref{eq:asymp1c-9}) at the point $r=r_3$ can be written
in terms of \[\theta:=\int_{r_0}^{r_3} \sqrt{-V_k(r')}\,\d
r'+\tfrac\pi 4\] as 
\begin{subequations}
\begin{align}
  \label{eq:asymp1-9y}
 \frac{\phi_3^+(r_3)}{(-V_k(r_3))^{-\frac 14}}&=  \sin \Big(\theta+O\big
 (k^{1-\frac52
   \epsilon_2}\big)\Big)+O\big
 (k^{-\epsilon_2}\big)+O\big
 (k^{\frac32 \epsilon_2-1}\big),\\ 
  \frac{(\phi_3^+)'(r_3)}{(-V_k(r_3))^{\frac 14}}&=  \cos \Big(\theta+O\big
 (k^{1-\frac52
   \epsilon_2}\big)\Big)+O\big
 (k^{-\epsilon_2}\big)+O\big
 (k^{\frac32 \epsilon_2-1}\big),\label{eq:asymp1b-9y}\\
  \frac{\phi_3^-(r_3)}{(-V_k(r_3))^{-\frac 14}}&= \cos \Big(\theta+O\big
 (k^{1-\frac52
   \epsilon_2}\big)\Big)+O\big
 (k^{-\epsilon_2}\big)+O\big
 (k^{\frac32 \epsilon_2-1}\big), 
\label{eq:asymp227-9y}\\\frac{(\phi_3^-)'(r_3)}{(-V_k(r_3))^{\frac 14}}&= -\sin \Big(\theta+O\big
 (k^{1-\frac52
   \epsilon_2}\big)\Big)+O\big
 (k^{-\epsilon_2}\big)+O\big
 (k^{\frac32 \epsilon_2-1}\big).\label{eq:asymp1c-9y} 
\end{align} 
\end{subequations} 

Next, using that
\begin{equation*}
  \frac{-V'_k}{(-V_k)^{\frac32}}(r_3)=O\Big
 (k^{\tfrac 32 \epsilon_2-1}\Big),
\end{equation*} cf. (\ref{eq:as1}) and (\ref{eq:as42}), we obtain for
the functions 
$\phi^\pm_4$
\begin{subequations}
\begin{align}
  \label{eq:asymp1-9yi}
\phi_4^+(r_3)&=  (-V_k(r_3))^{-\frac 14}\sin \big(\theta\big),\\ 
 (\phi_4^+)'(r_3)&=  (-V_k(r_3))^{\frac 14}\Big(\cos \big(\theta\big)+O\big
 (k^{\frac32 \epsilon_2-1}\big)\Big),\label{eq:asymp1b-9yi}\\
 \phi_4^-(r_3)&= (-V_k(r_3))^{-\frac 14}\cos \big(\theta\big), 
\label{eq:asymp227-9yi}\\(\phi_4^-)'(r_3)&= -(-V_k(r_3))^{\frac 14}\Big(\sin \big(\theta\big)+O\big
 (k^{\frac32 \epsilon_2-1}\big)\Big).\label{eq:asymp1c-9yi} 
\end{align} 
\end{subequations} 

The matrix of  Wronskians is readily computed using
(\ref{eq:asymp1-9})--(\ref{eq:asymp1c-9}) and
(\ref{eq:asymp1-9y})--~(\ref{eq:asymp1c-9y}). In combination with
(\ref{eq:bo=ndii}), we obtain (using in the second step (\ref{eq:reqeps2333b})) 
\begin{align}
  \Bigg ({a_4^+(r_{3})-1\atop a_4^-(r_{3})}\Bigg)&=O\Big ( k^{-\epsilon_2}\Big)+O\Big ( k^{\frac32 \epsilon_2-1}\Big)+O\Big ( k^{\tfrac43-{3}\epsilon_2}\Big)+O\Big
 (k^{1-\frac52
   \epsilon_2}\Big)\nonumber\\ &=O\Big ( k^{-\epsilon_2}\Big)+O\Big ( k^{\frac32 \epsilon_2-1}\Big)+O\Big ( k^{\tfrac43-{3}\epsilon_2}\Big).\label{eq:bo=ndii2}
\end{align}

Now we estimate in $I_4$ using (\ref{eq:bo=ndii2}) (and mimicking
partially 
 (\ref{eq:finalest2}))
\begin{align}
 &\Big \|\Big ({a^+_4 \atop a^-_4}\Big )(r)- \Big ({a^+_ 4 \atop
   a^-_4}\Big )(r_3) \Big \| \nonumber\\& \leq C_1\Big \{ \big (\exp \int
_{r_{3}}^{r}\Big |\Big (-\tfrac5{16}\Big(\frac{V_k'}{V_k}\Big
)^2+\tfrac1{4}\frac{V_k''}{V_k}\Big )O\big ((-V_k)^{-\frac 12}\big) \Big
|\,\d r'\big )-1\Big \}\nonumber\\& \leq C_2 r_0\int
_{r_3/r_0}^{r/r_0}\Bigg (\frac{\big
  (-V_k'\big)^2}{(-V_k)^{\frac52}}+\frac{\big |-V_k''\big
  |}{(-V_k)^{\frac32}}\Bigg )\,\d s\;(\text{changing variables }r'=r_0s)\nonumber\\& \leq C_3 r_0^{\mu/2-1} \int
_{r_3/r_0}^{r/r_0}s^{\mu/2-2}\Big (\big
    (1-s^{\mu-2}\big )^{-\frac52}+\big
    (1-s^{\mu-2}\big )^{-\frac32}\Big )\,\d s\nonumber\\&\leq C_4 r_0^{\mu/2-1} 
 \Big (\int
_2^{\infty}s^{\mu/2-2}\,\d s+\int
_{r_3/r_0}^{2}\big
    (1-s^{\mu-2}\big )^{-\frac52}\,\d s\Big )\nonumber\\& \leq C_5
    r_0^{\mu/2-1} k^{\frac32 \epsilon_2}\nonumber\\&
    =O\Big ( k^{\frac32 \epsilon_2-1}\Big). \label{eq:finalest2y}
\end{align}

By the same type of   estimation  we also deduce that for fixed $k$ there exist 
$\epsilon_k>0$ and $a^\pm_4(\infty)\in \R$ such that 
\begin{equation*}
a^\pm_4(r)=a^\pm_4(\infty)+O(r^{-\epsilon_k}).
  \end{equation*}
 By applying (\ref{eq:finalest2y}) with $r=\infty$ in combination with
 (\ref{eq:bo=ndii2}) 
(and using  an elementary trigonometric formula), we
conclude that \eqref{eq:finalans} is true.

\noindent{\bf The general case.} It remains to prove  (\ref{eq:finalans}) 
 under
Condition \ref{assump:conditions11} (i.e. without assuming that $V_2=V_3=0$).
 All previous constructions and estimates carry over, so  below we consider only some additional estimates that  are  needed.
Denoting $U=2V_2+2V_3$, the functions $\phi^\pm_j$ and  $f_j$  and the potentials $A_j$ are exactly the same, while the potentials $B_j$ are  given as  the old $B_j$ plus $U$, $j=2,3,4$.

\noindent{\bf Ad interval $I_1$.} We notice that (\ref{eq:eps11}) is
valid (here with $V_k$ defined upon replacing $2V_1\to 2V=2V_1+U$). Whence we can  proceed exactly as before.

\noindent{\bf Ad interval $I_2$.} In addition to (\ref{eq:finalest2}) we need
the following estimation (assuming in the last step that $\frac\mu
2+\epsilon<1$): 
\begin{align} \label{eq:finalest2cc}
 &\int_{r_{1}}^{r_2}\big |UO\big (V_k^{-\frac 12}\big) \big
|\,\d r' O\Big (\frac{k}{r_1}\Big)\nonumber\\& \leq C_1  \tfrac{k}{r_1}r_0\int
_{r_1/r_0}^{r_2/r_0}\frac{r^{-1-\frac\mu2-\epsilon}}{V_k^{\frac12}}\,\d s\;(\text{changing variables }r=r_0s)\nonumber\\& \leq C_2 \tfrac{k}{r_1}\tfrac{r_0^{1-\frac\mu 2-\epsilon}}{k}\int
_{r_1/r_0}^{r_2/r_0}\frac{s^{-\frac\mu 2-\epsilon}}{\big
    (1-s^{2-\mu}\big )^\frac12}\,\d s\nonumber\\& \leq C_3
k^{-\frac{2\epsilon} {2-\mu}}\,\tfrac{k}{r_1}.
\end{align}

\noindent{\bf Ad interval $I_3$.} In addition to (\ref{eq:finalest28}) we need the following estimation 
\begin{align} \label{eq:finalest28ii}
&\int_{r_{2}}^{r_3}\big |U\zeta ^{-2} \big
|\,\d r' \nonumber\\&
 \leq C_1\int_{r_{2}}^{r_3}k^{\frac23 \tfrac{1+\mu}{2-\mu}}r'^{-1-\frac\mu
   2-\epsilon}\,\d r'\nonumber\\&\leq C_2k^{\frac23
   \tfrac{1+\mu}{2-\mu}}r_0^{-\frac\mu
   2-\epsilon}(k^{-\epsilon_2}+k^{-\epsilon_3})\nonumber\\&\leq C_3k^{\frac13
   -\epsilon_2-\tfrac{2\epsilon}{2-\mu}}. 
\end{align}  Due to (\ref{eq:reqeps2333b}) the right hand side of (\ref{eq:finalest28ii}) vanishes.

\noindent{\bf Ad interval $I_4$.} In addition to (\ref{eq:finalest2y}) we need
the following estimation: 
\begin{align} \label{eq:finalest2cciii}
 &\int_{r_{3}}^{r}\big |UO\big ((-V_k)^{-\frac 12}\big) \big
|\,\d r' \nonumber\\& \leq C_1  r_0^{-\epsilon}\int
_{r_3/r_0}^{r/r_0}\frac{s^{-1-\epsilon}}{(1-s^{\mu-2}\big )^\frac12}\,\d s\;(\text{changing variables }r'=r_0s)\nonumber\\& \leq C_2 k^{-\frac{2\epsilon} {2-\mu}}.
\end{align}

This ends the proof of  \eqref{eq:finalans}. \qed

\subsection{End of proof of  Proposition \ref{prop:scatphase}}

We  need the following elementary identity:
\begin{lemma} \label{lem:end-polar}Let $\mu<2$. Then
\begin{equation}
\int_1^\infty\left(\sqrt{r^{-\mu}-r^{-2}}-\sqrt{r^{-\mu}}\right)\d
r=\frac{2-\pi}{2-\mu}.\label{eq:mumu}\end{equation}
\end{lemma}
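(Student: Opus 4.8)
The plan is to prove the identity \eqref{eq:mumu} by an explicit substitution reducing it to a standard Beta-integral, treating the two terms together so that no divergence issues arise. First I would substitute $r = s^{-2/\mu}$ (equivalently $s = r^{-\mu/2}$), which sends $r=1$ to $s=1$ and $r=\infty$ to $s=0$, giving $\d r = -\tfrac{2}{\mu}s^{-2/\mu-1}\d s$ and $r^{-\mu}=s^2$, $r^{-2}=s^{4/\mu}$. After this substitution the left-hand side becomes
\begin{equation*}
\frac{2}{\mu}\int_0^1 \left(\sqrt{s^2 - s^{4/\mu}} - \sqrt{s^2}\right) s^{-2/\mu-1}\,\d s
=\frac{2}{\mu}\int_0^1 \left(s\sqrt{1 - s^{4/\mu-2}} - s\right) s^{-2/\mu-1}\,\d s,
\end{equation*}
where I have pulled $s=\sqrt{s^2}$ out of the radical (legitimate since $s>0$ on $(0,1)$ and $4/\mu-2>0$ for $\mu<2$, so $1-s^{4/\mu-2}\ge 0$). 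This collapses to $\frac{2}{\mu}\int_0^1 \bigl(\sqrt{1-s^{4/\mu-2}}-1\bigr)s^{-2/\mu}\,\d s$.

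Next I would make the second substitution $t = s^{4/\mu-2}$, i.e. $s = t^{\mu/(4-2\mu)}$, which again fixes the endpoints $0$ and $1$; with $a := \tfrac{\mu}{4-2\mu}$ one has $\d s = a\, t^{a-1}\d t$ and $s^{-2/\mu} = t^{-2a/\mu} = t^{-1/(2-\mu)}$. A short exponent bookkeeping ($-\tfrac{1}{2-\mu} + a - 1 = \tfrac{\mu}{4-2\mu} - \tfrac{2}{2-\mu} = -\tfrac{2+\mu}{4-2\mu}$, and $2a/\mu$ combining as stated) turns the integral into a constant multiple of $\int_0^1 \bigl(\sqrt{1-t}-1\bigr)t^{c}\,\d t$ for the appropriate exponent $c$, which is absolutely convergent. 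Splitting $\int_0^1(\sqrt{1-t}-1)t^c\,\d t = \int_0^1 (1-t)^{1/2}t^c\,\d t - \int_0^1 t^c\,\d t = B(c+1,\tfrac32) - \tfrac{1}{c+1}$ and evaluating via $B(x,y)=\Gamma(x)\Gamma(y)/\Gamma(x+y)$ together with $\Gamma(\tfrac32)=\tfrac{\sqrt\pi}{2}$ gives a closed form; collecting the prefactors $\tfrac{2}{\mu}\cdot a = \tfrac{1}{2-\mu}$ and simplifying the Gamma quotient should yield exactly $\tfrac{2-\pi}{2-\mu}$.

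An alternative, and perhaps cleaner, route is to recognize \eqref{eq:mumu} directly as the total deflection-angle computation already recorded in the paper: the integrand $\sqrt{r^{-\mu}-r^{-2}}-\sqrt{r^{-\mu}}$ is (up to the constant $\sqrt{2\gamma}$ with $r_{\crt}=1$) the difference between the radial momentum with unit angular momentum and without it, and integrating it reproduces the change of polar angle minus its free value; by \eqref{eq:pol-eqn} and \eqref{eq:vinkel} the zero-energy orbit for $V=-\gamma r^{-\mu}$ has $\theta(\infty)-\theta(-\infty)=\tfrac{2}{2-\mu}\pi$, whence the ``excess angle'' on the outgoing half is $\tfrac{1}{2}\bigl(\tfrac{2}{2-\mu}\pi - \pi\bigr) = \tfrac{\mu\pi/2}{2-\mu}$; combined with the elementary value $\int_1^\infty r^{-\mu/2}(\cdots)\,\d r = \tfrac{1}{2-\mu}$ of the ``translation'' part one recovers $\tfrac{2-\pi}{2-\mu}$. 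I would present the Beta-function computation as the primary proof since it is self-contained, and mention the geometric interpretation only as a remark.

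The main obstacle is purely bookkeeping: keeping the fractional exponents straight through the two substitutions so that the integral genuinely lands on $B(c+1,\tfrac32)$ with the correct $c$, and then simplifying $\tfrac{1}{2-\mu}\bigl(B(c+1,\tfrac32)-\tfrac{1}{c+1}\bigr)$ to the stated answer without sign or factor errors. A useful sanity check along the way is the Coulomb case $\mu=1$, where the right-hand side is $2-\pi$ and the integral $\int_1^\infty(\sqrt{r^{-1}-r^{-2}}-r^{-1/2})\,\d r$ can be evaluated by hand (substitute $r=1/\sin^2\varphi$) to confirm; matching this special case pins down all the constants.
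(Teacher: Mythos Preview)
Your two substitutions compose to the single substitution $t=r^{\mu-2}$ (indeed $t=s^{(4-2\mu)/\mu}$ with $s=r^{-\mu/2}$ gives $t=r^{-(2-\mu)}$), which is exactly the paper's first move; so up to the intermediate integral you are on the same route, just in two steps. The correct exponent that falls out is $c=-\tfrac32$: one has $-\tfrac{1}{2-\mu}+a-1=\tfrac{\mu-2}{4-2\mu}-1=-\tfrac12-1=-\tfrac32$, and the integral to evaluate is $\tfrac{1}{2-\mu}\int_0^1(\sqrt{1-t}-1)\,t^{-3/2}\,\d t$.

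Here your Beta-function plan breaks: with $c=-\tfrac32$ neither $\int_0^1(1-t)^{1/2}t^{-3/2}\,\d t$ nor $\int_0^1 t^{-3/2}\,\d t$ converges (both blow up at $t=0$), so the identity $\int_0^1(\sqrt{1-t}-1)t^{c}\,\d t=B(c+1,\tfrac32)-\tfrac{1}{c+1}$ is not available as written. The combined integral \emph{is} convergent (since $\sqrt{1-t}-1\sim -t/2$), and formally plugging in $B(-\tfrac12,\tfrac32)=\Gamma(-\tfrac12)\Gamma(\tfrac32)=-\pi$ and $\tfrac{1}{c+1}=-2$ does give $2-\pi$; but to make that rigorous you would need an extra step (integration by parts in $t$ to shift the exponent to $c+1=-\tfrac12>-1$, or an explicit analytic-continuation/regularisation argument), which your proposal does not supply.

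The paper sidesteps this entirely: after reaching $\tfrac{1}{2-\mu}\int_0^1 s^{-3/2}(\sqrt{1-s}-1)\,\d s$ it substitutes $s=\sin^2\phi$ and evaluates the resulting elementary integral $\tfrac{2}{2-\mu}\int_0^{\pi/2}\bigl(\tfrac{1-\cos\phi}{\sin^2\phi}-1\bigr)\d\phi$ via the antiderivative $\tfrac{1-\cos\phi}{\sin\phi}-\phi$. You already know this trick---it is exactly the ``substitute $r=1/\sin^2\varphi$'' check you propose for $\mu=1$---so the cleanest fix is simply to use it for general $\mu$ in place of the Beta split.
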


\proof
We first substitute $r=s^{\frac{1}{\mu-2}}$ and then $s=\sin^2\phi$. Thus the
left hand side of \eqref{eq:mumu} equals
\begin{eqnarray*}
&\frac{1}{2-\mu}\int_0^1 s^{-\frac32}\left(\sqrt{1-s}-1\right)\d s&=
\frac{2}{2-\mu}\int_0^{\frac\pi2}
\left(\frac{1-\cos\phi}{\sin^2\phi}-1\right)\d\phi\\
=&\frac{2}{2-\mu}
\left(
\frac{1-\cos\phi}{\sin\phi}-\phi\right)\Big|_0^{\pi/2}&=\frac{2-\pi}{2-\mu}
. 
\end{eqnarray*}\qed

{\it Proof of Proposition \ref{prop:scatphase}.}  Using Proposition \ref{prop:main1} we calculate
\begin{eqnarray*}
\sigma_k(0)&=&
\lim_{r\to\infty}\Big(\int_{r_0}^{r}\sqrt{-V_k(\tilde r)
}\d
\tilde r+\frac{\pi}{4}\\
&&-\int_{R_0}^r\sqrt{-2V(\tilde r)}\d \tilde r
+\frac{k\pi}{2}\Big)+o(k^0)\\
&=&\int_{r_0}^\infty
\Big(\sqrt{-V_k( r)
}-\sqrt{-2V_1( r)}\Big)\d
 r\\
&&+\int_{R_0}^\infty
\Big(\sqrt{-2V_1( r)
}-\sqrt{-2V( r)}\Big)\d
 r\\
&&-\int_{R_0}^{r_0}\sqrt{-2V_1( r)}\d r
+\frac{(k+\frac12)\pi}{2}+o(k^0).
\end{eqnarray*}
Now (using Lemma \ref{lem:end-polar})
\begin{align*}
 \int_{r_0}^\infty \big (\sqrt {-V_k}-\sqrt {-2V_1(r)}\big )\d r&
=\sqrt{k(k+1)}\int_1^\infty \big (\sqrt {r^{-\mu}-r^{-2}}-\sqrt 
  {r^{-\mu}}\big )\d r\\&=\sqrt{k(k+1)}\,\frac{2-\pi} {2-\mu};\\
  \int_{r_0}^{R_0}\sqrt {V_1(r)}\d r&=-\tfrac {2}{2-\mu}\sqrt{k(k+1)}+\tfrac {2\sqrt{2\gamma}}{2-\mu}R_0^{1-\tfrac\mu 2}.
  \end{align*}
Thus, \begin{eqnarray*}
&&\sigma_k(0) 
-\int_{R_0}^\infty
\Big(\sqrt{-2V_1( r)
}-\sqrt{-2V( r)}\Big)\d
 r\\
&=&
-\sqrt{k(k+1)}\frac{\pi}{2-\mu}
+\frac{(k+\frac12)\pi}{2}
+\frac{2\sqrt{2\gamma}}{2-\mu}R_0^{1-\frac{\mu}{2}} +o(k^0)\\
&=&-\frac{(k+\frac12)\pi\mu}{2(2-\mu)}
+\frac{2\sqrt{2\gamma}}{2-\mu}R_0^{1-\frac{\mu}{2}} +o(k^0).
\end{eqnarray*}
\qed

\appendix

\section{Elements of abstract scattering theory}
\label{Appendix}

Various versions of  stationary scattering theory
 can be found in the literature. In this appendix we give, in an
 abstract setting,  
a  self-contained
 presentation of its elements used in our paper. It
 is a version of the standard approach contained e.g. in \cite{Y4}, adapted to our
 paper. In our stationary formulas for the scattering operator we use
 in addition ideas due  to Isozaki-Kitada, see the proof of [IK2, Theorem 3.3].

\subsection{Wave operators}

Let $H_0$ and $H$ be two self-adjoint operators on a Hilbert space ${\mathcal
  H}$. We assume that $H_0$ has 
only continuous spectrum. Throughout this appendix, let
$\Lambda_n,\,n\in \N,$
be a sequence of compact  subsets   of $\sigma(H_0)$ such that $\Lambda_n$ is a
subset of the interior of $\Lambda_{n+1}$, and such that
$\sigma(H_0)\setminus \cup_n\Lambda_n$ has the  Lebesgue
 measure zero. Pick  a sequence $h_n \in C_c^\infty(\Lambda_{n+1})$ with
 $h_n=1$ on $\Lambda_n$. Let  ${\mathcal D}:=\cup_n\Ran1_{\Lambda_n}(H_0)$;
 it is dense in  ${\mathcal
  H}$.

We will write  
$R(\zeta)=(H-\zeta)^{-1}$ and  $R_0(\zeta)=(H_0-\zeta)^{-1}$ for
$\zeta\notin \sigma(H_0)$, and 
\[\delta_\epsilon(\lambda)=\frac{\epsilon}{\pi((H_0-\lambda)^2+\epsilon^2)}
=\frac{\epsilon}{\pi}R_0(\lambda-\i\epsilon)R_0(\lambda+\i\epsilon),\;\epsilon>0.\]

Note that if $I$ is an interval and  $f\in{\mathcal H}$, then
\begin{align}
  \label{eq:stone1}
  &\left\|\int_I\frac{\epsilon}{\pi}
R_0(\lambda-\i\epsilon)R_0(\lambda+\i\epsilon)f\d\lambda\right\|\leq \|f\|,\\&
\lim_{\epsilon\searrow0}
\int_I\frac{\epsilon}{\pi}
R_0(\lambda-\i\epsilon)R_0(\lambda+\i\epsilon)f\d\lambda = 1_I(H_0)f. \label{eq:stone2}
\end{align}

\begin{thm}\label{thm:wavo}
Suppose $J^\pm$ is a densely defined  operator whose domain contains ${\mathcal
  D}$ such that  $J^\pm _n:=J^\pm h_n(H_0)$ 
is  bounded for any $n$, 
 and 
\[\lim_{t\to\pm\infty}\|J^\pm  \e^{\i tH_0}f\|^2=\| f\|^2,\ \ 
f\in {\mathcal D} .\]
We also suppose that there exists the wave operator
\begin{equation}\label{wavo0}
W^\pm f
:=\lim_{t\to\pm\infty}\e^{\i tH}J^\pm \e^{-\i tH_0}f,\;
f\in {\mathcal D}
 .\end{equation}
Then 
\begin{enumerate}[\normalfont (i)]
\item \label{item:1a}
$W^\pm$
extends to  an isometric operator and $W^\pm H_0=H W^\pm$.
\item \label{item:2a} For any interval $I$ and $f\in {\mathcal D}$,
\begin{eqnarray}\label{wave9a}
W^\pm 1_I(H_0)f
 &=&\lim_{\epsilon\searrow0}
\int_I\frac{\epsilon}{\pi} R(\lambda\mp\i\epsilon)J^\pm
R_0(\lambda\pm\i\epsilon)f\d \lambda.\end{eqnarray}
\item \label{item:3a}For any continuous function $g:\R\to \C$ 
vanishing at infinity,  interval $I$  and  $f\in {\mathcal D}$,
\begin{eqnarray}
\label{wave9b}
W^\pm g(H_0)1_I(H_0)f&=&
\lim_{\epsilon\searrow0}
\int_I
\frac{\epsilon}{\pi}  g(\lambda)R(\lambda\mp\i\epsilon)J^\pm
R_0(\lambda\pm\i\epsilon)f\d \lambda.\end{eqnarray}
\item \label{item:4a} Suppose  in addition that $J^\pm $
 maps ${\mathcal D}$ into   ${\rm Dom} H$.
 Suppose that $T^\pm$ is a densely defined
 operator such that  
$T^\pm _n:=T^\pm h_n(H_0)$ is bounded for any $n$ and that
$T^\pm f=\i (HJ^\pm-J^\pm H_0)f$ for any $f\in {\mathcal D}$. Then 
we have the following modifications of \eqref{wave9a} and
 \eqref{wave9b}:
 \begin{eqnarray}\label{wave9a1}
W^\pm 1_I(H_0)f
 &=&\lim_{\epsilon\searrow0}
\int_I
(J^\pm+\i R(\lambda\mp\i\epsilon)T^\pm)\delta_\epsilon(\lambda)
f\d \lambda,\end{eqnarray}
\begin{eqnarray}
\label{wave9b1}
W^\pm g(H_0)1_I(H_0)f&=&
\lim_{\epsilon\searrow0}
\int_I g(\lambda)
(J^\pm+\i R(\lambda\mp\i\epsilon)T^\pm)\delta_\epsilon(\lambda)
f\d \lambda.\end{eqnarray}
\end{enumerate}\end{thm}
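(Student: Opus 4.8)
The plan is to reduce all four statements to two elementary lemmas and then turn the crank. The first lemma is an Abelian limit: if $t\mapsto a(t)$ is a bounded strongly continuous ${\mathcal H}$-valued function with $a(t)\to a_\infty$ strongly as $t\to\pm\infty$, then $2\epsilon\int_0^{\pm\infty}\e^{\mp2\epsilon t}a(t)\,\d t\to a_\infty$ as $\epsilon\searrow0$; this follows by splitting the $t$-integral at a large $T$ and using $2\epsilon\int_0^{\pm\infty}\e^{\mp2\epsilon t}\,\d t=1$, and the same argument covers the Abelian mean of a convergent improper integral. The second lemma is the ``collapse identity'': for a bounded operator $B$, substituting $R(\lambda-\i\epsilon)=-\i\int_0^\infty\e^{\i t(H-\lambda+\i\epsilon)}\,\d t$ and $R_0(\lambda+\i\epsilon)=\i\int_0^\infty\e^{-\i s(H_0-\lambda-\i\epsilon)}\,\d s$, then using Fubini and $\int_{\R}\e^{-\i(t-s)\lambda}\,\d\lambda=2\pi\delta(t-s)$, one obtains
$$\frac{\epsilon}{\pi}\int_{\R}R(\lambda-\i\epsilon)\,B\,R_0(\lambda+\i\epsilon)\,\d\lambda=2\epsilon\int_0^\infty\e^{-2\epsilon t}\,\e^{\i tH}B\,\e^{-\i tH_0}\,\d t,$$
and, by the same reduction with one more resolvent inserted,
$$\frac{\epsilon}{\pi}\int_{\R}R(\lambda-\i\epsilon)\,B\,R_0(\lambda-\i\epsilon)R_0(\lambda+\i\epsilon)\,\d\lambda=-\i\int_0^\infty\e^{-2\epsilon t}\,\e^{\i tH}B\,\e^{-\i tH_0}\,\d t;$$
for the ``$-$'' versions one flips the signs of all imaginary parts and integrates $t$ over $]-\infty,0]$. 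Since $J^\pm$ and $T^\pm$ are only densely defined, one must carry the cutoffs $h_n(H_0)$ through every step: for $f\in{\mathcal D}$ one has $1_I(H_0)f\in\Ran1_{\Lambda_m}(H_0)\subseteq{\mathcal D}$ for some $m$, and $J^\pm1_{\Lambda_m}(H_0)=J^\pm h_n(H_0)\,1_{\Lambda_m}(H_0)$ is bounded for $n\ge m$ (similarly for $T^\pm$), which is exactly what is needed to legitimize the Fubini interchange and the passage $\epsilon\searrow0$.

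Granting this, part \ref{item:1a} is immediate: $\|W^\pm f\|=\lim_{t\to\pm\infty}\|J^\pm\e^{-\i tH_0}f\|=\|f\|$ on ${\mathcal D}$ gives the isometry after passing to the closure, and $\e^{\i sH}W^\pm f=\lim_{t\to\pm\infty}\e^{\i(t+s)H}J^\pm\e^{-\i(t+s)H_0}\e^{\i sH_0}f=W^\pm\e^{\i sH_0}f$ for $f\in{\mathcal D}$ gives $\e^{\i sH}W^\pm=W^\pm\e^{\i sH_0}$, hence $HW^\pm=W^\pm H_0$ by Stone's theorem. For part \ref{item:2a} one applies the first collapse identity with $B=J^\pm h_n(H_0)$ and the Abelian lemma to $a(t)=\e^{\i tH}J^\pm\e^{-\i tH_0}1_I(H_0)f$, which by \eqref{wavo0} converges to $W^\pm1_I(H_0)f$; the one extra point is that $\int_I$ may be replaced by $\int_{\R}$ up to an error dominated by $\int_{\R\setminus I}\delta_\epsilon(\lambda)\,\d\lambda\,1_I(H_0)$, which tends to zero by \eqref{eq:stone1}--\eqref{eq:stone2}. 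This yields \eqref{wave9a}.

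For part \ref{item:4a} one first writes, via the fundamental theorem of calculus on ${\mathcal D}$ using $\frac{\d}{\d s}\bigl(\e^{\i sH}J^\pm\e^{-\i sH_0}\bigr)=\e^{\i sH}T^\pm\e^{-\i sH_0}$,
$$W^\pm1_I(H_0)f=J^\pm1_I(H_0)f+\int_0^{\pm\infty}\e^{\i sH}T^\pm\e^{-\i sH_0}\,1_I(H_0)f\,\d s,$$
the integral being a convergent improper integral; then applying the Abelian lemma and the two collapse identities to the two pieces — the $J^\pm$-piece gives $\int_I J^\pm\delta_\epsilon(\lambda)\,\d\lambda\to J^\pm1_I(H_0)$ by \eqref{eq:stone2}, and the $T^\pm$-piece gives exactly $\int_I\i R(\lambda\mp\i\epsilon)T^\pm\delta_\epsilon(\lambda)\,\d\lambda$ — produces \eqref{wave9a1}. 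Part \ref{item:3a} (and the companions \eqref{wave9b}, \eqref{wave9b1}) I would obtain from \ref{item:2a} and \ref{item:4a} by continuity in $g$: on the dense set of $g$ of the form $g=\widehat{\mu}$ with $\mu$ a finite measure, \ref{item:1a} gives $W^\pm g(H_0)1_I(H_0)f=g(H)W^\pm1_I(H_0)f=\int\e^{\i tH}W^\pm1_I(H_0)f\,\d\mu(t)$, while inserting $g(\lambda)=\int\e^{\i t\lambda}\,\d\mu(t)$ on the right-hand side of \eqref{wave9a} turns, after the Fubini collapse, into the time shift $t_1\mapsto t+t_1$, reducing the identity to this very formula; one then extends to all $g\in C_0(\R)$ using that both sides are bounded by $\|g\|_{L^\infty(I)}\|f\|$ and that $g|_I$ is uniformly approximable by such $\widehat{\mu}$.

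The routine algebra aside, the genuine difficulty — and where I expect the bulk of the careful writing to go — is the bookkeeping of the unbounded operators $J^\pm$, $T^\pm$ under the spectral cutoffs $h_n(H_0)$ while simultaneously interchanging the $t$- and $\lambda$-integrals (which requires genuine absolute integrability, only available after inserting the cutoffs) and passing $\epsilon\searrow0$ inside a $\lambda$-integral over a possibly unbounded interval; one must also verify that the single-resolvent formulas \eqref{wave9a}--\eqref{wave9b} and the $\delta_\epsilon$-formulas \eqref{wave9a1}--\eqref{wave9b1} really collapse to the same Abelian time integrals, which is precisely what makes the two forms of the statement consistent.
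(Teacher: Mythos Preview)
Your approach to (i) and (ii) coincides with the paper's: Abel average, then the vector-valued Plancherel formula (the rigorous form of your formal $\int_{\R}\e^{-\i(t-s)\lambda}\,\d\lambda=2\pi\delta(t-s)$), then the $\int_{\R}\to\int_I$ reduction via Cauchy--Schwarz and \eqref{eq:stone1}--\eqref{eq:stone2}. For (iii) and (iv) the paper is shorter. For (iii) it approximates $g$ uniformly by finite linear combinations $g_m$ of characteristic functions of intervals and uses the same Cauchy--Schwarz bound you cite at the end (with an extra factor $\|J_n^\pm\|$) to control $g_m-g$ uniformly in $\epsilon$; your Fourier representation $g=\widehat{\mu}$ is correct but circuitous, since your closing density argument already contains the needed uniform estimate and makes the $\widehat{\mu}$ detour redundant. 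For (iv) the paper simply observes the algebraic resolvent identity
\[
R(\lambda\mp\i\epsilon)J^\pm=\bigl(J^\pm+\i R(\lambda\mp\i\epsilon)T^\pm\bigr)R_0(\lambda\mp\i\epsilon),
\]
which together with $\frac{\epsilon}{\pi}R_0(\lambda\mp\i\epsilon)R_0(\lambda\pm\i\epsilon)=\delta_\epsilon(\lambda)$ shows that the integrands in \eqref{wave9a}--\eqref{wave9b} and in \eqref{wave9a1}--\eqref{wave9b1} are \emph{identical} for every $\epsilon>0$; hence (iv) is a one-line consequence of (ii)--(iii), and your second collapse identity, the fundamental-theorem-of-calculus decomposition, and the repetition of the $\int_{\R\setminus I}$ estimate for the $T^\pm$-piece are all unnecessary. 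Your direct route works, but the algebraic shortcut is worth knowing.
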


\proof
(\ref{item:1a}) is well-known.

Let us prove (\ref{item:2a}): By  (\ref{wavo0}),
\[W^\pm f=\lim_{\epsilon\searrow0}2\epsilon\int_0^\infty\e^{-2\epsilon t}
\e^{\pm\i tH}J^\pm \e^{\mp\i tH_0}f\d t.\]

 By the vector-valued Plancherel formula, we obtain
\begin{eqnarray}
W^\pm f
&=&\lim_{\epsilon\searrow0}
\int\frac{\epsilon}{\pi} R(\lambda\mp\i\epsilon)J^\pm
R_0(\lambda\pm\i\epsilon)f\d \lambda.\end{eqnarray}
Therefore,
\begin{eqnarray*}
W^\pm 1_I(H_0) f
&=&\lim_{\epsilon\searrow0}
\int_I\frac{\epsilon}{\pi} R(\lambda\mp\i\epsilon)J^\pm
R_0(\lambda\pm\i\epsilon)f\d \lambda\\
&&-
\lim_{\epsilon\searrow0}
\int_I\frac{\epsilon}{\pi} R(\lambda\mp\i\epsilon)J^\pm
R_0(\lambda\pm\i\epsilon)
1_{{\mathbb R}\backslash I}(H_0)
f\d \lambda\\&&+
\lim_{\epsilon\searrow0}
\int_{{\mathbb R}\backslash I}
\frac{\epsilon}{\pi} R(\lambda\mp\i\epsilon)J^\pm
R_0(\lambda\pm\i\epsilon)1_I(H_0)f\d \lambda.
\end{eqnarray*}

We need to show that the last two terms vanish. The proof for both terms is
identical. Consider the last one term. Let $f_1\in{\mathcal H}$ and pick an
$n$ so that $f=1_{\Lambda_n}(H_0)f$. Then (using (\ref{eq:stone1}) in
the last estimation)
\begin{eqnarray*}
&&\Big |
\int_{{\mathbb R}\backslash I}
\frac{\epsilon}{\pi}\langle f_1,
R(\lambda\mp\i\epsilon)J^\pm
R_0(\lambda\pm\i\epsilon)1_I(H_0)f\rangle\d \lambda \Big |\\
&\leq&
\|J^\pm_n\|
\left (\int_{{\mathbb R}\backslash I}
\frac{\epsilon}{\pi}\left\|
R(\lambda\pm\i\epsilon)f_1\right\|^2\d\lambda\right)^{\frac12}
\left (\int_{{\mathbb R}\backslash I}
\frac{\epsilon}{\pi}\left\|R_0(\lambda\pm\i\epsilon)1_I(H_0)f\right\|^2\d
\lambda\right )^{\frac12}\\&\leq&
C_\epsilon\|f_1\|;\;C_\epsilon:=\|J^\pm_n\|\left (\int_{{\mathbb R}\backslash I}
\frac{\epsilon}{\pi}\left\|R_0(\lambda\pm\i\epsilon)1_I(H_0)f\right\|^2\d
\lambda\right )^{\frac12}.
\end{eqnarray*}
 Due to  (\ref{eq:stone2}), $C_\epsilon\to 0$ as $\epsilon \to 0$. Whence (\ref{item:2a}) follows.

Let us prove (\ref{item:3a}): Let $f_1\in{\mathcal H}$ and pick an
$n$ so that $f=1_{\Lambda_n}(H_0)f$. Any continuous function $g$
vanishing at infinity can be uniformly approximated by $g_m$,
finite linear
combinations of characteristic functions of intervals.
By (\ref{item:2a}) and (\ref{eq:stone1}), 
\begin{eqnarray*}
W^\pm g_m(H_0) 1_I(H_0)f&=&
\lim_{\epsilon\searrow0}
\int_I 
\frac{\epsilon}{\pi} g_m(\lambda)R(\lambda\mp\i\epsilon)J^\pm
R_0(\lambda\pm\i\epsilon)f\d \lambda.\end{eqnarray*}
Now
\begin{eqnarray*}
\label{wave9c}
&&
\left|\int_I 
\frac{\epsilon}{\pi}\big (g_m(\lambda)-g(\lambda)\big )\langle f_1,R(\lambda\mp\i\epsilon)J^\pm
R_0(\lambda\pm\i\epsilon)f\rangle\d \lambda\right|\\
&\leq&
\|J^\pm_n\|
\left (\int
\frac{\epsilon}{\pi}\left\|
R(\lambda\pm\i\epsilon)
f_1\right\|^2\d\lambda \right )^{\frac12}
\left (\int\frac{\epsilon}{\pi}
\left\|R_0(\lambda\pm\i\epsilon)f\right\|^2\d \lambda \right )^{\frac12}
\sup|g_m-g|
\\
&\leq&C_m\|f_1\|;\;C_m:=\|J^\pm_n\|f\|\sup|g_m-g|.\end{eqnarray*} Since $C_m
 \to\  0$ we are done.

To prove (\ref{item:4a}),  we use (\ref{item:3a})  and the identity
\[R(\lambda\mp\i\epsilon)J^\pm=
(J^\pm+\i R(\lambda\mp\i\epsilon)T^\pm) R_0(\lambda\mp\i\epsilon).\]
\qed

\begin{remark*} In the context of our paper, we can take
  $\Lambda_n=[\frac1n,n] $.\end{remark*}

\subsection{Scattering operator}

Define
 the scattering
operator   by $ S:= W^{+*}W^-$. Clearly, $H_0S=SH_0$.

\begin{thm}\label{thm:t13}
Suppose that the conditions of Theorem \ref{thm:wavo} hold.
Let the operator $J^-$ satisfy
\begin{equation}\label{eq:8-}
\lim_{t\to+\infty}\e^{\i tH} J^-\e^{-\i tH_0}f=0,\;
 f\in {\mathcal D}.
\end{equation}
Then 
for all $f\in {\mathcal D}$
\begin{eqnarray}\label{scato}
 Sf&=&
-\lim_{\epsilon\searrow0}2\pi\int
\delta_\epsilon(\lambda)W^{+*}T^- \delta_\epsilon(\lambda)f\d \lambda.
\end{eqnarray}

\end{thm}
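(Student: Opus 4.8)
The plan is to reduce the stationary formula to the time-dependent definition of $W^-$, following the device of Isozaki--Kitada. Fix $f\in\mathcal D$, choose $n$ so large that $f=h_n(H_0)f$, and set $\Phi(t):=\e^{\i tH}J^-\e^{-\i tH_0}f$. Using the standing hypotheses of Theorem \ref{thm:wavo} (namely that $J^-$ maps $\mathcal D$ into $\mathrm{Dom}\,H$, that $T^-g=\i(HJ^--J^-H_0)g$ for $g\in\mathcal D$, and that $J^-h_n(H_0)$, $T^-h_n(H_0)$ are bounded), together with the fact that $t\mapsto\e^{-\i tH_0}f$ is a norm-$C^1$ curve in $\mathcal D$, one checks that $\Phi$ is norm-$C^1$ with
\[\Phi'(t)=\e^{\i tH}T^-\e^{-\i tH_0}f,\qquad\|\Phi'(t)\|\le\|T^-h_n(H_0)\|\,\|f\|.\]
By the definition \eqref{wavo0} and the extra hypothesis \eqref{eq:8-}, $\Phi(t)\to W^-f$ as $t\to-\infty$ and $\Phi(t)\to0$ as $t\to+\infty$; hence $\int_0^T\Phi'$ and $\int_{-T}^0\Phi'$ each converge as $T\to\infty$ (to $-\Phi(0)$ and $\Phi(0)-W^-f$ respectively), so that
\[-W^-f=\lim_{T\to\infty}\int_{-T}^{T}\Phi'(t)\,\d t=\lim_{T\to\infty}\int_{-T}^{T}\e^{\i tH}T^-\e^{-\i tH_0}f\,\d t.\]

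Since the two one-sided truncated integrals converge, a routine Abelian argument (integration by parts against $\e^{-2\epsilon|t|}$, using the uniform bound on $\Phi'$) upgrades this to
\[-W^-f=\lim_{\epsilon\searrow0}\int_{\R}\e^{-2\epsilon|t|}\,\e^{\i tH}T^-\e^{-\i tH_0}f\,\d t,\]
where for each $\epsilon>0$ the integral is an absolutely convergent Bochner integral. Applying the bounded operator $W^{+*}$, moving it inside the Bochner integral, and invoking the intertwining relation $W^{+*}\e^{\i tH}=\e^{\i tH_0}W^{+*}$ (a consequence of $W^+H_0=HW^+$), one obtains
\[-Sf=-W^{+*}W^-f=\lim_{\epsilon\searrow0}\int_{\R}\e^{-2\epsilon|t|}\,\e^{\i tH_0}\bigl(W^{+*}T^-\bigr)\e^{-\i tH_0}f\,\d t.\]

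It then remains to convert this time integral into the resolvent form. For every $\epsilon>0$ I would establish the identity, valid on $\mathcal D$,
\[\int_{\R}\e^{-2\epsilon|t|}\,\e^{\i tH_0}A\,\e^{-\i tH_0}g\,\d t=2\pi\int_{\R}\delta_\epsilon(\lambda)\,A\,\delta_\epsilon(\lambda)g\,\d\lambda,\qquad A:=W^{+*}T^-,\]
both sides being finite $\mathcal H$-valued Bochner integrals. This is Parseval: one writes $\delta_\epsilon(\lambda)=\tfrac1{2\pi}\int_{\R}\e^{-\epsilon|t|}\e^{\i t(H_0-\lambda)}\,\d t$ (norm-convergent), so that $\delta_\epsilon(\cdot)h$ and $A\,\delta_\epsilon(\cdot)g$ are, up to the factor $\tfrac1{2\pi}$, the Fourier transforms (in a suitable sign convention) of the $L^1\cap L^2$ functions $t\mapsto\e^{-\epsilon|t|}\e^{-\i tH_0}h$ and $t\mapsto\e^{-\epsilon|t|}A\,\e^{-\i tH_0}g$, and the vector-valued Parseval identity gives, for all $h$,
\[\Bigl\langle h,\;2\pi\!\int_{\R}\delta_\epsilon(\lambda)A\delta_\epsilon(\lambda)g\,\d\lambda\Bigr\rangle=\int_{\R}\e^{-2\epsilon|t|}\bigl\langle\e^{-\i tH_0}h,\;A\,\e^{-\i tH_0}g\bigr\rangle\,\d t,\]
which is the pairing of the right-hand side against $h$. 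Inserting this (with $g=f$) into the previous display yields the asserted formula. The constant $2\epsilon$ in the regularization is only a normalization chosen to land exactly on $\delta_\epsilon$; any $\e^{-c\epsilon|t|}$ would do.

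The load-bearing points are not deep but need care: the norm differentiability of $\Phi$ and the product rule for $\e^{\i tH}$ applied to a $C^1$ curve in $\mathrm{Dom}\,H$; the legitimacy of pulling $W^{+*}$ (and, in the Parseval step, $A$) through the Bochner integrals; and the absolute integrability in $\lambda$, for fixed $\epsilon$, of $\lambda\mapsto\delta_\epsilon(\lambda)W^{+*}T^-\delta_\epsilon(\lambda)f$. All of these follow from the boundedness of $J^-h_n(H_0)$ and $T^-h_n(H_0)$, the uniform estimate $\|T^-h_n(H_0)\e^{-\i tH_0}f\|\le\|T^-h_n(H_0)\|\,\|f\|$, the bound $\|\delta_\epsilon(\lambda)\|\le(\pi\epsilon)^{-1}$, and the decay $\|\delta_\epsilon(\lambda)f\|=O\bigl((1+\lambda^2)^{-1}\bigr)$ coming from the compact spectral support of $f$. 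I expect the Parseval/Fubini bookkeeping to be the only mildly delicate step; conceptually the proof is just the standard passage from a time-dependent wave operator to a stationary integral.
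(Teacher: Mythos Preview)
Your proposal is correct and follows the same route as the paper: derive $W^-f=-\lim_T\int_{-T}^{T}\e^{\i sH}T^-\e^{-\i sH_0}f\,\d s$ from \eqref{wavo0} and \eqref{eq:8-}, Abel-regularize, apply $W^{+*}$ and intertwining, and finish with the vector-valued Plancherel/Parseval identity. You are simply more explicit than the paper about the differentiability of $\Phi$, the Parseval bookkeeping, and the normalization $\e^{-2\epsilon|t|}$ (the paper writes $\e^{-\epsilon|s|}$, which lands on $\delta_{\epsilon/2}$ rather than $\delta_\epsilon$, an irrelevant discrepancy in the limit).
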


\proof
\begin{eqnarray*}
 W^-f&=&-\lim_{t\to +\infty}
\left(\e^{\i tH}J^-\e^{-\i tH_0}-
\e^{-\i tH}J^-\e^{\i tH_0}\right)f\\
&=&-\lim_{t\to +\infty}
\int_{-t}^t\e^{\i sH} T^-\e^{-\i sH_0} f \d s\\&=&
-\lim_{\epsilon\searrow0}
\epsilon\int_0^\infty\e^{-\epsilon t}\d t
\int_{-t}^t\e^{\i sH} T^-\e^{-\i sH_0} f \d s\\&=&
-\lim_{\epsilon\searrow0}\int\e^{-\epsilon|s|}\e^{\i
  sH} T^-\e^{-\i sH_0}f\d s.\end{eqnarray*}
Then we use the definition of $S$ and the intertwining property of $W^{+*}$
to obtain
\begin{eqnarray*}
 Sf&=&
-\lim_{\epsilon\searrow0}\int\e^{-\epsilon|s|}\e^{\i
  sH_0} W^{+*}T^-\e^{-\i sH_0}f\d s.\end{eqnarray*}
Finally, we use the vector-valued Plancherel theorem.
\qed

\subsection{Method of rigged Hilbert spaces applied to wave operators}
\label {Method of rigged Hilbert spaces applied to wave operators}

Consider a family of separable Hilbert spaces ${\mathcal H}$ and ${\mathcal V_s}$, $s>\frac12$,
such that  ${\mathcal V_s}$ is densely and continuously
embedded in ${\mathcal H}$, and similarly, ${\mathcal V_{s}}$  is
densely and continuously embedded
in ${\mathcal V_{t}}$ if  ${s}>{t}$. Let  ${\mathcal V}^*_s$ be the space dual to 
 ${\mathcal V}_s$, so that we have nested 
Hilbert spaces
\[{\mathcal V}_s\subseteq {\mathcal V}_t\subseteq {\mathcal H}\subseteq {\mathcal V}_t^*\subseteq {\mathcal V}_s^*;\;{s}>{t}.\]
We remark that ${\mathcal H}$ equipped with such a structure is sometimes called a {\em
  rigged Hilbert space}.

The following theorem allows us to introduce wave matrices:

\begin{thm}\label{wavo1} Fix   $s>t >\frac12$. Suppose that there exists
for almost all $\lambda$ the limit 
\begin{equation*}
  \slim _{\epsilon\to 0}\delta_\epsilon(\lambda)=:\delta_0(\lambda)\in \mathcal B ({\mathcal V}_{t},{\mathcal V}_{t}^*).
\end{equation*}
Suppose the conditions of Theorem
\ref{thm:wavo} and that 
the operators 
 $J^\pm_n$ and $R(\lambda\mp\i\epsilon)T^\pm _n$ with
$\lambda\in \Lambda_n$ and $\epsilon>0$    
 extend to elements of ${\mathcal B}({\mathcal V}^*_t,{\mathcal
   V}^*_s)$. Suppose that for fixed $n$ and almost everywhere in
$\Lambda_n$  there exists \[R(\lambda\mp\i0)T^\pm _n:=\slim _{\epsilon\searrow
  0}R(\lambda\mp\i\epsilon)T^\pm _n\in {\mathcal B}({\mathcal V}^*_t,{\mathcal
   V}^*_s).\]
Suppose furthermore that for any  $n$  there exists  $\epsilon_n>0$ such
that 
\begin{equation}\label{riggo1a11}
\sup_{\lambda\in\Lambda_n}
\sup_{\epsilon<\epsilon_n}\left\|\delta_\epsilon(\lambda)\right\|_{
{\mathcal V_t}\to{\mathcal V}_t^*},\;
\sup_{\lambda\in\Lambda_n}
\sup_{\epsilon <\epsilon_0}\left\|R(\lambda\mp\i\epsilon)T^\pm_n
\right\|_{
{\mathcal V}^*_t\to{\mathcal V}^*_s}<\infty.
\end{equation} 

 Let  $I$ be  an interval with $I\subseteq
  \Lambda_n$ for some $n$,  and let  $f\in  {\mathcal V}_t$ be given such
  that 
  $f=h_n(H_0)f$ (in particular this means that $f\in
  {\mathcal D}\cap{\mathcal V}_t$).   Then  (in terms  of an integral
  of a ${\mathcal
    V}^*_s$--valued function), for all  $g\in C^\infty (\R)$, 
\begin{equation}
\label{wave9bia}W^\pm g(H_0)1_I(H_0)f=\int_I g(\lambda) 
\big (J^\pm_n+\i R(\lambda\mp\i0)T^\pm_n\big )\delta_0(\lambda)f
\d\lambda.
\end{equation}
\end{thm}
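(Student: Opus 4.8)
The plan is to deduce \eqref{wave9bia} from Theorem \ref{thm:wavo} (\ref{item:4a}) by carefully passing to the limit $\epsilon\searrow0$ inside the integral, using the uniform bounds \eqref{riggo1a11} to justify dominated convergence in the weak-$*$ topology of ${\mathcal V}_s^*$. Concretely, I would start from formula \eqref{wave9b1}, which under the standing hypotheses gives, for $f\in{\mathcal D}$ and $g\in C^\infty(\R)$ vanishing at infinity,
\[
W^\pm g(H_0)1_I(H_0)f=\lim_{\epsilon\searrow0}\int_I g(\lambda)\big(J^\pm+\i R(\lambda\mp\i\epsilon)T^\pm\big)\delta_\epsilon(\lambda)f\,\d\lambda,
\]
the limit taken in ${\mathcal H}$. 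Since $f=h_n(H_0)f$, I can replace $J^\pm$ and $T^\pm$ by $J^\pm_n$ and $T^\pm_n$ throughout (using $\delta_\epsilon(\lambda)=\delta_\epsilon(\lambda)h_n(H_0)$ on $\Lambda_n$ up to an error that vanishes as $\epsilon\to0$, exactly as in the proof of Theorem \ref{thm:wavo}); and since $I\subseteq\Lambda_n$ is bounded, the cutoff assumption on $g$ is irrelevant, so a general $g\in C^\infty(\R)$ is allowed after multiplying by a fixed cutoff equal to $1$ on $I$.

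Next I would examine the integrand as a ${\mathcal V}_s^*$-valued function. For each fixed $\epsilon>0$ the map $\lambda\mapsto g(\lambda)\big(J^\pm_n+\i R(\lambda\mp\i\epsilon)T^\pm_n\big)\delta_\epsilon(\lambda)f$ takes values in ${\mathcal V}_s^*$ by the mapping hypotheses ($\delta_\epsilon(\lambda)f\in{\mathcal V}_t^*$ since $\delta_\epsilon(\lambda)\in{\mathcal B}({\mathcal V}_t,{\mathcal V}_t^*)$ and $f\in{\mathcal V}_t$, then apply $J^\pm_n$ or $R(\lambda\mp\i\epsilon)T^\pm_n\in{\mathcal B}({\mathcal V}_t^*,{\mathcal V}_s^*)$), and by \eqref{riggo1a11} its ${\mathcal V}_s^*$-norm is bounded uniformly in $\lambda\in I$ and $\epsilon<\epsilon_n$ by a constant times $\|f\|_{{\mathcal V}_t}$. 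By assumption the pointwise strong limits $\delta_0(\lambda)$ and $R(\lambda\mp\i0)T^\pm_n$ exist for a.e.\ $\lambda\in\Lambda_n$, so the integrand converges in ${\mathcal V}_s^*$ (in fact weak-$*$, which suffices) to $g(\lambda)\big(J^\pm_n+\i R(\lambda\mp\i0)T^\pm_n\big)\delta_0(\lambda)f$ for a.e.\ $\lambda\in I$; here one uses that a product of a strongly convergent bounded family of operators with a convergent family of vectors converges. Dominated convergence (with dominating function the constant from \eqref{riggo1a11}) then gives convergence of the integrals in ${\mathcal V}_s^*$, hence in the weak-$*$ sense, to $\int_I g(\lambda)\big(J^\pm_n+\i R(\lambda\mp\i0)T^\pm_n\big)\delta_0(\lambda)f\,\d\lambda$. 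Comparing with the ${\mathcal H}$-limit from \eqref{wave9b1} and using that the embedding ${\mathcal H}\hookrightarrow{\mathcal V}_s^*$ is continuous (so the ${\mathcal H}$-limit, viewed in ${\mathcal V}_s^*$, agrees with the weak-$*$ limit), the two sides coincide and \eqref{wave9bia} follows.

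The main obstacle I anticipate is the measurability and integrability bookkeeping for the ${\mathcal V}_s^*$-valued integrand: one must check that $\lambda\mapsto R(\lambda\mp\i0)T^\pm_n$ is weakly measurable on $I$ (it is an a.e.\ strong limit of the continuous — in $\lambda$ and $\epsilon$ — family $R(\lambda\mp\i\epsilon)T^\pm_n$, so this is routine but needs a word), that the a.e.\ convergence of the integrand is legitimate pointwise in ${\mathcal V}_s^*$ rather than merely after pairing with a fixed test vector, and that interchanging the ${\mathcal H}$-limit of \eqref{wave9b1} with the weak-$*$ limit in ${\mathcal V}_s^*$ is valid — the latter because both limits exist and ${\mathcal H}\hookrightarrow{\mathcal V}_s^*$ is continuous, so they must agree. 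A secondary point is the reduction from $J^\pm,T^\pm$ to $J^\pm_n,T^\pm_n$ and the removal of the decay assumption on $g$; both are handled by inserting $h_n(H_0)$ and a fixed $C_c^\infty$ cutoff equal to $1$ on $I$, mirroring arguments already in the proof of Theorem \ref{thm:wavo}. None of these is deep, but they are where the care must go.
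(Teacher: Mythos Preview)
Your proposal is correct and follows essentially the same approach as the paper: start from \eqref{wave9b1}, replace $J^\pm,T^\pm$ by $J^\pm_n,T^\pm_n$, observe that the integrand converges pointwise a.e.\ in ${\mathcal V}_s^*$, and apply dominated convergence using the uniform bounds \eqref{riggo1a11}. The paper's proof is three sentences and omits the bookkeeping you spell out (measurability, the cutoff to handle general $g$, matching the ${\mathcal H}$-limit with the ${\mathcal V}_s^*$-limit), but the argument is the same.
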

\proof We can replace $T^\pm\to T^\pm _n$,  $J^\pm\to J^\pm _n$
 in the
integrand of (\ref{wave9b1}). Then, by the 
assumptions, 
it  has a  pointwise limit as an element  of ${\mathcal V}^*_s$.  Due
to (\ref{riggo1a11}), we can apply the dominated
convergence theorem.  \qed

\begin{remark*} In the context of our paper, we  take ${\mathcal
 V}_s:=L^{2,s}$. 
\end{remark*}
\subsection{Method of rigged Hilbert spaces applied to the scattering operator}

The method of rigged Hilbert spaces allows us to introduce scattering matrices:

\begin{thm}
Suppose that the conditions of Theorem \ref{wavo1} hold for some $s>t
>\frac12$. Suppose \eqref{eq:8-}. Fix $r>s$. Suppose that
for all $n\in \R$  and  $\epsilon >0$  the operators 
$T^-_n\delta_\epsilon(\lambda)\in \mathcal B ({\mathcal
  V}_{r},{\mathcal V}_{s})$ with a measurable  dependence on $\lambda \in \R$.
Suppose that for fixed $n$ and almost everywhere in
$\Lambda_n$  there exists
the limit 
\begin{equation*}
  \slim _{\epsilon\to
    0}T^-_n\delta_\epsilon(\lambda)=:T^-_n\delta_0(\lambda)\in
  \mathcal B ({\mathcal V}_{r},{\mathcal V}_{s}).
\end{equation*}
Suppose furthermore that for any  $n$  there exists  $\epsilon_n>0$ such
that 
\begin{equation}\label{riggo1a11xx}
\sup_{\lambda\in\R}
\sup_{\epsilon <\epsilon_n}\left\|T^-_n\delta_\epsilon(\lambda)
\right\|_{
{\mathcal V}_r\to{\mathcal V}_s}<\infty.
\end{equation} 

 Let  $I$ be  an interval with $I\subseteq
  \Lambda_n$ for some $n$, and let 
$f_1\in{\mathcal D}\cap{\mathcal V}_t$ and  $f_2\in{\mathcal
  D}\cap{\mathcal V}_r$ be given such
  that  $f_1=1_I(H_0)f_1$  
and $f_2=h_n(H_0)f_2$. Then 
\begin{eqnarray*}
  \label{eq:Smatrix1}
 \langle f_1,  Sf_2\rangle
&=& 
  -2\pi\int_I \langle f_1,\delta_0(\lambda) J^{+*}_n T^-_n
  \delta_0(\lambda) f_2\rangle \d\lambda\nonumber \\ 
   &&+2\pi \i \int_I \langle f_1,
\delta_{0}(\lambda) T^{+*}_nR(\lambda+\i 0)
 T^- _n\delta_{0}(\lambda)f_2\rangle \d\lambda.
\end{eqnarray*}
\end{thm}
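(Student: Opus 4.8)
\emph{Plan.} The identity will be obtained by substituting the stationary representation of $W^{+}$ (equivalently of $W^{+*}$) given by Theorem~\ref{wavo1} into the stationary formula for $S$ given by Theorem~\ref{thm:t13}, and then collapsing the resulting double spectral integral to a single one. First I would start from Theorem~\ref{thm:t13}, which reads $Sf_2=-2\pi\lim_{\epsilon\searrow0}\int\delta_\epsilon(\mu)W^{+*}T^-\delta_\epsilon(\mu)f_2\,d\mu$. Pairing with $f_1$ and using that $f_1=1_I(H_0)f_1$, $f_2=h_n(H_0)f_2$, that $h_n\equiv1$ on $\Lambda_n\supseteq I$ (so $f_1=h_n(H_0)f_1$ as well), and that $\delta_\epsilon(\mu)$, $1_I(H_0)$, $h_n(H_0)$ are commuting functions of $H_0$, the integrand becomes $\langle W^+\delta_\epsilon(\mu)f_1,\,T^-_n\delta_\epsilon(\mu)f_2\rangle$ with $\delta_\epsilon(\mu)f_1=g_{\epsilon,\mu}(H_0)1_I(H_0)f_1$, $g_{\epsilon,\mu}(\nu)=\tfrac\epsilon\pi\big((\nu-\mu)^2+\epsilon^2\big)^{-1}\in C^\infty(\mathbb R)$. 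Applying Theorem~\ref{wavo1} with $g=g_{\epsilon,\mu}$ rewrites $W^+\delta_\epsilon(\mu)f_1$ as $\int_I\delta_\epsilon(\mu)(\lambda)\,(J^+_n+\i R(\lambda-\i0)T^+_n)\delta_0(\lambda)f_1\,d\lambda$, where now $\delta_\epsilon(\mu)(\lambda)=\tfrac\epsilon\pi\big((\lambda-\mu)^2+\epsilon^2\big)^{-1}$ is a scalar. Hence
\[
\langle f_1,Sf_2\rangle=-2\pi\lim_{\epsilon\searrow0}\int d\mu\int_I d\lambda\;\delta_\epsilon(\mu)(\lambda)\,\big\langle(J^+_n+\i R(\lambda-\i0)T^+_n)\delta_0(\lambda)f_1,\;T^-_n\delta_\epsilon(\mu)f_2\big\rangle .
\]

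The key step is the $\mu$-integration. Fubini is legitimate since $\delta_\epsilon(\mu)(\lambda)\ge0$ and, by the uniform bounds \eqref{riggo1a11} and \eqref{riggo1a11xx}, the inner integrals converge absolutely; so one may do $\mu$ first. Pulling the fixed vector $(J^+_n+\i R(\lambda-\i0)T^+_n)\delta_0(\lambda)f_1\in\mathcal V_s^*$ and the operator $T^-_n$ out of the $\mu$-integral, I would use the elementary convolution identity for Cauchy densities, $\int_{\mathbb R}\delta_\epsilon(\mu)(\lambda)\,\delta_\epsilon(\mu)\,d\mu=\delta_{2\epsilon}(\lambda)$ (functional calculus in $H_0$), to get $\int\delta_\epsilon(\mu)(\lambda)\,T^-_n\delta_\epsilon(\mu)f_2\,d\mu=T^-_n\delta_{2\epsilon}(\lambda)f_2\in\mathcal V_s$; here \eqref{riggo1a11xx} makes $\{T^-_n\delta_\epsilon(\mu)\}_\mu$ a bounded $\mathcal B(\mathcal V_r,\mathcal V_s)$-family so that the Bochner integral converges in $\mathcal V_s$, and boundedness of $T^-_n$ on $\mathcal H$ allows interchanging $T^-_n$ with it. This reduces the expression to the single integral
\[
\langle f_1,Sf_2\rangle=-2\pi\lim_{\epsilon\searrow0}\int_I\big\langle(J^+_n+\i R(\lambda-\i0)T^+_n)\delta_0(\lambda)f_1,\;T^-_n\delta_{2\epsilon}(\lambda)f_2\big\rangle\,d\lambda .
\]

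Finally I would let $\epsilon\searrow0$ by dominated convergence: the hypothesis $\slim_{\epsilon}T^-_n\delta_\epsilon(\lambda)=T^-_n\delta_0(\lambda)$ in $\mathcal B(\mathcal V_r,\mathcal V_s)$ (applied with $2\epsilon$) together with the continuity of $J^+_n+\i R(\lambda-\i0)T^+_n$ gives pointwise convergence of the integrand, while \eqref{riggo1a11} and \eqref{riggo1a11xx} (plus $\sup_\lambda\|\delta_0(\lambda)\|_{\mathcal V_t\to\mathcal V_t^*}<\infty$, obtained from \eqref{riggo1a11} by taking strong limits) supply an integrable $\lambda$-majorant on the compact set $I$. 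One obtains $-2\pi\int_I\langle(J^+_n+\i R(\lambda-\i0)T^+_n)\delta_0(\lambda)f_1,\,T^-_n\delta_0(\lambda)f_2\rangle\,d\lambda$; expanding the bracket and transferring $J^{+*}_n$, resp. $(\i R(\lambda-\i0)T^+_n)^*=-\i T^{+*}_nR(\lambda+\i0)$, onto the second factor yields precisely the two asserted terms (the sign $-\i$ combining with $-2\pi$ to give the $+2\pi\i$ in the second one), where $\langle f_1,\delta_0(\lambda)\,\cdot\,\rangle$ is read as $\langle\delta_0(\lambda)f_1,\,\cdot\,\rangle$ using the symmetry of $\delta_0(\lambda)$. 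The main obstacle is not any individual estimate but the bookkeeping of the nested-space mapping properties, so that every pairing and every interchange of an integral with an operator is justified — which is exactly what the hypotheses \eqref{riggo1a11}, \eqref{riggo1a11xx} and the domain assumptions on $J^\pm_n$, $T^\pm_n$ were set up to guarantee.
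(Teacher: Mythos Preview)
Your proof is correct and follows essentially the same route as the paper's: start from the stationary formula for $S$ in Theorem~\ref{thm:t13}, insert the representation of $W^+$ from Theorem~\ref{wavo1} with the choice $g=r_\epsilon(\cdot-\mu)$, apply Fubini and the Cauchy convolution identity $\int r_\epsilon(\lambda-\mu)\,\delta_\epsilon(\mu)\,d\mu=\delta_{2\epsilon}(\lambda)$ to collapse to a single integral, and then pass to the limit by dominated convergence using \eqref{riggo1a11xx}. The paper's argument is more terse but structurally identical; your additional bookkeeping of the rigged-space pairings is a welcome elaboration.
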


\proof
Set
 $r_\epsilon(\lambda)
:=\frac{\epsilon}{\pi(\lambda^2+\epsilon^2)}$. We insert (\ref{wave9bia}) with
$g(\lambda)=r_\epsilon(\lambda-\lambda_1)$ into (\ref{scato}):
\begin{eqnarray*}
 \langle f_1,Sf_2\rangle&=&
-\lim_{\epsilon\searrow0}2\pi\int
\langle f_1,\delta_{\epsilon}(\lambda_1)
W^{+*}
T^- \delta_{\epsilon}(\lambda_1)f_2\rangle \d\lambda_1\\
		&=&
-\lim_{\epsilon\searrow0}2\pi \int\int_Ir_\epsilon(\lambda-\lambda_1)
\langle f_1,
\delta_0
(\lambda)\big (J^{+*}_n-\i T^{+*}_nR(\lambda+\i
0)\big )
T^- _n\delta_{\epsilon}(\lambda_1)f_2\rangle \\
			&=&
-\lim_{\epsilon\searrow0}2\pi\int_I
\langle f_1,
\delta_0(\lambda)\big (J^{+*}_n-\i T^{+*}_nR(\lambda+\i0)
\big ) T^- _n\delta_{2\epsilon
}(\lambda)f_2\rangle \d \lambda.
\end{eqnarray*}
In the last step we interchanged integrals using 
\eqref{riggo1a11xx} and the Fubini theorem,  and we used that 
\[\int\delta_{\epsilon}(\lambda_1)r_\epsilon(\lambda-\lambda_1)
\d\lambda_1=\delta_{2\epsilon}(\lambda).\]
Then we pass with $\epsilon\to0$  using \eqref{riggo1a11xx} and the dominated
convergence theorem.
\qed

\end{document}